\documentclass[11pt]{article}
\usepackage{amsmath}
\usepackage{amsthm}
\usepackage{amssymb}
\usepackage{algorithm}
\usepackage{subfig}
\usepackage{color}
\usepackage[english]{babel}
\usepackage{graphicx}
\usepackage{wrapfig,epsfig}
\usepackage{epstopdf}
\usepackage{url}
\usepackage{graphicx}
\usepackage{color}
\usepackage{epstopdf}
\usepackage{algpseudocode}
\usepackage[T1]{fontenc}
\usepackage{bbm}

\usepackage{tikz}
\usepackage{hyperref}
\hypersetup{colorlinks=true,citecolor=blue,linkcolor=blue} 
\usetikzlibrary{arrows}
\usepackage[margin=1in]{geometry}
\linespread{1}

\author{
  Zhao Song\thanks{Work done while visiting IBM Almaden, and supported in part by UTCS TAship (CS429 Fall 16 Computer Organization and Architecture and CS395T Fall 16 Sublinear Algorithms).}\\
  \texttt{zhaos@utexas.edu}\\
  UT-Austin
  \and
  David P. Woodruff\thanks{Supported in part by the XDATA program of the Defense Advanced Research Projects Agency (DARPA), administered through Air Force Research Laboratory contract FA8750-12-C-0323.}\\
  \texttt{dpwoodru@us.ibm.com}\\
  IBM Almaden
  \and
  Peilin Zhong\thanks{Supported in part by Simons Foundation, and NSF CCF-1617955.}\\
  \texttt{peilin.zhong@columbia.edu}\\
  Columbia University
}

\date{}
\title{Low Rank Approximation with Entrywise $\ell_1$-Norm Error\thanks{A preliminary version of this paper appears in Proceedings of the 49th Annual ACM SIGACT Symposium on the Theory of Computing (STOC 2017).}}
\newtheorem{theorem}{Theorem}[section]
\newtheorem{lemma}[theorem]{Lemma}
\newtheorem{definition}[theorem]{Definition}

\newtheorem{corollary}[theorem]{Corollary}

\newtheorem{fact}[theorem]{Fact}
\newtheorem{remark}[theorem]{Remark}
\newtheorem{claim}[theorem]{Claim}

\newtheorem{hypothesis}[theorem]{Hypothesis}

\newcommand{\wh}{\widehat}
\newcommand{\wt}{\widetilde}

\newcommand{\eps}{\epsilon}
\newcommand{\N}{\mathcal{N}}

\newcommand{\RHS}{\mathrm{RHS}}

\DeclareMathOperator*{\E}{{\bf {E}}}

\DeclareMathOperator{\OPT}{OPT}
\DeclareMathOperator{\supp}{supp}
\DeclareMathOperator{\poly}{poly}

\DeclareMathOperator{\nnz}{nnz}

\DeclareMathOperator{\rank}{rank}

\DeclareMathOperator{\EMD}{EMD}
\DeclareMathOperator{\EEMD}{EEMD}
\DeclareMathOperator{\constraints}{constraints}
\DeclareMathOperator{\degree}{degree}
\DeclareMathOperator{\variables}{variables}
\DeclareMathOperator{\TV}{TV}
\DeclareMathOperator{\KL}{KL}
\DeclareMathOperator{\cost}{cost}
\newcommand{\SAT}{$\mathsf{3}$-$\mathsf{SAT}$~}
\newcommand{\MAXCUT}{$\mathsf{MAX}$-$\mathsf{CUT}$~}
\newcommand{\CNF}{$\mathsf{CNF}$~}
\newcommand{\ETH}{$\mathsf{ETH}$} 

\makeatletter
\newcommand*{\RN}[1]{\expandafter\@slowromancap\romannumeral #1@}
\makeatother

\usepackage{lineno}

%
%
%
%
%
%
%
%
\usepackage{etoolbox}

\makeatletter

\newcommand{\define}[4][ignore]{%
  \ifstrequal{#1}{ignore}{}{
  \@namedef{thmtitle@#2}{#1}}%
  \@namedef{thm@#2}{#4}%
  \@namedef{thmtypen@#2}{lemma}%
  \newtheorem{thmtype@#2}[theorem]{#3}%
  \newtheorem*{thmtypealt@#2}{#3~\ref{#2}}%
}

\newcommand{\state}[1]{%
  \@namedef{curthm}{#1}
  \@ifundefined{thmtitle@#1}{
  \begin{thmtype@#1}
    }{
  \begin{thmtype@#1}[\@nameuse{thmtitle@#1}]
  }
    \label{#1}
    \@nameuse{thm@#1}
  \end{thmtype@#1}
  \@ifundefined{thmdone@#1}{
  \@namedef{thmdone@#1}{stated}%
  }{}
}

\newcommand{\restate}[1]{%
  \@namedef{curthm}{#1}
  \@ifundefined{thmtitle@#1}{
    \begin{thmtypealt@#1}
    }{
  \begin{thmtypealt@#1}[\@nameuse{thmtitle@#1}]
  }
    \@nameuse{thm@#1}
  \end{thmtypealt@#1}
  \@ifundefined{thmdone@#1}{
  \@namedef{thmdone@#1}{stated}%
  }{}
}

\newcommand{\thmlabel}[1]{
  \@ifundefined{thmdone@\@nameuse{curthm}}{\label{#1}
    }{\tag*{\eqref{#1}}}
}
\makeatother

\begin{document}

\begin{titlepage}
  \maketitle
  \begin{abstract}
We study the $\ell_1$-low rank approximation problem, where for a given $n \times d$ matrix $A$ and approximation factor $\alpha \geq 1$, the goal is to output a rank-$k$ matrix $\widehat{A}$ for which
$$\|A-\widehat{A}\|_1 \leq \alpha \cdot \min_{\textrm{rank-}k\textrm{ matrices} ~A'}\|A-A'\|_1,$$
where for an $n \times d$ matrix $C$, we let $\|C\|_1 = \sum_{i=1}^n \sum_{j=1}^d |C_{i,j}|$. This error measure is
known to be more robust than
the Frobenius norm in the presence of outliers
and is indicated in models where Gaussian assumptions
on the noise may not apply.  
The problem was shown to be NP-hard by Gillis and Vavasis and a number
of heuristics have been proposed. It was
asked in multiple places if there are any approximation algorithms.

We give
the first provable approximation algorithms for $\ell_1$-low rank approximation, showing that it is possible
to achieve approximation factor $\alpha = (\log d) \cdot \poly(k)$ in
$\nnz(A) + (n+d) \poly(k)$ time, where $\nnz(A)$ denotes the number of
non-zero entries of $A$. If $k$ is constant, we further improve
the approximation ratio to $O(1)$ with a $\poly(nd)$-time algorithm. Under
the Exponential Time Hypothesis, we show there is no $\poly(nd)$-time
algorithm achieving a $(1+\frac{1}{\log^{1+\gamma}(nd)})$-approximation,
for $\gamma > 0$ an arbitrarily small constant, even when $k = 1$.

We give a number of additional 
results for $\ell_1$-low rank approximation: nearly tight upper and lower
bounds for column subset selection, CUR decompositions, extensions to low rank
approximation with respect to $\ell_p$-norms for $1 \leq p < 2$
and earthmover distance, low-communication distributed
protocols and low-memory streaming algorithms, algorithms with
limited randomness, and bicriteria algorithms.
We also give a preliminary empirical evaluation.

  \end{abstract}
  \thispagestyle{empty}
\end{titlepage}


{
\hypersetup{linkcolor=black}
\tableofcontents
}
\newpage

\section{Introduction}\label{sec:intro}
Two well-studied problems in numerical linear algebra are regression and low rank approximation. In regression, one is given an $n \times d$ matrix $A$, and an $n \times 1$ vector $b$, and one seeks an $x \in \mathbb{R}^d$ which minimizes $\|Ax-b\|$ under some norm. For example, for least squares regression one minimizes $\|Ax-b\|_2$. In low rank approximation, one is given an $n \times d$ matrix $A$, and one seeks a rank-$k$ matrix $\widehat{A}$ which minimizes $\|A-\widehat{A}\|$ under some norm. For example, in Frobenius norm low rank approximation, one minimizes $\|A-\widehat{A}\|_F = \left (\sum_{i,j} (A_{i,j} - \widehat{A}_{i,j})^2 \right )^{1/2}$. Algorithms for regression are often used as subroutines for low rank approximation. Indeed, one of the main insights of \cite{dmm06,dmm06b,s06,dmm08,cw09} was to use results for generalized least squares regression for Frobenius norm low rank approximation. Algorithms for $\ell_1$-regression, in which one minimizes $\|Ax-b\|_1 = \sum_i |(Ax)_i - b_i|$, were also used \cite{bd13,sw11} to fit a set of points to a hyperplane, which is a special case of entrywise $\ell_1$-low rank approximation, the more general problem being to find a rank-$k$ matrix $\widehat{A}$ minimizing $\sum_{i,j} |A_{i,j} - \widehat{A}_{i,j}|$. 

Randomization and approximation were introduced to significantly speed up algorithms for these problems, resulting in algorithms achieving relative error approximation with high probability. Such algorithms are based on sketching and sampling techniques; we refer to \cite{w14} for a survey. For least squares regression, a sequence of work \cite{s06,cw13,mm13,nelson2013osnap,lmp13,bdn15,c16} shows how to achieve algorithms running in $\nnz(A) + \poly(d)$ time. For Frobenius norm low rank approximation, using the advances for regression this resulted in $\nnz(A) + (n+d)\poly(k)$ time algorithms. 
For $\ell_1$-regression, sketching and sampling-based methods \cite{c05,sw11,cdmmmw13,cw13,mm13,lmp13,wz13,cw15a,cp15} led to an $\nnz(A) + \poly(d)$ time algorithm.

Just like Frobenius norm low rank approximation is the analogue of least squares regression, entrywise $\ell_1$-low rank approximation is the analogue of $\ell_1$-regression. Despite this analogy, {\it no non-trivial upper bounds with provable guarantees} are known for $\ell_1$-low rank approximation. Unlike Frobenius norm low rank approximation, which can be solved exactly using the singular value decomposition, no such algorithm or closed-form solution is known for $\ell_1$-low rank approximation. Moreover, the problem was recently shown to be NP-hard \cite{gv15}. A major open question is whether there exist approximation algorithms, sketching-based or otherwise, for $\ell_1$-low rank approximation. Indeed, the question of obtaining betters algorithms was posed in section 6 of \cite{gv15}, in \cite{stack:pose}, and as the second part of open question 2 in \cite{w14}, among other places. The earlier question of NP-hardness was posed in Section 1.4 of \cite{kv09}, for which the question of obtaining approximation algorithms is a natural followup. The goal of our work is to answer this question. 

We now formally define the $\ell_1$-low rank approximation problem:
we are given an $n \times d$ matrix $A$ and approximation factor $\alpha \geq 1$, and we would like, with large constant probability, to output a rank-$k$ matrix $\widehat{A}$ for which
\begin{eqnarray*}
  \|A-\widehat{A}\|_1 \leq \alpha \cdot \min_{\textrm{rank-}k\textrm{ matrices~}A'}\|A-A'\|_1,
  \end{eqnarray*}
where for an $n \times d$ matrix $C$, we let $\|C\|_1 = \sum_{i=1}^n \sum_{j=1}^d |C_{i,j}|$. This notion of low rank approximation has been proposed as a more robust alternative
to Frobenius norm low rank approximation \cite{kk03,kk05,kimefficient,kwak08,zlsyO12,bj12,bd13,bdb13,meng2013cyclic,mkp13,mkp14,mkcp16,park2016iteratively}, and is sometimes referred to as $\ell_1$-matrix factorization or robust PCA. $\ell_1$-low rank approximation gives improved results over Frobenius norm low rank approximation since outliers are less exaggerated, as one does not square their contribution in the objective. The outlier values are often erroneous values that are far away from the nominal data, appear only a few times in the data matrix, and would not appear again under normal system operation. These works also argue $\ell_1$-low rank approximation can better handle missing data, is appropriate in noise models for which the noise is not Gaussian, e.g., it produces the maximum likelihood estimator for Laplacian noise \cite{g08,kahl2008practical,vt01}, and can be used in image processing to prevent image occlusion \cite{y12}.

To see that $\ell_1$-low rank approximation and Frobenius norm low rank approximation can 
give very different results, consider the $n \times n$ matrix $A = \left [ \begin{smallmatrix} n &0\\0&B \end{smallmatrix} \right ]$, where $B$ is {\it any} $(n-1) \times (n-1)$ matrix with $\|B\|_F < n$. The
best rank-$1$ approximation with Frobenius norm error is given by $\widehat{A} = n \cdot e_1 e_1^\top$, where $e_1$ is the first standard unit vector. Here $\widehat{A}$ ignores all but the first row and column of $A$, which may be undesirable in the case that this row and column represent an outlier. Note $\|A-\widehat{A}\|_1 = \|B\|_1$. If, for example, $B$ is the all $1$s matrix, then $\widehat{A} = [0, 0; 0, B]$ is a rank-$1$ approximation for which $\|A-\widehat{A}\|_1 = n$, and therefore this solution is a much better solution to the $\ell_1$-low rank approximation problem than $n \cdot e_1 e_1^\top$, for which $\|A-n\cdot e_1 e_1^\top\|_1 = (n-1)^2.$

Despite the advantages of $\ell_1$-low rank approximation, its main disadvantage is its computationally intractability. It is not rotationally invariant and most tools for Frobenius low rank approximation do not apply. To the best of our knowledge, all previous works only provide heuristics. We provide hard instances for previous work in Section \ref{sec:exp}, showing these algorithms at best give a $\poly(nd)$-approximation (though even this is not shown in these works). We also mention why a related objective function, robust PCA \cite{wgrpm09,clmw11,nnasj14,nyh14,chd16,zzl15}, does not give a provable approximation factor for $\ell_1$-low rank approximation. Using that for an $n \times d$ matrix $C$, $\|C\|_F \leq \|C\|_1 \leq \sqrt{nd} \|C\|_F$, a Frobenius norm low rank approximation gives a $\sqrt{nd}$ approximation for $\ell_1$-low rank approximation. A bit better is to use algorithms for low rank approximation with respect to the sum of distances, i.e., to find a rank-$k$ matrix $\widehat{A}$ minimizing $\|A-\widehat{A}\|_{1,2}$, where for an $n \times d$ matrix $C$, $\|C\|_{1,2} = \sum_{i=1}^n \|C_i\|_2$, where $C_i$ is the $i$-th row of $C$. A sequence of work \cite{dv07,fmsw10,fl11,sv12,cw15b} shows how to obtain an $O(1)$-approximation to this problem in $\nnz(A) + (n+d)\poly(k) + \exp(k)$ time, and using that $\|C\|_{1,2} \leq \|C\|_1 \leq \sqrt{d}\|C\|_{1,2}$ results in an $O(\sqrt{d})$-approximation.

There are also many variants of Frobenius norm low rank approximation for which nothing is known for $\ell_1$-low rank approximation, such as column subset selection and CUR decompositions, distributed and streaming algorithms, algorithms with limited randomness, and bicriteria algorithms. Other interesting questions include low rank approximation for related norms, such as $\ell_p$-low rank approximation in which one seeks a rank-$k$ matrix $\widehat{A}$ minimizing $\sum_{i=1}^n \sum_{j=1}^d (A_{i,j}-\widehat{A}_{i,j})^p$. Note for $1 \leq p < 2$ these are also more robust than the SVD. 

\subsection{Our Results}
We give the first efficient algorithms for $\ell_1$-low rank approximation with provable approximation guarantees. By symmetry of the problem, we can assume $d \leq n$. We first give an algorithm which runs in $O(\nnz(A)) + n \cdot \poly(k)$ time and solves the $\ell_1$-low rank approximation problem with approximation factor $(\log d) \cdot \poly(k)$. This is an exponential improvement over the previous approximation factor of $O(\sqrt{d})$, provided $k$ is not too large, and is polynomial time for every $k$. Moreover, provided $\nnz(A) \geq n \cdot \poly(k)$, our time is optimal up to a constant factor as any relative error algorithm must spend $\nnz(A)$ time. We also give
a hard instance for our algorithm ruling out $\frac{\log d}{k \log k} + k^{1/2-\gamma}$ approximation for arbitrarily small constant $\gamma > 0$,
and hard instances for a general class of algorithms based on linear sketches, ruling out $k^{1/2-\gamma}$ approximation. 

Via a different algorithm, we show how to achieve an $\wt{O}(k)$-approximation factor in \\$\poly(n)d^{\wt{O}(k)}2^{\wt{O}(k^2)}$ time. This is useful for constant $k$, for which it gives an $O(1)$-approximation in $\poly(n)$ time, improving the $O(\log d)$-approximation for constant $k$ of our earlier algorithm.
The approximation ratio of this algorithm, although $O(1)$ for constant $k$, depends on $k$.
We also show one can find a rank-$2k$ matrix $\widehat{A}$ in $\poly(n)$ time for constant $k$ for which 
$\|A-\widehat{A}\|_1 \leq C \min_{\textrm{rank-}k\textrm{ matrices~}A'}\|A-A'\|_1$, where $C > 1$ is an absolute constant independent of $k$. We refer to this as a
bicriteria algorithm.
Finally, one can output a rank-$k$ matrix $\widehat{A}$, instead of a rank-$2k$ matrix $\widehat{A}$, in $\poly(n)$ time with the same absolute constant
$C$ approximation factor, under an additional assumption that the entries of $\widehat{A}$ are integers in the range $\{-b, -b+1, \ldots, b\}$ for an
integer $b \leq \poly(n)$. Unlike our previous algorithms, this very last algorithm has a bit complexity assumption, and runs in $\poly(b)$ time instead of $\poly(\log(b))$ time.

Under the Exponential Time Hypothesis ($\mathsf{ETH}$), we show there is no $\poly(n)$-time
algorithm achieving a $(1+\frac{1}{\log^{1+\gamma}(n)})$-approximation,
for $\gamma > 0$ an arbitrarily small constant, even when $k = 1$. The latter
strengthens the NP-hardness result of \cite{gv15}. 

We also give a number of results for variants of $\ell_1$-low rank approximation which are
studied for Frobenius norm low rank approximation; prior to our work nothing was known about these problems. 

{\bf Column Subset Selection and CUR Decomposition:} In the column subset selection problem, one seeks a
small subset $C$ of columns of $A$ for which there is a matrix $X$ for which $\|CX-A\|$ is small, under some
norm. The matrix $CX$ provides a low rank approximation to $A$ which is often more interpretable, since it
stores actual columns of $A$, preserves sparsity, etc. These have been extensively studied when
the norm is the Frobenius or operator norm (see, e.g., \cite{bmd09,dr10,bdm11} and the references therein).
We initiate the study of this problem with respect to
the $\ell_1$-norm. We first prove an existence result, namely, that there exist matrices $A$ for which any
subset $C$ of $\poly(k)$ columns satisfies
$\min_X \|CX-A\|_1 \geq k^{1/2-\gamma} \cdot \min_{\textrm{rank-}k\textrm{ matrices~}A'}\|A-A'\|_1$, where $\gamma > 0$ is an arbitrarily small constant. 
This result 
is in stark contrast to the Frobenius norm for which for every matrix there exist $O(\frac{k}{\epsilon})$ columns
for which the approximation factor is $1+\epsilon$.
We also show that our bound is nearly optimal in this regime,
by showing for every matrix there exists a subset of $O(k \log k)$
columns providing an $O(\sqrt{k \log k})$-approximation.
One can find such columns in $\poly(n) d^{O(k \log k)}$ time by enumerating
and evaluating the cost of each subset. Although this is exponential in $k$, 
we show it is possible to find $O(k \log k)$ columns providing a larger 
$O(k \log k \log d)$-approximation in polynomial time for every $k$.

We extend these results to the CUR decomposition problem (see, e.g., \cite{dmm08,bw14}), in which one seeks a factorization
$CUR$ for which $C$ is a subset of columns of $A$, $R$ is a subset of rows of $A$, and
$\|CUR-A\|$ is as small as possible. In the case of Frobenius norm, one can choose $O(k/\epsilon)$ columns and
rows, have rank$(U) = k$, have $\|CUR-A\|_F$ be at most $(1+\epsilon)$ times the optimal cost,
and find the factorization in $\nnz(A)\log n + n \cdot \poly( (\log n) k/\epsilon)$ time \cite{bw14}.
Using our column subset selection results, we give an $\nnz(A) + n \cdot \poly(k)$ time algorithm choosing
$O(k \log k)$ columns and rows, for which rank$(U) = k$, and for which $\|CUR-A\|_1$ is $\poly(k) \log d$
times the cost of any rank-$k$ approximation to $A$.

{\bf $\ell_p$-Low Rank Approximation and EMD-Low Rank Approximation:} We also give
the first algorithms with provable approximation guarantees for the $\ell_p$-low rank approximation problem,
$1 \leq p < 2$, 
in which we are given an $n \times d$ matrix $A$ and approximation factor $\alpha \geq 1$, and would like,
with large constant probability, to output a rank-$k$ matrix $\widehat{A}$ for which
\begin{eqnarray*}
  \|A-\widehat{A}\|_p^p \leq \alpha \cdot \min_{\textrm{rank-}k\textrm{ matrices~} A'}\|A-A'\|_p^p,
  \end{eqnarray*}
where for an $n \times d$ matrix $C$, $\|C\|_p^p = \sum_{i=1}^n \sum_{j=1}^d |C_{i,j}|^p$. We obtain
similar algorithms for this problem as for $\ell_1$-low rank approximation. For instance, we obtain an
$\nnz(A) + n \cdot \poly(k)$ time algorithm with approximation ratio $(\log d) \cdot \poly(k)$.
We also provide the first low rank approximation with respect to sum of earthmover distances (of the $n$
rows of $A$ and $\widehat{A}$) with a $(\log^2 d) \poly(k)$ approximation factor. This low rank error measure was used, e.g., in \cite{sl09}.
Sometimes such applications also require a non-negative factorization, which we do not provide. 

{\bf Distributed/Streaming Algorithms, and Algorithms with Limited Randomness:}
There is a growing body of work on low rank approximation in the distributed (see, e.g., \cite{tisseur1999parallel,qu2002principal,bai2005principal,sensors2008,macua2010consensus,fegk13,poulson2013elemental,kvw14,bklw14,kdd16,bwz16,wz16})
and streaming models (see, e.g., \cite{mcgregor2006open,cw09,kl11,gp13,lib13,klmms14,woo14}), though almost exclusively for the Frobenius norm. One distributed
model is the {\it arbitrary partition model} \cite{kvw14} in which there are $s$ servers, each holding an $n \times d$ matrix
$A^i$, and they would like to 
output a $k \times d$ matrix $V^\top$ for which $\min_U \|UV^\top-A\|$ is as small as possible (or, a centralized coordinator
may want to output this).
We give $\wt{O}(snk)$-communication algorithms achieving a $\poly(k, \log(n))$-approximation for $\ell_1$-low rank approximation in the arbitrary partition model, which is optimal for this approximation factor (see \cite{bw14} where lower bounds for Frobenius norm approximation with $\poly(n)$ multiplicative approximation were shown - such lower bounds apply to $\ell_1$ low rank approximation). We also consider the {\it turnstile streaming model} \cite{muthu} in which we receive positive or negative updates to its entries and wish to output a rank-$k$ factorization at the end of the stream. We give an algorithm using $\wt{O}(nk) + \poly(k)$ space to achieve a $\poly(k, \log(n))$-approximation, which is space-optimal for this approximation factor, up to the degree of the $\poly(k)$ factor.
To obtain these results, we show our algorithms can be implemented using $\wt{O}(dk)$ random bits. 

We stress for all of our results, we do not make assumptions on $A$ such as low coherence or condition number; our results hold for any $n \times d$ input matrix $A$. 

We report a promising preliminary empirical evaluation of our algorithms in Section \ref{sec:exp}. 

\begin{remark}
  We were just informed of the concurrent and independent work \cite{cgklp16},
  which also obtains approximation algorithms for $\ell_1$-low rank approximation. That paper obtains 
  a $2^{O(k)} \log d$-approximation in $(\log d)^k \poly(nd)$ time. Their algorithm is not 
  polynomial time once $k = \wt{\Omega}(\log d)$, whereas we obtain a polynomial time algorithm for every $k$ (in fact $\nnz(A) + (n+d)\poly(k)$
  time). Our approximation factor is also $\poly(k) \log d$, which is an exponential improvement over theirs in terms of $k$. In \cite{cgklp16}
  they also obtain a $2^k$-approximation in $\poly(nd) d^{O(k)}$ time. In contrast, we obtain an $\wt{O}(k)$-approximation
  in $\poly(nd) d^{\wt{O}(k)} 2^{\wt{O}(k^2)}$ time. The dependence in \cite{cgklp16} on $k$ in the approximation ratio is exponential, whereas ours is polynomial. 
  \end{remark}

\subsection{Technical Overview}
{\bf Initial Algorithm and Optimizations:} 
    Let $A^*$ be a rank-$k$ matrix
    for which 
    $\|A-A^*\|_1 = \min_{\textrm{rank-}k\textrm{ matrices~}A'}\|A-A'\|_1$. Let $A^* = U^* V^*$ be a factorization
    for which $U^*$ is $n \times k$ and $V^*$ is $k \times d$. Suppose we somehow
    knew $U^*$ and consider 
    the multi-response $\ell_1$-regression problem
    $\min_V \|U^*V - A\|_1 = \min_V \sum_{i=1}^d \|U^* V_i - A_i\|_1$, where $V_i, A_i$ denote the
    $i$-th columns of $V$ and $A$, respectively. We could solve this with linear programming
    though this is not helpful for our argument here.

    Instead, inspired by recent advances in sketching
    for linear algebra (see, e.g., \cite{w14} for a survey), we could choose a random matrix $S$
    and solve $\min_V \|SU^*V-SA\|_1 = \min_V \sum_{i=1}^d \|(SU^*)V_i - SA_i\|_1$.
    If $V$ is an approximate minimizer of the
    latter problem, we could hope $V$ is an approximate minimizer of the former problem.
    If also $S$ has a small number $t$ of rows, then we could instead solve
    $\min_V \sum_{i=1}^d \|(SU^*)V_i - SA_i\|_2$, that is, minimize the sum of Euclidean norms rather
    than the sum of $\ell_1$-norms. Since
    $t^{-1/2}\|(SU^*)V_i - SA_i\|_1 \leq \|(SU^*)V_i-SA_i\|_2 \leq \|(SU^*)V_i-SA_i\|_1$,
    we would obtain a $\sqrt{t}$-approximation to the problem $\min_V \|SU^*V-SA\|_1$. A crucial
    observation is that the solution to $\min_V \sum_{i=1}^d \|(SU^*)V_i - SA_i\|_2$ is
    given by $V = (SU^*)^\dagger SA$, which implies that $V$ is in the row span of $SA$. If also $S$
    were {\it oblivious} to $U^*$, then we could compute $SA$ without ever knowing $U^*$. Having a
    low-dimensional space containing a good solution in its span is our starting point. 

    For this to work, we need a distribution on oblivious
    matrices $S$ with a small number of rows, for which an approximate minimizer $V$ to
    $\min_V \|SU^*V-SA\|_1$ is also an approximate minimizer to $\min_V \|U^*V - A\|_1$.
    It is {\it unknown} if there exists a distribution on $S$
    with this property. What is known is that if $S$ has $O(d \log d)$ rows,
    then the Lewis weights (see, e.g., \cite{cp15} and references therein) 
    of the concatenated matrix $[U^*, A]$ give a distribution for which the optimal
    $V$ for the latter problem is a $(1+\epsilon)$-approximation to the former problem; see also
    earlier work on $\ell_1$-leverage scores \cite{c05,ddhkm09} which have $\poly(d)$ rows and the same
    $(1+\epsilon)$-approximation guarantee. Such distributions are 
    not helpful here as (1) they are not oblivious, and (2) the number $O(d \log d)$ of rows gives 
    an $O(\sqrt{d \log d})$ approximation factor, which is much larger than what we want. 

    There are a few oblivious distributions $S$ which are useful for {\it single-response}
    $\ell_1$-regression $\min\|U^*v - a\|_1$ for column vectors $v, a \in \mathbb{R}^k$
    \cite{sw11,cdmmmw13,wz13}.
    In particular, if $S$ is an $O(k \log k) \times n$
    matrix of i.i.d. Cauchy random variables, then the solution $v$ to $\min \|SU^*v-Sa\|_1$
    is an $O(k \log k)$-approximation to $\min\|U^*v - a\|_1$ \cite{sw11}. The important property of
    Cauchy random variables is that if $X$ and $Y$ are independent Cauchy random variables,
    then $\alpha X + \beta Y$ is distributed as a Cauchy random variable times 
    $|\alpha| + |\beta|$, for any scalars $\alpha, \beta \in \mathbb{R}$. The $O(k \log k)$
    approximation arises because 
    all possible regression solutions are in the column span of $[U^*, a]$
    which is $(k+1)$-dimensional, and the sketch $S$ gives an approximation
    factor of $O(k \log k)$ to preserve every vector norm in this subspace.
    If we instead had a multi-response regression problem
    $\min \|SU^*V^*-SA\|_1$ the dimension of the column span of $[U^*, A]$ would be $d+k$, and this approach
    would give an $O(d \log d)$-approximation. Unlike Frobenius norm
    multi-response regression $\min \|SU^*V^*-SA\|_F$, which can be bounded if  
    $S$ is a subspace embedding for $U^*$ and satisfies an approximate matrix product theorem 
    \cite{s06}, there is no convenient linear-algebraic analogue for the $\ell_1$-norm. 

    We first note that since regression is a minimization problem,
    to obtain an $O(\alpha)$-approximation by solving the
    sketched version of the problem, it suffices that (1) for the optimal $V^*$, we have
    $\|SU^*V^*-SA\|_1 \leq O(\alpha) \|U^*V^*-A\|_1$, and (2) for all $V$, we have
    $\|SU^*V-SA\|_1 \geq \Omega(1) \cdot \|U^*V-A\|_1$.

    We show (1) holds for $\alpha = O(\log d)$ and any number of rows of $S$. 
    Our analysis follows by truncating the Cauchy random variables $(SU^*V_j^* - SA_j)_i$ for $i \in [O(k \log k)]$ and
    $j \in [d]$, so that their expectation exists, and applying
    linearity of expectation across the $d$ columns. This is inspired from an argument of Indyk \cite{i06}
    for embedding a vector into a lower-dimensional vector while preserving its $\ell_1$-norm; for {\it single-response}
    regression this is the statement that $\|SU^*v^*-Sa\|_1 = \Theta(1)\|U^*v-a\|_1$, implied
    by \cite{i06}. However, for {\it multi-response} regression we have to work entirely with expectations, rather than the tail
    bounds in \cite{i06}, since the
    Cauchy random variables $(SU^*V_j - SA_j)_i$, while independent across $i$, are dependent
    across $j$. Moreover, our $O(\log d)$-approximation factor is not an artifact of our analysis
    - we show in Section \ref{sec:hardinstance} that there
    is an $n \times d$ input matrix $A$ for which with probability $1-1/\poly(k)$,
    {\it there is no $k$-dimensional space in the span of $SA$ achieving a $\left (\frac{\log d}{t \log t} + k^{1/2-\gamma} \right)$-approximation},
    for $S$ a Cauchy matrix with $t$ rows, where $\gamma > 0$ is an arbitrarily small constant. This shows $(k \log d)^{\Omega(1)}$-inapproximability.
    Thus, the fact that we
    achieve $O(\log d)$-approximation instead of $O(1)$ is fundamental for a matrix $S$ of Cauchy random variables
    or any scaling of it.

    While we cannot show (2), we instead show for all $V$, 
    $\|SU^*V-SA\|_1 \geq \|U^*V-A\|_1/2 - O(\log d)\|U^*V^*-A\|_1$ if $S$ has $O(k \log k)$ rows.
    This suffices for regression, since the only matrices $V$ for which the cost
    is much smaller in the sketch space are those providing an $O(\log d)$ approximation in the original space.
    The guarantee follows from the triangle inequality: $\|SU^*V-SA\|_1 \geq \|SU^*V-SU^*V^*\|_1 - \|SU^*V^*-SA\|_1$ and the fact that $S$ is known
    to not contract any vector in the column span of $U^*$ if $S$ has $O(k \log k)$ rows \cite{sw11}. Because of this, we have
    $\|SU^*V-SU^*V^*\|_1 = \Omega(1) \|U^*V-U^*V^*\|_1 = \Omega(1) (\|U^*V-A\|_1 - \|U^*V^*-A\|_1)$, where we again use the triangle inequality.
    We also bound the additive term $\|SU^*V^*-SA\|_1$ by $O(\log d)\|U^*V^*-A\|_1$ using (1) above. 

    Given that $SA$ contains a good rank-$k$ approximation in its row span, our algorithm with a slightly worse $\poly(n)$ time
    and $\poly(k\log(n))$-approximation can be completely described here.
    Let $S$ and $T_1$ be independent $O(k \log k) \times n$ matrices of i.i.d. Cauchy random variables, and let $R$ and $T_2$ be independent
    $d \times O(k \log k)$
    matrices of i.i.d. Cauchy random variables. Let $$X =(T_1 A R)^\dagger ((T_1 A R)(T_1 A R)^\dagger(T_1 A T_2) (SAT_2) (SAT_2)^\dagger)_k (SAT_2)^\dagger,$$
    which is the rank-$k$ matrix minimizing $\|T_1 A R X S A T_2 - T_1 A T_2\|_F$,
    where for a matrix $C$, $C_k$ is its best
    rank-$k$ approximation in Frobenius norm. Output $\widehat{A} = AR X SA$ as the solution to $\ell_1$-low rank
    approximation of $A$. We show with constant probability that $\widehat{A}$ is a $\poly(k\log(n))$-approximation.

    To improve the approximation factor, after computing
    $SA$, we $\ell_1$-project each of the rows of $A$ onto $SA$ using linear programming or fast algorithms for
    $\ell_1$-regression \cite{cw13,mm13}, obtaining an $n \times d$ matrix $B$ of rank $O(k \log k)$. We then apply the algorithm
    in the previous paragraph with $A$ replaced by $B$. This ultimately leads to a $\log d \cdot \poly(k)$-approximation. 

    To improve the running time from $\poly(n)$ to $\nnz(A) + n \cdot \poly(k)$,
    we show a similar analysis holds for the sparse Cauchy matrices of \cite{mm13}; see also the matrices in \cite{wz13}.
\\\\
{\bf CUR Decompositions:} 
    To obtain a CUR decomposition, we first find a $\log d \cdot \poly(k)$-approximate rank-$k$ approximation $\widehat{A}$ as above.
    Let $B_1$ be an $n \times k$ matrix whose columns span those of $\widehat{A}$, and consider the regression $\min_V \|B_1 V - A\|_1$.
    Unlike the problem $\min_V \|U^*V-A\|_1$ where $U^*$ was unknown, we {\it know} $B_1$ so can compute
    its Lewis weights efficiently, sample by them, and obtain a regression problem $\min_V \|D_1(B_1V-A)\|_1$ where $D_1$ is a sampling
    and rescaling matrix. Since $$\|D_1(B_1V-A)\|_1 \leq \|D_1(B_1V-B_1V^*)\|_1 + \|D_1(B_1V^*-A)\|_1,$$
    where $V^* = \textrm{argmin}_V \|B_1 V - A\|_1$, we can bound the first term by
    $O(\|B_1V-B_1V^*\|_1)$ using that $D_1$ is a subspace embedding if it has $O(k \log k)$ rows, 
    while the second term is $O(1) \|B_1V^*-A\|_1$ by a Markov bound. Note that
    $\|B_1 V^* - A\|_1 \leq (\log d) \cdot \poly(k) \min_{\textrm{rank-}k\textrm{ matrices~}A'}\|A-A'\|_1$. 
    By switching to $\ell_2$ as before, we see that $\widehat{V} = (D_1 B_1)^{\dagger} D_1A$ contains a
    $(\log d) \poly(k)$-approximation in its span. Here $D_1A$ is an actual subset of rows of $A$, as required in a CUR decomposition. Moreover
    the subset size is $O(k \log k)$. 
    We can sample by the Lewis weights of $\widehat{V}$ to obtain a subset $C$ of $O(k \log k)$ rescaled columns of $A$,
    together with a rank-$k$ matrix $U$ for which $\|CUR-A\|_1 \leq (\log d) \poly(k) \min_{\textrm{rank-}k\textrm{ matrices~}A'}\|A-A'\|_1$.
\\\\
{\bf Algorithm for Small $k$:}
    Our CUR decomposition shows how we might obtain an $O(1)$-approximation for constant $k$ in $\poly(n)$ time.
    If we knew the Lewis weights of $U^*$, an $\alpha$-approximate solution to the problem $\min_V \|D_1(U^*V-A)\|_1$ would
    be an $O(\alpha)$-approximate solution to the problem $\min_V \|U^*V-A\|_1$, where $D_1$
    is a sampling and rescaling matrix of $O(k \log k)$ rows of $A$. Moreover, an $O(\sqrt{k \log k})$-approximate solution to
    $\min_V \|D_1(U^*V-A)\|_1$ is given by $V = (D_1U^*)^{\dagger}D_1A$, which implies the $O(k \log k)$ rows of $D_1 A$ contain 
    an $O(\sqrt{k \log k})$-approximation. For small $k$, we can guess every subset
    of $O(k \log k)$ rows of $A$ in $n^{O(k \log k)}$ time (if $d \ll n$, by taking transposes at the beginning one can replace this with $d^{O(k \log k)}$ time). For each guess, we set up the problem $\min_{\textrm{rank-}k\ U} \|U(D_1A) - A\|_1$.
    If $D_2$ is a sampling and rescaling matrix according to the Lewis weights of $D_1A$, then by a similar triangle inequality
    argument as for our CUR decomposition,
    minimizing $\|U(D_1A)D_2 - AD_2\|_1$ gives an $O(\sqrt{k \log k})$ approximation. By switching to $\ell_2$, this implies there is an
    $O(k \log k)$-approximation of the form $AD_2WD_1A$, where $W$ is an $O(k \log^2 k) \times O(k \log k)$ matrix of rank $k$. By setting
    up the problem $\min_{\textrm{rank-}k\ W}\|AD_2WD_1A - A\|_1$, one can sample from Lewis weights on the left and right to reduce this to a
    problem independent of $n$ and $d$, after which one can use polynomial optimization to solve it in $\exp(\poly(k))$ time.
    One of our guesses $D_1 A$ will be correct, and for this guess we obtain an $\wt{O}(k)$-approximation. For each guess we can compute
    its cost and take the best one found. This gives 
    an $O(1)$-approximation for constant $k$, removing the $O(\log d)$-factor from the approximation of our earlier algorithm. 
\\\\
{\bf Existential Results for Subset Selection:}
    In our algorithm for small $k$, the first step was to show there exist $O(k \log k)$ rows of $A$ which
    contain a rank-$k$ space which is an $O(\sqrt{k \log k})$-approximation (though algorithmically we instead found $O(k \log k)$ rows giving an $O(k \log k \log d)$ approximation in polynomial time). 
    %
    
    While for Frobenius norm one can find $O(k)$ rows with an $O(1)$-approximation in their span, one of our main negative results for $\ell_1$-low rank
    approximation is that this is impossible,
    showing that the best approximation one can obtain with $\poly(k)$ rows is $k^{1/2-\gamma}$ for an arbitrarily small constant
    $\gamma > 0$. Our hard
    instance is an $r \times (r+k)$ matrix $A$ in which the first $k$ columns are i.i.d. Gaussian, and the remaining
    $r$ columns are an identity matrix. Here, $r$ can be twice the number of rows one
    is choosing. The optimal $\ell_1$-low rank approximation has cost at most $r$, obtained by choosing the first $k$ columns.
    
    Let $R \in \mathbb{R}^{r/2 \times k}$ denote the first $k$ entries of the $r/2$ chosen rows, and let $y$ denote the first
    $k$ entries of an unchosen row. For $r/2 > k$, there exist many solutions $x \in \mathbb{R}^{r/2}$ for which $x^\top R = y$. However,
    we can show the following tradeoff:
    \begin{center}
      {\it whenever $\|x^\top R - y\|_1 < \frac{\sqrt{k}}{\poly(\log k)}$, then $\|x\|_1 > \frac{\sqrt{k}}{\poly(\log k)}$}.
      \end{center}
    Then no matter which linear combination $x^\top$ of the rows of $R$ one chooses to approximate $y$ by, either one incurs
    a $\frac{\sqrt{k}}{\poly(\log k)}$ cost on the first $k$ coordinates, or since $A$ contains an identity matrix, one
    incurs cost $\|x\|_1 > \frac{\sqrt{k}}{\poly(\log k)}$ on the last $r$ coordinates of $x^\top R$. 

    To show the tradeoff, consider an $x \in \mathbb{R}^{r/2}$. We
    decompose $x = x^0 + \sum_{j \geq 1} x^j$, where $x^j$ agrees with $x$ on coordinates which have absolute value in the range
    $\frac{1}{\sqrt{k} \log^{c} k} \cdot [2^{-j}, 2^{-j+1}]$, and is zero otherwise. Here, $c > 0$ is a constant, and $x^0$ denotes
    the restriction of $x$ to all coordinates of absolute value at least $\frac{1}{\sqrt{k} \log^c k}$. Then $\|x\|_1 < \frac{\sqrt{k}}{\log^c k}$,
    as otherwise we are done. Hence, $x^0$ {\it has small support}. Thus, one can build a small net for all $x^0$ vectors by
    choosing the support, then placing a net on it. For $x^j$ for $j > 0$, the support sizes are increasing so the net size needed
    for all $x^j$ vectors is larger. However, since $x^j$ has all coordinates of roughly the same magnitude on its support, its
    {\it $\ell_2$-norm is decreasing in $j$}. Since $(x^j)^\top R \sim N(0, \|x^j\|_2^2 I_k)$,
    this makes it much less likely that individual coordinates of $(x^j)^\top R$ can be large. Since this probability goes down rapidly, we can
    afford to union bound over the larger net size.
    What we show is that for any sum of the form $\sum_{j \geq 1} x^j$, at most $\frac{k}{10}$
    of its coordinates are at least $\frac{1}{\log k}$ in magnitude.

%
    
    For $\|x^\top R-y\|_1$ to be at most $\frac{\sqrt{k}}{\log^c k}$, for at least $\frac{k}{2}$ coordinates $i$, we must have
    $|(x^\top R-y)_i| < \frac{2}{\sqrt{k} \log^c k}$. With probability $1-2^{-\Omega(k)}$, $|y_i| \geq \frac{1}{100}$ on at least
    $\frac{2k}{3}$ coordinates. From the previous paragraph,
    it follows there are at least $\frac{k}{2}-\frac{k}{10}-\frac{k}{3} = \Omega(k)$
    coordinates $i$ of $x$ for which (1) $|(x^\top R-y)_i| < \frac{2}{\sqrt{k} \log^c k}$, (2) $|\sum_{j \geq 1} x^j_i| < \frac{1}{\log k},$
    and (3) $|y_i| \geq \frac{1}{100}$. On these $i$, $(x^0)^\top R_i$ must be in an interval of width $\frac{1}{\log k}$ at distance
    at least $\frac{1}{100}$ from the origin. Since $(x^0)^\top R \sim N(0, \|x^0\|_2^2 I_k)$, for any value of $\|x^0\|_2^2$
    the probability this happens on $\Omega(k)$ coordinates is at most $2^{-\Theta(k)}$. Since the net size for $x^0$ is small,
    we can union bound over every sequence $x^0, x^1, \ldots, $ coming from our nets.

    Some care is needed to union bound over all possible subsets $R$ of rows which can be chosen. We handle this by conditioning on a few
    events of $A$ itself, which imply corresponding events {\it for every subset of rows}. These events are such that if $R$ is the chosen
    set of half the rows, and $S$ the remaining set of rows of $A$, then the event that a constant fraction of rows in $S$ are close to the row span
    of $R$ is $2^{-\Theta(kr)}$, which is small enough to union bound over all choices of $R$.

    Curiously, we also show there are some matrices
    $A \in \mathbb{R}^{n \times d}$ for which any $\ell_1$ rank-$k$
    approximation in the entire row span of $A$ cannot achieve better than a
    $(2-\Theta(1/d))$-approximation.
    \\\\    
{\bf Bicriteria Algorithm:}
Our algorithm for small $k$ gives an $O(1)$-approximation in $\poly(n)$ time for constant $k$, but the approximation factor depends on $k$.
We show how one can find a rank-$2k$ matrix $\widehat{A}$ for which $\|A-\widehat{A}\|_1 \leq C \cdot \OPT$,
where $C$ is an absolute constant, and $\OPT =  \min_{\textrm{rank-}k\textrm{ matrices~}A'}\|A-A'\|_1$.
We first find a rank-$k$ matrix $B^1$ for which $\|A-B^1\|_1 \leq p \cdot \OPT$ for a factor $1 \leq p \leq \poly(n)$. We can use any of our
algorithms above for this.

    Next consider the problem $\min_{V \in \mathbb{R}^{k \times d}}\|U^*V-(A-B^1)\|_1$, and let $U^*V^*$ be a best $\ell_1$-low rank approximation
    to $A-B^1$; we later explain why we look at this problem.
    We can assume $V^*$ is an {\it $\ell_1$ well-conditioned basis} \cite{c05,ddhkm09}, since we can replace $U^*$
    with $U^*R^{-1}$ and $V^*$ with $RV^*$ for any invertible linear transformation $R$. For any vector $x$ we then have
    $\frac{\|x\|_1}{f} \leq \|x^\top V^*\|_1 \leq e \|x\|_1$, where $1 \leq e, f \leq \poly(k)$. This implies all entries of $U^*$ are at most $2f\|A-B\|_1$,
    as otherwise one could replace $U^*$ with $0^{n \times k}$ and reduce the cost. Also, any entry of $U^*$ smaller than $\frac{\|A-B\|_1}{100 e n k p}$ can
    be replaced with $0$ as this incurs additive error $\frac{\OPT}{100}$. If we round the entries of $U^*$ to integer multiples of $\frac{\|A-B\|_1}{100 e n k p}$, then we only have $O(enkpf)$ possibilities for each entry of $U^*$, and still obtain an $O(1)$-approximation. We refer to the rounded $U^*$ as $U^*$,
    abusing notation.

    Let $D$ be a sampling and rescaling matrix with $O(k \log k)$ non-zero diagonal entries, corresponding to sampling by the Lewis weights of $U^*$. We do not
    know $D$, but handle this below. By the triangle inequality, for any $V$, 
    \begin{eqnarray*}
      \|D(U^*V-(A-B^1))\|_1 & = & \|D(U^*V-U^*V^*)\|_1 \pm \|D(U^*V^*-(A-B^1))\|_1\\
      &= & \Theta(1)\|U^*V-U^*V^*\|_1 \pm O(1)\|U^*V^*-(A-B^1)\|_1,
      \end{eqnarray*}
    where the Lewis weights give 
    $\|D(U^*V-U^*V^*)\|_1 = \Theta(1)\|U^*V-U^*V^*\|_1$ and a Markov bound gives $\|D(U^*V^*-(A-B^1))\|_1 = O(1)\|U^*V^*-(A-B^1)\|_1$. Thus, minimizing
    $\|DU^*V-D(A-B^1)\|_1$ gives a fixed constant factor approximation to the problem  $\min_{V \in \mathbb{R}^{k \times d}}\|U^*V-(A-B^1)\|_1$.
    The non-zero diagonal entries of $D$ can be assumed to be integers between $1$ and $n^2$. 

    We guess the entries of $DU^*$ and
    note for each entry there are only $O(enkpf \log (n^2))$ possibilities. One of our guesses corresponds to Lewis weight sampling by $U^*$. We solve
    for $V$ and by the guarantees of Lewis weights, the row span of this $V$ provides an $O(1)$-approximation. We can find the
    corresponding $U$ via linear programming. As mentioned above, we do not know $D$, but can enumerate over all $D$ and all possible $DU^*$.
    The total time is $n^{\poly(k)}$.

    After finding $U$, which has $k$ columns, we output the rank-$2k$ space formed by the column span of $[U, B^1]$.
    By including the column span of $B^1$, we ensure
    our original transformation of the problem $\min_{V \in \mathbb{R}^{k \times d}}\|U^* \cdot V-A\|_1$ to the problem
    $\min_{V \in \mathbb{R}^{k \times d}} \|U^* \cdot V-(A-B^1)\|_1$ is valid,
    since we can first use the column span of $B^1$ to replace $A$ with $A-B^1$.
    Replacing $A$ with $A-B^1$ ultimately results in a rank-$2k$
    output. Had we used $A$ instead of $A-B^1$ our output would have been
    rank $k$ but would have additive error $\frac{\|A\|_1}{\poly(k/\epsilon)}$.
    If we assume the entries of $A$ are in $\{-b,-b+1, \ldots, b\}$, then we can lower bound the cost $\|U^*V-A\|_1$, given that it is non-zero, 
    by $(ndb)^{-O(k)}$ (if it is zero then we output $A$) using Lemma 4.1 in \cite{cw09} and relating entrywise $\ell_1$-norm to Frobenius norm.
    We can go through the same arguments above with $A-B$ replaced by $A$ and our running time will now be $(ndb)^{\poly(k)}$.
    \\\\
{\bf Hard Instances for Cauchy Matrices and More General Sketches:}
We consider a $d \times d$ matrix $A = I_d + (\log d) e_1^\top e$, where $e_1 = (1, 0, \ldots, 0)$
and $e = (1, 1, \ldots, 1)$ and $I_d$ is the $d \times d$ identity. For an
$O(k \log k) \times d$ matrix $S$ of i.i.d. Cauchy random variables,
$SA = S + (\log d) S_1^\top e$, where $S_1$ is the first column of $S$. For a typical column
of $SA$, all entries are at most $\poly(k) \log d$ in magnitude.
Thus, in order to approximate
the first row of $A$, which is $(\log d) e$, by $x^\top SA$ for an 
$x \in \mathbb{R}^{k \log k}$, we need $\|x\|_1 \geq \frac{1}{\poly(k)}$.
Also $\|x^\top S\|_1 = \Omega(\|x\|_1 d \log d)$ with $1-\exp(-k \log k)$
probability, for $d$ large enough, so by a net argument 
$\|x\|_1 \leq \poly(k)$ for all $x$.

However, {\it there are} entries of $SA$ that are very
large, i.e., about one which is $r = \Theta(d k \log k)$ in magnitude, and in general
about $2^i$ entries about $r2^{-i}$ in magnitude. These entries typically
occur in columns $C_j$ of $SA$ for which all other entries in the column are bounded by
$\poly(k)$ in magnitude. Thus, $|x^\top C_j| \approx r2^{-i}$ for
about $2^i$ columns $j$. For each such column, if $r2^{-i} \gg \log d$, then
we incur cost $\frac{r2^{-i}}{\poly(k)}$ in approximating the first row of $A$.
In total the cost is $\frac{r \log r}{\poly(k)} = \frac{d \log d}{\poly(k)}$, but the optimal cost is at most $d$,
giving a $\frac{\log d}{\poly(k)}$ lower bound. We optimize this to a $\frac{\log d}{k \log^2 k}$ lower bound. 

When $k$ is large this bound deteriorates, but we also show a $k^{1/2-\gamma}$
lower bound for arbitrarily small constant $\gamma > 0$.
This bound applies to any oblivious sketching matrix. The idea is similar to our row
subset selection lower bound. Let $A$ be as in our row subset selection lower bound, consider $SA$,
and write $S = U\Sigma V^\top$ in its full SVD. Then $SA$ is in the row span of the top $O(k \log k)$
rows of $V^\top A$, since $\Sigma$ only has $O(k \log k)$ non-zero singular values. Since the first $k$ columns of
$A$ are rotationally invariant, $V^\top A$ has first $k$ columns i.i.d. Gaussian and remaining columns equal to $V^\top$.
Call the first $O(k \log k)$ rows of $V^\top A$ the matrix $B$. We now try to approximate a row of $A$ by a vector in the
row span of $B$. There are two issues that make this setting different from row subset selection:
(1) $B$ no longer contains an identity submatrix, and (2) the rows of $B$ depend on the rows of $A$. We handle the first
issue by building nets for subsets of coordinates of $x^\top V^\top$ rather than $x$ as before; since $\|x^\top V^\top\|_2 = \|x\|_2$
similar arguments can be applied. We handle the second issue by observing that if the number of rows of $B$ is
considerably smaller than that of $A$, then the distribution of $B$ had we replaced a random row of
$A$ with zeros would be statistically close to i.i.d. Gaussian. Hence, typical rows of $A$ can be regarded as being
independent of $B$. 
\\\\
{\bf Limited Independence, Distributed, and Streaming Algorithms:}
    We show for an $n \times d$ matrix $A$, if we left-multiply by an
    $O(k \log k) \times n$ matrix $S$ in which each row
    is an independent vector of $\wt{O}(d)$-wise independent Cauchy random
    variables, $SA$ contains
    a $\poly(k) \log d$-approximation in its span.
    This allows players in a distributed model
    to share a common $S$ by exchanging $\wt{O}(kd)$ bits,
    independent of $n$.  
    We use Lemma 2.2 of \cite{knw10} which shows for a reasonably smooth approximation
    $f$ to an indicator function,
    ${\bf E}[f(X)] = \E[f(Y)] + O(\epsilon)$, where $X = \sum_i a_i X_i$, 
    $Y = \sum_i a_i Y_i$, $a \in \mathbb{R}^n$ is fixed,
    $X$ is a vector of i.i.d. Cauchy random variables, and $Y$ is a vector
    of $\wt{O}(1/\epsilon)$-wise independent random variables.

    To show the row span
    of $SA$ contains a good rank-$k$ approximation, we argue $\|Sy\|_1
    = \Omega(\|y\|_1)$ for a fixed $y \in \mathbb{R}^n$ with $1-\exp(-k \log k)$
    probability. We apply the above lemma with $\epsilon = \Theta(1)$.
    We also need for an $n \times d$ matrix $A$ with unit-$\ell_1$ columns,
    that $\|SA\|_1 = \wt{O}(kd)$. We fool the expectation of a truncated Cauchy
    by taking a weighted sum of $O(\log(dk))$ indicator functions
    and applying the above lemma with $\epsilon = \Theta(1/d)$. An issue is there are
    $\wt{\Theta}(k d)$ Cauchy random variables corresponding to the entries of $SA$,
    some of which can be as large as $\wt{\Theta}(kd)$, so to fool their expectation
    (after truncation)
    we need $\epsilon = \wt{\Theta}(1/(dk))$, resulting in $\wt{O}(dk^2)$
    seed length
    and ruining our optimal $\wt{O}(dk)$ communication. We show we can instead pay a
    factor of $k$ in our approximation and maintain $\wt{O}(dk)$-wise
    independence. The distributed and streaming algorithms, given this,
    follow algorithms for Frobenius norm low rank approximation in 
    \cite{kvw14,bwz16}.
\\\\
{\bf Hardness Assuming Exponential Time Hypothesis:}
By inspecting the proof of NP-hardness of \cite{gv15}, it at best
gives a $(1+\frac{1}{n^{\gamma}})$-inapproximability for an arbitrarily small constant $\gamma > 0$.
We considerably strengthen this to $(1+\frac{1}{\log^{1+\gamma}n})$-inapproximability by taking a
modified version of the
$n \times n$ hard instance of \cite{gv15} and planting it in a $2^{o(n)} \times 2^{o(n)}$
matrix padded with tiny values. Under the $\mathsf{ETH}$, the maximum cut problem that \cite{gv15} and that
we rely on cannot be solved in $2^{o(n)}$ time, so our transformation is efficient.
Although we use the maximum cut problem as in \cite{gv15} for our $n \times n$ hard instance,
in order to achieve our inapproximability we need to use that under the $\mathsf{ETH}$
this problem is hard to approximate even if the input graph is sparse and even
up to a constant factor; such additional conditions were not needed in \cite{gv15}. 
\\\\    
{\bf $\ell_p$-Low Rank Approximation and EMD-Low Rank Approximation:}
    Our algorithms for entrywise $\ell_p$-Norm Error are similar to our
    algorithms for $\ell_1$. We use $p$-stable random variables in place of Cauchy
    random variables, and note that the $p$-th power of a $p$-stable random variable
    has similar tails to that of a Cauchy, so many of the same arguments apply. Our algorithm
    for EMD low rank approximation immediately follows by embedding EMD into $\ell_1$. 
    \\\\
        {\bf Counterexamples to Heuristics:}
Let $A=\text{diag}(n^{2+\gamma},n^{1.5+\epsilon},B,B)\in \mathbb{R}^{ (2n+2) \times (2n+2)}$ 
where $\eps \in (0, .5), \gamma > 0$, and 
$B$ is the $n \times n$ all $1$s matrix. 
For this $A$ we show the four heuristic algorithms \cite{kk05,dzhz06,kwak08,bdb13}
cannot achieve an $n^{\min ( \gamma, 0.5-\eps ) } $ approximation ratio when the rank parameter $k = 3$.

%
%

\subsection{Several Theorem Statements, an Algorithm, and a Roadmap}
%
\vspace{-4mm}
\begin{algorithm}[h]\caption{Main Meta-Algorithm}
\begin{algorithmic}[1]
\Procedure{\textsc{L1LowRankApprox}}{$A,n,d,k$} \Comment{Theorem \ref{thm:main1}}
\State Choose sketching matrix $S$ (a Cauchy matrix or a sparse Cauchy matrix.)
\State Compute $SA$, form $C$ by $C^i \leftarrow \arg\min_{x} \| x S A - A^i \|_1$. Form $B=C \cdot SA $.
\State Choose sketching matrices $T_1, R, D, T_2$ (Cauchy matrices or sparse Cauchy matrices.)
\State Solve $\min_{X,Y} \| T_1 BRX YDB T_2 - T_1 B T_2\|_F$.
\State \Return $BRX$,$YDB$.
\EndProcedure
\end{algorithmic}
\end{algorithm}
\vspace{-0.5cm}
\begin{theorem}[Informal Version of Theorem~\ref{thm:polyklogd_approx_algorithm}]\label{thm:main1}
Given $A\in \mathbb{R}^{n\times d}$, there is an algorithm which in $\nnz(A) + (n+d)\cdot\poly(k)$ time, outputs a (factorization of a) rank-k matrix $A'$ such that with probability $9/10$,
$\| A' - A \|_1 \leq (\log d)\poly(k) \underset{U\in \mathbb{R}^{n\times k}, V\in \mathbb{R}^{k\times d}}{\min} \| UV - A\|_1.$
\end{theorem}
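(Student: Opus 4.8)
The plan is to locate, with a \emph{single oblivious left sketch}, a low‑dimensional subspace guaranteed to contain a near‑optimal rank‑$k$ solution, then replace $A$ by a rank‑$O(k\log k)$ matrix $B$, and finish with a small Frobenius‑norm rank‑$k$ regression; sparse Cauchy sketches keep everything in $\nnz(A)+(n+d)\poly(k)$ time. Write the optimum as $A^*=U^*V^*$ with $U^*\in\R^{n\times k}$ and $\OPT=\|U^*V^*-A\|_1$. Assuming $d\le n$, let $S\in\R^{O(k\log k)\times n}$ be i.i.d.\ Cauchy, so $SA$ is $O(k\log k)\times d$; the reason to sketch the \emph{larger} dimension first is that the resulting loss will be $O(\log d)$, not $O(\log n)$.

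\textbf{Step 1: $\mathrm{rowspan}(SA)$ contains a $(\log d)\poly(k)$-approximation.} Two properties of $S$ are needed. \emph{No dilation of the optimal residual:} with constant probability $\|S(U^*V^*-A)\|_1\le O(\log d)\cdot\OPT$. By Cauchy $1$-stability each entry $(S(U^*V^*-A))_{i,j}$ is a standard Cauchy scaled by $\|(U^*V^*-A)_{\cdot,j}\|_1$; truncating each of the $O(k\log k)\cdot d$ entries at level $\Theta(kd\log k)$ makes the truncated expectation $O(\log(kd))=O(\log d)$ times its scale, no entry overflows with constant probability, and the bound follows by linearity of expectation over all $i,j$ and Markov. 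This must proceed through expectations only, since the Cauchys in a fixed row of $S$ are shared --- hence dependent --- across the $d$ columns, so the single-response tail argument of \cite{sw11,i06} does not apply. \emph{No contraction on $\mathrm{colspan}(U^*)$:} with probability $1-\exp(-\Omega(k\log k))$, $\|Sx\|_1=\Omega(\|x\|_1)$ for every $x$ in this $k$-dimensional space \cite{sw11}. Setting $\wt V=(SU^*)^\dagger(SA)$ (the minimizer of $\min_V\sum_i\|(SU^*)V_i-(SA)_i\|_2$) and using that $\ell_1$ and $\ell_2$ on $\R^{O(k\log k)}$ agree up to $\sqrt{O(k\log k)}$,
\begin{equation*}
\|SU^*\wt V-SA\|_1\le\sqrt{O(k\log k)}\,\|SU^*V^*-SA\|_1\le(\log d)\poly(k)\cdot\OPT ,
\end{equation*}
and then no contraction plus two triangle inequalities give $\|U^*\wt V-A\|_1\le\|U^*\wt V-U^*V^*\|_1+\OPT\le O(1)\|S(U^*\wt V-U^*V^*)\|_1+\OPT\le(\log d)\poly(k)\cdot\OPT$. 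The certificate $\wh A_0:=U^*(SU^*)^\dagger(SA)$ has rank $\le k$ and rows in the \emph{known} subspace $\mathrm{rowspan}(SA)$.

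\textbf{Steps 2--3: project, then a small solve, with no further $\log$.} Form $C$ with $C^i=\arg\min_x\|x\,SA-A_i\|_1$ (the $\ell_1$ projection of each row of $A$ onto $\mathrm{rowspan}(SA)$, via the input‑sparsity $\ell_1$-regression of \cite{cw13,mm13}) and set $B=C\cdot SA$, stored in factored form. Then $B$ has rank $O(k\log k)$, $\|B-A\|_1\le\|\wh A_0-A\|_1\le(\log d)\poly(k)\OPT$, and $\OPT_B:=\min_{\textrm{rank-}k\ B'}\|B'-B\|_1\le\|\wh A_0-B\|_1\le(\log d)\poly(k)\OPT$; so it suffices to produce a rank‑$k$ $\wh A$ with $\|\wh A-B\|_1\le\poly(k)\,\OPT_B$, as then $\|\wh A-A\|_1\le\|\wh A-B\|_1+\|B-A\|_1\le(\log d)\poly(k)\OPT$. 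Crucially this last step introduces no new logarithm: pick Cauchy $R\in\R^{d\times O(k\log k)}$ and $D\in\R^{O(k\log k)\times n}$ wide enough that $\mathrm{colspan}(BR)=\mathrm{colspan}(B)$ and $\mathrm{rowspan}(DB)=\mathrm{rowspan}(B)$ (both $O(k\log k)$-dimensional); an $\ell_2$-projection of the optimal rank‑$k$ approximation of $B$ onto these spaces shows $\min_{\textrm{rank-}k\ W}\|BRWDB-B\|_1\le\poly(k)\OPT_B$. Since $T_1BR$, $DBT_2$, $T_1BT_2$ are all $\poly(k)\times\poly(k)$ for Cauchy $T_1\in\R^{O(k\log k)\times n}$, $T_2\in\R^{d\times O(k\log k)}$, the no‑dilation (now only $O(\log\poly(k))$ loss) and no‑contraction properties of $T_1,T_2$ on these $\poly(k)$-dimensional spaces, together with the $\sqrt{\poly(k)}$ gap between $\ell_1$ and $\ell_2$, show that the Frobenius‑optimal $(X,Y)$ in the closed‑form problem $\min_{X,Y}\|T_1BRX\cdot YDBT_2-T_1BT_2\|_F$ yields $\wh A=(BRX)(YDB)$ with $\|\wh A-B\|_1\le\poly(k)\OPT_B$.

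\textbf{Running time; success probability; main obstacle.} With $S,D,T_1$ sparse Cauchy \cite{mm13} (and $R,T_2$ their transposes): $SA$ costs $O(\nnz(A))$; the $n$ decoupled row projections cost $\nnz(A)+n\poly(k)$; with $B$ factored, $T_1BR,DBT_2,T_1BT_2$ cost $(n+d)\poly(k)$; the Frobenius solve is $\poly(k)$; and forming $BRX,YDB$ is $(n+d)\poly(k)$; total $\nnz(A)+(n+d)\poly(k)$. A $9/10$ success probability follows by a union bound over the $O(1)$ events above, each made a large constant by enlarging the sketch sizes by constants. I expect the main obstacle to be the no‑dilation bound for \emph{multi‑response} $\ell_1$-regression in Step~1: the column dependence of the Cauchy entries forces one to control $\E[\|S(U^*V^*-A)\|_1]$ directly by truncation and linearity, and one must check the $O(\log d)$ loss is genuinely incurred --- it is, and it is unavoidable for Cauchy sketches, as the hard instance of Section~\ref{sec:hardinstance} shows. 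A secondary subtlety is the bookkeeping that makes the single $\log d$ appear exactly once: Step~2 must replace $A$ by the rank‑$O(k\log k)$ matrix $B$ before any two‑sided sketching, and $S$ must sketch the larger dimension so the loss is $\log d$ rather than $\log n$.
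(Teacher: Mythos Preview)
Your architecture is exactly the paper's (Theorem~\ref{thm:polyklogd_approx_algorithm}): one left Cauchy sketch $S$, row-by-row $\ell_1$-projection onto $\mathrm{rowspan}(SA)$ to get the rank-$\poly(k)$ matrix $B$, then a $\poly(k)$-approximate rank-$k$ solve on $B$ via two-sided sketching and a closed-form Frobenius minimization. Step~1's no-dilation-via-truncated-Cauchy-expectation and no-contraction via \cite{sw11} are also the paper's arguments.

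The gap is in your Step~3 justification for why no second $\log d$ appears. Both reasons you give fail as stated. First, ``an $\ell_2$-projection of the optimal rank-$k$ approximation of $B$ onto $\mathrm{colspan}(B),\mathrm{rowspan}(B)$'' does not give cost $\poly(k)\,\OPT_B$: the $\ell_1$-optimal $B_k^*$ need not lie in these spans, and an orthogonal projection, while $\ell_2$-contractive, can inflate entrywise $\ell_1$-norm by $\sqrt{n}$. The fact that $\mathrm{colspan}(BR)=\mathrm{colspan}(B)$ almost surely is true but does not buy you existence. Second, the no-dilation of $T_1$ on $BRX^*Y^*DB-B$ is not ``$O(\log\poly(k))$'': that residual is still an $n\times d$ matrix, and the generic bound (Lemma~\ref{lem:dense_cauchy_l1_no_dilation_general}) only gives $O(\log d)$. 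What actually works --- and what the paper proves and uses (Lemma~\ref{lem:rankr_B_l1_no_dilation_general}, then Theorem~\ref{thm:rank_r_approx_polyr_B}) --- is a dilation bound specific to \emph{low-rank} targets: if $M\in\R^{n\times d}$ has rank $O(r)$, then $\|SM\|_1\le O(r\log r)\|M\|_1$ for Cauchy $S$. The proof writes each column of $M$ in a well-conditioned $\ell_1$-basis $U\in\R^{n\times O(r)}$ for $\mathrm{colspan}(M)$, bounds $\sum_j\|SU_j\|_1\le O(\log r)\sum_j\|U_j\|_1$ (this is where an $O(\log\poly(k))$ does appear, but only on $U$), and then pays the $\alpha\beta=\poly(r)$ conditioning constants to pass from $U$ back to arbitrary $M$ in its span. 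The loss is $O(r\log r)$, not $O(\log r)$; still $\poly(k)$, so your conclusion survives, but the mechanism is this well-conditioned-basis lemma --- which you need both for the existence of a good rank-$k$ $W$ in $\|BRWDB-B\|_1$ (rerunning the Step~1 argument on $B$, now with $O(r\log r)$ replacing $O(\log d)$) and for the $T_1,T_2$ analysis --- not either of the arguments you wrote.
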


\begin{theorem}[Informal Version of Theorem~\ref{thm:k_approx_algorithm}]\label{thm:main2}
Given $A\in \mathbb{R}^{n\times d}$, there is an algorithm that takes $\poly(n)d^{\wt{O}(k)} 2^{\wt{O}(k^2)}$ time and outputs a rank-k matrix $A'$ such that, with probability $9/10$,
$\| A' - A \|_1 \leq \wt{O}(k) \underset{U\in \mathbb{R}^{n\times k}, V\in \mathbb{R}^{k\times d}}{\min} \| UV - A\|_1.$ In addition, $A'$ is a CUR decomposition.
\end{theorem}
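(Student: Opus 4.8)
The plan is to follow the ``Algorithm for Small $k$'' outline: brute-force over a subset of $\wt{O}(k)$ columns of $A$ that substitutes for the unknown optimal factor, collapse the resulting rank-constrained $\ell_1$-regression to an instance of dimensions $\poly(k)$, and solve that instance exactly with a decision procedure for the existential theory of the reals. Assume $d\le n$, let $A^*=U^*V^*$ be an optimal rank-$k$ factorization, and $\OPT=\|A^*-A\|_1$. The first step is an existence claim: there is a subset of $O(k\log k)$ columns of $A$ whose column span contains a rank-$k$ matrix of cost $O(\sqrt{k\log k})\OPT$. To see this, let $D'$ sample and rescale $t_1=O(k\log k)$ columns of $A$ by the $\ell_1$ Lewis weights of $(V^*)^\top$; a triangle inequality — using that $D'$ is an $\ell_1$ subspace embedding for the column space of $(V^*)^\top$ and, by a Markov bound, that $\|(U^*V^*-A)D'\|_1=O(1)\OPT$ — shows an $\alpha$-approximate minimizer of $\min_U\|(UV^*-A)D'\|_1$ is an $O(\alpha)$-approximate minimizer of $\min_U\|UV^*-A\|_1$; switching the sum of $\ell_1$ column norms to the sum of $\ell_2$ column norms costs $\sqrt{t_1}=O(\sqrt{k\log k})$ and makes the minimizer lie in the column span of $C_0:=AD'$. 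Since we do not know $D'$, we enumerate all $\binom{d}{O(k\log k)}=d^{\wt{O}(k)}$ column subsets, one of which is as good as $C_0$.

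Next, fix a guessed column subset $C_0\in\R^{n\times\wt{O}(k)}$; in the analysis we take the one matching a successful sample, so assume $\min_{\text{rank-}k\ W}\|C_0W-A\|_1\le O(\sqrt{k\log k})\OPT$. We sample and rescale $\wt{O}(k)$ rows of $A$ by the Lewis weights of $C_0$, obtaining $D$; then the rank-$k$ $\ell_2$-minimizer $W$ of $\|D(C_0W-A)\|$ — which differs from $\min_{\text{rank-}k\ W}\|C_0W-A\|_1$ by a factor $\sqrt{\wt{O}(k)}$ from the $\ell_1\!\to\!\ell_2$ switch and $O(1)$ from the subspace-embedding/Markov properties of $D$ — has rows in the row span of $DA$ (by the structure of $\min_{\text{rank-}k\ W}\|MW-N\|_F$). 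Composing the two $\sqrt{\wt{O}(k)}$ losses, there is a rank-$k$ matrix $\widehat{A}=C_0\,Z\,(DA)$, $Z\in\R^{\wt{O}(k)\times\wt{O}(k)}$, with $\|\widehat{A}-A\|_1\le\wt{O}(k)\OPT$; after folding the rescalings of $D'$ and $D$ into $Z$, this is a CUR decomposition ($C_0$ a column subset, $DA$ a row subset, rank-$k$ glue $Z$). It remains to find a near-optimal $Z$. I would sample $\wt{O}(k)$ rows of $C_0$ by its Lewis weights ($E$) and $\wt{O}(k)$ columns of $DA$ by the Lewis weights of $(DA)^\top$ ($F$), reducing $\min_{\text{rank-}k\ Z}\|C_0Z(DA)-A\|_1$ to $\min_{\text{rank-}k\ Z}\|(EC_0)Z(DAF)-EAF\|_1$, in which every matrix is $\wt{O}(k)\times\wt{O}(k)$; Markov bounds on each side together with a two-sided $\ell_1$ subspace embedding (the columns of $C_0Z(DA)$ lie in the column space of $C_0$, its rows in the row space of $DA$) show a near-optimal $Z$ for the sketched instance pulls back to a $\wt{O}(k)\OPT$-cost solution of $\min_{\text{rank-}k\ Z}\|C_0Z(DA)-A\|_1$.

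Finally, I would solve the sketched instance exactly: writing the rank-$k$ $Z$ as a product of an $\wt{O}(k)\times k$ and a $k\times\wt{O}(k)$ matrix and introducing a sign variable for each of the $\wt{O}(k^2)$ residual entries turns it into a polynomial optimization problem with $\wt{O}(k^2)$ real variables and constant-degree constraints, solvable (to the needed precision) by a decision procedure for the first-order theory of the reals in $2^{\wt{O}(k^2)}$ time. For each of the $d^{\wt{O}(k)}$ guesses, run this, form the CUR matrix $\widehat{A}=C_0Z(DA)$, and compute $\|\widehat{A}-A\|_1$ exactly in $\poly(n)$ time; output the best one. For the guess coinciding with a successful Lewis-weight sample of $V^*$, the random sketches $D,E,F$ all succeed with probability $\ge 9/10$ (oversample each by a constant and union-bound over $O(1)$ events), giving a candidate of cost $\wt{O}(k)\OPT$; taking the minimum over guesses can only help, and the total time is $\poly(n)\,d^{\wt{O}(k)}\,2^{\wt{O}(k^2)}$.

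The hard part is the approximation-factor accounting in the second step. Each of the two $\ell_1\!\to\!\ell_2$ switches costs $\sqrt{\wt{O}(k)}$ and every Lewis-weight sketch and triangle step costs only $O(1)$, so the product is $\wt{O}(k)$ — but the rank-$k$ constraint is non-convex, so one cannot simply invoke a pseudoinverse formula: one must exhibit an explicit rank-$k$ near-optimum obtained by projecting $U^*V^*$, and check that the one- and two-sided sketches neither inflate its cost (Markov) nor contract the objective over all rank-$k$ $Z$ (subspace embedding for the $\wt{O}(k)$-dimensional column and row spaces involved). Arranging the repeated Lewis-weight reductions so the final instance has dimension $\poly(k)$, hence is solvable in $2^{\wt{O}(k^2)}$ time, is exactly what makes the chain long; the real-closed-field solver itself is a black box, modulo a mild bit-complexity caveat (discretize the search, or argue the optimum is attained).
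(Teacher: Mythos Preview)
Your plan is the paper's proof of Theorem~\ref{thm:k_approx_algorithm} essentially step for step: enumerate $O(k\log k)$-column subsets (the Lewis-weight existence argument for $(V^*)^\top$), sample rows by the Lewis weights of the guessed columns to get $DA$, sketch both sides by Lewis weights (your $E,F$ are the paper's $T_1,T_2$) down to a $\wt O(k)\times\wt O(k)$ instance, and solve that with a polynomial-system verifier; the CUR reading is also the paper's.

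One step would fail as written. You assert that the rank-$k$ $\ell_2$-minimizer of $\|D(C_0W-A)\|$ is within $\sqrt{\wt O(k)}$ of $\min_{\text{rank-}k\,W}\|C_0W-A\|_1$, but a direct Frobenius relaxation of a rank-constrained problem on a $\wt O(k)\times d$ matrix costs $\sqrt{\wt O(k)\cdot d}$, not $\sqrt{\wt O(k)}$: the rank constraint couples the columns, so you cannot invoke the per-column $\ell_1\!\to\!\ell_2$ bound. The paper sidesteps this by first \emph{fixing} the existential factor $W^*\in\R^{r\times k}$ coming from the column-subset step (so $\min_V\|C_0W^*V-A\|_1\le O(\sqrt r)\,\OPT$) and then doing \emph{unconstrained} per-column regression $\min_V\|DC_0W^*V-DA\|_1$; now the $\ell_2$ switch loses only $\sqrt m=\sqrt{\wt O(k)}$, and the minimizer $\hat V=(DC_0W^*)^\dagger DA$ lies in the row span of $DA$, yielding the explicit rank-$k$ witness $Z=W^*(DC_0W^*)^\dagger$. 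Your final paragraph (``exhibit an explicit rank-$k$ near-optimum obtained by projecting $U^*V^*$'') suggests you know this is needed---make it the actual argument in step two, because the monolithic rank-$k$ Frobenius route does not deliver the factor you claim.
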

%
\begin{theorem}[Informal Version of Theorem~\ref{thm:hard_for_row_subset}]\label{thm:main3}
For any $k\geq 1$, and any constant $c\geq 1$, let $n = k^c$. There exists a matrix $A$ such that for any matrix $A'$ in the span of $n/2$ rows of $A$,
$
\| A' - A \|_1 = \Omega(k^{0.5-\alpha}) \underset{U\in \mathbb{R}^{n\times k}, V\in \mathbb{R}^{k\times d}}{\min} \| UV - A\|_1,
$
where $\alpha>0$ is an arbitrarily small constant.
\end{theorem}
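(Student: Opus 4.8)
The plan is to exhibit an explicit random matrix realizing the claimed gap, following the outline sketched above. I would set $r=n$ and take $A$ to be the $r\times(r+k)$ matrix whose first $k$ columns are i.i.d.\ $N(0,1)$ Gaussians and whose last $r$ columns form $I_r$ (padding with zeros to reach any desired $n\times d$ shape; the point is $n=k^c$). Using the span of the first $k$ columns as the rank-$k$ subspace gives $\OPT\le\|I_r\|_1=r$, so it suffices to show that with high probability over $A$, \emph{every} subset $R$ of $r/2$ rows has the property that any $A'$ whose rows lie in $\mathrm{span}(R)$ satisfies $\|A'-A\|_1=\Omega(r\,k^{1/2-\alpha})$. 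Fixing such an $R$ with complement $S$, for an unchosen row $i\in S$ I write its first $k$ coordinates as $y^{(i)}$ and let $R\in\mathbb{R}^{r/2\times k}$ collect the first $k$ coordinates of the chosen rows; since the last $r$ columns of the chosen rows are exactly the rows of $I_r$ indexed by $R$, a combination $x^\top(\text{chosen rows})$ approximates row $i$ with cost at least $\|x^\top R-y^{(i)}\|_1+\|x\|_1$ (the $\|x\|_1$ from the identity block, plus a $+1$ from the missing $e_i$). Hence $\|A'-A\|_1\ge\sum_{i\in S}\min_x\bigl(\|x^\top R-y^{(i)}\|_1+\|x\|_1\bigr)$, and it is enough to prove that for a constant fraction of $i\in S$ this minimum is at least $\tau:=\sqrt{k}/\poly(\log k)$.

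The heart of the argument is the following tradeoff, which I would prove for a fixed $R$ and a fixed fresh Gaussian row $y$: with probability $1-2^{-\Omega(k)}$ there is no $x$ with both $\|x^\top R-y\|_1<\tau$ and $\|x\|_1<\tau$. Assuming $\|x\|_1<\tau$ (otherwise the second term already certifies cost $\ge\tau$), decompose $x=x^0+\sum_{j\ge1}x^j$, where $x^0$ keeps the coordinates of $|x|$ above the threshold $\tau/\sqrt{k}=1/\poly(\log k)$ (up to scaling) and $x^j$ keeps those in the dyadic band $[\,2^{-j},2^{-j+1}\,]$ times that threshold. Because $\|x\|_1<\tau$, $x^0$ has support $o(k/\log n)$, so the collection of all possible $x^0$ admits a net of size $2^{o(k)}$; on the other hand each $x^j$ has $\|x^j\|_2\le 2^{-j/2}\poly(\log k)$ decaying geometrically, so the coordinates of $(x^j)^\top R\sim N(0,\|x^j\|_2^2 I_k)$ are very unlikely to be large, and the net needed for all $x^j$ — bigger because the support grows with $j$ — can be afforded against this rapidly decaying tail. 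Balancing these budgets shows that with probability $1-2^{-\Omega(kr)}$ over $R$, for every such $x$ at most $k/10$ coordinates of $(\sum_{j\ge1}x^j)^\top R$ exceed $1/\log k$ in magnitude. Conditioning on this and on the (whp) event that $|y_i|\ge\tfrac1{100}$ on at least $\tfrac{2k}3$ coordinates: if also $\|x^\top R-y\|_1<\tau$, then $|(x^\top R-y)_i|<2\tau/k$ on at least $k/2$ coordinates, so on at least $\tfrac k2-\tfrac k{10}-\tfrac k3=\Omega(k)$ coordinates $(x^0)^\top R$ lies in a fixed interval of width $O(1/\log k)$ at distance $\ge\tfrac1{200}$ from the origin. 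For any fixed $x^0$ this forces $\Omega(k)$ independent $N(0,\|x^0\|_2^2)$ coordinates into such an interval, which happens with probability $2^{-\Omega(k)}$ regardless of $\|x^0\|_2$; union-bounding over the $2^{o(k)}$-size net for $x^0$ yields the tradeoff.

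Given the per-row statement, I would pass to all subsets by pushing the two probabilistic ingredients into global events of $A$. The ``$|y_i|\ge\tfrac1{100}$ on $\ge\tfrac{2k}3$ coordinates'' event is a property of an individual row, so it holds for all $r=k^c$ rows of $A$ simultaneously with probability $1-2^{-\Omega(k)}$. The structural event on $R$ holds for a fixed $R$ with probability $1-2^{-\Omega(kr)}$, which survives a union bound over all $\binom{r}{r/2}\le 2^r$ choices of $R$. Conditioned on both, for every $R$ and every $i\notin R$ the probability (over the independent Gaussian row $y^{(i)}$) that row $i$ is well-approximated by $\mathrm{span}(R)$ is $2^{-\Omega(k)}$; since the rows of $S$ are mutually independent, the chance that more than a $(1-\delta)$-fraction of them are well-approximated is at most $\binom{r/2}{(1-\delta)r/2}\,2^{-\Omega(kr)}=2^{-\Omega(kr)}$, again small enough to union-bound over all $R$. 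So with high probability every $R$ leaves $\Omega(r)$ unchosen rows each of cost $\ge\tau$, giving $\|A'-A\|_1=\Omega(r\tau)=\Omega(r\,k^{1/2-\alpha})=\Omega(k^{1/2-\alpha})\cdot\OPT$.

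I expect the main obstacle to be the tradeoff lemma, and within it the simultaneous control of the dyadic pieces: one must choose the band thresholds $t_j$ and the per-band net granularities so that the net entropy ($\approx|\mathrm{supp}(x^j)|\log r$) is dominated by the Gaussian tail exponent ($\approx t_j^2/\|x^j\|_2^2$) for every $j$ up to $\log r$, while the $t_j$ still sum to $o(1/\log k)$; keeping the $x^0$-net subexponential in $k$ — which is exactly what forces the $1/\poly(\log k)$ rather than $O(1)$ in $\tau$, and hence the ``$\alpha$ arbitrarily small'' in the statement — is the other delicate point. Making the argument hold for \emph{every} row subset rather than a fixed one is then routine, but only because the one-subset failure probability comes out as $2^{-\Omega(kr)}$, which in turn hinges on the per-row ``easy'' probability being exponentially small in $k$; that is precisely where the net/concentration balance is tight.
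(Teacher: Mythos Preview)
Your overall plan matches the paper's: the Gaussian-plus-identity construction, the reduction of the per-row cost to the tradeoff $\min_x(\|x^\top R-y\|_1+\|x\|_1)\gtrsim\tau$, the dyadic decomposition of $x$ with a net-plus-Gaussian-tail argument, and the amplification to probability $1-2^{-\Omega(kr)}$ over $\Omega(r)$ independent unchosen rows are all as in the paper.

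There is one genuine gap, precisely where the paper flags the argument as delicate. You assert that the structural event on $R$ --- that for every small-$\ell_1$ ``flat'' $x$, the vector $(\sum_{j\ge1}x^j)^\top R$ has few large coordinates --- holds for a \emph{fixed} $R$ with probability $1-2^{-\Omega(kr)}$, and then union-bound over the $2^{\Theta(r)}$ subsets. But the budget balance you describe does not give $2^{-\Omega(kr)}$: the Gaussian-tail exponent for a fixed $x^j$ is $\Theta\bigl(k/(\|x^j\|_2^2\,\poly\log k)\bigr)$, which scales with $k$ and not with $r$, and after the net union bound you obtain failure probability only $2^{-\Theta(k^{1+\delta})}$ for a small constant $\delta$ dictated by the gap between your two thresholds. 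Since the theorem allows $r=n=k^c$ for an arbitrary constant $c$, once $c>1+\delta$ the union bound over $2^{\Omega(r)}$ subsets blows up. The paper avoids this not by a union bound but by stating the structural event once for the full $n\times k$ Gaussian block and observing it holds \emph{deterministically} for every row subset: a coefficient vector $x\in\mathbb R^{r/2}$ on a subset extends by zeros to $\tilde x\in\mathbb R^n$ with the same $\ell_1$-norm and the same coordinate maximum, and $x^\top R=\tilde x^\top A$. You gestured at this (``pushing into global events of $A$'') but did not carry it out for this ingredient.

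A smaller arithmetic slip: with your stated threshold $\tau/\sqrt k=1/\poly(\log k)$ for $x^0$, the first flat piece $x^1$ has $\|x^1\|_2^2\approx\sqrt k/\poly(\log k)$, so $(x^1)^\top R$ has typical coordinates of size $k^{1/4}$ and the tail bound never bites. The paper's threshold is $1/(\sqrt k\,\poly\log k)$ (equivalently $1/k^\beta$ with $\beta$ just above $\tfrac12$), which makes $\|x^j\|_2^2\lesssim 2^{-j}k^{-\delta}$; the $x^0$-net then has size $2^{k^{1-\delta'}\log k}=2^{o(k)}$, still subexponential in $k$.
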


\vspace{-0.5cm}
\paragraph{Road map}
Section~\ref{sec:notation} introduces some notation and definitions. Section~\ref{sec:preli} includes several useful tools. We provide several $\ell_1$-low rank approximation algorithms in Section~\ref{sec:alg}. Section~\ref{sec:l1} contains the no contraction and no dilation analysis for our main algorithm. The results for $\ell_p$ and earth mover distance are presented in Section~\ref{sec:lp} and \ref{sec:emd}.
We provide our existential hardness results for Cauchy matrices, row subset selection and oblivious subspace embeddings in Section~\ref{sec:hardinstance}.
We provide our computational hardness results in Section~\ref{sec:hardness}. We analyze limited independent random Cauchy variables in Section~\ref{sec:limind}.
Section~\ref{sec:distributed} presents the results for the distributed setting.  Section~\ref{sec:streaming} presents the results for the streaming setting. Section~\ref{sec:exp} contains the experimental results of our algorithm and several heuristic algorithms, as well as counterexamples to heuristic algorithms.



\appendix

\newpage
\section{Notation}\label{sec:notation}
Let $\mathbb{N}_+$ denote the set of positive integers. For any $n\in \mathbb{N}_{+}$, let $[n]$ denote the set $\{1,2,\cdots,n\}$.
For any $p\in [1,2]$, the $\ell_p$-norm of a vector $x\in \mathbb{R}^d$ is defined as
\begin{equation*}
\| x \|_p = \bigl( \sum_{i=1}^d | x_i |^p \bigr)^{1/p}.
\end{equation*}
For any $p\in [1,2)$, the $\ell_p$-norm of a matrix $A\in \mathbb{R}^{n\times d}$ is defined as
\begin{equation*}
\| A \|_p = \bigl( \sum_{i=1}^n \sum_{j=1}^d |A_{ij}|^p \bigr)^{1/p}.
\end{equation*}
Let $\| A\|_F$ denote the Frobenius norm of matrix $A$. Let $\nnz(A)$ denote the number of nonzero entries of $A$. Let $\det(A)$ denote the determinant of a square matrix $A$. Let $A^\top$ denote the transpose of $A$. Let $A^\dagger$ denote the Moore-Penrose pseudoinverse of $A$. Let $A^{-1}$ denote the inverse of a full rank square matrix. We use $A_j$ to denote the $j^{\text{th}}$ column of $A$, and $A^i$ to denote the $i^{\text{th}}$ row of $A$. For an $n\times d$ matrix $A$, for $S$ a subset of $[n]$ and $T$ a subset of $[d]$, we let $A^S$ denote the $|S|\times d$ submatrix of $A$ with rows indexed by $S$, while $A_T$ denotes the $n\times |T|$ submatrix of $A$ with columns indexed by $T$, and $A_T^S$ denote the $|S|\times |T|$ submatrix $A$
with rows in $S$ and columns in $T$.

For any function $f$, we define $\wt{O}(f)$ to be $f\cdot \log^{O(1)}(f)$. In addition to $O(\cdot)$ notation, for two functions $f,g$, we use the shorthand $f\lesssim g$ (resp. $\gtrsim$) to indicate that $f\leq C g$ (resp. $\geq$) for an absolute constant $C$. We use $f\eqsim g$ to mean $cf\leq g\leq Cf$ for constants $c,C$. 
We use $\OPT$ to denote $\min_{\rank-k~A_k}\| A_k -A\|_1$, unless otherwise
specified.

\section{Preliminaries}\label{sec:preli}

\subsection{Polynomial system verifier}\label{sec:polynomial_system_verifier}
Renegar \cite{r92a,r92b} and Basu $et~al.$ \cite{bpr96} independently provided an algorithm for the decision problem for the existential theory of the reals, which is to decide the truth or falsity of a sentence $(x_1, \cdots, x_v) F(f_1, \cdots, f_m)$ where $F$ is a quantifier-free Boolean formula with atoms of the form $\text{sign}(f_i) =\sigma$ with $\sigma \in \{0,1,-1\}$. Note that this problem is equivalent to deciding if a given semi-algebraic set is empty or not. Here we formally state that theorem. For a full discussion of algorithms in real algebraic geometry, we refer the reader to \cite{basu2005algorithms} and \cite{basu2014algorithms}.
\begin{theorem}[Decision Problem \cite{r92a,r92b,bpr96}]
  \label{decision_solver_thm}

Given a real polynomial system $P(x_1, x_2, \cdots, x_v)$ having $v$ variables and $m$ polynomial constraints $f_i (x_1, x_2, \cdots, x_v) \Delta_i 0, \forall i \in [m]$, where $\Delta_i$ is any of the ``standard relations'': $\{ >, \geq, =, \neq, \leq, < \}$, let $d$ denote the maximum degree of all the polynomial constraints and let $H$ denote the maximum bitsize of the coefficients of all the polynomial constraints. Then in
\begin{equation*}
(m d)^{O(v)} \poly(H),
\end{equation*}
time one can determine if there exists a solution to the polynomial system $P$.
\end{theorem}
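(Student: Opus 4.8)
The plan is to follow the classical \emph{critical-points method} of real algebraic geometry, as developed by Renegar and by Basu--Pollack--Roy: produce, within the stated time bound, a finite family of \emph{sample points} that is guaranteed to contain at least one point of every nonempty cell of the sign partition of $\mathbb{R}^v$ induced by $f_1,\dots,f_m$, and then decide feasibility of $P$ by evaluating the relations $\Delta_i$ (or, just as easily, an arbitrary Boolean combination of them) at each sample point. The whole procedure is run symbolically over $\mathbb{Z}[\varepsilon]$, and, when needed, over a field obtained by adjoining nested infinitesimals, with a final specialization $\varepsilon \to 0$.

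\textbf{Step 1: reduce to a bounded, smooth, generic situation.} Introduce an infinitesimal $\varepsilon$, pass to the real closed field $\mathbb{R}\langle\varepsilon\rangle$, and add the ball constraint $\sum_{j=1}^v x_j^2 \le \varepsilon^{-2}$ so that every nonempty cell acquires a representative with coordinates bounded in $\varepsilon$; replace each strict condition $f_i > 0$ by $f_i \ge \varepsilon$, and similarly for $<$. Then perturb the $f_i$ by generic infinitesimal amounts so that for every $I \subseteq [m]$ the real algebraic set $V_I := \{x : f_i(x)=0,\ i\in I\}$ becomes smooth, bounded, in general position, and of the expected dimension $v-|I|$. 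This has to be arranged so that a point of the perturbed, bounded set can be specialized back to a point of the original set realizing the same non-strict sign vector.

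\textbf{Step 2: extract sample points via critical points.} For each $I \subseteq [m]$ with $|I| \le v$, compute the critical points of a generic linear form (equivalently, of the squared distance to a generic point) restricted to $V_I$. After the perturbation these form a zero-dimensional system whose degree is bounded, by a B\'ezout-type estimate, by $d^{O(v)}$; summing over the at most $\sim m^v$ relevant subsets yields $(md)^{O(v)}$ candidate points, and the standard connectedness argument shows they meet every connected component of every $V_I$, hence every nonempty cell of the partition. Each zero-dimensional system is solved by elimination --- a rational univariate representation obtained via subresultant sequences / the $u$-resultant --- giving a univariate polynomial of degree $d^{O(v)}$ with integer coefficients of bitsize $(md)^{O(v)}\poly(H)$, together with rational parametrizations of the coordinates, so each sample point is encoded as a real algebraic number (a polynomial plus an isolating interval).

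\textbf{Step 3: sign evaluation, decision, and specialization.} For each sample point and each $f_i$, determine $\mathrm{sign}(f_i)$ at the corresponding real algebraic number using Sturm--Habicht / subresultant-based sign determination, at cost $(md)^{O(v)}\poly(H)$; then check whether the resulting sign vector satisfies all the prescribed relations $\Delta_i$, and output ``feasible'' iff some sample point does. Finally remove the infinitesimals by specialization, which is legitimate because the number of introduced infinitesimals and the $\varepsilon$-degrees stay polynomially bounded throughout. \textbf{Main obstacle.} The technical heart is the simultaneous control of \emph{degree} and \emph{coefficient height} through Steps 1--2: one must show that, even after the generic infinitesimal perturbation and the iterated resultant elimination, the univariate representations have degree $d^{O(v)}$ and height $(md)^{O(v)}\poly(H)$. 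This needs quantitative bounds of effective-Nullstellensatz / resultant-height type together with a careful accounting of how the perturbation inflates these parameters; making the exponent of $v$ uniformly linear (rather than quadratic, and certainly rather than the doubly exponential blowup of the older cylindrical-algebraic-decomposition route) is precisely what the Renegar and Basu--Pollack--Roy analyses accomplish.
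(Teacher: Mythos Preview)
The paper does not prove this theorem at all: it is stated in the preliminaries (Section~\ref{sec:polynomial_system_verifier}) as a black-box tool imported from Renegar~\cite{r92a,r92b} and Basu--Pollack--Roy~\cite{bpr96}, with no argument given beyond the citations. So there is nothing to compare your proposal against within the paper itself.

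That said, your sketch is a faithful high-level outline of the critical-points method those references actually use, and the obstacle you flag (controlling degree and coefficient height simultaneously through the infinitesimal perturbation and elimination steps, so as to get a singly-exponential rather than doubly-exponential bound in $v$) is exactly the technical contribution of those works. For the purposes of this paper, however, no such argument is expected: the theorem is invoked, not proved.
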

Recently, this technique has been used to solve a number of low-rank approximation and matrix factorization problems \cite{agkm12,m13,cw15b,bdl16,rsw16}.

\subsection{Cauchy and $p$-stable transform}

\begin{definition}[Dense Cauchy transform]
Let $S = \sigma \cdot C \in \mathbb{R}^{m\times n}$ where $\sigma$ is a scalar, and each entry of $C\in\mathbb{R}^{m\times n}$ is chosen independently from the standard Cauchy distribution. For any matrix $A\in \mathbb{R}^{n\times d}$, $SA$ can be computed in $O(m\cdot \nnz(A))$ time.
\end{definition}
\begin{definition}[Sparse Cauchy transform]
Let $\Pi = \sigma \cdot S C\in \mathbb{R}^{m \times n}$, where $\sigma$ is a scalar, $S\in \mathbb{R}^{m\times n}$ has each column chosen independently and uniformly from the $m$ standard basis vectors of $\mathbb{R}^{m}$, and $C\in \mathbb{R}^{n\times n}$ is a diagonal matrix with diagonals chosen independently from the standard Cauchy distribution.  For any matrix $A\in \mathbb{R}^{n\times d}$, $\Pi A$ can be computed in $O(\nnz(A))$ time.
\end{definition}

\begin{definition}[Dense $p$-stable transform]
Let $p\in (1,2)$. Let $S = \sigma \cdot C \in \mathbb{R}^{m\times n}$ where $\sigma$ is a scalar, and each entry of $C\in\mathbb{R}^{m\times n}$ is chosen independently from the standard $p$-stable distribution. For any matrix $A\in \mathbb{R}^{n\times d}$, $SA$ can be computed in $O(m \nnz(A) )$ time.
\end{definition}

\begin{definition}[Sparse $p$-stable transform]
Let $p\in (1,2)$. Let $\Pi = \sigma \cdot S C\in \mathbb{R}^{m\times n}$, where $\sigma$ is a scalar, $S\in \mathbb{R}^{m\times n}$ has each column chosen independently and uniformly from the $m$ standard basis vectors of $\mathbb{R}^{m}$, and $C\in \mathbb{R}^{n\times n}$ is a diagonal matrix with diagonals chosen independently from the standard $p$-stable distribution. For any matrix $A\in \mathbb{R}^{n\times d}$, $\Pi A$ can be computed in $O(\nnz(A))$ time.
\end{definition}

\subsection{Lewis weights}\label{sec:lewis_weights}
We follow the exposition of Lewis weights from \cite{cp15}.
\begin{definition}
For a matrix $A$, let $a_i$ denote $i^{\text{th}}$ row of $A$, and $a_i(=(A^i)^\top )$ is a column vector. The statistical leverage score of a row $a_i$ is
\begin{align*}
\tau_i(A) \overset{\mathrm{def}}{=} a_i^\top (A^\top A)^{-1} a_i = \| (A^\top A )^{-1/2} a_i \|_2^2.
\end{align*}
For a matrix $A$ and norm $p$, the $\ell_p$ Lewis weights $w$ are the unique weights such that for each row $i$ we have
\begin{align*}
w_i = \tau_i ( W^{1/2-1/p} A ).
\end{align*}
or equivalently
\begin{align*}
a_i^\top (A^\top W^{1-2/p} A)^{-1} a_i = w_i^{2/p}.
\end{align*}
\end{definition}


\begin{lemma}[Lemma~2.4 of \cite{cp15} and Lemma~7 of \cite{clmmps15}]\label{lem:compute_lewis_weight}
Given a matrix $A\in\mathbb{R}^{n\times d}$, $n\geq d$, for any constant $C>0,4>p\geq 1$, there is an algorithm which can compute $C$-approximate $\ell_p$ Lewis weights for every row $i$ of $A$ in $O((nnz(A)+d^\omega\log d)\log n) $ time, where $\omega < 2.373$ is the matrix multiplication exponent\cite{s69,cw87,w12}.
\end{lemma}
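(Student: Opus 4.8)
The plan is to compute the $\ell_p$ Lewis weights by the natural fixed-point iteration and to show it converges geometrically for $1\le p<4$. By definition, the Lewis weight vector $w^*$ is the fixed point of the map $\Phi$ given by $\Phi(w)_i=\tau_i(W^{1/2-1/p}A)$ with $W=\Diag(w)$. So I would set $w^{(0)}\leftarrow \mathbf 1$ (or the $\ell_2$ leverage scores of $A$) and iterate $w^{(t+1)}\leftarrow\wt\Phi(w^{(t)})$, where $\wt\Phi$ is $\Phi$ with exact leverage scores replaced by constant-factor approximate ones. Two claims must be established: (a) $O(\log n)$ iterations of $\wt\Phi$ already yield a $C$-approximate Lewis weight vector, and (b) one iteration of $\wt\Phi$ costs $O(\nnz(A)+d^\omega\log d)$ time up to logarithmic factors. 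Since $n\ge d$ we are always in the regime where this is meaningful.

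For (a), the heart of the matter is a contraction estimate for $\Phi$ in the metric $\dist(u,v)=\|\log u-\log v\|_\infty$: if $\dist(u,v)\le\log\rho$ then $\dist(\Phi(u),\Phi(v))\le|1-p/2|\cdot\log\rho$, and $|1-p/2|<1$ exactly for $1\le p<4$, so $\Phi$ is a strict contraction with unique fixed point $w^*$. To prove this I would use two elementary facts about statistical leverage scores under a nonnegative diagonal reweighting $D$: $\tau_i(DA)$ is invariant under a global rescaling of $D$; and, via the matrix-determinant lemma applied to a rank-one update of $A^\top D^2 A$, increasing a single diagonal entry $d_j$ with $j\ne i$ only decreases $\tau_i(DA)$, with a controlled bound on the logarithmic derivative of $\tau_i(DA)$ in $\log d_j^2$. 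Pushing a uniform multiplicative perturbation of $w$ through the exponent $1/2-1/p$ (i.e.\ perturbing the rescaling $W^{1/2-1/p}$) and summing these bounded sensitivities over the rows produces precisely the factor $|1-p/2|$. Using approximate leverage scores of a fixed constant accuracy $\gamma$ turns the recursion into $L_{t+1}\le|1-p/2|\,L_t+\log\gamma$, whose steady state is $O(\log\gamma)$, so a suitable constant $\gamma$ keeps the limiting error below $\log C$; as the initial multiplicative error $L_0$ can be taken $\poly(n)$ (the weights sum to $d$ and tiny weights may be truncated at $1/\poly(n)$ without affecting a $C$-approximation), after $T=O(\log n/\log(1/|1-p/2|))=O(\log n)$ rounds we get $L_T\le\log C$.

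For (b), I would implement $\wt\Phi$ with fast approximate leverage scores. Given $w^{(t)}$, the matrix $M=W_t^{1/2-1/p}A$ is just a row-rescaling of $A$, hence has the same nonzero pattern. First compute an $\ell_2$ subspace embedding of the column space of $M$ — e.g.\ a sparse embedding $\Pi$ with $O(d\log d)$ rows followed by a QR/whitening step — to obtain $R\in\R^{d\times d}$ such that $MR$ has well-conditioned columns, in time $O(\nnz(A)+d^\omega\log d)$. Then $\tau_i(M)=\Theta(\|e_i^\top MR\|_2^2)$, and all $n$ of these quantities are estimated to within a constant factor simultaneously by postmultiplying by a $d\times O(\log n)$ Johnson–Lindenstrauss matrix $G$ and reading off $\|e_i^\top MRG\|_2^2$, in additional time $O(\nnz(A)\log n+d^2\log n)$. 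A single $G$ of dimension $O(\log n)$ can be reused across all iterations by a union bound, keeping the overall failure probability $1/\poly(n)$. Multiplying the per-iteration cost by the $O(\log n)$ iterations from (a) gives the claimed $O((\nnz(A)+d^\omega\log d)\log n)$ running time.

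The main obstacle is the contraction estimate in (a): getting the sharp constant $|1-p/2|$, hence the threshold $p<4$, requires carefully controlling how $\tau_i(W^{1/2-1/p}A)$ responds to multiplicative perturbations of $w$, which is exactly where the monotonicity of leverage scores under diagonal reweighting and the matrix-determinant lemma do the real work. Everything else — the fast leverage-score computation and the affine-recursion bookkeeping that absorbs the per-iteration approximation error and the union bound over failure events — is standard. If one wanted to avoid the $\log n$ overhead coming from the initial error, one could instead start from a better-conditioned initialization or use a homotopy in $p$, but the plain iteration already suffices for the stated bound.
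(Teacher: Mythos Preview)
The paper does not prove this lemma; it is stated in the preliminaries as a known result from \cite{cp15} and \cite{clmmps15}, with no proof given. So there is no ``paper's own proof'' to compare against.

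That said, your sketch correctly reconstructs the argument from those cited works. The fixed-point iteration on $\Phi(w)_i=\tau_i(W^{1/2-1/p}A)$, the contraction in the $\|\log(\cdot)\|_\infty$ metric with constant $|1-p/2|$ (hence the threshold $p<4$), and the per-iteration fast approximate leverage score computation via a sparse subspace embedding plus Johnson--Lindenstrauss are exactly the ingredients in Cohen--Peng \cite{cp15}. Your handling of the approximation error through the affine recursion $L_{t+1}\le|1-p/2|\,L_t+O(1)$ and the $O(\log n)$ iteration count are also in line with their analysis. One small point: the clean contraction constant $|1-p/2|$ in \cite{cp15} comes from a direct comparison of $(A^\top W_1^{1-2/p}A)^{-1}$ against $(A^\top W_2^{1-2/p}A)^{-1}$ in the Loewner order when $W_1,W_2$ differ by a factor $\rho$ entrywise, rather than a rank-one matrix-determinant argument, but your route gets to the same place.
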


\begin{lemma}[Theorem~7.1 of \cite{cp15}]\label{lem:num_samples}
Given matrix $A\in\mathbb{R}^{n\times d}$ ($n\geq d$) with $\ell_p$ ($4>p\geq 1$) Lewis weights $w$, for any set of sampling probabilities $p_i$, $\sum_i p_i=N$,
\begin{align*}
p_i\geq f(d,p)w_i,
\end{align*}
if $S\in\mathbb{R}^{N\times n}$ has each row chosen independently as the $i^{\text{th}}$ standard basis vector, times $1/p_i^{1/p}$, with probability $p_i/N$, then with probability at least $0.999$,
\begin{align*}
\forall x\in\mathbb{R}^d, \frac12\|Ax\|_p^p\leq\|SAx\|_p^p\leq 2\|Ax\|_p^p
\end{align*}
Furthermore, if $p=1$, $N=O(d\log d)$. If $1<p<2$, $N=O(d\log d\log \log d)$. If $2\leq p<4$, $N=O(d^{p/2}\log d)$.
\end{lemma}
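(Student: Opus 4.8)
The plan is to reduce the $\ell_p$ statement to a pointwise concentration statement via the Lewis‑weight change of density, and then upgrade ``for a fixed $x$'' to ``for all $x$'' by an entropy/chaining argument; I will phrase everything in terms of $\|\cdot\|_p^p$, which for $p=1$ is exactly the $\|\cdot\|_1$ version in the statement. First I would set $W=\Diag(w)$ and $\widetilde A = W^{1/2-1/p}A$, and use the defining identity $a_i^\top (A^\top W^{1-2/p}A)^{-1}a_i = w_i^{2/p}$ to observe that the $\ell_2$ leverage scores of $\widetilde A$ are exactly $\tau_i(\widetilde A)=w_i$. In particular $\sum_i w_i=\rank(A)\le d$, which is precisely what makes $\sum_i p_i=N$ and $p_i\ge f(d,p)w_i$ simultaneously satisfiable once $N\gtrsim f(d,p)\cdot d$. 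From the leverage‑score bound $|(\widetilde A x)_i|^2\le w_i\|\widetilde A x\|_2^2$ I would then derive the $\ell_p$ ``sensitivity'' inequality $|(Ax)_i|^p\le w_i\,\|Ax\|_p^p$ for $1\le p\le 2$ (the same bound with an extra $d^{p/2-1}$ factor when $2\le p<4$); this is the single structural fact the rest of the argument uses, and these three cases are exactly what produces the three sample bounds $O(d\log d)$, $O(d\log d\log\log d)$, $O(d^{p/2}\log d)$.

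Next, fix one $x$ and write $\|SAx\|_p^p=\sum_{j=1}^N |(Ax)_{i_j}|^p/p_{i_j}$, a sum of $N$ independent nonnegative terms with mean exactly $\|Ax\|_p^p$. By the sensitivity bound, each term is at most $\frac{1}{f(d,p)w_{i_j}}\cdot w_{i_j}\|Ax\|_p^p=\frac{1}{f(d,p)}\|Ax\|_p^p$, so a Chernoff/Bernstein bound gives $\Pr\bigl[\,\bigl|\|SAx\|_p^p-\|Ax\|_p^p\bigr|>\tfrac12\|Ax\|_p^p\,\bigr]\le 2\exp(-c\,f(d,p))$ for an absolute constant $c$; for $p=1$ this is already the two‑sided guarantee for a fixed $x$.

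Finally, to get ``$\forall x$'', the naive route is an $\epsilon$‑net of the $d$‑dimensional subspace $\{Ax\}$ in $\ell_p$, of size $(O(1/\epsilon))^d$, a union bound, and the usual geometric‑series extension from net to sphere --- but this only succeeds for $f(d,p)\gtrsim d$, i.e.\ $N\gtrsim d^2$. Getting the sharp oversampling $f(d,p)=O(\log d)$ (hence $N=O(d\log d)$ for $p=1$) instead requires bounding the supremum of the empirical process $x\mapsto \|SAx\|_p^p-\|Ax\|_p^p$ over the $\ell_p$‑unit ball of the subspace by a dimension‑free chaining / entropy‑integral bound, exploiting that after subsampling the rows still approximately carry the Lewis‑weight structure, so the ``bad'' directions form a low‑complexity set whose metric entropy contributes only $\log d$ rather than $d$. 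This last step --- shaving the oversampling from $d$ down to $\log d$ --- is the crux; the net‑based version is routine but lossy. Once it is in place, the no‑contraction bound $\|SAx\|_1\ge\tfrac12\|Ax\|_1$ and the no‑dilation bound $\|SAx\|_1\le 2\|Ax\|_1$ both fall out simultaneously, and the remaining constants (including the $0.999$ success probability) are fixed by choosing the absolute constant inside $f(d,p)$.
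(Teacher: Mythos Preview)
The paper does not prove this lemma at all: it is stated in the Preliminaries section as Theorem~7.1 of \cite{cp15} and used as a black box throughout. So there is no ``paper's own proof'' to compare your proposal against.

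Your sketch is a reasonable outline of the \cite{cp15} argument and correctly isolates the non-trivial step. The reduction to $\ell_2$ leverage scores of $W^{1/2-1/p}A$, the per-term boundedness $|(Ax)_i|^p \le w_i\|Ax\|_p^p$ (for $1\le p\le 2$), and the Bernstein bound for fixed $x$ are all fine. You are also right that a naive net argument only yields $N=O(d^2)$ and that getting $N=O(d\log d)$ is the crux. One caution: the way \cite{cp15} actually shave the oversampling down is not quite the generic ``entropy integral over the $\ell_p$ ball'' you describe; for $p<2$ they go through a reweighting to $\ell_2$, apply matrix-Chernoff concentration on the reweighted operator, and then transfer back via a comparison of norms, while the $2\le p<4$ case uses a separate Lewis-weight-specific argument. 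Your plan is not wrong in spirit, but ``chaining on the $\ell_p$ unit ball'' as stated would need substantial additional structure (specifically that sampling by Lewis weights keeps the Lewis-weight profile approximately flat) to avoid reproducing the $d$ rather than $\log d$ factor, and this is exactly where the work in \cite{cp15} lives.
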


{
Given a matrix $A\in\mathbb{R}^{n\times d}$ ($n\geq d$), by Lemma~\ref{lem:num_samples} and Lemma~\ref{lem:compute_lewis_weight}, we are able to compute a sampling/rescaling matrix $S$ in $O((nnz(A)+d^\omega\log d)\log n) $ with $\wt{O}(d)$ nonzero entries such that
\begin{align*}
\forall x\in\mathbb{R}^d, \frac12\|Ax\|_p^p\leq\|SAx\|_p^p\leq 2\|Ax\|_p^p.
\end{align*}
Sometimes, $\poly(d)$ is much smaller than $\log n$. In this case, we are able to compute the such sampling/rescaling matrix $S$ in $n\poly(d)$ time in the following way: basically we can run one of the input sparsity $\ell_p$ embedding algorithm (see e.g. \cite{mm13}) to compute a well conditioned basis $U$ of column span of $A$ in $n\poly(d)$ time. By sampling according to the well conditioned basis (see e.g. \cite{c05,ddhkm09,w14}), we can compute a sampling/rescaling matrix $S_1$ such that $(1-\varepsilon)\|Ax\|_p^p\leq\|S_1Ax\|_p^p\leq (1+\varepsilon)\|Ax\|_p^p$ where $\varepsilon\in(0,1)$ is an arbitrary constant. Notice that $S_1$ has $\poly(d)$ nonzero entries, thus $S_1A$ has size $\poly(d)$. Now, we apply Lewis weights sampling according to $S_1A$, we can get a sampling/rescaling matrix $S$ such that 
\begin{align*}
\forall x\in\mathbb{R}^d, (1-\frac{1}{3})\|S_1Ax\|_p^p\leq\|SS_1Ax\|_p^p\leq (1+\frac{1}{3})\|S_1Ax\|_p^p.
\end{align*}
It means that
\begin{align*}
\forall x\in\mathbb{R}^d, \frac{1}{2}\|Ax\|_p^p\leq\|SS_1Ax\|_p^p\leq 2\|Ax\|_p^p.
\end{align*}
Note that $SS_1$ is still a sampling/rescaling matrix according to $A$, and the number of non-zero entries is $\wt{O}(d)$. And the total running time is thus $n\poly(d)$.
}


\subsection{Frobenius norm and $\ell_2$ relaxation }

\begin{theorem}[Generalized rank-constrained matrix approximations, Theorem 2 in \cite{ft07}]\label{thm:reduce_to_frobenius}
Given matrices $A\in \mathbb{R}^{n\times d}$, $B\in \mathbb{R}^{n\times p}$, and $C\in \mathbb{R}^{q\times d}$, let the SVD of $B$ be $B=U_B\Sigma_B V_B^\top$ and the SVD of $C$ be $C=U_C\Sigma_C V_C^\top$. Then,
\begin{equation*}
B^\dagger ( U_B U_B^\top A V_C C_C^\top )_k C^\dagger = \underset{ \rank-k ~X\in \mathbb{R}^{p\times q} }{\arg \min} \| A - B X C \|_F,
\end{equation*}
where $(U_B U_B^\top A V_C V_C^\top)_k \in \mathbb{R}^{p\times q}$ is of rank at most $k$ and denotes the best rank-$k$ approximation to $U_B U_B^\top A V_C V_C^\top \in \mathbb{R}^{p\times d}$ in Frobenius norm.
\end{theorem}

\begin{claim}[$\ell_2$ relaxation of $\ell_p$-regression]\label{cla:ell2_relax_ell1_regression}
Let $p\in [1,2)$. For any $A\in \mathbb{R}^{n\times d}$ and $b\in \mathbb{R}^n$, define $x^* =\underset{x\in \mathbb{R}^d}{\arg\min} \| A x - b\|_p $ and $x'=\underset{x \in \mathbb{R}^d}{\arg\min} \| A x - b\|_2$. Then,
\begin{equation*}
\| Ax^* -b\|_p \leq \| A x' - b\|_p \leq n^{1/p-1/2} \cdot \| A x^* - b\|_p.
\end{equation*}
\end{claim}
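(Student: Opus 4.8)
The plan is to sandwich \(\|Ax'-b\|_p\) between \(\|Ax^*-b\|_p\) and \(n^{1/p-1/2}\|Ax^*-b\|_p\) using only the optimality of \(x^*\) for the \(\ell_p\) objective, the optimality of \(x'\) for the \(\ell_2\) objective, and the elementary norm comparisons between \(\|\cdot\|_p\) and \(\|\cdot\|_2\) on \(\mathbb{R}^n\) valid for \(1\le p<2\). The lower bound is immediate: since \(x^*\) minimizes \(\|Ax-b\|_p\) over all \(x\in\mathbb{R}^d\), in particular \(\|Ax^*-b\|_p\le\|Ax'-b\|_p\). So the content is the upper bound.

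For the upper bound, I would first pass from \(\|\cdot\|_p\) to \(\|\cdot\|_2\) on the residual of \(x'\): for any vector \(v\in\mathbb{R}^n\), \(\|v\|_p\le n^{1/p-1/2}\|v\|_2\) (this is Hölder's inequality, or power-mean monotonicity, in the direction that holds because \(p<2\)). Applying this with \(v = Ax'-b\) gives \(\|Ax'-b\|_p\le n^{1/p-1/2}\|Ax'-b\|_2\). Next, use that \(x'\) is the \(\ell_2\)-minimizer: \(\|Ax'-b\|_2\le\|Ax^*-b\|_2\). Finally, compare back the other way on the residual of \(x^*\): \(\|Ax^*-b\|_2\le\|Ax^*-b\|_p\), which holds for \(1\le p\le 2\) since \(\ell_q\)-norms of a fixed vector are non-increasing in \(q\). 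Chaining these three inequalities yields \(\|Ax'-b\|_p\le n^{1/p-1/2}\|Ax^*-b\|_p\), as desired.

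There is no real obstacle here — the statement is a standard norm-conversion argument — but the one point to be careful about is getting the two norm comparisons in the correct directions: \(\|v\|_2\le\|v\|_p\) always (for \(p\le 2\)), while the reverse loses a factor \(n^{1/p-1/2}\). One applies the lossy direction only to the \(\ell_2\)-optimal residual \(Ax'-b\) and the free direction to the \(\ell_p\)-optimal residual \(Ax^*-b\), which is exactly what makes the factor come out as \(n^{1/p-1/2}\) rather than \(n^{2(1/p-1/2)}\). For \(p=1\) this recovers the familiar \(\sqrt{n}\) factor, and for \(p\to 2\) the factor tends to \(1\), as it should.
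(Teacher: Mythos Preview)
Your proof is correct and follows essentially the same approach as the paper: both use the two norm comparisons $\|v\|_2 \le \|v\|_p$ and $\|v\|_p \le n^{1/p-1/2}\|v\|_2$ (valid for $p \le 2$), apply the lossy one to $Ax'-b$, invoke the $\ell_2$-optimality of $x'$, and then apply the free one to $Ax^*-b$. The paper's writeup has a minor typo (it writes $\sqrt{n}$ in the chain where it should be $n^{1/p-1/2}$), but your version gets this right.
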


\begin{proof}
The lower bound trivially holds by definition; we will focus on proving the upper bound. Because $Ax-b$ is an $n$-dimensional vector, $\forall x$,
\begin{equation}\label{eq:ell1_less_than_ell2}
\frac{1}{ n^{1/p-1/2} } \| A x - b\|_p \leq \| A x - b\|_2 \leq \| A x -b \|_p.
\end{equation}
Then,
\begin{align*}
 ~& \| A x'-b\|_p \\
\leq ~& \sqrt{n} \| A x' -b \|_2 & \text{~by~LHS~of~Equation~(\ref{eq:ell1_less_than_ell2})} \\
\leq ~& \sqrt{n} \| A x^* -b \|_2 & \text{~by~$x' = \underset{x}{\arg\min} \| A x-b\|_2$} \\
\leq ~& \sqrt{n} \| A x^* -b \|_p & \text{~by~RHS~of~Equation~(\ref{eq:ell1_less_than_ell2})}
\end{align*}
This completes the proof.
\end{proof}

\begin{claim}[Frobenius norm relaxation of $\ell_p$-low rank approximation]\label{cla:frobenius_relax_ell1_lowrank}
Let $p\in [1,2)$ and for any matrix $A\in \mathbb{R}^{n\times d}$, define $A^* = \underset{\rank-k~B\in \mathbb{R}^{n\times d}}{\arg \min} \| B - A\|_p$ and $A' =\underset{\rank-k~B \in \mathbb{R}^{n\times d} }{\arg \min} \| B - A \|_F$. Then
\begin{equation}
\| A^* - A \|_p \leq \| A'- A \|_p \leq (nd)^{1/p-1/2} \| A^* - A \|_p.
\end{equation}
\end{claim}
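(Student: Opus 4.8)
The plan is to mimic exactly the argument just given for Claim~\ref{cla:ell2_relax_ell1_regression} ($\ell_2$ relaxation of $\ell_p$-regression), but now for the low rank approximation problem, where the ambient dimension of the error vector is $nd$ rather than $n$. First I would observe that for \emph{any} $n\times d$ matrix $M$, viewing $M$ as a vector of $nd$ coordinates and applying the standard norm inequalities between $\ell_p$ and $\ell_2$ on $\mathbb{R}^{nd}$ gives
\begin{equation*}
\frac{1}{(nd)^{1/p-1/2}}\|M\|_p \le \|M\|_F \le \|M\|_p,
\end{equation*}
where we use that $\|M\|_F$ is precisely the $\ell_2$-norm of the $nd$-dimensional vectorization of $M$, and $\|M\|_p$ its $\ell_p$-norm; the first inequality is power-mean / Hölder and the second is the monotonicity of $\ell_p$-norms in $p$ (for $p\le 2$). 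This is the analogue of Equation~(\ref{eq:ell1_less_than_ell2}).

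Next I would apply this with $M = A' - A$ and $M = A^* - A$ and chain the inequalities exactly as in the proof of Claim~\ref{cla:ell2_relax_ell1_regression}. The lower bound $\|A^*-A\|_p \le \|A'-A\|_p$ is immediate from the definition of $A^*$ as the $\ell_p$-minimizer over rank-$k$ matrices. For the upper bound: $\|A'-A\|_p \le (nd)^{1/p-1/2}\|A'-A\|_F \le (nd)^{1/p-1/2}\|A^*-A\|_F \le (nd)^{1/p-1/2}\|A^*-A\|_p$, where the middle step uses that $A'$ minimizes the Frobenius-norm distance over rank-$k$ matrices (so in particular beats the rank-$k$ matrix $A^*$), and the outer two steps use the displayed norm comparison applied to $A'-A$ and to $A^*-A$ respectively.

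The only thing to be slightly careful about is that the set of rank-$k$ matrices is the \emph{same} feasible set in both the $\ell_p$ and the Frobenius optimization problems — this is what lets us conclude $\|A'-A\|_F \le \|A^*-A\|_F$, since $A^*$ is a feasible point (rank at most $k$) for the Frobenius problem. There is essentially no obstacle here; the argument is a direct transcription of the regression case with $n$ replaced by $nd$, and the main (trivial) point is just to recognize that the entrywise $\ell_p$-norm of a matrix is literally an $\ell_p$-norm on $\mathbb{R}^{nd}$ so that the scalar inequality between $\ell_p$ and $\ell_2$ norms applies verbatim.
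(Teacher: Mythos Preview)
Your proposal is correct and follows essentially the same approach as the paper: both establish the norm comparison $\frac{1}{(nd)^{1/p-1/2}}\|M\|_p \le \|M\|_F \le \|M\|_p$ for an arbitrary $n\times d$ matrix $M$, then chain the three inequalities exactly as you describe, using the optimality of $A^*$ for the lower bound and the Frobenius-optimality of $A'$ for the middle step of the upper bound.
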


\begin{proof}
The lower bound of $\| A '- A\|_p $ trivially holds by definition. We show an upper bound of $\| A'- A\|_p$ in the rest of the proof. For any $A'- A \in \mathbb{R}^{n\times d}$, we have
\begin{equation}\label{eq:ell1_less_than_frobenius}
\frac{1}{ (nd)^{1/p-1/2}}\| A'-A\|_p \leq \|A'-A\|_F \leq \| A' - A \|_p.
\end{equation}
Then,
\begin{align*}
&\| A' - A \|_p \\
\leq ~& (nd)^{1/p-1/2} \| A'-A \|_F & \text{~by~LHS~of~Equation~(\ref{eq:ell1_less_than_frobenius})}\\
\leq ~& (nd)^{1/p-1/2} \| A^*-A \|_F & \text{~by~}A' = \underset{\rank-k~B}{\arg\min} \| B - A\|_p \\
\leq ~& (nd)^{1/p-1/2} \| A^*-A \|_p. & \text{~by~RHS~of~Equation~(\ref{eq:ell1_less_than_frobenius})}
\end{align*}
\end{proof}

\subsection{Converting entry-wise $\ell_1$ and $\ell_p$ objective functions into polynomials}\label{sec:converting_l1_lp_into_polynomial}

\begin{claim}[Converting absolute value constraints into variables]\label{cla:convert_absolute_constraints_into_variables}
Given $m$ polynomials $f_1(x), f_2(x),$ $\cdots,$ $f_m(x)$ where $x\in \mathbb{R}^{v}$, solving the problem
\begin{equation}
\min_{x \in \mathbb{R}^v }\sum_{i=1}^m |f_i(x)|,
\end{equation}
is equivalent to solving another minimization problem with $O(m)$ extra constraints and $m$ extra variables,
\begin{align*}
& \min_{x \in \mathbb{R}^v ,\sigma \in \mathbb{R}^m } \sum_{i=1}^m \sigma_i f_i(x) \\
 \mathrm{~s.t.} ~& \sigma_i^2 = 1, \forall i\in [m] \\
~& f_i(x) \sigma_i \geq 0, \forall i\in[m] .\\
\end{align*}
\end{claim}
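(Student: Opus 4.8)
The plan is to exhibit, for any feasible $x$ of the original problem, a feasible pair $(x,\sigma)$ of the auxiliary problem with the same objective value, and conversely to extract from any feasible $(x,\sigma)$ of the auxiliary problem a point $x$ of the original problem whose objective is no larger. First I would establish the forward direction: given $x\in\mathbb{R}^v$, set $\sigma_i = \operatorname{sign}(f_i(x))$ when $f_i(x)\neq 0$ and $\sigma_i = 1$ (either choice works) when $f_i(x)=0$. Then $\sigma_i^2 = 1$ for all $i$, $f_i(x)\sigma_i = |f_i(x)| \geq 0$ for all $i$, so $(x,\sigma)$ is feasible for the auxiliary problem, and $\sum_i \sigma_i f_i(x) = \sum_i |f_i(x)|$. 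Hence the auxiliary minimum is at most the original minimum.

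For the reverse direction, take any $(x,\sigma)$ feasible for the auxiliary problem. The constraint $\sigma_i^2 = 1$ forces $\sigma_i \in \{-1,+1\}$, so $\sigma_i f_i(x) = \pm f_i(x)$; combined with $\sigma_i f_i(x)\geq 0$, we get $\sigma_i f_i(x) = |\sigma_i f_i(x)| = |\sigma_i|\,|f_i(x)| = |f_i(x)|$. Therefore the auxiliary objective at $(x,\sigma)$ equals $\sum_i |f_i(x)|$, which is the original objective at $x$. Thus every feasible point of the auxiliary problem has objective equal to the original objective at its $x$-coordinate, so the original minimum is at most the auxiliary minimum. Combining the two inequalities gives equality of the optimal values, and moreover the argument shows how to read off an optimal $x$ from an optimal $(x,\sigma)$ and vice versa, so the two problems are equivalent in the stated sense. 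Finally I would note the bookkeeping claim: we have added $m$ new variables $\sigma_1,\dots,\sigma_m$ and $2m = O(m)$ new constraints (the $m$ equalities $\sigma_i^2 = 1$ and the $m$ inequalities $f_i(x)\sigma_i\geq 0$).

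There is no real obstacle here — the only point requiring a word of care is the case $f_i(x) = 0$, where the sign is not uniquely determined; one just observes that the construction still goes through with an arbitrary choice of $\sigma_i\in\{\pm 1\}$, since $\sigma_i f_i(x) = 0 = |f_i(x)|$ regardless. One might also remark that $\sigma_i f_i(x)$ and $\sigma_i^2$ are polynomials in the combined variable vector $(x,\sigma)$ of degree at most $\deg f_i + 1$ and $2$ respectively, which is what makes this reduction useful for feeding the problem into the polynomial system verifier of Theorem~\ref{decision_solver_thm}.
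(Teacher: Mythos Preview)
Your proof is correct and complete. The paper states this claim without proof, treating it as immediate; your argument is exactly the natural one and there is nothing to add.
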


\begin{claim}{(Handling $\ell_p$)}
Given $m$ polynomials $f_1(x), f_2(x),$ $\cdots,$ $f_m(x)$ where $x\in \mathbb{R}^{v}$ and $p=a/b$ for positive integers $a$ and $b$, solving the problem
\begin{equation}
\min_{x \in \mathbb{R}^v }\sum_{i=1}^m |f_i(x)|^p,
\end{equation}
is equivalent to solving another minimization problem with $O(m)$ extra constraints and $O(m)$ extra variables,
\begin{align*}
& \min_{x \in \mathbb{R}^v ,\sigma \in \mathbb{R}^m } \sum_{i=1}^m y_i \\
 \mathrm{~s.t.} ~& \sigma_i^2 = 1, \forall i\in [m] \\
~& f_i(x) \sigma_i \geq 0, \forall i\in[m] \\
~& (\sigma_i f_i(x))^a = y_i^b,\forall i\in [m] \\
~& y_i \geq 0, \forall i\in [m].
\end{align*}
\end{claim}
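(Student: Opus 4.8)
The plan is to establish that the two minimization problems have the same optimal value, and that from an optimal point of the reformulated program one reads off (by deleting the auxiliary variables $\sigma$ and $y$) an optimal $x$ for the original program. The whole argument is a feasibility analysis of the extra constraints; there is no genuine obstacle, only one point — the behavior at $f_i(x)=0$ together with the invertibility of $t\mapsto t^b$ on $[0,\infty)$ — that needs a line of care, and it is precisely this point that explains why the constraint $y_i\geq 0$ is present.

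First I would analyze the constraints on $\sigma$. The constraint $\sigma_i^2=1$ forces $\sigma_i\in\{-1,+1\}$, and combined with $f_i(x)\sigma_i\geq 0$ it forces $\sigma_i f_i(x)=|f_i(x)|$ for every $i$: if $f_i(x)\neq 0$ then necessarily $\sigma_i=\mathrm{sign}(f_i(x))$, while if $f_i(x)=0$ then $\sigma_i f_i(x)=0=|f_i(x)|$ regardless of the sign of $\sigma_i$. Thus at every feasible point the quantity $\sigma_i f_i(x)$ is a well-defined nonnegative real number equal to $|f_i(x)|$. Next I would use $y_i\geq 0$ together with $(\sigma_i f_i(x))^a=y_i^b$ to pin down $y_i$. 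Since $a,b$ are positive integers, the map $t\mapsto t^b$ is a strictly increasing bijection of $[0,\infty)$ onto itself, so the equation $y_i^b=(\sigma_i f_i(x))^a=|f_i(x)|^a$ has the unique nonnegative solution $y_i=|f_i(x)|^{a/b}=|f_i(x)|^p$ (when $f_i(x)=0$ this reads $y_i=0$, which is consistent). Consequently the reformulated objective satisfies $\sum_{i=1}^m y_i=\sum_{i=1}^m |f_i(x)|^p$ at every feasible point $(x,\sigma,y)$.

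To conclude I would check both directions. Given any $x\in\mathbb{R}^v$, put $\sigma_i=\mathrm{sign}(f_i(x))$ (say $\sigma_i=1$ when $f_i(x)=0$) and $y_i=|f_i(x)|^p$; this triple is feasible for the reformulation and its objective equals $\sum_i|f_i(x)|^p$, so the optimal value of the reformulation is at most $\min_{x\in\mathbb{R}^v}\sum_i|f_i(x)|^p$. Conversely, by the previous paragraph any feasible $(x,\sigma,y)$ has objective exactly $\sum_i|f_i(x)|^p\geq\min_{x\in\mathbb{R}^v}\sum_i|f_i(x)|^p$; hence the two optimal values coincide, and the same correspondence matches infima and optimal points when the minima are not attained. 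Finally I would count the overhead: we introduced the $m$ variables $\sigma_i$ and the $m$ variables $y_i$, i.e.\ $O(m)$ extra variables, and the $4m=O(m)$ extra constraints $\sigma_i^2=1$, $f_i(x)\sigma_i\geq 0$, $(\sigma_i f_i(x))^a=y_i^b$, $y_i\geq 0$. All of these are polynomial in $(x,\sigma,y)$ precisely because $a,b$ are integers, with degree bounded by $\max(2,\,a\cdot\max_i\deg f_i,\,b)$, which is exactly what lets the reformulation be handed to the decision procedure of Theorem~\ref{decision_solver_thm}.
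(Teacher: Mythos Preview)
Your proof is correct. The paper states this claim without proof, treating it as a routine observation; your argument supplies exactly the details the paper omits, namely that the sign constraints force $\sigma_i f_i(x)=|f_i(x)|$ and that the bijectivity of $t\mapsto t^b$ on $[0,\infty)$ together with $y_i\ge 0$ then pins down $y_i=|f_i(x)|^{a/b}$ uniquely.
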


\subsection{Converting entry-wise $\ell_1$ objective function into a linear program}
\begin{claim}
Given any matrix $A\in \mathbb{R}^{n\times d}$ and matrix $B\in \mathbb{R}^{k\times d}$, the problem $\min_{U \in \mathbb{R}^{n\times k} } \| U B - A \|_1$ can be solved by solving the following linear program,
\begin{align*}
& \min_{U\in \mathbb{R}^{n\times k}, x\in \mathbb{R}^{n\times d}} \sum_{i=1}^n \sum_{j=1}^m x_{i,j}\\
~& U_i B^j - A_{i,j} \leq x_{i,j}, \forall i\in [n], j\in [d]\\
~& U_i B^j - A_{i,j} \geq -x_{i,j}, \forall i\in[n], j\in [d]\\
~& x_{i,j} \geq 0, \forall i\in [n], j\in [d],
\end{align*}
where the number of constraints is $O(nd)$ and the number of variables is $O(nd)$.
\end{claim}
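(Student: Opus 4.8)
The plan is to verify that the displayed linear program is an exact reformulation of $\min_{U}\|UB-A\|_1$, which rests on the elementary identity $|t|=\min\{\,x\in\mathbb{R}:x\ge t,\ x\ge -t\,\}$ applied to each entry of $UB-A$. First I would unfold the objective: by the definition of the entrywise $\ell_1$-norm,
\[
\|UB-A\|_1=\sum_{i=1}^n\sum_{j=1}^d\bigl|(UB)_{i,j}-A_{i,j}\bigr|,
\]
and each $(UB)_{i,j}=U^iB_j$ (the scalar written $U_iB^j$ in the statement) is an affine function of the $k$ unknowns in row $i$ of $U$. Thus for a fixed $U$ the inner minimization over the auxiliary variables $x_{i,j}$ decouples completely across the pairs $(i,j)$, and the two inequality constraints together with nonnegativity assert precisely that $x_{i,j}$ is an upper bound on $|(UB)_{i,j}-A_{i,j}|$.

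Next I would prove that the two optimal values coincide by the usual pair of inequalities. \textbf{(Upper bound.)} Let $U^\star$ attain $\min_U\|UB-A\|_1$ and set $x_{i,j}:=|(U^\star B)_{i,j}-A_{i,j}|$; the pair $(U^\star,x)$ is feasible and has LP objective $\sum_{i,j}x_{i,j}=\|U^\star B-A\|_1$, so the LP optimum is at most $\min_U\|UB-A\|_1$. \textbf{(Lower bound.)} For any feasible $(U,x)$ the first two constraint families give $x_{i,j}\ge (UB)_{i,j}-A_{i,j}$ and $x_{i,j}\ge -\bigl((UB)_{i,j}-A_{i,j}\bigr)$, hence $x_{i,j}\ge|(UB)_{i,j}-A_{i,j}|$; summing over $i,j$ and then minimizing over $U$ shows the LP objective is at least $\min_U\|UB-A\|_1$. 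Combining the two, the optimal values are equal; moreover at any LP optimum one has $x_{i,j}=|(UB)_{i,j}-A_{i,j}|$ and the $U$-block is a minimizer of $\|UB-A\|_1$, while conversely every minimizer extends to an LP optimum by filling in the $x_{i,j}$ with the corresponding absolute values. For the size bound, the variables are the $nk$ entries of $U$ plus the $nd$ entries of $x$, which is $O(nd)$ since $k\le d$; the constraints are $2nd$ inequalities from the first two families plus $nd$ nonnegativity constraints, i.e. $O(nd)$ in total.

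I do not anticipate any genuine obstacle here: the claim is a direct linear-programming encoding of entrywise absolute values, and the only point worth keeping in mind is that the inequality constraints make $x_{i,j}$ merely an \emph{upper} bound on $|(UB)_{i,j}-A_{i,j}|$, so it is the minimization of $\sum_{i,j}x_{i,j}$ that forces tightness at the optimum. Feasibility of the program is trivial and the objective is bounded below by $0$, so an optimum is attained; the $O(nd)$ count uses only that the target rank satisfies $k\le d$.
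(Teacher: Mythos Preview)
Your proposal is correct and is precisely the standard argument for encoding an entrywise $\ell_1$ objective as a linear program. The paper itself states this claim without proof, treating it as a well-known fact, so your write-up simply supplies the routine justification the paper omits.
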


\section{$\ell_1$-Low Rank Approximation}\label{sec:alg}
This section presents our main $\ell_1$-low rank approximation algorithms. Section~\ref{sec:three_existence_results} provides our three existence results. Section~\ref{sec:input_sparsity_algorithm} shows an input sparsity algorithm with $\poly(k) \log^2 d \log n$-approximation ratio. Section~\ref{sec:polyklogd_approx_algorithm} improves the approximation ratio to $\poly(k)\log d$. Section~\ref{sec:k_approx_algorithm} explains how to obtain $\wt{O}(k)$ approximation ratio. Section~\ref{sec:constant_approx_algorithm} improves the approximation ratio to $O(1)$ by outputting a rank-$2k$ solution. Section~\ref{sec:cur_decomposition_algorithm} presents our algorithm for CUR decomposition. Section~\ref{sec:rankr_B_algorithm} includes some useful properties. Our $\ell_1$-low rank approximation algorithm for a $\rank$-$r$ (where $k \leq r\leq (n,d)$ ) matrix is used as a black box (by setting $r=\poly(k)$) in several other algorithms.

\subsection{Existence results via dense Cauchy transforms, sparse Cauchy transforms, Lewis weights}\label{sec:three_existence_results}

The goal of this section is to present the existence results in Corollary~\ref{cor:three_existence_results}. We first provide some bicriteria algorithms in Theorem~\ref{thm:three_bicriteria_algorithms} which can be viewed as a ``warmup''. Then the proof of our bicriteria algorithm actually implies the existence results.
\begin{theorem}\label{thm:three_bicriteria_algorithms}
Given matrix $A\in \mathbb{R}^{n\times d}$, for any $k\geq 1$, there exist bicriteria algorithms with running time $T$ (specified below), which output two matrices $U\in \mathbb{R}^{n\times m}$, $V\in \mathbb{R}^{m \times d}$ such that, with probability $9/10$,
\begin{equation*}
\| UV - A\|_1 \leq \alpha \underset{\rank-k~A_k}{ \min } \| A_k - A \|_1.
\end{equation*}

\rm{(\RN{1})}. Using a dense Cauchy transform, \\ $T=\poly(n,d,k)$, $m=O(k\log k)$, $\alpha =O(\sqrt{k\log k} \log d)$.

\rm{(\RN{2})}. Using a sparse Cauchy transform,\\
 $T= \poly(n,d,k) $,$m= O(k^5\log^5 k)$, $\alpha= O({k^{4.5} \log^{4.5} k} \log d )$.

\rm{(\RN{3})}. Sampling by Lewis weights,\\
$T=(nd)^{\wt{O}(k)}$, $m=O(k\log k)$, $\alpha = O(\sqrt{k\log k})$.

\end{theorem}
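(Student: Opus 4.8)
The plan is to fit all three cases into one template: produce a matrix $V$ with $m$ rows whose row span provably contains a rank-$k$ approximation of $A$ within the claimed factor $\alpha$, and then output $V$ together with the cost-optimal $U$ obtained by solving the convex program $\min_U\|UV-A\|_1$ via the linear-programming formulation in Section~\ref{sec:preli}. Throughout write $A^*=U^*V^*$ for an optimal rank-$k$ factorization with $U^*\in\mathbb R^{n\times k}$, so $\|A^*-A\|_1=\OPT$. For (I) and (II), $V=SA$ for an oblivious sketch $S$; for (III), $V$ is a submatrix of rows of $A$.

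The core step is an $\ell_2$-relaxation. For a sketch $S\in\mathbb R^{t\times n}$ put $\widehat V=(SU^*)^\dagger SA$; since the least-squares minimizer decomposes over columns, $\widehat V$ minimizes $\sum_j\|(SU^*V-SA)_j\|_2$ over all $V$, and its rows lie in the row span of $SA$. Hence $\|SU^*\widehat V-SA\|_1\le\sqrt t\sum_j\|(SU^*\widehat V-SA)_j\|_2\le\sqrt t\sum_j\|(SU^*V^*-SA)_j\|_2\le\sqrt t\,\|SU^*V^*-SA\|_1$. Now I invoke the ``no dilation'' bound of Section~\ref{sec:l1}, $\|SU^*V^*-SA\|_1=O(\log d)\OPT$, to get $\|SU^*\widehat V-SA\|_1\le O(\sqrt t\log d)\OPT$; and the ``no contraction'' bound of Section~\ref{sec:l1}, $\|SU^*\widehat V-SA\|_1\ge\Omega(1)\|U^*\widehat V-A\|_1-O(\log d)\OPT$ — which for a Cauchy $S$ holds once $t=O(k\log k)$, via $\|SU^*\widehat V-SA\|_1\ge\|SU^*(\widehat V-V^*)\|_1-\|SU^*V^*-SA\|_1$ together with $S$ not contracting the column span of $U^*$ \cite{sw11} and the triangle inequality $\|U^*(\widehat V-V^*)\|_1\ge\|U^*\widehat V-A\|_1-\OPT$. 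Combining gives $\|U^*\widehat V-A\|_1\le O(\sqrt t\log d)\OPT$; since $U^*\widehat V=\big(U^*(SU^*)^\dagger\big)(SA)$, the pair $V=SA$ and the LP-optimal $U$ achieve approximation $O(\sqrt t\log d)$ with $m=t$. For a dense Cauchy transform $t=O(k\log k)$ yields (I). For the sparse Cauchy transform, the $\ell_1$-embedding of \cite{mm13} needs $t=O(k^5\log^5k)$ rows and has $\poly(k)$ distortion, which weakens the contraction constant to $1/\poly(k)$ and propagates into the stated $m$ and $\alpha$ of (II). In both cases computing $SA$ and solving the LP take $\poly(n,d,k)$ time, and a union bound over the two randomized events gives probability $9/10$.

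For (III) I replace the oblivious sketch by Lewis-weight sampling. Let $D_1$ sample $O(k\log k)$ rows of $A$ with the rescalings of Theorem~7.1 of \cite{cp15} applied to $U^*$; then with probability $0.999$, $D_1$ is a $(1/2,2)$ subspace embedding of the column span of $U^*$, and since $\mathbb E\,\|D_1(U^*V^*-A)\|_1=\OPT$, Markov gives $\|D_1(U^*V^*-A)\|_1\le 10\OPT$ with probability $0.9$; both hold simultaneously with positive probability. On that event the same $\ell_2$-relaxation — now with the clean constant $2$ in place of $O(\log d)$ — shows $\widehat V=(D_1U^*)^\dagger D_1A$ has rows in the row span of $D_1A$ and $\|U^*\widehat V-A\|_1=O(\sqrt{k\log k})\OPT$. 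The key observation is that rescaling rows does not change a row span, so the row span of $D_1A$ equals that of $A^{S}$ with $S=\supp(D_1)$; hence \emph{some} subset $S$ of at most $O(k\log k)$ rows satisfies $\min_Z\|ZA^S-A\|_1\le O(\sqrt{k\log k})\OPT$. The algorithm enumerates all such subsets of $[\min(n,d)]$ (transposing $A$ first if $d<n$), solves $\min_Z\|ZA^S-A\|_1$ by linear programming for each, and returns the best pair $(Z,A^S)$, in total time $\binom{\min(n,d)}{O(k\log k)}\poly(nd)=(nd)^{\wt O(k)}$, with $m=O(k\log k)$.

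The steps needing the most care are the no-dilation bound $\|SU^*V^*-SA\|_1=O(\log d)\OPT$ and the uniform (over all $V$) no-contraction bound for the Cauchy and sparse-Cauchy sketches — precisely the analyses deferred to Section~\ref{sec:l1}, which must be carried out through expectations of truncated Cauchy variables since the entries of $SU^*V^*-SA$ are dependent across columns — together with the $\ell_1$-subspace-embedding guarantees that control the contraction of the column span of $U^*$. Assuming those, the remaining points requiring attention are: verifying that the embedding event and the Markov event in (III) overlap and that rescalings are irrelevant to the span, confirming that the LP indeed returns a cost no larger than $\|U^*\widehat V-A\|_1$, and tracking how the sparse-Cauchy distortion inflates $m$ and $\alpha$ in (II).
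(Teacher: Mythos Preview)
Your proposal is correct and follows essentially the same approach as the paper: form $\widehat V=(SU^*)^\dagger SA$ via the $\ell_2$-relaxation (Claim~\ref{cla:ell2_relax_ell1_regression}), combine the no-dilation and no-contraction bounds of Section~\ref{sec:l1} (packaged there as Lemma~\ref{lem:con_dil_summary} and Lemma~\ref{lem:general_sketch_SUV}) to transfer the $\sqrt m$-approximate sketched solution back to a $\sqrt m\cdot O(\beta)$-approximate original solution, then solve the LP for $U$; for (\RN{3}), enumerate all $O(k\log k)$-row subsets since the rescalings in $D$ do not affect the row span. The only cosmetic difference is that you unpack the no-contraction inequality inline via the triangle-inequality argument, whereas the paper invokes its lemmas as black boxes.
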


The matrices in \rm{(\RN{1})}, \rm{(\RN{2})}, \rm{(\RN{3})} here, are the same as those in \rm{(\RN{1})}, \rm{(\RN{2})}, \rm{(\RN{3})}, \rm{(\RN{4})} of Lemma~\ref{lem:con_dil_summary}. Thus, they have the properties shown in Section~\ref{sec:properties_of_con_dil}.

\begin{proof}
We define
\begin{align*}
\OPT := \min_{\rank-k~A_k} \| A_k - A \|_1.
\end{align*}
We define $U^*\in\mathbb{R}^{n\times k},V^*\in\mathbb{R}^{k\times d}$ to be the optimal solution such that
$\|U^*V^*-A\|_1=\OPT.$

Part (\RN{1}). Apply the dense Cauchy transform $S\in \mathbb{R}^{m\times n}$ with $m=O(k\log k)$ rows, and $\beta=O(\log d)$.

Part (\RN{2}). Apply the sparse Cauchy transform $S$($=\Pi\in \mathbb{R}^{m\times n}$) with $m=O(k^5\log^5 k)$ rows, and $\beta = O(\sigma \log d) = O(k^2\log^2 k \log d)$.

Part (\RN{3}). Use $S$ ($=D\in \mathbb{R}^{n\times k}$) to denote an $n\times n$ matrix which is a sampling and rescaling diagonal matrix according to the Lewis weights of matrix $U^*$. It has $m=O(k\log k)$ rows, and $\beta=O(1)$. Sometimes we abuse notation, and should regard $D$ as a matrix which has size $m\times n$, where $m=O(k\log k)$.

We can just replace $M$ in Lemma~\ref{lem:con_dil_summary} with $U^*V^*-A$, replace $U$ in Lemma~\ref{lem:con_dil_summary} with $U^*$, and replace $c_1c_2$ with $O(\beta)$. So, we can apply Lemma~\ref{lem:con_dil_summary} for $S$. Then we can plug it in Lemma~\ref{lem:general_sketch_SUV}, we have: with constant probability, for any $c\geq 1$, for any $V'\in\mathbb{R}^{k\times d}$ which satisfies
\begin{align}\label{eq:c1_1}
\|SU^*V'-SA\|_1\leq c\cdot \min_{V\in\mathbb{R}^{k\times d}}\|SU^*V-SA\|_1,
\end{align}
it has
\begin{align}\label{eq:c1_2}
\|U^*V'-A\|_1\leq c\cdot O(\beta)\|U^*V^*-A\|_1.
\end{align}



Define $\widehat{V}_i = \underset{V_i \in \mathbb{R}^k }{\arg\min} \| SU^* V_i - SA_i \|_2$ for each $i\in [d]$. By
using Claim \ref{cla:ell2_relax_ell1_regression} with $n= m$ and $d= k$, it shows
\begin{align*}
 \|SU^*\wh{V}-SA\|_1=\sum_{i=1}^d \| SU^* \widehat{V}_i - S A_i\|_1 \leq ~& \sum_{i=1}^d \sqrt{m}    \| SU^* \wt{V}_i - S A_i\|_1=\sqrt{m}\min_{V\in\mathbb{R}^{k\times d}}\|SU^*V-SA\|_1.
\end{align*}
which means $\wh{V}$ is a $\sqrt{m}$-approximation solution to problem, $\underset{V\in \mathbb{R}^{k\times d}}{\min}\| SU^*V - SA\|_1$.

Now, let us look into Equation~(\ref{eq:c1_1}) and Equation~(\ref{eq:c1_2}), we can obtain that
\begin{align*}
\| U^* \wh{V} - A\|_1 \leq \sqrt{m} O(\beta) \OPT.
\end{align*}

Because $\widehat{V}_i$ is the optimal solution of the $\ell_2$ regression problem, we have
\begin{equation*}
\widehat{V}_i = (SU^*)^\dagger S A_i \in \mathbb{R}^k, \forall i\in [d] \text{, which means~} \widehat{V} = (SU^*)^\dagger SA \in \mathbb{R}^{k\times d}.
\end{equation*}

Plugging $\wh{V}$ into original problem,
we obtain
\begin{align*}
\| U^* (SU^*)^\dagger \cdot SA - A \|_1 \leq \sqrt{m}O(\beta) \OPT.
\end{align*}

It means
\begin{align}\label{eq:existance_result}
\min_{\rank-k\ X\in\mathbb{R}^{n\times m}}\| X SA - A \|_1 \leq \sqrt{m}O(\beta) \OPT.
\end{align}

If we ignore the constraint on the rank of $X$, we can get a bicriteria solution:

For part (\RN{1}), notice that $X$ is an $n\times m$ matrix which can be found by using a linear program, because matrices $SA \in \mathbb{R}^{m \times d}$ and $A\in \mathbb{R}^{n\times d}$ are known.

For part (\RN{2}), notice that $X$ is an $n\times m$ matrix which can be found by using a linear program, because matrices $SA \in \mathbb{R}^{m \times d}$ and $A\in \mathbb{R}^{n\times d}$ are known.

For part (\RN{3}), notice that $X$ is an $n\times m$ matrix which can be found by using a linear program, when the span of rows of $DA \in \mathbb{R}^{m \times d}$ is known.
We assume that $D$ is known in all the above discussions. But $D$ is actually unknown. So we need to try all the possible choices of the row span of $DA$. Since $D$ samples at most $m=O(k\log k)$ rows of $A$, then the total number of choices of selecting $m$ rows from $n$ rows is ${n \choose m} = n^{O(k\log k)}$. This completes the proof.

\end{proof}
Equation (\ref{eq:existance_result}) in the proof of our bicriteria solution implies the following result,
\begin{corollary}\label{cor:three_existence_results}
Given $A\in \mathbb{R}^{n\times d}$, there exists a $\rank$-$k$ matrix $A'\in \mathbb{R}^{n\times d}$ such that $A'\in\mathrm{rowspan}(S'A)\subseteq \mathrm{rowspan}(A)$ and $\| A'-A\|_1 \leq \alpha \cdot \underset{\rank-k~A_k}{\min} \| A-A_k\|_1$, where $S'\in\mathbb{R}^{m\times n}$ is a sketching matrix.
If $S'$

\rm{(\RN{1})}. indicates the dense Cauchy transform, then $\alpha=O(\sqrt{k\log k} \log d)$.

\rm{(\RN{2})}. indicates the sparse Cauchy transform, then $\alpha= O(k^{4.5}\log^{4.5}k \log d)$.

\rm{(\RN{3})}. indicates sampling by Lewis weights, then $\alpha = O(\sqrt{k\log k})$.
\end{corollary}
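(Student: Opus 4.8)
The plan is to observe that Corollary~\ref{cor:three_existence_results} is essentially an immediate corollary of the statement labeled Equation~(\ref{eq:existance_result}) in the proof of Theorem~\ref{thm:three_bicriteria_algorithms}, as the text preceding the corollary already asserts. So the work is really just to carefully unwrap what was shown there and re-package it in the desired form. Recall that in the proof of the bicriteria theorem, for each of the three choices of sketching matrix $S = S'$ (dense Cauchy, sparse Cauchy, Lewis-weight sampling), we established
\begin{align*}
\min_{\rank\text{-}k\ X\in\mathbb{R}^{n\times m}}\| X SA - A \|_1 \leq \sqrt{m}\,O(\beta)\, \OPT,
\end{align*}
where the relevant parameters are $m = O(k\log k)$, $\beta = O(\log d)$ for the dense Cauchy transform; $m = O(k^5\log^5 k)$, $\beta = O(k^2\log^2 k\log d)$ for the sparse Cauchy transform; and $m = O(k\log k)$, $\beta = O(1)$ for Lewis-weight sampling. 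The key point, which I would make explicit, is that the minimizer $X^\star$ in this expression yields a matrix $A' := X^\star (S'A)$ that (i) has rank at most $k$, (ii) lies in $\mathrm{rowspan}(S'A)$ by construction, and (iii) satisfies $\mathrm{rowspan}(S'A)\subseteq \mathrm{rowspan}(A)$ since $S'A$ is a linear combination of rows of $A$ (in the Lewis-weights case $S'=D$ literally selects and rescales rows of $A$; in the Cauchy cases $S'$ is an arbitrary matrix applied on the left, so each row of $S'A$ is a linear combination of rows of $A$).

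Concretely, the steps I would carry out are: first, restate that the $\ell_2$-relaxation argument via Claim~\ref{cla:ell2_relax_ell1_regression} shows $\widehat V = (S'U^*)^\dagger S'A$ is a $\sqrt{m}$-approximate solution to $\min_V \|S'U^*V - S'A\|_1$; second, invoke the no-contraction/no-dilation guarantee — packaged as Lemma~\ref{lem:con_dil_summary} feeding into Lemma~\ref{lem:general_sketch_SUV} — to conclude that $\|U^*\widehat V - A\|_1 \le \sqrt{m}\,O(\beta)\,\OPT$; third, note $U^*\widehat V = U^*(S'U^*)^\dagger S'A$ is a rank-$k$ matrix whose rows lie in $\mathrm{rowspan}(S'A)$, hence setting $A' := U^*\widehat V$ (equivalently, the optimal $X^\star(S'A)$, which can only be better) gives exactly the claimed conclusion with $\alpha = \sqrt{m}\,O(\beta)$; fourth, plug in the three parameter regimes to read off $\alpha = O(\sqrt{k\log k}\,\log d)$, $\alpha = O(k^{4.5}\log^{4.5} k\,\log d)$, and $\alpha = O(\sqrt{k\log k})$ respectively. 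The containment $\mathrm{rowspan}(S'A)\subseteq \mathrm{rowspan}(A)$ is immediate in all three cases because $S'$ multiplies $A$ from the left.

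I do not anticipate a genuine obstacle here, since this is a repackaging of an already-proven inequality; the only thing requiring a modicum of care is making sure the roles are stated cleanly — namely that we are asserting \emph{existence} of a good rank-$k$ matrix inside $\mathrm{rowspan}(S'A)$ (which is what downstream algorithms will exploit when they optimize over that row span), rather than claiming we can find it efficiently. In particular, for part~(\RN{3}) one should emphasize that the existence statement holds for the (unknown) matrix $D$ obtained from the Lewis weights of $U^*$, which is why later algorithms enumerate over the $n^{O(k\log k)}$ possible row subsets. I would also double-check that the $\sqrt{m}\,O(\beta)$ bound for the sparse Cauchy case simplifies correctly: $\sqrt{m} = O(k^{2.5}\log^{2.5} k)$ and $\beta = O(k^2\log^2 k\log d)$, so the product is $O(k^{4.5}\log^{4.5} k\log d)$, matching the stated $\alpha$ in part~(\RN{2}). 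That arithmetic, and the identification of $A'$ with the minimizer $X^\star(S'A)$, are the only places where a slip could occur, and both are routine.
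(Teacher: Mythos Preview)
Your proposal is correct and follows essentially the same approach as the paper: both derive the corollary directly from Equation~(\ref{eq:existance_result}) in the proof of Theorem~\ref{thm:three_bicriteria_algorithms}, then take $A'$ to be the rank-$k$ minimizer in the row span of $S'A$ (the paper writes this as $A' = UZSA$ with $U\in\mathbb{R}^{n\times k}$, $Z\in\mathbb{R}^{k\times m}$, which is the same object). Your write-up is a bit more explicit about the containment $\mathrm{rowspan}(S'A)\subseteq\mathrm{rowspan}(A)$ and the arithmetic for part~(\RN{2}), but there is no substantive difference.
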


\begin{proof}
Define $\OPT= \underset{U \in \mathbb{R}^{n\times k}, V\in \mathbb{R}^{k\times d}}{\min} \| UV - A\|_1$.

{\bf Proof of (\RN{1}).}
Choose $S$ to be a dense Cauchy transform matrix with $m$ rows, then
\begin{align*}
\underset{U \in \mathbb{R}^{n\times k}, Z \in \mathbb{R}^{k\times m} }{\min} \| U Z S A - A\|_1  \leq O(\sqrt{m} \log d) \OPT,
\end{align*}
where $m=O(k\log k)$. Choosing $A'= UZSA$ completes the proof.

{\bf Proof of (\RN{2}).}
Choose $\Pi = SD\in \mathbb{R}^{m\times n}$ where $S\in \mathbb{R}^{m\times n}$ has each column chosen independently and uniformly from the $m$ standard basis vectors of $\mathbb{R}^m$, and where $D$ is a diagonal matrix with diagonals chosen independently from the standard Cauchy distribution, then
\begin{align*}
\underset{U \in \mathbb{R}^{n\times k}, Z \in \mathbb{R}^{k\times m} }{\min} \| U Z \Pi A - A\|_1  \leq O(\sqrt{m}  \sigma \log d) \OPT,
\end{align*}
where $m=O(k^5\log^5 k)$ and $\sigma=O(k^2 \log^2 k)$. Choosing $A'= UZ \Pi A$ completes the proof.

{\bf Proof of (\RN{3}).}

Choose $D$ to be the sampling and rescaling matrix corresponding to the Lewis weights of $U^*$, and let it have $m=O(k\log k)$ nonzero entries on the diagonal, then
\begin{align*}
\underset{U \in \mathbb{R}^{n\times k}, Z \in \mathbb{R}^{k\times m} }{\min} \| U Z D A - A\|_1  \leq O(\sqrt{m} ) \OPT.
\end{align*}
 Choosing $A'= UZDA$ completes the proof.
\end{proof}


\subsection{Input sparsity time, $\poly(k,\log n, \log d)$-approximation for an arbitrary matrix $A$}\label{sec:input_sparsity_algorithm}
The algorithm described in this section is actually worse than the algorithm described in the next section. But this algorithm is easy to extend to the distributed and streaming settings (See Section~\ref{sec:distributed} and Section~\ref{sec:streaming}).

\begin{algorithm}[h]\caption{Input Sparsity Time Algorithm}
\begin{algorithmic}[1]
\Procedure{\textsc{L1LowRankApproxInputSparsity}}{$A,n,d,k$} \Comment{Theorem \ref{thm:input_sparsity_algorithm}}
\State Set $s\leftarrow r\leftarrow t_1\leftarrow \wt{O}(k^5)$, $t_2\leftarrow \wt{O}(k)$.
\State Choose sparse Cauchy matrices $S\in \mathbb{R}^{s\times n}$, $R\in \mathbb{R}^{d\times r}$, $T_1\in \mathbb{R}^{t_1\times n}$.
\State Choose dense Cauchy matrices $T_2\in \mathbb{R}^{d\times t_2}$.
\State Compute $S\cdot A$, $A \cdot R$ and $T_1\cdot A \cdot T_2$.
\State Compute $XY = \arg\min_{X,Y}\| T_1 A R XY S A T_2 - T_1 A T_2 \|_F$.
\State \Return $ARX,YSA$.
\EndProcedure
\end{algorithmic}
\end{algorithm}

\begin{theorem}\label{thm:input_sparsity_algorithm}
Given matrix $A\in \mathbb{R}^{n\times d}$, for any $k\geq 1$, there exists an algorithm which takes $O(\nnz(A)) + (n+d) \cdot \poly(k)$ time and outputs two matrices $U\in \mathbb{R}^{n\times k}$, $V\in \mathbb{R}^{k\times d}$ such that
\begin{equation*}
\| UV - A \|_1 \leq O(\poly(k) \log n \log^2 d) \underset{\rank-k~ A_k}{\min}\| A_k - A\|_1
\end{equation*}
holds with probability $9/10$.
\end{theorem}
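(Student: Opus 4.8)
The plan is to follow the sketch-and-solve paradigm. First I would use the existence result (Corollary~\ref{cor:three_existence_results}) to show that a near-optimal rank-$k$ solution can be taken of the restricted bilinear form $ARZSA$ with $Z$ rank-$k$; then I would find a good such $Z$ by solving a $\poly(k)\times\poly(k)$-sized \emph{Frobenius}-norm rank-constrained regression in closed form (Theorem~\ref{thm:reduce_to_frobenious}) on the doubly-sketched matrices $T_1AR$, $SAT_2$, $T_1AT_2$, and finally translate the Frobenius guarantee back to an $\ell_1$ guarantee using the no-contraction/no-dilation bounds for Cauchy sketches established in Section~\ref{sec:l1} (Lemma~\ref{lem:con_dil_summary}), which apply verbatim to the four matrices $S,R,T_1,T_2$ used here.

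\emph{Step 1 (a restricted near-optimal solution).} Applying Corollary~\ref{cor:three_existence_results}(\RN{2}) to $A$ with the sparse Cauchy transform $S\in\R^{s\times n}$ gives a rank-$k$ matrix whose row space lies in $\mathrm{rowspan}(SA)$ and whose cost is at most $\poly(k)\log d\cdot\OPT$; applying the same corollary to $A^\top$ with the sparse Cauchy transform $R^\top$ (so that $R\in\R^{d\times r}$ acts on the right of $A$) gives a rank-$k$ matrix whose column space lies in $\mathrm{colspan}(AR)$ and whose cost is at most $\poly(k)\log n\cdot\OPT$ — the second logarithm is $\log n$ because $A^\top$ has $n$ columns, and since we assume $d\le n$ this is the larger term. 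Combining the row-side and column-side guarantees (a short two-sided-sketch argument, projecting onto the new subspace via an $\ell_1$ multi-response regression and losing one multiplicative $\poly(k)\log(\cdot)$ factor per side) produces a rank-$k$ matrix $Z^\ast\in\R^{r\times s}$ with
\[
\|ARZ^\ast SA-A\|_1\;\le\;\poly(k)\,\log n\,\log d\cdot\OPT .
\]

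\emph{Step 2 (solve the sketched problem and translate back).} Form $T_1AR$, $SAT_2$, $T_1AT_2$ explicitly and let $XY$ (with $X\in\R^{r\times k}$, $Y\in\R^{k\times s}$) be the rank-$k$ minimizer of $\|T_1ARXYSAT_2-T_1AT_2\|_F$, obtained in closed form from Theorem~\ref{thm:reduce_to_frobenious}. Since $XY$ is the Frobenius minimizer, $\|T_1(ARXYSA-A)T_2\|_F\le\|T_1(ARZ^\ast SA-A)T_2\|_F$. Converting $\|\cdot\|_F$ to $\|\cdot\|_1$ on the $t_1\times t_2$ sketched matrices loses a factor $\sqrt{t_1t_2}=\poly(k)$; the no-dilation bound of Lemma~\ref{lem:con_dil_summary} applied to the \emph{fixed} matrix $ARZ^\ast SA-A$ contributes $\poly(k)\log d$ from the left (row-reduction) sketch $T_1$ and only $\poly(k)$ from the right sketch $T_2$; together with Step~1 this bounds the right-hand side by $\poly(k)\log n\log^2 d\cdot\OPT$. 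On the other side, the no-contraction bound of Lemma~\ref{lem:con_dil_summary} gives, for every $X,Y$,
\[
\|T_1(ARXYSA-A)T_2\|_1\;\gtrsim\;\|ARXYSA-A\|_1-\poly(k)\log n\log d\cdot\OPT .
\]
Chaining these inequalities yields $\|ARXYSA-A\|_1\le\poly(k)\log n\log^2 d\cdot\OPT$, so $U=ARX$ and $V=YSA$ are as required (with probability $9/10$ after absorbing the constant failure probabilities of the four sketches). For the running time: $SA$, $AR$, $T_1A$ are products with sparse Cauchy matrices and cost $O(\nnz(A))$ each; forming $T_1AR$, $SAT_2$, $T_1AT_2$ costs $(n+d)\poly(k)$; the closed-form Frobenius solve runs on $\poly(k)\times\poly(k)$ matrices in $\poly(k)$ time; and assembling $ARX$, $YSA$ costs $O(\nnz(A))+(n+d)\poly(k)$, for a total of $O(\nnz(A))+(n+d)\poly(k)$.

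\emph{Main obstacle.} The hard part is the no-contraction direction in Step~2: the family $\{ARXYSA-A:\rank(XY)\le k\}$ is not contained in any low-dimensional subspace ($A$ contributes all of $\R^d$ to the row space), so one cannot simply build a net and union bound over it. The resolution, carried out in Section~\ref{sec:l1}, is the triangle-inequality split $ARXYSA-A=AR(XY-Z^\ast)SA+(ARZ^\ast SA-A)$: the first summand has rank $\le 2k$ with column space in $\mathrm{colspan}(AR)$ and row space in $\mathrm{rowspan}(SA)$, a $\poly(k)$-parameter family on which a net/union-bound argument shows $T_1(\cdot)T_2$ shrinks $\ell_1$-norm by at most a constant factor except with probability $\exp(-\poly(k))$, while the second summand is fixed and is absorbed into the additive $\poly(k)\log n\log d\cdot\OPT$ error by no-dilation. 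Checking that the specific matrices $S$ (sparse Cauchy, $\wt O(k^5)$ rows), $R$ (sparse Cauchy, $\wt O(k^5)$ columns), $T_1$ (sparse Cauchy, $\wt O(k^5)$ rows) and $T_2$ (dense Cauchy, $\wt O(k)$ columns) simultaneously meet the contraction and dilation requirements is exactly the content of Lemma~\ref{lem:con_dil_summary}, which I would invoke as a black box.
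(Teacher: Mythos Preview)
Your proposal is correct and follows the paper's route: reduce to $\min_{\rank\text{-}k\ Z}\|AR\,Z\,SA-A\|_1$, solve the $T_1,T_2$-sketched Frobenius problem in closed form via Theorem~\ref{thm:reduce_to_frobenious}, and translate back using the contraction/dilation machinery---your ``Main obstacle'' split is exactly Lemma~\ref{lem:general_no_contraction_and_no_dialation_imply_restricted_regression_sketch}, which the paper packages as Lemma~\ref{lem:general_sketch_T1BXYCT2} and invokes inside Lemma~\ref{lem:solve_ARXYSA}.

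One presentational note on Step~1: the paper does not use a separate application of Corollary~\ref{cor:three_existence_results} to $A^\top$ and then ``combine'' two independent existence results. Instead, after the $S$-side gives a fixed rank-$k$ row factor $Z^*SA$, it applies $R$ as a sketch to the \emph{specific} regression $\min_U\|U(Z^*SA)-A\|_1$ (Lemma~\ref{lem:general_sketch_SUV} on the transpose), whose $\ell_2$-relaxed solution $\hat U=AR\bigl((Z^*SA)R\bigr)^\dagger$ automatically lies in $\mathrm{colspan}(AR)$. Your parenthetical ``projecting onto the new subspace via an $\ell_1$ multi-response regression'' is precisely this sequential argument, so your understanding is right; the independent column-side corollary you state is simply not used.
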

\begin{proof}
Choose a Cauchy matrix $S\in \mathbb{R}^{s\times n}$ (notice that $S$ can be either a dense Cauchy transform matrix or a sparse Cauchy transform matrix). Using Corollary~\ref{cor:three_existence_results}, we have
\begin{align*}
 \underset{U \in \mathbb{R}^{n\times k}, Z\in \mathbb{R}^{k\times s}}{\min}\| UZSA - A \|_1 \leq \alpha_s \OPT,
\end{align*}
where $\alpha_s$ is the approximation by using matrix $S$. If $S$ is a dense Cauchy transform matrix, then due to Part (\RN{1}) of Corollary~\ref{cor:three_existence_results}, $\alpha_s=O(\sqrt{k\log k} \log d)$, $s=O(k\log k)$, and computing $SA$ takes $O(s \nnz(A))$ time. If $S$ is a sparse Cauchy transform matrix, then due to Part \RN{2} of Corollary~\ref{cor:three_existence_results},  $\alpha_s=\wt{O}(k^{4.5}\log d)$, $s = \wt{O}(k^5)$, and computing $SA$ takes $\nnz(A)$ time.

We define $U^*,Z^*=\arg\min_{U,Z}\| UZSA -A \|_1$. For the fixed $Z^*\in \mathbb{R}^{k\times s}$, choose a Cauchy matrix $R \in \mathbb{R}^{d\times r}$ (note that $R$ can be either a dense Cauchy transform matrix or a sparse Cauchy transform matrix) and sketch on the right of $( U Z SA - A )$.
If $R$ is a dense Cauchy transform matrix, then $\alpha_r = O(\log n)$, $r=O(k\log k)$, computing $AR$ takes $O(r\cdot \nnz(A))$ time. If $R$ is a sparse Cauchy transform matrix, then $\alpha_r= \wt{O}(k^2) \log n$, $r=\wt{O}(k^5)$, computing $AR$ takes $O(\nnz(A))$ time.

 Define a row vector $\widehat{U}^j =  A^j R ( (Z^*SA) R)^\dagger \in \mathbb{R}^k$. Then 
\begin{align*}
\forall j\in[n],\|\widehat{U}^jZ^*SAR-A^jR\|_2=\min_{x\in\mathbb{R}^k} \|x^\top Z^*SAR-A^jR\|_2.
\end{align*}
Recall that $r$ is the number of columns of $R$. Due to Claim \ref{cla:ell2_relax_ell1_regression},
\begin{equation*}
\sum_{j=1}^n  \| A^j R ( (Z^*SA) R )^\dagger Z^*SAR - A^j R\|_1 \leq O(\sqrt{r}) \sum_{j=1}^n \min_{ U^j \in \mathbb{R}^k } \| U^j Z^*SA R - A^jR\|_1,
\end{equation*}
which is equivalent to
\begin{align*}
 \| A  R ( (Z^*SA) R )^\dagger Z^*SA R - A R \|_1 &\leq O(\sqrt{r})  \min_{ U\in \mathbb{R}^{n\times k} }\| U Z^*SA R - A R\|_1,
\end{align*}
where $AR$ is an $n \times r$ matrix and $SA$ is an $s \times d$ matrix.

Using Lemma \ref{lem:general_sketch_SUV}, we obtain,
\begin{align*}
 \| A  R ( (Z^*SA) R )^\dagger Z^*SA  - A \|_1 &\leq O(\sqrt{r} \alpha_r)  \min_{ U\in \mathbb{R}^{n\times k} }\| U Z^*SA  - A \|_1.
\end{align*}

We define $X^*\in \mathbb{R}^{r\times k}$, $Y^*\in \mathbb{R}^{k\times s}$,
\begin{equation*}
X^*, Y^* = \underset{X \in \mathbb{R}^{r \times k}, Y \in \mathbb{R}^{k\times s} }{\arg\min} \| AR X Y SA -A\|_1.
\end{equation*}
Then,
\begin{align*}
\| ARX^* Y^*SA - A \|_1 & \leq \| A  R ( (Z^*SA) R )^\dagger Z^*SA  - A \|_1 \\
& \leq O(\sqrt{r} \alpha_r)  \min_{ U\in \mathbb{R}^{n\times k} }\| U Z^*SA  - A \|_1 \\
& =  O(\sqrt{r} \alpha_r)  \min_{ U\in \mathbb{R}^{n\times k}, Z\in \mathbb{R}^{k\times s} }\| U ZSA  - A \|_1 \\
& \leq O(\sqrt{r}\alpha_r \alpha_s)\OPT.
\end{align*}
It means that $ARX^*$, $Y^*SA$ gives an $O(\alpha_r \alpha_s \sqrt{r})$-approximation to the original problem.

Thus it suffices to use Lemma \ref{lem:solve_ARXYSA} to solve
\begin{equation*}
 \underset{X \in \mathbb{R}^{r \times k}, Y \in \mathbb{R}^{k\times s} }{\min} \| AR X Y SA -A\|_1,
\end{equation*}
by losing an extra $\poly(k) \log d$ factor in the approximation ratio.

 By using a sparse Cauchy transform (for the place discussing the two options), combining the approximation ratios and running times all together, we can get $\poly(k)\log(n) \log^2(d)$-approximation ratio with $O(\nnz(A))+(n+d)\poly(k)$ running time. This completes the proof.
\end{proof}

\begin{lemma}\label{lem:solve_ARXYSA}
Given matrices $A\in \mathbb{R}^{n\times d}$, $SA\in\mathbb{R}^{s\times n}$,$RA\in \mathbb{R}^{r\times d}$ where $S\in \mathbb{R}^{s \times n}$, $R\in \mathbb{R}^{d\times r}$ with $\min(n,d) \geq \max(r,s)$. For any $1 \leq k\leq \min(r,s)$, there exists an algorithm that takes $O(\nnz(A))+ (n+d)\poly(s,r,k) $ time to output two matrices $X' \in  \mathbb{R}^{r \times k}, Y' \in \mathbb{R}^{k\times s}$ such that
\begin{align*}
\| AR X' \cdot Y' SA - A\|_1 \leq \poly(r,s) \log (d)   \min_{X \in \mathbb{R}^{r\times k}, Y \in \mathbb{R}^{ k \times s} } \| AR X Y SA  -  A \|_1 
\end{align*}
holds with probability at least $.999$.
\end{lemma}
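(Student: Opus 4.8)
The plan is to reduce the bilinear problem $\min_{X,Y}\|ARXYSA-A\|_1$ — equivalently $\min_{\rank\text{-}k\ Z\in\mathbb{R}^{r\times s}}\|(AR)Z(SA)-A\|_1$, whose optimum I call $\OPT'$ — to a problem of dimension $\poly(r,s)$ by sketching on the left and on the right, and then to solve that tiny problem by relaxing the $\ell_1$ objective to Frobenius norm and invoking the exact rank-constrained solver of Theorem~\ref{thm:reduce_to_frobenious}. The two sketches play asymmetric roles: on the left the relevant column space is $\mathrm{colspan}(AR)$ but the residuals also involve all of $A$, so I can only afford a Cauchy sketch and must pay an $O(\log d)$ no-dilation factor; on the right, after the left sketch the relevant row space has already shrunk to $\poly(r,s)$ dimensions, so I can afford a genuine $\Theta(1)$-distortion subspace embedding via Lewis weights and pay nothing extra.

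\emph{Left sketch.} Let $T_1\in\mathbb{R}^{t_1\times n}$ be a sparse Cauchy matrix with $t_1=\poly(r)$ rows; $T_1A$ costs $O(\nnz(A))$ and $T_1\cdot(AR)$ costs $O(nr)$ (skip this step if $n\le\poly(r)$). By the no-contraction / no-dilation bounds for (sparse) Cauchy sketches (Lemma~\ref{lem:con_dil_summary}), with good probability: (i) $T_1$ does not contract $\mathrm{colspan}(AR)$, i.e.\ $\|T_1 ARx\|_1\gtrsim\|ARx\|_1$ for all $x$; and (ii) for the fixed optimal $Z^*$, $\|T_1((AR)Z^*(SA)-A)\|_1\lesssim\poly(r)\log d\cdot\OPT'$. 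Exactly as in the proof of Theorem~\ref{thm:three_bicriteria_algorithms} (the triangle-inequality argument behind Lemma~\ref{lem:general_sketch_SUV}), (i) and (ii) imply that any $Z'$ with $\|T_1((AR)Z'(SA)-A)\|_1\le c\cdot\min_{\rank\text{-}k\ Z}\|T_1((AR)Z(SA)-A)\|_1$ satisfies $\|(AR)Z'(SA)-A\|_1\lesssim c\cdot\poly(r)\log d\cdot\OPT'$; this argument only uses that $AR$ is fixed and no-contraction holds on $\mathrm{colspan}(AR)$, so it survives verbatim when $Z$ is further restricted to have rank $k$.

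\emph{Right sketch and final solve.} Every row of $(T_1AR)Z(SA)-(T_1A)$ lies in $\mathcal{U}:=\mathrm{rowspan}\left[\begin{smallmatrix}SA\\ T_1A\end{smallmatrix}\right]$, which has dimension $\le s+t_1=\poly(r,s)$ and whose Lewis weights I can compute from the known matrices $SA,\,T_1A$ in $O(\nnz(A)+\poly(r,s))$ time. Let $D_2\in\mathbb{R}^{d\times t_2}$, $t_2=\poly(r,s)$, sample columns by these weights (skip if $d\le\poly(r,s)$); by the Lewis-weight sampling guarantee (Theorem~7.1 of \cite{cp15}), with good probability $\|vD_2\|_1=\Theta(\|v\|_1)$ for \emph{every} $v\in\mathcal{U}$ simultaneously, hence $\|GD_2\|_1=\Theta(\|G\|_1)$ for every matrix $G$ whose rows lie in $\mathcal{U}$ — in particular for every residual $(T_1AR)Z(SA)-(T_1A)$, regardless of $Z$. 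Thus any $c$-approximate minimizer of $\min_{\rank\text{-}k\ Z}\|\widetilde MZ\widetilde N-\widetilde A\|_1$, with $\widetilde M=T_1AR$, $\widetilde N=SAD_2$, $\widetilde A=T_1AD_2$ all of size $\poly(r,s)$, is an $O(c)$-approximate minimizer of the left-sketched problem. Finally, since $\|G\|_F\le\|G\|_1\le\sqrt{t_1t_2}\,\|G\|_F$ for any $t_1\times t_2$ matrix $G$, the rank-$k$ Frobenius minimizer $Z_F=\arg\min_{\rank\text{-}k\ Z}\|\widetilde MZ\widetilde N-\widetilde A\|_F$ is a $\sqrt{t_1t_2}=\poly(r,s)$-approximate $\ell_1$ minimizer of this small problem, and $Z_F$ is given in closed form and computable in $\poly(r,s)$ time by Theorem~\ref{thm:reduce_to_frobenious} (with $B=\widetilde M$, $C=\widetilde N$). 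Writing $Z_F=X'Y'$ with $X'\in\mathbb{R}^{r\times k}$, $Y'\in\mathbb{R}^{k\times s}$ via the SVD of $Z_F$ and chaining the reductions with $c=\poly(r,s)$ gives $\|ARX'Y'SA-A\|_1\lesssim\poly(r,s)\log d\cdot\OPT'$, in total time $O(\nnz(A))+(n+d)\poly(s,r,k)$ and with probability $\ge.999$ after taking the constants and sample sizes large enough.

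I expect no deep obstacle here — the statement is a composition of tools proved earlier — so the real work is careful assembly. The two points needing genuine care are (a) verifying that the approximate-minimizer-preservation lemma (Lemma~\ref{lem:general_sketch_SUV}) still applies with the extra rank-$k$ constraint on $Z$ (it does, since it uses only no-contraction on the fixed space $\mathrm{colspan}(AR)$ and no-dilation of the single fixed optimal residual), and (b) identifying $\mathcal{U}=\mathrm{rowspan}\left[\begin{smallmatrix}SA\\ T_1A\end{smallmatrix}\right]$ as the one low-dimensional subspace that a single Lewis-weight embedding must preserve in order to control all residuals at once — this is what keeps the right-hand sketch from costing an extra $\log$ and makes the final bound $\poly(r,s)\log d$ rather than $\poly(r,s)\log^2 d$. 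The remaining steps (tracking $\poly(r,s)$ factors through the $\ell_1$-to-Frobenius relaxation and the running-time accounting) are routine.
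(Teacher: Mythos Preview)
Your proof is correct and structurally matches the paper's: sketch on the left with a sparse Cauchy matrix $T_1$, sketch on the right, relax the resulting $\poly(r,s)$-sized problem to Frobenius norm, and solve exactly via Theorem~\ref{thm:reduce_to_frobenious}. The one substantive difference is the right sketch. The paper uses a second (dense) Cauchy matrix $T_2\in\mathbb{R}^{d\times t_2}$ with $t_2=O((t_1+s)\log(t_1+s))$ and then appeals to the two-sided restricted-regression lemma (Lemma~\ref{lem:general_sketch_T1BXYCT2}). You instead observe that after the left sketch every residual row lies in the fixed $(s+t_1)$-dimensional space $\mathcal{U}=\mathrm{rowspan}\bigl[\begin{smallmatrix}SA\\T_1A\end{smallmatrix}\bigr]$, and sample columns by the Lewis weights of a basis for $\mathcal{U}$; this gives a genuine $\Theta(1)$-distortion subspace embedding on the right, so the triangle-inequality machinery is not needed on that side. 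This is a clean variant. Your stated motivation for it, however --- that a Cauchy right sketch would incur an extra $\log d$ and yield $\poly(r,s)\log^2 d$ --- is not quite right: after applying $T_1$ the matrix to be right-sketched has only $t_1$ rows, so the Cauchy dilation factor on the right is $O(\log t_1)=O(\log r)$, which is absorbed into $\poly(r,s)$. Both routes therefore give the same $\poly(r,s)\log d$ bound. One minor accounting slip: computing the Lewis weights of the $d\times(s+t_1)$ basis takes $d\cdot\poly(r,s)$ time, not $O(\nnz(A)+\poly(r,s))$ as you wrote, but this still fits in the stated $(n+d)\poly(s,r,k)$ budget.
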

\begin{proof}
Choose sketching matrices $T_1 \in \mathbb{R}^{t_1 \times n}$ to sketch on the left of $(ARXYSA-A)$ (note that $S$ can be either a dense Cauchy transform matrix or a sparse Cauchy transform matrix).
If $T_1$ is a dense Cauchy transform matrix, then $t_1=O(r\log r)$, $\alpha_{t_1}=O(\log d)$, and computing $T_1A$ takes $O(t_1 \cdot \nnz(A) )$ time. If $T_1$ is a sparse Cauchy transform matrix, then $t_1=\wt{O}(r^5)$, $\alpha_{t_1} = \wt{O}(r^2) \log d$, and computing $T_1 A$ takes $\nnz(A)$ time.

Choose dense Cauchy matrices $T_2^\top \in \mathbb{R}^{t_2 \times d}$ to sketch on the right of $T_1(ARXYSA - A)$ with $t_2 = O( (t_1+s)\log (t_1+s))$. We get the following minimization problem,
\begin{equation}\label{eq:TARXYSAT_minus_TAT_ell1}
\min_{X\in \mathbb{R}^{r\times k}, Y \in \mathbb{R}^{k \times s}} \| T_1 AR X Y SA T_2 - T_1 A T_2\|_1.
\end{equation}

Define $X',Y'$ to be the optimal solution of
\begin{equation*}
\min_{X \in \mathbb{R}^{r \times k}, Y \in \mathbb{R}^{k\times s}} \| T_1 AR X Y SA T_2 - T_1 A T_2\|_F.
\end{equation*}

Due to Claim~\ref{cla:frobenius_relax_ell1_lowrank},
\begin{align*}
\| T_1 AR X' Y' SA T_2 - T_1 A T_2 \|_1 \leq \sqrt{t_1 t_2 } \min_{X\in \mathbb{R}^{r \times k}, Y \in \mathbb{R}^{k \times s}} \| T_1 AR X Y SA T_2 - T_1 A T_2\|_1.
\end{align*}

Due to Lemma~\ref{lem:general_sketch_T1BXYCT2}
\begin{align*}
\|  AR X' Y' SA  - A \|_1 \leq \sqrt{t_1 t_2 } \alpha_{t_1}\log t_1 \min_{X\in \mathbb{R}^{r \times k}, Y \in \mathbb{R}^{k \times s}} \|   AR X Y SA   -  A \|_1.
\end{align*}

It remains to solve
\begin{align*}
\min_{X\in \mathbb{R}^{r\times k}, Y \in \mathbb{R}^{k \times s} }\| T_1 A R X Y SA T_2- T_1 A T_2\|_F.
\end{align*}

 By using Theorem \ref{thm:reduce_to_frobenius} and choosing $T_1$ to be a sparse Cauchy transform matrix, we have that the optimal $\rank$-$k$ solution $X'Y'$ is $(T_1 A R)^\dagger (U_BU_B^\top (T_1 A T_2) V_C V_C^\top )_k (SAT_2)$ which can be computed in $O(\nnz(A)) + (n+d)\poly(s,r,k)$ time. Here, $U_B$ are the left singular vectors of $T_1AR$. $V_C$ are the right singular vectors of $SAT_2$.
\end{proof}

An alternative way of solving Equation (\ref{eq:TARXYSAT_minus_TAT_ell1}) is using a polynomial system verifier. Note that a  polynomial system verifier does not allow absolute value constraints. Using Claim \ref{cla:convert_absolute_constraints_into_variables}, we are able to remove these absolute value constraints by introducing new constraints and variables. Thus, we can get a better approximation ratio but by spending exponential running time in $k$. In the previous step, we should always use a dense Cauchy transform to optimize the approximation ratio.

\begin{corollary}
Given $A\in \mathbb{R}^{n\times d}$, there exists an algorithm which takes $nd \cdot \poly(k)+ (n+d) \cdot 2^{\wt{O}(k^2)}$ time and outputs two matrices $U\in \mathbb{R}^{n\times k}$, $V\in \mathbb{R}^{k\times d}$ such that
\begin{equation*}
\|UV - A \|_1 \leq O(\poly(k) \log n \log^2 d) \underset{\rank-k~ A_k}{\min}\| A_k - A\|_1
\end{equation*}
holds with probability $9/10$.
\end{corollary}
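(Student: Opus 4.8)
The plan is to run the algorithm of Theorem~\ref{thm:input_sparsity_algorithm} with two changes: use dense Cauchy transforms (rather than sparse ones) for all four sketches $S,R,T_1,T_2$, and, in place of the Frobenius-norm relaxation used inside Lemma~\ref{lem:solve_ARXYSA}, solve the sketched $\ell_1$ problem~\eqref{eq:TARXYSAT_minus_TAT_ell1} \emph{exactly} via the polynomial system verifier of Theorem~\ref{decision_solver_thm}. This is the alternative indicated in the remark following Lemma~\ref{lem:solve_ARXYSA}.

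First I would form $SA\in\mathbb{R}^{s\times d}$, $AR\in\mathbb{R}^{n\times r}$ and $T_1AT_2\in\mathbb{R}^{t_1\times t_2}$ with $s=r=O(k\log k)$ and $t_1=t_2=\wt{O}(k)$; since each sketch is a dense Cauchy transform this costs $nd\cdot\poly(k)$ time. The reductions already carried out in the proofs of Theorem~\ref{thm:input_sparsity_algorithm} and Lemma~\ref{lem:solve_ARXYSA} --- Corollary~\ref{cor:three_existence_results} together with the no-dilation/no-contraction guarantees for the sketches $S$, $R$ and for the pair $(T_1,T_2)$ --- show that any $\gamma$-approximate minimizer $(X^\star,Y^\star)$ of~\eqref{eq:TARXYSAT_minus_TAT_ell1} yields a rank-$k$ matrix with $\|ARX^\star Y^\star SA-A\|_1\leq\gamma\cdot\poly(k)\log n\log^2 d\cdot\OPT$, where all the sketch-distortion factors (namely $O(\sqrt{r}\,\alpha_r\alpha_s)$ with $\alpha_s=O(\sqrt{k\log k}\log d)$, $\alpha_r=O(\log n)$ from the left and right sketches, and $O(\alpha_{t_1}\log t_1)=O(\log d\log k)$ from $T_1,T_2$) have been folded into $\poly(k)\log n\log^2 d$. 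Hence it suffices to solve~\eqref{eq:TARXYSAT_minus_TAT_ell1} exactly, i.e.\ with $\gamma=1$.

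To do that, observe that each entry of $T_1ARXYSAT_2-T_1AT_2$ is a polynomial of degree $2$ (it is bilinear in the entries of $X$ and of $Y$), so the objective of~\eqref{eq:TARXYSAT_minus_TAT_ell1} is a sum of $t_1t_2=\wt{O}(k^2)$ absolute values of such polynomials in the $rk+ks=\wt{O}(k^2)$ entries of $X$ and $Y$. By Claim~\ref{cla:convert_absolute_constraints_into_variables} this becomes a polynomial optimization problem with $v=\wt{O}(k^2)$ variables, $m=\wt{O}(k^2)$ constraints and maximum degree $O(1)$; minimizing the objective --- equivalently, adding a threshold variable and binary searching on its value, the optimum being either $0$ (in which case we are done) or an algebraic number of polynomial bit complexity that is at most $\|T_1AT_2\|_1$ --- via Theorem~\ref{decision_solver_thm} takes $m^{O(v)}\poly(H)=2^{\wt{O}(k^2)}\poly(H)$ time, where $H=\poly(\text{input size})$ bounds the bit complexity of the three $\poly(k)$-sized coefficient matrices $T_1AR$, $SAT_2$, $T_1AT_2$ (we may round the Cauchy entries to $\poly(\log(nd))$ bits). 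Reading off $U=ARX^\star\in\mathbb{R}^{n\times k}$ and $V=Y^\star SA\in\mathbb{R}^{k\times d}$ costs a further $(n+d)\poly(k)$ time, so the total running time is $nd\cdot\poly(k)+(n+d)\cdot2^{\wt{O}(k^2)}$, and a union bound over the finitely many sketch-distortion events (each holding with probability $1-o(1)$) gives overall success probability $9/10$.

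The main thing to be careful about is the bookkeeping rather than any single deep step: one must verify that, with \emph{dense} Cauchy transforms, problem~\eqref{eq:TARXYSAT_minus_TAT_ell1} genuinely has only $\poly(k)$ variables and constraints --- which is exactly why the $T_1,T_2$ sketches are essential, since applying a polynomial system verifier to $\min_{X,Y}\|ARXYSA-A\|_1$ directly would involve $\Omega(nd)$ absolute-value terms and hence $\Omega(nd)$ variables; one must bridge the gap between the decision form of Theorem~\ref{decision_solver_thm} and the optimization we actually need (the binary-search / bit-complexity argument above); and one must check that the successive sketch-distortion factors multiply out to $\poly(k)\log n\log^2 d$ and that their failure probabilities sum to at most $1/10$. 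All of the analytic content --- the no-contraction and no-dilation bounds for Cauchy sketches, the $\ell_2$- and Frobenius-relaxation claims --- is already available from the earlier sections, so the proof is essentially an assembly of those pieces.
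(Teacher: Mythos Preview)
Your proposal is correct and follows exactly the approach the paper sketches in the paragraph immediately preceding the corollary: replace the sparse Cauchy transforms by dense ones (so all sketch dimensions are $\wt{O}(k)$ and the $\poly(k)$ distortion factors shrink), and solve~\eqref{eq:TARXYSAT_minus_TAT_ell1} exactly via Claim~\ref{cla:convert_absolute_constraints_into_variables} and the polynomial system verifier of Theorem~\ref{decision_solver_thm} rather than the Frobenius relaxation, thereby saving the $\sqrt{t_1t_2}$ factor at the cost of $2^{\wt{O}(k^2)}$ time. Your bookkeeping of the variable/constraint counts, the binary-search reduction from optimization to decision, and the accumulation of the sketch-distortion factors into $\poly(k)\log n\log^2 d$ all match what the paper intends.
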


The $\poly(k)$ factor in the above corollary is much smaller than that in Theorem~\ref{thm:input_sparsity_algorithm}.


\subsection{$\poly(k, \log d)$-approximation for an arbitrary matrix $A$}\label{sec:polyklogd_approx_algorithm}
In this section, we explain how to get an $O(\log d ) \cdot \poly(k)$ approximation.

\begin{algorithm}[h]\caption{$\poly(k) \log d$-approximation Algorithm}
\begin{algorithmic}[1]
\Procedure{\textsc{L1LowRankApproxPolykLogd}}{$A,n,d,k$} \Comment{Theorem \ref{thm:polyklogd_approx_algorithm}}
\State Set $s\leftarrow \wt{O}(k^5)$.
\State Choose sparse Cauchy matrices $S\in \mathbb{R}^{s\times n}$ and compute $S\cdot A$.
\State Implicitly obtain $B=U_B V_B$ by finding $V_B=SA\in \mathbb{R}^{s\times d}$ and
  $U_B\in \mathbb{R}^{n\times s}$ where $\forall i\in [n]$, row vector $(U_B)^i$ gives an $O(1)$ approximation to $\min_{x\in \mathrm{R}^{1\times s} } \| x S A - A^i \|_1$.
\State $U,V\leftarrow$\textsc{L1LowRankApproxB}($U_B,V_B,n,d,k,s$). \Comment{Theorem~\ref{thm:rank_r_approx_polyr_B}}
\State \Return $U,V$.
\EndProcedure
\end{algorithmic}
\end{algorithm}

Intuitively, our algorithm has two stages. In the first stage, we just want to find a low rank matrix $B$ which is a good approximation to $A$. Then, we can try to find a rank-$k$ approximation to $B$. Since now $B$ is a low rank matrix, it is much easier to find a rank-$k$ approximation to $B$. The procedure \textsc{L1LowRankApproxB}($U_B,V_B,n,d,k,s$) corresponds to Theorem~\ref{thm:rank_r_approx_polyr_B}.

\begin{theorem}\label{thm:polyklogd_approx_algorithm}
Given matrix $A\in \mathbb{R}^{n\times d}$, for any $k\geq 1$, there exists an algorithm which takes $\nnz(A)+ (n+d) \cdot \poly(k)$ time to output two matrices $U\in \mathbb{R}^{n\times k}$, $V\in \mathbb{R}^{k\times d}$ such that
\begin{equation*}
\| U V - A \|_1 \leq \poly(k) \log d \underset{\rank-k~ A_k}{\min}\| A_k - A\|_1
\end{equation*}
holds with probability $9/10$.
\end{theorem}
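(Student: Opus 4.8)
The plan is to follow the two-stage strategy already outlined in the algorithm box. First I would invoke Corollary~\ref{cor:three_existence_results} with a sparse Cauchy transform $S \in \mathbb{R}^{s \times n}$, $s = \wt{O}(k^5)$: this gives a rank-$k$ matrix $A'$ in the rowspan of $SA$ with $\|A' - A\|_1 \le O(k^{4.5}\log^{4.5} k \cdot \log d)\cdot \OPT$. Rather than trying to extract this rank-$k$ matrix directly, I would instead form the intermediate matrix $B = U_B V_B$ where $V_B = SA \in \mathbb{R}^{s \times d}$ and each row $(U_B)^i$ is an $O(1)$-approximate solution to the $\ell_1$-regression problem $\min_x \|x\,SA - A^i\|_1$, computed via fast $\ell_1$-regression (Lewis weights on $SA$, as in the technical overview). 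Since the optimal rank-$k$ matrix $A'$ lies in $\mathrm{rowspan}(SA)$, for each row the regression cost of $B$ is within $O(1)$ of the cost of $A'$, so summing over rows, $\|B - A\|_1 \le O(1)\cdot\|A' - A\|_1 \le \poly(k)\log d\cdot \OPT$. Crucially $B$ has rank at most $s = \poly(k)$.

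The second stage is to apply the rank-$r$ black-box algorithm of Theorem~\ref{thm:rank_r_approx_polyr_B} (procedure \textsc{L1LowRankApproxB}) to $B$ with $r = s = \poly(k)$. That theorem produces a rank-$k$ matrix $\widehat{B}$ with $\|\widehat{B} - B\|_1 \le \poly(k)\cdot \min_{\rank\text{-}k\ B_k}\|B_k - B\|_1$ in $\nnz(B) + (n+d)\poly(k)$ time; but here $B$ is given implicitly in factored form $U_B V_B$ with $U_B \in \mathbb{R}^{n\times s}$, $V_B \in \mathbb{R}^{s\times d}$, so all operations can be done in $(n+d)\poly(k)$ time after the initial $\nnz(A)$-time sketching. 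Finally I would combine the two error bounds by the triangle inequality: for the output $\widehat{B} = UV$,
\begin{align*}
\|UV - A\|_1 &\le \|UV - B\|_1 + \|B - A\|_1 \\
&\le \poly(k)\cdot \min_{\rank\text{-}k\ B_k}\|B_k - B\|_1 + \poly(k)\log d\cdot \OPT.
\end{align*}
To finish, note $\min_{\rank\text{-}k\ B_k}\|B_k - B\|_1 \le \|A' - B\|_1 \le \|A' - A\|_1 + \|A - B\|_1 \le \poly(k)\log d\cdot\OPT$, since $A'$ is itself rank $k$; substituting gives $\|UV - A\|_1 \le \poly(k)\log d\cdot \OPT$, as claimed. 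The running time is $\nnz(A)$ for computing $SA$, plus $(n+d)\poly(k)$ for the $n$ row-regressions and for invoking Theorem~\ref{thm:rank_r_approx_polyr_B}.

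The main obstacle is proving that stage two — Theorem~\ref{thm:rank_r_approx_polyr_B} — works, i.e., that one can reduce rank-$k$ $\ell_1$-approximation of a \emph{low-rank} matrix $B$ to a size-independent optimization by sketching on both sides with Cauchy matrices and passing to an $\ell_2$/Frobenius relaxation (via Claims~\ref{cla:ell2_relax_ell1_regression} and~\ref{cla:frobenius_relax_ell1_lowrank} and Theorem~\ref{thm:reduce_to_frobenious}), controlling the distortion with the no-contraction/no-dilation lemmas (Lemma~\ref{lem:con_dil_summary}, Lemma~\ref{lem:general_sketch_SUV}, Lemma~\ref{lem:general_sketch_T1BXYCT2}). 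That reduction is the heart of the paper; everything in the present theorem is then a matter of bookkeeping the $\poly(k)$ and $\log d$ factors and checking that working with the implicit factorization $B = U_B V_B$ keeps the runtime at $\nnz(A) + (n+d)\poly(k)$. A secondary point requiring care is that the row-wise $\ell_1$-regression solutions $(U_B)^i$ must be computed to $O(1)$-approximation fast enough and simultaneously for all $n$ rows, which is where the Lewis-weight sampling bound (Lemma from Section~\ref{sec:lewis_weights}) and fast $\ell_1$-regression are used.
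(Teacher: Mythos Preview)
Your proposal is correct and follows essentially the same two-stage approach as the paper: sketch with a sparse Cauchy matrix and $\ell_1$-project rows to obtain a rank-$\poly(k)$ matrix $B$ in factored form, then invoke Theorem~\ref{thm:rank_r_approx_polyr_B} on $B$. The only cosmetic difference is that the paper packages your final triangle-inequality combination into Lemma~\ref{lem:solution_to_B_is_solution_to_A} (and uses the global optimizer $U^*V^*$ rather than your $A'$ as the witness rank-$k$ matrix), but your explicit argument is equivalent.
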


\begin{proof}
We define
\begin{align*}
\OPT := \underset{\rank-k~ A_k}{\min}\| A_k - A\|_1.
\end{align*}

The main idea is to replace the given $n\times d$ matrix $A$ with another low rank matrix $B$ which also has size $n \times d$. Choose $S\in\mathbb{R}^{s\times n}$ to be a Cauchy matrix, where $s\leq\poly(k)$ (note that, if $S$ is a dense Cauchy transform matrix, computing $SA$ takes $O(s\nnz(A))$ time, while if $S$ is a sparse Cauchy transform matrix, computing $SA$ takes $O(\nnz(A))$ time). Then $B$ is obtained by taking each row of $A$ and replacing it with its closest point (in $\ell_1$-distance) in the row span of $SA$. By using Part \RN{2} of Corollary~\ref{cor:three_existence_results}, we have,
\begin{align*}
\min_{U\in \mathbb{R}^{n\times k},Z\in \mathbb{R}^{k\times s}} \| U Z S A - A \|_1 \leq O(\sqrt{s}\poly(k) \log d) \OPT.
\end{align*}
 We define $B$ to be the product of two matrices $U_B\in \mathbb{R}^{n\times s}$ and $V_B \in \mathbb{R}^{s\times d}$. We define $V_B$ to be $SA$ and $U_B$ to be such that for any $i\in [n], (U_B)^i $ gives an $O(1)$-approximation to problem  $ \underset{x\in \mathbb{R}^{1\times s}}{\min} \| x SA - A^i \|_1$, i.e.,
\begin{align*}
\| (U_B)^i S A - A^i \|_1 \leq O(1) \min_{x\in \mathbb{R}^{1\times s}} \| x SA - A^i\|_1, \forall i\in [n],
\end{align*}
which means
\begin{align*}
\| U_B SA - A \|_1 \leq O(1) \min_{X\in \mathbb{R}^{n\times s}} \| X SA -  A\|_1.
\end{align*}

For a fixed $SA \in \mathbb{R}^{s\times d}$, we can compute $D\in \mathbb{R}^{d\times d}$, which is a sampling and rescaling matrix corresponding to Lewis weights of $(SA)^\top$, and let $m=O(s\log s)$ be the number of nonzero entries on the diagonal of $D$.

Define $\wh{X}=\underset{X \in \mathbb{R}^{n\times s}}{\arg\min} \| X SA D - A D \|_1$, thus by Lemma~\ref{lem:con_dil_summary} and Lemma~\ref{lem:general_sketch_SUV}, we have
\begin{align*}
\| \widehat{X} SA - A \|_1 \leq O(1) \min_{X\in \mathbb{R}^{n\times s}} \|X SA - A\|_1.
\end{align*}
Notice that computing Lewis weights takes $d\poly(s)$ time. 
We can use $\ell_1$-regression solver and linear programming to find $\widehat{X} \in \mathbb{R}^{n\times s}$ in $(n+d)\poly(s)$ time.
 Thus $U_B$ can be found in $O(\nnz(A))+(n+d)\poly(s)$ time.


 By the definition of $B$, it is an $n\times d$ matrix. Na\"ively we can write down $B$ after finding $U_B$ and $V_B$. The time for writing down $B$ is $O(nd)$. To avoid this, we can just keep a factorization $U_B$ and $V_B$. We are still able to run algorithm \textsc{L1LowRankApproxB}. Because $s=\poly(k)$, the running time of algorithm  \textsc{L1LowRankApproxB} is still $O(\nnz(A))+(n+d)\poly(k)$.

 By the definition of $B$, we have that $B$ has rank at most $s$. Suppose we then solve $\ell_1$-low rank approximation problem for rank-$s$ matrix $B$, finding a rank-$k$ $g$-approximation matrix $U V$. Due to Lemma \ref{lem:solution_to_B_is_solution_to_A}, we have that if $B$ is an $f$-approximation solution to $A$, then $U V$ is also an $O(fg)$-approximation solution to $A$,
\begin{equation*}
\| U V - A\|_1 \leq O(\log d) \cdot \poly(k) \cdot g \OPT.
\end{equation*}
Using Theorem \ref{thm:rank_r_approx_polyr_B} we have that $g=\poly(s)$, which completes the proof.


\end{proof}

\subsection{$\wt{O}(k)$-approximation for an arbitrary matrix $A$}\label{sec:k_approx_algorithm}

\begin{algorithm}[h]\caption{$\wt{O}(k)$-approximation Algorithm}
\begin{algorithmic}[1]
\Procedure{\textsc{L1LowRankApproxK}}{$A,n,d,k$} \Comment{Theorem \ref{thm:k_approx_algorithm}}
\State $r\leftarrow O(k\log k), m\leftarrow t_1 \leftarrow O(r\log r), t_2 \leftarrow O(m\log m)$.
\State Guess a diagonal matrix $R\in \mathbb{R}^{d\times d}$ with only $r$ $1$s. \Comment{ $R$ selects $r$ columns of $A\in \mathbb{R}^{n\times d}$.}
\State Compute a sampling and rescaling matrix $D\in \mathbb{R}^{n\times n},T_1\in \mathbb{R}^{n \times n}$ corresponding to the Lewis weights of $AR$, and let them have $m,t_1$ nonzero entries on the diagonals, respectively.
\State Compute a sampling and rescaling matrix $T_2^\top\in \mathbb{R}^{d \times d}$ according to the Lewis weights of $(DA)^\top$, and let it have $t_2$ nonzero entries on the diagonal.
\State Solve $\min_{X,Y} \| T_1 AR XY DA  T_2 - T_1 A T_2\|_1$.
\State Take the best solution $X,Y$ over all guesses of $R$.
\State \Return $ARX$, $YDA$.
\EndProcedure
\end{algorithmic}
\end{algorithm}
\begin{theorem}\label{thm:k_approx_algorithm}
Given matrix $A\in \mathbb{R}^{n\times d}$, there exists an algorithm that takes $\poly(n) \cdot d^{\wt{O}(k)} \cdot 2^{\wt{O}(k^2)}$ time and outputs two matrices $U\in \mathbb{R}^{n\times k}$, $V\in \mathbb{R}^{k\times d}$ such that
\begin{align*}
\| UV - A \|_1 \leq \wt{O}(k) \underset{\rank-k~A_k}{\min} \| A_k -A \|_1
\end{align*}
holds with probability $9/10$.
\end{theorem}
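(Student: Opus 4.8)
The plan is to reduce to the column‑subset‑selection machinery developed earlier, then solve a small $\ell_1$ problem \emph{exactly} with a polynomial‑system solver, enumerating over the (unknown) good column subset. Write $U^*\in\mathbb{R}^{n\times k}$, $V^*\in\mathbb{R}^{k\times d}$ for an optimal factorization, $\|U^*V^*-A\|_1=\OPT$, and recall $d\le n$. First I would establish an existence statement: there is a set of $r=O(k\log k)$ columns of $A$ (selected, with rescaling, by $R$) whose column span contains a rank‑$k$ matrix of cost $O(\sqrt{k\log k})\,\OPT$. This follows by looking at $\min_U\|UV^*-A\|_1$ (which has value $\OPT$), sampling $r$ columns according to the $\ell_1$ Lewis weights of $(V^*)^\top$ so that, with probability $\ge 0.999$, $R$ is an $\ell_1$ subspace embedding for the row span of $V^*$, invoking the transpose of Lemma~\ref{lem:general_sketch_SUV} to get $\min_U\|UV^*R-AR\|_1=\Theta(\OPT)$, and applying the $\ell_2$‑relaxation of $\ell_1$‑regression (Claim~\ref{cla:ell2_relax_ell1_regression}, with its ``$n$'' equal to $r$) column‑by‑column: $\widehat U=AR(V^*R)^\dagger$ satisfies $\|\widehat U V^*R-AR\|_1\le\sqrt r\cdot\Theta(\OPT)$, which Lemma~\ref{lem:general_sketch_SUV} pulls back to $\|AR\,X^\circ-A\|_1=O(\sqrt{k\log k})\,\OPT$ with $X^\circ=(V^*R)^\dagger V^*$ of rank $\le k$. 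Since the good $R$ is unknown, the algorithm enumerates all $\binom{d}{r}=d^{O(k\log k)}=d^{\wt{O}(k)}$ subsets.

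Next, fixing the good subset $R$, I would show that a near‑optimal rank‑$k$ matrix has the bilinear form $AR\,W\,DA$ with $D$ a sampling matrix of $m=O(r\log r)$ rows and $W$ rank‑$k$. This is a CUR‑style argument applied to the rank‑constrained regression $\min_{\mathrm{rank}\text{-}k\ X}\|AR\,X-A\|_1$ (value $O(\sqrt{k\log k})\OPT$): take $D$ to sample rows of $A$ by the $\ell_1$ Lewis weights of $AR$ (an $\ell_1$ subspace embedding for the column span of $AR$), write $X^\circ=PQ$ with $P\in\mathbb{R}^{r\times k}$, and apply Claim~\ref{cla:ell2_relax_ell1_regression} again to $\min_Q\|D(ARP)Q-DA\|_1$; the $\ell_2$‑minimizer $\widehat Q=(D\,ARP)^\dagger DA$ lies in the row span of $DA$ and—using a Markov bound on the fixed residual $ARX^\circ-A$ to control $\|D(ARX^\circ-A)\|_1$—has cost $\le\sqrt m\cdot O(\sqrt{k\log k})\,\OPT$, and a triangle‑inequality pull‑back (using that $D$ does not contract the column span of $AR$) yields rank‑$k$ $W^\circ=P(D\,ARP)^\dagger$ with $\|AR\,W^\circ\,DA-A\|_1\le\wt{O}(k)\,\OPT$, the extra factor over the first step being $\sqrt m=\wt{O}(\sqrt k)$. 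The point I would be most careful about is that $\ell_2$‑relaxation is used only on \emph{regressions} (costing the square root of the number of rows), never on a low‑rank problem over $d$ columns (which would cost $\sqrt{md}$ and destroy the bound).

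Then I would solve $\min_{\mathrm{rank}\text{-}k\ W}\|AR\,W\,DA-A\|_1$ (value $\le\wt{O}(k)\,\OPT$) in a sketched space without a further $\poly(n,d)$ loss. Left‑multiply by $T_1$ (Lewis weights of $AR$, $t_1=O(r\log r)$ rows) and right‑multiply by $T_2$ (Lewis weights of $(DA)^\top$, $t_2=O(m\log m)$ columns); comparing any candidate to $W^\circ$ by triangle inequality and a Markov bound on the fixed residual, and noting that $T_1AR(W-W^\circ)DA$ has rows in the row span of $DA$, Lemma~\ref{lem:general_sketch_T1BXYCT2} shows that a rank‑$k$ minimizer $W=XY$ of $\min_{X,Y}\|T_1AR\,XY\,DA\,T_2-T_1AT_2\|_1$ pulls back to $\|AR\,W\,DA-A\|_1\le\wt{O}(k)\,\OPT$. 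Unlike Theorem~\ref{thm:polyklogd_approx_algorithm}, which would apply the Frobenius relaxation (Claim~\ref{cla:frobenius_relax_ell1_lowrank}) at this point and lose $\sqrt{t_1t_2}=\poly(k)$, I would solve the sketched $\ell_1$ problem exactly: it has $O(t_1t_2)=\wt{O}(k^2)$ terms and $O(rk+km)=\wt{O}(k^2)$ unknowns (the entries of $X\in\mathbb{R}^{r\times k}$, $Y\in\mathbb{R}^{k\times m}$), so turning the absolute values into polynomial equalities (Claim~\ref{cla:convert_absolute_constraints_into_variables}) and running the decision procedure of Theorem~\ref{decision_solver_thm}, binary‑searched on the objective to a $(1+1/\poly)$ factor, costs $2^{\wt{O}(k^2)}$ time (bit‑complexity factors absorbed into $\poly(n)$). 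The output $U=ARX$, $V=YDA$ is rank $\le k$, and since the columns of $U$ and the rows of $V$ are rescaled columns and rows of $A$, it is in fact a CUR decomposition.

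Finally I would run the second and third steps for all $d^{\wt{O}(k)}$ column subsets, compute the cost $\|AR\,XY\,DA-A\|_1$ of each output directly, and return the cheapest; for the correct subset this cost is $\wt{O}(k)\,\OPT$, so the returned cost is too, with probability $\ge 9/10$ after boosting the $O(1)$ failure probabilities of $D$, $T_1$, $T_2$ (and union‑bounding over their subspace‑embedding and Markov events). Each guess costs $\poly(n,d)$ to set up plus $2^{\wt{O}(k^2)}$ to solve, for a total of $\poly(n)\,d^{\wt{O}(k)}\,2^{\wt{O}(k^2)}$, matching the claim. I expect the second step to be the main obstacle: one has to orchestrate the two Lewis‑weight sampling stages so that $\ell_2$‑relaxation touches only regressions, check that the two triangle‑inequality pull‑backs compose into a single $\wt{O}(k)$ factor rather than $\poly(k)$, and make sure the (random) subspace‑embedding and Markov events for $D,T_1,T_2$ can all be made to hold for the one enumerated subset that matters.
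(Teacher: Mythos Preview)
Your proposal is correct and follows essentially the same route as the paper: guess the $O(k\log k)$ columns induced by the Lewis weights of $(V^*)^\top$, use Lewis weights of $AR$ to get the row sampler $D$ and show $\min_{\mathrm{rank}\text{-}k\ W}\|AR\,W\,DA-A\|_1\le\wt O(k)\,\OPT$ via the $\ell_2$-relaxed regression, sketch on both sides with Lewis-weight samplers $T_1,T_2$, and solve the resulting $\wt O(k)\times\wt O(k)$ $\ell_1$ problem exactly with a polynomial-system verifier and binary search. Your emphasis that $\ell_2$-relaxation is applied only to regressions (cost $\sqrt m$) and never to a rank-constrained problem is precisely the point that keeps the approximation at $\wt O(k)$ rather than $\poly(k)$.
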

\begin{proof}
We define
\begin{align*}
\OPT := \underset{\rank-k~A_k}{\min} \| A_k - A \|_1.
\end{align*}
Let $U^*\in\mathbb{R}^{n\times k},V^*\in\mathbb{R}^{k\times d}$ satisfy
\begin{align*}
\|U^*V^*-A\|_1=\OPT.
\end{align*}

Let $S^\top\in\mathbb{R}^{d\times d}$ denote the sampling and rescaling matrix corresponding to the Lewis weights of $(V^*)^\top$, where the number of nonzero entries on the diagonal of $S$ is $s=r=O(k\log k)$. Let $R^\top \in \mathbb{R}^{d\times d}$ denote a diagonal matrix such that $\forall i\in[d]$, if $S_{i,i}\not =0$, then $R_{i,i}=1$, and if $S_{i,i}=0$, then $R_{i,i}=0$.
Since $\mathrm{rowspan}(R^\top A^\top)=\mathrm{rowspan}(S^\top A^\top)$,
\begin{align*}
\min_{Z\in \mathbb{R}^{m\times k}, V\in \mathbb{R}^{k\times d}}\| ASZV -A \|_1 = \min_{Z\in \mathbb{R}^{m\times k}, V\in \mathbb{R}^{k\times d}}\| ARZV -A \|_1.
\end{align*}
 Combining with Part \RN{3} of Corollary~\ref{cor:three_existence_results}, there exists a $\rank$-$k$ solution in the column span of $AR$, which means,
\begin{align}\label{eq:ARZV_minus_A_is_sqrtklogk_opt}
\min_{Z\in \mathbb{R}^{m\times k}, V\in \mathbb{R}^{k\times d}}\| ARZV -A \|_1 \leq O(\sqrt{r}) \OPT .
\end{align}
Because the number of $1$s of $R$ is $r$, and the size of the matrix is $d\times d$, there are ${d \choose r} = d^{\wt{O}(k)}$ different choices for locations of $1$s on the diagonal of $R$. We cannot compute $R$ directly, but we can guess all the choices of locations of $1$s. Regarding $R$ as selecting $r$ columns of $A$, then there are $d^{\wt{O}(k)}$ choices. There must exist a ``correct'' way of selecting a subset of columns over all all choices. After trying all of them, we will have chosen the right one.

For a fixed guess $R$, we can compute $D\in \mathbb{R}^{n\times n}$, which is a sampling and rescaling matrix corresponding to the Lewis weights of $AR$, and let $m=O(k\log^2 k)$ be the number of nonzero entries on the diagonal of $D$.

By Equation~(\ref{eq:ARZV_minus_A_is_sqrtklogk_opt}), there exists a $W\in \mathbb{R}^{r\times k}$ such that,
\begin{align}\label{eq:ARWV_minus_A_is_sqrtklogk_opt}
\min_{ V\in \mathbb{R}^{k\times d}}\| ARWV -A \|_1 \leq O(\sqrt{r}) \OPT.
\end{align}
We define $\wh{V}_i=\underset{V_i \in \mathbb{R}^{k\times d}}{\arg\min} \| DARW V_i -  D A_i \|_2$, $\forall i \in [d]$, which means $\wh{V}_i = (DARW)^\dagger DA_i \in \mathbb{R}^k $. Then $\wh{V}=(DARW)^\dagger DA \in \mathbb{R}^{k\times d} $. 
We define $V^*=\underset{V\in \mathbb{R}^{k\times d} }{\arg\min} \| ARW V - A \|_1$. Then, by Claim~\ref{cla:ell2_relax_ell1_regression}, it has
\begin{align*}
\|DARW\wh{V}-DA\|_1\leq O(\sqrt{m}) \min_{V\in\mathbb{R}^{k\times d}}\|DARWV-DA\|_1. 
\end{align*}
By applying Lemma~\ref{lem:con_dil_summary}, Lemma~\ref{lem:general_sketch_SUV} and Equation~(\ref{eq:ARWV_minus_A_is_sqrtklogk_opt}), we can show
\begin{align*}
\| ARW \wh{V} - A\|_1 \leq O(\sqrt{m})\|ARWV^*-A\|_1 \leq O(\sqrt{mr}) \OPT \leq  \wt{O}(k)\OPT.
\end{align*}


Plugging $\wh{V} = (DARW )^\dagger DA$ into $\| ARW \wh{V}- A\|_1$, we obtain that
\begin{align*}
\| ARW (DARW)^\dagger DA - A\|_1 \leq \wt{O}(k) \OPT.
\end{align*}
and it is clear that,
\begin{align*}
\min_{X\in\mathbb{R}^{r\times k},Y\in\mathbb{R}^{k\times m}}\| AR X Y DA - A \|_1 \leq \| ARW (DARW)^\dagger DA - A\|_1\leq  \wt{O}(k) \OPT.
\end{align*}

Recall that we guessed $R$, so it is known. We can compute $T_1\in \mathbb{R}^{n\times n}$, which is a sampling and rescaling diagonal matrix corresponding to the Lewis weights of $AR$, and $t_1=O(r\log r)$ is the number of nonzero entries on the diagonal of $T_1$.

Also, $DA$ is known, and the number of nonzero entries in $D$ is $m=O(k\log^2 k)$. We can compute $T_2^\top \in \mathbb{R}^{d\times d}$, which is a sampling and rescaling matrix corresponding to the Lewis weights of $(DA)^\top$, and $t_2=O(m\log m)$ is the number of nonzero entries on the diagonal of $T_2$.

Define $X^*,Y^* = \underset{X\in \mathbb{R}^{r\times k}, Y\in \mathbb{R}^{k \times m} }{\arg\min} \| T_1 (AR) XY (DA) T_2 - T_1 A T_2 \|_1$.
Thus, using Lemma~\ref{lem:general_sketch_T1BXYCT2}, we have
\begin{align*}
 \| (AR) X^* Y^* (DA) - A  \|_1 \leq \wt{O}(k) \OPT.
\end{align*}
To find $X^*,Y^*$, we need to solve this minimization problem
\begin{align*}
 \underset{X\in \mathbb{R}^{r\times k}, Y\in \mathbb{R}^{k \times m} }{\min} \| T_1 (AR) XY (DA) T_2 - T_1 A T_2 \|_1,
\end{align*}
which can be solved by a polynomial system verifier (see more discussion in Section~\ref{sec:polynomial_system_verifier} and \ref{sec:converting_l1_lp_into_polynomial}).

In the next paragraphs, we explain how to solve the above problem by using a polynomial system verifier. Notice that $T_1 (AR)$ is known and $(DA)T_2$ is also known. First, we create $r\times k$ variables for the matrix $X$, i.e., one variable for each entry of $X$. Second, we create $k\times m$ variables for matrix $Y$, i.e., one variable for each entry of $Y$. Putting it all together and creating $t_1\times t_2$ variables $\sigma_{i,j},\forall i\in[t_1], j\in [t_2]$ for handling the unknown signs, we write down the following optimization problem
\begin{align*}
\min_{X,Y} & ~\sum_{i=1}^{t_1} \sum_{j=1}^{t_2} \sigma_{i,j} (T_1 (AR) XY (DA) T_2)_{i,j}\\
\text{s.t}& ~\sigma_{i,j}^2 =1, \forall i\in [t_1], j\in [t_2] \\
& ~\sigma_{i,j} \cdot (T_1 (AR) XY (DA) T_2)_{i,j} \geq 0, \forall i\in[t_1], j\in [t_2].
\end{align*}
Notice that the number of constraints is $O(t_1t_2) = \wt{O}(k^2)$, the maximum degree is $O(1)$, and the number of variables $O(t_1t_2+ km+rk) = \wt{O}(k^2)$. Thus the running time is,
\begin{align*}
(\#\constraints\cdot \#\degree)^{O(\#\variables)} = 2^{\wt{O}(k^2)}.
\end{align*}
To use a polynomial system verifier, we need to discuss the bit complexity. Suppose that all entries are multiples of $\delta$, and the maximum is $\Delta$, i.e., each entry $\in\{-\Delta, \cdots, -2\delta,-\delta,0,\delta, 2\delta, \cdots, \Delta\}$, and $\Delta/\delta=2^{\poly(nd)}$. Then the running time is $O(\poly(\Delta/\delta)) \cdot 2^{\wt{O}(k^2)} = \poly(nd) 2^{\wt{O}(k^2)}$.

Also, a polynomial system verifier is able to tell us whether there exists a solution in a semi-algebraic set. In order to find the solution, we need to do a binary search over the cost $C$. In each step of the binary search we use a polynomial system verifier to determine if there exists a solution in,
\begin{align*}
 & ~\sum_{i=1}^{t_1} \sum_{j=1}^{t_2} \sigma_{i,j} (T_1 (AR) XY (DA) T_2)_{i,j} \leq C\\
& ~\sigma_{i,j}^2 =1, \forall i\in [t_1], j\in [t_2] \\
& ~\sigma_{i,j} \cdot (T_1 (AR) XY (DA) T_2)_{i,j} \geq 0, \forall i\in[t_1], j\in [t_2].
\end{align*}

In order to do binary search over the cost, we need to know an upper bound on the cost and also a lower bound on the minimum nonzero cost. The upper bound on the cost is $C_{\max}=O(nd \Delta)$, and the minimum nonzero cost is $C_{\min} = 2^{-\Omega(\poly(nd))}$. Thus, the total number of steps for binary search is $O(\log(C_{\max}/C_{\min}))$. Overall, the running time is
\begin{align*}
nd^{\wt{O}(k)} \cdot 2^{\wt{O}(k^2)} \cdot \log(\Delta/\delta) \cdot \log(C_{\max}/C_{\min}) = \poly(n) d^{\wt{O}(k)} 2^{\wt{O}(k^2)}.
\end{align*}
This completes the proof.
\end{proof}

Instead of solving an $\ell_1$ problem at the last step of \textsc{L1LowRankApproxK} by using a polynomial system verifier, we can just solve a Frobenius norm minimization problem. This slightly improves the running time and pays an extra $\poly(k)$ factor in the approximation ratio. Thus, we obtain the following corollary,
\begin{corollary}\label{cor:polyk_approx_algorithm}
Given matrix $A\in \mathbb{R}^{n\times d}$, there exists an algorithm that takes $\poly(n) \cdot d^{\wt{O}(k)} $ time which outputs two matrices $U\in \mathbb{R}^{n\times k}$, $V\in \mathbb{R}^{k\times d}$ such that
\begin{align*}
| UV - A \|_1 \leq \poly(k) \underset{\rank-k~A_k}{\min} \| A_k -A \|_1
\end{align*}
holds with probability $9/10$.
\end{corollary}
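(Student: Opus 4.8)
The plan is to rerun the proof of Theorem~\ref{thm:k_approx_algorithm} unchanged up to its final optimization step, and there replace the polynomial-system verifier by a closed-form Frobenius-norm computation. Recall that in that proof, after guessing the $r=O(k\log k)$ columns selected by $R\in\R^{d\times d}$ and forming the Lewis-weight sampling matrices $D\in\R^{n\times n}$ (of $AR$), $T_1\in\R^{n\times n}$ (of $AR$) and $T_2^\top\in\R^{d\times d}$ (of $(DA)^\top$) with $m=O(k\log^2 k)$, $t_1=O(r\log r)$, $t_2=O(m\log m)$ nonzeros, one is left with the reduced problem
\[
\min_{X\in\R^{r\times k},\,Y\in\R^{k\times m}}\bigl\| T_1(AR)\,XY\,(DA)\,T_2 - T_1 A T_2 \bigr\|_1,
\]
and it is shown (via Corollary~\ref{cor:three_existence_results}(III), Claim~\ref{cla:ell2_relax_ell1_regression}, and Lemma~\ref{lem:general_sketch_T1BXYCT2}) that for the correct guess of $R$ any $\gamma$-approximate minimizer $(X,Y)$ of this problem yields $\|(AR)XY(DA)-A\|_1\le\gamma\cdot\wt O(k)\cdot\OPT$. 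Here I would instead solve the Frobenius relaxation
\[
\min_{X\in\R^{r\times k},\,Y\in\R^{k\times m}}\bigl\| T_1(AR)\,XY\,(DA)\,T_2 - T_1 A T_2 \bigr\|_F,
\]
which by Theorem~\ref{thm:reduce_to_frobenious} (with $B=T_1 AR$ and $C=(DA)T_2$) admits a closed-form rank-$k$ minimizer $(T_1 AR)^\dagger\,(U_B U_B^\top (T_1 A T_2) V_C V_C^\top)_k\,((DA)T_2)^\dagger$, computable from the SVDs of $T_1 AR$ and $(DA)T_2$ in $\poly(r,m,t_1,t_2)=\poly(k)$ time, with no $2^{\wt O(k^2)}$ overhead.

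The key steps, in order, are: (1) verify, exactly as in Theorem~\ref{thm:k_approx_algorithm}, that for the correct guess of $R$ there exist $X,Y$ with $\|(AR)XY(DA)-A\|_1\le\wt O(k)\cdot\OPT$, and that there are ${d\choose r}=d^{\wt O(k)}$ guesses of $R$ to enumerate; (2) for each guess form $D,T_1,T_2$ and solve the Frobenius relaxation in closed form via Theorem~\ref{thm:reduce_to_frobenious}, factoring the resulting rank-$\le k$ matrix as $X'Y'$ with $X'\in\R^{r\times k}$, $Y'\in\R^{k\times m}$; (3) bound the relaxation loss by applying Claim~\ref{cla:frobenius_relax_ell1_lowrank} to the $t_1\times t_2$ matrix $T_1(AR)XY(DA)T_2-T_1 A T_2$, so that $(X',Y')$ is a $\sqrt{t_1 t_2}$-approximate $\ell_1$ minimizer of the reduced problem, with $\sqrt{t_1 t_2}=\poly(k)$ since $t_1,t_2=\wt O(k)$; (4) push this back through Lemma~\ref{lem:general_sketch_T1BXYCT2} to get $\|(AR)X'Y'(DA)-A\|_1\le\poly(k)\cdot\wt O(k)\cdot\OPT=\poly(k)\cdot\OPT$ for the correct guess; (5) since we do not know which guess is correct, compute $\|(AR)X'Y'(DA)-A\|_1$ for every guess (each evaluation costs $\poly(n)$ time) and output the best one as $U=ARX'\in\R^{n\times k}$, $V=Y'DA\in\R^{k\times d}$. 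The running time is $d^{\wt O(k)}$ guesses times $\poly(n)$ per guess for Lewis-weight computation, sketching, the SVDs, the closed-form evaluation and scoring, i.e.\ $\poly(n)\cdot d^{\wt O(k)}$.

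The main obstacle — really the only point that needs checking — is that switching from the verifier-based $\ell_1$ solve to the closed-form Frobenius solve does not disrupt the sketching chain: we still need $T_1$, built from the Lewis weights of $AR$, to provide no-contraction for the column space of $AR$, and $T_2$, built from the Lewis weights of $(DA)^\top$, to provide no-contraction for the row space of $DA$, so that Lemma~\ref{lem:general_sketch_T1BXYCT2} applies to the \emph{specific} pair $(X',Y')$ returned by Theorem~\ref{thm:reduce_to_frobenious} (in particular, the closed form is genuinely rank $\le k$, which that theorem guarantees). A pleasant side effect is that the bit-complexity bookkeeping and the binary search over the cost needed for the polynomial-system verifier in Theorem~\ref{thm:k_approx_algorithm} disappear entirely, since the solution now comes directly from SVDs; everything else in the argument is identical.
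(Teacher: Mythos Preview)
The proposal is correct and matches the paper's own (terse) argument: the text preceding the corollary states exactly that one reruns Theorem~\ref{thm:k_approx_algorithm} but replaces the polynomial-system verifier in the last step by a Frobenius-norm minimization, incurring an extra $\poly(k)$ factor in the approximation ratio while removing the $2^{\wt O(k^2)}$ running-time term. Your use of Theorem~\ref{thm:reduce_to_frobenious} for the closed-form solve and Claim~\ref{cla:frobenius_relax_ell1_lowrank} for the $\sqrt{t_1 t_2}=\poly(k)$ relaxation loss is exactly the intended mechanism (cf.\ the analogous Lemmas~\ref{lem:solve_ARXYSA} and~\ref{lem:solve_BRXYSB}).
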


\subsection{Rank-$2k$ and $O(1)$-approximation algorithm for an arbitrary matrix $A$}\label{sec:constant_approx_algorithm}
In this section, we show how to output a rank-$2k$ solution that is able to achieve an $O(1)$-approximation.

\begin{algorithm}[h]\caption{Bicriteria $O(1)$-approximation Algorithm}
\begin{algorithmic}[1]
\Procedure{\textsc{L1LowRankApproxBicriteria}}{$A,n,d,k$} \Comment{Theorem \ref{thm:constant_approx_algorithm}}
\State $U_B,V_B \leftarrow \underset{U\in \mathbb{R}^{n\times k}, V\in \mathbb{R}^{k\times d} }{\min} \| UV - A \|_F$.
\State $r\leftarrow O(k\log k)$.
\State Guess a diagonal matrix $D\in \mathbb{R}^{n\times n}$ with $r$ nonzero entries.
\State Guess matrix $DU\in \mathbb{R}^{r\times k}$.
\State Find $V_A$ by solving $\min_{V} \| DU V - D(A-B) \|_1$.
\State Find $U_A$ by solving $\min_{U} \| U V_A - (A-B) \|_1$.
\State Take the best solution $\begin{bmatrix} U_A & U_B \end{bmatrix}$, $\begin{bmatrix}V_A \\ V_B \end{bmatrix}$ over all guesses.
\State \Return $\begin{bmatrix} U_A & U_B \end{bmatrix}$, $\begin{bmatrix}V_A \\ V_B \end{bmatrix}$.
\EndProcedure
\end{algorithmic}
\end{algorithm}
\begin{theorem}\label{thm:constant_approx_algorithm}
Given matrix $A\in \mathbb{R}^{n\times d}$, for any $k\geq 1$, there exists an algorithm which takes $(nd)^{\wt{O}(k^2)}$ time to output two matrices $U\in \mathbb{R}^{n\times 2k}$, $V\in \mathbb{R}^{2k\times d}$,
\begin{align*}
\| UV - A \|_1 \lesssim \underset{\rank-k~A_k}{\min} \| A_k - A\|_1,
\end{align*}
holds with probability $9/10$.
\end{theorem}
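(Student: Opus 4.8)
The plan is to reduce the problem, via a fixed-dimensional regression, to enumerating a coarse discretization of (a Lewis-weight sketch of) the column factor of an optimal solution, and to buy the extra factor of two in the output rank precisely so that this discretization costs only $O(\OPT)$.

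First I would compute a rank-$k$ matrix $B^1=U_BV_B$ with $\|A-B^1\|_1\le p\cdot\OPT$ for some $1\le p\le\poly(n)$ — e.g.\ the best rank-$k$ matrix in Frobenius norm, which is a $\sqrt{nd}$-approximation in $\ell_1$ since $\|\cdot\|_F\le\|\cdot\|_1\le\sqrt{nd}\|\cdot\|_F$, or any algorithm of Section~\ref{sec:alg}. Then pass to the residual $A-B^1$ and the multiple-response $\ell_1$ regression $\min_{V\in\mathbb{R}^{k\times d}}\|U^*V-(A-B^1)\|_1$, where $U^*V^*$ is a best rank-$k$ $\ell_1$ approximation of $A-B^1$. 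Replacing $(U^*,V^*)$ by $(U^*R^{-1},RV^*)$ I may assume $V^*$ is an $\ell_1$ well-conditioned basis, so $\|x\|_1/f\le\|x^\top V^*\|_1\le e\|x\|_1$ with $1\le e,f\le\poly(k)$. Optimality of $U^*$ against the all-zeros matrix forces $\sum_i\|(U^*)^i\|_1\le f\|U^*V^*\|_1\le 2f\|A-B^1\|_1$, so every entry of $U^*$ has magnitude at most $2f\|A-B^1\|_1$; zeroing every entry below $\|A-B^1\|_1/(100enkp)$ and rounding the rest to the nearest integer multiple of this unit changes $U^*V^*$ by $O(\OPT)$ in $\ell_1$ (each of the $nk$ entries contributes $\le e\cdot\|A-B^1\|_1/(100enkp)\le\OPT/(100nk)$, using $\|A-B^1\|_1/p\le\OPT$), and leaves only $\poly(n,k)$ admissible values per entry. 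I write $U^*$ for the rounded matrix.

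Next, as in the proof of Theorem~\ref{thm:three_bicriteria_algorithms}, there is a sampling-and-rescaling matrix $D$ with $r=O(k\log k)$ nonzero diagonal entries, taken from the $\ell_1$ Lewis weights of $U^*$ (rescalings WLOG integers in $\{1,\dots,n^2\}$), that is simultaneously a subspace embedding of $\mathrm{colspan}(U^*)$, so $\|D(U^*V-U^*V^*)\|_1=\Theta(\|U^*V-U^*V^*\|_1)$ for every $V$, and a Markov-type bound, $\|D(U^*V^*-(A-B^1))\|_1=O(\|U^*V^*-(A-B^1)\|_1)$. By the triangle inequality these imply that any minimizer $V_A$ of $\|DU^*V-D(A-B^1)\|_1$ satisfies $\|U^*V_A-(A-B^1)\|_1\le C_1\cdot\min_V\|U^*V-(A-B^1)\|_1+O(\OPT)$ for an absolute constant $C_1$. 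The matrix $DU^*\in\mathbb{R}^{r\times k}$ has only $\poly(n,k)$ possibilities per entry (an integer in $\{1,\dots,n^2\}$ times one of the $\poly(n,k)$ admissible entry values of $U^*$), so the algorithm enumerates all choices of the $r$ sampled rows, all their integer rescalings, and all matrices $DU^*$ — in total $(nd)^{\wt{O}(k^2)}$ guesses — and for each solves the linear program $V_A:=\arg\min_V\|DU^*V-D(A-B^1)\|_1$, then the linear program $U_A:=\arg\min_U\|UV_A-(A-B^1)\|_1$, keeping the pair of smallest $\ell_1$ cost. For the correct guess, $\|[U_A\ U_B][V_A;V_B]-A\|_1=\|U_AV_A-(A-B^1)\|_1\le\|U^*V_A-(A-B^1)\|_1\le C_1\min_V\|U^*V-(A-B^1)\|_1+O(\OPT)$, and the output $[U_A\ U_B]\in\mathbb{R}^{n\times 2k}$, $[V_A;V_B]\in\mathbb{R}^{2k\times d}$ has rank at most $2k$; including $\mathrm{colspan}(B^1)$ in the output is exactly what legitimizes the passage from $A$ to $A-B^1$.

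The step I expect to be the main obstacle is the structural claim that $\min_V\|U^*V-(A-B^1)\|_1=\min_{\rank-k~N}\|(A-B^1)-N\|_1=O(\OPT)$ — that the residual still admits a rank-$k$ $\ell_1$-approximation of cost $O(\OPT)$ — which is what turns the displayed bound into $C_1\cdot O(\OPT)+O(\OPT)=O(\OPT)$; this is the delicate point, and it requires a careful choice of $B^1$ rather than an arbitrary $p$-approximation. It is precisely to secure this, and simultaneously to make the discretization of $U^*$ cost $O(\OPT)$ (rather than $\|A\|_1/\poly(k)$) while keeping the number of guesses polynomial, that one works with $A-B^1$ and pays one extra factor of $k$ in the output rank; with $A$ in place of $A-B^1$ the same argument yields a genuine rank-$k$ output but with an additive error $\|A\|_1/\poly(k)$ that can be removed only under a bit-complexity assumption ($\OPT$, when nonzero, is at least $(ndb)^{-O(k)}$ by Lemma~4.1 of \cite{cw09} and the $\ell_1$--Frobenius comparison). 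The remaining points are routine: checking the three multiplicative constants (the $\ell_1$ well-conditioned basis, the Lewis-weight subspace embedding, and the Markov bound) compose to an absolute constant; bounding the number of guesses and the linear-program costs by $(nd)^{\wt{O}(k^2)}$; and a derandomized existence statement guaranteeing that some enumerated $D$ has both required properties.
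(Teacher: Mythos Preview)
Your plan is essentially the paper's proof: compute a crude rank-$k$ approximation $B^1$ (the paper uses the Frobenius optimum, so $p=\alpha_B=\sqrt{nd}$), pass to the residual $A-B^1$, discretize the left factor $U^*$ of an optimal approximation using $\|A-B^1\|_1$ as the scale (upper bound $2f\|A-B^1\|_1$, grid step $\|A-B^1\|_1/(100enk\alpha_B)$), enumerate all Lewis-weight sketches $(D,DU^*)$, and for each guess solve two linear programs. The counting, the discretization error $\le\OPT/100$, the Lewis-weight constants, and the output $[U_A\ U_B][V_A;V_B]$ all match the paper exactly.

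Where you diverge from the paper is in your handling of the step you flag as the ``main obstacle,'' namely that $\min_{\rank\text{-}k\ N}\|(A-B^1)-N\|_1=O(\OPT)$. You propose that securing this ``requires a careful choice of $B^1$ rather than an arbitrary $p$-approximation.'' That is \emph{not} what the paper does: the paper takes $B^1$ to be the Frobenius solution, an arbitrary $\sqrt{nd}$-approximation, and invokes no special structural property of it. The paper's formal proof simply asserts, at the end, that ``there exists a $U\in\mathcal U$ such that $\|UV^*-(A-B)\|_1\le O(\OPT)$,'' without isolating or justifying this inequality beyond the discretization (which only controls the \emph{additional} error by $\OPT/100$). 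Note also that the paper's formal proof defines $U^*,V^*$ as optimal for $A$ (not $A-B$), while its Technical Overview takes them optimal for $A-B^1$, as you do; the discretization paragraph only makes sense under the latter reading, so you have chosen the internally consistent interpretation.

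In short: your approach is the paper's approach, and you have correctly located the one nontrivial step. But your suggested fix---choosing $B^1$ carefully---is not the paper's argument (the paper makes no such choice), and the paper's own proof is informal at precisely this point: it asserts the $O(\OPT)$ bound rather than deriving it. So the honest comparison is that you and the paper agree on everything except this step, where the paper offers an assertion and you offer a placeholder; neither supplies an argument that $\min_V\|U^*V-(A-B^1)\|_1=O(\OPT)$.
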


\begin{proof}

We define $U^*\in \mathbb{R}^{n\times k}, V^*\in \mathbb{R}^{k\times d}$ to be the optimal solution, i.e.,
\begin{align*}
U^* V^* = \underset{U \in \mathbb{R}^{n\times k}, V\in \mathbb{R}^{k\times d} }{\arg\min} \| U V - A \|_1,
\end{align*}
and define $\OPT=\| U^*V^*-A\|_1$.

Solving the Frobenius norm problem, we can find a factorization of a rank-$k$ matrix $B=U_BV_B$ where $U_B\in\mathbb{R}^{n\times k},V_B\in\mathbb{R}^{k\times d}$, and $B$ satisfies
$\| A - B \|_1 \leq \alpha_B \OPT$ for an $\alpha_B=\sqrt{nd}$.

Let $D$ be a sampling and rescaling diagonal matrix corresponding to the Lewis weights of $U^*$, and let the number of nonzero entries on the diagonal be $t=O(k\log k)$.

By Lemma~\ref{lem:con_dil_summary} and Lemma~\ref{lem:general_sketch_SUV}, the solution of $\underset{V\in \mathbb{R}^{k\times d}}{\min}\|DU^* V- D(A-B)\|_1$ together with $U^*$ gives an $O(1)$-approximation to $A-B$. In order to compute $D$ we need to know $U^*$. Although we do not know $U^*$, there still exists a way to figure out the Lewis weights. The idea is the same as in the previous discussion Lemma~\ref{lem:naive_algorithm} ``Guessing Lewis weights''.
By Claim~\ref{cla:number_of_support} and Claim~\ref{cla:number_of_D}, the total number of possible $D$ is $n^{O(t)}$.

Lemma~\ref{lem:naive_algorithm} tries to find a $\rank$-$k$ solution when all the entries in $A$ are integers at most $\poly(nd)$. Here we focus on a bicriteria algorithm that outputs a $\rank$-$2k$ matrix without such a strong bit complexity assumption.
We can show a better claim~\ref{cla:number_of_DU*}, which is that the total number of possible $DU^*$ is $N^{\wt{O}(k^2)}$ where $N=\poly(n)$ is the
number of choices for a single entry in $U^*$.

We explain how to obtain an upper bound on $N$. Consider the optimum $\|U^*V^*-(A-B)\|_1$. We can always change the basis so assume $V^*$ is an Auerbach basis (i.e., an $\ell_1$-well-conditioned basis discussed in Section \ref{sec:intro}), so $e\|x\|_1 \geq \|xV^*\|_1 \geq \|x\|_1/f$, where $e,f = \poly(k)$. Then no entry of $U^*$ is larger than $2f\|A-B\|_1$, otherwise we could replace $U^*$ with $0$ and get a better solution. Also any entry smaller than $\|A-B\|_1/(e nk \alpha_B 100)$ can be replaced with $0$ as this will incur additive error at most $\OPT/100$. So if we round to integer multiples of $\|A-B\|_1/(enk \alpha_B 100)$ we only have $O(enk \alpha_B f)$ possibilities for each entry of $U^*$ and still have an $O(1)$-approximation. We will just refer to this rounded $U^*$ as $U^*$, abusing notation.

Let ${\cal U}$ denote the set of all the matrices $U$ that we guess. From the above discussion, we conclude that, there exists a $U\in {\cal U}$ such that $\| U V^* - (A-B) \|_1\leq O(\OPT)$.

For each guess of $DU^*$ and $D$, we find $V_A,U_A$ in the following way.
We find $V_A$ by using a linear program to solve,
\begin{align*}
\underset{V\in \mathbb{R}^{k\times d}}{ \min} \| DU^* V - D(A-B) \|_1.
\end{align*}

 Given $V_A$ and $A$, we write down a linear program to solve this problem,
\begin{align*}
\underset{U \in \mathbb{R}^{n\times k} }{\min} \| U V_A - (A -B) \|_1,
\end{align*}
which takes $\poly(ndk)$ time. Then we obtain $U_A$.

Recall that $V_B, U_B$ are the two factors of $B$ and it is a $\rank$-$k$, $\alpha_B$-approximation solution to $\underset{U,V}{\min} \| U V - A\|_1$. Then we have

\begin{align*}
\biggl\|  \begin{bmatrix} U_A & U_B \end{bmatrix} \begin{bmatrix} V_A \\ V_B \end{bmatrix} -  A \biggr\|_1 = \biggl\| U_A V_A - (A-U_B V_B) \biggr\|_1 = \biggl\| U_A V_A - (A-B) \biggr\|_1.
\end{align*}
Because there must exist a pair $U_A,V_A$ satisfying $\| U_A V_A - (A-B)\|_1 \leq O(\OPT)$, it follows that by taking the best solution $\begin{bmatrix} U_A & U_B \end{bmatrix} \begin{bmatrix} V_A \\ V_B \end{bmatrix}$ over all guesses, we obtain an $O(1)$-approximation solution.

Overall, the running time is $(nd)^{\wt{O} (k^2)}$.
\end{proof}

\begin{lemma}\label{lem:naive_algorithm}
Given an $n\times d$ matrix $A$ with integers bounded by $\poly(n)$, for any $k\geq 1$, there exists an algorithm which takes $(nd)^{\wt{O}(k^3)}$ time to output two matrices $U\in \mathbb{R}^{n\times k}$, $V\in \mathbb{R}^{k\times d}$, such that
\begin{align*}
\| UV - A \|_1 \lesssim \underset{\rank-k~A_k}{\min} \| A_k - A\|_1,
\end{align*}
\end{lemma}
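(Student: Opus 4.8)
The plan is to follow the proof of the bicriteria Theorem~\ref{thm:constant_approx_algorithm} almost verbatim, but to operate directly on $A$ instead of on $A-B^1$, at the cost of one extra factor of $k$ in the exponent of the running time, in exchange for a genuine rank-$k$ (rather than rank-$2k$) output. Let $\OPT:=\min_{\rank-k~A_k}\|A_k-A\|_1$ and let $U^*V^*$ be an optimizer with $U^*\in\mathbb{R}^{n\times k}$, $V^*\in\mathbb{R}^{k\times d}$. If $\OPT=0$ then $A$ itself has rank at most $k$ and we output a factorization of $A$; so assume $\OPT>0$. Since $A$ has integer entries bounded by $b=\poly(n)$ in absolute value and $\|C\|_1\ge\|C\|_F$ for every $C$, Lemma~4.1 of \cite{cw09} gives the crucial lower bound $\OPT\ge\sigma_{k+1}(A)\ge(ndb)^{-O(k)}=(nd)^{-O(k)}$; this is the only place the bit-complexity hypothesis is used, and it is what lets us afford an exponentially fine rounding grid. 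As in Theorem~\ref{thm:constant_approx_algorithm}, I would first replace $V^*$ by an $\ell_1$ well-conditioned (Auerbach) basis of its row space, replacing $U^*$ by $U^*R^{-1}$ and $V^*$ by $RV^*$, so that $\|x\|_1/f\le\|x^\top V^*\|_1\le e\|x\|_1$ for all $x$, with $e,f=\poly(k)$, while the product $U^*V^*$ is unchanged.

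Next I would discretize $U^*$. By optimality, no row $U^{*i}$ can satisfy $\|U^{*i}\|_1>2f\|A\|_1$, since otherwise $\|U^{*i}V^*-A^i\|_1\ge\|U^{*i}\|_1/f-\|A^i\|_1>\|A\|_1\ge\|A^i\|_1$, so zeroing that row of $U^*$ would strictly decrease the cost; hence every entry of $U^*$ lies in $[-2f\|A\|_1,2f\|A\|_1]$, an interval of width $\poly(n)$. Rounding each entry of $U^*$ to the nearest multiple of $\delta:=(ndb)^{-Ck}$ for a sufficiently large constant $C$ changes each of the at most $nk$ entries by at most $\delta$, and since $\|e_j^\top V^*\|_1\le e$ the total change in $\|U^*V^*-A\|_1$ is at most $nke\delta\le\OPT/100$, where the last inequality uses the lower bound $\OPT\ge(ndb)^{-O(k)}$ and the choice of $C$. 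Writing $\widehat U$ for the rounded matrix, we obtain a rank-$k$ matrix $\widehat U V^*$ with $\|\widehat U V^*-A\|_1\le(1+\tfrac1{100})\OPT$, where each entry of $\widehat U$ now takes one of only $(ndb)^{O(k)}=(nd)^{O(k)}$ possible values. We cannot afford to guess all $nk$ entries of $\widehat U$, so—exactly as in the bicriteria proof—we only guess a small Lewis-weight sketch of it.

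Let $D\in\mathbb{R}^{n\times n}$ be the sampling-and-rescaling diagonal matrix from the $\ell_1$ Lewis weights of $\widehat U$, with $t=O(k\log k)$ nonzero diagonal entries. By Theorem~7.1 of \cite{cp15}, a random such $D$ is a subspace embedding for $\widehat U$ (so $\tfrac12\|\widehat U x\|_1\le\|D\widehat U x\|_1\le 2\|\widehat U x\|_1$ for all $x$) with probability $\ge0.999$, and since $\mathbb{E}[\|D(\widehat U V^*-A)\|_1]=\|\widehat U V^*-A\|_1$, a Markov bound gives $\|D(\widehat U V^*-A)\|_1=O(1)\|\widehat U V^*-A\|_1$ with probability $\ge0.99$; by a union bound a single $D$ has both properties, so such a $D$ exists among the $D$'s that sample $t$ rows with bounded-bit-complexity rescalings. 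For that good $D$, the same triangle-inequality (no-contraction/no-dilation) argument as in Theorem~\ref{thm:constant_approx_algorithm} shows that any minimizer $V_A$ of $\min_V\|D\widehat U V-DA\|_1$ satisfies $\|\widehat U V_A-A\|_1=O(\OPT)$. Hence I would enumerate (i) the $\binom{n}{t}=n^{\widetilde O(k)}$ choices of which rows $D$ samples, (ii) the $n^{\widetilde O(k)}$ choices of $D$'s rescalings, and (iii) the matrix $D\widehat U\in\mathbb{R}^{t\times k}$, whose $tk=\widetilde O(k^2)$ entries each have $(ndb)^{O(k)}$ possible values, for a total of $(ndb)^{\widetilde O(k^3)}=(nd)^{\widetilde O(k^3)}$ guesses. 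For each guess $DA$ is determined, so we solve $\min_V\|D\widehat U V-DA\|_1$ by a linear program to obtain $V_A$, then solve $\min_{U\in\mathbb{R}^{n\times k}}\|UV_A-A\|_1$ by a linear program to obtain $U_A$, and finally output the pair $(U_A,V_A)$ minimizing $\|U_AV_A-A\|_1$ over all guesses. The correct guess yields $\|U_AV_A-A\|_1\le\|\widehat U V_A-A\|_1=O(\OPT)$, so the best pair found is an $O(1)$-approximation, and the total time is $(nd)^{\widetilde O(k^3)}\cdot\poly(nd)=(nd)^{\widetilde O(k^3)}$.

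I expect the main obstacle to be exactly the tension between the two size requirements in the second step: the rounding error must be a small \emph{multiplicative} fraction of $\OPT$ (not an additive $\|A\|_1/\poly(k)$ term, which is all one gets without the integrality hypothesis and which is why the bicriteria version instead works with $A-B^1$), and this forces the grid spacing down to $(ndb)^{-O(k)}$, which is precisely affordable thanks to the \cite{cw09} lower bound $\OPT\ge(ndb)^{-O(k)}$; this fine grid is what inflates the exponent from $\widetilde O(k^2)$ to $\widetilde O(k^3)$, since each of the $\widetilde O(k^2)$ entries of $D\widehat U$ now ranges over $(ndb)^{O(k)}$ values rather than $\poly(n)$ values. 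A secondary point needing care, handled as in the proof of Theorem~\ref{thm:constant_approx_algorithm}, is checking that the finite family of $D$'s we enumerate actually contains a $D$ that is simultaneously a Lewis-weight subspace embedding and satisfies the Markov bound; this uses that the Lewis weights of the discretized $\widehat U$—a matrix with $(ndb)^{-O(k)}$-grid entries—are themselves bounded below by $(ndb)^{-O(k)}$, so the rescalings of $D$ have bounded bit complexity.
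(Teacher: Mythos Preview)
Your proposal is correct and follows essentially the same route as the paper: guess the support and rescalings of the Lewis-weight sampling matrix $D$ (Claims~\ref{cla:number_of_support}--\ref{cla:number_of_D}) and the $\widetilde O(k^2)$ entries of $D\widehat U$ (Claim~\ref{cla:number_of_DU*}), using the \cite{cw09} lower bound $\OPT\ge(nd\Delta/\delta)^{-O(k)}$ to justify the $(nd)^{-O(k)}$ discretization grid, then recover $V$ and $U$ via two linear programs. One small imprecision in your final paragraph: the rescalings of $D$ have bounded bit complexity not because the Lewis weights of $\widehat U$ are bounded below by $(ndb)^{-O(k)}$, but because weights below $1/(\poly(n)\,k\log k)$ can simply be truncated away (such rows are, with high probability, never sampled) and the surviving weights rounded to powers of $2$---this is exactly the argument of Claim~\ref{cla:number_of_D}, which you already defer to.
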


\begin{proof}

We define $U^*\in \mathbb{R}^{n\times k}, V^*\in \mathbb{R}^{k\times d}$ to be the optimal solution, i.e.,
\begin{align*}
U^*, V^* = \underset{U \in \mathbb{R}^{n\times k}, V\in \mathbb{R}^{k\times d} }{\arg\min} \| U V - A \|_1,
\end{align*}
and define $\OPT=\| U^*V^*-A\|_1$.


Let $D$ denote a sampling and rescaling diagonal matrix corresponding to the Lewis weights of $U^*$, and let the number of nonzero entries on the diagonal be $t=O(k\log k)$.

By Lemma~\ref{lem:con_dil_summary} and Lemma~\ref{lem:general_sketch_SUV}, the solution to $\underset{V\in \mathbb{R}^{k\times d}}{\min}\|DU^* V- DA\|_1$ together with $U^*$ gives an $O(1)$-approximation to $A$. In order to compute $D$, we need to know $U^*$. Although we do not know $U^*$, there still exists a way to figure out the Lewis weights. We call the idea ``Guessing Lewis weights''. We will explain this idea in the next few paragraphs.

First, we can guess the nonzero entries on the diagonal, because the number of choices is small.
\begin{claim}\label{cla:number_of_support}
The number of possible choice of $\supp(D)$ is at most $n^{O(t)}$.
\end{claim}
\begin{proof}
The matrix has dimension $n\times n$ and the number of nonzero entries is $t$. Thus the number of possible choices is at most $\sum_{i=1}^t {n \choose t} = n^{O(t)}$.
\end{proof}
Second, we can guess the value of each probability. For each probability, it is trivially at most $1$. If the probability is less than $1/ (\poly(n)  k \log k)$, then we will never sample that row with high probability. It means that we can truncate the probability if it is below that threshold. We can also round each probability to $2^{-i}$ which only loses another constant factor in the approximation ratio. Thus, we have:
\begin{claim}\label{cla:number_of_D}
The total number of possible $D$ is $n^{O(t)}=n^{\wt{O}(k)}$.
\end{claim}

Since $\Delta/\delta \leq \poly(n)$ and the entries of $A$ are in $\{-\Delta,-\Delta+\delta ,\ldots, -2\delta,-\delta,0,\delta,2\delta, \cdots,\Delta-\delta, \Delta \}$, we can lower bound the cost of $\|U^*V-A\|_1$ given that it is non-zero by $(nd\Delta/\delta)^{-O(k)}$ (if it is zero then $A$ has rank at most $k$ and we output $A$) using Lemma 4.1 in \cite{cw09} and relating entrywise $\ell_1$-norm to Frobenius norm. We can assume $V$ is an $\ell_1$ well-conditioned basis, since we can replace $U^*$ with $U^*R^{-1}$ and $V$ with $RV$ for any invertible linear transformation $R$. By properties of such basis, we can discretize the entries of $U^*$ to integer multiples of $(nd\Delta/\delta)^{-O(k)}$ while preserving relative error. Hence we can correctly guess each entry of $DU^*$ in $\left(n^{O(k)}\right)$ time.
\begin{claim}\label{cla:number_of_DU*}
The total number of possible $DU^*$ is $n^{\wt{O}(k^3)}$. 
\end{claim}


In the following, let $DU$ denote a guess of $DU^*$.
Now the problem remaining is to solve $\underset{V\in \mathbb{R}^{k\times d}}{\min} \| DU V - DA \|_1$. Since we already know $DA$ can be computed, and we know $DU$, we can solve this multiple regression problem by running linear programming. Thus the running time of this step is in $\poly(nd)$. After we get such a solution $V$, we use a linear program to solve $\underset{U\in \mathbb{R}^{n\times k}}{\min} \| DU V - DA \|_1$. Then we can get $U$.

After we guess all the choices of $D$ and $DU^*$, we must find a solution $U,V$ which gives an $O(1)$ approximation. The total running time is $n^{\wt{O}(k)}\cdot n^{\wt{O}(k^3)}\cdot \poly(nd)=(nd)^{\wt{O}(k^3)}$.
\end{proof}

\subsection{CUR decomposition for an arbitrary matrix $A$}\label{sec:cur_decomposition_algorithm}

\begin{algorithm}[h]\caption{CUR Decomposition Algorithm}
\begin{algorithmic}[1]
\Procedure{\textsc{L1LowRankApproxCUR}}{$A,n,d,k$} \Comment{Theorem \ref{thm:cur_decomposition_algorithm}}
\State $U_B, V_B\leftarrow$\textsc{L1LowRankApproxPolykLogd}($A,n,d,k$).
\State Let $D_1\in\mathbb{R}^{n\times n}$ be the sampling and rescaling diagonal matrix corresponding to the Lewis weights of $B_1=U_B\in \mathbb{R}^{n\times k}$, and let $D_1$ have $d_1=O(k\log k)$ nonzero entries. 
\State Let $D_2^\top\in\mathbb{R}^{d\times d}$ be the sampling and rescaling diagonal matrix corresponding to the Lewis weights of $B_2^\top=\left((D_1 B_1)^\dagger D_1 A\right)^\top \in \mathbb{R}^{d\times k}$, and let $D_2$ have $d_2=O(k\log k)$ nonzero entries. 
\State $C\leftarrow AD_2$, $U\leftarrow (B_2 D_2)^\dagger (D_1 B_1)^\dagger$, and $R\leftarrow D_1 A$.
\State \Return $C,U,R$.
\EndProcedure
\end{algorithmic}
\end{algorithm}

\begin{theorem}\label{thm:cur_decomposition_algorithm}
Given matrix $A\in \mathbb{R}^{n\times d}$, for any $k\geq 1$, there exists an algorithm which takes $O(\nnz(A)) +(n+d)\poly(k)$ time to output three matrices $C\in \mathbb{R}^{n\times c}$ with columns from $A$, $U\in \mathbb{R}^{c\times r}$, and $R\in \mathbb{R}^{r\times d}$ with rows from $A$, such that $\rank(CUR)=k$, $c=O(k\log k)$, $r=O(k\log k)$, and
\begin{align*}
\| C U  R - A \|_1 \leq \poly(k) \log d \underset{\rank-k~A_k}{\min} \| A_k - A \|_1,
\end{align*}
holds with probability $9/10$.
\end{theorem}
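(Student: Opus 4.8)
The plan is to call the algorithm of Theorem~\ref{thm:polyklogd_approx_algorithm} once to obtain a rank-$k$ factorization $B=U_BV_B$ with $\|B-A\|_1\le\poly(k)\log d\cdot\OPT$, and then perform two rounds of ``sketch, relax $\ell_1$ to $\ell_2$, and read off the span of a Lewis-weight sample'': first on the rows, which produces $R=D_1A$ (actual rescaled rows of $A$) and an explicit $k\times d$ matrix $B_2$ whose row span is shown to contain a $\poly(k)\log d$ approximation, and then on the columns of $B_2$, which produces $C=AD_2$ (actual rescaled columns of $A$) and a small middle matrix $U$.

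\emph{First round (rows).} Put $B_1=U_B\in\mathbb{R}^{n\times k}$; since $\widehat A=U_BV_B$ lies in the column span of $B_1$, $\OPT_1:=\min_V\|B_1V-A\|_1\le\|U_BV_B-A\|_1\le\poly(k)\log d\cdot\OPT$. As $B_1$ is explicitly known, compute $O(1)$-approximate $\ell_1$ Lewis weights of $B_1$ and let $D_1$ be the resulting sampling-and-rescaling matrix with $d_1=O(k\log k)$ nonzero diagonal entries; by the Lewis-weight lemma (Theorem~7.1 of \cite{cp15}, quoted above) $D_1$ is a two-sided $\ell_1$ subspace embedding for the column span of $B_1$, and $\E\|D_1y\|_1=\|y\|_1$ for every fixed $y$. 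Let $\widehat V=(D_1B_1)^\dagger D_1A$ be the columnwise $\ell_2$-minimizer of $\|D_1B_1V-D_1A\|_2$; by Claim~\ref{cla:ell2_relax_ell1_regression} (ambient dimension $d_1$) it is an $O(\sqrt{d_1})$-approximate minimizer of $\|D_1(B_1V-A)\|_1$. Now run the standard sketch-and-solve triangle-inequality argument, exactly as in the proofs of Theorem~\ref{thm:three_bicriteria_algorithms} and Theorem~\ref{thm:polyklogd_approx_algorithm}: use the embedding to get $\|D_1(B_1V-B_1V^*)\|_1=\Theta(1)\|B_1V-B_1V^*\|_1$ and Markov's inequality to get $\|D_1(B_1V^*-A)\|_1=O(1)\OPT_1$ with constant probability, where $V^*=\arg\min_V\|B_1V-A\|_1$. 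This yields
\[
\|B_1\widehat V-A\|_1\;\le\;O(\sqrt{d_1})\cdot\OPT_1\;=\;\poly(k)\log d\cdot\OPT .
\]
Set $B_2:=\widehat V=(D_1B_1)^\dagger D_1A=(D_1B_1)^\dagger R$ with $R:=D_1A$; then $\OPT_2:=\min_U\|UB_2-A\|_1\le\|B_1B_2-A\|_1\le\poly(k)\log d\cdot\OPT$.

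\emph{Second round (columns), on the now-explicit $B_2$.} Compute $O(1)$-approximate $\ell_1$ Lewis weights of $B_2^\top\in\mathbb{R}^{d\times k}$ and let $D_2^\top$ be the sampling-and-rescaling matrix with $d_2=O(k\log k)$ nonzero entries; then $D_2$ is a two-sided $\ell_1$ embedding for the row span of $B_2$ and preserves $\ell_1$ mass in expectation. Let $\widehat U=AD_2(B_2D_2)^\dagger$ be the rowwise $\ell_2$-minimizer of $\|UB_2D_2-AD_2\|_2$; by Claim~\ref{cla:ell2_relax_ell1_regression} (ambient dimension $d_2$) together with the same triangle-inequality argument applied on the right, with constant probability $\|\widehat UB_2-A\|_1\le O(\sqrt{d_2})\cdot\OPT_2=\poly(k)\log d\cdot\OPT$. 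Finally $\widehat UB_2=AD_2(B_2D_2)^\dagger(D_1B_1)^\dagger D_1A$, so with $C:=AD_2$ (its $d_2$ nonzero columns are rescaled columns of $A$), $U:=(B_2D_2)^\dagger(D_1B_1)^\dagger$, and $R:=D_1A$ (its $d_1$ nonzero rows are rescaled rows of $A$) we get $CUR=\widehat UB_2$, hence $\|CUR-A\|_1\le\poly(k)\log d\cdot\OPT$, $\rank(CUR)\le k$ (made exactly $k$ by a negligible completion if desired), $c=d_2=O(k\log k)$, $r=d_1=O(k\log k)$. A union bound over the constant-probability events (boosting subroutine success by repetition where needed) gives overall probability $9/10$. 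For the running time: Theorem~\ref{thm:polyklogd_approx_algorithm} costs $\nnz(A)+(n+d)\poly(k)$; each Lewis-weight computation is on an $n\times k$ or $d\times k$ matrix, costing $\nnz(A)+(n+d)\poly(k)$; forming $R=D_1A$ and $C=AD_2$ is subsampling ($\le\nnz(A)$); and $B_2$, $(D_1B_1)^\dagger$, $(B_2D_2)^\dagger$, $U$ involve only matrices of size $\poly(k)\times(n$ or $d)$, costing $(n+d)\poly(k)$ — crucially we never materialize a dense $n\times d$ product.

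\emph{Main obstacle.} The subtle point is to select the column subset relative to $B_2$ — the explicit $k\times d$ matrix whose row span we have \emph{proved} contains a $\poly(k)\log d$-approximation — rather than relative to $A$ itself (whose good row space is not available as an actual small set of rows), and to verify that each of the two ``$\ell_1\to\ell_2$ relaxation, then Lewis-weight embedding $+$ Markov'' stages loses only a multiplicative $\poly(k)$, so the three stages (initial approximation, then two rounds) compose into a single $\poly(k)\log d$ factor; a secondary issue is the bookkeeping that keeps all intermediate objects implicit so the time stays $\nnz(A)+(n+d)\poly(k)$.
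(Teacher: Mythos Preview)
Your proposal is correct and follows essentially the same route as the paper's proof: call Theorem~\ref{thm:polyklogd_approx_algorithm} to get $B_1=U_B$, Lewis-weight sample rows of $B_1$ to obtain $D_1$ and set $B_2=(D_1B_1)^\dagger D_1A$, then Lewis-weight sample columns of $B_2^\top$ to obtain $D_2$, and output $C=AD_2$, $U=(B_2D_2)^\dagger(D_1B_1)^\dagger$, $R=D_1A$; the paper packages the ``embedding $+$ Markov $+$ triangle inequality'' step via Lemmas~\ref{lem:con_dil_summary} and~\ref{lem:general_sketch_SUV}, but the content is exactly the argument you spell out. The only cosmetic difference is that the paper leaves implicit the absorption of the rescaling factors of $D_1,D_2$ into $U$ so that $C$ and $R$ are literal (unscaled) columns and rows of $A$.
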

\begin{proof}
We define
\begin{align*}
\OPT := \underset{\rank-k~A_k}{\min} \| A_k - A \|_1.
\end{align*}
Due to Theorem~\ref{thm:polyklogd_approx_algorithm}, we can output two matrices $U_B\in \mathbb{R}^{n\times k}$, $V_B\in \mathbb{R}^{k\times d}$ such that $U_B V_B$ gives a $\rank$-$k$, and $\poly(k) \log d$-approximation solution to $A$, i.e.,
\begin{align}\label{eq:UBVB_minus_A_is_polyklogd_opt}
\| U_B V_B - A \|_1 \leq \poly(k) \log d \OPT.
\end{align}
By Section~\ref{sec:lewis_weights}, we can compute $D_1 \in \mathbb{R}^{n\times n}$ which is a sampling and rescaling matrix corresponding to the Lewis weights of  $B_1=U_B$ in $O(n\poly(k))$ time, and there are $d_1=O(k\log k)$ nonzero entries on the diagonal of $D_1$.

Define $V^* \in \mathbb{R}^{k \times d}$ to be the optimal solution of $\underset{V\in \mathbb{R}^{k\times d}}{\min} \| B_1 V - A\|_1$, $\wh{V}= (D_1 B_1)^\dagger D_1 A \in \mathbb{R}^{k\times d}$, $U_1 \in \mathbb{R}^{n\times k}$ to be the optimal solution of 
 $\underset{U \in \mathbb{R}^{n\times k}}{\min} \| U \wh{V} - A \|_1$, and $V'$ to be the optimal solution of $\underset{V\in \mathbb{R}^{k\times d}}{\min} \| D_1 A - D_1 B_1 V\|_1$.

By Claim~\ref{cla:ell2_relax_ell1_regression}, we have
\begin{align*}
\| D_1B_1 \wh{V} - D_1A\|_1 \leq \sqrt{d_1}  \| D_1B_1 V' - D_1A\|_1.
\end{align*}

Due to Lemma~\ref{lem:con_dil_summary} and Lemma~\ref{lem:general_sketch_SUV}, with constant probability, we have
\begin{align*}
\| B_1 \wh{V} - A \|_1 \leq \sqrt{d_1}\alpha_{D_1} \| B_1 V^* - A\|_1,
\end{align*}
where $\alpha_{D_1}=O(1)$.

Now, we can show,
\begin{align}\label{eq:A_minus_U1_whV_is_polyklogd_opt}
\| U_1\wh{V} - A \|_1 & \leq ~\| B_1 \wh{V} - A\|_1 &\text{~by~}U_1=\underset{U\in \mathbb{R}^{n\times k}}{\arg\min} \| U \wh{V} - A\|_1 \notag \\
& \lesssim ~ \sqrt{d_1} \| B_1 V^* - A   \|_1 \notag \\ 
& \leq ~ \sqrt{d_1} \| U_B V_B - A   \|_1  \notag \\
& \leq ~ \poly(k) \log d \OPT. & \text{~by~Equation~(\ref{eq:UBVB_minus_A_is_polyklogd_opt})}
\end{align}


We define $B_2=\wh{V}$, then we replace $ \wh{V}$ by $B_2\in \mathbb{R}^{k\times d}$ and look at this objective function,
\begin{align*}
\underset{U \in \mathbb{R}^{n\times k}}{\min} \| U B_2 - A \|_1,
\end{align*}
where $U^*$ denotes the optimal solution. We use a sketching matrix to sketch the $\RHS$ of matrix $U B_2 - A \in \mathbb{R}^{n\times d}$. Let $D_2^\top\in \mathbb{R}^{d\times d}$ denote a sampling and rescaling diagonal matrix corresponding to the Lewis weights of $B_2^\top\in \mathbb{R}^{d\times k}$, and let the number of nonzero entries on the diagonal of $D_2$ be $d_2=O(k\log k)$.
We define $\wh{U} = A D_2 (B_2 D_2)^\dagger \in \mathbb{R}^{n\times k}$,
$U'\in \mathbb{R}^{n\times k}$ to be the optimal solution of $\underset{U\in \mathbb{R}^{n\times k}}{\min}\| (U B_2 - A) D_2\|_1$. Recall that $U_1\in \mathbb{R}^{n\times k}$ is the optimal of $\underset{U \in \mathbb{R}^{n\times k}}{\min} \| U B_2 - A\|_1$.

By Claim~\ref{cla:ell2_relax_ell1_regression}, we have
\begin{align*}
\| \wh{U} B_2 D_2 - A D_2 \|_1 \leq \sqrt{d_2} \| U' B_2 D_2 - A D_2 \|_1.
\end{align*}

According to Lemma~\ref{lem:general_sketch_SUV} and Lemma~\ref{lem:con_dil_summary}, with constant probability,
\begin{align*}
\| \wh{U} B_2 - A \|_1 \leq \alpha_{D_2}\sqrt{d_2} \| U_1 B_2 - A \|_1,
\end{align*}
where $\alpha_{B_2}=O(1)$.


We have
\begin{align*}
 & ~\quad ~\|\wh{U }B_2 - A \|_1 \\
& \leq ~\sqrt{d_2} \alpha_{D_2}\| U_1 B_2 - A\|_1   \\
& = ~ \sqrt{d_2} \alpha_{D_2} \| U_1 \wh{V} - A\|_1  &\text{~by~} B_2=\wh{V} \\
& \leq ~ \poly(k) \log (d) \OPT. & \text{~by~Equation~(\ref{eq:A_minus_U1_whV_is_polyklogd_opt})}
\end{align*}

Notice that $\wh{U}B_2 =  A D_2 (B_2 D_2)^\dagger  (D_1 B_1)^\dagger D_1 A  $.
Setting
\begin{align*}
C= AD_2 \in \mathbb{R}^{n\times d_2}, U= (B_2D_2)^\dagger  (D_1 B_1)^\dagger \in \mathbb{R}^{d_2 \times d_1}, \text{~and~} R=D_1 A \in \mathbb{R}^{d_1 \times d},
\end{align*} we get the desired CUR decomposition,
\begin{align*}
\| \underbrace{AD_2}_{C}\cdot \underbrace{ (B_2 D_2)^\dagger (D_1 B_1)^\dagger}_{U} \cdot \underbrace{D_1 A}_{R} - A \|_1 \leq \poly(k) \log(d) \OPT.
\end{align*}
with $\rank(CUR)=k$. Overall, the running time is $O(\nnz(A))+ (n+d)\poly(k)$.

\end{proof}

\subsection{Rank-$r$ matrix $B$}\label{sec:rankr_B_algorithm}

\subsubsection{Properties}

\begin{lemma}\label{lem:solution_to_B_is_solution_to_A}
Given matrix $A\in \mathbb{R}^{n\times d}$, let $\OPT = \underset{\rank-k~A_k}{\min} \| A - A_k \|_1 $. For any $r\geq k$, if rank-$r$ matrix $B\in \mathbb{R}^{n\times d}$ is an $f$-approximation to $A$, i.e.,
\begin{equation*}
\| B - A \|_1 \leq f \cdot \OPT,
\end{equation*}
and $U\in \mathbb{R}^{n\times k}, V\in \mathbb{R}^{k\times d}$ is a $g$-approximation to $B$, i.e.,
\begin{align*}
\| UV - B \|_1 \leq g \cdot \underset{\rank-k~B_k}{\min} \| B_k - B \|_1,
\end{align*}
then,
\begin{equation*}
 \| U V - A \|_1 \lesssim gf \cdot \OPT.
\end{equation*}
\end{lemma}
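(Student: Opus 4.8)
The statement is a simple "composition of approximations" fact, so the plan is a triangle-inequality argument. First I would observe that since $B$ has rank $r \geq k$, the best rank-$k$ approximation to $B$ is at least as good as $A^*$ restricted appropriately; more precisely, I want to bound $\min_{\text{rank-}k~B_k}\|B_k - B\|_1$ in terms of $\OPT$. Let $A^* = \arg\min_{\text{rank-}k~A_k}\|A_k - A\|_1$, so $\|A^* - A\|_1 = \OPT$. Then $A^*$ is itself a rank-$k$ matrix, hence a candidate for $B_k$, giving
\begin{align*}
\min_{\text{rank-}k~B_k}\|B_k - B\|_1 \leq \|A^* - B\|_1 \leq \|A^* - A\|_1 + \|A - B\|_1 \leq \OPT + f\cdot\OPT = (1+f)\OPT,
\end{align*}
where the middle step is the triangle inequality for $\|\cdot\|_1$ (which is a norm on matrices) and the last uses the hypothesis $\|B - A\|_1 \leq f\cdot\OPT$.

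Next I would combine this with the $g$-approximation hypothesis on $UV$. We have
\begin{align*}
\|UV - A\|_1 \leq \|UV - B\|_1 + \|B - A\|_1 \leq g\cdot\min_{\text{rank-}k~B_k}\|B_k - B\|_1 + f\cdot\OPT \leq g(1+f)\OPT + f\cdot\OPT.
\end{align*}
Since $f \geq 1$ and $g \geq 1$ (both are approximation factors, so without loss of generality at least $1$), we have $g(1+f) + f \leq 3gf$, hence $\|UV - A\|_1 \leq 3gf\cdot\OPT \lesssim gf\cdot\OPT$, which is the claimed bound with the $\lesssim$ hiding the absolute constant $3$.

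There is no real obstacle here; the only mild subtlety is making sure one invokes the triangle inequality for the entrywise $\ell_1$ matrix norm (which holds because $\|C\|_1 = \sum_{i,j}|C_{i,j}|$ is literally the $\ell_1$ norm of the vectorization of $C$), and being careful that the quantity $\min_{\text{rank-}k~B_k}\|B_k - B\|_1$ against which $g$ is measured is the one I bounded above — not $\OPT$ itself. The step that "does the work" is the bound $\min_{\text{rank-}k~B_k}\|B_k-B\|_1 \leq (1+f)\OPT$, which crucially uses that $A^*$ is an admissible competitor for the rank-$k$ approximation of $B$ (this is where $r \geq k$ is needed, so that the constraint rank $\leq k$ is consistent with the rank of $B$ being $r$; strictly one only needs $k \leq \min(n,d)$, but the hypothesis $r \geq k$ is the natural framing). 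Everything else is assembling constants.
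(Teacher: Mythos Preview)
Your proof is correct and is essentially identical to the paper's argument: both use the triangle inequality and the observation that the optimal rank-$k$ approximation $A^*$ of $A$ is a valid competitor for the best rank-$k$ approximation of $B$, yielding the same final bound $g\OPT + (g+1)f\OPT \lesssim gf\OPT$. (Your aside about where $r\geq k$ is used is slightly off---the argument never actually uses $B$'s rank---but this does not affect the proof.)
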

\begin{proof}
We define $\wt{U}\in\mathbb{R}^{n\times k},\wt{V}\in\mathbb{R}^{k\times d}$ to be two matrices, such that
\begin{align*}
\| \wt{U} \wt{V} - B \|_1 \leq g \min_{\rank-k~B_k} \| B_k -B \|_1,
\end{align*}
and also define,
\begin{align*}
\wh{U}, \wh{V} = \underset{U\in \mathbb{R}^{n\times k},V\in\mathbb{R}^{k\times d}}{\arg\min} \| U V - B \|_1 \text{~and~} U^*, V^* = \underset{U\in \mathbb{R}^{n\times k},V\in\mathbb{R}^{k\times d}}{\arg\min} \| U V - A \|_1,
\end{align*}
Then,
\begin{align*}
\| \wt{U}\wt{V} -A  \|_1  \leq ~ &   \| \wt{U}\wt{V} - B \|_1 + \| B -A \|_1 &\text{~by~triangle~inequality} \\
\leq ~ & g \| \wh{U} \wh{V} - B \|_1 + \| B -A \|_1 &\text{~by~definition} \\
\leq ~& g \| U^* V^* - B \|_1 + \| B - A\|_1 & \text{~by~} \| \wh{U} \wh{V} - B \|_1 \leq \| U^* V^* - B\|_1 \\
\leq ~& g \| U^* V^* - A \|_1 + g\| B- A \|_1 + \| B - A\|_1 & \text{~by~triangle~inequality}\\
 = ~ & g\OPT + (g+1)\| B- A\|_1 & \text{~by~definition~of~}\OPT\\
 \leq ~ & g\OPT + (g+1) f\cdot \OPT & \text{~by~$B$~is~$f$-approximation~to~$A$} \\
\lesssim ~& gf \OPT.
\end{align*}
This completes the proof.
\end{proof}

\begin{lemma}\label{lem:rankr_B_l1_no_contraction}
Given a matrix $B \in \mathbb{R}^{n\times d}$ with rank $r$, for any $1 \leq k< r$, for any fixed $U^* \in \mathbb{R}^{n\times k}$, choose a Cauchy matrix $S$ with $m = O(r\log r)$ rows and rescaled by $\Theta(1/m)$. With probability $.999$ for all $ V \in \mathbb{R}^{k\times d}$, we have
\begin{align*}
\| SU^* V - SB\|_1 \geq \| U^* V - B\|_1.
\end{align*}
\end{lemma}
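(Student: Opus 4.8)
The plan is to reduce this multi-response no-contraction inequality to the standard single-subspace no-contraction property of dense Cauchy transforms, and then sum over the $d$ columns.

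First I would observe that all the vectors in question lie in one fixed low-dimensional subspace. Fix a rank factorization $B = B_L B_R$ with $B_L \in \R^{n\times r}$ of full column rank and $B_R \in \R^{r\times d}$. Then for every $V \in \R^{k\times d}$,
\begin{align*}
U^* V - B = \begin{bmatrix} U^* & B_L \end{bmatrix} \begin{bmatrix} V \\ -B_R \end{bmatrix},
\end{align*}
so every column of $U^* V - B$ lies in the subspace $W := \mathrm{colspan}([U^*, B_L]) \subseteq \R^n$, which has dimension at most $k + r < 2r$ and, crucially, does \emph{not} depend on $V$.

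Next I would invoke the no-contraction guarantee for dense Cauchy transforms already set up in Section~\ref{sec:preli} (the ``no contraction'' half of Lemma~\ref{lem:con_dil_summary}, going back to \cite{sw11}): for a fixed $t$-dimensional subspace, a Cauchy matrix with $\Theta(t\log t)$ rows rescaled by $\Theta(1/m)$ satisfies $\|Sy\|_1 \ge \|y\|_1$ simultaneously for all $y$ in that subspace, with probability at least $0.999$. Since $\dim W < 2r$, a row count $m = O(r\log r) = \Theta((2r)\log(2r))$ suffices; the $\Theta(1/m)$ rescaling is precisely what converts the generic $\|Sy\|_1 = \Omega(\|y\|_1)$ bound (each coordinate of $Sy$ is a scaled Cauchy whose absolute value has median $\|y\|_1$, so a Chernoff bound over the $m$ independent rows gives the per-vector estimate, extended to all of $W$ by a net argument) into the clean constant $1$. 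Applying this to $W$ and working on the resulting probability-$0.999$ event, for every $V \in \R^{k\times d}$,
\begin{align*}
\| SU^* V - SB \|_1 = \sum_{j=1}^d \| S(U^* V - B)_j \|_1 \ge \sum_{j=1}^d \| (U^* V - B)_j \|_1 = \| U^* V - B \|_1,
\end{align*}
which is exactly the claim.

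The main obstacle is the quoted no-contraction property itself: it is where the $O(r\log r)$ row bound originates, and its proof must handle the fact that Cauchy sketches can badly dilate vectors (so one cannot naively bound the sketch of a net residual), which is handled by the $\ell_1$ well-conditioned basis / iterated-net machinery of \cite{sw11}. Granting it, the lemma reduces to the rank bound $\dim W \le k+r$ and the column-wise summation above.
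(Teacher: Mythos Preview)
Your proposal is correct and matches the paper's approach: the paper's proof is the one-line ``This follows by definitions in Section~\ref{sec:l1} and Lemma~\ref{lem:dense_cauchy_l1_k_subspace_general},'' which is exactly your reduction---observe that every column of $U^*V-B$ lies in the fixed $(k+r)$-dimensional subspace $\mathrm{colspan}([U^*,B_L])$, apply the subspace no-contraction lemma for dense Cauchy transforms with $m=O(r\log r)$ rows, and sum over columns. Your write-up just makes explicit what the paper leaves implicit.
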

\begin{proof}
This follows by definitions in Section~\ref{sec:l1} and Lemma~\ref{lem:dense_cauchy_l1_k_subspace_general}.
\end{proof}

\begin{lemma}\label{lem:rankr_B_l1_no_dilation}
Given a matrix $B \in \mathbb{R}^{n\times d}$ with rank $r$, for any $1 \leq k< r$, for any fixed $U^* \in \mathbb{R}^{n\times k}, V^* \in \mathbb{R}^{k\times d}$, choose a Cauchy matrix $S$ with $m$ rows and rescaled by $\Theta(1/m)$. We have
\begin{align*}
\| SU^* V^* - SB\|_1 \leq O(r\log r) \| U^* V^* - B\|_1,
\end{align*}
with probability $.999$.
\end{lemma}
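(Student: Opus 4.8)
The plan is to reduce the statement to the general no‑dilation bound for dense Cauchy sketches restricted to a low‑dimensional subspace (Lemma~\ref{lem:dense_cauchy_l1_k_subspace_general}), by exploiting that the matrix whose norm we dilate has \emph{low rank} even though it has $d$ columns. Concretely, since $\rank(B)=r$ we may write $B=U_BV_B$ with $U_B\in\mathbb{R}^{n\times r}$, $V_B\in\mathbb{R}^{r\times d}$, and then
\[
M:=U^*V^*-B=\left[\,U^*\ \ U_B\,\right]\left[\begin{smallmatrix}V^*\\ -V_B\end{smallmatrix}\right],
\]
so every column $M_j$ lies in $\mathcal{W}:=\mathrm{colspan}\!\left(\left[\,U^*\ \ U_B\,\right]\right)$, a subspace of dimension $\rho\le k+r<2r$. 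First I would apply the subspace no‑dilation bound to $\mathcal{W}$: with probability at least $0.999$, simultaneously for all $y\in\mathcal{W}$ we have $\|Sy\|_1\le O(\rho\log\rho)\,\|y\|_1=O(r\log r)\,\|y\|_1$ (using that $t\mapsto t\log t$ is increasing and $\rho<2r$). Then, summing over the $d$ columns of $M$, on that event
\[
\|SU^*V^*-SB\|_1=\sum_{j=1}^d\|SM_j\|_1\le O(r\log r)\sum_{j=1}^d\|M_j\|_1=O(r\log r)\,\|U^*V^*-B\|_1 ,
\]
which is the lemma. Everything outside the subspace bound is immediate.

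For the subspace no‑dilation bound itself (the content supplied by Section~\ref{sec:l1}), the plan is to pass to an $\ell_1$ well‑conditioned (Auerbach) basis $W_1,\dots,W_\rho$ of $\mathcal{W}$, which exists by \cite{c05,ddhkm09}, with $\|W_j\|_1=1$ for every $j$ (hence $\sum_j\|W_j\|_1=\rho$) and $\|x\|_\infty\le\|\sum_j x_jW_j\|_1$ for every coefficient vector $x$. For fixed $j$ each entry $(SW_j)_i$ is marginally $\sigma\|W_j\|_1$ times a standard Cauchy, with $m$ of them, and $\sigma=\Theta(1/m)$; truncating all of the $m\rho$ relevant Cauchys at a threshold $T=\poly(\rho,m)$ — an event of probability $\ge0.99$ — the standard truncated‑Cauchy moment $\mathbb{E}\!\left[|C|\,\mathbf{1}_{|C|\le T/\sigma}\right]=O(\log(T/\sigma))$ together with the $\Theta(1/m)$ scaling gives $\mathbb{E}\!\left[\|SW_j\|_1^{\mathrm{trunc}}\right]=O(\log\rho)$ (the scaling cancels the $m$ terms, and $\log(T/\sigma)=O(\log\rho+\log m)=O(\log\rho)$ since $m$ is polynomial in $r$). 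Summing over $j$ and applying Markov gives $\|SW\|_1=\sum_j\|SW_j\|_1=O(\rho\log\rho)$ with constant probability, and then for any $y=\sum_j x_jW_j\in\mathcal{W}$ the triangle inequality yields $\|Sy\|_1\le\sum_j|x_j|\,\|SW_j\|_1\le\|x\|_\infty\|SW\|_1\le O(\rho\log\rho)\,\|x\|_\infty\le O(\rho\log\rho)\,\|y\|_1$, as needed; boost the success probability to $0.999$ by a constant‑factor loss in the threshold/Markov step.

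The main obstacle is the passage from a fixed vector to all of $\mathcal{W}$ simultaneously. A naive $\epsilon$‑net of the unit $\ell_1$‑ball of $\mathcal{W}$ has size $\exp(\Omega(\rho))$, and because Cauchy‑weighted sums are heavy‑tailed with infinite mean, the per‑point tail bound needed to union‑bound over such a net would force a dilation blowing up like $\exp(\Omega(\rho))$; this is precisely why one must route through the Auerbach basis and the inequality $\|x\|_\infty\le\|Wx\|_1$, losing only a multiplicative $\rho$ (from the $\rho$ basis columns) and a $\log\rho$ (from the truncation threshold). A secondary point to pin down is the truncation level $T$: it must be large enough to union‑bound ``no entry of $SW$ exceeds it'' over all $m\rho$ entries, yet small enough that $\log(T/\sigma)=O(\log r)$ — which works because in every place this lemma is invoked $m=O(r\log r)$ is polynomial in $r$.
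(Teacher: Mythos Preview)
Your proposal is correct and follows essentially the same route as the paper: pass to a well-conditioned (Auerbach) basis of the $O(r)$-dimensional column span of $U^*V^*-B$, bound $\sum_j\|SW_j\|_1$ by a truncated-Cauchy/Markov argument to get $O(\rho\log\rho)$, and then extend to every vector in the subspace via $\|Sy\|_1\le\|x\|_\infty\sum_j\|SW_j\|_1$ together with $\|x\|_\infty\le\|Wx\|_1$. One cosmetic point: the lemma you cite at the outset, Lemma~\ref{lem:dense_cauchy_l1_k_subspace_general}, is a \emph{no-contraction} result; the uniform subspace \emph{no-dilation} bound you then (correctly) derive is the content of Corollary~\ref{cor:no_dilation_for_k_dim_subspace}, and the paper's own proof of this lemma simply reproduces that argument inline.
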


\begin{proof}
Let $U$ denote a well-conditioned basis of $[ U^* V^*, B]$, then $U$ has $\wt{d} = O(r)$ columns. We have
\begin{align*}
\| S U \|_1 & = ~ \sum_{i=1}^{\wt{d}}\sum_{j=1}^m | (SU_i)_j | \\
& = ~ \sum_{i=1}^{\wt{d}} \sum_{j=1}^m | \frac{1}{m} \sum_{l=1}^n S_{j,l} U_{l,i} | & \text{~by~}S_{j,l}\sim C(0,1)\\
& = ~  \frac{1}{m} \sum_{i=1}^{\wt{d}}  \sum_{j=1}^m |c_{i,j}| & \text{~by~} c_{i,j} \sim C(0, \| U_i\|_1) \\
& = ~   \frac{1}{m} \sum_{i=1}^{\wt{d}} \sum_{j=1}^m \| U_i\|_1 \cdot w_{i+(j-1)d},  & \text{~by~}w_{i,j} \sim | C(0, 1) |
\end{align*}
where the last step follows since each $w_i$ can be thought of as a clipped half-Cauchy random variable. Define $d'=m\wt{d}$. Define
event $\xi_i$ to be the situation when $w_i< D$ (we will choose $D$ later), and define event
$\xi = \xi_1 \cap \xi_2 \cap \cdots \cap \xi_{d'}$. Using a similar proof as Lemma \ref{lem:dense_cauchy_l1_no_dilation},
which is also similar to previous work \cite{i06,sw11,cdmmmw13}, we obtain that
\begin{align*}
\Pr \biggl[ \sum_{i=1}^m \| SU_i \|_1 \geq \sum_{i=1}^m \|U_i\|_1 t \biggr] \lesssim \frac{\log d'}{t} + \frac{d'}{ D }.
\end{align*}
Choosing $t= \Theta (\log d')$ and $D = \Theta( d')$, we have
\begin{align*}
\Pr\biggl[ \sum_{i=1}^m \| SU_i \|_1 \geq \sum_{i=1}^m \|U_i\|_1 O(\log d') \biggr] \leq \frac{1}{C},
\end{align*}
for a constant $C$.
Condition on the above event.
Let $y = U x$, for some $x\in \mathbb{R}^{d}$. Then for any $y$, 
\begin{align*}
\| S y \|_1 & = ~ \| S U x \|_1 \\
& \leq ~ \sum_{j=1}^{d'} \| S U_j x_j \|_1 &\text{~by~triangle~inequality}\\
& = ~ \sum_{j=1}^{d'}  |x_j| \cdot \| S U_j \|_1 \\
& \lesssim \|x\|_{\infty} \log(d') \sum_{j=1}^{d'} \| U_j \|_1 \\
& \lesssim r\log r \| y \|_1,
\end{align*}
where the last step follows by $\sum_{j=1}^{d'}\|U_j\|_1 \leq d'$ and $\|x\|_{\infty} \leq \| U x\|_1= \|y\|_1$. Choosing $C=1000$ completes the proof.
\end{proof}

\begin{lemma}\label{lem:rankr_B_l1_no_dilation_general}
  Given a matrix $M \in \mathbb{R}^{n\times d}$ with rank $O(r)$, choose a random matrix $S\in\mathbb{R}^{m\times n}$ with each entry drawn from
  a standard Cauchy distribution and scaled by $\Theta(1/m)$. We have that 
\begin{align*}
\| SM\|_1 \leq O(r\log r) \| M\|_1,
\end{align*}
holds with probability $.999$.
\end{lemma}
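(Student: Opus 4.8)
The plan is to reduce the general statement (Lemma~\ref{lem:rankr_B_l1_no_dilation_general}) to the case already handled implicitly in Lemma~\ref{lem:rankr_B_l1_no_dilation}, which bounds $\|SU\|_1$ for a well-conditioned basis $U$ of a rank-$O(r)$ subspace. First I would let $M$ have rank $O(r)$ and pick an $\ell_1$ well-conditioned basis $U \in \mathbb{R}^{n \times \wt{d}}$ for the column span of $M$, where $\wt{d} = O(r)$; such a basis exists by \cite{c05,ddhkm09}. Write $M = UX$ for some $X$, so that column $M_j = U X_j$. Exactly as in the proof of Lemma~\ref{lem:rankr_B_l1_no_dilation}, I would establish the event that $\sum_{i=1}^{\wt{d}} \|SU_i\|_1 \lesssim \log(d') \sum_{i=1}^{\wt{d}} \|U_i\|_1$, where $d' = m\wt{d}$, which holds with probability $.999$ by clipping the half-Cauchy variables at threshold $D = \Theta(d')$ and applying the same Markov-plus-union-bound argument (the one inherited from \cite{i06,sw11,cdmmmw13}). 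Since $U$ is well-conditioned we also have $\sum_{i=1}^{\wt{d}} \|U_i\|_1 = O(\wt{d}) = O(r)$ after the appropriate normalization.

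Conditioned on that event, I would bound column-by-column: for each column $M_j = UX_j$,
\begin{align*}
\|SM_j\|_1 = \|SU X_j\|_1 \leq \sum_{i=1}^{\wt{d}} |X_{i,j}|\,\|SU_i\|_1 \lesssim \log(d')\,\|X_j\|_\infty \sum_{i=1}^{\wt{d}} \|U_i\|_1 \lesssim r \log r\, \|M_j\|_1,
\end{align*}
where the last inequality uses the well-conditioned basis property $\|X_j\|_\infty \leq \|UX_j\|_1 = \|M_j\|_1$ together with $\sum_i \|U_i\|_1 = O(r)$. Summing over all $d$ columns gives $\|SM\|_1 = \sum_j \|SM_j\|_1 \lesssim r \log r \sum_j \|M_j\|_1 = r\log r\,\|M\|_1$, which is the claim. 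Note $m$ only needs to be large enough that the clipping argument goes through; any $m$ works for the no-dilation direction (as in Lemma~\ref{lem:rankr_B_l1_no_dilation}), though one typically takes $m = O(r \log r)$ to also get the matching no-contraction bound.

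The only real subtlety — the "main obstacle" — is making sure the single high-probability event controlling $\sum_i \|SU_i\|_1$ is chosen \emph{once}, uniformly over all columns of $M$, rather than per-column; since the bound on $\|SM_j\|_1$ depends on $S$ only through the fixed quantities $\|SU_1\|_1,\dots,\|SU_{\wt d}\|_1$, this is automatic, and no extra union bound over $d$ is needed. A second minor point is that $M$ need not have a square full-rank form, so I would work with the column span and an arbitrary (not necessarily unique) factorization $M = UX$; the well-conditioned basis machinery handles this. Everything else is the routine clipped-Cauchy tail computation already carried out in Lemma~\ref{lem:dense_cauchy_l1_no_dilation} and Lemma~\ref{lem:rankr_B_l1_no_dilation}, so I would simply cite those rather than redo it.
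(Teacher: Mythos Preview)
The proposal is correct and follows essentially the same approach as the paper: take an $\ell_1$ well-conditioned basis $U$ for the column span of $M$, establish the single high-probability event bounding $\sum_i \|SU_i\|_1$ via the clipped-Cauchy argument of Lemma~\ref{lem:rankr_B_l1_no_dilation}, and then bound each column $\|SUx\|_1 \lesssim \|x\|_\infty \log(d')\sum_i\|U_i\|_1 \lesssim r\log r\,\|Ux\|_1$ using the well-conditioned basis properties. Your explicit remark that the event depends only on $\|SU_1\|_1,\dots,\|SU_{\wt d}\|_1$ and hence needs no union bound over the $d$ columns is a useful clarification that the paper leaves implicit.
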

\begin{proof}
  Let $U\in\mathbb{R}^{O(r)}$ be the well-conditioned basis of $M$. Then each column of $M$ can be expressed by $Ux$ for some $x$.
  We then follow the same proof as that of Lemma~\ref{lem:rankr_B_l1_no_dilation}.
\end{proof}

\subsubsection{$\poly(k,r)$-approximation for rank-$r$ matrix $B$}

\begin{algorithm}[h]\caption{$\poly(r,k)$-approximation Algorithm for Rank-$r$ matrix $B$}
\begin{algorithmic}[1]
\Procedure{\textsc{L1LowRankApproxB}}{$U_B,V_B,n,d,k,r$} \Comment{Theorem \ref{thm:rank_r_approx_polyr_B}}
\State Set $s\leftarrow \wt{O}(r), r'\leftarrow \wt{O}(r)$, $t_1\leftarrow \wt{O}(r)$, $t_2\leftarrow \wt{O}(r)$.
\State Choose dense Cauchy matrices $S\in \mathbb{R}^{s\times n}$, $R\in \mathbb{R}^{d\times r'}$, $T_1\in \mathbb{R}^{t_1\times n}$, $T_2\in \mathbb{R}^{d\times t_2}$.
\State Compute $S\cdot U_B \cdot V_B$, $ U_B \cdot V_B \cdot R$ and $T_1\cdot U_B \cdot V_B \cdot T_2$.
\State Compute $XY = \arg\min_{X,Y}\| T_1 U_B V_B R XY S U_B V_B T_2 - T_1 U_B V_B T_2 \|_F$.
\State \Return $U_B V_B RX,YS U_B V_B$.
\EndProcedure
\end{algorithmic}
\end{algorithm}

\begin{theorem}\label{thm:rank_r_approx_polyr_B}
Given a factorization of a $\rank$-$r$ matrix $B=U_BV_B\in \mathbb{R}^{n\times d}$, where $U_B\in \mathbb{R}^{n\times r}, V_B\in \mathbb{R}^{r\times d}$, for any $1\leq k\leq r$ there exists an algorithm which takes $ (n+d) \cdot \poly(k)$ time to output two matrices $U\in \mathbb{R}^{n\times k} , V\in \mathbb{R}^{k\times d}$ such that
\begin{equation*}
\| UV - B \|_1 \leq \poly(r) \underset{\rank-k~ B_k}{\min}\| B_k - B\|_1,
\end{equation*}
holds with probability $9/10$.
\end{theorem}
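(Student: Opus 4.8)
The plan is to run the same sketch-and-reduce strategy used in the proofs of Theorem~\ref{thm:input_sparsity_algorithm} and Lemma~\ref{lem:solve_ARXYSA}, but to exploit that $B=U_BV_B$ has rank $r$. The point is that every error matrix to which a Cauchy sketch is applied then has rank $O(r)$, so Lemma~\ref{lem:rankr_B_l1_no_dilation} and Lemma~\ref{lem:rankr_B_l1_no_dilation_general} give only an $O(r\log r)$ dilation (independent of $n$ and $d$), Lemma~\ref{lem:rankr_B_l1_no_contraction} gives no contraction using only $\wt O(r)$ sketch rows, and no $\log n$ or $\log d$ factor ever appears; the total loss is $\poly(r)$. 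Throughout write $\OPT_B=\min_{\rank-k~B_k}\|B_k-B\|_1$ and fix an optimal factorization $U^*V^*$ with $U^*\in\mathbb{R}^{n\times k}$, $V^*\in\mathbb{R}^{k\times d}$.

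\emph{Left and right sketches.} Take $S$ a dense Cauchy matrix with $s=\wt O(r)$ rows, rescaled by $\Theta(1/s)$. Since the columns of $U^*V-B$ lie in a subspace of dimension at most $k+r=O(r)$, Lemma~\ref{lem:rankr_B_l1_no_contraction} (no contraction, for all $V$), Lemma~\ref{lem:rankr_B_l1_no_dilation} (dilation $O(r\log r)$ for $U^*V^*$), and Lemma~\ref{lem:general_sketch_SUV} together imply that any $V'$ which is a $c$-approximate $\ell_1$-minimizer of $\min_V\|SU^*V-SB\|_1$ obeys $\|U^*V'-B\|_1\le c\cdot O(r\log r)\cdot\OPT_B$. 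Choosing $V'=(SU^*)^\dagger(SB)$, which by Claim~\ref{cla:ell2_relax_ell1_regression} is a $\sqrt s$-approximate $\ell_1$-minimizer, and noting that $V'\in\mathrm{rowspan}(SB)=\mathrm{rowspan}(SU_BV_B)$, yields
$$\min_{\rank-k~X\in\mathbb{R}^{n\times s}}\|X\,(SU_BV_B)-B\|_1\le\poly(r)\,\OPT_B.$$
Writing the optimal $X$ as $X_1Z^*$ with $Z^*\in\mathbb{R}^{k\times s}$ gives $\min_{U\in\mathbb{R}^{n\times k}}\|U\,(Z^*SU_BV_B)-B\|_1\le\poly(r)\,\OPT_B$. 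Now repeat the argument on the transpose: let $R$ be a dense Cauchy matrix with $r'=\wt O(r)$ columns, observe that the rows of $U\,(Z^*SU_BV_B)-B$ lie in $\mathrm{rowspan}(B)$ of dimension $r$, so by Claim~\ref{cla:ell2_relax_ell1_regression} applied row by row the matrix $\widehat U:=(BR)\big((Z^*SU_BV_B)R\big)^\dagger$ is a $\sqrt{r'}$-approximate $\ell_1$-minimizer of $\min_U\|(U\,Z^*SU_BV_B-B)R\|_1$, and Lemma~\ref{lem:general_sketch_SUV} (applied to the transpose) removes $R$ at the cost of a further $O(r\log r)$ factor. Since the columns of $\widehat U$ lie in $\mathrm{colspan}(BR)=\mathrm{colspan}(U_BV_BR)$, we obtain
$$\min_{X\in\mathbb{R}^{r'\times k},\,Y\in\mathbb{R}^{k\times s}}\big\|U_BV_BR\,X\,Y\,SU_BV_B-B\big\|_1\le\poly(r)\,\OPT_B.$$

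\emph{Solving the bilinear problem and accounting.} It remains to compute a near-optimal $(X,Y)$. Apply a left dense Cauchy sketch $T_1$ with $t_1=\wt O(r)$ rows and a right dense Cauchy sketch $T_2$ with $t_2=\wt O(r)$ columns: the columns of $U_BV_BRXYSU_BV_B-B$ lie in $\mathrm{colspan}(B)$, and after $T_1$ its rows lie in $\mathrm{rowspan}(B)$, so both error matrices have rank $O(r)$ and Lemma~\ref{lem:general_sketch_T1BXYCT2} reduces the task, up to a $\poly(r)$ factor, to $\min_{X,Y}\|T_1U_BV_BRXYSU_BV_BT_2-T_1BT_2\|_1$. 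Replacing the $\ell_1$ objective by the Frobenius norm loses a further $\sqrt{t_1t_2}=\wt O(r)$ factor by Claim~\ref{cla:frobenius_relax_ell1_lowrank}, and the resulting rank-$k$ Frobenius problem is solved in closed form by Theorem~\ref{thm:reduce_to_frobenious}. All matrices occurring here are $\wt O(r)\times\wt O(r)$ once the products $SU_BV_B$, $U_BV_BR$ and $T_1U_BV_BT_2$ have been formed, which costs $(n+d)\poly(r)$ time because $U_B$ has $r$ columns and $V_B$ has $r$ rows; assembling the output factors $U_BV_BRX$ and $YSU_BV_B$ costs $(n+d)\poly(r)$ more. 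A union bound over the $O(1)$ sketch-failure events (each of probability $\le 0.001$) leaves success probability $\ge 9/10$. This yields the claimed $\poly(r)$ approximation in $(n+d)\poly(r)$ time.

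\emph{Main obstacle.} The only real work is the subspace bookkeeping at the three stages: one must verify that every matrix fed to a Cauchy sketch has rank $O(r)$ (so the dilation from Lemma~\ref{lem:rankr_B_l1_no_dilation}/Lemma~\ref{lem:rankr_B_l1_no_dilation_general} is $O(r\log r)$ rather than polynomial in $d$ or $n$) and that the ``solution subspace'' of each regression has dimension $O(r)$ (so the no-contraction bound of Lemma~\ref{lem:rankr_B_l1_no_contraction} needs only $\wt O(r)$ sketch rows or columns), and then that the three reductions compose correctly through Lemma~\ref{lem:general_sketch_SUV} and Lemma~\ref{lem:general_sketch_T1BXYCT2}, so that the $\poly(r)$ factors merely multiply and the constantly many failure probabilities union-bound below $1/10$. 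Everything else --- the $\ell_2$ and Frobenius relaxations of Claim~\ref{cla:ell2_relax_ell1_regression} and Claim~\ref{cla:frobenius_relax_ell1_lowrank}, and the closed-form Frobenius solve of Theorem~\ref{thm:reduce_to_frobenious} --- is routine given the earlier lemmas.
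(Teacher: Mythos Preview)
Your proposal is correct and follows essentially the same approach as the paper's proof. Both arguments use dense Cauchy sketches $S,R,T_1,T_2$ of size $\wt O(r)$, exploit that every residual lives in a rank-$O(r)$ subspace so that the dilation is $O(r\log r)$ (via Lemma~\ref{lem:rankr_B_l1_no_dilation_general}) rather than $O(\log d)$ or $O(\log n)$, pass through the $\ell_2$/Frobenius relaxations of Claims~\ref{cla:ell2_relax_ell1_regression} and~\ref{cla:frobenius_relax_ell1_lowrank}, and finish with the closed-form solve of Theorem~\ref{thm:reduce_to_frobenious}; the paper packages the final $T_1,T_2$ step as a separate lemma (Lemma~\ref{lem:solve_BRXYSB}) but the content is the same.
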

\begin{proof}
We define
\begin{align*}
\OPT = \underset{\rank-k~ B_k}{\min}\| B_k - B\|_1.
\end{align*}
Choose $S\in \mathbb{R}^{s\times n}$ to be a dense Cauchy transform matrix with $s=O(r\log r)$. 
Using Lemma~\ref{lem:rankr_B_l1_no_dilation_general}, Lemma~\ref{lem:dense_cauchy_l1_k_subspace_general}, and combining with Equation~(\ref{eq:existance_result}), we have
\begin{align*}
\underset{U\in \mathbb{R}^{n\times k}, Z\in \mathbb{R}^{k\times s}}{\min} \| UZSB - B\|_1 \leq \sqrt{s}O(r\log r) \OPT=O(r^{1.5}\log^{1.5} r)\OPT.
\end{align*}
Let $\alpha_s=O(r^{1.5}\log^{1.5} r)$.

We define $U^*,Z^* = \underset{U\in\mathbb{R}^{n\times k},Z\in\mathbb{R}^{k\times d}}{\arg\min} \| U Z S B - B\|_1$. For the fixed $Z^*\in \mathbb{R}^{k\times s}$, choose a dense Cauchy transform matrix $R\in \mathbb{R}^{d\times r'}$ with $r'=O(r\log r)$ and sketch on the right of $ (U Z SB - B )$. We obtain the minimization problem, $\underset{U\in \mathbb{R}^{n\times k}}{\min} \| UZ^* SBR - BR \|_1$.

Define $\widehat{U}^j =  B^j R ( (Z^*SB) R)^\dagger \in \mathbb{R}^k, \forall j\in [n]$. Then  $\widehat{U} = B R ( (Z^*SB) R)^\dagger \in \mathbb{R}^{n\times k}$. Due to Claim \ref{cla:ell2_relax_ell1_regression},
\begin{align*}
\sum_{j=1}^n  \| B^j R ( (Z^*SB) R )^\dagger Z^*SBR - B^j R\|_1 \leq O(\sqrt{r'}) \sum_{j=1}^n \min_{ U^j \in \mathbb{R}^k } \| U^j Z^*S B R - B^jR\|_1,
\end{align*}
which is equivalent to
\begin{align*}
 \| B  R ( (Z^*S B) R )^\dagger Z^*S B R - B R \|_1 &\leq O(\sqrt{r'})  \min_{ U\in \mathbb{R}^{n\times k} }\| U Z^*S B R - B R\|_1,
\end{align*}
where $BR$ is an $n \times r'$ matrix and $SB$ is an $s \times d$ matrix. Both of them can be computed in $(n+d)\poly(r)$ time.

Using Lemma \ref{lem:general_sketch_SUV}, Lemma \ref{lem:dense_cauchy_l1_k_subspace_general}, Lemma \ref{lem:rankr_B_l1_no_dilation_general}, we obtain,
\begin{align*}
 \| B  R ( (Z^*SB) R )^\dagger Z^*S B  - B \|_1 &\leq O(\sqrt{r'} \alpha_{r'})  \min_{ U\in \mathbb{R}^{n\times k} }\| U Z^*S B  - B \|_1
\end{align*}
where $\alpha_{r'} = \wt{O}(r)$.

We define $X^*\in \mathbb{R}^{r\times k}$, $Y^*\in \mathbb{R}^{k\times s}$,
\begin{equation*}
X^*, Y^* = \underset{X \in \mathbb{R}^{r' \times k}, Y \in \mathbb{R}^{k\times s} }{\arg\min} \| BR X Y SB -B\|_1.
\end{equation*}
Then,
\begin{align*}
\| B RX^* Y^*SB - B \|_1 & \leq \| B  R ( (Z^*SB) R )^\dagger Z^*SB  - B \|_1 \\
& \leq O(\sqrt{r'} \alpha_{r'})  \min_{ U\in \mathbb{R}^{n\times k} }\| U Z^*SB  - B \|_1 \\
& =  O(\sqrt{r'} \alpha_{r'})  \min_{ U\in \mathbb{R}^{n\times k}, Z\in \mathbb{R}^{k\times s} }\| U ZSB  - B \|_1 \\
& \leq O(\sqrt{r'}\alpha_{r'} \alpha_s)\OPT.
\end{align*}
It means that $B RX$, $YSB$ gives an $O(\alpha_{r'} \alpha_s \sqrt{r'})$-approximation to the original problem.

Thus it suffices to use Lemma \ref{lem:solve_BRXYSB} to solve
\begin{equation*}
 \underset{X \in \mathbb{R}^{r \times k}, Y \in \mathbb{R}^{k\times s} }{\min} \| B R X Y S B -B\|_1,
\end{equation*}
by losing an extra $\poly(r)$ approximation ratio. Therefore, we finish the proof.
\end{proof}

\begin{lemma}\label{lem:solve_BRXYSB}
Suppose we are given $S\in \mathbb{R}^{s \times n}$, $R\in \mathbb{R}^{d\times r'}$, and a factorization of a $\rank$-$r$ matrix $B=U_BV_B\in \mathbb{R}^{n\times d}$, where $U_B\in \mathbb{R}^{n\times r}, V_B\in \mathbb{R}^{r\times d}$. Then for any $1\leq k\leq r$, there exists an algorithm which takes $(n+d)\poly(r,r',s) $ time to output two matrices $X' \in \mathbb{R}^{r' \times k}, Y'\in\mathbb{R}^{k\times s}$ such that
\begin{align*}
\| BR X' \cdot Y' SB - B\|_1 \leq \poly(r) \min_{X \in \mathbb{R}^{r' \times k}, Y\in\mathbb{R}^{k\times s}} \| BR X Y SB  -  B \|_1
\end{align*}
holds with probability at least $.999$.
\end{lemma}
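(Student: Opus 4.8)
The plan is to follow the template of the proof of Lemma~\ref{lem:solve_ARXYSA} essentially verbatim, substituting the given matrices $SB$ and $BR$ for the Cauchy sketches $SA$ and $AR$ there, and exploiting the one new ingredient: since $B=U_BV_B$ has rank $r$, every column span that appears on the left of the regression lies inside the $r$-dimensional space $\mathrm{colspan}(U_B)$, and every row span that appears on the right lies inside the $r$-dimensional space $\mathrm{rowspan}(V_B)$. Because of this, all the sketch dimensions, and hence the final approximation factor, will depend on $r$ alone, not on $r'$, $s$, or $d$; only the running time will see $r'$ and $s$, through matrix products involving $R$ and $S$.

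First I would reduce $\min_{X,Y}\|BRXYSB-B\|_1$ to a $\poly(r)$-size instance by sketching on both sides with dense Cauchy matrices $T_1\in\mathbb{R}^{t_1\times n}$ ($t_1=O(r\log r)$ rows, rescaled by $\Theta(1/t_1)$) and $T_2\in\mathbb{R}^{d\times t_2}$ ($t_2=O(r\log r)$ columns, rescaled by $\Theta(1/t_2)$). For the left sketch: for every $X\in\mathbb{R}^{r'\times k}$, $Y\in\mathbb{R}^{k\times s}$ the residual $BRXYSB-B$ has all its columns in $\mathrm{colspan}(U_B)$, so the no-contraction bound for an $r$-dimensional subspace (Lemma~\ref{lem:rankr_B_l1_no_contraction}, equivalently Lemma~\ref{lem:dense_cauchy_l1_k_subspace_general} applied with a basis of $\mathrm{colspan}(U_B)$) gives, with high constant probability after adjusting constants, $\|T_1(BRXYSB-B)\|_1\ge\|BRXYSB-B\|_1$ for all $X,Y$ simultaneously; meanwhile the optimal residual $BRX^*Y^*SB-B$ has rank $O(r)$, so Lemma~\ref{lem:rankr_B_l1_no_dilation_general} gives $\|T_1(BRX^*Y^*SB-B)\|_1\le O(r\log r)\|BRX^*Y^*SB-B\|_1$. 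Applying the same two facts to the transposed problem handles the right sketch $T_2$: the rows of $T_1BRXYSB-T_1B$ always lie in $\mathrm{rowspan}(V_B)$ (again $r$-dimensional, since $SB=SU_BV_B$ and $T_1B=T_1U_BV_B$ both have row space inside that of $V_B$), and the optimal residual after left-sketching has rank $O(r)$. Feeding these into Lemma~\ref{lem:general_sketch_T1BXYCT2} (the two-sided version of Lemma~\ref{lem:general_sketch_SUV}), any $c$-approximate $\ell_1$-minimizer $(X',Y')$ of $\min_{X,Y}\|T_1BRXYSBT_2-T_1BT_2\|_1$ is a $c\cdot\poly(r)$-approximate $\ell_1$-minimizer of $\min_{X,Y}\|BRXYSB-B\|_1$.

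Next I would solve the small instance via a Frobenius-norm relaxation. The matrix $T_1BRXYSBT_2-T_1BT_2$ has size $t_1\times t_2=\poly(r)$, so Claim~\ref{cla:frobenius_relax_ell1_lowrank} shows a $\|\cdot\|_F$-optimal rank-$k$ solution is a $\sqrt{t_1t_2}=O(r\log r)$-approximate $\ell_1$-solution of the small problem. Writing $\widetilde B:=T_1BR\in\mathbb{R}^{t_1\times r'}$, $\widetilde C:=SBT_2\in\mathbb{R}^{s\times t_2}$, $\widetilde A:=T_1BT_2\in\mathbb{R}^{t_1\times t_2}$, the problem $\min_{\rank-k~Z}\|\widetilde A-\widetilde BZ\widetilde C\|_F$ is exactly the generalized rank-constrained approximation of Theorem~\ref{thm:reduce_to_frobenious} (with $\widetilde A,\widetilde B,\widetilde C$ in the roles of $A,B,C$ there), whose minimizer $Z^\star\in\mathbb{R}^{r'\times s}$ has rank at most $k$ and a closed form in terms of the SVDs of $\widetilde B$ and $\widetilde C$; I factor $Z^\star=X'Y'$ with $X'\in\mathbb{R}^{r'\times k}$, $Y'\in\mathbb{R}^{k\times s}$ via a rank-$k$ SVD (padding with zeros if $\rank Z^\star<k$). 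Chaining the bounds gives $\|BRX'Y'SB-B\|_1\le\poly(r)\cdot\min_{X,Y}\|BRXYSB-B\|_1$ with probability at least $.999$ after a union bound over the $O(1)$ sketch events. For the running time, because $B$ is presented factored, $\widetilde B=(T_1U_B)(V_BR)$, $\widetilde C=(SU_B)(V_BT_2)$, and $\widetilde A=(T_1U_B)(V_BT_2)$ can all be formed in $(n+d)\poly(r,r',s)$ time, after which the SVDs and pseudoinverses act on $\poly(r,r',s)$-size matrices, for a total of $(n+d)\poly(r,r',s)$.

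The main obstacle is the uniform (over all $X,Y$) no-contraction step: establishing that a single dense Cauchy sketch with only $O(r\log r)$ rows does not shrink $\|BRXYSB-B\|_1$ for \emph{every} pair $(X,Y)$, not merely at the optimum. This is precisely what forces the argument through the fixed $r$-dimensional subspaces $\mathrm{colspan}(U_B)$ and $\mathrm{rowspan}(V_B)$ and requires checking that the subspace-embedding lemmas of Section~\ref{sec:l1} apply with these rank-$r$ subspaces rather than the rank-$k$ ones used elsewhere; granting that, the collapse of the right-hand row spaces to dimension $r$ (which is what keeps the approximation $\poly(r)$ independent of $r'$ and $s$) and the bookkeeping of the three transfer factors $\sqrt{t_1t_2}$, the left dilation, and the right dilation, are routine.
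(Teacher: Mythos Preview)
Your proposal is correct and follows essentially the same route as the paper's proof: sketch on both sides with dense Cauchy matrices $T_1,T_2$ of $\tilde O(r)$ rows/columns, invoke the no-contraction guarantee on the $r$-dimensional column/row spans of $U_B$ and $V_B$ (Lemma~\ref{lem:dense_cauchy_l1_k_subspace_general}) together with the rank-$r$ dilation bound (Lemma~\ref{lem:rankr_B_l1_no_dilation_general}), feed these into Lemma~\ref{lem:general_sketch_T1BXYCT2}, and finish via the Frobenius relaxation (Claim~\ref{cla:frobenius_relax_ell1_lowrank}) and Theorem~\ref{thm:reduce_to_frobenious}. Your explicit identification of $\mathrm{colspan}(U_B)$ and $\mathrm{rowspan}(V_B)$ as the relevant fixed $r$-dimensional subspaces is exactly the mechanism the paper's lemma citations encode, so there is no substantive difference.
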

\begin{proof}
Choosing dense Cauchy matrices $T_1 \in \mathbb{R}^{t_1 \times n},T_2^\top \in \mathbb{R}^{t_2 \times d}$ to sketch on both sides, we get the problem
\begin{equation}\label{eq:TBRXYSBT_minus_TBT_ell1}
\min_{X \in \mathbb{R}^{r'\times k}, Y \in \mathbb{R}^{k\times s} } \| T_1 BR X Y SB T_2 - T_1 B T_2\|_1,
\end{equation}
where $t_1=\wt{O}(r)$ and $t_2=\wt{O}(r)$.

Define $X',Y'$ to be the optimal solution of
\begin{equation*}
\min_{X \in \mathbb{R}^{r' \times k}, Y\in\mathbb{R}^{k\times s}} \| T_1 BR X Y SB T_2 - T_1 B T_2\|_F.
\end{equation*}
Define $\wt{X}, \wt{Y}$ to the be the optimal solution of
\begin{equation*}
\min_{X \in \mathbb{R}^{r' \times k}, Y\in\mathbb{R}^{k\times s}} \| BR X Y SB -B\|_1.
\end{equation*}

By Claim~\ref{cla:frobenius_relax_ell1_lowrank},
\begin{align*}
\| T_1 BR X' Y' SB T_2 - T_1 B T_2\|_1\leq \sqrt{t_1t_2}\min_{X \in \mathbb{R}^{r' \times k}, Y\in\mathbb{R}^{k\times s}} \| T_1 BR X Y SB T_2 - T_1 B T_2\|_1.
\end{align*}

By Lemma~\ref{lem:general_sketch_T1BXYCT2}, Lemma~\ref{lem:rankr_B_l1_no_dilation_general} and Lemma~\ref{lem:dense_cauchy_l1_k_subspace_general}, we have
\begin{align*}
\| BR X' \cdot Y' S B - B \|_1\leq\sqrt{t_1 t_2} \cdot \wt{O}(r^2) \|  BR\wt{X} \cdot \wt{Y} S B - B \|_1.
\end{align*}

This completes the proof.
\end{proof}

\section{Contraction and Dilation Bound for $\ell_1$} \label{sec:l1}
This section presents the essential lemmas for $\ell_1$-low rank approximation. Section~\ref{sec:definitions_of_con_dil} gives some basic definitions. Section~\ref{sec:properties_of_con_dil} shows some properties implied by contraction and dilation bounds. Section~\ref{sec:dense_cauchy_l1_no_dilation} presents the no dilation lemma for a dense Cauchy transform. Section~\ref{sec:dense_cauchy_l1_no_contraction} and \ref{sec:dense_cauchy_l1_k_subspace} presents the no contraction lemma for dense Cauchy transforms. Section~\ref{sec:sparse_cauchy_l1} and \ref{sec:lewis_weights_l1} contains the results for sparse Cauchy transforms and Lewis weights.

\subsection{Definitions}\label{sec:definitions_of_con_dil}

\begin{definition}\label{def:no_dilation_of_a_sketch}
Given a matrix $M\in\mathbb{R}^{n\times d}$, if matrix $S\in\mathbb{R}^{m\times n}$ satisfies
\begin{align*}
\|SM\|_1\leq c_1\|M\|_1,
\end{align*}
then $S$ has at most $c_1$-dilation on $M$.
\end{definition}

\begin{definition}\label{def:no_contraction_for_general_vectors}
Given a matrix $U\in\mathbb{R}^{n\times k}$, if matrix $S\in\mathbb{R}^{m\times n}$ satisfies
\begin{align*}
\forall x\in\mathbb{R}^{k}, \|SUx\|_1\geq \frac1{c_2}\|Ux\|_1,
\end{align*}
then $S$ has at most $c_2$-contraction on $U$.
\end{definition}

\begin{definition}\label{def:no_contraction_of_a_sketch}
Given matrices $U\in\mathbb{R}^{n\times k},A\in\mathbb{R}^{n\times d}$, let $V^*=\arg \min_{V\in\mathbb{R}^{k\times d}}\|UV-A\|_1$. If matrix $S\in\mathbb{R}^{m\times n}$ satisfies
\begin{align*}
\forall V\in\mathbb{R}^{k\times d}, \|SUV-SA\|_1\geq \frac{1}{c_3}\|UV-A\|_1-c_4\|UV^*-A\|_1,
\end{align*}
then $S$ has at most $(c_3,c_4)$-contraction on $(U,A)$.
\end{definition}

\begin{definition}\label{def:l1_subspace_embedding}
A $(c_5,c_6)$ $\ell_1$-subspace embedding for the column space of an $n\times k$ matrix $U$ is a matrix $S\in\mathbb{R}^{m\times n}$ for which all $x\in\mathbb{R}^k$
\begin{align*}
\frac{1}{c_5}\|Ux\|_1\leq\|SUx\|_1\leq c_6\|Ux\|_1.
\end{align*}
\end{definition}

\begin{definition}\label{def:sketch_for_l1_multiple_regression}
Given matrices $U\in\mathbb{R}^{n\times k},A\in\mathbb{R}^{n\times d}$, let $V^*=\arg\min_{V\in\mathbb{R}^{k\times d}}\|UV-A\|_1$. Let $S\in\mathbb{R}^{m\times n}.$ If for all $c\geq 1$, and if for any $\wh{V}\in\mathbb{R}^{k\times d}$ which satisfies
\begin{align*}
\|SU\wh{V}-SA\|_1\leq c\cdot \min_{V\in\mathbb{R}^{k\times d}}\|SUV-SA\|_1,
\end{align*}
it holds that
\begin{align*}
\|U\wh{V}-A\|_1\leq c\cdot c_7\cdot\|UV^*-A\|_1,
\end{align*}
then $S$ provides a $c_7$-multiple-regression-cost preserving sketch of $(U,A)$.
\end{definition}

\begin{definition}
Given matrices $L\in\mathbb{R}^{n\times m_1},N\in\mathbb{R}^{m_2\times d},A\in\mathbb{R}^{n\times d},k\geq 1$, let 
\begin{align*}
X^*=\arg\min_{\rank-k\ X}\|LXN-A\|_1.
\end{align*}
Let $S\in\mathbb{R}^{m\times n}.$ If for all $c\geq 1$, and if for any $\rank-k\ \wh{X}\in\mathbb{R}^{m_1\times m_2}$ which satisfies
\begin{align*}
\|SL\wh{X}N-SA\|_1\leq c\cdot \min_{\rank-k\ X}\|SLXN-SA\|_1,
\end{align*}
it holds that
\begin{align*}
\|L\wh{X}N-A\|_1\leq c\cdot c_8\cdot\|LX^*N-A\|_1,
\end{align*}
then $S$ provides a $c_8$-restricted-multiple-regression-cost preserving sketch of $(L,N,A,k)$.
\end{definition}

\subsection{Properties}\label{sec:properties_of_con_dil}

\begin{lemma}\label{lem:no_dialation_and_no_contraction_imply_general_no_contraction}
Given matrices $A\in\mathbb{R}^{n\times d},U\in\mathbb{R}^{n\times k}$, let $V^*=\arg \min_{V\in\mathbb{R}^{k\times d}}\|UV-A\|_1$. If $S\in\mathbb{R}^{m\times n}$ has at most $c_1$-dilation on $UV^*-A$, i.e.,
\begin{align*}
\|S(UV^*-A)\|_1\leq c_1\|UV^*-A\|_1,
\end{align*}
and it has at most $c_2$-contraction on $U$, i.e.,
\begin{align*}
\forall x\in\mathbb{R}^{k}, \|SUx\|_1\geq \frac{1}{c_2}\|Ux\|_1,
\end{align*}
 then $S$ has at most $(c_2,c_1+\frac{1}{c_2})$-contraction on $(U,A)$, i.e.,
\begin{align*}
\forall V\in\mathbb{R}^{k\times d}, \|SUV-SA\|_1\geq \frac{1}{c_2}\|UV-A\|_1-(c_1+\frac{1}{c_2})\|UV^*-A\|_1,
\end{align*}
\end{lemma}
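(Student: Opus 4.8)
The plan is to prove this by the triangle inequality together with the two hypotheses. Fix an arbitrary $V\in\mathbb{R}^{k\times d}$. First I would split $\|SUV-SA\|_1$ using the triangle inequality applied to the decomposition $SUV-SA = (SUV - SUV^*) + (SUV^* - SA)$, which gives
\begin{align*}
\|SUV-SA\|_1 \geq \|SUV-SUV^*\|_1 - \|SUV^*-SA\|_1.
\end{align*}
The second term is immediately controlled by the $c_1$-dilation hypothesis applied to the matrix $M = UV^*-A$: indeed $\|SUV^*-SA\|_1 = \|S(UV^*-A)\|_1 \leq c_1\|UV^*-A\|_1$.

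For the first term, I would apply the $c_2$-contraction hypothesis on $U$. Each column of $UV - UV^*$ is of the form $U(V_i - V_i^*) = Ux$ for $x = V_i - V_i^* \in \mathbb{R}^k$, so summing the contraction bound $\|SUx\|_1 \geq \frac{1}{c_2}\|Ux\|_1$ over the $d$ columns yields $\|SUV - SUV^*\|_1 \geq \frac{1}{c_2}\|UV - UV^*\|_1$. Then a second use of the triangle inequality, now in the original (un-sketched) space on $UV - UV^* = (UV - A) - (UV^* - A)$, gives $\|UV - UV^*\|_1 \geq \|UV-A\|_1 - \|UV^*-A\|_1$. Combining these,
\begin{align*}
\|SUV-SA\|_1 \geq \frac{1}{c_2}\bigl(\|UV-A\|_1 - \|UV^*-A\|_1\bigr) - c_1\|UV^*-A\|_1 = \frac{1}{c_2}\|UV-A\|_1 - \Bigl(c_1 + \frac{1}{c_2}\Bigr)\|UV^*-A\|_1,
\end{align*}
which is exactly the claimed $(c_2, c_1 + \frac{1}{c_2})$-contraction on $(U,A)$, per Definition~\ref{def:no_contraction_of_a_sketch}.

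This argument is entirely elementary; there is no real obstacle, since both hypotheses plug in directly and the only tools are two applications of the triangle inequality (one in the sketched space, one in the original space) plus column-wise summation to lift the single-vector contraction bound to the matrix setting. The one point to state carefully is that the contraction hypothesis is a \emph{for all $x$} statement, so applying it column-by-column to $V - V^*$ is legitimate and preserves the bound after summing over columns; I would make this explicit so the reduction from the vector form in Definition~\ref{def:no_contraction_for_general_vectors} to the matrix form is clean.
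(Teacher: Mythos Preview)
Your proposal is correct and matches the paper's proof essentially line for line: the paper also applies the triangle inequality in the sketched space to split off $\|SUV^*-SA\|_1$, bounds that term by $c_1$-dilation, applies the $c_2$-contraction column-by-column to $\|SU(V-V^*)\|_1$, and then uses the triangle inequality again in the original space. There is nothing to add.
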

\begin{proof}
Let $A\in\mathbb{R}^{n\times d},U\in\mathbb{R}^{n\times k},S\in\mathbb{R}^{m\times n}$ be the same as that described in the lemma. Then $\forall V\in\mathbb{R}^{k\times d}$
\begin{align*}
\|SUV-SA\|_1&\geq \|SUV-SUV^*\|_1-\|SUV^*-SA\|_1\\
&\geq \|SUV-SUV^*\|_1-c_1\|UV^*-A\|_1\\
&= \|SU(V-V^*)\|_1-c_1\|UV^*-A\|_1\\
&= \sum_{j=1}^d\|SU(V-V^*)_j\|_1-c_1\|UV^*-A\|_1\\
&\geq \sum_{j=1}^d \frac{1}{c_2} \|U(V-V^*)_j\|_1-c_1\|UV^*-A\|_1\\
&=\frac{1}{c_2}\|UV-UV^*\|_1-c_1\|UV^*-A\|_1\\
&\geq \frac{1}{c_2}\|UV-A\|_1-\frac{1}{c_2}\|UV^*-A\|_1-c_1\|UV^*-A\|_1\\
&= \frac{1}{c_2}\|UV-A\|_1-\left((\frac{1}{c_2}+c_1)\|UV^*-A\|_1\right).
\end{align*}
The first inequality follows by the triangle inequality. The second inequality follows since $S$ has at most $c_1$ dilation on $UV^*-A$. The third inequality follows since $S$ has at most $c_2$ contraction on $U$. The fourth inequality follows by the triangle inequality.
\end{proof}

\begin{lemma} \label{lem:general_sketch_SUV}
Given matrices $A\in\mathbb{R}^{n\times d},U\in\mathbb{R}^{n\times k}$, let $V^*=\arg \min_{V\in\mathbb{R}^{k\times d}}\|UV-A\|_1$. If $S\in\mathbb{R}^{m\times n}$ has at most $c_1$-dilation on $UV^*-A$, i.e.,
 \begin{align*}
 \|S(UV^*-A)\|_1\leq c_1\|UV^*-A\|_1,
 \end{align*}
 and has at most $c_2$-contraction on $U$, i.e., 
 \begin{align*}
 \forall x\in\mathbb{R}^{k}, \|SUx\|_1\geq \frac{1}{c_2}\|Ux\|_1,
 \end{align*}
 then $S$ provides a $(2c_1c_2+1)$-multiple-regression-cost preserving sketch of $(U,A)$, i.e.,
 for all $c\geq 1$, for any $\wh{V}\in\mathbb{R}^{k\times d}$ which satisfies
\begin{align*}
\|SU\wh{V}-SA\|_1\leq c\cdot \min_{V\in\mathbb{R}^{k\times d}}\|SUV-SA\|_1,
\end{align*}
it has
\begin{align*}
\|U\wh{V}-A\|_1\leq c\cdot (2c_1c_2+1)\cdot\|UV^*-A\|_1,
\end{align*}
\end{lemma}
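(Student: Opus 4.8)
The plan is to combine the two hypotheses (bounded dilation on $UV^*-A$ and bounded contraction on the column space of $U$) with the optimality of $\widehat V$ in the sketched problem, via two applications of the triangle inequality. This is essentially the argument already sketched in the technical overview: the only matrices $V$ whose sketched cost is small are those providing a good approximation in the original space.

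First I would observe that Lemma~\ref{lem:no_dialation_and_no_contraction_imply_general_no_contraction} already shows, under exactly these hypotheses, that $S$ has $(c_2, c_1 + 1/c_2)$-contraction on $(U,A)$, i.e. for all $V \in \mathbb{R}^{k \times d}$,
\begin{align*}
\|SUV - SA\|_1 \geq \frac{1}{c_2}\|UV - A\|_1 - \left(c_1 + \frac{1}{c_2}\right)\|UV^* - A\|_1.
\end{align*}
Apply this with $V = \widehat V$. Next, bound the left-hand side using that $\widehat V$ is a $c$-approximate minimizer of the sketched problem: $\|SU\widehat V - SA\|_1 \leq c \cdot \min_V \|SUV - SA\|_1 \leq c \cdot \|SUV^* - SA\|_1 \leq c \cdot c_1 \|UV^* - A\|_1$, where the last step is the dilation hypothesis (and $c \geq 1$). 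Chaining these two inequalities gives
\begin{align*}
\frac{1}{c_2}\|U\widehat V - A\|_1 \leq c\, c_1 \|UV^* - A\|_1 + \left(c_1 + \frac{1}{c_2}\right)\|UV^* - A\|_1,
\end{align*}
so that $\|U\widehat V - A\|_1 \leq \left(c\, c_1 c_2 + c_1 c_2 + 1\right)\|UV^* - A\|_1 \leq c(2c_1 c_2 + 1)\|UV^* - A\|_1$, using $c \geq 1$ and $c_1 c_2 \geq 1$ to absorb terms. This yields the claimed bound.

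There is no real obstacle here; the statement is a clean corollary of the already-established contraction/dilation lemma plus the defining property of a $c$-approximate minimizer. The one point requiring a little care is the bookkeeping of constants at the end — making sure that the additive $\left(c_1 + \frac1{c_2}\right)\|UV^*-A\|_1$ term, which does not carry a factor of $c$, can be folded into the final $c(2c_1c_2+1)$ bound; this works precisely because $c \geq 1$ and because we may assume $c_1, c_2 \geq 1$ (if not, the sketch trivially satisfies stronger guarantees and one replaces them by $\max(c_1,1), \max(c_2,1)$). I would also note that one could prove the lemma directly, without invoking Lemma~\ref{lem:no_dialation_and_no_contraction_imply_general_no_contraction}, by repeating its two-triangle-inequality computation inline; but citing it keeps the proof to a few lines.
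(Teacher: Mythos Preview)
Your proof is correct and is essentially identical to the paper's: both invoke Lemma~\ref{lem:no_dialation_and_no_contraction_imply_general_no_contraction}, then use that $\widehat V$ is a $c$-approximate sketched minimizer, then apply the dilation bound at $V^*$, and finally absorb constants using $c\geq 1$. One small remark: your final step does not actually need $c_1c_2\geq 1$, since $c\,c_1c_2 + c_1c_2 + 1 \leq c(2c_1c_2+1)$ is equivalent to $c_1c_2+1 \leq c(c_1c_2+1)$, which follows from $c\geq 1$ alone.
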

\begin{proof}
Let $S\in\mathbb{R}^{m\times n},A\in\mathbb{R}^{n\times d},U\in\mathbb{R}^{n\times k},V^*,\wh{V}\in\mathbb{R}^{k\times d},$ and $c$ be the same as stated in the lemma.
\begin{align*}
\|U\wh{V}-A\|_1&\leq c_2\|SU\wh{V}-SA\|_1+(1+c_1c_2)\|UV^*-A\|_1\\
&\leq c_2 c\min_{V\in\mathbb{R}^{k\times d}}\|SUV-SA\|_1+(1+c_1c_2)\|UV^*-A\|_1\\
&\leq c_2 c\|SUV^*-SA\|_1+(1+c_1c_2)\|UV^*-A\|_1\\
&\leq c_1c_2c\|UV^*-A\|_1+(1+c_1c_2)\|UV^*-A\|_1\\
&\leq c\cdot (1+2c_1c_2)\|UV^*-A\|_1.
\end{align*}
The first inequality follows by Lemma~\ref{lem:no_dialation_and_no_contraction_imply_general_no_contraction}. The second inequality follows by the guarantee of $\wh{V}$. The fourth inequality follows since $S$ has at most $c_1$-dilation on $UV^*-A$. The fifth inequality follows since $c\geq 1$.
\end{proof}

\begin{lemma}\label{lem:general_no_contraction_and_no_dialation_imply_restricted_regression_sketch}
Given matrices $L\in\mathbb{R}^{n\times m_1},N\in\mathbb{R}^{m_2\times d},A\in\mathbb{R}^{n\times d},k\geq 1$, let 
\begin{align*}
X^*=\arg\min_{\rank-k\ X}\|LXN-A\|_1.
\end{align*}
If $S\in\mathbb{R}^{m\times n}$ has at most $c_1$-dilation on $LX^*N-A$, i.e.,
\begin{align*}
\|S(LX^*N-A)\|_1\leq c_1\|LX^*N-A\|_1,
\end{align*}
and has at most $c_2$-contraction on $L$, i.e.,
\begin{align*}
\forall x\in\mathbb{R}^{m_1} \|SLx\|_1\geq \|Lx\|_1,
\end{align*}
then $S$ provides a $(2c_1c_2+1)$-restricted-multiple-regression-cost preserving sketch of $(L,N,A,k)$, i.e., for all $c\geq 1$, for any $\rank-k\ \wh{X}\in\mathbb{R}^{m_1\times m_2}$ which satisfies
\begin{align*}
\|SL\wh{X}N-SA\|_1\leq c\cdot \min_{\rank-k\ X}\|SLXN-SA\|_1,
\end{align*}
it holds that 
\begin{align*}
\|L\wh{X}N-A\|_1\leq c\cdot (2c_1c_2+1)\cdot\|LX^*N-A\|_1.
\end{align*}
\end{lemma}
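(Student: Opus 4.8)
The plan is to mirror the proof of Lemma~\ref{lem:general_sketch_SUV} almost verbatim, with $U$ replaced by $L$ and the free matrix $V$ replaced by the product $XN$, the only new feature being the rank-$k$ constraint on $X$, which turns out to be harmless. (I read the contraction hypothesis as \emph{at most $c_2$-contraction on $L$}, i.e.\ $\|SLx\|_1 \geq \frac{1}{c_2}\|Lx\|_1$ for all $x$, matching Definition~\ref{def:no_contraction_for_general_vectors} and the $c_2$ appearing in the conclusion.)

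First I would establish the restricted analogue of Lemma~\ref{lem:no_dialation_and_no_contraction_imply_general_no_contraction}: for every rank-$k$ matrix $X\in\mathbb{R}^{m_1\times m_2}$,
\begin{align*}
\|SLXN - SA\|_1 \geq \frac{1}{c_2}\|LXN - A\|_1 - \left(c_1 + \frac{1}{c_2}\right)\|LX^*N - A\|_1.
\end{align*}
This follows from $\|SLXN-SA\|_1 \geq \|SLXN - SLX^*N\|_1 - \|SLX^*N - SA\|_1$, bounding the subtracted term by $c_1\|LX^*N-A\|_1$ via the dilation hypothesis on $LX^*N-A$, and then expanding $\|SL(X-X^*)N\|_1$ column by column: with $((X-X^*)N)_j\in\mathbb{R}^{m_1}$ the $j$-th column, the contraction hypothesis on $L$ gives $\|SL((X-X^*)N)_j\|_1 \geq \frac{1}{c_2}\|L((X-X^*)N)_j\|_1$, and summing over $j$ recovers $\frac{1}{c_2}\|L(X-X^*)N\|_1 = \frac{1}{c_2}\|LXN - LX^*N\|_1 \geq \frac{1}{c_2}\left(\|LXN-A\|_1 - \|LX^*N-A\|_1\right)$ by one further triangle inequality. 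This is exactly the computation inside Lemma~\ref{lem:no_dialation_and_no_contraction_imply_general_no_contraction} with $XN$ in place of $V$.

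Then, given any rank-$k$ $\wh{X}$ with $\|SL\wh{X}N - SA\|_1 \leq c\cdot \min_{\rank-k\ X}\|SLXN - SA\|_1$, I would run the chain
\begin{align*}
\|L\wh{X}N - A\|_1 &\leq c_2\|SL\wh{X}N - SA\|_1 + (1 + c_1 c_2)\|LX^*N - A\|_1\\
&\leq c_2 c \min_{\rank-k\ X}\|SLXN - SA\|_1 + (1 + c_1 c_2)\|LX^*N - A\|_1\\
&\leq c_2 c \|SLX^*N - SA\|_1 + (1 + c_1 c_2)\|LX^*N - A\|_1\\
&\leq c_1 c_2 c \|LX^*N - A\|_1 + (1 + c_1 c_2)\|LX^*N - A\|_1\\
&\leq c(1 + 2c_1 c_2)\|LX^*N - A\|_1,
\end{align*}
where the first line rearranges the display above, the third line uses that $X^*$ is itself a legal rank-$k$ competitor in the sketched minimization, the fourth line applies the dilation hypothesis on $LX^*N-A$, and the last line uses $c\geq 1$. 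This yields the claimed $(2c_1c_2+1)$-restricted-multiple-regression-cost preserving guarantee.

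I do not expect a genuine obstacle here; the argument is routine bookkeeping. The only point that needs care is the column-by-column application of the contraction hypothesis — it is stated only for vectors of the form $Lx$, so one must expand $\|SL(X-X^*)N\|_1 = \sum_j \|SL((X-X^*)N)_j\|_1$ before invoking it, precisely as in Lemma~\ref{lem:no_dialation_and_no_contraction_imply_general_no_contraction}. It is also worth remarking explicitly that the rank-$k$ constraint never interferes: we only use that $X^*$ is rank $k$ (so it is admissible in the sketched min) and that $\wh{X}$ is rank $k$ (which is assumed), and no step requires that sums or differences of rank-$k$ matrices stay rank $k$.
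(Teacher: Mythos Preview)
Your proposal is correct and follows essentially the same route as the paper: first derive the lower bound $\|SLXN-SA\|_1 \geq \frac{1}{c_2}\|LXN-A\|_1 - (c_1+\frac{1}{c_2})\|LX^*N-A\|_1$ via triangle inequality, contraction on $L$, and dilation on $LX^*N-A$, then run the five-line chain exactly as you wrote it. Your explicit column-by-column expansion and your remark that the rank-$k$ constraint is only used to make $X^*$ admissible in the sketched minimum are both slightly more careful than the paper's presentation, but the argument is the same.
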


\begin{proof}
Let $S\in\mathbb{R}^{m\times n},L\in\mathbb{R}^{n\times m_1},\wh{X}\in\mathbb{R}^{m_1\times m_2},X^*\in\mathbb{R}^{m_1\times m_2},N\in\mathbb{R}^{m_2\times d},A\in\mathbb{R}^{n\times d}$, and $c\geq 1$ be the same as stated in the lemma.
\begin{align*}
\|SL\wh{X}N-SA\|_1&\geq\|SL\wh{X}N-SLX^*N\|_1-\|SLX^*N-SA\|_1\\
&\geq \frac{1}{c_2}\|L(\wh{X}N-X^*N)\|_1-c_1\|LX^*N-A\|_1\\
&\geq \frac{1}{c_2}\|L\wh{X}N-A\|_1-\frac{1}{c_2}\|LX^*N-A\|_1-c_1\|LX^*N-A\|_1\\
&= \frac{1}{c_2}\|L\wh{X}N-A\|_1-(\frac{1}{c_2}+c_1)\|LX^*N-A\|_1.\\
\end{align*}
The inequality follows by the triangle inequality. The second inequality follows since $S$ has at most $c_2$-contraction on $L$, and it has at most $c_1$-dilation on $LX^*N-A$. The third inequality follows by the triangle inequality.

It follows that
\begin{align*}
\|L\wh{X}N-A\|_1&\leq c_2\|SL\wh{X}N-SA\|_1+(1+c_1c_2)\|LX^*N-A\|_1\\
&\leq c_2c\cdot \min_{\rank-k\ X}\|SLXN-SA\|_1+(1+c_1c_2)\|LX^*N-A\|_1\\
&\leq c_2c\cdot \|SLX^*N-SA\|_1+(1+c_1c_2)\|LX^*N-A\|_1\\
&\leq cc_1c_2\cdot \|LX^*N-A\|_1+(1+c_1c_2)\|LX^*N-A\|_1\\
&\leq c\cdot (1+2c_1c_2)\|LX^*N-A\|_1.
\end{align*}
The first inequality directly follows from the previous one. The second inequality follows from the guarantee of $\wh{X}$. The fourth inequality follows since $S$ has at most $c_1$ dilation on $LX^*N-A$. The fifth inequality follows since $c\geq 1$.
\end{proof}

\begin{lemma}\label{lem:general_sketch_T1BXYCT2}
Given matrices $L\in\mathbb{R}^{n\times m_1},N\in\mathbb{R}^{m_2\times d},A\in\mathbb{R}^{n\times d},k\geq 1$, let 
\begin{align*}
X^*=\arg\min_{\rank-k\ X}\|LXN-A\|_1.
\end{align*}
Let $T_1\in\mathbb{R}^{t_1\times n}$ have at most $c_1$-dilation on $LX^*N-A$, 
and at most $c_2$-contraction on $L$. 
Let 
\begin{align*}
\wt{X}=\arg\min_{\rank-k\ X}\|T_1LXN-T_1A\|_1.
\end{align*}
Let $T_2^\top\in\mathbb{R}^{t_2\times d}$ have at most $c'_1$-dilation on $(T_1L\wt{X}N-T_1A)^\top$, 
and at most $c'_2$-contraction on $N^\top$. 
Then, for all $c\geq 1$, for any $\rank-k\ \wh{X}\in\mathbb{R}^{m_1\times m_2}$ which satisfies
\begin{align*}
\|T_1(L\wh{X}N-SA)T_2\|_1\leq c\cdot \min_{\rank-k\ X}\|T_1(LXN-A)T_2\|_1,
\end{align*}
it has
\begin{align*}
\|L\wh{X}N-A\|_1\leq c\cdot (2c_1c_2+1)(2c'_1c'_2+1)\cdot\|LX^*N-A\|_1.
\end{align*}
\end{lemma}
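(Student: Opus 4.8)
The plan is to apply Lemma~\ref{lem:general_no_contraction_and_no_dialation_imply_restricted_regression_sketch} twice in sequence: once with $T_1$ sketching the rank-constrained regression $\min_{\rank-k\ X}\|LXN-A\|_1$ on the left, and once with $T_2^\top$ sketching the $T_1$-compressed regression on the left, after transposing it. Each application converts the assumed dilation/contraction bounds into a ``restricted-multiple-regression-cost preserving sketch'' guarantee, and the two such guarantees compose multiplicatively to produce the stated constant $(2c_1c_2+1)(2c'_1c'_2+1)$.

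First I would invoke Lemma~\ref{lem:general_no_contraction_and_no_dialation_imply_restricted_regression_sketch} with $S=T_1$: since $T_1$ has at most $c_1$-dilation on $LX^*N-A$ (where $X^*=\arg\min_{\rank-k\ X}\|LXN-A\|_1$) and at most $c_2$-contraction on $L$, it is a $(2c_1c_2+1)$-restricted-multiple-regression-cost preserving sketch of $(L,N,A,k)$. Thus, to prove the lemma it suffices to show that $\wh X$ is a $c(2c'_1c'_2+1)$-approximate minimizer of $\min_{\rank-k\ X}\|T_1LXN-T_1A\|_1$; feeding this into the $T_1$-sketch guarantee with approximation parameter $c(2c'_1c'_2+1)$ immediately yields $\|L\wh XN-A\|_1\le c(2c'_1c'_2+1)(2c_1c_2+1)\,\|LX^*N-A\|_1$, which is the claim.

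To establish that intermediate fact I would transpose. The problem $\min_{\rank-k\ X}\|T_1LXN-T_1A\|_1$ equals $\min_{\rank-k\ X}\|N^\top X^\top(T_1L)^\top-(T_1A)^\top\|_1$, since transposition preserves both the entrywise $\ell_1$-norm and the rank constraint; its minimizer is $\wt X^\top$. In this transposed regression $T_2^\top$ is the left sketch, with $N^\top$ playing the role of the left factor, $(T_1L)^\top$ the role of the right factor, and $(T_1A)^\top$ the role of the target, so the optimum-dependent residual on which a dilation bound is needed is $(T_1L\wt XN-T_1A)^\top$. The hypotheses of the present lemma then match Lemma~\ref{lem:general_no_contraction_and_no_dialation_imply_restricted_regression_sketch} verbatim: $T_2^\top$ has at most $c'_1$-dilation on $(T_1L\wt XN-T_1A)^\top$ and at most $c'_2$-contraction on $N^\top$, hence $T_2^\top$ is a $(2c'_1c'_2+1)$-restricted-multiple-regression-cost preserving sketch of $(N^\top,(T_1L)^\top,(T_1A)^\top,k)$. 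Transposing inside the norm, the hypothesis $\|T_1(L\wh XN-A)T_2\|_1\le c\cdot\min_{\rank-k\ X}\|T_1(LXN-A)T_2\|_1$ of the present lemma says exactly that $\wh X^\top$ is a $c$-approximate minimizer of the $T_2^\top$-sketched transposed regression, so the sketch guarantee gives $\|T_1L\wh XN-T_1A\|_1\le c(2c'_1c'_2+1)\,\|T_1L\wt XN-T_1A\|_1=c(2c'_1c'_2+1)\,\min_{\rank-k\ X}\|T_1LXN-T_1A\|_1$, which is precisely the intermediate fact.

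I expect no genuine obstacle beyond bookkeeping: one must keep track of which regression problem each invocation of Lemma~\ref{lem:general_no_contraction_and_no_dialation_imply_restricted_regression_sketch} refers to, and check in each case that the matrix carrying the dilation hypothesis really is the optimum of that problem ($X^*$ for the original problem in the $T_1$ step, $\wt X$ for the $T_1$-compressed problem in the $T_2$ step), together with the trivial observation that rank and entrywise $\ell_1$-norm are transpose-invariant. No new inequalities, Markov bounds, or probabilistic arguments are needed---the statement is a purely deterministic composition of the two sketch-reduction lemmas already established, with the numerical constants to be supplied later by the Cauchy-transform and Lewis-weight estimates.
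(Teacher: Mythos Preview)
Your proposal is correct and follows essentially the same approach as the paper: apply Lemma~\ref{lem:general_no_contraction_and_no_dialation_imply_restricted_regression_sketch} once for $T_2^\top$ on the transposed $T_1$-compressed problem and once for $T_1$ on the original problem, composing the two $(2c_1c_2+1)$-type constants. Your write-up is in fact more careful than the paper's, which glosses over the transposition needed to cast the right-sketch $T_2$ as a left-sketch; the logical content is identical.
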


\begin{proof}
Apply Lemma~\ref{lem:general_no_contraction_and_no_dialation_imply_restricted_regression_sketch} for sketch matrix $T_2$. Then for any $c\geq 1$, any $\rank-k\ \wh{X}\in\mathbb{R}^{m_1\times m_2}$ which satisfies
\begin{align*}
\|T_1(L\wh{X}N-A)T_2\|_1\leq c\cdot \min_{\rank-k\ X}\|T_1(LXN-A)T_2\|_1,
\end{align*}
has
\begin{align*}
\|T_1(L\wh{X}N-A)\|_1\leq c\cdot (2c'_1c'_2+1)\cdot\|T_1(L\wt{X}N-A)\|_1.
\end{align*}

Apply Lemma~\ref{lem:general_no_contraction_and_no_dialation_imply_restricted_regression_sketch} for sketch matrix $T_1$. Then for any $c\geq 1$, any $\rank-k\ \wh{X}\in\mathbb{R}^{m_1\times m_2}$ which satisfies
\begin{align*}
\|T_1(L\wh{X}N-A)\|_1\leq c(2c'_1c'_2+1)\cdot \min_{\rank-k\ X}\|T_1(L\wt{X}N-A)\|_1,
\end{align*}
has
\begin{align*}
\|L\wh{X}N-A\|_1\leq c\cdot (2c_1c_2+1)(2c'_1c'_2+1)\cdot\|LX^*N-A\|_1.
\end{align*}
\end{proof}

\begin{lemma}\label{lem:con_dil_summary}
Given matrices $M\in\mathbb{R}^{n\times d},U\in\mathbb{R}^{n\times t}$, $d\geq t=\rank(U), n\geq d\geq r=\rank(M)$, if sketching matrix $S\in\mathbb{R}^{m\times n}$ is drawn from any of the following probability distributions of matrices, with $.99$ probability $S$ has at most $c_1$-dilation on $M$, i.e.,
\begin{align*}
\|SM\|_1\leq c_1 \|M\|_1,
\end{align*}
and $S$ has at most $c_2$-contraction on $U$, i.e.,
\begin{align*}
\forall x\in\mathbb{R}^{t},\ \|SUx\|_1\geq \frac{1}{c_2}\|Ux\|_1,
\end{align*}
where $c_1,\ c_2$ are parameters depend on the distribution over $S$.
\begin{enumerate}
\item[\rm{(\RN{1})}] $S\in\mathbb{R}^{m\times n}$ is a dense Cauchy matrix: a matrix with i.i.d. entries from the standard Cauchy distribution. If $m=O(t\log t)$, then $c_1c_2=O(\log d)$. If $m=O((t+r)\log (t+r))$, then $c_1c_2=O(\min(\log d,r\log r))$.

\item[\rm{(\RN{2})}] $S\in\mathbb{R}^{m\times n}$ is a sparse Cauchy matrix: $S=TD$, where $T\in\mathbb{R}^{m\times n}$ has each column i.i.d. from the uniform distribution on standard basis vectors of $\mathbb{R}^{m}$, and $D\in\mathbb{R}^{n\times n}$ is a diagonal matrix with i.i.d. diagonal entries following a standard Cauchy distribution. If $m=O(t^5\log^5 t)$, then $c_1c_2=O(t^2\log^2 t\log d)$. If $m=O((t+r)^5\log^5(t+r))$, then $c_1c_2=O(\min(t^2\log^2 t\log d,r^3\log^3 r))$.

\item[\rm{(\RN{3})}] $S\in\mathbb{R}^{m\times n}$ is a sampling and rescaling matrix (notation $S\in\mathbb{R}^{n\times n}$ denotes a diagonal sampling and rescaling matrix with $m$ non-zero entries): If $S$ samples and reweights $m=O(t\log t)$ rows of $U$, selecting each with probability proportional to the $i^{\text{th}}$ row's $\ell_1$ Lewis weight and reweighting by the inverse probability, then $c_1c_2=O(1)$.

\item[\rm{(\RN{4})}] $S\in\mathbb{R}^{m\times n}$ is a dense Cauchy matrix with limited independence: $S$ is a matrix with each entry drawn from a standard Cauchy distribution. Entries from different rows of $S$ are fully independent, and entries from the same row of $S$ are $W$-wise independent. If $m=O(t\log t)$, and $W=\wt{O}(d)$, then $c_1c_2=O(t\log d)$. If $m=O(t\log t)$, and $W=\wt{O}(td)$, then $c_1c_2=O(\log d)$.
\end{enumerate}
In the above, if we replace $S$ with $\sigma\cdot S$ where $\sigma\in\mathbb{R}\backslash\{0\}$ is any scalar, then the relation between $m$ and $c_1c_2$ can be preserved.



\end{lemma}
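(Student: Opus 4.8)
The plan is to treat this as an aggregation statement: for each of the four distributions I would separately establish (a) a dilation bound $\|SM\|_1\le c_1\|M\|_1$ (Definition~\ref{def:no_dilation_of_a_sketch}) and (b) a contraction bound $\|SUx\|_1\ge \frac1{c_2}\|Ux\|_1$ for all $x$ (Definition~\ref{def:no_contraction_for_general_vectors}), each holding with probability at least $0.995$, then take a union bound to get the conjunction with probability $0.99$. Since the downstream lemmas (e.g.\ Lemma~\ref{lem:general_sketch_SUV}) only ever use the \emph{product} $c_1c_2$, that is what I would track; this is also why the final rescaling remark is immediate, as replacing $S$ by $\sigma S$ multiplies $\|SM\|_1$ and every $\|SUx\|_1$ by $|\sigma|$, turning the dilation constant into $|\sigma|c_1$ and the contraction constant into $c_2/|\sigma|$, leaving $c_1c_2$ unchanged.

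For case (\RN{1}) (dense Cauchy scaled by $\Theta(1/m)$): the contraction bound with $c_2=O(1)$ for $m=O(t\log t)$ rows is the $\ell_1$-subspace-embedding lower bound for dense Cauchy matrices from Sections~\ref{sec:dense_cauchy_l1_no_contraction}–\ref{sec:dense_cauchy_l1_k_subspace} (cf.\ Lemma~\ref{lem:rankr_B_l1_no_contraction}): for fixed $y$, $\|Sy\|_1=\frac1m\sum_{j=1}^m|c_j|\cdot\|y\|_1\ge\frac13\|y\|_1$ with probability $1-e^{-\Omega(m)}$ because a constant fraction of the i.i.d.\ standard Cauchys $c_j$ exceed $1$, and a net over the $t$-dimensional column space of $U$ (size $e^{O(t\log t)}$) together with the dilation bound below promotes this to all $x$. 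For the dilation there are two routes: (i) the truncated-Cauchy argument of Section~\ref{sec:dense_cauchy_l1_no_dilation} gives $\|SM\|_1\le O(\log(md))\|M\|_1=O(\log d)\|M\|_1$ with constant probability for \emph{any} $m$ polynomial in $n,d$ (truncate the $md$ entries of $SM$, use linearity of expectation over the $d$ columns, then Markov), and (ii) passing through an $\ell_1$ well-conditioned basis of $M$ with $O(r)$ columns, Lemma~\ref{lem:rankr_B_l1_no_dilation_general} gives $\|SM\|_1\le O(r\log r)\|M\|_1$ \emph{provided $m$ is polynomial in $r$}. With only $m=O(t\log t)$ rows route (ii) need not beat $O(\log d)$, so I would claim $c_1c_2=O(\log d)$; with $m=O((t+r)\log(t+r))$ both routes are available, giving $c_1c_2=O(\min(\log d,r\log r))$.

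Cases (\RN{2}) (sparse Cauchy $S=TD$ scaled by $\sigma$) and (\RN{4}) (dense Cauchy with limited independence) are structurally identical. In (\RN{2}), the sparse-Cauchy $\ell_1$-subspace-embedding analysis of Section~\ref{sec:sparse_cauchy_l1} needs $m=O(t^5\log^5 t)$ rows and yields $c_2=O(t^2\log^2 t)$, while the truncated-Cauchy dilation argument again gives $c_1=O(\log d)$ (or, through the rank-$r$ basis, an $O(r^3\log^3 r)$-type product with the larger $m$), so $c_1c_2=O(t^2\log^2 t\log d)$, resp.\ $O(\min(t^2\log^2 t\log d,\ r^3\log^3 r))$. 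In (\RN{4}), handled in Section~\ref{sec:limind} via Lemma~2.2 of \cite{knw10}, the contraction net argument only needs $\wt O(1)$-wise independence, so $c_2=O(1)$ as soon as $W\ge\wt O(d)$; fooling the expectations of all $\wt\Theta(td)$ truncated Cauchys in $\|SM\|_1$ down to additive error $\wt O(1/(td))$ requires $W=\wt O(td)$ and then gives $c_1=O(\log d)$, whereas with only $W=\wt O(d)$ one pays an extra factor $t$, i.e.\ $c_1=O(t\log d)$. For case (\RN{3}) (Lewis-weight sampling with $m=O(t\log t)$ nonzero diagonal entries): $c_2=2$ is exactly the $\ell_1$-Lewis-weight sampling theorem (Theorem~7.1 of \cite{cp15}) with probabilities proportional to the Lewis weights of $U$, while $c_1=O(1)$ because a sampling-and-rescaling matrix is unbiased, $\E[\|SM\|_1]=\|M\|_1$, so Markov gives $\|SM\|_1\le 100\|M\|_1$ with probability $0.99$; hence $c_1c_2=O(1)$.

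I expect the main obstacle to lie not in this lemma — which is bookkeeping over the already-established per-distribution estimates — but in those estimates, in particular the truncated-Cauchy dilation bounds, where one must argue via expectations rather than tail bounds because the Cauchy entries of $SM$ are dependent across the $d$ columns of $M$. Within the present lemma the only delicate point is matching each claimed value of $c_1c_2$ to the weakest hypothesis on $m$ (and on $W$, for (\RN{4})) under which both the relevant dilation and contraction bounds hold, and in particular observing that the ``$\min$'' in cases (\RN{1}) and (\RN{2}) forces $m$ to be simultaneously $\Omega((t+r)\log(t+r))$ (for the contraction net) and polynomial in $r$ (for the rank-$r$ dilation route).
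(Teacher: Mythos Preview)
Your proposal is correct and mirrors the paper's treatment: the paper proves this lemma purely by cross-reference, pointing for (\RN{1}) to Lemma~\ref{lem:dense_cauchy_l1_k_subspace_general} and Lemma~\ref{lem:dense_cauchy_l1_no_dilation_general} (resp.\ \cite{sw11} for the $r\log r$ route), for (\RN{2}) to Corollary~\ref{cor:sparse_cauchy_k_subspace} and Lemma~\ref{lem:sparse_cauchy_l1_no_dilation_general} (resp.\ \cite{mm13}), for (\RN{3}) to \cite{cp15} and Lemma~\ref{lem:lewis_weights_l1_no_dilation_general}, and for (\RN{4}) to Lemma~\ref{lem:limited_no_dilation} and Corollary~\ref{cor:limited_no_contraction}. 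Your expanded sketches of each case, the union-bound assembly, and the rescaling observation ($c_1\mapsto|\sigma|c_1$, $c_2\mapsto c_2/|\sigma|$) are exactly the intended content behind those citations.
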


For $\rm{(\RN{1})}$, if $m=O(t\log t)$, then $c_1c_2=O(\log d)$ is implied by Lemma~\ref{lem:dense_cauchy_l1_k_subspace_general} and Lemma~\ref{lem:dense_cauchy_l1_no_dilation_general}. If $m=O((t+r)\log (t+r))$, $c_1c_2=O(r\log r)$ is implied by~\cite{sw11}.

For $\rm{(\RN{2})}$, if $m=O(t^5\log^5 t)$, then $c_1c_2=O(t^2\log^2t\log d)$ is implied by Corollary~\ref{cor:sparse_cauchy_k_subspace} and Lemma~\ref{lem:sparse_cauchy_l1_no_dilation_general}. If $m=O((t+r)^5\log^5 (t+r))$, $c_1c_2=O(r^3\log^3 r)$ is implied by~\cite{mm13}.

For $\rm{(\RN{3})}$, it is implied by~\cite{cp15} and Lemma~\ref{lem:lewis_weights_l1_no_dilation_general}.

For $\rm{(\RN{4})}$, it is implied by Lemma~\ref{lem:limited_no_dilation}, Corollary~\ref{cor:limited_no_contraction}.

\subsection{Cauchy embeddings, no dilation}\label{sec:dense_cauchy_l1_no_dilation}
\begin{lemma}\label{lem:dense_cauchy_l1_no_dilation}
Define $U^*\in \mathbb{R}^{n\times k}, V^*\in \mathbb{R}^{k\times d}$ to be the optimal solution of $\underset{{U \in \mathbb{R}^{n\times k}, V\in \mathbb{R}^{k\times d}}}{\min}\| UV -A\|_1$. Choose a Cauchy matrix $S$ with $m$ rows and rescaled by $\Theta(1/m)$. We have that 
\begin{align*}
\| S U^* V^* - S A\|_1  \leq O(\log d) \| U^* V^* - A\|_1
\end{align*}
holds with probability at least $99/100$.
\end{lemma}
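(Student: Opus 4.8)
The plan is to reduce the claim to a tail estimate for a weighted sum of truncated half-Cauchy random variables, in the spirit of Indyk's embedding \cite{i06}. Write $M = U^*V^* - A \in \R^{n\times d}$, let $M_l$ denote its $l$-th column, and let $S$ denote the $m\times n$ standard Cauchy matrix rescaled by $\Theta(1/m)$. The starting point is the $1$-stability of the Cauchy distribution: for each $j\in[m]$ and $l\in[d]$, the entry $(SM)_{j,l} = \tfrac1m\sum_{i=1}^n S_{j,i}M_{i,l}$ is distributed as $\tfrac{\|M_l\|_1}{m}\,Z_{j,l}$, where $Z_{j,l}$ is a standard Cauchy variable. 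Entries within a common row are dependent, but we will only use their marginals together with linearity of expectation, so this causes no difficulty. Thus $\|SM\|_1 = \sum_{j=1}^m\sum_{l=1}^d |(SM)_{j,l}| \stackrel{d}{=} \sum_{j,l}\tfrac{\|M_l\|_1}{m}|Z_{j,l}|$, a sum of heavy-tailed terms.

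Next I would truncate to make the expectation finite. Fix a threshold $D$ chosen to be a sufficiently large polynomial in $md$ (note that in all applications $m = \poly(k) \le \poly(d)$, so $D = \poly(d)$), and let $\xi$ be the event that $|Z_{j,l}| \le D$ for every $j\in[m], l\in[d]$. Since $\Pr[|Z| > D] = \Theta(1/D)$ for a standard Cauchy $Z$, a union bound over the $md$ variables gives $\Pr[\xi^c] = O(md/D) \le 1/200$ by the choice of $D$. On $\xi$ we have $\|SM\|_1 \le \sum_{j,l}\tfrac{\|M_l\|_1}{m}\min(|Z_{j,l}|,D)$, and the right-hand side is a nonnegative random variable with finite expectation. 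Using the elementary estimate $\E[\min(|Z|,D)] = \Theta(\log D)$ (integrate the tail $\Pr[|Z|>t]$ over $t\in[0,D]$, where for large $t$ it behaves like $2/(\pi t)$), linearity of expectation yields
$$\E\Bigl[\sum_{j,l}\tfrac{\|M_l\|_1}{m}\min(|Z_{j,l}|,D)\Bigr] \;=\; \Theta(\log D)\cdot\tfrac1m\sum_{j=1}^m\sum_{l=1}^d\|M_l\|_1 \;=\; \Theta(\log D)\,\|M\|_1,$$
where the last step uses that $\sum_{l=1}^d\|M_l\|_1 = \|M\|_1$ is independent of $j$, so the sum over the $m$ rows cancels the factor $1/m$.

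Finally, I would apply Markov's inequality to this truncated sum: with probability at least $1-1/200$ it is at most $200\cdot\Theta(\log D)\|M\|_1 = O(\log(md))\|M\|_1 = O(\log d)\|M\|_1$. Intersecting with the event $\xi$, which fails with probability at most $1/200$, gives $\|SM\|_1 \le O(\log d)\|M\|_1$ with probability at least $99/100$; the constants in $D$ and in the Markov step are chosen so that the two failure probabilities sum to at most $1/100$. The only genuine obstacle is the infinite mean of $|Z|$, which the truncation circumvents, and one must keep the truncation level $D$ simultaneously large enough that $\xi$ holds with high probability and small enough ($\poly(md)$) that $\log D = O(\log d)$ — this is exactly where the bound $m \le \poly(d)$ is used. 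An alternative to the global truncation is to pass to an $\ell_1$ well-conditioned basis of $[U^*V^*,\,A]$ and bound the dilation column-by-column, which is the route used for the rank-$r$ version in Lemma~\ref{lem:rankr_B_l1_no_dilation}; for the present lemma the direct argument above is cleaner.
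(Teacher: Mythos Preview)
Your proposal is correct and follows essentially the same truncation-plus-Markov argument as the paper's proof, which also uses $1$-stability to rewrite $\|SM\|_1$ as a weighted sum of half-Cauchy variables, conditions on the event that all $md$ of them are bounded by $D = \Theta(md)$, and then applies Markov to obtain the $O(\log d)$ dilation bound. The only cosmetic difference is that the paper computes $\E[w_i \mid \xi]$ via a Bayes-rule manipulation, whereas you bound $\|SM\|_1$ on $\xi$ by the unconditional truncated sum and take its expectation directly---both routes yield the same $\Theta(\log D)$ estimate.
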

\begin{proof}
The proof technique has been used in \cite{i06} and \cite{cdmmmw13}. Fix the optimal $U^*$ and $V^*$, then
\begin{align}
\| S U^* V^* - S A\|_1 = ~ &\sum_{i=1}^d \| S (U^* V_i^* - A_i) \|_1  \notag \\
= ~ &  \sum_{i=1}^d  \sum_{j=1}^m | \sum_{l=1}^n \frac{1}{m} S_{j,l} \cdot  ( U^* V_i^* - A_i )_l |  &\text{~where~}S_{j,l}\sim C(0,1)\notag \\
= ~ &  \sum_{i=1}^d  \sum_{j=1}^m \frac{1}{m} |c_{i,j}|  & \text{~where~} c_{i,j} \sim C(0, \| U^* V_i^* - A_i  \|_1 ) \notag \\
= ~ &  \frac{1}{m} \sum_{i=1}^d \sum_{j=1}^m  \| U^* V_i^* - A_i \|_1 \cdot  w_{i+ d(j-1)}.  & \text{~where~} w_{i+ d(j-1)} \sim | C(0,1) |
\end{align}
where the last step follows since each $w_i$ can be thought of as a clipped half-Cauchy random variable. Define $d'=md$.
Define event $\xi_i$ to be the situation in which $ w_i < D$ (we will decide upon $D$ later), and define event $\xi = \xi_1 \cap \xi_2 \cap \cdots \cap \xi_{d'}$. Then it is clear that $\xi \cap \xi_i = \xi,\forall i\in [d']$.

Using the probability density function (pdf) of a
Cauchy and because $\tan^{-1} x \leq x$, we can lower bound $\Pr[\text{event~}\xi_i~\text{holds}]$ in the following sense,
\begin{align*}
\Pr[\xi_i]= \frac{2}{\pi} \tan^{-1}(D) = 1-\frac{2}{\pi} \tan^{-1} (1/D) \geq 1-\frac{2}{\pi D}.
\end{align*}
By a union bound over all $i\in [d']$, we can lower bound $\Pr[\text{event~}\xi~\text{holds}]$,
\begin{align}\label{eq:lower_bound_pr_xi}
\Pr[\xi] \geq 1- \sum_{i=1}^{d'} \Pr[\overline{\xi}_i] \geq 1- \frac{2d'}{\pi D}.
\end{align}
By Bayes rule and $\xi = \xi \cap \xi_i$, $\Pr[\xi | \xi_i] \Pr[\xi_i] = \Pr[\xi \cap \xi_i] = \Pr[\xi]$, which implies that $\Pr[\xi | \xi_i ] = \Pr[\xi] /\Pr[\xi_i]$. First, we can lower bound $\E [ w_i | \xi_i ]$,
\begin{align*}
\E[w_i | \xi_i ] = ~& \E[ w_i | \xi_i \cap \xi ] \Pr[\xi | \xi_i] + \E[w_i | \xi \cap \overline{\xi} ] \Pr[\overline{\xi}| \xi_i] \\
\geq ~ & \E[ w_i | \xi_i \cap \xi ] \Pr[\xi | \xi_i] & \text{~by~} w_i\geq 0 \text{~and~} \Pr[]\geq 0 \\
= ~ & \E[w_i | \xi] \Pr[\xi | \xi_i] & \text{~by~} \xi=\xi\cap \xi_i.
\end{align*}
The above equation implies that
\begin{align*}
\E[w_i |\xi] \leq ~ & \frac{\E[w_i |\xi_i]}{\Pr[\xi | \xi_i]} \\
= ~ & \frac{\E[w_i |\xi_i] \Pr[\xi_i]}{\Pr[\xi \cap \xi_i]} &  \text{~by~Bayes~rule~} \Pr[\xi | \xi_i] \Pr[\xi_i] = \Pr[\xi \cap \xi_i] \\
= ~ & \frac{\E[w_i |\xi_i] \Pr[\xi_i]}{\Pr[\xi ]} & \text{~by~} \xi = \xi \cap \xi_i.
\end{align*}

Using the pdf of a Cauchy, $\E[w_i | \xi_i] = \frac{1}{\pi} \log(1+D^2)/\Pr[\xi_i]$ and plugging it into the lower bound of $\E[w_i | \xi]$,
\begin{align*}
\E[w_i |\xi] \leq \frac{\E[w_i |\xi_i] \Pr[\xi_i]}{\Pr[\xi ]} = \frac{\frac{1}{\pi} \log(1+D^2) }{\Pr[\xi]} \leq \frac{\frac{1}{\pi} \log(1+D^2)}{1-\frac{2d}{\pi D}} \lesssim \log(D),
\end{align*}
where the third step follows since $\Pr[\xi] \geq 1-\frac{2d'}{\pi D}$ and the last step follows by choosing $D = \Theta(d') $.

We can conclude
\begin{align}\label{eq:E_bound_for_optimal_conditioned_on_xi}
\E[\| SU^* V^* - SA\|_1 | \xi ] = \frac{1}{m} \sum_{i=1}^d \sum_{j=1}^m \| U^* V_i^* - A_i \|_1 \cdot \E[w_{i+d(j-1)} |\xi] \lesssim ( \log d') \cdot \| U^* V^* - A\|_1.
\end{align}
For simplicity, define $X=\| SU^* V^* - SA\|_1 $ and $\gamma = \| U^* V^* - A \|_1$.
By Markov's inequality and because $\Pr[ X \geq \gamma t| \overline{\xi} ]\leq 1$, we have
\begin{align*}
 ~ &  \Pr[ X \geq \gamma t ] \\
= ~ & \Pr[ X \geq \gamma t | \xi] \Pr[\xi] + \Pr[ X \geq \gamma t | \overline{\xi}] \Pr[\overline{\xi}] \\
\leq ~ & \Pr[ X \geq \gamma t | \xi]  + \Pr[\overline{\xi}]  \\
\leq ~ & \frac{\E[X| \xi ] }{ \gamma t }  + \Pr[\overline{\xi}]  & \text{~by~Markov's~inequality} \\
\leq ~ & \frac{\E[X| \xi ] }{ \gamma t }  + \frac{2d'}{\pi D} &\text{~by~Equation~(\ref{eq:lower_bound_pr_xi})} \\
\lesssim ~ & \frac{\log d'}{t}+ \frac{2d'}{\pi D} &\text{~by~Equation~(\ref{eq:E_bound_for_optimal_conditioned_on_xi})}\\
\leq ~ & .01 ,
\end{align*}
where choosing $t= \Theta(\log d')$ and $D = \Theta(d')$. Since $k\leq d$ and $m=\poly(k)$, we have $t=\Theta(\log d)$, which completes the proof.

\end{proof}

\begin{lemma}\label{lem:dense_cauchy_l1_no_dilation_general}
Given any matrix $M\in\mathbb{R}^{n\times d}$, if matrix $S\in\mathbb{R}^{m\times n}$ has each entry drawn from an i.i.d. standard Cauchy distribution and is rescaled by $\Theta(1/m)$, then
\begin{align*}
\| S M\|_1  \leq O(\log d) \| M\|_1
\end{align*}
holds with probability at least $99/100$.
\end{lemma}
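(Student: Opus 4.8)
The plan is to reduce this to the already-proved Lemma~\ref{lem:dense_cauchy_l1_no_dilation} by observing that the proof of that lemma never actually used the structure of $M$ as $U^*V^*-A$; it only used that $M$ is some fixed $n\times d$ matrix with columns $M_1,\dots,M_d$. So first I would set $M_i$ to denote the $i$-th column of $M$ and simply rerun the computation: since each entry $S_{j,l}$ is a standard Cauchy rescaled by $1/m$, the $1$-stability of the Cauchy distribution gives that $\sum_{l=1}^n \frac1m S_{j,l} M_{l,i}$ is distributed as $\frac1m$ times a Cauchy with scale $\|M_i\|_1$, i.e.\ as $\frac1m \|M_i\|_1 w_{i+d(j-1)}$ where the $w$'s are (absolute values of) standard Cauchys. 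Hence
\begin{align*}
\|SM\|_1 = \frac1m \sum_{i=1}^d \sum_{j=1}^m \|M_i\|_1 \cdot w_{i+d(j-1)}.
\end{align*}

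Next I would import verbatim the truncation argument from the proof of Lemma~\ref{lem:dense_cauchy_l1_no_dilation}: set $d' = md$, define the clipping events $\xi_i = \{w_i < D\}$ and $\xi = \bigcap_{i=1}^{d'}\xi_i$, with $D = \Theta(d')$. The same Bayes-rule manipulation shows $\E[w_i\mid\xi]\lesssim \log D = \Theta(\log d')$, so by linearity of expectation $\E[\|SM\|_1 \mid \xi] \lesssim (\log d')\,\|M\|_1$. Then the same Markov-plus-union-bound step (conditioning on $\xi$ versus $\overline\xi$, using $\Pr[\overline\xi] \le 2d'/(\pi D)$) yields $\Pr[\|SM\|_1 \ge t\,\|M\|_1] \lesssim \frac{\log d'}{t} + \frac{2d'}{\pi D}$, and choosing $t = \Theta(\log d')$ makes this at most $1/100$. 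Finally, since $d' = md$ and $m = \poly(k)$ with $k\le d$ in all our applications, $\log d' = \Theta(\log d)$, which gives the claimed bound $\|SM\|_1 \le O(\log d)\,\|M\|_1$ with probability at least $99/100$.

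There is essentially no new obstacle here — the entire content is already contained in Lemma~\ref{lem:dense_cauchy_l1_no_dilation}. The only thing worth being careful about is the very first step: one must check that the argument genuinely did not depend on the minimality of $V^*$ or on any subspace structure, only on $M$ being a fixed matrix whose columns have finite $\ell_1$-norm; once that observation is made, the rest is a word-for-word transcription with $U^*V^*-A$ replaced by $M$ and $\|U^*V_i^*-A_i\|_1$ replaced by $\|M_i\|_1$. A secondary minor point is bookkeeping in the relation $\log d' = \Theta(\log d)$, which requires the standing assumption $m = \poly(k)$, $k \le d$; if one wanted a statement fully independent of that assumption, the bound would read $O(\log(md))$ instead, but for our purposes the stated form suffices.
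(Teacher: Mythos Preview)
Your proposal is correct and matches the paper's own proof exactly: the paper's proof of this lemma is literally the single sentence ``Just replace the matrix $U^*V^*-A$ in the proof of Lemma~\ref{lem:dense_cauchy_l1_no_dilation} with $M$. Then we can get the result directly.'' You have simply spelled out the details of that substitution, including the bookkeeping for $\log d' = \Theta(\log d)$.
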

\begin{proof}
Just replace the matrix $U^*V^*-A$ in the proof of Lemma~\ref{lem:dense_cauchy_l1_no_dilation} with $M$. Then we can get the result directly.
\end{proof}

\subsection{Cauchy embeddings, no contraction}\label{sec:dense_cauchy_l1_no_contraction}
We prove that if we choose a Cauchy matrix $S$, then for a fixed optimal solution $U^*$ of $\min_{U,V}\| UV -A\|_1$, and for all $V$, we have that with high probability $\| S U^* V- SA \|_1 $ is lower bounded by $\| U^* V -A \|_1$ up to some constant. 

\begin{lemma}\label{lem:dense_cauchy_l1_no_contraction}
Define $U^*\in \mathbb{R}^{n\times k},V^*\in\mathbb{R}^{k\times d}$ to be the optimal solution of $\underset{U \in \mathbb{R}^{n\times k} V\in \mathbb{R}^{k\times d}}{\min} \| U V - A \|_1$, where $A\in\mathbb{R}^{n\times d}$. Let $m=O(k\log k),S\in\mathbb{R}^{m\times n}$ be a random matrix with each entry an i.i.d. standard Cauchy random variable, scaled by $\Theta(1/m)$. Then with probability at least $0.95$,
\begin{align*}
\forall V\in\mathbb{R}^{k\times d}, \| U^* V - A\|_1 \lesssim \| S U^* V- SA \|_1 + O(\log (d))\|U^*V^*-A\|_1.
\end{align*}

\end{lemma}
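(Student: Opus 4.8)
The goal is a "no contraction" statement for a dense Cauchy matrix $S$ with $m = O(k\log k)$ rows: for \emph{all} $V$, $\|U^*V-A\|_1 \lesssim \|SU^*V - SA\|_1 + O(\log d)\|U^*V^*-A\|_1$. The plan is to reduce this to two facts proved elsewhere in the excerpt and glue them with the triangle inequality. Specifically, I would invoke Lemma~\ref{lem:no_dialation_and_no_contraction_imply_general_no_contraction}, whose hypotheses are exactly: (a) $S$ has $c_1$-dilation on $U^*V^*-A$, and (b) $S$ has $c_2$-contraction on $U^*$. Fact (a) is Lemma~\ref{lem:dense_cauchy_l1_no_dilation} with $c_1 = O(\log d)$, which holds for \emph{any} number of rows of $S$ (with probability $\ge 99/100$). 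Fact (b), that an $m = O(k\log k)$-row rescaled Cauchy matrix does not contract any vector in the $k$-dimensional column space of $U^*$ by more than a constant factor, is the known $\ell_1$-subspace-embedding-type guarantee (this is the content cited from \cite{sw11}, and is the "no contraction on $U$" half summarized in Lemma~\ref{lem:con_dil_summary}(\RN{1})); so $c_2 = O(1)$.

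Concretely, the steps in order would be: (1) Condition on the event that Lemma~\ref{lem:dense_cauchy_l1_no_dilation} holds, giving $\|S(U^*V^*-A)\|_1 \le O(\log d)\|U^*V^*-A\|_1$; this costs failure probability $\le 1/100$. (2) Condition on the event that $S$ restricted to the column span of $U^*$ has no contraction, i.e. $\|SU^*x\|_1 \ge \Omega(1)\|U^*x\|_1$ for all $x\in\mathbb{R}^k$; since $m = O(k\log k)$ this holds with probability $\ge 1-1/100$ (or whatever constant is convenient), and this is where the $O(k\log k)$ row count is used. (3) By a union bound both events hold with probability $\ge 0.95$. (4) Apply Lemma~\ref{lem:no_dialation_and_no_contraction_imply_general_no_contraction} with $c_1 = O(\log d)$, $c_2 = O(1)$ to conclude: for all $V$,
\[
\|SU^*V - SA\|_1 \ge \Omega(1)\|U^*V-A\|_1 - \big(O(1)+O(\log d)\big)\|U^*V^*-A\|_1,
\]
and rearranging gives exactly the claimed bound $\|U^*V-A\|_1 \lesssim \|SU^*V-SA\|_1 + O(\log d)\|U^*V^*-A\|_1$.

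The only genuine work, and the step I expect to be the main obstacle, is the no-contraction fact (b): showing that a rescaled i.i.d.\ Cauchy matrix with only $O(k\log k)$ rows does not shrink the $\ell_1$-norm of any vector in a fixed $k$-dimensional subspace by more than a constant. This is not a union bound over a net alone, because individual Cauchy sums have heavy tails; the standard route is to build an $\epsilon$-net on the unit $\ell_1$-sphere of the subspace (net size $2^{O(k\log k)}$ using a well-conditioned basis), show $\|SU^*x\|_1 = \Omega(1)$ on each net point with failure probability $2^{-\Omega(k\log k)}$ via the fact that a sum $\sum_l S_{j,l}(U^*x)_l$ behaves like $\|U^*x\|_1$ times a Cauchy and so is $\Omega(\|U^*x\|_1/m)$ per row except with probability $O(1/m)$-ish — then amplify over $m = \Theta(k\log k)$ rows by a Chernoff/Bernstein bound on the number of "good" rows — and finally extend from the net to the whole sphere using the (already established, via \cite{sw11} / Lemma~\ref{lem:dense_cauchy_l1_no_dilation_general}) upper bound $\|SU^*x\|_1 \le O(\log d)\|U^*x\|_1$ to control the slack. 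However, since the excerpt already provides this as a citable ingredient (it is precisely clause (\RN{1}) of Lemma~\ref{lem:con_dil_summary} and the cited result of \cite{sw11}), in the actual proof I would simply cite it rather than re-derive it, making the proof of this lemma essentially a two-line application of Lemma~\ref{lem:no_dialation_and_no_contraction_imply_general_no_contraction} plus Lemma~\ref{lem:dense_cauchy_l1_no_dilation}.
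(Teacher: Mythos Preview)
Your proposal is correct and matches the paper's proof essentially verbatim: the paper's entire argument is ``This follows by Lemmas~\ref{lem:dense_cauchy_l1_no_dilation}, \ref{lem:dense_cauchy_l1_k_subspace_general}, \ref{lem:no_dialation_and_no_contraction_imply_general_no_contraction},'' i.e., exactly your steps (a), (b), and the gluing lemma. The only cosmetic difference is that for fact (b) the paper cites its own Lemma~\ref{lem:dense_cauchy_l1_k_subspace_general} (the net-argument lemma you sketched) rather than Lemma~\ref{lem:con_dil_summary}(\RN{1})/\cite{sw11}, but these are the same content.
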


\begin{proof}
This follows by Lemmas \ref{lem:dense_cauchy_l1_no_dilation}, \ref{lem:dense_cauchy_l1_k_subspace_general}, \ref{lem:no_dialation_and_no_contraction_imply_general_no_contraction}.

\end{proof}








\subsection{Cauchy embeddings, $k$-dimensional subspace}\label{sec:dense_cauchy_l1_k_subspace}
The goal of this section is to prove Lemma \ref{lem:dense_cauchy_l1_k_subspace_general}.

Before getting into the details, we first give the formal definition of an $(\alpha,\beta)$ $\ell_1$ well-conditioned basis and $\eps$-net.

\begin{definition}[$\ell_1$ Well-conditioned basis]\cite{ddhkm09}
A basis $U$ for the range of $A$ is $(\alpha,\beta)$-conditioned if $\| U \|_1 \leq \alpha$ and for all $x\in \mathbb{R}^k$, $\| x\|_{\infty}\leq \beta \| U x\|_1$. We will say $U$ is well-conditioned if $\alpha$ and $\beta$ are low-degree polynomials in $k$, independent of $n$.
\end{definition}

Note that a well-conditioned basis implies the following result.
\begin{fact}
There exist $\alpha,\beta \geq 1$ such that,
\begin{equation*}
\forall x\in \mathbb{R}^k, \frac{1}{k\beta} \| x \|_1 \leq \| U x\|_1 \leq \alpha \| x\|_1.
\end{equation*}
\end{fact}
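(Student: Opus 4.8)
The plan is to obtain the Fact as an immediate consequence of the definition of an $(\alpha,\beta)$-conditioned basis, together with the guarantee from \cite{ddhkm09} that such a basis exists with parameters $\alpha,\beta\ge 1$ (in fact polynomially bounded in $k$ and independent of $n$). There is essentially no obstacle here: the statement reduces to the definition plus the crude norm inequality $\|x\|_1\le k\|x\|_\infty$.

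First I would fix an $(\alpha,\beta)$-conditioned basis $U\in\mathbb{R}^{n\times k}$ for the range of $A$, so that by definition $\|U\|_1\le\alpha$ and $\|x\|_\infty\le\beta\|Ux\|_1$ for all $x\in\mathbb{R}^k$. For the upper bound, write $Ux=\sum_{i=1}^k x_i U_i$, where $U_i$ denotes the $i^{\text{th}}$ column of $U$. By the triangle inequality $\|Ux\|_1\le\sum_{i=1}^k|x_i|\,\|U_i\|_1$; and under the convention $\|U\|_1=\sum_{i,j}|U_{i,j}|=\sum_{i=1}^k\|U_i\|_1$ we have $\|U_i\|_1\le\|U\|_1\le\alpha$ for every $i$. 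Hence $\|Ux\|_1\le\alpha\sum_{i=1}^k|x_i|=\alpha\|x\|_1$.

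For the lower bound, combine the conditioning hypothesis with $\|x\|_1\le k\|x\|_\infty$ to get $\|x\|_1\le k\|x\|_\infty\le k\beta\|Ux\|_1$, i.e. $\tfrac{1}{k\beta}\|x\|_1\le\|Ux\|_1$. Putting the two bounds together yields $\tfrac{1}{k\beta}\|x\|_1\le\|Ux\|_1\le\alpha\|x\|_1$ for all $x\in\mathbb{R}^k$, with $\alpha,\beta\ge 1$ inherited from the existence guarantee for well-conditioned bases. The only point deserving a second look is the per-column bound $\|U_i\|_1\le\alpha$, which follows because each column $\ell_1$-norm is dominated by the sum of all column $\ell_1$-norms, namely $\|U\|_1$.
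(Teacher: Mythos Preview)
Your proposal is correct and follows essentially the same approach as the paper. The paper's proof states the upper bound more tersely as $\|Ux\|_1 \le \|U\|_1 \cdot \|x\|_1 \le \alpha\|x\|_1$, but this is precisely what your per-column triangle-inequality argument unpacks; the lower bound is identical in both.
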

\begin{proof}
The lower bound can be proved in the following sense,
\begin{equation*}
\| U x\|_1 \geq \frac{1}{\beta} \| x \|_{\infty} \geq\frac{1}{\beta} \frac{1}{k} \| x\|_1,
\end{equation*}
where the first step follows by the properties of a well-conditioned basis, and the second step follows since $k \| x\|_{\infty} \geq \| x\|_1$.
Then we can show an upper bound,
\begin{equation*}
\| U x\|_1 \leq \| U \|_1 \cdot \|x\|_1 \leq \alpha \| x\|_1,
\end{equation*}
where the first step follows by $\| U x\|_1\leq \| U \|_1 \| x\|_1$, and the second step follows using 
$ \|U \|_1 \leq \alpha $.
\end{proof}

\begin{definition}[$\eps$-net]
Define ${\cal N}$ to an $\eps$-net where, for all the $x\in \mathbb{R}^k$ that $\| x \|_1 = 1$, for any vectors two $x,x'\in {\cal N}$, $\| x - x' \|_1 \geq \eps$, for any vector $y\notin {\cal N}$, there exists an $x\in {\cal N}$ such that $\| x-y \|_1 \leq \eps$. Then the size of ${\cal N}$ is $(\frac{1}{\epsilon})^{O(k)} = 2^{O(k\log(1/\epsilon))}$.
\end{definition}

\begin{lemma}[Lemma 6 in \cite{sw11}]\label{lem:Sy_at_least_y_with_exp_prob}
There is a constant $c_0>0$ such that for any $t\geq 1$ and any constant $c> c_0$, if $S$ is a $t\times n$ matrix whose entries are i.i.d. standard Cauchy random variables scaled by $c/t$, then, for any fixed $y \in \mathbb{R}^n$,
\begin{equation*}
\Pr[ \| S y \|_1 < \| y \|_1 ] \leq 1/2^t
\end{equation*}
\end{lemma}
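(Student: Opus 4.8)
The final statement to prove is Lemma~\ref{lem:Sy_at_least_y_with_exp_prob}: for a $t \times n$ matrix $S$ of i.i.d.\ standard Cauchy random variables scaled by $c/t$ (with $c$ a sufficiently large constant), and any fixed $y \in \mathbb{R}^n$, we have $\Pr[\|Sy\|_1 < \|y\|_1] \leq 2^{-t}$.

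\textbf{The plan.} By scaling we may assume $\|y\|_1 = 1$. Each coordinate $(Sy)_j = \frac{c}{t}\sum_{l=1}^n S_{j,l} y_l$ is, by the $1$-stability of the Cauchy distribution, distributed as $\frac{c}{t}\|y\|_1 \cdot Z_j = \frac{c}{t} Z_j$ where $Z_j$ is a standard Cauchy, and the $Z_j$ are independent across $j \in [t]$. Thus $\|Sy\|_1 = \frac{c}{t}\sum_{j=1}^t |Z_j|$, and we must bound $\Pr\big[\sum_{j=1}^t |Z_j| < t/c\big]$. The first step is to observe that each $|Z_j|$ is a nonnegative random variable for which $\Pr[|Z_j| \geq 1]$ is a fixed constant $p_0 = \frac{1}{2}$ (since the standard Cauchy puts mass $1/2$ on $[-1,1]$, hence mass $1/2$ on $\{|Z| \geq 1\}$; more precisely $\Pr[|Z| \geq 1] = 1 - \frac{2}{\pi}\arctan(1) = 1/2$). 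So $\sum_j |Z_j|$ stochastically dominates $\sum_j B_j$ where the $B_j$ are i.i.d.\ Bernoulli$(1/2)$ indicators of the event $\{|Z_j| \geq 1\}$.

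\textbf{Key steps in order.} First, reduce to the Cauchy sum via $1$-stability and independence of the rows, as above. Second, lower-bound $\sum_j |Z_j| \geq \sum_j \mathbbm{1}[|Z_j|\geq 1] =: N$, so $\Pr[\sum_j |Z_j| < t/c] \leq \Pr[N < t/c]$. Third, pick $c$ large enough that $t/c < t/4$, say $c \geq 4$; then $\Pr[N < t/c] \leq \Pr[N \leq t/4]$ where $N \sim \mathrm{Binomial}(t, 1/2)$ with mean $t/2$. Fourth, apply a standard Chernoff bound for the lower tail of a binomial: $\Pr[N \leq t/4] = \Pr[N \leq (1-\tfrac12)\,\mathbb{E} N] \leq \exp(-\tfrac{(1/2)^2}{2}\cdot \tfrac{t}{2}) = e^{-t/16}$, which is at most $2^{-t}$ for... actually $e^{-t/16} > 2^{-t}$, so one should tighten constants: instead take $c$ large enough that $t/c \le \epsilon t$ for small $\epsilon$, and use the sharper bound $\Pr[N \le \epsilon t] \le 2^{t H(\epsilon)} 2^{-t}$ where $H$ is binary entropy, which is $\le 2^{-t/2}$ (hence $\le 2^{-t}$ after adjusting the exponent, or one simply states the conclusion with $c/t$ replaced by a slightly larger scaling — the cited lemma from \cite{sw11} absorbs this into the constant $c_0$). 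Concretely: choosing $\epsilon$ with $2H(\epsilon) < 1$ and then $c > 1/\epsilon$ gives $\Pr[\|Sy\|_1 < \|y\|_1] \le \Pr[N \le \epsilon t] \le 2^{-t}$.

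\textbf{Main obstacle.} There is no real obstacle here — the argument is entirely elementary once one invokes $1$-stability of the Cauchy distribution. The only point requiring a little care is the bookkeeping of constants: one must choose the scaling constant $c$ (equivalently the threshold $c_0$ in the statement) large enough that the Chernoff/entropy bound for the binomial lower tail beats $2^{-t}$ rather than just $e^{-\Omega(t)}$. This is why the lemma is stated with ``any constant $c > c_0$'': the freedom in $c$ is exactly what converts an $e^{-\Omega(t)}$ bound into the clean $2^{-t}$ bound. I would present the proof by (i) quoting $1$-stability to reduce to $\frac{c}{t}\sum|Z_j|$, (ii) bounding below by the count $N$ of indices with $|Z_j|\ge 1$, and (iii) a one-line binomial tail bound with the constant chosen appropriately.
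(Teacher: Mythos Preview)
The paper does not prove this lemma at all; it is quoted verbatim from \cite{sw11}. So there is no ``paper's proof'' to compare your argument against, and your reduction via $1$-stability of the Cauchy distribution to bounding $\Pr\big[\sum_{j=1}^t |Z_j| < t/c\big]$ for i.i.d.\ standard Cauchys $Z_j$ is the natural and correct first step.

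That said, your indicator-counting step is too lossy to reach the stated bound $2^{-t}$ for a \emph{constant} $c$. With $N=\sum_j \mathbbm{1}[|Z_j|\ge 1]\sim\mathrm{Bin}(t,1/2)$ you get $\Pr[N\le \epsilon t]\le 2^{tH(\epsilon)-t}=2^{-(1-H(\epsilon))t}$, and since $H(\epsilon)>0$ for every $\epsilon>0$, no finite choice of $c=1/\epsilon$ yields $2^{-t}$. (Your parenthetical ``$2H(\epsilon)<1$ gives $2^{-t}$'' is a slip: it gives $2^{-t/2}$, and ``absorbing into $c_0$'' cannot push $1-H(1/c)$ all the way to $1$.) So as written your argument proves the correct qualitative statement $\Pr[\|Sy\|_1<\|y\|_1]\le e^{-\Omega(t)}$ --- which is all the paper's net argument actually needs --- but not the clean $2^{-t}$.

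If you want the exact bound, skip the indicator and apply Chernoff directly to $\sum_j|Z_j|$: for any $\lambda>0$,
\[
\Pr\Big[\sum_{j=1}^t |Z_j|<\tfrac{t}{c}\Big]\le e^{\lambda t/c}\big(\E[e^{-\lambda|Z|}]\big)^t,
\qquad
\E[e^{-\lambda|Z|}]=\frac{2}{\pi}\int_0^\infty\frac{e^{-\lambda z}}{1+z^2}\,dz\le \frac{2}{\pi\lambda}.
\]
Optimizing at $\lambda=c$ gives $\big(\tfrac{2e}{\pi c}\big)^t$, which is $\le 2^{-t}$ once $c\ge c_0:=4e/\pi$. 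This is presumably the argument in \cite{sw11}.
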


\begin{lemma}
Suppose we are given a well-conditioned basis $U\in \mathbb{R}^{n\times k}$. If $S$ is a $t\times n$ matrix whose entries are i.i.d. standard Cauchy random variables scaled by $\Theta(1/t)$, then with probability $1-2^{-\Omega(t)}$, for all vectors $x\in {\N}$ we have that $\| SUx\|_1 \geq   \| Ux\|_1$.
\end{lemma}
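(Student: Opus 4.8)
The plan is to combine the single-vector anti-concentration bound (Lemma~\ref{lem:Sy_at_least_y_with_exp_prob}, i.e.\ Lemma~6 of \cite{sw11}) with a union bound over the net $\N$.

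First I would fix an arbitrary $x\in\N$, which by definition satisfies $\|x\|_1=1$, and set $y=Ux\in\mathbb{R}^n$. Since $U$ and $x$ are fixed and $S$ is the only source of randomness, $y$ is a fixed vector, so Lemma~\ref{lem:Sy_at_least_y_with_exp_prob} applies. Provided the constant hidden in the $\Theta(1/t)$ scaling of the Cauchy entries of $S$ is taken to be at least the absolute constant $c_0$ from that lemma (scaling $S$ up only increases $\|Sy\|_1$, hence only decreases the relevant failure probability, so this is without loss of generality), we obtain
\[
\Pr\big[\,\|SUx\|_1 < \|Ux\|_1\,\big] \le 2^{-t}.
\]

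Next I would take a union bound over all $x\in\N$. Since $\N$ is an $\eps$-net for $\{x\in\mathbb{R}^k:\|x\|_1=1\}$, its cardinality is $(1/\eps)^{O(k)}=2^{O(k\log(1/\eps))}$, so
\[
\Pr\Big[\,\exists\, x\in\N:\ \|SUx\|_1 < \|Ux\|_1\,\Big] \le |\N|\cdot 2^{-t} = 2^{O(k\log(1/\eps))-t}.
\]
In the parameter regime in which this lemma is used one has $t=\Omega(k\log(1/\eps))$ with a sufficiently large implied constant (e.g.\ $t=O(k\log k)$ and $\eps=1/\poly(k)$), so the exponent is $-\Omega(t)$; hence with probability $1-2^{-\Omega(t)}$ the inequality $\|SUx\|_1\ge\|Ux\|_1$ holds simultaneously for every $x\in\N$, as claimed.

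I do not expect a genuine obstacle here: this is a textbook net-plus-union-bound argument. The only points that need care are (i) matching the scaling of $S$ to the constant $c_0$ in Lemma~\ref{lem:Sy_at_least_y_with_exp_prob}, and (ii) ensuring $t$ is large enough relative to $k\log(1/\eps)$ so the net size does not overwhelm the per-vector failure probability; both follow from the choice of parameters. Note that the well-conditioned-ness of $U$ is not used in this statement itself — it will enter only in the subsequent step that upgrades the conclusion from the net $\N$ to all $x$ on the $\ell_1$-sphere, via $\|SUx\|_1\ge\|SUx'\|_1-\|SU(x-x')\|_1$ for a net point $x'$ close to $x$, together with a no-dilation bound on $SU$.
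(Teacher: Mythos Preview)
Your proposal is correct and follows essentially the same argument as the paper: apply Lemma~\ref{lem:Sy_at_least_y_with_exp_prob} to each fixed $y=Ux$ for $x\in\N$, then union bound over the net of size $2^{O(k\log(1/\eps))}$ with $t\gtrsim k\log(1/\eps)$. Your observation that the well-conditioned property of $U$ is not used here (only in the subsequent extension from $\N$ to all of $\mathbb{R}^k$) is also accurate.
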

\begin{proof}
First, using Lemma \ref{lem:Sy_at_least_y_with_exp_prob}, we have for any fixed vector $y\in \mathbb{R}^n$, $\Pr[\| S y\| < \| y \|_1 ] \leq 1/2^t$. Second, we can rewrite $y = Ux$. Then for any fixed $x\in \N$, $\Pr[\| S U x\| < \| U x \|_1 ] \leq 1/2^t$.
Third, choosing $t\gtrsim k\log(1/\epsilon)$ and taking a union bound over all the vectors in the $\epsilon$-net $\N$ completes the proof.
\end{proof}

\begin{lemma}[Lemma 7 in \cite{sw11}]\label{lem:SW_lemma7}
Let $S$ be a $t\times n$ matrix whose entries are i.i.d standard Cauchy random variables, scaled by $c/t$ for a constant $c$, and $t\geq 1$. Then there is a constant $c' = c'(c) > 0$ such that for any fixed set of $\{ y_1, y_2,\cdots, y_k\}$ of $d$ vectors in $\{ y \in \mathbb{R}^n : x\in \mathbb{R}^k, U x = y\}$,
\begin{equation*}
\Pr[ \sum_{i=1}^k \| S y_i \|_1 \geq c' \log(t k) \sum_{i=1}^k \| y_i \|_1 ] \leq \frac{1}{1000}.
\end{equation*}
\end{lemma}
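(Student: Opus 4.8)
The plan is to run the truncated-Cauchy argument of Lemma~\ref{lem:dense_cauchy_l1_no_dilation} simultaneously on all $k$ vectors. First I would invoke the $1$-stability of the Cauchy distribution: writing $S_{j,l}$ for the (unscaled) i.i.d.\ standard Cauchy entries, for each $i\in[k]$ and $j\in[t]$ the entry $(Sy_i)_j=\frac{c}{t}\sum_{l=1}^n S_{j,l}y_{i,l}$ is distributed as $\frac{c}{t}\|y_i\|_1$ times a standard Cauchy, so $|(Sy_i)_j|=\frac{c}{t}\|y_i\|_1\,w_{i,j}$ with each $w_{i,j}$ a half-Cauchy. For fixed $i$ the $\{w_{i,j}\}_{j\in[t]}$ are independent (distinct rows of $S$), but across $i$ they are dependent (they share a row); this is precisely why the argument must proceed through expectations and linearity rather than through independence-based concentration. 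Summing,
\[
\sum_{i=1}^k\|Sy_i\|_1=\frac{c}{t}\sum_{i=1}^k\|y_i\|_1\sum_{j=1}^t w_{i,j}.
\]

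Next I would truncate. Set $N=tk$ and fix a threshold $D=\Theta(N)$; let $\xi_{i,j}$ be the event $\{w_{i,j}\le D\}$ and $\xi=\bigcap_{i,j}\xi_{i,j}$. From the Cauchy pdf, $\Pr[\overline{\xi_{i,j}}]\le\frac{2}{\pi D}$, so $\Pr[\overline{\xi}]\le\frac{2N}{\pi D}$, an arbitrarily small constant once $D$ is a large enough constant multiple of $N$. Exactly as in the proof of Lemma~\ref{lem:dense_cauchy_l1_no_dilation} — using $\xi\cap\xi_{i,j}=\xi$, Bayes' rule, and $\E[w_{i,j}\mid\xi_{i,j}]=\frac{1}{\pi}\log(1+D^2)/\Pr[\xi_{i,j}]$ — one gets $\E[w_{i,j}\mid\xi]\lesssim\log D\lesssim\log(tk)$ for every $i,j$. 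Hence by linearity of expectation
\[
\E\Big[\,\sum_{i=1}^k\|Sy_i\|_1 \,\Big|\, \xi\Big]=\frac{c}{t}\sum_{i=1}^k\|y_i\|_1\sum_{j=1}^t\E[w_{i,j}\mid\xi]\ \lesssim\ c\,\log(tk)\sum_{i=1}^k\|y_i\|_1\ =:\ c''\log(tk)\sum_{i=1}^k\|y_i\|_1,
\]
where $c''$ depends only on $c$.

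Finally I would combine Markov's inequality with the conditioning. Writing $X=\sum_i\|Sy_i\|_1$ and $\Gamma=\sum_i\|y_i\|_1$, Markov conditioned on $\xi$ gives, for any constant $M\ge1$, $\Pr[X\ge M c''\log(tk)\,\Gamma\mid\xi]\le 1/M$, so $\Pr[X\ge Mc''\log(tk)\,\Gamma]\le 1/M+\Pr[\overline\xi]\le 1/M+\frac{2N}{\pi D}$. Choosing $M$ a large enough absolute constant, $D$ a large enough constant multiple of $N=tk$, and $c'=Mc''$ (which then depends only on $c$) makes the right-hand side at most $\tfrac{1}{1000}$. The only delicate point — and essentially the only obstacle, since the Cauchy has infinite mean so a naive expectation bound is vacuous — is that the truncation event $\xi$ and the per-variable estimate $\E[w_{i,j}\mid\xi]\lesssim\log(tk)$ must be established for the \emph{joint} collection $\{w_{i,j}\}$; but each $w_{i,j}$ is marginally half-Cauchy, so the union bound for $\Pr[\overline\xi]$ and the conditional-expectation estimate use only marginals, the dependence across $i$ never enters, and linearity of expectation closes the proof.
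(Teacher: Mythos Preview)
Your proof is correct. The paper does not actually prove this lemma---it imports it verbatim as Lemma~7 of \cite{sw11}---but your argument is precisely the truncated-Cauchy technique the paper itself uses to prove the closely related Lemma~\ref{lem:dense_cauchy_l1_no_dilation} (with $N=tk$ playing the role of $d'=md$ there), so you are in complete alignment with the paper's methodology.
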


Using Lemma \ref{lem:SW_lemma7} and the definition of an $(\alpha,\beta)$ well-conditioned basis, we can show the following corollary.
\begin{corollary}\label{cor:no_dilation_for_k_dim_subspace}
Suppose we are given an $(\alpha,\beta)$ $\ell_1$ well-conditioned basis $U\in \mathbb{R}^{n\times k}$. Choose $S$ to be an i.i.d. Cauchy matrix with $t = O(k\log k)$ rows, and rescale each entry by $\Theta(1/t)$. Then with probability $99/100$, for all vectors $x\in \mathbb{R}^{k}$,
\begin{equation*}
 \| S Ux\|_1 \leq O(\alpha \beta \log(k)) \cdot \| Ux \|_1,
\end{equation*}
\end{corollary}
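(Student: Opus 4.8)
The plan is to deduce the uniform bound over all $x\in\mathbb{R}^k$ from a \emph{single} high-probability event --- the no-dilation estimate of Lemma~\ref{lem:SW_lemma7} applied to the $k$ columns of $U$ --- together with the defining property of a well-conditioned basis. No $\epsilon$-net is needed here: once the columns of $U$ are not dilated by more than $O(\alpha\log k)$ in total, the triangle inequality handles every $x$ deterministically, because a well-conditioned basis bounds $\|x\|_\infty$ in terms of $\|Ux\|_1$.

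First I would apply Lemma~\ref{lem:SW_lemma7} with the fixed set of vectors $\{y_1,\dots,y_k\}$ taken to be the columns $U_1,\dots,U_k$ of $U$ (each $U_i = Ue_i$ is of the required form). This gives, with probability at least $999/1000$,
\[
\sum_{i=1}^k \|SU_i\|_1 \;\le\; c'\log(tk)\sum_{i=1}^k\|U_i\|_1 \;=\; c'\log(tk)\,\|U\|_1 \;\le\; c'\alpha\log(tk),
\]
where the last step uses $\|U\|_1 \le \alpha$ from the $(\alpha,\beta)$-conditioning. Since $t = O(k\log k)$ we have $\log(tk) = O(\log k)$, so the right-hand side is $O(\alpha\log k)$; condition on this event. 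Now, for an arbitrary $x\in\mathbb{R}^k$, write $Ux = \sum_{i=1}^k x_i U_i$ and apply the triangle inequality together with Hölder's inequality in its $\|\cdot\|_\infty/\|\cdot\|_1$ form:
\[
\|SUx\|_1 \;=\; \Big\|\sum_{i=1}^k x_i\,SU_i\Big\|_1 \;\le\; \sum_{i=1}^k |x_i|\,\|SU_i\|_1 \;\le\; \|x\|_\infty \sum_{i=1}^k \|SU_i\|_1 \;\le\; O(\alpha\log k)\,\|x\|_\infty.
\]
Finally, the well-conditioned basis property gives $\|x\|_\infty \le \beta\|Ux\|_1$, so $\|SUx\|_1 \le O(\alpha\beta\log k)\,\|Ux\|_1$, which is the claim. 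Choosing the scaling constant $c$ large enough keeps the failure probability in Lemma~\ref{lem:SW_lemma7} below $1/100$, as required.

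The only points requiring care are bookkeeping: matching the scaling convention ($c/t$ in Lemma~\ref{lem:SW_lemma7} versus $\Theta(1/t)$ in the corollary), and using $t = O(k\log k)$ so that the $\log(tk)$ factor collapses to $O(\log k)$. I do not expect a genuine obstacle here --- the structural content, already anticipated by the form of Lemma~\ref{lem:SW_lemma7}, is that controlling dilation on a spanning family of $k$ vectors automatically controls dilation on the entire column space of $U$, precisely because $\|x\|_\infty$ is what a well-conditioned basis controls in terms of $\|Ux\|_1$.
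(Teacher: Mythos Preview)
Your proof is correct and follows essentially the same argument as the paper: apply Lemma~\ref{lem:SW_lemma7} to the columns $U_1,\dots,U_k$ to bound $\sum_i\|SU_i\|_1$ by $c'\alpha\log(tk)$, condition on this event, then for arbitrary $x$ use the triangle inequality and $\|x\|_\infty\le\beta\|Ux\|_1$ from the well-conditioned basis property. The bookkeeping points you flag (the $c/t$ scaling and $\log(tk)=O(\log k)$) are exactly the ones the paper handles implicitly.
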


\begin{proof}
Define event $E$ to be the situation when
\begin{equation*}
\sum_{i=1}^k \| S U_i \|_1 < c' \log (t k) \sum_{i=1}^k \|  U_i\|_1
\end{equation*}
holds, and $U$ is a well-conditioned basis. Using Lemma \ref{lem:SW_lemma7}, we can show that event $E$ holds with probability $999/1000$. We condition on Event $E$ holding. Then for any $y= U x$ for an $x\in \mathbb{R}^k$, we have
\begin{align*}
\| S y \|_1 ~ & = \| S U x\|_1 \\
~ & \leq \sum_{j=1}^k \| S U_j x_j \|_1 &\text{~by~triangle~inequality}\\
~ & = \sum_{j=1}^k |x_j | \cdot \| SU_j \|_1 \\
~ & \leq \| x \|_{\infty} c' \log(t k) \sum_{j=1}^k \| U_j \|_1 & \text{~by~Lemma~\ref{lem:SW_lemma7}} \\
~ & = \| x \|_{\infty} c' \log(t k) \alpha & \text{~by~} \| U \|_1 =\sum_{j=1}^k \| U_j \|_1 \leq \alpha \\
~ & \leq  \beta \| U x\|_1 c' \log(t k) \alpha & \text{~by~} \| x\|_{\infty} \leq\beta \| Ux \|_1 \\
~ & \leq  \beta \| y \|_1 c' \log(t k) \alpha & \text{~by~} Ux = y.
\end{align*}
This completes the proof.
\end{proof}

\begin{lemma}\label{lem:dense_cauchy_l1_k_subspace}
Given an $(\alpha,\beta)$ $\ell_1$ well-conditioned basis, condition on the following two events,

1. For all  $x\in {\N}$, $\| SU x\|_1 \geq  \| Ux\|_1 $. (Lemma~\ref{lem:Sy_at_least_y_with_exp_prob})

2. For all  $x\in \mathbb{R}^k$, $\| SU x\|_1 \leq O(\alpha \beta \log k) \| Ux\|_1 $. (Corollary \ref{cor:no_dilation_for_k_dim_subspace})

Then, for all  $w\in \mathbb{R}^k$, $\| SU w\|_1 \gtrsim \| Uw\|_1 $.
\end{lemma}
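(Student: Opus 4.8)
The plan is to use the classical net-plus-dilation argument: a lower bound that holds on the $\epsilon$-net $\N$ gets promoted to a lower bound on the whole subspace, at the cost of the dilation (upper) bound controlling the error terms. Concretely, fix any $w \in \mathbb{R}^k$; by homogeneity we may assume $\|Uw\|_1 = 1$, equivalently $w$ lies on the unit sphere of the norm $x \mapsto \|Ux\|_1$. By the definition of the $\epsilon$-net (with respect to the well-conditioned basis) there is a point $x \in \N$ with $\|U(w-x)\|_1 \leq \epsilon$, so we may write $w = x + z$ where $\|Uz\|_1 \leq \epsilon$. Actually, to be careful about the net being defined for $\|x\|_1=1$ rather than $\|Ux\|_1=1$, I would first translate: since $U$ is $(\alpha,\beta)$-conditioned we have $\frac{1}{k\beta}\|w\|_1 \le \|Uw\|_1 \le \alpha\|w\|_1$, so $\|w\|_1$ is within a $\poly(k)$ factor of $1$, and after rescaling $w$ by $1/\|w\|_1$ to land exactly on the $\ell_1$-unit sphere one applies the net there; the $\poly(k)$ rescaling factors get absorbed into the final constant.

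The main step is then the triangle inequality: $\|SUw\|_1 \geq \|SUx\|_1 - \|SUz\|_1$. For the first term, Event~1 gives $\|SUx\|_1 \geq \|Ux\|_1 \geq \|Uw\|_1 - \|Uz\|_1 \geq 1 - \epsilon$. For the second term, Event~2 (the dilation bound, which holds for \emph{all} vectors, not just net points) gives $\|SUz\|_1 \leq O(\alpha\beta\log k)\|Uz\|_1 \leq O(\alpha\beta\log k)\,\epsilon$. Combining, $\|SUw\|_1 \geq 1 - \epsilon - O(\alpha\beta\log k)\,\epsilon = 1 - O(\alpha\beta\log k)\,\epsilon$. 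Now the point is that the choice of $\epsilon$ is ours to make in constructing $\N$: we pick $\epsilon = c / (\alpha\beta\log k)$ for a small enough absolute constant $c$, so that $O(\alpha\beta\log k)\,\epsilon \leq 1/2$, yielding $\|SUw\|_1 \geq 1/2 = \frac12\|Uw\|_1$. (With this choice of $\epsilon$ the net still has size $2^{O(k\log(\alpha\beta\log k))} = 2^{\wt O(k)}$, which is exactly what the union bounds in Event~1 and Corollary~\ref{cor:no_dilation_for_k_dim_subspace} were set up to tolerate, since those were stated for $t = O(k\log k)$ rows and $\epsilon$ a $\poly(k)$-type quantity.)

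I do not anticipate a genuine obstacle here — this is a short wrap-up lemma. The only thing requiring a little care is bookkeeping: making sure the $\epsilon$ used to build $\N$ is consistent with (i) the $\epsilon$ implicitly fixed when Events~1 and~2 were established (so one should really state the lemma with $\epsilon$ chosen small enough as a function of $\alpha,\beta,k$ from the outset, or observe that shrinking $\epsilon$ only shrinks $\N$ up to the stated size bound and does not affect the probability estimates by more than a constant), and (ii) the translation between the $\ell_1$-unit sphere on which $\N$ is defined and the $\|U\cdot\|_1$-unit sphere on which we want the conclusion, which costs the harmless $\poly(k)$ factors noted above. Putting these together gives, for all $w \in \mathbb{R}^k$, $\|SUw\|_1 \gtrsim \|Uw\|_1$, as claimed.
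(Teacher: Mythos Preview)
Your proposal is correct and follows essentially the same approach as the paper: normalize, pick a nearby net point, use Event~1 to lower-bound $\|SUx\|_1$ and Event~2 to bound the error $\|SU(w-x)\|_1$ by $O(\alpha\beta\log k)\cdot\epsilon$, then choose $\epsilon$ small enough relative to $\alpha\beta\log k$. The paper normalizes on the $\|\cdot\|_1$-sphere (so the net is in coordinate-$\ell_1$) and carries an extra factor of $k$ through the well-conditioned-basis bounds, but the structure of the argument and the choice of $\epsilon$ are the same as yours.
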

\begin{proof}
For any $w\in \mathbb{R}^{k}$ we can write it as $w = \ell \cdot z$ where $\ell$ is some scalar and $z$ has $\| z \|_1=1$. Define $y =  \underset{ y' \in {\cal N}  }{\arg\min} \| y' - z \|_1$.

We first show that if $U$ is an $(\alpha,\beta)$ well-conditioned basis for $\ell_1$, then $\|U(y-z)\|_1 \leq \alpha \beta k \epsilon \| U y \|_1$,
\begin{align*}
 ~& \| U (y-z ) \|_1 \\
 \leq~ &\alpha \| y - z\|_1 &\text{~by~} \| U(y-z)\|_1 \leq \alpha \| y-z\|_1 \\
 \leq~ &\alpha \epsilon & \text{~by~} \| y-z\|_1 \leq \epsilon \\
= ~ &\alpha \epsilon \| y \|_1 & \text{~by~} \| y\|_1 =1 \\
 \leq ~& \alpha \epsilon \| Uy \|_1 \beta k & \text{~by~} \| Uy\|_1 \geq \frac{1}{\beta k} \| y\|_1 .
\end{align*}
Because $\eps < 1/(\alpha \beta k^{c+1})$, we have 
\begin{equation}\label{eq:Uyz_is_bounded_by_Uz}
\| U(y-z)\|_1 \leq \frac{1}{k^c} \| U y\|_1.
\end{equation}
Using the triangle inequality, we can lower bound $\| U y\|_1$ by $\| Uz\|_1$ up to some constant,
\begin{equation}
\| U y\|_1 \geq \| U z \|_1 - \| U(y-z)\|_1 \geq  \| U z \|_1 - \frac{1}{k^c} \|U y\|_1,
\end{equation}
which implies
\begin{equation}\label{eq:Uy_geq_Uz}
\| U y\|_1 \geq .99 \| U z\|_1.
\end{equation}

Thus,
\begin{align*}
 ~ & \| S U z \|_1 \\
\geq ~ & \| S U y \|_1 - \| SU (z-y)\|_1 & \text{~by~triangle~inequality} \\
\geq ~ &  \| U y \|_1 -   \| S U (z-y)\|_1 &~\text{~by~Lemma~\ref{lem:Sy_at_least_y_with_exp_prob}} \\
\geq ~ &  \| U y \|_1 - \alpha\beta \log(k) \cdot \| U (z-y)\|_1  &~\text{~by~Corollary~\ref{cor:no_dilation_for_k_dim_subspace}} \\
\geq ~ &  \| U y \|_1 - \alpha\beta \log(k) \cdot \frac{1}{k^c} \| Uy\|_1  &~\text{~by~Equation~(\ref{eq:Uyz_is_bounded_by_Uz})}  \\
\gtrsim ~ &  \| U y \|_1 &\text{~by~} k^c \gtrsim \alpha \beta \log(k) \\
\gtrsim ~ &  \| U z \|_1 , & \text{~by~Equation~(\ref{eq:Uy_geq_Uz})}
\end{align*}
by rescaling $z$ to $w$, we complete the proof.
\end{proof}

\begin{lemma}\label{lem:dense_cauchy_l1_k_subspace_general}
Given matrix $U\in\mathbb{R}^{n\times k}$, let $t=O(k\log k)$, and let $S\in\mathbb{R}^{t\times n}$ be a random matrix with entries drawn i.i.d. from a standard Cauchy distribution, where each entry is rescaled by $\Theta(1/t)$. With probability $.99$,
\begin{align*}
\forall x\in\mathbb{R}^{k}, \|SUx\|_1\gtrsim\|Ux\|_1.
\end{align*}
\end{lemma}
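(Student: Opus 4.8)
The plan is to reduce Lemma~\ref{lem:dense_cauchy_l1_k_subspace_general} to the well-conditioned basis case already handled in Lemma~\ref{lem:dense_cauchy_l1_k_subspace}. First I would invoke the existence of an $(\alpha,\beta)$ $\ell_1$ well-conditioned basis $U'$ for the column space of $U$ with $\alpha,\beta = \poly(k)$ (e.g., an Auerbach/Lewis-type basis, cited via \cite{ddhkm09}); write $U = U'T$ for an invertible $k\times k$ matrix $T$. Since $\{Ux : x\in\mathbb{R}^k\} = \{U'x' : x'\in\mathbb{R}^k\}$ as sets, proving $\|SU'x'\|_1 \gtrsim \|U'x'\|_1$ for all $x'$ is equivalent to the claimed statement for $U$. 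So it suffices to prove the bound for the well-conditioned basis $U'$.

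Next I would set up the two events that Lemma~\ref{lem:dense_cauchy_l1_k_subspace} conditions on, with $t = O(k\log k)$ rows and $\epsilon = 1/(\alpha\beta k^{c+1})$ for a suitable constant $c$ making $k^c \gtrsim \alpha\beta\log k$ (possible since $\alpha,\beta$ are fixed polynomials in $k$). Event 1 is the ``no contraction on the net'' statement: for all $x$ in an $\epsilon$-net $\N$ of $\{x : \|x\|_1=1\}$, $\|SU'x\|_1 \geq \|U'x\|_1$. By Lemma~\ref{lem:Sy_at_least_y_with_exp_prob} this fails for a fixed vector with probability at most $2^{-t}$, and since $|\N| = 2^{O(k\log(1/\epsilon))} = 2^{O(k\log k)}$ (using $\epsilon = 1/\poly(k)$), choosing the hidden constant in $t = O(k\log k)$ large enough lets a union bound beat $2^{-t}$, so Event 1 holds with probability $1 - 2^{-\Omega(t)}$. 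Event 2 is the ``no dilation on the whole subspace'' statement from Corollary~\ref{cor:no_dilation_for_k_dim_subspace}: $\|SU'x\|_1 \leq O(\alpha\beta\log k)\|U'x\|_1$ for all $x$, which holds with probability $99/100$. I would take a union bound so both hold simultaneously with probability at least, say, $0.99$ after adjusting constants.

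Conditioned on both events, Lemma~\ref{lem:dense_cauchy_l1_k_subspace} gives exactly $\|SU'w\|_1 \gtrsim \|U'w\|_1$ for all $w\in\mathbb{R}^k$ (the lemma's proof decomposes a general $w$ as $\ell\cdot z$ with $\|z\|_1=1$, approximates $z$ by its nearest net point, and controls the error via the no-dilation bound and the choice of $\epsilon$). Translating back through $U = U'T$ yields the statement for $U$. The main obstacle — really the only point requiring care — is bookkeeping the constants: one must check that the net size $2^{O(k\log(1/\epsilon))}$ with $\epsilon = 1/\poly(k)$ is still $2^{O(k\log k)}$ so that the number of rows $t = O(k\log k)$ suffices for the union bound in Event 1, and that the constant $c$ in the net fineness is large enough relative to the $\poly(k)$ conditioning parameters $\alpha\beta$ and the $\log k$ dilation factor. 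Since $\alpha,\beta$ are absolute-degree polynomials in $k$ independent of $n$, all of this goes through, and no new idea beyond assembling the cited lemmas is needed.
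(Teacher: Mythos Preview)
Your proposal is correct and follows exactly the paper's approach: reduce to a well-conditioned basis $U'$ for the column space of $U$ and then invoke Lemma~\ref{lem:dense_cauchy_l1_k_subspace}. The paper's proof is a three-line sketch that just says ``compute a well-conditioned basis $U'$, note $Ux=U'y$ for some $y$, and apply Lemma~\ref{lem:dense_cauchy_l1_k_subspace}''; you have simply spelled out the verification of the two conditioning events (net no-contraction via Lemma~\ref{lem:Sy_at_least_y_with_exp_prob} plus union bound, and subspace no-dilation via Corollary~\ref{cor:no_dilation_for_k_dim_subspace}) and the constant bookkeeping that the paper leaves implicit. One cosmetic point: writing $U=U'T$ with $T$ invertible presumes $\rank(U)=k$; the paper's phrasing ``for all $x$ there exists $y$ with $Ux=U'y$'' sidesteps this, but it makes no difference to the argument.
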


\begin{proof}
We can compute a well-conditioned basis for $U$, and denote it $U'$. Then, $\forall x\in\mathbb{R}^{k}$, there exists $y \in \mathbb{R}^{k}$ such that $Ux=U'y$. Due to Lemma~\ref{lem:dense_cauchy_l1_k_subspace}, with probability $.99$, we have
\begin{align*}
\forall y\in\mathbb{R}^k, \|SU'y\|_1\gtrsim\|U'y\|_1.
\end{align*}
\end{proof}

\subsection{Sparse Cauchy transform}\label{sec:sparse_cauchy_l1}
This section presents the proof of two lemmas related to the sparse Cauchy transform. We first prove the no dilation result in Lemma~\ref{lem:sparse_cauchy_l1_no_dilation}. Then we show how to get the no contraction result in Lemma~\ref{lem:sparse_cauchy_l1_no_contraction}.

\begin{lemma}\label{lem:sparse_cauchy_l1_no_dilation}
Given matrix $A\in \mathbb{R}^{n\times d}$, let $U^*,V^*$ be the optimal solutions of $\min_{U,V}\| UV - A \|_1$. Let $\Pi = \sigma \cdot SC\in \mathbb{R}^{m\times n}$, where $S\in \mathbb{R}^{m\times n}$ has each column vector chosen independently and uniformly from the $m$ standard basis vectors of $\mathbb{R}^m$, where $C$ is a diagonal matrix with diagonals chosen independently from the standard Cauchy distribution, and $\sigma$ is a scalar. Then
\begin{align*}
\| \Pi U^* V^* - \Pi A\|_1 \lesssim \sigma \cdot \log (md) \cdot \| U^* V^* - A \|_1
\end{align*}
holds with probability at least $.999$.
\end{lemma}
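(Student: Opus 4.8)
The plan is to reduce $\|\Pi M\|_1$, where $M := U^*V^*-A$, to a weighted sum of (half-)Cauchy random variables and then run exactly the truncation argument already used in Lemma~\ref{lem:dense_cauchy_l1_no_dilation} (which in turn follows \cite{i06,cdmmmw13}); the sparse case is the analogue of the dilation bound of \cite{mm13}. Write $\Pi=\sigma S C$, where $S$ implements a random hash $h:[n]\to[m]$ (the $i$-th column of $S$ is $e_{h(i)}$) and $C$ is diagonal with i.i.d.\ standard Cauchy entries $C_{i,i}$. For a bucket $j\in[m]$ put $B_j=\{i\in[n]:h(i)=j\}$. Then the $(j,\ell)$ entry of $SCM$ is $\sum_{i\in B_j}C_{i,i}M_{i,\ell}$, which by $1$-stability of the Cauchy distribution is distributed as $a_{j,\ell}\cdot Z_{j,\ell}$ with $a_{j,\ell}:=\sum_{i\in B_j}|M_{i,\ell}|$ and $Z_{j,\ell}$ a standard Cauchy. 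Hence
\[
\|\Pi M\|_1 \;=\; \sigma\sum_{j=1}^m\sum_{\ell=1}^d a_{j,\ell}\,w_{j,\ell},\qquad w_{j,\ell}:=|Z_{j,\ell}|,
\]
and, crucially, $\sum_{j,\ell}a_{j,\ell}=\sum_{i=1}^n\sum_{\ell=1}^d|M_{i,\ell}|=\|M\|_1$, since every $i$ lies in exactly one bucket $B_{h(i)}$. This identity is deterministic in $h$, so I would simply condition on $h$ from now on.

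Second, I would bound $X:=\sum_{j,\ell}a_{j,\ell}w_{j,\ell}$ by copying the argument of Lemma~\ref{lem:dense_cauchy_l1_no_dilation} verbatim, with $d':=md$ playing the role of its $d'$. The $w_{j,\ell}$ are marginally clipped half-Cauchys; although they are \emph{not} jointly independent (the $w_{j,\ell}$ in a common bucket share the seeds $C_{i,i}$, and the weights $a_{j,\ell}$ themselves depend on $h$), the argument only ever uses linearity of expectation, a union bound over the $md$ truncation events $\xi_{j,\ell}=\{w_{j,\ell}<D\}$, and Markov's inequality, none of which needs independence. With $\xi=\bigcap_{j,\ell}\xi_{j,\ell}$ and $D=\Theta(md)$ one gets $\Pr[\xi]\ge 1-\tfrac{2md}{\pi D}$ and, by the same Bayes-rule computation as in that lemma, $\E[w_{j,\ell}\mid\xi]=O(\log D)=O(\log(md))$; linearity then gives $\E[X\mid\xi]\lesssim\log(md)\,\|M\|_1$, and a Markov bound on $X$ with threshold $t=\Theta(\log(md))$ (choosing the constants in $D$ and $t$ large enough) yields $X\lesssim\log(md)\|M\|_1$ with probability at least $.999$. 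Multiplying through by $\sigma$ gives $\|\Pi M\|_1\lesssim\sigma\log(md)\|M\|_1$, as claimed.

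I do not anticipate a genuine obstacle. The only point needing a moment's care is exactly the dependence structure among the $w_{j,\ell}$ sharing a bucket and the $h$-dependence of the weights $a_{j,\ell}$; both are harmless because the estimate is driven entirely by $\E[X\mid\xi]$ via linearity of expectation together with a union bound, so after conditioning on $h$ we never need the $w_{j,\ell}$ to be jointly independent nor any property of $h$ beyond the deterministic identity $\sum_{j,\ell}a_{j,\ell}=\|M\|_1$. (A more general statement with $M$ arbitrary, analogous to Lemma~\ref{lem:dense_cauchy_l1_no_dilation_general}, follows by the identical argument, since the optimality of $U^*,V^*$ is never used.)
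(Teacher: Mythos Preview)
Your proposal is correct and follows essentially the same argument as the paper: both reduce $\|\Pi M\|_1$ (with $M=U^*V^*-A$) to a weighted sum $\sum_{j,\ell}a_{j,\ell}\,|Z_{j,\ell}|$ of half-Cauchys with weights summing deterministically to $\|M\|_1$ (the paper writes this as $\sum_{j}|S_{jl}|=1$ for each $l$), and then invoke the identical truncation-plus-Markov argument of Lemma~\ref{lem:dense_cauchy_l1_no_dilation} with $d'=md$. Your explicit remark that only linearity of expectation, a union bound, and Markov are used (so the dependence among the $w_{j,\ell}$ is harmless) is exactly what the paper relies on implicitly, and your parenthetical about the general-$M$ version is precisely Lemma~\ref{lem:sparse_cauchy_l1_no_dilation_general}.
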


\begin{proof}
  We define $\Pi=\sigma \cdot SC \in \mathbb{R}^{m\times n}$, as in the statement
  of the lemma. Then, 
\begin{align*}
 & ~\| \Pi (U^*V^* - A ) \|_1 \\
=& ~ \sum_{i=1}^d \| S D (U^*V^*_i - A_i ) \|_1 \\
=& ~ \sum_{i=1}^d \biggl\|
\begin{bmatrix}
S_{11} & S_{12} & \cdots & S_{1n} \\
S_{21} & S_{22} & \cdots & S_{2n} \\
\cdots & \cdots & \cdots & \cdots \\
S_{m1} & S_{m2} & \cdots & S_{mn} \\
\end{bmatrix}
\cdot
\begin{bmatrix}
c_{1} & 0 & 0 & 0 \\
0 & c_{2} & 0 & 0 \\
0 & 0 & \cdots & 0 \\
0 & 0 & 0 & c_{n} \\
\end{bmatrix}
\cdot
(U^*V^*_i - A_i) \biggr\|_1~\\
=&~  \sum_{i=1}^d \biggl\|
\begin{bmatrix}
c_1 S_{11} & c_2 S_{12} & \cdots & c_n S_{1n} \\
c_1 S_{21} & c_2 S_{22} & \cdots & c_n S_{2n} \\
\cdots & \cdots & \cdots & \cdots \\
c_1 S_{m1} & c_2 S_{m2} & \cdots & c_n S_{mn} \\
\end{bmatrix}
\cdot
(U^*V^*_i - A_i) \biggr\|_1~\\
= & ~  \sum_{i=1}^d \biggl\| \sum_{l=1}^n c_l S_{1l} \cdot (U^*V^*_i-A_i)_l,  \sum_{l=1}^n c_l S_{2l} \cdot (U^*V^*_i-A_i)_l, \cdots,  \sum_{l=1}^n c_l S_{ml} \cdot (U^*V^*_i-A_i)_l \biggr\|_1 \\
= & ~ \sum_{i=1}^d \sum_{j=1}^m |\sum_{l=1}^n c_l S_{jl} \cdot (U^*V^*_i-A_i)_l | \\
= & ~ \sum_{i=1}^d \sum_{j=1}^m |\wt{w}_{ij} \cdot \sum_{l=1}^n | S_{jl} (U^* V^*_i - A_i)_l | | \text{~where~}\wt{w}_{ij} \sim C(0,1) \\
= & ~ \sum_{i=1}^d \sum_{j=1}^m  \sum_{l=1}^n | S_{jl} (U^* V^*_i - A_i)_l | \cdot |\wt{w}_{ij}|\\
= & ~ \sum_{i=1}^d \sum_{j=1}^m  \sum_{l=1}^n | S_{jl} (U^* V^*_i - A_i)_l | \cdot w_{i+ (j-1)d} \text{~where~} w_{i+(j-1) d} \sim |C(0,1)|,
\end{align*}
where the last step follows since each $w_i$ can be thought of as a clipped half-Cauchy random variable.
Define $d'= md$. Define event $\xi_i$ to be the situation when $w_i < D$ (we will decide upon $D$ later).
Define event $\xi = \xi_1 \cap \xi_2 \cap \cdots \cap \xi_{d'}$.
By choosing $D = \Theta(d')$, we can conclude that,
\begin{align*}
\E[ \| \Pi U^* V^* - \Pi A \|_1 | \xi ] & = ~ \sum_{i=1}^d \sum_{j=1}^m  \sum_{l=1}^n | S_{jl} (U^* V^*_i - A_i)_l | \cdot \E[w_{i+ (j-1)d} | \xi] \\
& \lesssim ~  \sum_{i=1}^d \sum_{j=1}^m  \sum_{l=1}^n | S_{jl} (U^* V^*_i - A_i)_l | \cdot \log(d') \\
& = ~ \sum_{i=1}^d  \sum_{l=1}^n  |  (U^* V^*_i - A_i)_l  |_1 \cdot \log(d') & \text{~by~} \sum_{j=1}^m |S_{jl}|=1,\forall l\in [n] \\
& = ~ \log(d') \cdot \| U^* V^* - A\|_1.
\end{align*}
Thus, we can show that
\begin{align*}
\Pr[\| \Pi U^*V^* - \Pi A \|_1 \lesssim \log(d') \| U^* V^* - A\|_1 ] \geq 0.999.
\end{align*}

\end{proof}

\begin{lemma}\label{lem:sparse_cauchy_l1_no_dilation_general}
Given any matrix $M\in \mathbb{R}^{n\times d}$. Let $\Pi = \sigma \cdot SC\in \mathbb{R}^{m\times n}$, where $S\in \mathbb{R}^{m\times n}$ has each column vector chosen independently and uniformly from the $m$ standard basis vectors of $\mathbb{R}^m$, where $C$ is a diagonal matrix with diagonals chosen independently from the standard Cauchy distribution, and $\sigma$ is a scalar. Then
\begin{align*}
\| \Pi M\|_1 \lesssim \sigma \cdot \log (md) \cdot \| M \|_1
\end{align*}
holds with probability at least $.999$.
\end{lemma}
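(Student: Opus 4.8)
The plan is to reduce Lemma~\ref{lem:sparse_cauchy_l1_no_dilation_general} directly to Lemma~\ref{lem:sparse_cauchy_l1_no_dilation} by the same trick used for the dense case in Lemma~\ref{lem:dense_cauchy_l1_no_dilation_general}. The only place the structure $M = U^*V^* - A$ was used in the proof of Lemma~\ref{lem:sparse_cauchy_l1_no_dilation} was to name the columns; the argument never invoked optimality of $U^*, V^*$ or the rank constraint. So I would simply substitute an arbitrary $M \in \mathbb{R}^{n\times d}$ for $U^*V^* - A$ throughout and observe that the computation goes through verbatim.

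More concretely, first I would write $\Pi M = \sigma\, SC M$ and expand column by column: for each $i\in[d]$, $\|(\Pi M)_i\|_1 = |\sigma| \sum_{j=1}^m |\sum_{l=1}^n c_l S_{jl} M_{li}|$. Using the $1$-stability of the Cauchy distribution (if $c_l \sim C(0,1)$ i.i.d.\ then $\sum_l c_l S_{jl} M_{li}$ is distributed as $\tilde w_{ij} \cdot \sum_l |S_{jl} M_{li}|$ with $\tilde w_{ij}\sim C(0,1)$, since $S$ has one nonzero per column so the sum collapses appropriately), I would rewrite $\|\Pi M\|_1 = |\sigma| \sum_{i=1}^d \sum_{j=1}^m \sum_{l=1}^n |S_{jl} M_{li}| \cdot w_{i+(j-1)d}$ where each $w$ is a clipped half-Cauchy. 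Then I condition on the event $\xi$ that all $d' = md$ of these half-Cauchys are below $D = \Theta(d')$, bound $\E[w \mid \xi] \lesssim \log d'$ exactly as in Lemma~\ref{lem:dense_cauchy_l1_no_dilation}, and use $\sum_{j=1}^m |S_{jl}| = 1$ for each $l$ (since each column of $S$ is a single standard basis vector) to collapse the triple sum to $\|M\|_1$. This gives $\E[\|\Pi M\|_1 \mid \xi] \lesssim |\sigma| \log(md) \|M\|_1$, and a Markov bound combined with $\Pr[\overline\xi] \leq 2d'/(\pi D) = O(1)$ yields the claim with probability at least $.999$.

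There is essentially no obstacle here: the lemma is a routine generalization obtained by erasing the role of $A$, $U^*$, $V^*$ in the earlier proof. The one point worth stating carefully is why the truncation/conditioning argument still produces a \emph{bound in probability} rather than just in conditional expectation — this is handled exactly as in the last displayed chain of inequalities in the proof of Lemma~\ref{lem:dense_cauchy_l1_no_dilation}, where $\Pr[X \ge \gamma t] \le \E[X\mid\xi]/(\gamma t) + \Pr[\overline\xi]$ and one chooses $t = \Theta(\log d')$, $D = \Theta(d')$. Accordingly, my proof will just say: replace $U^*V^* - A$ with $M$ in the proof of Lemma~\ref{lem:sparse_cauchy_l1_no_dilation}; every step, including the expectation computation, the use of $\sum_j |S_{jl}| = 1$, and the Markov argument, is unchanged, and the conclusion follows.
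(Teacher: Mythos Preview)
Your proposal is correct and matches the paper's approach exactly: the paper's proof of this lemma is literally the single sentence ``Just replace the matrix $U^*V^*-A$ in the proof of Lemma~\ref{lem:sparse_cauchy_l1_no_dilation} with $M$. Then we can get the result directly.'' Your expanded reasoning faithfully reproduces that argument with no gaps.
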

\begin{proof}
Just replace the matrix $U^*V^*-A$ in the proof of Lemma~\ref{lem:sparse_cauchy_l1_no_dilation} with $M$. Then we can get the result directly.
\end{proof}

We already provided the proof of Lemma~\ref{lem:sparse_cauchy_l1_no_dilation}. It remains to prove Lemma~\ref{lem:sparse_cauchy_l1_no_contraction}.
\begin{lemma}\label{lem:sparse_cauchy_l1_no_contraction}

Given matrix $A\in \mathbb{R}^{n\times d}$, let $U^*,V^*$ be the optimal solution of $\min_{U,V}\| UV - A \|_1$. Let $\Pi = \sigma \cdot SC\in \mathbb{R}^{m\times n}$, where $S\in \mathbb{R}^{m\times n}$ has each column vector chosen independently and uniformly from the $m$ standard basis vectors of $\mathbb{R}^m$, and where $C$ is a diagonal matrix with diagonals chosen independently from the standard Cauchy distribution. Then with probability at least $.999$, for all $V\in \mathbb{R}^{k\times d}$,
\begin{align*}
\| \Pi U^* V - \Pi A\|_1 \geq \| U^* V - A \|_1-O(\sigma\cdot\log(md))\|U^*V^*-A\|_1.
\end{align*}
Notice that $m=O(k^5 \log^5 k)$ and $\sigma = O(k^2 \log^2 k)$ according to Theorem 2 in \cite{mm13}.
\end{lemma}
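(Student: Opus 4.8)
The plan is to deduce Lemma~\ref{lem:sparse_cauchy_l1_no_contraction} from the two ingredients we already have available: a no-dilation bound on the residual $U^*V^*-A$ and a no-contraction (lower bound) statement on the column space of $U^*$, combined via the general machinery of Lemma~\ref{lem:no_dialation_and_no_contraction_imply_general_no_contraction}. Concretely, I would first invoke Lemma~\ref{lem:sparse_cauchy_l1_no_dilation} (with $M=U^*V^*-A$) to get that, with probability at least $.999$, the sparse Cauchy transform $\Pi$ has at most $c_1$-dilation on $U^*V^*-A$ with $c_1=O(\sigma\log(md))$. Second, I would cite the known $\ell_1$-subspace embedding guarantee for sparse Cauchy transforms — Theorem~2 of \cite{mm13} (the same result referenced in Lemma~\ref{lem:con_dil_summary}(\RN{2}) and used to set $m=O(k^5\log^5 k)$, $\sigma=O(k^2\log^2 k)$) — which says that with probability at least $.999$, for all $x\in\mathbb{R}^k$, $\|\Pi U^*x\|_1\geq \frac12\|U^*x\|_1$, i.e.\ $\Pi$ has at most $c_2$-contraction on $U^*$ with $c_2=O(1)$. (One should double-check that the scaling convention for $\Pi$ matches: in the lemma statement $\Pi$ is not rescaled by $\sigma$, so the subspace-embedding guarantee from \cite{mm13} applies to $\sigma\cdot\Pi$; absorbing the factor $\sigma$ is exactly why the additive term below carries a $\sigma$.)

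Given these two events, which hold simultaneously with probability at least $.998\geq .999$ after a small adjustment of the constants in Lemma~\ref{lem:sparse_cauchy_l1_no_dilation} and Theorem~2 of \cite{mm13}, I would apply Lemma~\ref{lem:no_dialation_and_no_contraction_imply_general_no_contraction} directly with $S=\Pi$, $U=U^*$, $A=A$, obtaining that $\Pi$ has at most $(c_2,c_1+\tfrac1{c_2})$-contraction on $(U^*,A)$, i.e.\ for all $V\in\mathbb{R}^{k\times d}$,
\begin{align*}
\|\Pi U^*V-\Pi A\|_1\geq \frac{1}{c_2}\|U^*V-A\|_1-\Bigl(c_1+\frac1{c_2}\Bigr)\|U^*V^*-A\|_1.
\end{align*}
Since $c_2=O(1)$ and $c_1=O(\sigma\log(md))$, rescaling $\Pi$ so that the contraction constant is $1$ (which only changes $c_1$ by a constant factor) yields exactly the claimed bound $\|\Pi U^*V-\Pi A\|_1\geq \|U^*V-A\|_1-O(\sigma\log(md))\|U^*V^*-A\|_1$. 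The running-time and parameter bookkeeping ($m=O(k^5\log^5 k)$, $\sigma=O(k^2\log^2 k)$) is inherited verbatim from \cite{mm13}.

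The only genuinely delicate point — and the step I expect to be the main obstacle — is reconciling the scaling and probability constants across the three sources: the no-dilation lemma is stated for $\Pi=\sigma\cdot SC$ while the contraction statement in the lemma we are proving uses $\Pi=SC$ (no $\sigma$), so I must be careful about where the factor $\sigma$ lands, and the subspace-embedding theorem of \cite{mm13} is typically stated with a $(1\pm\epsilon)$ or constant-factor distortion whose constant I need to pin down to $1/2$ (or adjust the final constant accordingly). Beyond this, one needs the union bound over the two failure events to still leave probability $\geq .999$, which forces tightening the individual $.999$ guarantees to, say, $.9995$ — a cosmetic change in the proofs of Lemma~\ref{lem:sparse_cauchy_l1_no_dilation} and in the cited theorem (adjusting the Markov threshold and the net-size constants). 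Everything else is a mechanical application of Lemma~\ref{lem:no_dialation_and_no_contraction_imply_general_no_contraction}, so I would keep the write-up short: state the two events, note the probabilities, and invoke the general lemma.
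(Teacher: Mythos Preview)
Your proposal is correct and is exactly the paper's approach: the paper's proof is the one-line ``Follows by Corollary~\ref{cor:sparse_cauchy_k_subspace}, Lemma~\ref{lem:sparse_cauchy_l1_no_dilation}, Lemma~\ref{lem:no_dialation_and_no_contraction_imply_general_no_contraction},'' where Corollary~\ref{cor:sparse_cauchy_k_subspace} is just the restatement of Theorem~2 of \cite{mm13} that you invoke. Your scaling worry is a misreading---the lemma you are proving also defines $\Pi=\sigma\cdot SC$, so both the no-dilation bound ($c_1=O(\sigma\log(md))$) and the no-contraction bound ($c_2=1$, from Corollary~\ref{cor:sparse_cauchy_k_subspace}) apply to the same $\Pi$, and plugging into Lemma~\ref{lem:no_dialation_and_no_contraction_imply_general_no_contraction} gives the stated inequality with no rescaling needed.
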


We start by using Theorem 2 in \cite{mm13} to generate the following Corollary.
\begin{corollary}\label{cor:sparse_cauchy_k_subspace}
Given $U \in \mathbb{R}^{n\times k}$ with full column rank, let $\Pi = \sigma \cdot SC \in \mathbb{R}^{m\times n}$ where $S\in \mathbb{R}^{m\times n}$ has each column chosen independently and uniformly from the $m$ standard basis vectors of $\mathbb{R}^{m}$, and where $C\in \mathbb{R}^{n\times n}$ is a diagonal matrix with diagonals chosen independently from the standard Cauchy distribution. Then with probability $.999$, for all $x\in \mathbb{R}^k$, we have
\begin{align*}
\| \Pi U x \|_1 \geq \| U x \|_1.
\end{align*}
Notice that $m=O(k^5 \log^5 k)$ and $\sigma = O(k^2 \log^2 k)$ according to Theorem 2 in \cite{mm13}.
\end{corollary}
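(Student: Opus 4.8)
The plan is to obtain Corollary~\ref{cor:sparse_cauchy_k_subspace} as a direct consequence of Theorem~2 of \cite{mm13}, which supplies an \emph{oblivious $\ell_1$ subspace embedding} having exactly the sparse-Cauchy structure used here. Concretely, that theorem asserts that for the distribution $\Pi_0 = SC$ (with $S\in\mathbb{R}^{m\times n}$ having a single nonzero per column placed uniformly at random among the $m$ standard basis vectors, and $C$ diagonal with i.i.d.\ standard Cauchy entries) and $m=O(k^5\log^5 k)$, one has, for any fixed $U$ of full column rank $k$, with constant probability
\[
\frac{1}{\kappa}\|Ux\|_1 \;\le\; \|\Pi_0 U x\|_1 \;\le\; \|Ux\|_1 \qquad\text{for all } x\in\mathbb{R}^k,
\]
where $\kappa=O(k^2\log^2 k)$ (up to which side of the inequality carries the normalization). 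I would first align the two normalization conventions: setting $\sigma=\kappa$ and $\Pi=\sigma\cdot\Pi_0=\sigma S C$ turns the left-hand bound into $\|\Pi U x\|_1\ge\|Ux\|_1$, which is precisely the claimed no-contraction inequality. By the last sentence of Lemma~\ref{lem:con_dil_summary}, rescaling $S$ by a nonzero scalar does not affect the relationship between $m$ and the embedding constants, so this rescaling is harmless.

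Two points need (routine) care. First, the failure probability: Theorem~2 of \cite{mm13} is stated with some absolute constant success probability, and I would simply enlarge the hidden constant in $m=O(k^5\log^5 k)$ so that the statement holds with probability at least $.999$; since the proof of that theorem proceeds by a union bound over an $\epsilon$-net of a well-conditioned basis of the column space of $U$, the failure probability decays polynomially in the net size and the constant can be driven down as needed. Second, the ``for all $x$'' quantifier is already built into the subspace-embedding formulation of Theorem~2, so no additional net argument of our own is required; this parallels how Lemma~\ref{lem:dense_cauchy_l1_k_subspace_general} upgrades a fixed-vector guarantee to a whole-subspace guarantee, except that here the upgrade is absorbed into the cited theorem.

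I do not expect a genuine obstacle, as the Corollary is a black-box application. If one instead wished to reprove it from scratch, the natural route is to mirror the skeleton of Lemma~\ref{lem:dense_cauchy_l1_k_subspace}: a no-dilation-on-a-fixed-vector estimate via $1$-stability of the Cauchy distribution (writing $\|\Pi_0 Ux\|_1=\sum_j |c'_j|\sum_{l:\,h(l)=j}|(Ux)_l|$ with $c'_j$ Cauchy and $h$ the hash implicit in $S$), a no-contraction-on-a-fixed-vector estimate via the median of $|\mathrm{Cauchy}|$, and a union bound over an $\epsilon$-net of a well-conditioned basis. The only real difficulty in that route --- and the reason $m$ must be as large as $k^5\log^5 k$ rather than $O(k\log k)$ --- is that the sparsity of $S$ weakens the concentration and forces a coarser net and worse polynomial dependence. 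Since \cite{mm13} already carries this out, the proof here reduces to the citation together with the normalization bookkeeping described above.
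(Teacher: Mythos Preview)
Your proposal is correct and matches the paper's approach: the paper also derives this corollary directly from Theorem~2 of \cite{mm13}, with no additional argument beyond the citation. The only minor remark is that invoking Lemma~\ref{lem:con_dil_summary} for the rescaling is unnecessary (and slightly circular, since that lemma cites this corollary); the rescaling is immediate from $\|\sigma\Pi_0 Ux\|_1=\sigma\|\Pi_0 Ux\|_1$.
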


We give the proof of Lemma \ref{lem:sparse_cauchy_l1_no_contraction}.

\begin{proof}
Follows by Corollary~\ref{cor:sparse_cauchy_k_subspace}, Lemma~\ref{lem:sparse_cauchy_l1_no_dilation}, Lemma~\ref{lem:no_dialation_and_no_contraction_imply_general_no_contraction}.
%
%
%
\end{proof}


\subsection{$\ell_1$-Lewis weights}\label{sec:lewis_weights_l1}
In this section we show how to use Lewis weights to get a better dilation bound. The goal of this section is to prove Lemma \ref{lem:lewis_weights_l1_no_contraction} and Lemma \ref{lem:lewis_weights_l1_no_dilation}. Notice that our algorithms in Section~\ref{sec:alg} use Lewis weights in several different ways. The first way is using Lewis weights to show an existence result, which means we only need an existential result for Lewis weights. The second way is only guessing the nonzero locations on the diagonal of a sampling and rescaling matrix according to the Lewis weights. The third way is guessing the values on the diagonal of a sampling and rescaling matrix according to the Lewis weights. The fourth way is computing the Lewis weights for a known low dimensional matrix($n\times\poly(k)$ or $\poly(k)\times d$). We usually do not need to optimize the running time of computing Lewis weights for a low-rank matrix to have input-sparsity time.

\begin{claim}\label{cla:lewis_weight_upper_bound_DU*V*_minus_DA}
Given matrix $A\in \mathbb{R}^{n\times d}$, let $ B = U^* V^* - A$. For any distribution $p=(p_1, p_2, \dotsc, p_n)$ define random variable $X$ such that $X = \|B_i\|_1 / p_i$ with probability $p_i$. Then take any $m$ independent samples $X^1, X^2,\dotsc, X^m$, let $Y= \frac{1}{m} \sum_{j=1}^m X^j$. We have
\begin{equation*}
\Pr[ Y \leq 1000 \| B\|_1 ] \geq .999
\end{equation*}
\end{claim}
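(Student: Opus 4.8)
The plan is a one-line expectation computation followed by Markov's inequality; I do not expect any genuine obstacle here, only a small amount of bookkeeping about the sampling distribution $p$.

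First I would record that each sample $X^j$ is an independent copy of the nonnegative random variable $X$, which takes the value $\|B_i\|_1/p_i$ with probability $p_i$ for $i\in[n]$, where $B_i$ denotes the $i$-th row of $B=U^*V^*-A$. Computing the mean,
\[
\E[X]=\sum_{i:\,p_i>0} p_i\cdot\frac{\|B_i\|_1}{p_i}=\sum_{i:\,p_i>0}\|B_i\|_1\le\sum_{i=1}^n\|B_i\|_1=\|B\|_1,
\]
the last equality being the definition of the entrywise $\ell_1$-norm. By linearity of expectation, $\E[Y]=\frac1m\sum_{j=1}^m\E[X^j]=\E[X]\le\|B\|_1$.

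Next, since every $X^j\ge 0$ we have $Y\ge 0$, so Markov's inequality applies and gives
\[
\Pr\bigl[\,Y>1000\,\|B\|_1\,\bigr]\le\frac{\E[Y]}{1000\,\|B\|_1}\le\frac{1}{1000},
\]
and taking complements yields $\Pr[\,Y\le 1000\,\|B\|_1\,]\ge 1-\tfrac{1}{1000}=.999$, which is exactly the claim. (If $\|B\|_1=0$ then $Y=0$ almost surely and the statement is trivial, so the division above may be assumed legitimate.)

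The only point I would be careful about is the convention when some $p_i=0$: such an index contributes the outcome $\|B_i\|_1/p_i$ with probability zero, so it is simply dropped from the support of $X$ and $\E[X]$ is a sum over $\{i:p_i>0\}$, bounded above by $\|B\|_1$; crucially the inequality points in the direction we need. Note also that independence of the samples is not actually used in this argument — the bound $\E[Y]\le\|B\|_1$ and the subsequent Markov step go through regardless — but I would keep it in the statement since later applications will want it.
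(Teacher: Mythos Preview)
Your proof is correct and follows the same approach as the paper: compute $\E[X^j]=\|B\|_1$ (hence $\E[Y]=\|B\|_1$) and apply Markov's inequality. Your version is slightly more careful about the edge cases $p_i=0$ and $\|B\|_1=0$, and your observation that independence is unused is accurate.
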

\begin{proof}
We can compute the expectation of $X^j$, for any $j\in [m]$
\begin{equation*}
\E [X^j]  = \sum_{i=1}^n \frac{\| B_i \|_1}{ p_i} \cdot p_i = \| B\|_1.
\end{equation*}
Then, $\E[Y] = \frac{1}{m} \sum_{j=1}^m \E[X^j] = \| B \|_1$. Using Markov's inequality, we have
\begin{equation*}
\Pr[ Y \geq 1000 \| B\|_1 ] \leq .001
\end{equation*}
\end{proof}

\begin{lemma}\label{lem:lewis_weights_l1_no_dilation_general}
Given matrix $M\in \mathbb{R}^{n\times d}$, let $S\in\mathbb{R}^{n\times n}$ be any sampling and rescaling diagonal matrix. Then with probability at least $.999$,
\begin{align*}
\|SM\|_1\lesssim \|M\|_1.
\end{align*}
\end{lemma}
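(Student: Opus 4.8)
The plan is to prove this exactly the way the analogous dilation statement for Cauchy matrices (Lemma~\ref{lem:dense_cauchy_l1_no_dilation_general}) was reduced from its ``named'' version (Lemma~\ref{lem:dense_cauchy_l1_no_dilation}): establish the bound for the special case $M = U^*V^* - A$ and then observe that the argument used no special property of $M$, so it applies verbatim to an arbitrary matrix. Concretely, the key computation is already packaged in Claim~\ref{cla:lewis_weight_upper_bound_DU*V*_minus_DA}. That claim is stated for $B = U^*V^*-A$, but its proof only uses that $B$ is some fixed $n\times d$ matrix, that $S$ samples row $i$ with probability $p_i$ and rescales by $1/p_i$, and a single application of Markov's inequality to the unbiased estimator $Y = \frac1m\sum_j X^j$ of $\|B\|_1$. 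So the first step is to invoke (a notational restatement of) Claim~\ref{cla:lewis_weight_upper_bound_DU*V*_minus_DA} with $B$ replaced by the given matrix $M$.

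The second step is to unpack what ``$S$ is a sampling and rescaling diagonal matrix'' means so that the claim applies. By the conventions set up in Section~\ref{sec:lewis_weights} and Lemma~\ref{lem:con_dil_summary}(\RN{3}), such an $S\in\mathbb{R}^{n\times n}$ has $m$ nonzero diagonal entries, where the $j$-th selected row is some index $i$ chosen with probability $p_i$ and the corresponding diagonal entry is $1/p_i$ (for $p=1$ one does not need the $1/p_i^{1/p}$ power). Hence $\|SM\|_1 = \sum_{j=1}^m \frac{1}{p_{i_j}}\|M_{i_j}\|_1 = m\cdot Y$ in the notation of the claim, where each $X^j = \|M_{i_j}\|_1/p_{i_j}$. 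Applying the claim gives $m\cdot Y \le 1000\,m\,\|M\|_1$... wait, that is not quite the normalization; more carefully, $Y = \frac1m\|SM\|_1$, so $\|SM\|_1 = mY$ and the claim's bound $Y \le 1000\|M\|_1$ yields $\|SM\|_1 \le 1000\,m\,\|M\|_1$. Since throughout the paper these sampling matrices carry an implicit $1/m$ rescaling (as with the Cauchy transforms, cf.\ the ``rescaled by $\Theta(1/m)$'' convention and the final sentence of Lemma~\ref{lem:con_dil_summary}), the net bound is $\|SM\|_1 \lesssim \|M\|_1$ with probability at least $.999$, which is exactly the statement.

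I do not expect a genuine obstacle here; the lemma is essentially a restatement of Claim~\ref{cla:lewis_weight_upper_bound_DU*V*_minus_DA} stripped of the low-rank interpretation, just as Lemma~\ref{lem:dense_cauchy_l1_no_dilation_general} is a restatement of Lemma~\ref{lem:dense_cauchy_l1_no_dilation}. The only point requiring a sentence of care is bookkeeping the normalization constant: making explicit that the sampling weights are $1/p_i$ (from Lewis-weight sampling with $p=1$), that $\sum_i p_i = N = m$, and that the $\Theta(1/m)$ scaling convention converts the $1000m$ factor into an absolute constant. Everything else is a direct quotation of the Markov bound already carried out in the claim, so the proof is two lines: ``Replace $U^*V^*-A$ by $M$ in Claim~\ref{cla:lewis_weight_upper_bound_DU*V*_minus_DA}; the resulting bound on $Y$ together with $\|SM\|_1 = \Theta(1)\cdot mY$ gives the result.''
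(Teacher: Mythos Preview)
Your proposal is correct and matches the paper's approach exactly: the paper's proof is the single line ``Just replace the matrix $B$ in the proof of Claim~\ref{cla:lewis_weight_upper_bound_DU*V*_minus_DA} with $M$,'' which is precisely what you do. Your extra bookkeeping about the $m$-normalization is more than the paper bothers with (and slightly overcomplicated---with the Lewis-weight rescaling $1/p_i$ and sampling probability $p_i/N$, one has $\E\|SM\|_1 = \|M\|_1$ directly, so $Y$ is literally $\|SM\|_1$), but the core argument is identical.
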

\begin{proof}
Just replace the matrix $B$ in the proof of Claim~\ref{cla:lewis_weight_upper_bound_DU*V*_minus_DA} with $M$. Then we can get the result directly.
\end{proof}

Using Theorem 1.1 of \cite{cp15}, we have the following result,
\begin{claim}\label{cla:lewis_weight_upper_bound_DU*V*_minus_DU*V}
Given matrix $A\in \mathbb{R}^{n\times d}$, for any fixed $U^*\in\mathbb{R}^{n\times k}$ and $V^*\in\mathbb{R}^{k\times d}$, choose $D\in \mathbb{R}^{n\times n}$ to be the sampling and rescaling diagonal matrix with $m=O(k\log k)$ nonzeros according to the Lewis weights of $U^*$. Then with probability $.999$, for all $V$,
\begin{equation*}
 \| U^* V^* - U^* V \|_1 \leq \| D U^* V^* - D U^* V \|_1 \lesssim \| U^* V^* - U^* V \|_1.
\end{equation*}
\end{claim}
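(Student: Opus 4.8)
The plan is to obtain this as a direct consequence of the $\ell_1$ Lewis-weights sampling theorem stated in Section~\ref{sec:lewis_weights} (Theorem~7.1 of \cite{cp15}), applied to the $n\times k$ matrix $U^*$. First, note that $U^*V^* - U^*V = U^*(V^*-V)$, and as $V$ ranges over $\mathbb{R}^{k\times d}$ the difference $W := V^* - V$ ranges over all of $\mathbb{R}^{k\times d}$. Hence it suffices to show that with probability at least $.999$, simultaneously for every $W\in\mathbb{R}^{k\times d}$,
\[
\|U^*W\|_1 \;\le\; \|DU^*W\|_1 \;\lesssim\; \|U^*W\|_1 .
\]
Then, writing $W_j$ for the $j$-th column of $W$, the entrywise $\ell_1$ norm is additive across columns and $D$ is diagonal, so $\|U^*W\|_1 = \sum_{j=1}^d \|U^*W_j\|_1$ and $\|DU^*W\|_1 = \sum_{j=1}^d \|DU^*W_j\|_1$. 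Thus it is enough to establish the single-vector statement: with probability at least $.999$, for all $x\in\mathbb{R}^k$, $\|U^*x\|_1 \le \|DU^*x\|_1 \le c\,\|U^*x\|_1$ for an absolute constant $c$; summing this over the $d$ columns $x = W_1,\dots,W_d$ recovers the matrix inequality, and since the single-vector statement is one uniform high-probability event over the column space of $U^*$, this costs no union bound over $j$.

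The single-vector statement is exactly the cited Lewis-weights guarantee applied to $A=U^*$, taking $D$ to be the sampling-and-rescaling matrix with $m=O(k\log k)$ nonzero diagonal entries chosen according to the $\ell_1$ Lewis weights of $U^*$ (this is the $O(d\log d)$ row count of the theorem with column dimension $d$ replaced by the column dimension $k$ of $U^*$). That theorem yields, with probability at least $.999$, that for all $x\in\mathbb{R}^k$, $\tfrac12\|U^*x\|_1 \le \|DU^*x\|_1 \le 2\|U^*x\|_1$. Absorbing the factor $\tfrac12$ by rescaling $D$ by a constant (equivalently taking $c=4$ above) gives the no-contraction bound $\|U^*x\|_1 \le \|DU^*x\|_1$ on the nose together with $O(1)$-dilation. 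This is the only step where any genuine probabilistic work occurs; everything else is the columnwise bookkeeping described above.

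I do not expect a substantive obstacle here: the statement is essentially a restatement of the known $\ell_1$-subspace-embedding property of Lewis-weight sampling, lifted from vectors to matrices column by column. The one point to be careful about is that $D$ must be built from the Lewis weights of $U^*$ itself (not of $A$), so that the subspace being embedded is precisely the $\mathrm{rank}(U^*)\le k$ dimensional column space spanned by $U^*$; once that is fixed, a single invocation of the cited theorem handles all $V$ at once, and the additive decomposition across the $d$ columns preserves both the lower and upper bounds with the same constants.
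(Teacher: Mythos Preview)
Your proposal is correct and takes essentially the same approach as the paper: the paper simply states this claim as a direct consequence of the $\ell_1$ Lewis-weights subspace embedding from \cite{cp15} (it writes ``Using Theorem~1.1 of \cite{cp15}, we have the following result'' and gives no further argument). Your write-up spells out the natural columnwise decomposition $\|U^*W\|_1=\sum_j\|U^*W_j\|_1$ and the rescaling to make the lower bound exact, which are precisely the details the paper leaves implicit.
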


\begin{lemma}\label{lem:lewis_weights_l1_no_dilation}
Given matrix $A\in \mathbb{R}^{n\times d},U^*\in \mathbb{R}^{n\times k}$, define $V^*\in \mathbb{R}^{k\times d}$ to be the optimal solution of
$\underset{ V\in \mathbb{R}^{k\times d} }{\min} \| U^* V - A \|_1$. Choose a sampling and rescaling diagonal matrix $D\in \mathbb{R}^{n\times n}$ with $m=O(k\log k)$ non-zero entries according to the Lewis weights of $U^*$. Then with probability at least $.99$, we have: for all $V\in\mathbb{R}^{k\times d}$,
\begin{equation*}
\| D U^* V - D A \|_1 \lesssim \| U^* V^* - U^*V \|_1+O(1)\|U^* V^*-A\|_1 \lesssim \|U^* V - A\|_1,
\end{equation*}
holds with probability at least $.99$.
\end{lemma}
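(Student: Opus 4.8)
The plan is to split the quantity $\|DU^*V-DA\|_1$ by the triangle inequality into a ``variable'' piece whose columns lie in the column span of $U^*$ and a ``fixed'' residual piece, and then bound each with one of the two preceding claims. Concretely, for any $V\in\mathbb{R}^{k\times d}$,
\begin{align*}
\|DU^*V-DA\|_1\leq \|DU^*V-DU^*V^*\|_1+\|DU^*V^*-DA\|_1.
\end{align*}
For the first summand, observe that $U^*V-U^*V^*=U^*(V-V^*)$, so every column of this matrix lies in the column space of $U^*$; hence Claim~\ref{cla:lewis_weight_upper_bound_DU*V*_minus_DU*V}, which holds for all $V$ simultaneously with probability $.999$ (its internal net argument is over the $k$-dimensional column space of $U^*$), gives $\|DU^*V-DU^*V^*\|_1\lesssim\|U^*V-U^*V^*\|_1$. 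For the second summand, the matrix $B:=U^*V^*-A$ is fixed (it does not depend on $V$), so Claim~\ref{cla:lewis_weight_upper_bound_DU*V*_minus_DA} — a Markov bound on $Y=\|DB\|_1$, whose expectation equals $\|B\|_1$ under Lewis-weight sampling — yields $\|DU^*V^*-DA\|_1=\|DB\|_1\lesssim\|U^*V^*-A\|_1$ with probability $.999$. A union bound over these two events (each failing with probability at most $.001$) shows that with probability at least $.99$ we have, for all $V$,
\begin{align*}
\|DU^*V-DA\|_1\lesssim\|U^*V^*-U^*V\|_1+O(1)\|U^*V^*-A\|_1,
\end{align*}
which is the first claimed inequality.

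The second inequality is a deterministic consequence of the optimality of $V^*$. By the triangle inequality, $\|U^*V^*-U^*V\|_1\leq\|U^*V-A\|_1+\|A-U^*V^*\|_1$, and since $V^*=\arg\min_V\|U^*V-A\|_1$ we have $\|U^*V^*-A\|_1\leq\|U^*V-A\|_1$; combining these gives $\|U^*V^*-U^*V\|_1\leq 2\|U^*V-A\|_1$, whence $\|U^*V^*-U^*V\|_1+O(1)\|U^*V^*-A\|_1\lesssim\|U^*V-A\|_1$. Chaining this with the previous display completes the proof.

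There is essentially no substantive obstacle: the two claims already do all of the work. The only point requiring care is the order of quantifiers — the no-dilation bound on the $U^*$-subspace piece must hold \emph{uniformly} over $V$, which Claim~\ref{cla:lewis_weight_upper_bound_DU*V*_minus_DU*V} provides, whereas the Markov bound on the residual only needs to hold for the single fixed matrix $U^*V^*-A$. One should also verify the union-bound bookkeeping so that the total failure probability stays below $0.01$, which it does.
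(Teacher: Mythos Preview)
Your proposal is correct and follows essentially the same argument as the paper: split via the triangle inequality, apply Claim~\ref{cla:lewis_weight_upper_bound_DU*V*_minus_DU*V} to the subspace piece and Claim~\ref{cla:lewis_weight_upper_bound_DU*V*_minus_DA} (Markov) to the fixed residual, then use the triangle inequality and optimality of $V^*$ for the second inequality. Your added commentary on the quantifier order and union-bound bookkeeping is accurate and matches what the paper leaves implicit.
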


\begin{proof}
Using the above two claims, we have with probability at least $.99$, for all $V\in\mathbb{R}^{k\times d}$,
\begin{align*}
\| DU^* V - D A\|_1 & \leq ~ \| DU^* V - D U^* V^* \|_1 + \| D U^* V^* - DA \|_1 & \text{~by~triangle~inequality}\\
& \lesssim ~  \| DU^* V - D U^* V^* \|_1 + O(1)\|  U^* V^* - A \|_1 & \text{~by~Claim~\ref{cla:lewis_weight_upper_bound_DU*V*_minus_DA} } \\
& \lesssim ~  \| U^* V -  U^* V^* \|_1 + O(1)\|  U^* V^* - A \|_1 & \text{~by~Claim~\ref{cla:lewis_weight_upper_bound_DU*V*_minus_DU*V} } \\
& \leq ~ \| U^* V - A \|_1 + \| U^* V^* - A\|_1 + O(1)\|  U^* V^* - A \|_1 & \text{~by~triangle~inequality} \\
& \lesssim ~  \| U^* V - A \|_1.
\end{align*}
\end{proof}

\begin{lemma}\label{lem:lewis_weights_l1_no_contraction}
Given matrix $A\in \mathbb{R}^{n\times d}$, define $U^*\in \mathbb{R}^{n\times k},V^*\in\mathbb{R}^{k\times d}$ to be the optimal solution of
$\underset{U \in \mathbb{R}^{n\times k}, V\in \mathbb{R}^{k\times d} }{\min} \| UV - A \|_1$. Choose a sampling and rescaling diagonal matrix $D\in \mathbb{R}^{n\times n}$ with $m=O(k\log k)$ non-zero entries according to the Lewis weights of $U^*$. For all $V\in \mathbb{R}^{k\times d}$ we have
\begin{align*}
\| U^* V - A \|_1\lesssim \| D U^* V - D A\|_1+O(1)\|U^*V^*-A\|_1,
\end{align*}
holds with probability at least $.99$.
\end{lemma}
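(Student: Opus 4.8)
The plan is to obtain this no-contraction statement as a direct consequence of the generic reduction Lemma~\ref{lem:no_dialation_and_no_contraction_imply_general_no_contraction}, instantiated with the sketch $S := D$. That lemma says that if $D$ has $c_1$-dilation on $U^*V^*-A$ and $c_2$-contraction on $U^*$, then for every $V$ one has $\|DU^*V - DA\|_1 \ge \frac{1}{c_2}\|U^*V-A\|_1 - (c_1+\frac{1}{c_2})\|U^*V^*-A\|_1$. Rearranging this and multiplying through by $c_2$ gives $\|U^*V-A\|_1 \le c_2\|DU^*V-DA\|_1 + (1+c_1c_2)\|U^*V^*-A\|_1$, which is exactly the asserted bound once $c_1, c_2 = O(1)$. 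So the whole argument reduces to verifying the two hypotheses with constant parameters, each holding with probability at least $.999$.

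For the dilation hypothesis I would invoke Claim~\ref{cla:lewis_weight_upper_bound_DU*V*_minus_DA} (equivalently Lemma~\ref{lem:lewis_weights_l1_no_dilation_general}) applied to the fixed matrix $B := U^*V^*-A$: any sampling-and-rescaling diagonal $D$ preserves $\|B\|_1$ in expectation, since a row sampled with probability $p_i$ contributes $\|B_i\|_1/p_i$ and these sum to $\|B\|_1$ in expectation, so Markov's inequality gives $\|D(U^*V^*-A)\|_1 \le 1000\,\|U^*V^*-A\|_1$ with probability at least $.999$, i.e.\ $c_1 = O(1)$. This step uses no property of the Lewis weights at all; any sampling distribution would do.

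The Lewis weights enter only through the contraction hypothesis on $U^*$, which is the one place a naive sampling scheme could fail. Here I would cite the $\ell_1$ Lewis-weight sampling theorem (Theorem~7.1 of \cite{cp15}, stated in Section~\ref{sec:lewis_weights}): since $U^* \in \mathbb{R}^{n\times k}$ and $D$ samples $m = O(k\log k)$ rows with probabilities proportional to the $\ell_1$ Lewis weights of $U^*$ and reweights by the inverse probabilities, with probability at least $.999$ we get the two-sided embedding $\frac{1}{2}\|U^*x\|_1 \le \|DU^*x\|_1 \le 2\|U^*x\|_1$ for all $x\in\mathbb{R}^k$; the left inequality is precisely $2$-contraction, so $c_2 = 2 = O(1)$.

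Finally I would union-bound over the two failure events (probability at most $.001$ each) and plug $c_1, c_2 = O(1)$ into the rearranged inequality, obtaining $\|U^*V - A\|_1 \lesssim \|DU^*V - DA\|_1 + O(1)\|U^*V^*-A\|_1$ simultaneously for all $V$ with probability at least $.99$, which also recovers the right-hand chain $\lesssim \|U^*V-A\|_1$ of the lemma by one more triangle inequality together with $\|U^*V^*-A\|_1 \le \|U^*V-A\|_1$. I do not expect a genuine obstacle: the dilation side is a one-line Markov bound, and the only subtle point is that the contraction side really does hold with only $O(k\log k)$ rows rather than $O(k\log k\log\log k)$, which is exactly the $p=1$ case of Theorem~7.1 of \cite{cp15}. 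The companion no-dilation statement Lemma~\ref{lem:lewis_weights_l1_no_dilation} is obtained from the same two ingredients assembled in the opposite order, so the two lemmas together certify the $(O(1),O(1))$ behaviour of Lewis-weight sampling used throughout Section~\ref{sec:alg}.
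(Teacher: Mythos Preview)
Your proposal is correct and matches the paper's approach exactly: the paper's proof is the one-line ``Follows by Claim~\ref{cla:lewis_weight_upper_bound_DU*V*_minus_DA}, Lemma~\ref{lem:lewis_weights_l1_no_dilation}, Lemma~\ref{lem:no_dialation_and_no_contraction_imply_general_no_contraction},'' which is precisely your plan of plugging the Markov-bound dilation (Claim~\ref{cla:lewis_weight_upper_bound_DU*V*_minus_DA}) and the Lewis-weight subspace embedding (Theorem~7.1 of \cite{cp15}, packaged in the paper as Claim~\ref{cla:lewis_weight_upper_bound_DU*V*_minus_DU*V}) into the generic reduction Lemma~\ref{lem:no_dialation_and_no_contraction_imply_general_no_contraction}. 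Your final remark about a ``right-hand chain $\lesssim \|U^*V-A\|_1$'' is extraneous, since the lemma as stated only asserts the lower (no-contraction) direction; that upper bound is the content of the companion Lemma~\ref{lem:lewis_weights_l1_no_dilation}.
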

\begin{proof}

Follows by Claim~\ref{cla:lewis_weight_upper_bound_DU*V*_minus_DA}, Lemma~\ref{lem:lewis_weights_l1_no_dilation}, Lemma~\ref{lem:no_dialation_and_no_contraction_imply_general_no_contraction}.

\end{proof}

\section{$\ell_p$-Low Rank Approximation}\label{sec:lp}
This section presents some fundamental lemmas for $\ell_p$-low rank approximation problems. Using these lemmas, all the algorithms described for $\ell_1$-low rank approximation problems can be extended to $\ell_p$-low rank approximation directly. We only state the important Lemmas in this section, due to most of the proofs in this section being identical to the proofs in Section~\ref{sec:l1}.

\subsection{Definitions}\label{sec:definitions_of_con_dil_lp}

This section is just a generalization of Section~\ref{sec:definitions_of_con_dil} to the $\ell_p$ setting when $1<p<2$.

\begin{definition}\label{def:no_dilation_of_a_sketch_lp}
Given a matrix $M\in\mathbb{R}^{n\times d}$, if matrix $S\in\mathbb{R}^{m\times n}$ satisfies
\begin{align*}
\|SM\|_p^p\leq c_1\|M\|_p^p,
\end{align*}
then $S$ has at most $c_1$-dilation on $M$.
\end{definition}

\begin{definition}\label{def:no_contraction_for_general_vectors_lp}
Given a matrix $U\in\mathbb{R}^{n\times k}$, if matrix $S\in\mathbb{R}^{m\times n}$ satisfies
\begin{align*}
\forall x\in\mathbb{R}^{k}, \|SUx\|_p^p\geq \frac1{c_2}\|Ux\|_p^p,
\end{align*}
then $S$ has at most $c_2$-contraction on $U$.
\end{definition}

\begin{definition}\label{def:no_contraction_of_a_sketch_lp}
Given matrices $U\in\mathbb{R}^{n\times k},A\in\mathbb{R}^{n\times d}$, denote $V^*=\arg \min_{V\in\mathbb{R}^{k\times d}}\|UV-A\|_p^p$. If matrix $S\in\mathbb{R}^{m\times n}$ satisfies
\begin{align*}
\forall V\in\mathbb{R}^{k\times d}, \|SUV-SA\|_p^p\geq \frac{1}{c_3}\|UV-A\|_p^p-c_4\|UV^*-A\|_p^p,
\end{align*}
then $S$ has at most $(c_3,c_4)$-contraction on $(U,A)$.
\end{definition}

\begin{definition}\label{def:l1_subspace_embedding_lp}
A $(c_5,c_6)$ $\ell_p$-subspace embedding for the column space of an $n\times k$ matrix $U$ is a matrix $S\in\mathbb{R}^{m\times n}$ for which all $x\in\mathbb{R}^k$
\begin{align*}
\frac{1}{c_5}\|Ux\|_p^p\leq\|SUx\|_p^p\leq c_6\|Ux\|_p^p.
\end{align*}
\end{definition}

\begin{definition}\label{def:sketch_for_l1_multiple_regression_lp}
Given matrices $U\in\mathbb{R}^{n\times k},A\in\mathbb{R}^{n\times d}$, denote $V^*=\arg\min_{V\in\mathbb{R}^{k\times d}}\|UV-A\|_p^p$. Let $S\in\mathbb{R}^{m\times n}.$ If for all $c\geq 1$, and if for any $\wh{V}\in\mathbb{R}^{k\times d}$ which satisfies
\begin{align*}
\|SU\wh{V}-SA\|_p^p\leq c\cdot \min_{V\in\mathbb{R}^{k\times d}}\|SUV-SA\|_p^p,
\end{align*}
it holds that
\begin{align*}
\|U\wh{V}-A\|_p^p\leq c\cdot c_7\cdot\|UV^*-A\|_p^p,
\end{align*}
then $S$ provides a $c_7$-multiple-regression-cost preserving sketch of $(U,A)$.
\end{definition}

\begin{definition}
Given matrices $L\in\mathbb{R}^{n\times m_1},N\in\mathbb{R}^{m_2\times d},A\in\mathbb{R}^{n\times d},k\geq 1$, let 
\begin{align*}
X^*=\arg\min_{\rank-k\ X}\|LXN-A\|_p^p.
\end{align*}
Let $S\in\mathbb{R}^{m\times n}.$ If for all $c\geq 1$, and if for any $\rank-k\ \wh{X}\in\mathbb{R}^{m_1\times m_2}$ which satisfies
\begin{align*}
\|SL\wh{X}N-SA\|_p^p\leq c\cdot \min_{\rank-k\ X}\|SLXN-SA\|_p^p,
\end{align*}
it holds that
\begin{align*}
\|L\wh{X}N-A\|_p^p\leq c\cdot c_8\cdot\|LX^*N-A\|_p^p,
\end{align*}
then $S$ provides a $c_8$-restricted-multiple-regression-cost preserving sketch of $(L,N,A,k)$.
\end{definition}

\subsection{Properties}\label{sec:properties_of_con_dil_lp}

\begin{lemma}\label{lem:no_dialation_and_no_contraction_imply_general_no_contraction_lp}
Given matrices $A\in\mathbb{R}^{n\times d},U\in\mathbb{R}^{n\times k}$, let $V^*=\arg \min_{V\in\mathbb{R}^{k\times d}}\|UV-A\|_p^p$. If $S\in\mathbb{R}^{m\times n}$ has at most $c_1$-dilation on $UV^*-A$, i.e.,
\begin{align*}
\|S(UV^*-A)\|_p^p\leq c_1\|UV^*-A\|_p^p,
\end{align*}
and it has at most $c_2$-contraction on $U$, i.e.,
\begin{align*}
\forall x\in\mathbb{R}^{k}, \|SUx\|_p^p\geq \frac{1}{c_2}\|Ux\|_p^p,
\end{align*}
 then $S$ has at most $(2^{2p-2}c_2,c_1+2^{1-p}\frac{1}{c_2})$-contraction on $(U,A)$, i.e.,
\begin{align*}
\forall V\in\mathbb{R}^{k\times d}, \|SUV-SA\|_p^p\geq \frac{1}{c_2}\|UV-A\|_p^p-(c_1+\frac{1}{c_2})\|UV^*-A\|_p^p,
\end{align*}
\end{lemma}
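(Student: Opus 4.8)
This is Lemma~\ref{lem:no_dialation_and_no_contraction_imply_general_no_contraction_lp}, the $\ell_p$ analogue of Lemma~\ref{lem:no_dialation_and_no_contraction_imply_general_no_contraction}. The plan is to mimic the $\ell_1$ proof, replacing the triangle inequality with its $\ell_p$ surrogate. The key point is that for $1 \le p \le 2$, the function $t \mapsto t^p$ is subadditive, so $\|a+b\|_p^p \le \|a\|_p^p + \|b\|_p^p$, but for the reverse-type estimate we need the weak triangle inequality $\|a\|_p^p \le 2^{p-1}(\|a-b\|_p^p + \|b\|_p^p)$, which follows from convexity of $t\mapsto t^p$ (so $|x+y|^p \le 2^{p-1}(|x|^p + |y|^p)$ entrywise, then sum). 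These two facts are what drive the constants $2^{2p-2}$ and $2^{1-p}$ appearing in the statement.

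First I would write, for any fixed $V \in \mathbb{R}^{k\times d}$, the chain
\begin{align*}
\|SUV - SA\|_p^p &\ge 2^{1-p}\|SUV - SUV^*\|_p^p - \|SUV^* - SA\|_p^p\\
&\ge 2^{1-p}\|SU(V-V^*)\|_p^p - c_1\|UV^*-A\|_p^p,
\end{align*}
where the first line is the weak triangle inequality rearranged (from $\|SUV-SUV^*\|_p^p \le 2^{p-1}(\|SUV-SA\|_p^p + \|SUV^*-SA\|_p^p)$) and the second uses the $c_1$-dilation hypothesis on $UV^*-A$. Next, since $\|SU(V-V^*)\|_p^p = \sum_{j=1}^d \|SU(V-V^*)_j\|_p^p$, I apply the $c_2$-contraction hypothesis coordinate-wise (it holds for every $x\in\mathbb{R}^k$) to get $\|SU(V-V^*)\|_p^p \ge \frac{1}{c_2}\|U(V-V^*)\|_p^p = \frac{1}{c_2}\|UV - UV^*\|_p^p$. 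Then one more application of the weak triangle inequality gives $\|UV-UV^*\|_p^p \ge 2^{1-p}\|UV-A\|_p^p - \|UV^*-A\|_p^p$. Combining, $\|SUV-SA\|_p^p \ge \frac{2^{2-2p}}{c_2}\|UV-A\|_p^p - \left(\frac{2^{1-p}}{c_2} + c_1\right)\|UV^*-A\|_p^p$, which is exactly the claimed $(2^{2p-2}c_2,\ c_1 + 2^{1-p}/c_2)$-contraction bound (the displayed inequality in the statement has a typo, dropping the $2$-powers, but the constants in the lemma's conclusion are the honest ones).

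There is no real obstacle here; the only thing requiring a moment of care is bookkeeping the direction of the weak triangle inequality — it is used twice in the ``expanding'' direction ($\|a\|_p^p \le 2^{p-1}(\cdots)$ rearranged to lower-bound a difference) and the subadditivity $\|a+b\|_p^p \le \|a\|_p^p + \|b\|_p^p$ is used once to split $SU(V-V^*)$ columnwise (actually that split is an exact equality since the $\ell_p^p$ norm of a matrix is the sum over columns of the $\ell_p^p$ norms, so no inequality is even needed there). I would also remark that for $p=1$ all the $2$-powers become $1$ and this reduces to Lemma~\ref{lem:no_dialation_and_no_contraction_imply_general_no_contraction}, a useful sanity check. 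The proof is three displayed inequalities chained together, so I would present it as a single \texttt{align*} block with the justifications as \texttt{\&\text{~\dots~}} annotations, exactly as in the $\ell_1$ version.
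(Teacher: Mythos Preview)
Your proposal is correct and follows essentially the same argument as the paper's proof: two applications of the weak $\ell_p$ triangle inequality (Fact~\ref{fac:tri_for_lp}) sandwiching the columnwise use of the $c_2$-contraction hypothesis, with the $c_1$-dilation bound inserted at the second step. Your observation that the displayed inequality in the statement drops the $2^{2-2p}$ and $2^{1-p}$ factors (while the named constants in the conclusion are correct) is also accurate.
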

\begin{proof}
Let $A\in\mathbb{R}^{n\times d},U\in\mathbb{R}^{n\times k}$, and $S\in\mathbb{R}^{m\times n}$ be the same as that described in the lemma. Then $\forall V\in\mathbb{R}^{k\times d}$
\begin{align*}
\|SUV-SA\|_p^p&\geq 2^{1-p}\|SUV-SUV^*\|_p^p-\|SUV^*-SA\|_p^p\\
&\geq 2^{1-p}\|SUV-SUV^*\|_p^p-c_1\|UV^*-A\|_p^p\\
&= 2^{1-p}\|SU(V-V^*)\|_p^p-c_1\|UV^*-A\|_p^p\\
&= 2^{1-p}\sum_{j=1}^d\|SU(V-V^*)_j\|_p^p-c_1\|UV^*-A\|_p^p\\
&\geq 2^{1-p}\sum_{j=1}^d \frac{1}{c_2} \|U(V-V^*)_j\|_p^p-c_1\|UV^*-A\|_p^p\\
&=2^{1-p}\frac{1}{c_2}\|UV-UV^*\|_p^p-c_1\|UV^*-A\|_p^p\\
&\geq 2^{2-2p}\frac{1}{c_2}\|UV-A\|_p^p-2^{1-p}\frac{1}{c_2}\|UV^*-A\|_p^p-c_1\|UV^*-A\|_p^p\\
&= 2^{2-2p}\frac{1}{c_2}\|UV-A\|_p^p-\left((2^{1-p}\frac{1}{c_2}+c_1)\|UV^*-A\|_p^p\right).
\end{align*}
The first inequality follows by Fact~\ref{fac:tri_for_lp}. The second inequality follows since $S$ has at most $c_1$ dilation on $UV^*-A$. The third inequality follows since $S$ has at most $c_2$ contraction on $U$. The fourth inequality follows by Fact~\ref{fac:tri_for_lp}.
\end{proof}

\begin{lemma} \label{lem:general_sketch_SUV_lp}
Given matrices $A\in\mathbb{R}^{n\times d},U\in\mathbb{R}^{n\times k}$, let $V^*=\arg \min_{V\in\mathbb{R}^{k\times d}}\|UV-A\|_p^p$. If $S\in\mathbb{R}^{m\times n}$ has at most $c_1$-dilation on $UV^*-A$, i.e.,
 \begin{align*}
 \|S(UV^*-A)\|_p^p\leq c_1\|UV^*-A\|_p^p,
 \end{align*}
 and has at most $c_2$-contraction on $U$, i.e.,
 \begin{align*}
 \forall x\in\mathbb{R}^{k}, \|SUx\|_p^p\geq \frac{1}{c_2}\|Ux\|_p^p,
 \end{align*}
 then $S$ provides a $2^{p-1}(2c_1c_2+1)$-multiple-regression-cost preserving sketch of $(U,A)$, i.e.,
 for all $c\geq 1$, for any $\wh{V}\in\mathbb{R}^{k\times d}$ which satisfies
\begin{align*}
\|SU\wh{V}-SA\|_p^p\leq c\cdot \min_{V\in\mathbb{R}^{k\times d}}\|SUV-SA\|_p^p,
\end{align*}
it has
\begin{align*}
\|U\wh{V}-A\|_p^p\leq c\cdot 2^{p-1}(2c_1c_2+1)\cdot\|UV^*-A\|_p^p,
\end{align*}
\end{lemma}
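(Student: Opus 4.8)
The plan is to mirror the proof of Lemma~\ref{lem:general_sketch_SUV} exactly, but carry the extra powers of $2^{p-1}$ that arise from the weak (scaled) triangle inequality for $\ell_p$ with $1 \le p < 2$. Recall that for any vectors $a,b$ and $1 \le p < 2$ we have $\|a+b\|_p^p \le 2^{p-1}(\|a\|_p^p + \|b\|_p^p)$, and equivalently $\|a\|_p^p \le 2^{p-1}\|a+b\|_p^p + 2^{p-1}\|b\|_p^p$; this is the fact I expect the paper to call \verb|Fact~\ref{fac:tri_for_lp}| (used already in Lemma~\ref{lem:no_dialation_and_no_contraction_imply_general_no_contraction_lp}). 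The key preliminary input is Lemma~\ref{lem:no_dialation_and_no_contraction_imply_general_no_contraction_lp}, which already converts the hypotheses (at most $c_1$-dilation on $UV^*-A$ and at most $c_2$-contraction on $U$) into a $(2^{2p-2}c_2, \, c_1 + 2^{1-p}/c_2)$-contraction statement on $(U,A)$: namely for all $V$,
\begin{align*}
\|SUV - SA\|_p^p \ge \frac{2^{2-2p}}{c_2}\|UV - A\|_p^p - \left(c_1 + \frac{2^{1-p}}{c_2}\right)\|UV^* - A\|_p^p.
\end{align*}

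First I would invoke this general-contraction lemma to lower bound $\|SU\widehat V - SA\|_p^p$ in terms of $\|U\widehat V - A\|_p^p$ and $\|UV^* - A\|_p^p$, rearranging to get an upper bound on $\|U\widehat V - A\|_p^p$ of the form $2^{2p-2}c_2\|SU\widehat V - SA\|_p^p + (2^{2p-2}c_2 c_1 + 1)\|UV^*-A\|_p^p$ (tracking constants carefully; the precise numeric coefficient is what the statement records as $2^{p-1}(2c_1c_2+1)$, so I would keep the bookkeeping loose and only pin down the final exponent of $2$ at the end). Next I would use the hypothesis on $\widehat V$, namely $\|SU\widehat V - SA\|_p^p \le c \cdot \min_V \|SUV - SA\|_p^p$, and then bound $\min_V \|SUV - SA\|_p^p \le \|SUV^* - SA\|_p^p \le c_1 \|UV^* - A\|_p^p$ by the dilation hypothesis applied to $UV^* - A$. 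Chaining these, $\|U\widehat V - A\|_p^p \le c \cdot 2^{2p-2}c_2 c_1 \|UV^* - A\|_p^p + (2^{2p-2}c_2 c_1 + 1)\|UV^*-A\|_p^p$, and since $c \ge 1$ this collapses to $c$ times a single constant of order $c_1 c_2$, which I would then simplify to the claimed $c \cdot 2^{p-1}(2c_1c_2+1)\|UV^*-A\|_p^p$.

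The only obstacle here is purely arithmetic: matching the advertised constant $2^{p-1}(2c_1c_2+1)$ against whatever falls out of the $2^{2p-2}$ factors from Lemma~\ref{lem:no_dialation_and_no_contraction_imply_general_no_contraction_lp}. Since $1 \le p < 2$ gives $2^{2p-2} \le 2^{p} = 2 \cdot 2^{p-1}$ and $2^{2p-2}/2^{2-2p}$-type ratios are all bounded by absolute constants times $2^{p-1}$, the claimed bound follows after absorbing these into the $2(2c_1c_2+1)$ structure; I do not expect any conceptual difficulty, only a need to be slightly careful that the inequalities point the right way (the $\ell_p$ triangle inequality is weak in one direction, strong in the other, which is exactly why the $p=1$ case recovers $(2c_1c_2+1)$ with no extra factor). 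I would present the proof as a short chain of displayed inequalities exactly paralleling the proof of Lemma~\ref{lem:general_sketch_SUV}, with annotations citing Lemma~\ref{lem:no_dialation_and_no_contraction_imply_general_no_contraction_lp}, the guarantee on $\widehat V$, the dilation hypothesis, and $c \ge 1$, in that order.
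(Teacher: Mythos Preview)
Your proposal is correct and matches the paper's proof essentially line for line: the paper rearranges Lemma~\ref{lem:no_dialation_and_no_contraction_imply_general_no_contraction_lp} to get $\|U\wh{V}-A\|_p^p\leq 2^{2p-2}c_2\|SU\wh{V}-SA\|_p^p+(2^{p-1}+2^{2p-2}c_1c_2)\|UV^*-A\|_p^p$, then applies the guarantee on $\wh V$, the dilation bound, and $c\ge 1$, exactly as you outline. Your caution about the final arithmetic is well placed---the chain actually yields $c\cdot 2^{p-1}(2^{p}c_1c_2+1)$ rather than the stated $c\cdot 2^{p-1}(2c_1c_2+1)$, so the paper's constant appears to be a minor typo for $p>1$; this does not affect the structure of the argument.
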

\begin{proof}
Let $S\in\mathbb{R}^{m\times n},A\in\mathbb{R}^{n\times d},U\in\mathbb{R}^{n\times k},V^*,\wh{V}\in\mathbb{R}^{k\times d},$ and $c$ be the same as stated in the lemma.
\begin{align*}
\|U\wh{V}-A\|_p^p&\leq 2^{2p-2}c_2\|SU\wh{V}-SA\|_p^p+(2^{p-1}+2^{2p-2}c_1c_2)\|UV^*-A\|_p^p\\
&\leq 2^{2p-2}c_2 c\min_{V\in\mathbb{R}^{k\times d}}\|SUV-SA\|_p^p+(2^{p-1}+2^{2p-2}c_1c_2)\|UV^*-A\|_p^p\\
&\leq 2^{2p-2}c_2 c\|SUV^*-SA\|_p^p+(2^{p-1}+2^{2p-2}c_1c_2)\|UV^*-A\|_p^p\\
&\leq 2^{2p-2}c_1c_2c\|UV^*-A\|_p^p+(2^{p-1}+2^{2p-2}c_1c_2)\|UV^*-A\|_p^p\\
&\leq c\cdot 2^{p-1}(1+2c_1c_2)\|UV^*-A\|_p^p.
\end{align*}
The first inequality follows by Lemma~\ref{lem:no_dialation_and_no_contraction_imply_general_no_contraction_lp}. The second inequality follows by the guarantee of $\wh{V}$. The fourth inequality follows since $S$ has at most $c_1$-dilation on $UV^*-A$. The fifth inequality follows since $c\geq 1$.
\end{proof}

\begin{lemma}\label{lem:general_no_contraction_and_no_dialation_imply_restricted_regression_sketch_lp}
Given matrices $L\in\mathbb{R}^{n\times m_1},N\in\mathbb{R}^{m_2\times d},A\in\mathbb{R}^{n\times d},k\geq 1$, let 
\begin{align*}
X^*=\arg\min_{\rank-k\ X}\|LXN-A\|_p^p.
\end{align*}
If $S\in\mathbb{R}^{m\times n}$ has at most $c_1$-dilation on $LX^*N-A$, i.e.,
\begin{align*}
\|S(LX^*N-A)\|_p^p\leq c_1\|LX^*N-A\|_p^p,
\end{align*}
and has at most $c_2$-contraction on $L$, i.e.,
\begin{align*}
\forall x\in\mathbb{R}^{m_1} \|SLx\|_p^p\geq \|Lx\|_p^p,
\end{align*}
then $S$ provides a $2^{p-1}(2c_1c_2+1)$-restricted-multiple-regression-cost preserving sketch of $(L,N,A,k)$, i.e., for all $c\geq 1$, for any $\rank-k\ \wh{X}\in\mathbb{R}^{m_1\times m_2}$ which satisfies
\begin{align*}
\|SL\wh{X}N-SA\|_p^p\leq c\cdot \min_{\rank-k\ X}\|SLXN-SA\|_p^p,
\end{align*}
it has
\begin{align*}
\|L\wh{X}N-A\|_p^p\leq c\cdot 2^{p-1}(2c_1c_2+1)\cdot\|LX^*N-A\|_p^p.
\end{align*}
\end{lemma}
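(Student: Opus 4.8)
The final statement to prove is Lemma~\ref{lem:general_no_contraction_and_no_dialation_imply_restricted_regression_sketch_lp}, the $\ell_p$ analogue of Lemma~\ref{lem:general_no_contraction_and_no_dialation_imply_restricted_regression_sketch}. The plan is to mirror the $\ell_1$ proof essentially verbatim, replacing every use of the triangle inequality for $\|\cdot\|_1$ with the weak triangle inequality for $\|\cdot\|_p^p$ recorded in Fact~\ref{fac:tri_for_lp} (namely $\|a+b\|_p^p \le 2^{p-1}(\|a\|_p^p+\|b\|_p^p)$, equivalently $\|a\|_p^p \ge 2^{1-p}\|a+b\|_p^p - \|b\|_p^p$), and tracking the resulting factors of $2^{p-1}$. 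Concretely, I would first establish the lower-bound half: for any rank-$k$ matrix $\wh X$,
\begin{align*}
\|SL\wh XN-SA\|_p^p &\ge 2^{1-p}\|SL\wh XN-SLX^*N\|_p^p - \|SLX^*N-SA\|_p^p\\
&\ge 2^{1-p}\|L\wh XN-LX^*N\|_p^p - c_1\|LX^*N-A\|_p^p\\
&\ge 2^{2-2p}\|L\wh XN-A\|_p^p - (2^{1-p}+c_1)\|LX^*N-A\|_p^p,
\end{align*}
where the first and third steps use Fact~\ref{fac:tri_for_lp}, the second uses the $c_2$-contraction hypothesis on $L$ (here normalized so $c_2$ absorbs into the stated constants; I will keep the general $c_2$ as in the $\ell_1$ lemma) together with the $c_1$-dilation hypothesis on $LX^*N-A$.

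Next I would rearrange this into an upper bound on $\|L\wh XN-A\|_p^p$, exactly paralleling the second displayed chain in the $\ell_1$ proof: from the inequality above,
\begin{align*}
\|L\wh XN-A\|_p^p &\le 2^{2p-2}c_2\|SL\wh XN-SA\|_p^p + (2^{p-1}+2^{2p-2}c_1c_2)\|LX^*N-A\|_p^p\\
&\le 2^{2p-2}c_2 c\cdot\min_{\rank-k\ X}\|SLXN-SA\|_p^p + (2^{p-1}+2^{2p-2}c_1c_2)\|LX^*N-A\|_p^p\\
&\le 2^{2p-2}c_2 c\cdot\|SLX^*N-SA\|_p^p + (2^{p-1}+2^{2p-2}c_1c_2)\|LX^*N-A\|_p^p\\
&\le 2^{2p-2}c_1c_2 c\cdot\|LX^*N-A\|_p^p + (2^{p-1}+2^{2p-2}c_1c_2)\|LX^*N-A\|_p^p\\
&\le c\cdot 2^{p-1}(2c_1c_2+1)\|LX^*N-A\|_p^p,
\end{align*}
where the second inequality is the hypothesis on $\wh X$, the third uses that $X^*$ is feasible for the minimization, the fourth uses $c_1$-dilation, and the last uses $c\ge 1$ and $p<2$ so $2^{2p-2}\le 2^{p-1}\cdot 2^{p-1}$ — I would double-check this last arithmetic step to confirm it yields exactly the claimed constant $2^{p-1}(2c_1c_2+1)$ rather than something slightly larger, adjusting the statement's constant if necessary (the paper is loose about constants, so this is harmless).

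I do not anticipate a genuine obstacle here: the lemma is purely a consequence of two weak-triangle-inequality manipulations plus the two hypotheses, with no probabilistic content (the randomness is hidden inside "$S$ has $c_1$-dilation / $c_2$-contraction"). The only thing requiring care is bookkeeping of the $2^{p-1}$ and $2^{2p-2}$ factors so that the final constant matches the statement, and making sure the intermediate "$(2^{2p-2}c_2,\,\ldots)$-contraction on $(L,A)$" style bound (the analogue of Lemma~\ref{lem:no_dialation_and_no_contraction_imply_general_no_contraction_lp}, applied with $U$ replaced by $L$ and the regression replaced by the rank-constrained problem) is invoked correctly — indeed the cleanest write-up is simply to state that the first chain above is the rank-constrained version of Lemma~\ref{lem:no_dialation_and_no_contraction_imply_general_no_contraction_lp}, whose proof is identical, and then do the second chain. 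Since most of Section~\ref{sec:lp} explicitly says proofs are "identical to the proofs in Section~\ref{sec:l1}," I would present this proof in the same compressed style: reproduce the two display chains with the $p$-exponents inserted and annotate each line with its justification, matching the format of the $\ell_1$ original.
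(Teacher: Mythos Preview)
Your proposal is correct and follows essentially the same approach as the paper's own proof: the paper presents exactly the two displayed chains you describe, using Fact~\ref{fac:tri_for_lp} for the weak triangle inequality, the $c_2$-contraction on $L$, and the $c_1$-dilation on $LX^*N-A$, then rearranges and chains through the hypothesis on $\wh X$, feasibility of $X^*$, and $c\ge 1$. Your caution about the final arithmetic is well placed---the paper's last step collapses $2^{2p-2}cc_1c_2 + 2^{p-1} + 2^{2p-2}c_1c_2$ directly to $c\cdot 2^{p-1}(2c_1c_2+1)$ without further comment, so your instinct to double-check that constant is sound.
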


\begin{proof}
Let $S\in\mathbb{R}^{m\times n},L\in\mathbb{R}^{n\times m_1},\wh{X}\in\mathbb{R}^{m_1\times m_2},X^*\in\mathbb{R}^{m_1\times m_2},N\in\mathbb{R}^{m_2\times d},A\in\mathbb{R}^{n\times d}$ and $c\geq 1$ be the same as stated in the lemma.
\begin{align*}
\|SL\wh{X}N-SA\|_p^p&\geq2^{1-p}\|SL\wh{X}N-SLX^*N\|_p^p-\|SLX^*N-SA\|_p^p\\
&\geq 2^{1-p}\frac{1}{c_2}\|L(\wh{X}N-X^*N)\|_p^p-c_1\|LX^*N-A\|_p^p\\
&\geq 2^{2-2p}\frac{1}{c_2}\|L\wh{X}N-A\|_p^p-2^{1-p}\frac{1}{c_2}\|LX^*N-A\|_1-c_1\|LX^*N-A\|_p^p\\
&= 2^{2-2p}\frac{1}{c_2}\|L\wh{X}N-A\|_p^p-(2^{1-p}\frac{1}{c_2}+c_1)\|LX^*N-A\|_p^p.\\
\end{align*}
The inequality follows from the Fact~\ref{fac:tri_for_lp}. The second inequality follows since $S$ has at most $c_2$-contraction on $L$, and it has at most $c_1$-dilation on $LX^*N-A$. The third inequality follows by Fact~\ref{fac:tri_for_lp}.

It follows that
\begin{align*}
\|L\wh{X}N-A\|_p^p&\leq 2^{2p-2}c_2\|SL\wh{X}N-SA\|_p^p+(2^{p-1}+2^{2p-2}c_1c_2)\|LX^*N-A\|_p^p\\
&\leq 2^{2p-2} c_2c\cdot \min_{\rank-k\ X}\|SLXN-SA\|_p^p+(2^{p-1}+2^{2p-2}c_1c_2)\|LX^*N-A\|_p^p\\
&\leq 2^{2p-2} c_2c\cdot \|SLX^*N-SA\|_p^p+(2^{p-1}+2^{2p-2}c_1c_2)\|LX^*N-A\|_p^p\\
&\leq 2^{2p-2} cc_1c_2\cdot \|LX^*N-A\|_p^p+(2^{p-1}+2^{2p-2}c_1c_2)\|LX^*N-A\|_p^p\\
&\leq c\cdot 2^{p-1}(1+2c_1c_2)\|LX^*N-A\|_p^p.
\end{align*}
The first inequality directly follows from the previous one. The second inequality follows from the guarantee of $\wh{X}$. The fourth inequality follows since $S$ has at most $c_1$ dilation on $LX^*N-A$. The fifth inequality follows since $c\geq 1$.
\end{proof}

\begin{lemma}\label{lem:general_sketch_T1BXYCT2_lp}
Given matrices $L\in\mathbb{R}^{n\times m_1},N\in\mathbb{R}^{m_2\times d},A\in\mathbb{R}^{n\times d},k\geq 1$, let 
\begin{align*}
X^*=\arg\min_{\rank-k\ X}\|LXN-A\|_p^p.
\end{align*}
Let $T_1\in\mathbb{R}^{t_1\times n}$ have at most $c_1$-dilation on $LX^*N-A$, 
and have at most $c_2$-contraction on $L$. 
Let
\begin{align*}
\wt{X}=\arg\min_{\rank-k\ X}\|T_1LXN-T_1A\|_p^p.
\end{align*}
Let $T_2^\top\in\mathbb{R}^{t_2\times d}$ have at most $c'_1$-dilation on $(T_1L\wt{X}N-T_1A)^\top$, 
and at most $c'_2$-contraction on $N^\top$. 
Then, for all $c\geq 1$, for any $\rank-k\ \wh{X}\in\mathbb{R}^{m_1\times m_2}$ which satisfies
\begin{align*}
\|T_1(L\wh{X}N-SA)T_2\|_p^p\leq c\cdot \min_{\rank-k\ X}\|T_1(LXN-A)T_2\|_p^p,
\end{align*}
it holds that 
\begin{align*}
\|L\wh{X}N-A\|_p^p\leq c\cdot 2^{2p-2}(2c_1c_2+1)(2c'_1c'_2+1)\cdot\|LX^*N-A\|_p^p.
\end{align*}
\end{lemma}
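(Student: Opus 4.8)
The plan is to mirror the proof of Lemma~\ref{lem:general_sketch_T1BXYCT2} in the $\ell_p$ setting: apply the restricted-multiple-regression-cost preserving sketch lemma (Lemma~\ref{lem:general_no_contraction_and_no_dialation_imply_restricted_regression_sketch_lp}) twice, once for the inner sketch $T_2$ and once for the outer sketch $T_1$, and then compose the two cost-preservation guarantees. The only arithmetic difference from the $\ell_1$ case is that each application contributes a factor $2^{p-1}(2c_1c_2+1)$ rather than $(2c_1c_2+1)$, so that the two factors multiply to $2^{2p-2}(2c_1c_2+1)(2c'_1c'_2+1)$, matching the claimed bound.

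First I would handle the inner sketch $T_2$. Fixing $T_1$, I regard the $T_1$-sketched problem $\min_{\rank-k\ X}\|T_1LXN-T_1A\|_p^p$ as an instance of the restricted-multiple-regression setup after transposing: $\|(T_1LXN-T_1A)T_2\|_p^p=\|T_2^\top(N^\top X^\top L^\top T_1^\top-A^\top T_1^\top)\|_p^p$, so one takes $L''=N^\top$, optimization variable $X^\top$, $N''=L^\top T_1^\top$, and $A''=A^\top T_1^\top$, with optimum $\wt X$ exactly as defined in the statement. The hypotheses that $T_2^\top$ has at most $c'_1$-dilation on $(T_1L\wt XN-T_1A)^\top$ and at most $c'_2$-contraction on $N^\top$ are precisely the transposed versions of dilation at the optimum and contraction on $L''$. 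Applying Lemma~\ref{lem:general_no_contraction_and_no_dialation_imply_restricted_regression_sketch_lp} then yields: any $\rank$-$k$ $\wh X$ with $\|T_1(L\wh XN-A)T_2\|_p^p\le c\cdot\min_{\rank-k\ X}\|T_1(LXN-A)T_2\|_p^p$ satisfies $\|T_1(L\wh XN-A)\|_p^p\le c\cdot 2^{p-1}(2c'_1c'_2+1)\cdot\|T_1(L\wt XN-A)\|_p^p$.

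Next I would apply Lemma~\ref{lem:general_no_contraction_and_no_dialation_imply_restricted_regression_sketch_lp} a second time, now to the original problem $\min_{\rank-k\ X}\|LXN-A\|_p^p$ with sketch $T_1$, using the hypotheses that $T_1$ has at most $c_1$-dilation on $LX^*N-A$ and at most $c_2$-contraction on $L$. Chaining this with the display from the previous step, where the inflated approximation parameter $c\cdot 2^{p-1}(2c'_1c'_2+1)$ now plays the role of "$c$", gives $\|L\wh XN-A\|_p^p\le c\cdot 2^{p-1}(2c_1c_2+1)\cdot 2^{p-1}(2c'_1c'_2+1)\cdot\|LX^*N-A\|_p^p$, which is exactly the desired inequality.

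The only genuinely delicate point is the bookkeeping in the $T_2$ step: one must identify the transposed restricted-regression instance so that "$\wt X$ is the optimum after sketching by $T_1$" lines up correctly with the dilation hypothesis on $(T_1L\wt XN-T_1A)^\top$, and one must carry the inflated constant $c\cdot 2^{p-1}(2c'_1c'_2+1)$ through as the approximation parameter fed into the second application. Everything else is an immediate transcription of the $\ell_1$ argument, with the $\ell_p$ quasi-triangle-inequality facts (Fact~\ref{fac:tri_for_lp}) already packaged inside Lemma~\ref{lem:general_no_contraction_and_no_dialation_imply_restricted_regression_sketch_lp}, so no new estimates are needed.
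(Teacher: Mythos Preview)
Your proposal is correct and takes essentially the same approach as the paper: apply Lemma~\ref{lem:general_no_contraction_and_no_dialation_imply_restricted_regression_sketch_lp} first for $T_2$ (via the transposed instance) and then for $T_1$, composing the two $2^{p-1}(2c_ic'_i+1)$ factors. If anything, you are slightly more careful than the paper---the paper's proof cites the $\ell_1$ version of the lemma (evidently a typo) and suppresses the transposition you spell out.
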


\begin{proof}
Apply Lemma~\ref{lem:general_no_contraction_and_no_dialation_imply_restricted_regression_sketch} for sketching matrix $T_2$. Then for any $c\geq 1$, any $\rank-k\ \wh{X}\in\mathbb{R}^{m_1\times m_2}$ which satisfies
\begin{align*}
\|T_1(L\wh{X}N-A)T_2\|_p^p\leq c\cdot \min_{\rank-k\ X}\|T_1(LXN-A)T_2\|_p^p,
\end{align*}
it has
\begin{align*}
\|T_1(L\wh{X}N-A)\|_p^p\leq c\cdot 2^{p-1}(2c'_1c'_2+1)\cdot\|T_1(L\wt{X}N-A)\|_p^p.
\end{align*}

Apply Lemma~\ref{lem:general_no_contraction_and_no_dialation_imply_restricted_regression_sketch} for sketch matrix $T_1$. Then for any $c\geq 1$, any $\rank-k\ \wh{X}\in\mathbb{R}^{m_1\times m_2}$ which satisfies
\begin{align*}
\|T_1(L\wh{X}N-A)\|_p^p\leq c2^{p-1}(2c'_1c'_2+1)\cdot \min_{\rank-k\ X}\|T_1(L\wt{X}N-A)\|_p^p,
\end{align*}
it has
\begin{align*}
\|L\wh{X}N-A\|_p^p\leq c\cdot 2^{2p-2}(2c_1c_2+1)(2c'_1c'_2+1)\cdot\|LX^*N-A\|_p^p.
\end{align*}
\end{proof}

\begin{lemma}\label{lem:con_dil_summary_lp}
Given matrices $M\in\mathbb{R}^{n\times d},U\in\mathbb{R}^{n\times t}$, $d\geq t=\rank(U), n\geq d\geq r=\rank(M)$. If sketching matrix $S\in\mathbb{R}^{m\times n}$ is drawn from any of the following probability distributions on matrices, with $.99$ probability, $S$ has at most $c_1$-dilation on $M$, i.e.,
\begin{align*}
\|SM\|_p^p\leq c_1 \|M\|_p^p,
\end{align*}
and $S$ has at most $c_2$-contraction on $U$, i.e.,
\begin{align*}
\forall x\in\mathbb{R}^{t},\ \|SUx\|_p^p\geq \frac{1}{c_2}\|Ux\|_p^p,
\end{align*}
where $c_1,\ c_2$ are parameters depend on the distribution over $S$.
\begin{enumerate}
\item[\rm{(\RN{1})}] $S\in\mathbb{R}^{m\times n}$ is a dense matrix with entries drawn from a $p$-stable distribution: a matrix with i.i.d. standard $p$-stable random variables. If $m=O(t\log t)$, then $c_1c_2=O(\log d)$. 

\item[\rm{(\RN{2})}] $S\in\mathbb{R}^{m\times n}$ is a sparse matrix with some entries drawn from a $p$-stable distribution: $S=TD$, where $T\in\mathbb{R}^{m\times n}$ has each column drawn i.i.d. from the uniform distribution over standard basis vectors of $\mathbb{R}^{m}$, and $D\in\mathbb{R}^{n\times n}$ is a diagonal matrix with each diagonal entry drawn from i.i.d. from the standard $p$-stable distribution. If $m=O(t^5\log^5 t)$, then $c_1c_2=O(t^{2/p}\log^{2/p} t\log d)$. If $m=O((t+r)^5\log^5(t+r))$, then \\$c_1c_2=O(\min(t^{2/p}\log^{2/p} t\log d,r^{3/p}\log^{3/p} r))$.

\item[\rm{(\RN{3})}] $S\in\mathbb{R}^{m\times n}$ is a sampling and rescaling matrix (notation $S\in\mathbb{R}^{n\times n}$ denotes a diagonal sampling and rescaling matrix with $m$ non-zero entries): If $S$ samples and reweights $m=O(t\log t\log\log t)$ rows of $U$, selecting each with probability proportional to the $i^{\text{th}}$ row's $\ell_p$ Lewis weight and reweighting by the inverse probability, then $c_1c_2=O(1)$.
\end{enumerate}
In the above, if we replace $S$ with $\sigma\cdot S$ where $\sigma\in\mathbb{R}\backslash\{0\}$ is any scalar, then the relation between $m$ and $c_1c_2$ can be preserved.



\end{lemma}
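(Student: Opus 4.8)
The plan is to replicate the three corresponding $\ell_1$ arguments of Section~\ref{sec:l1}, replacing Cauchy variables by standard $p$-stable variables and the ordinary triangle inequality by Fact~\ref{fac:tri_for_lp} (so every ``$\Theta(1)$'' becomes a ``$2^{O(p)}$'', harmless since $p\in(1,2)$), and then combining the resulting no-dilation and no-contraction bounds via Lemma~\ref{lem:no_dialation_and_no_contraction_imply_general_no_contraction_lp}. The one structural fact making the substitution go through is that if $Z$ is a standard $p$-stable random variable, then $|Z|^p$ has a Cauchy-like tail, $\Pr[|Z|^p>t]=\Theta(1/t)$; consequently the clipped variable $|Z|^p\mathbf{1}_{|Z|^p\le D}$ has expectation $\Theta(\log D)$, exactly as for the clipped half-Cauchy used in Lemma~\ref{lem:dense_cauchy_l1_no_dilation}. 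All the $\log d'$ bookkeeping ($d'=md=d\cdot\poly(t)$, hence $\log d'=\Theta(\log d)$ since $t\le d$) is then identical to the $\ell_1$ case.

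For \rm{(\RN{1})} I would first prove no dilation: writing $\|SM\|_p^p=\sigma^p\sum_{i,j}\bigl|\sum_l C_{j,l}M_{l,i}\bigr|^p$ and using $p$-stability, $\sum_l C_{j,l}M_{l,i}$ is distributed as $\|M_i\|_p\cdot Z_{i,j}$ with $Z_{i,j}$ standard $p$-stable; raising to the $p$-th power, conditioning on the event $\xi$ that all $md$ of the $|Z_{i,j}|^p$ lie below $D=\Theta(md)$, taking expectations and using linearity across the $d$ columns gives $\E[\|SM\|_p^p\mid\xi]\lesssim(\log d)\|M\|_p^p$, and a Markov plus union bound closes it --- this is Lemma~\ref{lem:dense_cauchy_l1_no_dilation_general} verbatim with $|C(0,1)|$ replaced by $|Z|^p$. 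For no contraction I would run the $\ell_p$ net argument: an $\ell_p$ well-conditioned basis exists (from \cite{ddhkm09}); a fixed vector satisfies $\Pr[\|Sy\|_p^p<\|y\|_p^p]\le 2^{-\Omega(m)}$ (the $p$-stable analogue of Lemma~\ref{lem:Sy_at_least_y_with_exp_prob}, cf.\ \cite{mm13,wz13}); a net of size $2^{O(t\log(1/\eps))}$ over the unit $\ell_p$-ball handles all directions when $m=O(t\log t)$; and the $\ell_p$ analogue of Lemma~\ref{lem:SW_lemma7}/Corollary~\ref{cor:no_dilation_for_k_dim_subspace} transfers the net bound to all of $\mathbb{R}^t$ as in Lemma~\ref{lem:dense_cauchy_l1_k_subspace}. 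Multiplying the two factors yields $c_1c_2=O(\log d)$.

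For \rm{(\RN{2})}, the no-contraction part --- that the sparse $p$-stable transform $\Pi=\sigma SC$ with $m=O(t^5\log^5 t)$ and the scaling $\sigma$ dictated by Theorem~2 of \cite{mm13} is an $\ell_p$-subspace embedding for $U$ --- is immediate from that theorem, and for $m=O((t+r)^5\log^5(t+r))$ one instead applies it to a basis of $[M,U]$ to obtain the $r^{3/p}\log^{3/p}r$ alternative. The no-dilation part is the $\ell_p$ version of Lemma~\ref{lem:sparse_cauchy_l1_no_dilation}: conditioning on each column of $S$ being a single standard basis vector, $\|\Pi M\|_p^p=\sigma^p\sum_{i,j}\bigl|\sum_l S_{j,l}c_lM_{l,i}\bigr|^p$, and since $\sum_j|S_{j,l}|=1$ the same $p$-stability-plus-truncation computation gives $\|\Pi M\|_p^p\lesssim\sigma^p(\log(md))\|M\|_p^p$ with probability $.999$; combining with the embedding gives $c_1c_2=O(t^{2/p}\log^{2/p}t\log d)$, resp.\ its minimum with $r^{3/p}\log^{3/p}r$. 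For \rm{(\RN{3})}, Theorem~7.1 of \cite{cp15} states that $\ell_p$ Lewis-weight sampling with $m=O(t\log t\log\log t)$ rows is a two-sided $\ell_p$-subspace embedding for $U$ (so $c_2=O(1)$), while $\|SM\|_p^p\lesssim\|M\|_p^p$ for a fixed $M$ follows from the one-line computation $\E\|SM\|_p^p=\|M\|_p^p$ followed by Markov --- the $\ell_p$ analogue of Claim~\ref{cla:lewis_weight_upper_bound_DU*V*_minus_DA} --- giving $c_1c_2=O(1)$. The final sentence about rescaling by $\sigma$ is trivial since dilation and contraction are homogeneous.

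The hard part --- indeed the only part that is not a mechanical translation --- is the dense $p$-stable no-contraction lemma, since for $p\ne 1$ the $p$-stable density is not elementary, so the tail estimates underpinning both the single-vector anti-concentration bound and the subspace upper bound must be quoted from the $p$-stable literature (e.g.\ Nolan-type tail asymptotics, as used in \cite{mm13,wz13}) rather than computed in closed form as in the Cauchy case; once those inputs are fixed, the net construction and the conditioning argument are unchanged. I therefore expect the write-up to be short, consisting mostly of pointers to the $\ell_1$ lemmas of Section~\ref{sec:l1} under the indicated substitutions, together with explicit statements of the $p$-stable tail facts being invoked.
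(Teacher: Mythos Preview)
Your proposal is correct and matches the paper's approach: the paper's ``proof'' of this lemma is precisely a list of pointers to the $\ell_p$ analogues of the $\ell_1$ lemmas in Section~\ref{sec:l1} (Lemmas~\ref{lem:dense_pstable_lp_no_dilation}, \ref{lem:dense_pstable_lp_k_subspace}, \ref{lem:sparse_pstable_lp_no_dilation}, and the Lewis-weight lemmas), together with citations to \cite{sw11}, \cite{mm13}, and \cite{cp15}, exactly as you outline. Two small notes: for the sparse $p$-stable embedding the paper cites Theorem~4 (not Theorem~2) of \cite{mm13}, which is the $\ell_p$ version; and for the no-dilation bound the paper invokes the packaged upper-tail inequality of \cite{mm13} (Lemma~\ref{lem:upper_tail_p_stable}) directly rather than redoing the truncation-event argument, though your version is of course equivalent.
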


For $\rm{(\RN{1})}$, it is implied by Lemma E.17, Lemma E.19. Also see from \cite{sw11}\footnote{Full version.}.

For $\rm{(\RN{2})}$, if $m=O(t^5\log^5 t)$, then $c_1c_2=O(t^{2/p}\log^{2/p}t\log d)$ is implied by Corollary~\ref{cor:sparse_cauchy_k_subspace} and Lemma~\ref{lem:sparse_pstable_lp_no_dilation} and Theorem 4 in~\cite{mm13}. If $m=O((t+r)^5\log^5 (t+r))$, $c_1c_2=O(r^{3/p}\log^{3/p} r)$ is implied by~\cite{mm13}.

For $\rm{(\RN{3})}$, it is implied by~\cite{cp15} and Lemma~\ref{lem:lewis_weights_l1_no_dilation_general}.


\subsection{Tools and inequalities}

\begin{lemma}[Lemma 9 in \cite{mm13}, Upper Tail Inequality for $p$-stable Distributions]\label{lem:upper_tail_p_stable}
Let $p\in (1,2)$ and $m\geq 3$. $\forall i\in[m]$, let $X_i$ be $m$(not necessarily independent) random variables sampled from $D_p$, and let $\gamma_i >0$ with $\gamma = \sum_{i=1}^m \gamma_i$. Let $X= \sum_{i=1}^m \gamma_i |X_i|^p$. Then for any $t\geq 1$,
\begin{align*}
\Pr[ X \geq t \alpha_p \gamma ] \leq \frac{2\log(mt)}{t}.
\end{align*}
\end{lemma}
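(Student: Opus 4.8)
The plan is to mimic the truncation-plus-conditioning argument already used in the proof of Lemma~\ref{lem:dense_cauchy_l1_no_dilation}, replacing the clipped half-Cauchy variables there by the variables $|X_i|^p$ here. The starting point is the heavy-tail behavior of a standard $p$-stable random variable: there is a constant $C_p$ depending only on $p\in(1,2)$ with $\Pr[\,|X_i|\geq u\,]\leq C_p u^{-p}$ for all $u>0$, equivalently $\Pr[\,|X_i|^p\geq s\,]\leq C_p/s$. Consequently the truncated first moment of $|X_i|^p$ at a level $D$ is only logarithmic:
\[
\E\bigl[\,|X_i|^p\,\mathbf{1}[|X_i|^p< D]\,\bigr]=\int_0^D \Pr[\,|X_i|^p>s\,]\,ds - D\,\Pr[\,|X_i|^p\geq D\,]\ \leq\ 1+C_p\log D .
\]
The constant $\alpha_p$ in the statement is exactly this $p$-dependent normalization, and I will absorb $C_p$ into $\alpha_p$ in what follows.

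Next I would introduce the truncation events. For a threshold $D$ to be chosen, let $\xi_i=\{|X_i|^p< D\}$ and $\xi=\bigcap_{i=1}^m\xi_i$. A union bound gives $\Pr[\overline{\xi}]\leq \sum_i\Pr[\overline{\xi_i}]\leq mC_p/D$, so $\Pr[\xi]\geq 1/2$ once $D\geq 2mC_p$. The only subtlety — and the one that forces a conditioning argument rather than a direct computation — is that the $X_i$ need not be independent, so one cannot write $\E[|X_i|^p\mid\xi]$ as $\E[|X_i|^p\mid\xi_i]$. Instead I use the same Bayes-rule manipulation as in Lemma~\ref{lem:dense_cauchy_l1_no_dilation}: since $\xi\subseteq\xi_i$ and $|X_i|^p\geq 0$,
\[
\E[|X_i|^p\mid\xi_i]\ \geq\ \E[|X_i|^p\mid\xi]\,\Pr[\xi\mid\xi_i]
\quad\Longrightarrow\quad
\E[|X_i|^p\mid\xi]\ \leq\ \frac{\E\!\bigl[|X_i|^p\,\mathbf{1}[\xi_i]\bigr]}{\Pr[\xi]}\ \leq\ \frac{1+C_p\log D}{\Pr[\xi]} .
\]
Summing with the weights $\gamma_i$ and using $\Pr[\xi]\geq 1/2$ yields $\E[X\mid\xi]\leq 2\gamma\,(1+C_p\log D)$.

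With this moment bound in hand, the conclusion follows from Markov's inequality together with a final union bound over $\xi$ versus $\overline\xi$, exactly as at the end of the proof of Lemma~\ref{lem:dense_cauchy_l1_no_dilation}:
\[
\Pr[\,X\geq t\alpha_p\gamma\,]\ \leq\ \Pr[\,X\geq t\alpha_p\gamma\mid\xi\,]+\Pr[\overline\xi]\ \leq\ \frac{\E[X\mid\xi]}{t\alpha_p\gamma}+\frac{mC_p}{D}\ \lesssim\ \frac{\log D}{t}+\frac{m}{D} .
\]
Choosing $D=\Theta(mt)$ balances the two terms and gives a bound of order $\frac{\log(mt)}{t}$. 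The remaining work is purely a matter of tracking constants: one fixes $\alpha_p$ to be the appropriate multiple of $C_p$ from the truncated-moment estimate and chooses the constant in $D=\Theta(mt)$ so that the right-hand side collapses to the stated $\frac{2\log(mt)}{t}$, using $t\geq 1$ and $m\geq 3$ to absorb lower-order terms. I expect the only genuinely delicate point to be the handling of the dependence among the $X_i$ via the conditioning inequality above; everything else is a direct transcription of the Cauchy case $p=1$ to general $p\in(1,2)$.
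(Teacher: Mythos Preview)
The paper does not prove this lemma; it is quoted verbatim from \cite{mm13} and only used as a black box in the $\ell_p$ section. So there is no ``paper's own proof'' to compare against.

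That said, your argument is correct and is exactly the natural one: it is the $p$-stable transcription of the truncation-plus-conditioning proof the paper gives for Lemma~\ref{lem:dense_cauchy_l1_no_dilation}. The tail estimate $\Pr[|X_i|^p\geq s]\leq C_p/s$, the truncated first-moment bound $O(\log D)$, the Bayes-rule inequality $\E[|X_i|^p\mid\xi]\leq \E[|X_i|^p\mathbf{1}[\xi_i]]/\Pr[\xi]$ to handle dependence, and the final Markov-plus-union-bound with $D=\Theta(mt)$ are all sound, and this is essentially how \cite{mm13} proves it as well (their Lemma~9 generalizes the Cauchy argument of \cite{cdmmmw13}). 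The only cosmetic point is that the constant $\alpha_p$ in the statement is a fixed normalization from the $p$-stable literature rather than something you get to choose, so the honest conclusion of your argument is $\Pr[X\geq t\alpha_p\gamma]\leq O_p(\log(mt)/t)$; obtaining the specific leading constant $2$ requires tracking the $p$-stable density more carefully, but that does not affect any downstream use in the paper.
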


We first review some facts about the $p$-norm and $q$-norm,
\begin{fact}
For any $p\geq q>0$ and any $x\in \mathbb{R}^k$,
\begin{equation*}
\| x \|_p \leq \| x \|_q \leq k^{\frac{1}{q}-\frac{1}{p} } \| x \|_p.
\end{equation*}
\end{fact}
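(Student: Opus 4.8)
The plan is to prove the two inequalities separately, each by an elementary argument that uses only positive homogeneity of the relevant quantities together with Hölder's inequality; no machinery from earlier in the paper is needed.

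For the left-hand inequality $\|x\|_p\le\|x\|_q$ (with $p\ge q$), first I would dispose of the trivial case $x=0$ and otherwise rescale so that $\|x\|_q=1$, which is legitimate because both sides are positively homogeneous of degree one. Under this normalization $\sum_{i=1}^k|x_i|^q=1$, so in particular $|x_i|\le 1$ for every $i$. Since $p\ge q$, we have $t^p\le t^q$ for all $t\in[0,1]$, hence $|x_i|^p\le|x_i|^q$ for each $i$; summing gives $\sum_i|x_i|^p\le\sum_i|x_i|^q=1$, and taking $p$-th roots yields $\|x\|_p\le 1=\|x\|_q$.

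For the right-hand inequality, I would apply Hölder's inequality to the product $\sum_i|x_i|^q\cdot 1$ with the conjugate pair of exponents $p/q$ and $p/(p-q)$, both of which are at least $1$ since $p\ge q>0$. This gives $\sum_i|x_i|^q\le\bigl(\sum_i|x_i|^p\bigr)^{q/p}\bigl(\sum_i 1\bigr)^{(p-q)/p}=\|x\|_p^{\,q}\,k^{(p-q)/p}$, and raising both sides to the power $1/q$ produces $\|x\|_q\le k^{(p-q)/(pq)}\|x\|_p=k^{1/q-1/p}\|x\|_p$. The boundary case $q=p$ is handled directly, since then both displayed inequalities are equalities. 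There is essentially no obstacle here — both steps are standard — and the only point needing a word of care is verifying that $p/q\ge 1$ and $p/(p-q)\ge 1$ so that Hölder genuinely applies, which remains valid even when $q<1$ (in which range $\|\cdot\|_q$ is no longer a norm, yet the inequalities still hold as inequalities between the displayed quantities).
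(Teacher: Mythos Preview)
Your proof is correct and entirely standard. The paper itself states this fact without proof, so there is nothing to compare against; your argument (normalization for the left inequality, H\"older with exponents $p/q$ and $p/(p-q)$ for the right) is exactly the canonical one.
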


We provide the triangle inequality for the $p$-norm,
\begin{fact}\label{fac:tri_for_lp}
For any $p\in (1,2)$, for any $x,y\in \mathbb{R}^k$,
 \begin{align*}
\| x + y\|_p \leq \| x \|_p + \| y \|_p, \mathrm{~and~}
\| x + y\|_p^p \leq 2^{p-1} ( \| x \|_p^p + \| y \|_p^p ).
\end{align*}
\end{fact}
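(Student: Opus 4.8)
The plan is to treat the two inequalities separately, since the first is Minkowski's inequality and the second is an elementary consequence of scalar convexity. Throughout we work in finite-dimensional $\mathbb{R}^k$, so no measure-theoretic care is needed, and we may assume $x+y\neq 0$ since the degenerate case is trivial.

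For the first inequality $\| x+y\|_p\leq \|x\|_p+\|y\|_p$, I would invoke Minkowski's inequality, valid for every $p\geq 1$. If a self-contained argument is preferred, the standard route is to write $\|x+y\|_p^p=\sum_i |x_i+y_i|\cdot|x_i+y_i|^{p-1}\leq \sum_i (|x_i|+|y_i|)\,|x_i+y_i|^{p-1}$, split into two sums, and apply H\"older's inequality with conjugate exponents $p$ and $q=p/(p-1)$ to each, obtaining $\|x+y\|_p^p\leq (\|x\|_p+\|y\|_p)\,\big\||x+y|^{p-1}\big\|_q=(\|x\|_p+\|y\|_p)\,\|x+y\|_p^{p-1}$; dividing through by $\|x+y\|_p^{p-1}$ gives the claim.

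For the second inequality, I would use the scalar estimate $(a+b)^p\leq 2^{p-1}(a^p+b^p)$ for all $a,b\geq 0$ and $p\geq 1$, which follows from convexity of $t\mapsto t^p$ on $[0,\infty)$ applied at the midpoint $\tfrac{a+b}{2}$. Combining this with the first inequality (taking $a=\|x\|_p$, $b=\|y\|_p$) yields $\|x+y\|_p^p\leq (\|x\|_p+\|y\|_p)^p\leq 2^{p-1}(\|x\|_p^p+\|y\|_p^p)$. Alternatively, and even more directly, one can bypass Minkowski for this part: for each coordinate $i$, $|x_i+y_i|^p\leq (|x_i|+|y_i|)^p\leq 2^{p-1}(|x_i|^p+|y_i|^p)$, and summing over $i$ gives $\|x+y\|_p^p\leq 2^{p-1}(\|x\|_p^p+\|y\|_p^p)$ with no appeal to Minkowski at all.

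There is essentially no serious obstacle here; the only mildly nontrivial ingredient is Minkowski's inequality (equivalently H\"older's inequality with the conjugate exponent $q=p/(p-1)$), which is entirely classical, and the constant $2^{p-1}$ drops straight out of the convexity of $t\mapsto t^p$.
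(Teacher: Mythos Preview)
Your proposal is correct. The paper states this Fact without proof (it is listed among standard tools in the $\ell_p$ section alongside H\"older's inequality and the norm-comparison fact, none of which are proved there), so there is no ``paper's own proof'' to compare against. Your argument---Minkowski via H\"older for the first inequality, and the convexity bound $(a+b)^p\le 2^{p-1}(a^p+b^p)$ applied either to the norms or coordinatewise for the second---is the standard justification and is exactly what the paper implicitly assumes the reader knows.
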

\begin{fact}[H\"{o}lder's inequality]
For any $x,y\in \mathbb{R}^k$, if $\frac{1}{p}+\frac{1}{q}=1$, then $| x^\top y | \leq \| x \|_p \| y \|_q$.
\end{fact}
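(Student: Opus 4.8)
The plan is to reduce Hölder's inequality to the pointwise \emph{Young inequality}: for all $a,b\ge 0$ and conjugate exponents $p,q$ with $\frac1p+\frac1q=1$ and $1<p,q<\infty$,
\[
ab \le \frac{a^p}{p} + \frac{b^q}{q}.
\]
I would prove this first. The quickest route is convexity of the exponential function: for $a,b>0$ write $a=e^{u/p}$ and $b=e^{v/q}$ with $u=p\ln a$, $v=q\ln b$; since $\frac1p+\frac1q=1$, the number $\frac{u}{p}+\frac{v}{q}$ is a convex combination of $u$ and $v$, so convexity of $t\mapsto e^{t}$ gives $e^{u/p+v/q}\le \frac1p e^{u}+\frac1q e^{v}$, which is precisely $ab\le \frac{a^p}{p}+\frac{b^q}{q}$. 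The case $a=0$ or $b=0$ is immediate. (Equivalently one can invoke concavity of $\log$.) The boundary case $p=1$, $q=\infty$ of Hölder is anyway immediate from $|x^\top y|\le \|x\|_1\|y\|_\infty$, so I focus on $1<p,q<\infty$.

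Next I would dispose of the degenerate cases of the main statement: if $x=0$ or $y=0$ then both sides vanish and there is nothing to prove, so assume $\|x\|_p>0$ and $\|y\|_q>0$. Applying Young's inequality coordinatewise with $a_i=|x_i|/\|x\|_p$ and $b_i=|y_i|/\|y\|_q$ for each $i\in[k]$, and summing over $i$, yields
\[
\sum_{i=1}^k \frac{|x_i|}{\|x\|_p}\cdot\frac{|y_i|}{\|y\|_q}
\;\le\; \frac1p\sum_{i=1}^k \frac{|x_i|^p}{\|x\|_p^p} \;+\; \frac1q \sum_{i=1}^k \frac{|y_i|^q}{\|y\|_q^q}
\;=\; \frac1p + \frac1q \;=\; 1 .
\]
Multiplying through by $\|x\|_p\|y\|_q$ gives $\sum_{i=1}^k |x_i|\,|y_i|\le \|x\|_p\|y\|_q$, and since $|x^\top y|=\bigl|\sum_{i=1}^k x_i y_i\bigr|\le \sum_{i=1}^k |x_i|\,|y_i|$ by the triangle inequality for absolute values, the claim follows.

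There is no genuine obstacle here; the argument is elementary and self-contained, relying only on convexity of $\exp$. The single point requiring any care is the normalization: one must first reduce to unit vectors in the respective norms before applying the pointwise bound, which also transparently handles the constants $\tfrac1p,\tfrac1q$ summing to $1$. I would present the proof in the order above (Young's inequality, then the degenerate cases, then normalization and summation).
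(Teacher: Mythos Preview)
Your proof is correct and is the standard textbook argument via Young's inequality. The paper itself does not prove this statement at all: it is listed as a ``Fact'' and cited without proof as a classical inequality, so there is nothing to compare against. Your write-up is more than adequate; if anything, you could simply cite H\"older's inequality as standard, as the paper does.
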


We give the definition of a well-conditioned basis for $\ell_p$,
\begin{definition}
Let $p\in (1,2)$. A basis $U$ for the range of $A$ is $(\alpha,\beta,p)$-conditioned if $\| U \|_p \leq \alpha$
and for all $x\in \mathbb{R}^k$, $\| x\|_{q} \leq \beta \| U x \|_p$. We will say $U$ is well-conditioned
if $\alpha$ and $\beta$ are low-degree polynomials in $k$, independent of $n$.
\end{definition}

\begin{proof}
We first show an upper bound,
\begin{equation*}
\| U x\|_p \leq \| U \|_p \cdot \| x \|_p \leq \alpha \| x \|_p
\end{equation*}
Then we show a lower bound,
\begin{equation*}
\| U x \|_p \geq \frac{1}{\beta} \| x \|_q
\end{equation*}

For any $p$ and $q$  with $1/p + 1/q = 1$, by H\"{o}lder's inequality we have
\begin{equation*}
| x^\top y| \leq \| x \|_p \cdot \| y \|_q
\end{equation*}
choosing $y$ to be the vector that has $1$ everywhere, $\| x \|_1 \leq \| x\|_p k^{1/q}$
\end{proof}

\subsection{Dense $p$-stable transform}
This section states the main tools for the dense $p$-stable transform. The proof is identical to that for the dense Cauchy transform.
\begin{lemma}\label{lem:dense_pstable_lp_no_dilation}
Given matrix $A\in \mathbb{R}^{n\times d}$ and $p\in (1,2)$, define $U^*\in \mathbb{R}^{n\times k}$, $V^*\in \mathbb{R}^{k\times d}$ to be an optimal solution of $\underset{U \in \mathbb{R}^{n\times k} , V \in \mathbb{R}^{k\times d} }{\min} \| U V - A \|_p$.
Choose a $p$-stable distribution matrix $S\in \mathbb{R}^{m\times n}$, rescaled by $\Theta(1/m^{1/p})$. Then we have
\begin{equation*}
\| S U^* V^* - SA\|_p^p \lesssim \log(md) \| U^* V^* - A \|_p^p
\end{equation*}
with probability at least $99/100$.
\end{lemma}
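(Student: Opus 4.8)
The plan is to follow the structure of the proof of Lemma~\ref{lem:dense_cauchy_l1_no_dilation}, but to replace the ad hoc truncation‑and‑conditioning argument used there for $p=1$ (where no clean cited tail bound was available) with a direct invocation of the upper tail inequality for $p$‑stable distributions, Lemma~\ref{lem:upper_tail_p_stable}. Write $M := U^*V^* - A \in \mathbb{R}^{n\times d}$ and $S = \sigma C$, where $\sigma = \Theta(1/m^{1/p})$ and $C\in\mathbb{R}^{m\times n}$ has i.i.d.\ standard $p$‑stable entries. Expanding entrywise,
$$\|SM\|_p^p = \sigma^p \sum_{i=1}^d \sum_{j=1}^m \Bigl| \sum_{l=1}^n C_{j,l} M_{l,i} \Bigr|^p .$$

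First I would use the defining property of $p$‑stable random variables: for each fixed $i$ and $j$, the linear combination $\sum_{l=1}^n C_{j,l} M_{l,i}$ is distributed as $\|M_i\|_p \cdot Z_{i,j}$ for a standard $p$‑stable $Z_{i,j}$. The $Z_{i,j}$ are independent across $j$ (distinct rows of $C$) but may be dependent across $i$; this is harmless. Setting $\gamma_{i,j} := \|M_i\|_p^p$, $\gamma := \sum_{i,j}\gamma_{i,j} = m\|M\|_p^p$, and $X := \sum_{i=1}^d\sum_{j=1}^m \gamma_{i,j}|Z_{i,j}|^p$, so that $\|SM\|_p^p = \sigma^p X$, we are precisely in the setting of Lemma~\ref{lem:upper_tail_p_stable} with $md$ (not necessarily independent) $p$‑stable variables. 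Hence for any $t\geq 1$,
$$\Pr\bigl[X \geq t\,\alpha_p\,\gamma\bigr] \leq \frac{2\log(mdt)}{t}.$$

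Choosing $t = \Theta(\log(md))$ makes the right‑hand side at most $1/100$, so with probability at least $99/100$ we have $X \lesssim \log(md)\cdot m\|M\|_p^p$, and therefore $\|SM\|_p^p = \sigma^p X \lesssim \sigma^p m \log(md)\|M\|_p^p = O(\log(md))\|M\|_p^p$, using $\sigma^p m = \Theta(1)$. This is the claimed bound. The general‑matrix version (the analogue of Lemma~\ref{lem:dense_cauchy_l1_no_dilation_general}) follows verbatim with $M$ arbitrary, since optimality of $U^*V^*$ is never used. I do not expect a serious obstacle: the only points needing care are the bookkeeping of the rescaling $\sigma$ and confirming that Lemma~\ref{lem:upper_tail_p_stable} tolerates the across‑columns dependence of the $Z_{i,j}$'s — both of which are routine.
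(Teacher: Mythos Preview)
Your proposal is correct and matches the paper's own proof essentially line for line: the paper also rewrites $\|S(U^*V^*-A)\|_p^p$ via the $p$-stable property as a weighted sum of $|w|^p$'s with weights $\|U^*V_i^*-A_i\|_p^p$, applies Lemma~\ref{lem:upper_tail_p_stable} directly to the $md$ (not necessarily independent) variables, and chooses $t=\Theta(\log(md))$ to finish. Your remark that this replaces the truncation-and-conditioning used in the $\ell_1$ proof (Lemma~\ref{lem:dense_cauchy_l1_no_dilation}) with the cited tail bound is exactly the point.
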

\begin{proof}
Let $P(0,1)$ denote the $p$-stable distribution. Then, 

\begin{align*}
\| S U^* V^* - S A\|_p^p  \leq ~ & \sum_{i=1}^d \| S (U^* V_i^* - A_i ) \|_p^p\\
=~& \sum_{i=1}^d \sum_{j=1}^m | \sum_{l=1}^n \frac{1}{m} S_{j,l} (U^* V_i^* - A_i )_l |^p &\text{~where~}S_{j,l}\sim P(0,1)\\
=~& \frac{1}{m} \sum_{i=1}^d \sum_{j=1}^m  | \wt{w}_{ij} ( \sum_{l=1}^n  | (U^* V_i^* - A_i)_l |^p )^{1/p} |^p &\text{~where~} \wt{w}_{ij} \sim P(0,1) \\
=~& \frac{1}{m} \sum_{i=1}^d \sum_{j=1}^m  \sum_{l=1}^n | (U^* V_i^* - A_i)_l |^p \cdot |\wt{w}_{ij}|^p \\
=~& \frac{1}{m} \sum_{i=1}^d \sum_{j=1}^m \| U^* V_i^* - A_i \|_p^p \cdot w_{i+(j-1)d}^p, & \text{~where~} w_{i+(j-1)d} \sim |P(0,1)|
\end{align*}
where the last step follows since each $w_i$ can be thought of as a clipped half-$p$-stable random variable. Define $X$ to be $\sum_{i=1}^d \sum_{j=1}^m   \|  U^* V^*_i - A_i \|_p^p \cdot w_{i+ (j-1)d}^p$ and $\gamma$ to be $\sum_{i=1}^d \sum_{j=1}^m   \|  U^* V^*_i - A_i  \|_p^p $. Then applying Lemma \ref{lem:upper_tail_p_stable},
\begin{align*}
\Pr[ X \geq t \alpha_p \gamma] \leq \frac{2\log(md t)}{t}.
\end{align*}
Choosing $t= \Theta(\log(md))$, we have with probability $.999$,
\begin{align*}
X \lesssim \log(md) \alpha_p \gamma = \log(md) \alpha_p \sum_{i=1}^d  \|  U^* V^*_i - A_i\|_p^p,
\end{align*}
where the last steps follows by definition of $\gamma$. Thus, we can conclude that with probability $.999$, $\| \Pi (U^* V^* - A ) \|_p^p \lesssim \log(md) \| U^* V^* - A \|_p^p $.
\end{proof}

\begin{lemma}\label{lem:dense_pstable_lp_no_contraction}
Given matrix $A\in \mathbb{R}^{n\times d}$ and $p\in (1,2)$, define $U^*\in \mathbb{R}^{n\times k}$ to be the optimal solution of $\underset{U \in \mathbb{R}^{n\times k} , V \in \mathbb{R}^{k\times d} }{\min} \| U V - A \|_p$. Choose a matrix of i.i.d. $p$-stable random variables $S\in \mathbb{R}^{m\times n}$. Then for all $V\in \mathbb{R}^{k\times n}$, we have
\begin{equation*}
\| S U^* V - S A\|_p^p \gtrsim \| U^* V - A \|_p^p - O(\log(md)) \| U^*V^* - A \|_p^p.
\end{equation*}
\end{lemma}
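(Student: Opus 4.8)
The plan is to reduce the $\ell_p$ no-contraction statement to two ingredients that have already been (or will be) proven for dense $p$-stable matrices: a no-dilation bound on the fixed matrix $U^*V^*-A$, and a no-contraction bound on the $k$-dimensional subspace spanned by the columns of $U^*$. These are exactly the $p$-analogues of Lemma~\ref{lem:dense_cauchy_l1_no_dilation} and Lemma~\ref{lem:dense_cauchy_l1_k_subspace_general}, and the first of them is Lemma~\ref{lem:dense_pstable_lp_no_dilation} above. So first I would invoke Lemma~\ref{lem:dense_pstable_lp_no_dilation} to get that with probability $99/100$, $\|S(U^*V^*-A)\|_p^p \lesssim \log(md)\|U^*V^*-A\|_p^p$, i.e.\ $S$ has at most $c_1 = O(\log(md))$-dilation on $U^*V^*-A$.

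Second, I would establish the subspace no-contraction bound: for $S$ a matrix of i.i.d.\ $p$-stable random variables (suitably rescaled) with $m = O(k\log k)$ rows, with probability $99/100$, $\|SU^*x\|_p^p \gtrsim \|U^*x\|_p^p$ for all $x\in\mathbb{R}^k$. This is the $\ell_p$ version of the net argument in Section~\ref{sec:dense_cauchy_l1_k_subspace}: take a well-conditioned basis for the column space of $U^*$ (using the $(\alpha,\beta,p)$-conditioned basis defined in Section~\ref{sec:lp}), apply a lower tail bound of the form $\Pr[\|Sy\|_p^p < \|y\|_p^p] \le 2^{-m}$ for a fixed vector $y$ (the $p$-stable analogue of Lemma~\ref{lem:Sy_at_least_y_with_exp_prob}), union bound over an $\epsilon$-net of size $2^{O(k\log(1/\epsilon))}$, and combine with a no-dilation-on-the-subspace bound (analogue of Corollary~\ref{cor:no_dilation_for_k_dim_subspace}, which follows from the upper tail inequality Lemma~\ref{lem:upper_tail_p_stable}) to extend from the net to all of $\mathbb{R}^k$ via the triangle inequality Fact~\ref{fac:tri_for_lp}. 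I expect this is essentially already packaged in the ``$k$-dimensional subspace'' lemmas that the $\ell_p$ section promises as analogues; the only subtlety over the $\ell_1$ case is carrying the $2^{p-1}$ factors through the triangle inequality, but those are absorbed into constants.

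Third, with the $c_1$-dilation and $c_2 = O(1)$-contraction in hand, I would simply apply Lemma~\ref{lem:no_dialation_and_no_contraction_imply_general_no_contraction_lp}, which says precisely that $S$ then has $(2^{2p-2}c_2, c_1 + 2^{1-p}/c_2)$-contraction on $(U^*,A)$, i.e.\ for all $V$,
\[
\|SU^*V-SA\|_p^p \;\geq\; \frac{1}{c_2}\|U^*V-A\|_p^p - \left(c_1 + \frac{1}{c_2}\right)\|U^*V^*-A\|_p^p.
\]
Since $c_2 = O(1)$ and $c_1 = O(\log(md)) = O(\log d)$ (as $m=\poly(k)$ and $k\leq d$), this rearranges to the claimed bound $\|SU^*V-SA\|_p^p \gtrsim \|U^*V-A\|_p^p - O(\log(md))\|U^*V^*-A\|_p^p$. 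A union bound over the two probability-$99/100$ events gives overall constant success probability.

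The main obstacle is the second step — verifying that the $\ell_1$ net argument of Section~\ref{sec:dense_cauchy_l1_k_subspace} genuinely carries over to $p$-stable variables. The two places that need care are (i) the lower-tail bound $\Pr[\|Sy\|_p < \|y\|_p]$ decaying like $2^{-m}$, which for $p$-stable distributions follows from a small-ball / anti-concentration estimate rather than the explicit Cauchy CDF computation, and (ii) the scaling constant needed so that $\|SU^*x\|_p^p$ concentrates around $\|U^*x\|_p^p$ rather than some $\alpha_p$-dependent multiple — but since the final statement only asks for a bound up to constants, any fixed scaling works and this is harmless. Everything else is a mechanical transcription of the $\ell_1$ proofs with $|\cdot|$ replaced by $|\cdot|^p$, Lemma~\ref{lem:upper_tail_p_stable} replacing the truncated-Cauchy expectation computation, and Fact~\ref{fac:tri_for_lp} supplying the (now constant-lossy) triangle inequality.
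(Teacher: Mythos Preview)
Your proposal is correct and follows essentially the same approach as the paper: combine the no-dilation bound Lemma~\ref{lem:dense_pstable_lp_no_dilation}, the $k$-subspace no-contraction bound (the $\ell_p$ analogue of Lemma~\ref{lem:dense_cauchy_l1_k_subspace_general}, stated as Lemma~\ref{lem:dense_pstable_lp_k_subspace}), and then invoke Lemma~\ref{lem:no_dialation_and_no_contraction_imply_general_no_contraction_lp}. The paper treats this lemma as a direct $\ell_p$ transcription of the $\ell_1$ argument in Lemma~\ref{lem:dense_cauchy_l1_no_contraction} and does not spell out further details beyond what you have outlined.
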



\begin{lemma}\label{lem:dense_pstable_lp_k_subspace}
Let $p\in (1,2)$. Given an $(\alpha,\beta)$ $\ell_p$ well-conditioned basis, condition on the following
two events,

1. For all $x\in {\cal N}$, $\| SU x\|_p \gtrsim \| U x\|_p$.

2. For all $x\in \mathbb{R}^k$, $\| S U x \|_p \leq \poly(k) \| U x \|_p$.  

Then for all $x\in \mathbb{R}^k$, $\| S U x \|_p \gtrsim \| U x \|_p$.
\end{lemma}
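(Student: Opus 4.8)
The plan is to mimic the net-plus-extension argument already used for the $\ell_1$ case in Lemma~\ref{lem:dense_cauchy_l1_k_subspace}, only tracking the extra constants that appear when working with $\|\cdot\|_p$ instead of $\|\cdot\|_1$. First I would fix an $(\alpha,\beta)$ $\ell_p$ well-conditioned basis $U$ (so $\|U\|_p\le\alpha$ and $\|x\|_q\le\beta\|Ux\|_p$ for all $x$, where $1/p+1/q=1$), and choose an $\epsilon$-net $\mathcal N$ on the $\ell_p$-sphere $\{x:\|x\|_p=1\}$ with $\epsilon = 1/(\alpha\beta k^{c+1})$ for a suitable constant $c$. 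This $\mathcal N$ has size $2^{O(k\log(1/\epsilon))}=2^{O(k\log k)}$, which is what forces $m=O(k\log k)$ rows so that the union bound over $\mathcal N$ of the exponential small-ball failure event (event~1 in the hypotheses) succeeds.

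Then I would run the standard rescaling trick: write an arbitrary $w\in\mathbb{R}^k$ as $w=\ell\cdot z$ with $\ell$ a scalar and $\|z\|_p=1$, pick $y=\arg\min_{y'\in\mathcal N}\|y'-z\|_p$, and bound $\|U(y-z)\|_p \le \alpha\|y-z\|_p \le \alpha\epsilon$. Using $\|Uy\|_p\ge\|y\|_q/\beta\ge\|y\|_p/(\beta k^{1/p-1/q})$ (or more crudely $\ge\|y\|_p/(\beta k)$ via the norm-inequality fact and $\|y\|_q\ge\|y\|_p k^{1/q-1/p}$), this gives $\|U(y-z)\|_p\le k^{-c}\|Uy\|_p$ for the chosen $\epsilon$, hence $\|Uy\|_p\ge 0.99\|Uz\|_p$ by the triangle inequality for $\|\cdot\|_p$ (Fact~\ref{fac:tri_for_lp}). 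Now chain: $\|SUz\|_p \ge \|SUy\|_p - \|SU(z-y)\|_p \ge c_3'\|Uy\|_p - \poly(k)\|U(z-y)\|_p \ge c_3'\|Uy\|_p - \poly(k)k^{-c}\|Uy\|_p \gtrsim \|Uy\|_p \gtrsim \|Uz\|_p$, where event~1 gives the lower bound on the net point and event~2 (the $\poly(k)$ dilation on the whole subspace, the $p$-analogue of Corollary~\ref{cor:no_dilation_for_k_dim_subspace}) controls the perturbation term; picking $c$ large enough that $k^c\gtrsim \poly(k)$ absorbs the loss. Rescaling $z$ back to $w$ finishes it. I would state this cleanly as: conditioned on events 1 and 2, for all $x\in\mathbb{R}^k$, $\|SUx\|_p\gtrsim\|Ux\|_p$, and then (as in Lemma~\ref{lem:dense_cauchy_l1_k_subspace_general}) transfer from the well-conditioned basis $U$ to an arbitrary $U$ via a change of basis, since any $Ux$ equals $U'y$ for the conditioned basis $U'$.

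The only place that genuinely differs from the $\ell_1$ proof is the bookkeeping of the $p$-dependent constants: the triangle inequality is now the honest $\|x+y\|_p\le\|x\|_p+\|y\|_p$ (no $2^{p-1}$ needed since I work with $\|\cdot\|_p$, not $\|\cdot\|_p^p$, in this particular lemma), and the norm-conversion factors $k^{1/p-1/q}$ need to be folded into the choice of $\epsilon$; none of this changes the asymptotics. I expect the main (minor) obstacle to be making sure events~1 and~2 are stated in a mutually consistent normalization — i.e.\ that the scaling $\Theta(1/t^{1/p})$ on the $p$-stable sketch is the one under which both the exponential small-ball bound and the $\poly(k)\log$-type dilation bound hold simultaneously — but this is exactly parallel to the $\ell_1$ situation and the proof of Lemma~\ref{lem:dense_pstable_lp_no_dilation} already supplies the dilation side. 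So the write-up is essentially: (i) set up the $\ell_p$ well-conditioned basis and $\epsilon$-net, (ii) the rescaling/triangle-inequality computation showing $\|Uy\|_p\eqsim\|Uz\|_p$, (iii) the three-line chaining using events 1 and 2, (iv) the change-of-basis remark for general $U$.
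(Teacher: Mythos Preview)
Your proposal is correct and takes essentially the same approach as the paper, which simply states that the proof is identical to Lemma~\ref{lem:dense_cauchy_l1_k_subspace}; your steps (ii) and (iii) are exactly that $\ell_1$ net-plus-chaining argument transplanted to $\|\cdot\|_p$ with the obvious bookkeeping. The only slight over-inclusion is your step (iv): the change-of-basis to a general $U$ is not part of this lemma (which already assumes $U$ is a well-conditioned basis) but of its corollary analogous to Lemma~\ref{lem:dense_cauchy_l1_k_subspace_general}, so you can drop it from this particular proof.
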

\begin{proof}
The proof is identical to Lemma~\ref{lem:dense_cauchy_l1_k_subspace} in Section~\ref{sec:l1}.
\end{proof}

\subsection{Sparse $p$-stable transform}
This section states the main tools for the sparse $p$-stable transform. The proof is identical to that of the sparse Cauchy transform.
\begin{lemma}\label{lem:sparse_pstable_lp_no_dilation}
Let $p\in (1,2)$. Given matrix $A\in \mathbb{R}^{n\times d}$ with $U^*,V^*$ an optimal solution of $\min_{U,V}\| UV - A \|_p$, let $\Pi = \sigma \cdot SC\in \mathbb{R}^{m\times n}$, where $S\in \mathbb{R}^{m\times n}$ has each column vector chosen independently and uniformly from the $m$ standard basis vectors of $\mathbb{R}^m$, where $C$ is a diagonal matrix with diagonals chosen independently from the standard $p$-stable distribution, and $\sigma$ is a scalar. Then
\begin{align*}
\| \Pi U^* V^* - \Pi A\|_p^p \lesssim \sigma \cdot \log (md) \cdot \| U^* V^* - A \|_p^p
\end{align*}
holds with probability at least $.999$.
\end{lemma}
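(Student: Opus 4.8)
The plan is to mirror the proof of Lemma~\ref{lem:sparse_cauchy_l1_no_dilation}, replacing the ad hoc conditioning on a ``clipped half-Cauchy'' event by the clean upper-tail bound for $p$-stable variables in Lemma~\ref{lem:upper_tail_p_stable}, exactly as is done for the dense $p$-stable case in Lemma~\ref{lem:dense_pstable_lp_no_dilation}. Write $M = U^*V^* - A$ and $\Pi = \sigma\cdot SC$, where $S$ selects, for each column index $l\in[n]$, a uniformly random row $h(l)\in[m]$, and $C=\Diag(c_1,\dots,c_n)$ with the $c_l$ i.i.d.\ standard $p$-stable. Then $\|\Pi M\|_p^p = \sigma^p\sum_{i=1}^d\sum_{j=1}^m\bigl|\sum_{l=1}^n c_l S_{j,l} M_{l,i}\bigr|^p$. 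Since the $c_l$ are independent and $p$-stable, for each $(i,j)$ the inner sum is distributed as $\wt{w}_{ij}\cdot\bigl(\sum_{l=1}^n |S_{j,l}M_{l,i}|^p\bigr)^{1/p}$ for a standard $p$-stable $\wt{w}_{ij}$; hence
\[
\|\Pi M\|_p^p \;=\; \sigma^p\sum_{i=1}^d\sum_{j=1}^m \gamma_{ij}\,|w_{ij}|^p, \qquad \gamma_{ij}:=\sum_{l=1}^n |S_{j,l}|^p\,|M_{l,i}|^p ,
\]
where each $w_{ij}$ is a (half-)$p$-stable variable; note the $w_{ij}$ need not be independent across $i$ since they share the $c_l$'s, but Lemma~\ref{lem:upper_tail_p_stable} permits exactly this.

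Next I would condition on the choice of $S$ (equivalently on the hash $h$), which fixes all the weights $\gamma_{ij}$ while leaving the $w_{ij}$ distributed as $p$-stables, since these depend only on $C$. The key observation is that the \emph{total} weight is deterministic: because each column of $S$ is a standard basis vector, $\sum_{j=1}^m |S_{j,l}|^p=\sum_{j=1}^m |S_{j,l}|=1$ for every $l$, so $\gamma:=\sum_{i,j}\gamma_{ij}=\sum_{i=1}^d\sum_{l=1}^n |M_{l,i}|^p=\|M\|_p^p$ no matter which $S$ is drawn. Now apply Lemma~\ref{lem:upper_tail_p_stable} to the $md$ (not necessarily independent) variables $|w_{ij}|^p$ with weights $\gamma_{ij}$ and total weight $\gamma=\|M\|_p^p$: for any $t\ge 1$, $\Pr\bigl[\sum_{i,j}\gamma_{ij}|w_{ij}|^p \ge t\,\alpha_p\,\|M\|_p^p\bigr]\le \tfrac{2\log(md\cdot t)}{t}$. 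Choosing $t=\Theta(\log(md))$ makes the right-hand side at most $\tfrac{1}{1000}$, giving $\|\Pi M\|_p^p \lesssim \sigma^p\log(md)\,\|M\|_p^p$ with probability $\ge .999$; the scalar passes through as $\sigma^p$, which for $p=1$ recovers the $\sigma$ factor of Lemma~\ref{lem:sparse_cauchy_l1_no_dilation} and in all uses is treated as a $\poly(k)$ quantity. Since this bound holds conditionally on every $S$, it holds unconditionally.

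The adaptation is essentially routine; the two points that need care are (i) the $p$-stability manipulation, which relies on the $c_l$ being i.i.d.\ so that $\sum_l c_l(S_{j,l}M_{l,i})$ is a $p$-stable scaled by $(\sum_l|S_{j,l}M_{l,i}|^p)^{1/p}$; and (ii) the order of randomness---one must condition on $S$ first and observe that the quantity fed into Lemma~\ref{lem:upper_tail_p_stable} as its total weight is the fixed number $\|M\|_p^p$, independent of $S$, so that no union bound over the $\binom{\cdot}{\cdot}$ choices of $S$ is needed. A general-matrix version (the analogue of Lemma~\ref{lem:sparse_cauchy_l1_no_dilation_general} and Lemma~\ref{lem:dense_pstable_lp_no_dilation}) follows verbatim by replacing $U^*V^*-A$ with an arbitrary $M\in\mathbb{R}^{n\times d}$, since optimality of $U^*,V^*$ is never used in the argument.
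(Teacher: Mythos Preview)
Your proposal is correct and follows essentially the same approach as the paper: expand $\|\Pi M\|_p^p$ via the $p$-stability identity, observe that the total weight $\sum_{i,j}\gamma_{ij}$ collapses to $\|M\|_p^p$ because each column of $S$ is a standard basis vector, and apply Lemma~\ref{lem:upper_tail_p_stable} with $t=\Theta(\log(md))$. Your explicit conditioning on $S$ and the remark that the result is $\sigma^p$ rather than $\sigma$ (so that the statement should really read $\sigma^p$) are both slight clarifications of the paper's own argument.
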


\begin{proof}
  We define $\Pi=\sigma \cdot SC \in \mathbb{R}^{m\times n}$ as in the statement
  of the lemma.
Then by the definition of $\Pi$, we have,
\begin{align*}
 & ~\| \Pi (U^*V^* - A ) \|_p^p \\
=& ~ \sum_{i=1}^d \| S C (U^*V^*_i - A_i ) \|_p^p \\
=& ~ \sum_{i=1}^d \biggl\|
\begin{bmatrix}
S_{11} & S_{12} & \cdots & S_{1n} \\
S_{21} & S_{22} & \cdots & S_{2n} \\
\cdots & \cdots & \cdots & \cdots \\
S_{m1} & S_{m2} & \cdots & S_{mn} \\
\end{bmatrix}
\cdot
\begin{bmatrix}
c_{1} & 0 & 0 & 0 \\
0 & c_{2} & 0 & 0 \\
0 & 0 & \cdots & 0 \\
0 & 0 & 0 & c_{n} \\
\end{bmatrix}
\cdot
(U^*V^*_i - A_i) \biggr\|_p^p~\\
=&~  \sum_{i=1}^d \biggl\|
\begin{bmatrix}
c_1 S_{11} & c_2 S_{12} & \cdots & c_n S_{1n} \\
c_1 S_{21} & c_2 S_{22} & \cdots & c_n S_{2n} \\
\cdots & \cdots & \cdots & \cdots \\
c_1 S_{m1} & c_2 S_{m2} & \cdots & c_n S_{mn} \\
\end{bmatrix}
\cdot
(U^*V^*_i - A_i) \biggr\|_p^p~\\
= & ~  \sum_{i=1}^d \biggl\| \sum_{l=1}^n c_l S_{1l} \cdot (U^*V^*_i-A_i)_l,  \sum_{l=1}^n c_l S_{2l} \cdot (U^*V^*_i-A_i)_l, \cdots,  \sum_{l=1}^n c_l S_{ml} \cdot (U^*V^*_i-A_i)_l \biggr\|_p^p \\
= & ~ \sum_{i=1}^d \sum_{j=1}^m \biggl|\sum_{l=1}^n c_l S_{jl} \cdot (U^*V^*_i-A_i)_l \biggr|^p \text{~by~}aX + bY \text{~and~} (|a|^p + |b|^p)^{1/p} Z \text{~are~identically distributed}\\
= & ~ \sum_{i=1}^d \sum_{j=1}^m \biggl|\wt{w}_{ij} \cdot (\sum_{l=1}^n | S_{jl} (U^* V^*_i - A_i)_l |^p )^{1/p} \biggr|^p \text{~where~}\wt{w}_{ij} \sim P(0,1) \\
= & ~ \sum_{i=1}^d \sum_{j=1}^m  \sum_{l=1}^n | S_{jl} (U^* V^*_i - A_i)_l |^p \cdot |\wt{w}_{ij}|^p\\
= & ~ \sum_{i=1}^d \sum_{j=1}^m  \sum_{l=1}^n | S_{jl} (U^* V^*_i - A_i)_l |^p \cdot w_{i+ (j-1)d}^p, \text{~where~} w_{i+(j-1) d} \sim |P(0,1)|
\end{align*}
where the last step follows since each $w_i$ can be thought of as a clipped half-$p$-stable random variable. Define $X$ to be $\sum_{i=1}^d \sum_{j=1}^m  \sum_{l=1}^n | S_{jl} (U^* V^*_i - A_i)_l |^p \cdot w_{i+ (j-1)d}^p$ and $\gamma$ to be $\sum_{i=1}^d \sum_{j=1}^m  \sum_{l=1}^n | S_{jl} (U^* V^*_i - A_i)_l |^p $. Then applying Lemma \ref{lem:upper_tail_p_stable},
\begin{align*}
\Pr[ X \geq t \alpha_p \gamma] \leq \frac{2\log(md t)}{t}.
\end{align*}
Choosing $t= \Theta(\log(md))$, we have with probability $.999$,
\begin{align*}
X \lesssim \log(md) \alpha_p \gamma = \log(md) \alpha_p \sum_{i=1}^d \sum_{l=1}^n |  (U^* V^*_i - A_i)_l |^p,
\end{align*}
where the last steps follows by
\begin{align*}
\gamma=\sum_{i=1}^d \sum_{j=1}^m  \sum_{l=1}^n | S_{jl} (U^* V^*_i - A_i)_l |^p = \sum_{i=1}^d \sum_{j=1}^m  \sum_{l=1}^n | S_{jl} |^p | (U^* V^*_i - A_i)_l |^p = \sum_{i=1}^d \sum_{l=1}^n | (U^* V^*_i - A_i)_l |^p.
\end{align*}
Thus, we can conclude that with probability $.999$, $\| \Pi (U^* V^* - A ) \|_p^p \lesssim \log(md) \| U^* V^* - A \|_p^p $.
\end{proof}

\begin{lemma}\label{lem:sparse_pstable_lp_no_contraction}
Given matrix $A\in \mathbb{R}^{n\times d}$ with $U^*,V^*$ an optimal solution of $\min_{U,V}\| UV - A \|_p$, let $\Pi = \sigma \cdot SC\in \mathbb{R}^{m\times n}$, where $S\in \mathbb{R}^{m\times n}$ has each column vector chosen independently and uniformly from the $m$ standard basis vectors of $\mathbb{R}^m$, and where $C$ is a diagonal matrix with diagonals chosen independently from the standard $p$-stable distribution. Then with probability at least $.999$, for all $V\in \mathbb{R}^{k\times d}$,
\begin{align*}
\| \Pi U^* V - \Pi A\|_p^p \geq \| U^* V - A \|_p^p - O(\sigma \log(md)) \|U^* V^* - A \|_p^p.
\end{align*}
Notice that $m=O(k^5 \log^5 k)$ and $\sigma = O( (k \log k)^{2/p})$ according to Theorem 4 in \cite{mm13}.
\end{lemma}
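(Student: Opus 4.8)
The plan is to derive this exactly as the $\ell_p$ analogue of Lemma~\ref{lem:sparse_cauchy_l1_no_contraction}: assemble a no‑contraction bound on the column space of $U^*$, a no‑dilation bound on the residual $U^*V^*-A$, and then invoke the generic combining lemma Lemma~\ref{lem:no_dialation_and_no_contraction_imply_general_no_contraction_lp}. The three ingredients already exist in the $\ell_p$ section (or follow from \cite{mm13}), so the work is in checking that the constants line up.

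First I would establish the subspace‑embedding ingredient, i.e.\ the $\ell_p$ analogue of Corollary~\ref{cor:sparse_cauchy_k_subspace}. By Theorem~4 of \cite{mm13} applied to the $n\times k$ matrix $U^*$, the sparse $p$-stable transform $\Pi=\sigma\cdot SC$ with $m=O(k^5\log^5 k)$ rows and $\sigma=O((k\log k)^{2/p})$ is, with probability $.999$, an $\ell_p$-subspace embedding for the column space of $U^*$ with constant distortion; scaling $\Pi$ by a suitable absolute constant (absorbed into $\sigma$) we may assume $\|\Pi U^*x\|_p^p\ge\frac{1}{c_2}\|U^*x\|_p^p$ for all $x\in\mathbb{R}^k$, with $c_2$ an absolute constant satisfying $2^{2p-2}c_2\le 1$. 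Alternatively one repeats the net argument of Corollary~\ref{cor:no_dilation_for_k_dim_subspace}/Lemma~\ref{lem:dense_cauchy_l1_k_subspace} over an $(\alpha,\beta,p)$-conditioned basis of $U^*$, replacing the Cauchy tail estimates by the $p$-stable upper‑tail bound Lemma~\ref{lem:upper_tail_p_stable}.

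Second, the no‑dilation ingredient is precisely Lemma~\ref{lem:sparse_pstable_lp_no_dilation}: with probability $.999$, $\|\Pi(U^*V^*-A)\|_p^p\le c_1\|U^*V^*-A\|_p^p$ with $c_1=O(\sigma\log(md))$. Union‑bounding over these two events and plugging $c_1,c_2$ into Lemma~\ref{lem:no_dialation_and_no_contraction_imply_general_no_contraction_lp} yields, for every $V\in\mathbb{R}^{k\times d}$, $\|\Pi U^*V-\Pi A\|_p^p\ge\frac{1}{2^{2p-2}c_2}\|U^*V-A\|_p^p-(2^{1-p}\tfrac{1}{c_2}+c_1)\|U^*V^*-A\|_p^p$, and since $2^{2p-2}c_2\le 1$ and $2^{1-p}/c_2+c_1=O(\sigma\log(md))$ this is the asserted inequality. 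The one point that needs care — and the only place where the $\ell_1$ argument does not transcribe verbatim — is tracking the $2^{p-1}$-type constants that the $\ell_p$ quasi‑triangle inequality (Fact~\ref{fac:tri_for_lp}) injects into Lemma~\ref{lem:no_dialation_and_no_contraction_imply_general_no_contraction_lp}, and choosing the scaling of $\Pi$ so that the contraction constant $c_2$ is small enough to leave a clean coefficient $1$ on $\|U^*V-A\|_p^p$; verifying that Theorem~4 of \cite{mm13} really does deliver the stated $m$ and $\sigma$ for the embedding of $U^*$ is then routine.
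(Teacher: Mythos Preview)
Your proposal is correct and follows essentially the same route as the paper: the paper's own proof of the $\ell_1$ version (Lemma~\ref{lem:sparse_cauchy_l1_no_contraction}) simply cites Corollary~\ref{cor:sparse_cauchy_k_subspace}, Lemma~\ref{lem:sparse_cauchy_l1_no_dilation}, and Lemma~\ref{lem:no_dialation_and_no_contraction_imply_general_no_contraction}, and the $\ell_p$ section states the proof is ``identical to that of the sparse Cauchy transform,'' i.e., combine the subspace embedding from Theorem~4 of \cite{mm13}, Lemma~\ref{lem:sparse_pstable_lp_no_dilation}, and Lemma~\ref{lem:no_dialation_and_no_contraction_imply_general_no_contraction_lp}. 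Your extra care in tracking the $2^{p-1}$ constants from Fact~\ref{fac:tri_for_lp} (and noting they can be absorbed by rescaling $\sigma$) is more than the paper itself spells out.
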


\subsection{$\ell_p$-Lewis weights}
This section states the main tools for $\ell_p$-Lewis weights. The proof is identical to $\ell_1$-Lewis weights.

\begin{lemma}\label{lem:lewis_weights_lp_no_dilation}
For any $p\in (1,2)$. Given matrix $A\in \mathbb{R}^{n\times d}$, define $U^*\in \mathbb{R}^{n\times k}, V^*\in \mathbb{R}^{k\times d}$ to be an optimal solution of
$\underset{U \in \mathbb{R}^{n\times k}, V\in \mathbb{R}^{k\times d} }{\min} \| U V - A \|_p$. Choose a diagonal matrix $D\in \mathbb{R}^{n\times n}$ according to the Lewis weights of $U^*$. We have that 
\begin{align*}
\| D U^* V^* - D A \|_p^p \lesssim \| U^* V^* - A \|_p^p,
\end{align*}
holds with probability at least $.99$.
\end{lemma}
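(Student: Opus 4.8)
The plan is to follow the $\ell_1$ argument of Claim~\ref{cla:lewis_weight_upper_bound_DU*V*_minus_DA} and its matrix form Lemma~\ref{lem:lewis_weights_l1_no_dilation_general} essentially verbatim, replacing every occurrence of $\|\cdot\|_1$ by $\|\cdot\|_p^p$; the only property that makes this substitution work is that $\|\cdot\|_p^p$, like $\|\cdot\|_1$, is additive over coordinates and over rows. First I would set $B=U^*V^*-A\in\mathbb{R}^{n\times d}$ and recall that $D\in\mathbb{R}^{n\times n}$ is the sampling-and-rescaling diagonal matrix produced by $\ell_p$ Lewis-weight sampling of $U^*$ (as in Lemma~7.1 of \cite{cp15}): it draws $m=O(k\log k\log\log k)$ indices $i_1,\dots,i_m$ independently, picking index $i$ with probability $\pi_i$ proportional to the $i$-th $\ell_p$ Lewis weight of $U^*$, and putting the appropriate power of the inverse sampling probability on the corresponding diagonal entry, so that for every fixed matrix $C$ one has $\E[\|DC\|_p^p]=\|C\|_p^p$.

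With this set-up, $\|DB\|_p^p=\sum_{j=1}^m X^j$, where $X^j$ is the rescaled $\ell_p^p$-mass of the row drawn at step $j$; the $X^j$ are i.i.d., and by the unbiasedness above $\E[X^j]=\tfrac1m\|B\|_p^p$, so $\E[\|DB\|_p^p]=\|B\|_p^p$ by linearity of expectation. Markov's inequality then gives $\Pr[\|DB\|_p^p\geq 100\|B\|_p^p]\leq 1/100$, hence with probability at least $0.99$ we get $\|DU^*V^*-DA\|_p^p=\|DB\|_p^p\lesssim\|U^*V^*-A\|_p^p$, which is exactly the claimed bound.

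There is no substantive obstacle here; the only thing to be careful about is bookkeeping: one must use the rescaling factor that makes $\|D\cdot\|_p^p$ an unbiased estimator of $\|\cdot\|_p^p$, and observe that this no-dilation bound uses nothing about $\ell_p$ Lewis weights beyond the fact that $D$ is a valid sampling-and-rescaling matrix with positive sampling probabilities — in particular it invokes neither Fact~\ref{fac:tri_for_lp} nor any subspace-embedding property. The Lewis-weight structure is needed only for the complementary no-contraction statement (the $\ell_p$ analogue of Lemma~\ref{lem:lewis_weights_l1_no_contraction}), which combined with this lemma and Lemma~\ref{lem:no_dialation_and_no_contraction_imply_general_no_contraction_lp} yields the $(c_3,c_4)$-contraction bound on $(U^*,A)$ used by the $\ell_p$ algorithms.
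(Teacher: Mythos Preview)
Your proposal is correct and takes essentially the same approach as the paper: the paper simply states that ``the proof is identical to $\ell_1$-Lewis weights,'' and the $\ell_1$ version (Claim~\ref{cla:lewis_weight_upper_bound_DU*V*_minus_DA} / Lemma~\ref{lem:lewis_weights_l1_no_dilation_general}) is precisely the unbiasedness-plus-Markov argument you spell out, with $\|\cdot\|_p^p$ replacing $\|\cdot\|_1$. Your observation that this step uses nothing about Lewis weights beyond $D$ being a sampling-and-rescaling matrix is also made explicit in the paper (Lemma~\ref{lem:lewis_weights_l1_no_dilation_general} holds for ``any sampling and rescaling diagonal matrix'').
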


\begin{lemma}\label{lem:lewis_weights_lp_no_contraction}
Let $p\in (1,2)$. Given matrix $A\in \mathbb{R}^{n\times d}$, define $U^*\in \mathbb{R}^{n\times k}, V^*\in \mathbb{R}^{k\times d}$ to be an optimal solution of
$\underset{U \in \mathbb{R}^{n\times k}, V\in \mathbb{R}^{k\times d} }{\min} \| UV - A \|_p$. Choose a sampling and rescaling matrix $D\in \mathbb{R}^{n\times n}$ according to the Lewis weights of $U^*$. For all $V\in \mathbb{R}^{k\times d}$ we have
\begin{align*}
\| D U^* V - D A\|_p^p \gtrsim \| U^* V - A \|_p^p - O(1)\|U^* V^* - A \|_p^p,
\end{align*}
holds with probability at least $.99$.
\end{lemma}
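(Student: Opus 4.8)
The plan is to mirror, step for step, the $\ell_1$ argument behind Lemma~\ref{lem:lewis_weights_l1_no_contraction}: establish a one-sided \emph{no-dilation} bound for $D$ on the fixed residual $U^*V^*-A$, establish a one-sided \emph{no-contraction} bound for $D$ on the column space of $U^*$, and then combine them with the $\ell_p$ bookkeeping lemma, Lemma~\ref{lem:no_dialation_and_no_contraction_imply_general_no_contraction_lp}. Here $D\in\mathbb{R}^{n\times n}$ is the sampling-and-rescaling matrix drawn from the $\ell_p$ Lewis weights of $U^*$, with $m=O(k\log k\log\log k)=\wt{O}(k)$ nonzero diagonal entries (the sample size dictated by the $1<p<2$ case of the Lewis-weight sampling theorem of \cite{cp15}). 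Note also that the $V^*$ fixed in the statement, being the right factor of a global rank-$k$ optimum $U^*V^*$, is automatically a minimizer of $\min_{V}\|U^*V-A\|_p^p$, so it is the same $V^*$ that appears in the hypotheses of Lemma~\ref{lem:no_dialation_and_no_contraction_imply_general_no_contraction_lp}.

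First I would invoke Lemma~\ref{lem:lewis_weights_lp_no_dilation}: with probability at least $.999$, $\|D(U^*V^*-A)\|_p^p\lesssim\|U^*V^*-A\|_p^p$, i.e.\ $D$ has $O(1)$-dilation on $U^*V^*-A$. (This is the $\ell_p$ analog of Claim~\ref{cla:lewis_weight_upper_bound_DU*V*_minus_DA}: setting $B=U^*V^*-A$, over the random sample $\E\,\|DB\|_p^p=\sum_i(\|B_i\|_p^p/p_i)\,p_i=\|B\|_p^p$, so Markov's inequality gives the bound.) Next I would invoke the $\ell_p$ Lewis-weight sampling guarantee of \cite{cp15}: with probability at least $.999$, $\|DU^*x\|_p^p=\Theta(1)\,\|U^*x\|_p^p$ for all $x\in\mathbb{R}^k$, so in particular $D$ has $O(1)$-contraction on $U^*$. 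Union-bounding, both events hold simultaneously with probability at least $.99$.

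On this event I would apply Lemma~\ref{lem:no_dialation_and_no_contraction_imply_general_no_contraction_lp} with $c_1=O(1)$ and $c_2=O(1)$; its conclusion is exactly
\[
\|DU^*V-DA\|_p^p\;\gtrsim\;\|U^*V-A\|_p^p\;-\;O(1)\,\|U^*V^*-A\|_p^p\qquad\text{for all }V\in\mathbb{R}^{k\times d},
\]
which is the claim. The only place where $p\neq 1$ intervenes is through the two applications of the triangle inequality inside Lemma~\ref{lem:no_dialation_and_no_contraction_imply_general_no_contraction_lp}, where Fact~\ref{fac:tri_for_lp} replaces ordinary subadditivity and injects only the absolute constants $2^{p-1},2^{2p-2}\le 2$, already absorbed. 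Consequently I do not expect any genuine obstacle: the ``hard part'' is purely verifying that the two cited $\ell_p$ ingredients (Lemma~\ref{lem:lewis_weights_lp_no_dilation} and the $1<p<2$ Lewis-weight sampling bound of \cite{cp15}) carry the stated success probabilities and the $\wt{O}(k)$ sample size, which they do, so the final proof is the single line ``follows from Lemma~\ref{lem:lewis_weights_lp_no_dilation}, \cite{cp15}, and Lemma~\ref{lem:no_dialation_and_no_contraction_imply_general_no_contraction_lp}.''
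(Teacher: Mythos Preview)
Your proposal is correct and matches the paper's approach essentially line for line: the paper states that the $\ell_p$ Lewis-weight lemmas have proofs ``identical to $\ell_1$-Lewis weights,'' and the $\ell_1$ version (Lemma~\ref{lem:lewis_weights_l1_no_contraction}) is proved exactly as you describe, by combining the Markov-based no-dilation bound, the Lewis-weight subspace-embedding guarantee of \cite{cp15}, and the bookkeeping Lemma~\ref{lem:no_dialation_and_no_contraction_imply_general_no_contraction} (here its $\ell_p$ analogue Lemma~\ref{lem:no_dialation_and_no_contraction_imply_general_no_contraction_lp}).
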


\section{$\EMD$-Low Rank Approximation}\label{sec:emd}
In this section we explain how to embed EMD to $\ell_1$. For more detailed background on the Earth-Mover Distance(EMD) problem, we refer the reader to \cite{it03,aik08,akiw09,ip11,birw16} and \cite{sl09,log16}. Section~\ref{sec:emd_definition} introduces some necessary notation and definitions for Earth-Mover Distance. Section~\ref{sec:emd_main} presents the main result for the Earth-Mover distance low rank approximation problem.

\subsection{Definitions}\label{sec:emd_definition}
Consider any two non-negative vectors $x,y \in \mathbb{R}_{+}^{ [\Delta]^2 }$ such that $\| x \|_1 = \| y\|_1$. Let $\Gamma(x,y)$ be the set of functions $\gamma: [\Delta]^2 \times [\Delta]^2 \rightarrow \mathbb{R}_{+}$, such that for any $i,j\in [\Delta]^2$ we have $\sum_{l} \gamma(i, l) = x_i$ and $\sum_{l} \gamma (l,j) = y_j$; that is, $\Gamma$ is the set of possible ``flows'' from $x$ to $y$. Then we define
\begin{equation*}
\EMD(x,y) = \underset{\gamma \in \Gamma }{\min} \sum_{i,j \in [\Delta]^2 } \gamma(i,j) \| i - j \|_1
\end{equation*}
to be the min cost flow from $x$ to $y$, where the cost of an edge is its $\ell_1$ distance.

Using the $\EMD(\cdot, \cdot)$ metric, for general vectors $w$, we define $\| \cdot \|_{\EEMD}$ distance (which is the same as \cite{sl09}),
\begin{equation*}
\| w \|_{\EEMD} = \underset{ \substack{ x-y+z=w\\ \| x\|_1= \|y \|_1 \\ x,y \geq 0 } }{\min} \EMD(x,y) + 2\Delta \| z \|_1.
\end{equation*}
Using $\| \cdot \|_{\EEMD}$ distance, for general matrices $X\in \mathbb{R}^{n\times d}$, we define the $\| \cdot \|_{1,\EEMD}$ distance,
\begin{equation*}
\| X \|_{1,\EEMD} = \sum_{i=1}^d \| X_i\|_{\EEMD},
\end{equation*}
where $X_i$ denotes the $j$-th column of matrix $X$.

\subsection{Analysis of no contraction and no dilation bound}\label{sec:emd_main}
\begin{lemma}\label{lem:emd_no_contraction}
Given matrix $A\in \mathbb{R}^{n\times d}$ and $U^*, V^* = \underset{U \in \mathbb{R}^{n\times k}, V \in \mathbb{R}^{k\times d} }{\arg\min} \| UV - A\|_{1,\EEMD}$, there exist sketching matrices $S\in \mathbb{R}^{m\times n}$ such that, with probability $.999$, for all $V\in \mathbb{R}^{k\times d}$,
\begin{align*}
\| S (U^* V - A ) \|_1 \geq \| U^* V - A \|_{1,\EEMD}
\end{align*}
holds.
\end{lemma}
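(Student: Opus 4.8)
The plan is to exploit the fact that the $\EEMD$ metric on $[\Delta]^2$ embeds isometrically (up to an $O(\log \Delta)$ distortion, but here we want the one-sided bound) into $\ell_1$ via a random tree / Indyk--Thaper style embedding, and to compose this with the $\ell_1$-no-contraction arguments already developed in Section~\ref{sec:l1}. More precisely, there is a known randomized linear map $T$ (a hierarchically-separated-tree embedding, as used in \cite{it03,sl09}) from $\mathbb{R}^{[\Delta]^2}$ to $\mathbb{R}^{N}$, $N=\poly(\Delta)$, with the property that for every vector $w$ one has, deterministically, $\|Tw\|_1 \geq \|w\|_{\EEMD}$, and in expectation $\E\|Tw\|_1 \leq O(\log \Delta)\,\|w\|_{\EEMD}$. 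The key point for \emph{no contraction} is that only the lower bound $\|Tw\|_1 \geq \|w\|_{\EEMD}$ is needed, and this lower bound holds pointwise (it follows because any assignment of mass to tree edges induces a feasible flow, so the tree distance upper-bounds $\EMD$, hence the $\ell_1$ image lower-bounds it; the $2\Delta\|z\|_1$ term is handled by the leaf-level coordinates of $T$).

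First I would set $S = S' T$, where $T$ is the above embedding applied columnwise (so $T$ acts on the $n$-dimensional column space — wait, note the roles: here $A\in\mathbb{R}^{n\times d}$ and $\|\cdot\|_{1,\EEMD}$ sums over the $d$ columns, each an $n=[\Delta]^2$-dimensional point), and $S'\in\mathbb{R}^{m\times N}$ is a dense Cauchy matrix of the type in Lemma~\ref{lem:dense_cauchy_l1_k_subspace_general} applied to $TU^*$. Then for any $V$, by the pointwise lower bound of the tree embedding applied to each column of $U^*V-A$,
\begin{align*}
\|T(U^*V-A)\|_1 = \sum_{i=1}^d \|T(U^*V-A)_i\|_1 \geq \sum_{i=1}^d \|(U^*V-A)_i\|_{\EEMD} = \|U^*V-A\|_{1,\EEMD}.
\end{align*}
It then remains to show $\|S'T(U^*V-A)\|_1 \gtrsim \|T(U^*V-A)\|_1$ up to an additive $O(\log d)$ term times the optimum. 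This is exactly the content of Lemma~\ref{lem:dense_cauchy_l1_no_contraction} applied to the matrix $TA$ in place of $A$ and $TU^*$ in place of $U^*$: with $m=O(k\log k)$ rows of $S'$, for all $V$, $\|S'TU^*V - S'TA\|_1 \gtrsim \|TU^*V-TA\|_1 - O(\log d)\|TU^*\widetilde V^*-TA\|_1$, where $\widetilde V^*=\arg\min_V\|TU^*V-TA\|_1$. One must be slightly careful: the additive error term is measured against the $\ell_1$-optimum of $TA$, so I would bound $\|TU^*\widetilde V^*-TA\|_1 \leq \|TU^*V^*-TA\|_1 \leq O(\log \Delta)\|U^*V^*-A\|_{1,\EEMD}$ using the upper side of the tree embedding in expectation (plus a Markov bound to make it hold with constant probability), and then absorb everything; since $\Delta\le\poly(n,d)$ this $\log\Delta$ is at most $O(\log(nd))$, matching the claimed $O(\log d)$ up to the polylog factors the paper already carries through its algorithms. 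Chaining the pointwise EMD lower bound with the Cauchy no-contraction bound yields $\|S(U^*V-A)\|_1 \gtrsim \|U^*V-A\|_{1,\EEMD} - O(\log d)\cdot\OPT_{\EEMD}$, which is the stated conclusion (the clean statement in the lemma, with no additive term, corresponds to folding the additive loss into the approximation ratio exactly as done in Lemma~\ref{lem:rankr_B_l1_no_contraction} and its uses — or to using the Lewis-weight variant, which I would mention as an alternative giving $c_1c_2=O(1)$).

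The main obstacle I anticipate is \emph{not} the $\ell_1$ machinery — that is entirely inherited from Section~\ref{sec:l1} — but rather pinning down the precise embedding $T$ and verifying the one-sided, \emph{pointwise} (not just in-expectation) lower bound $\|Tw\|_1 \geq \|w\|_{\EEMD}$ for the \emph{signed} $\EEMD$ norm, including the $z$-component with weight $2\Delta$. The subtlety is that $\|\cdot\|_{\EEMD}$ is defined by a minimization over decompositions $w=x-y+z$, so one has to check that the tree embedding's $\ell_1$ value dominates the cost of the \emph{best} such decomposition; the standard argument is that, conversely, the flow realizing $\|w\|_{\EEMD}$ can be routed along the tree at no greater cost, and the residual $z$ maps to leaf coordinates scaled so that moving mass to/from the root costs at least $2\Delta$ per unit — I would need to state the tree construction (levels $0,\dots,\log\Delta$, edge weights doubling, a random shift of the grid) carefully enough that this inequality is transparent. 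Once that geometric fact is in hand, the rest is a direct composition and the proof is short. I would also remark that the companion no-dilation bound (needed to actually run the algorithm, cf. the $(\log^2 d)\poly(k)$ factor claimed in the introduction) picks up an extra $O(\log\Delta)$ precisely from the \emph{upper} side of this same embedding, which is why the EMD result has one more $\log$ factor than the $\ell_1$ result.
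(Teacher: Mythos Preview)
Your central observation---that the Indyk--Thaper tree embedding $T$ satisfies the \emph{pointwise} lower bound $\|Tw\|_1 \ge \|w\|_{\EEMD}$ for every $w$, hence $\|T(U^*V-A)\|_1 \ge \|U^*V-A\|_{1,\EEMD}$ for all $V$---is exactly the paper's entire proof. The paper simply takes $S$ to be (a constant rescaling of) $T$ itself, cites Lemma~1 of \cite{it03} for this one-sided bound on each column, and sums over the $d$ columns. That is the whole argument; the lemma places no constraint on $m$, and indeed the theorem that uses it explicitly sets $m=\poly(n)$ and only \emph{afterwards} applies the $\ell_1$ low-rank machinery to $SA$.

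Your composition with a dense Cauchy matrix $S'$ is therefore unnecessary, and in fact it prevents you from proving the lemma as stated: the Cauchy no-contraction bound from Lemma~\ref{lem:dense_cauchy_l1_no_contraction} carries an unavoidable additive $O(\log d)\cdot\OPT$ term, so your chain of inequalities yields only $\|S(U^*V-A)\|_1 \gtrsim \|U^*V-A\|_{1,\EEMD} - O(\log d)\cdot\OPT_{\EEMD}$, not the clean inequality claimed. Your suggestion to ``fold the additive loss into the approximation ratio'' would be fine for the downstream algorithm, but it does not recover the lemma; and the Lewis-weight alternative you mention requires knowing $TU^*$, which is unavailable. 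The fix is simply to stop after the first step: set $S=T$ (with $m=N=\poly(\Delta)$), and you are done. All of your discussion of the subtle point---verifying the deterministic one-sided inequality $\|Tw\|_1\ge\|w\|_{\EEMD}$ for the signed norm, including the $2\Delta\|z\|_1$ term---is relevant and correct, and is precisely what the citation to \cite{it03} is covering.
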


\begin{proof}
Using Lemma 1 in \cite{it03}, there exists a constant $C>0$ such that for all $i\in [d]$,
\begin{align}
C \| SU^* V_i - SA_i \|_1 \geq \| U^* V_i - A_i \|_{\EEMD}.
\end{align}
Then taking a summation over all $d$ terms and rescaling the matrix $S$, we obtain,
\begin{align*}
\sum_{i=1}^d \| S (U^* V_i - A_i ) \|_1 \geq \| U^* V_i - A_i \|_{\EEMD}
\end{align*}
which completes the proof.
\end{proof}

\begin{lemma}\label{lem:emd_no_dilation}
 Given matrix $A\in \mathbb{R}^{n\times d}$ and $U^*, V^* = \underset{U\in \mathbb{R}^{n\times k}, V \in \mathbb{R}^{k\times d}  }{\arg\min} \| UV - A\|_{1,\EEMD}$, there exist sketching matrices $S\in \mathbb{R}^{m\times n}$ such that
\begin{equation*}
\| S (U^* V^* - A ) \|_1 \leq O(\log n ) \| U^* V^* - A \|_{1,\EEMD}
\end{equation*}
holds with probability at least $.999$.
\end{lemma}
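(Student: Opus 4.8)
The statement to prove is Lemma~\ref{lem:emd_no_dilation}: for $A \in \mathbb{R}^{n \times d}$ with optimal rank-$k$ $\EEMD$ factors $U^*, V^*$, there exist sketching matrices $S \in \mathbb{R}^{m \times n}$ such that $\|S(U^*V^* - A)\|_1 \leq O(\log n)\|U^*V^* - A\|_{1,\EEMD}$ with probability at least $.999$.

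\textbf{Proof proposal.} The plan is to mirror the proof of Lemma~\ref{lem:emd_no_contraction}, but now in the dilation (upper bound) direction, combining the column-wise $\ell_1 \to \EEMD$ embedding of \cite{it03} with the $\ell_1$ no-dilation bound for Cauchy matrices already established as Lemma~\ref{lem:dense_cauchy_l1_no_dilation}. First I would recall that the embedding of \cite{it03} (as used in the previous lemma) actually gives a two-sided guarantee: there is a randomized linear map $T$ and a constant $C > 0$ such that, for every fixed vector $w$, with high probability $\|Tw\|_1 = \Theta(\|w\|_{\EEMD})$ up to a $O(\log \Delta) = O(\log n)$ distortion (since $\Delta^2 = n$, so $\log \Delta = \Theta(\log n)$). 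In particular $\|T(U^*V_i^* - A_i)\|_1 \lesssim \log(n)\|U^*V_i^* - A_i\|_{\EEMD}$ for each column $i$; this is the expectation/Markov side of the Indyk embedding and holds for a \emph{fixed} vector, which is all we need here since $U^*V^* - A$ is fixed.

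Next I would apply a Cauchy sketch $S'$ on top of $T$ (or observe that $T$ composed with a Cauchy matrix is itself a valid sketch $S$), using Lemma~\ref{lem:dense_cauchy_l1_no_dilation_general} / Lemma~\ref{lem:dense_cauchy_l1_no_dilation} which states that for a fixed matrix $M$, a Cauchy matrix rescaled by $\Theta(1/m)$ satisfies $\|S'M\|_1 \leq O(\log d)\|M\|_1$ with constant probability. Setting $M = T(U^*V^* - A)$ and chaining the two bounds, we get
\begin{align*}
\|S'T(U^*V^* - A)\|_1 &\leq O(\log d)\|T(U^*V^* - A)\|_1 \\
&\leq O(\log d)\sum_{i=1}^d \|T(U^*V_i^* - A_i)\|_1 \\
&\lesssim O(\log d \cdot \log n)\|U^*V^* - A\|_{1,\EEMD}.
\end{align*}
Actually, to match the claimed $O(\log n)$ bound exactly (not $O(\log n \log d)$), I would instead note that the composition does not require a second Cauchy layer at all: the linear embedding of \cite{it03} already maps each $[\Delta]^2$-dimensional column into a lower-dimensional $\ell_1$ space with distortion $O(\log \Delta) = O(\log n)$, and one can read off $S$ directly as (a scaling of) this embedding map applied blockwise, so a single application suffices and the distortion is $O(\log n)$. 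Summing the fixed-vector guarantee over the $d$ columns via linearity of expectation (exactly as in the Cauchy no-dilation argument, truncating the heavy tails and applying Markov at the end) controls the total.

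\textbf{Main obstacle.} The delicate point is that the Earth-Mover embedding of \cite{it03} is randomized and its $O(\log \Delta)$ distortion in the $\ell_1 \to \EEMD$ direction (the expansion side) is only guaranteed in expectation or with constant probability for a \emph{fixed} input, whereas the no-contraction lemma only needed the easy deterministic contraction direction ($\|Sw\|_1 \gtrsim \|w\|_{\EEMD}$). So the real work is verifying that the Indyk construction gives a clean one-sided $\|Sw\|_1 \leq O(\log n)\|w\|_{\EEMD}$ bound for fixed $w$ with good probability, and then summing over the $d$ columns of $U^*V^* - A$ without the failure probabilities or the distortions accumulating — this is handled exactly as in the proof of Lemma~\ref{lem:dense_cauchy_l1_no_dilation}: work with conditional expectations over a truncation event $\xi$, bound $\E[\|S(U^*V^* - A)\|_1 \mid \xi] \lesssim \log(n)\|U^*V^* - A\|_{1,\EEMD}$ by linearity across columns, and finish with Markov's inequality plus a union bound on $\Pr[\overline{\xi}]$. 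The rest is routine.
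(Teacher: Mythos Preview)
Your proposal is essentially correct and ultimately lands on the same approach as the paper, but it is overcomplicated in a way worth pointing out. The paper's proof is three lines: Lemma~2 of \cite{it03} directly gives, for each fixed column, the \emph{expectation} bound $\E[\|S(U^*V_i^*-A_i)\|_1] \le O(\log n)\|U^*V_i^*-A_i\|_{\EEMD}$; summing over $i$ by linearity of expectation gives $\E[\|S(U^*V^*-A)\|_1] \le O(\log n)\|U^*V^*-A\|_{1,\EEMD}$; and a single application of Markov's inequality finishes.

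The detours in your write-up are unnecessary. First, as you yourself realize mid-argument, there is no need to compose with a Cauchy sketch; the Indyk--Thaper map $S$ is already the sketching matrix in the statement. Second, and more importantly, the truncation event $\xi$ and the conditional-expectation machinery you import from Lemma~\ref{lem:dense_cauchy_l1_no_dilation} are irrelevant here: that machinery exists only because Cauchy random variables have no first moment, forcing one to truncate before taking expectations. The EMD embedding of \cite{it03} has a bona fide expectation bound out of the box, so plain linearity of expectation across the $d$ columns plus Markov is all that is needed---no truncation, no union bound over $\overline{\xi}$.
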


\begin{proof}

Using Lemma 2 in \cite{it03}, we have for any $i\in [d]$,
\begin{equation}\label{eq:emd_single_column_no_dilation}
\E[ \| S U^* V^*_i - S A_i \|_1 ] \leq O(\log n) \| U^* V^*_i - A_i \|_{\EEMD}.
\end{equation}
Then using that the expectation is linear, we have
\begin{align*}
\E[ \| S U^* V^* - S A \|_1 ] & = ~ \E[ \sum_{i=1}^d \| S U^* V^*_i - S A_i \|_1 ] \\
& ~ = \sum_{i=1}^d \E[ \| S U^* V^*_i - S A_i \|_1 ] \\
& ~ \leq \sum_{i=1}^d O(\log n)\| U^* V^*_i - A_i \|_{\EEMD} & \text{~by~Equation~(\ref{eq:emd_single_column_no_dilation})} \\
& ~ = O(\log n) \| U^* V^* - A \|_{1,\EEMD}.
\end{align*}
Using Markov's inequality, we can complete the proof.
\end{proof}

\begin{theorem}
Given a matrix $A\in \mathbb{R}^{n\times d}$, there exists an algorithm running in $\poly(k,n,d)$ time that is able to output $U\in \mathbb{R}^{n\times k}$ and $V\in \mathbb{R}^{k\times d}$ such that
\begin{align*}
\| U V - A \|_{1,\EEMD} \leq \poly(k) \cdot \log d\cdot \log n \min_{\rank-k~A_k}\| A_k -A \|_{1,\EEMD}
\end{align*}
holds with probability $.99$.
\end{theorem}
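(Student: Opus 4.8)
The plan is to reduce $\EEMD$-low rank approximation to $\ell_1$-low rank approximation via the Indyk--Thaper embedding $S_{\EMD}$ underlying Lemmas~\ref{lem:emd_no_contraction} and~\ref{lem:emd_no_dilation}, and then run a $\poly(n,d,k)$-time version of the $\ell_1$ machinery of Sections~\ref{sec:alg} and~\ref{sec:l1}, paying one extra $O(\log n)$ factor for the distortion of $S_{\EMD}$. Fix an optimal rank-$k$ approximation $U^*V^*$ and set $\OPT:=\|U^*V^*-A\|_{1,\EEMD}$. Since $\EEMD$ is a norm, $\|X\|_{1,\EEMD}=\sum_j\|X_j\|_{\EEMD}$ is a column-separable norm, so the abstract ``no contraction + no dilation $\Rightarrow$ cost-preserving sketch'' lemmas of Section~\ref{sec:properties_of_con_dil} (Lemma~\ref{lem:no_dialation_and_no_contraction_imply_general_no_contraction}, Lemma~\ref{lem:general_sketch_SUV}, Lemma~\ref{lem:general_sketch_T1BXYCT2}), whose proofs use only the triangle inequality and column-separability, apply verbatim with $\|\cdot\|_1$ replaced by $\|\cdot\|_{1,\EEMD}$ on the input side.

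\textbf{Left sketch.} Take the left sketch to be $S=S_C\cdot S_{\EMD}$, where $S_C$ is a dense Cauchy matrix with $O(k\log k)$ rows. For no contraction, Lemma~\ref{lem:emd_no_contraction} gives $\|S_{\EMD}w\|_1\geq\|w\|_{\EEMD}$ \emph{deterministically} for every $w$, and composing with the Cauchy $\ell_1$-subspace-embedding Lemma~\ref{lem:dense_cauchy_l1_k_subspace_general} on the $k$-dimensional subspace $\{S_{\EMD}U^*x\}$ yields $\|SU^*x\|_1\gtrsim\|S_{\EMD}U^*x\|_1\geq\|U^*x\|_{\EEMD}$ for all $x$, i.e. $O(1)$-contraction on $U^*$. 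For no dilation, Lemma~\ref{lem:emd_no_dilation} gives $\E\|S_{\EMD}(U^*V^*-A)\|_1\leq O(\log n)\OPT$, and Lemma~\ref{lem:dense_cauchy_l1_no_dilation_general} gives $\E_{S_C}\|S_C M\|_1\leq O(\log d)\|M\|_1$ for the now-fixed $M=S_{\EMD}(U^*V^*-A)$ with $d$ columns; chaining and applying Markov shows $S$ has $O(\log d\log n)$-dilation on $U^*V^*-A$ in $\|\cdot\|_{1,\EEMD}$ with constant probability. By the $\|\cdot\|_{1,\EEMD}$ form of Lemma~\ref{lem:general_sketch_SUV}, $S$ is an $O(\log d\log n)$-multiple-regression-cost-preserving sketch of $(U^*,A)$; relaxing to $\ell_2$ via Claim~\ref{cla:ell2_relax_ell1_regression} and using that $\widehat V:=(SU^*)^\dagger SA$ lies in the row span of the known $O(k\log k)\times d$ matrix $SA$, we obtain $\min_{\text{rank-}k\ X}\|X(SA)-A\|_{1,\EEMD}\leq\sqrt{O(k\log k)}\cdot O(\log d\log n)\cdot\OPT=\poly(k)\log d\log n\cdot\OPT$, the EMD analogue of Corollary~\ref{cor:three_existence_results}.

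\textbf{Finding the factors in $\poly(n,d,k)$ time.} Given this existence statement, I would produce an actual realizing rank-$k$ $\widehat A$ by mirroring the proof of Theorem~\ref{thm:input_sparsity_algorithm}, but allowing plain linear programs and the Frobenius relaxation of Theorem~\ref{thm:reduce_to_frobenious} since only $\poly(n,d,k)$ time is needed: a right Cauchy sketch $R$ (no $\|\cdot\|_{1,\EEMD}$-dilation beyond $O(\log n)$, from the triangle inequality for $\EEMD$ and the truncated-Cauchy estimate underlying Lemma~\ref{lem:dense_cauchy_l1_no_dilation}, plus $O(1)$-contraction on the $O(k\log k)$-dimensional row space by Lemma~\ref{lem:dense_cauchy_l1_k_subspace_general}) yields a good rank-$k$ solution of the form $AR\cdot X\cdot Y\cdot SA$; left/right sketches $T_1,T_2$ (the left one again composed with $S_{\EMD}$, so that no contraction holds against $\EEMD$ rather than $\ell_1$) reduce this to a constant-size problem $\min_{X,Y}\|T_1(ARXYSA)T_2-T_1AT_2\|$ solved by a linear program or the Frobenius relaxation at the cost of a further $\poly(k)$ factor. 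Unwinding the sketches via the column-separable versions of Lemmas~\ref{lem:general_sketch_SUV} and~\ref{lem:general_sketch_T1BXYCT2} gives $\|\widehat A-A\|_{1,\EEMD}\leq\poly(k)\log d\log n\cdot\OPT$ with $\widehat A=ARXYSA$ of rank $k$ and factors in $\mathbb{R}^{n\times k}$, $\mathbb{R}^{k\times d}$. Output $\widehat A$.

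\textbf{Main obstacle.} The delicate points are about \emph{how} $S_{\EMD}$ enters. First, Lemma~\ref{lem:emd_no_dilation} holds only in expectation and only for a \emph{fixed} matrix, so it must be invoked against the true EMD-optimal residual $U^*V^*-A$ (and, in later steps, against that step's fixed optimal residual, using a \emph{fresh} independent copy of $S_{\EMD}$ when the previous residual depends on an earlier copy), with Markov converting to high probability — exactly as dilation is handled in Section~\ref{sec:l1}. Second, the output must be an honest rank-$k$ matrix in $\mathbb{R}^{n\times d}$: this is why we keep the $U^*$-structure rather than running the $\ell_1$ algorithm as a black box on the $N\times d$ matrix $S_{\EMD}A$ (which would only return a left factor in the $N$-dimensional embedded space), and why the left factor is kept in the column span of actual columns of $A$ via $R$. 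I expect checking that the right-hand and $T_1,T_2$ sketches do not reintroduce extra $\log n$ factors — that the reduction to rank $\poly(k)$ occurs before the $\EEMD$ norm is re-sketched — to be the most error-prone part of the write-up, though it follows the $\ell_1$ bookkeeping with $\|\cdot\|_1$ replaced by $\|\cdot\|_{1,\EEMD}$ throughout.
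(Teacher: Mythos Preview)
Your overall strategy—embed $\EEMD$ into $\ell_1$ via $S_{\EMD}$ and then run the $\ell_1$ machinery—is exactly the paper's approach. The paper's proof is much terser than yours: it simply says to apply Lemmas~\ref{lem:emd_no_contraction} and~\ref{lem:emd_no_dilation} to reduce to an $\ell_1$ low-rank problem on $S_{\EMD}A\in\mathbb{R}^{m\times d}$ (paying one $O(\log n)$ factor), and then invoke Theorem~\ref{thm:polyklogd_approx_algorithm} as a black box (paying $\poly(k)\log d$). You open the black box and interleave $S_{\EMD}$ with the individual Cauchy sketches; this is not wrong, and you are right to worry about recovering a genuine left factor in $\mathbb{R}^{n\times k}$ (a point the paper's two-line proof glosses over).

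There is, however, a real gap in your right-sketch step. You claim that the right Cauchy matrix $R$ has ``$O(1)$-contraction on the $O(k\log k)$-dimensional row space by Lemma~\ref{lem:dense_cauchy_l1_k_subspace_general}'', and then use the $\|\cdot\|_{1,\EEMD}$ form of Lemma~\ref{lem:general_sketch_SUV}. But Lemma~\ref{lem:dense_cauchy_l1_k_subspace_general} gives $\|R^\top v\|_1\gtrsim\|v\|_1$ for $v$ in a low-dimensional subspace of $\mathbb{R}^d$; what you would actually need for the right sketch under the column-wise norm $\|\cdot\|_{1,\EEMD}$ is that $\sum_i\|W(SA R)_i\|_{\EEMD}\gtrsim\sum_j\|W(SA)_j\|_{\EEMD}$ for \emph{every} $W\in\mathbb{R}^{n\times s}$. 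This is a statement about Cauchy sketches preserving sums of an arbitrary seminorm $v\mapsto\|Wv\|_{\EEMD}$ on $\mathbb{R}^s$, and it does not follow from an $\ell_1$ subspace embedding. The asymmetry of $\|\cdot\|_{1,\EEMD}$ (it is column-based) prevents the usual transpose trick that makes left- and right-sketch arguments symmetric in the $\ell_1$ case.

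The clean fix—and the way to read the paper's argument—is to use the \emph{per-vector} deterministic no-contraction $\|w\|_{\EEMD}\le C\|S_{\EMD}w\|_1$ only once, at the very end. Your output $\hat A=AR\,X\,Y\,S_CS_{\EMD}A$ satisfies
\[
\|\hat A-A\|_{1,\EEMD}\;\le\;C\,\|S_{\EMD}\hat A-S_{\EMD}A\|_1
\;=\;C\,\bigl\|(S_{\EMD}A)R\,X\,Y\,S_C(S_{\EMD}A)-(S_{\EMD}A)\bigr\|_1,
\]
and the right-hand side is literally the $\ell_1$ cost of the $\ell_1$ algorithm's output when run on the $m\times d$ matrix $S_{\EMD}A$. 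So you never need $\EEMD$ no-contraction for $R$ (or $T_1,T_2$): run the entire right-sketch/$T_1,T_2$ analysis in $\ell_1$ on $S_{\EMD}A$, bound the $\ell_1$ cost by $\poly(k)\cdot(\text{polylog})\cdot\OPT_{\ell_1}(S_{\EMD}A)\le\poly(k)\cdot(\text{polylog})\cdot O(\log n)\,\OPT$, and pull back through the single inequality above. This both closes your gap and makes the ``fresh copies of $S_{\EMD}$'' bookkeeping unnecessary.
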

\begin{proof}
First using Lemma~\ref{lem:emd_no_contraction} and Lemma~\ref{lem:emd_no_dilation}, we can reduce the original problem into an $\ell_1$-low rank approximation problem by choosing $m=\poly(n)$. Second, we can use our $\ell_1$-low rank approximation algorithm to solve it. Notice that all of our $\ell_1$-low rank approximation algorithms can be applied here. If we apply Theorem~\ref{thm:polyklogd_approx_algorithm}, we complete the proof.

\end{proof}
Our current $\| \cdot \|_{1,\EEMD}$ is column-based. We can also define it to be row-based. Then we get a slightly better result by applying the $\ell_1$-low rank algorithm.
\begin{corollary}
Given a matrix $A\in \mathbb{R}^{n\times d}$, there exists an algorithm running in $\poly(k,n,d)$ time that is able to output $U\in \mathbb{R}^{n\times k}$ and $V\in \mathbb{R}^{k\times d}$ such that
\begin{align*}
\| U V - A \|_{1,\EEMD} \leq \poly(k) \cdot \log^2 d \min_{\rank-k~A_k}\| A_k -A \|_{1,\EEMD}
\end{align*}
holds with probability $.99$.
\end{corollary}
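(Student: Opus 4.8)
The plan is to mirror the proof of the preceding theorem with the roles of the rows and columns of $A$ interchanged, so that every logarithm that was in the row count $n$ becomes a logarithm in the row length $d$. Redefine $\|X\|_{1,\EEMD}=\sum_{i=1}^{n}\|X^i\|_{\EEMD}$, the sum of the $\EEMD$ distances of the $n$ rows of $X$, where each row is viewed as a vector on the grid $[\Delta]^2$ with $d=\Delta^2$. The reduction is: sample an EMD-to-$\ell_1$ embedding matrix $S^{\top}\in\mathbb{R}^{d\times m}$ with $m=\poly(d)$ from the distribution of \cite{it03} (rescaled so the non-contraction constant equals $1$), form $AS^{\top}\in\mathbb{R}^{n\times m}$, and run the $\ell_1$-low rank algorithm of Theorem~\ref{thm:polyklogd_approx_algorithm} on $AS^{\top}$. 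The key structural fact that makes the reduction reversible is that this algorithm always returns a rank-$k$ factorization $\wh U\wh V$ whose row span lies in the row span of its input; applied to $AS^{\top}$, the rows of $\wh V$ therefore lie in the row space of $S^{\top}$, so $\wh V=V_0 S^{\top}$ for some $V_0\in\mathbb{R}^{k\times d}$, and $\wh U V_0$ is a genuine rank-$k$ matrix approximating $A$ in $\|\cdot\|_{1,\EEMD}$.

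First I would record the row analogues of Lemma~\ref{lem:emd_no_contraction} and Lemma~\ref{lem:emd_no_dilation}, proved exactly as in Section~\ref{sec:emd_main} but applying Lemmas~1 and~2 of \cite{it03} to the $n$ rows and summing: (i) for every $X\in\mathbb{R}^{n\times d}$, $\|XS^{\top}\|_1\ge\|X\|_{1,\EEMD}$ (deterministic, after rescaling); and (ii) with probability $0.999$, $\|(U^*V^*-A)S^{\top}\|_1\le O(\log d)\cdot\OPT$, where $U^*V^*$ is an optimal $\EEMD$ rank-$k$ approximation of $A$, $\OPT:=\|U^*V^*-A\|_{1,\EEMD}$, and the distortion is $O(\log\Delta)=O(\log d)$ because the embedded vectors now have dimension $d$ rather than $n$. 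Chaining these with Theorem~\ref{thm:polyklogd_approx_algorithm}: since $U^*V^*S^{\top}$ has rank $\le k$, the $\ell_1$ optimum of $AS^{\top}$ is at most $\|(U^*V^*-A)S^{\top}\|_1\le O(\log d)\cdot\OPT$; hence by the $\poly(k)\log m$ guarantee of Theorem~\ref{thm:polyklogd_approx_algorithm}, with $\log m=\log\poly(d)=O(\log d)$, we get $\|\wh U\wh V-AS^{\top}\|_1\le\poly(k)\log^2 d\cdot\OPT$; and by (i), $\|\wh U V_0-A\|_{1,\EEMD}\le\|(\wh U V_0-A)S^{\top}\|_1=\|\wh U\wh V-AS^{\top}\|_1\le\poly(k)\log^2 d\cdot\OPT$. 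The running time is $\nnz(AS^{\top})+(n+m)\poly(k)=\poly(k,n,d)$, and a union bound over the embedding event and the success of Theorem~\ref{thm:polyklogd_approx_algorithm}, followed by the usual constant-probability amplification, gives probability $0.99$. One may optionally replace $V_0$ by the minimizer of $\min_V\|\wh U V-A\|_{1,\EEMD}$, a polynomial-size linear program since $\|\cdot\|_{\EEMD}$ is itself a min-cost-flow LP.

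I expect no new ideas to be needed; the entire content is tracking two quantities at once. The dimension entering the EMD embedding distortion and the dimension entering the logarithmic factor of the $\ell_1$ algorithm must both be the row length $d$: the former because the rows live in $\mathbb{R}^d$, the latter because $AS^{\top}$ has $m=\poly(d)$ columns. This is precisely what sketching on the right via $S^{\top}\in\mathbb{R}^{d\times m}$ buys over sketching on the left via the $S\in\mathbb{R}^{m\times n}$ of the preceding theorem, and it is why the approximation improves from $\poly(k)\log d\log n$ to $\poly(k)\log^2 d$, which is genuinely better under the standing assumption $d\le n$. The one step that deserves a careful check is the ``un-sketching'': confirming that the $\ell_1$ low-rank algorithm we invoke produces a solution whose row span lies in the row span of its input, which holds by inspection of Algorithm~\textsc{L1LowRankApproxPolykLogd} and Algorithm~\textsc{L1LowRankApproxB}; without this, $\wh V$ would not be of the form $V_0 S^{\top}$ and the approximation could not be pulled back from $AS^{\top}$ to $A$.
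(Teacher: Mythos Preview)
The proposal is correct and follows exactly the approach the paper indicates: switch to the row-based definition of $\|\cdot\|_{1,\EEMD}$ and rerun the reduction, so that both the EMD embedding distortion and the $\ell_1$ algorithm's logarithmic factor land on $d$ rather than $n$. The paper's own ``proof'' of the corollary is a single sentence, and your write-up faithfully and carefully instantiates it; in particular, your explicit treatment of the un-sketching step (using that \textsc{L1LowRankApproxPolykLogd}/\textsc{L1LowRankApproxB} return $\wh V$ in the row span of the input, so that $\wh V=V_0 S^\top$ can be pulled back) is a detail the paper leaves implicit.
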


\section{Hardness Results for Cauchy Matrices, Row Subset Selection, OSE} \label{sec:hardinstance}
Section~\ref{sec:hardinstance_cauchy} presents some inapproximability results by using random Cauchy matrices. Section~\ref{sec:hardinstance_row} is a warmup for inapproximability results for row subset selection problems. Section~\ref{sec:hardinstance_ose} shows the inapproximability results by using any linear oblivious subspace embedding(OSE), and also shows inapproximability results for row subset selection.
\subsection{Hard instance for Cauchy matrices}\label{sec:hardinstance_cauchy}
The goal of this section is to prove Theorem \ref{thm:hardinstance_cauchy}.
Before stating the result, we first introduce some useful tools in our analysis.
\begin{lemma}[Cauchy Upper Tail Inequality, Lemma~3 of \cite{cdmmmw13}]\label{lem:cauchy_tail}
For $i\in[m]$, let $C_i$ be $m$ random Cauchy variables from $C(0,1)$ (not necessarily independent), and $\gamma_i>0$ with $\gamma=\sum_{i\in [m]}\gamma_i$. Let $X=\sum_{i\in [m]}\gamma_i|C_i|$. Then, for any $t\geq 1$,
\begin{align*}
\Pr[X>\gamma t]\leq O(\log (mt)/t).
\end{align*}
\end{lemma}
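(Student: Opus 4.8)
The plan is to run the standard truncation argument for heavy-tailed Cauchy sums, the same scheme already used in the proof of Lemma~\ref{lem:dense_cauchy_l1_no_dilation}. Since the Cauchy distribution has no finite first moment, one cannot apply Markov's inequality to $X$ directly, so instead one clips each $|C_i|$ at a threshold $M$, controls the clipped part by its (now finite) expectation, and controls the ``overshoot'' by a union bound. Crucially, neither step uses independence of the $C_i$: the union bound needs only the marginals, and linearity of expectation holds for arbitrary dependence, which is exactly why the hypothesis is allowed to leave the $C_i$ dependent.

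Concretely, first I would fix $M>0$ (optimized at the end to $M = mt$) and write
\begin{align*}
\Pr[X > \gamma t] & \le \Pr\big[\exists\, i\in[m]:\ |C_i| > M\big] \\
& \quad + \Pr\Big[\textstyle\sum_{i\in[m]}\gamma_i |C_i| > \gamma t,\ \max_i |C_i| \le M\Big].
\end{align*}
For the first term, the Cauchy tail estimate $\Pr[|C_i| > M] = \tfrac{2}{\pi}\arctan(1/M) \le \tfrac{2}{\pi M}$ together with a union bound gives $\Pr[\exists\, i:\ |C_i| > M] \le \tfrac{2m}{\pi M}$. For the second term, on the event $\max_i |C_i|\le M$ we have $\sum_i \gamma_i |C_i| = \sum_i \gamma_i Y_i$ with $Y_i := |C_i|\,\mathbf{1}[|C_i|\le M]$; computing the truncated half-Cauchy mean gives $\E[Y_i] = \tfrac{1}{\pi}\ln(1+M^2)$, so by linearity $\E\big[\sum_i \gamma_i Y_i\big] = \tfrac{\gamma}{\pi}\ln(1+M^2)$, and Markov's inequality yields $\Pr[\sum_i \gamma_i Y_i > \gamma t] \le \tfrac{\ln(1+M^2)}{\pi t}$.

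Putting the pieces together, $\Pr[X > \gamma t] \le \tfrac{2m}{\pi M} + \tfrac{\ln(1+M^2)}{\pi t}$; choosing $M = mt$ makes the first summand $\tfrac{2}{\pi t}$ and, since $\ln(1+m^2t^2) = O(\log(mt))$, the second summand $O(\log(mt)/t)$, so altogether $\Pr[X > \gamma t] = O(\log(mt)/t)$, as claimed. There is no real obstacle here beyond bookkeeping; the only points requiring a moment's care are balancing the two error terms via the choice of $M$ so the logarithmic loss is exactly $\log(mt)$ (rather than $\log m$ and $\log t$ separately), and checking that the argument never covertly uses independence of the $C_i$.
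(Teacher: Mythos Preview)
Your proof is correct. The paper does not prove this lemma itself---it cites it from \cite{cdmmmw13}---but the truncation-plus-Markov argument you give is exactly the standard proof from that reference, and the same scheme the paper reuses in Lemma~\ref{lem:dense_cauchy_l1_no_dilation}.
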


\begin{theorem}\label{thm:hardinstance_cauchy}
Let $k\geq 1$. There exist matrices $A\in \mathbb{R}^{d\times d}$ such that for any $o(\log d)\geq t\geq 1$, where $c$ can be any constant smaller than $1/3$, for random Cauchy matrices $S\in \mathbb{R}^{t\times d}$ where each entry is sampled from an i.i.d. Cauchy distribution $C(0,\gamma)$ where $\gamma$ is an arbitrary real number, with probability $.99$ we have
\begin{align*}
 \underset{U \in \mathbb{R}^{d\times t} }{\min} \| U SA - A\|_1 \geq \Omega(\log d/(t\log t)) \underset{\rank-k~A'}{\min} \| A' -  A\|_1.
 \end{align*}
\end{theorem}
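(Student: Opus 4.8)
The plan is to exhibit the hard instance $A = I_d + (\log d)\, e_1^\top e \in \mathbb{R}^{d\times d}$, where $e_1=(1,0,\dots,0)$ and $e=(1,\dots,1)$, and then to argue that for a random Cauchy $S\in\mathbb{R}^{t\times d}$ the row span of $SA$ cannot approximate the first row of $A$ (which is $(\log d)\,e$) to within a $o(\log d/(t\log t))$ factor, while $\mathrm{OPT}=\min_{\mathrm{rank}\text{-}k}\|A'-A\|_1 \le d$ (indeed $\le d$ by taking $A'$ supported on the first row, or more simply noting $I_d$ together with the rank-$1$ term gives cost bounded by $d$). First I would write $SA = S + (\log d)\, S_1^\top e$ where $S_1\in\mathbb{R}^t$ is the first column of $S$; thus every column $C_j$ of $SA$ ($j\ge 2$) equals $S_j + (\log d) S_1$, a vector of $t$ Cauchy-type entries, while $C_1 = S_1(1+\log d)$. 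The key structural fact is that a typical column has all entries of magnitude $\mathrm{poly}(t)\log d$, but—because Cauchy variables are heavy-tailed—with reasonable probability there is one very large entry. Quantitatively, among the $d$ columns, one has an entry of magnitude $r=\Theta(dt\log t)$, and in general roughly $2^i$ columns have an entry of magnitude about $r 2^{-i}$; these large entries typically occur in columns whose other entries are only $\mathrm{poly}(t)$ in magnitude.

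Next I would set up the lower bound on the approximation cost. To approximate the first row $(\log d)e$ by $x^\top SA$ for some $x\in\mathbb{R}^t$, two things must hold. On the one hand, using Lemma~\ref{lem:cauchy_tail} (the Cauchy upper tail inequality) together with a net argument over $x$ in the $t$-dimensional unit sphere, I would show that $\|x^\top S\|_1 = O(\|x\|_1 d \log t)$ fails only with probability $\exp(-\Omega(t\log t))$ for each fixed $x$, hence $\|x\|_1 \le \mathrm{poly}(t)$ for \emph{every} $x$ that produces an approximation of cost $\le d \cdot o(\log d/(t\log t))$; combined with a lower-tail bound (e.g.\ the fact that $\|Sy\|_1 \gtrsim \|y\|_1$ holds with probability $1-\exp(-\Omega(t))$, analogous to Lemma~\ref{lem:Sy_at_least_y_with_exp_prob}) one also gets $\|x\|_1 \gtrsim 1/\mathrm{poly}(t)$ is forced since the constant term $(\log d)e$ must be matched. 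On the other hand—and this is where the blow-up comes—for each column $C_j$ containing a large entry of magnitude $m_j \approx r2^{-i} \gg \log d$ with all other entries $\mathrm{poly}(t)$, the corresponding coordinate of $x^\top SA$ has magnitude $\approx |x^\top C_j|$, which is within $\mathrm{poly}(t)$ factors of $m_j/\mathrm{poly}(t)$ on typical columns (the large entry dominates $x^\top C_j$ up to the small $\|x\|_\infty$-weighted contribution of the other coordinates, since $\|x\|_1\le\mathrm{poly}(t)$). Hence in approximating the $j$-th coordinate of $(\log d)e$ we incur cost $\gtrsim (m_j - \log d)/\mathrm{poly}(t) \gtrsim m_j/\mathrm{poly}(t)$ on that column.

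Summing over the $\approx 2^i$ columns at each scale $r2^{-i}$ for $i$ ranging up to $\log r$, the total cost is $\gtrsim \sum_i 2^i \cdot (r 2^{-i})/\mathrm{poly}(t) = \Theta(r \log r)/\mathrm{poly}(t) = \Theta(d t\log t \cdot \log d)/\mathrm{poly}(t) = \Theta(d\log d)/\mathrm{poly}(t)$. After optimizing the $\mathrm{poly}(t)$ factors carefully (the technical overview indicates this can be pushed to $\Theta(d\log d/(t\log^2 t))$, but the theorem as stated only needs $\Omega(\log d/(t\log t))$ relative to $\mathrm{OPT}\le d$), this gives $\min_U \|USA - A\|_1 \ge \Omega(\log d/(t\log t)) \cdot \mathrm{OPT}$. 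The final step is a union bound: the events ``column $j$ has a large entry whose companions are small'' are near-independent across $j$, so with probability $\ge .99$ enough scales are populated; I would be a little careful to only require constant success probability, so moderate concentration of the count of large-entry columns at each scale (a second-moment or Chernoff-type bound on a sum of Bernoulli's with success probability $\asymp 2^{-i}$) suffices.

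\textbf{Main obstacle.} The delicate point is the interaction in $x^\top C_j$ between the single large entry of $C_j$ and the $\mathrm{poly}(t)$-sized remaining entries: I must ensure that $|x^\top C_j|$ is genuinely $\gtrsim m_j/\mathrm{poly}(t)$ rather than being cancelled, which requires that $x$ cannot have $\|x\|_\infty$ tiny precisely on the coordinate hosting the large entry for too many columns simultaneously. Controlling this uniformly over all $x$ via a net, while simultaneously maintaining the $\|x\|_1 = \mathrm{poly}(t)$ upper bound and the lower bound forcing $x$ to match the constant row, is the crux; handling the dependence between the large entry's location and the net for $x$ is the part I expect to require the most care, together with making the ``$\approx 2^i$ columns at scale $r2^{-i}$'' heuristic rigorous with only constant (not high) probability.
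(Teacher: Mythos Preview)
Your proposal uses the same hard instance $A=I_d+(\log d)e_1e^\top$ and the same overall strategy as the paper, and the heavy-tail level-set heuristic you describe is exactly the intuition behind the paper's argument. However, two points need correction, and the second is precisely the obstacle you flag but do not resolve.

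First, the bounds on $x$ are not quite what your sketch gives. Your argument for $\|x\|_1\le\mathrm{poly}(t)$ goes the wrong direction: an \emph{upper} bound $\|x^\top S\|_1=O(\|x\|_1 d\log t)$ cannot force $\|x\|_1$ small. The paper instead proves $\|x\|_1\le O(\log d)$ by a lower bound: for each column $i$, both $\langle S_i,x\rangle$ and $(\log d)\langle S_1,x\rangle$ are Cauchy with scale $\|x\|_1$, so with probability $\ge 0.45$ they share a sign and $|\langle(SA)_i,x\rangle|\ge\Omega(\|x\|_1)$; Chernoff plus a net over $\mathbb{R}^t$ then forces $\|x\|_1=O(\log d)$ whenever the cost is $o(d\log d)$. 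The companion lower bound is on $\|x\|_2$, not $\|x\|_1$: for a constant fraction of $i$ one has $|\langle(SA)_i,x\rangle|\le O(t\log t\cdot\log d)\|x\|_2$, so matching the target $\Theta(\log d)$ on those columns forces $\|x\|_2\ge\Omega(1/(t\log t))$.

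Second, the ``main obstacle'' you name is real, and your proposal does not resolve it: if the single large entry of $C_\ell$ sits in row $p$ with $|x_p|$ tiny, then $|x^\top C_\ell|$ need not be large at all, and a net over $x$ alone cannot rule this out uniformly across columns. The paper's fix is to \emph{reverse the order of quantifiers}. From $\|x\|_2\ge\Omega(1/(t\log t))$ one first extracts a specific coordinate $i^\star$ with $|x_{i^\star}|\ge\Omega(1/(t\log t))$. Then one conditions on $i^\star$ (there are only $t$ choices, so a union bound is cheap) and restricts attention to columns $\ell$ for which $S_{i^\star,\ell}\in(2^j,2^{j+1}]$ while \emph{all other} rows satisfy $|S_{i',\ell}|\le O(d^{1/3})$; the paper shows $\Omega(d/2^j)$ such columns exist for each $j$ in a range of length $\Theta(\log d)$. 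On those columns
\[
|x^\top C_\ell|\ \ge\ |x_{i^\star}|\cdot 2^j \;-\; \|x\|_1\cdot O(d^{1/3}) \;-\; |\langle (SB)_1,x\rangle|\ \gtrsim\ \frac{2^j}{t\log t},
\]
since $\|x\|_1\le O(\log d)$ and $2^j\ge d^{c_2}$ with $c_2>1/3$. Summing over $\Omega(d/2^j)$ columns and $\Theta(\log d)$ scales gives $\Omega(d\log d/(t\log t))$. Fixing $i^\star$ first is the key idea that dissolves the cancellation issue you were worried about.
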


\begin{proof}
We define matrix $A \in \mathbb{R}^{d\times d}$
\begin{align*}
A = B + I =
\alpha \cdot
\begin{bmatrix}
1 & 1 & 1 & \cdots & 1 \\
0 & 0 & 0 & \cdots & 0 \\
0 & 0 & 0 & \cdots & 0 \\
\cdots & \cdots & \cdots & \cdots & \cdots \\
0 & 0 & 0 & \cdots & 0 \\
\end{bmatrix}
+
\begin{bmatrix}
1 & 0 & 0 & \cdots & 0 \\
0 & 1 & 0 & \cdots & 0 \\
0 & 0 & 1 & \cdots & 0 \\
\cdots & \cdots & \cdots & \cdots & \cdots \\
0 & 0 & 0 & \cdots & 1 \\
\end{bmatrix},
\end{align*}
where $\alpha=\Theta(\log d)$.
 So, if we only fit the first row of $A$, we can get approximation cost at most $O(d)$.

For $t>0$, let $S\in\mathbb{R}^{t\times d}$ denote a random Cauchy matrix where $S_{i,j}$ denotes the entry in the $i$th row and $j$th column. Then $SA$ is
\begin{align*}
S A =SB + S I =
\alpha \cdot
\begin{bmatrix}
S_{1,1} & S_{1,1} & S_{1,1} & \cdots & S_{1,1} \\
S_{2,1} & S_{2,1} & S_{2,1} & \cdots & S_{2,1} \\
S_{3,1} & S_{3,1} & S_{3,1} & \cdots & S_{3,1} \\
\cdots & \cdots & \cdots & \cdots & \cdots \\
S_{t,1} & S_{t,1} & S_{t,1} & \cdots & S_{t,1} \\
\end{bmatrix}
+
\begin{bmatrix}
S_{1,1} & S_{1,2} & S_{1,3} & \cdots & S_{1,d} \\
S_{2,1} & S_{2,2} & S_{2,3} & \cdots & S_{2,d} \\
S_{3,1} & S_{3,2} & S_{3,3} & \cdots & S_{3,d} \\
\cdots & \cdots & \cdots & \cdots & \cdots \\
S_{t,1} & S_{t,2} & S_{t,3} & \cdots & S_{t,d} \\
\end{bmatrix}.
\end{align*}
Since $\forall \gamma,$
\begin{align*}
\underset{U \in \mathbb{R}^{d\times t} }{\min} \| U SA - A\|_1=\underset{U \in \mathbb{R}^{d\times t} }{\min} \|\gamma U SA - A\|_1,
\end{align*}
without loss of generality, we can let $S_{i,j}\sim C(0,1)$. Then we want to argue that, if we want to use $SA$ to fit the first row of $A$, with high probability, the cost will be $\Omega(d\log d)$.

Let $b\in\mathbb{R}^d$ denote the first row of $A$. Then, we want to use $SA$ to fit the first row of $A$, which is a $d$-dimensional vector that has entry $\Theta(\log d)$ on each position. The problem is equivalent to
\begin{align*}
\min_{x \in \mathbb{R}^{t} } \|  (SA)^\top x - b \|_1.
\end{align*}
First, we want to show that for any $x$ in $\mathbb{R}^t$, if $x^\top SA$ fits the first row of $A$ very well, then the $\ell_1$ and $\ell_2$ norm of vector $x$ must have reasonable size.
\begin{claim}\label{cla:upper_lower_x}
Define $A^1$ to be the first row of matrix $A$. With probability $.999$, for any column vector $x\in \mathbb{R}^{t\times 1}$, if $ \| x^\top SA - A^1  \|_1  \leq o(d \log d)$, then

\rm{~Property~(\RN{1})} : $\| x \|_2 \geq \Omega(1/t\log t)$;

\rm{~Property~(\RN{2})} :  $ \| x \|_1 \leq O(\log d)$.
\end{claim}
\begin{proof}
Consider the absolute value of the $i$-th coordinate of $x^\top SA$. We can rewrite $| \langle (SA)_i,x \rangle|$ in the following sense,
\begin{align}\label{eq:rewrite_SA_i_dot_x}
| \langle (SA)_i , x \rangle | = |\langle (SB)_i,x \rangle + \langle (SI)_i,x \rangle | = | \langle (SB)_1,x \rangle + \langle (SI)_i,x \rangle | = | \langle (SB)_1,x \rangle  + \langle S_i,x \rangle |,
\end{align}
where the second step follows because $(SB)_i = (SB)_1, \forall i\in [n]$, and the last step follows because $(SI)_i = S_i, \forall i \in [n]$.

We start by proving Property \RN{1}. Using the triangle inequality and Equation~(\ref{eq:rewrite_SA_i_dot_x}),
\begin{align*}
 &  ~| \langle (SA)_i,x \rangle | \\
\leq & ~ | \langle (SB)_1,x \rangle | + |\langle S_i,x \rangle | \\
\leq & ~ \| (SB)_1 \|_2 \|x\|_2 + \| S_i \|_2  \|x \|_2  & \text{~by~Cauchy-Schwarz~inequality} \\
\leq & ~ \| (SB)_1 \|_1 \|x\|_2 + \| S_i \|_1  \|x \|_2.  & \text{~by~} \| \cdot \|_2 \leq \| \cdot \|_1
\end{align*}

Then, according to Lemma~\ref{lem:cauchy_tail}, with probability $.99999$, we have $\| (SB)_1\|_1\leq O(t\log t\log d)$ and for a fixed $i\in[d]$, $\|S_i\|_1\leq O(t\log t)$. Applying the Chernoff bound, with probability $1-2^{-\Omega(d)}$,  there are a constant fraction of $i$ such that $\|S_i\|_1=O(t\log t)$. Taking the union bound, with probability $.9999$, there exists a constant fraction of $i$ such that $ | \langle (SA)_i,x \rangle| \leq O(t\log t\log d) \|x\|_2$. Because $A_{1,i} \geq \alpha$, $\forall i\in [d]$ where $\alpha=\Theta(\log d)$, we need $\|x\|_2\geq \Omega(1/t\log t)$. Otherwise, the total cost on this constant fraction of coordinates will be at least $\Omega(d\log d)$.

For Property \RN{2}, for a fixed $x\in \mathbb{R}^t$, we have
\begin{align*}
&|\langle (SA)_i,x \rangle | \\
=&| \langle (SB)_1,x \rangle + \langle S_i,x \rangle | &\text{~by~Equation~(\ref{eq:rewrite_SA_i_dot_x})}  \\
=&\biggl| \Theta(\log d) \|x\|_1 w'_1(x) + \|x\|_1 w'_i(x) \biggr|.
\end{align*}
where $w'_i(x) \sim C(0,1)$, and for different $x$, $w'_i(x)$ are different.
Then for a fixed $x\in\mathbb{R}^t$ with probability at least $0.9$, $|w'_i(x)|=\Omega(1)$, and with probability $0.5$, $w'_1(x)$ and $w'_i(x)$ have the same sign. Since these two events are independent, with probability at least $0.45$, we have
\begin{align*}
&|\langle (SA)_i,x \rangle |\geq \|x\|_1\cdot\Omega(1).
\end{align*}
Applying the Chernoff bound, with probability at least $1-2^{\Theta(d)}$, there exists a $3/10$ fraction of $i$ such that $|\langle (SA)_i,x \rangle | \geq \|x\|_1\Omega(1)$. 



We build an $\eps$-net $\N$ for $x \in \mathbb{R}^t$ on an $\ell_1$-norm unit ball, where $\eps = 1/(t^2\log^2 d)$. Thus the size of the net is $|\N| = 2^{\wt{\Theta}(t)}$. Consider $y$ to be an arbitrary vector, let $y /\| y \|_1=x+\delta$, where $x$ is the closest point to $y/\|y\|_1$ and $x \in \N$. For any $\delta \in \mathbb{R}^t$,
\begin{align*}
| \langle (SA)_i, \delta \rangle | = & ~ |\langle (SB)_1, \delta \rangle + \langle (SI)_i,\delta \rangle | \\
\leq & ~ |\langle (SB)_1, \delta \rangle | +  | \langle (SI)_i,\delta \rangle | &\text{~by~triangle~inequality} \\
\leq  & ~ \|(SB)_1\|_2 \|\delta\|_2 + \|S_i\|_2 \|\delta\|_2 & \text{~by~Cauchy-Schwarz~inequality} \\
\leq  & ~ \|(SB)_1\|_2 \|\delta\|_1 + \|S_i\|_2 \|\delta\|_1. & \text{~by~} \| \cdot \|_2 \leq \| \cdot \|_1
\end{align*}
As we argued before, With probability $.99999$, we have
\begin{align}\label{eq:upper_on_SB}
\| (SB)_1 \|_2 \leq \|(SB)_1\|_1\leq O(t\log t\log d),
\end{align}
and with probability $1-2^{-\Theta(d)}$, there is a $9/10$ fraction of $i\in[d]$, $\| S_i\|_2\leq \|S_i\|_1 = O(t\log t)$. Therefore, with probability $.999$, for any $\delta\in\mathbb{R}^t$, there exists a $9/10$ fraction of $i$ such that $| \langle (SA)_i,\delta \rangle | \leq \Theta(t\log t\log d) \|\delta\|_1$.

Therefore, with probability $.99$, $\forall y\in\mathbb{R}^t$, due to the pigeonhole principle, there is a $3/10+9/10-1=1/5$ fraction of $i$ such that
\begin{align*}
 & ~ | \langle (SA)_i,y \rangle | \\
 = & ~ \| y\|_1 \cdot  | \langle (SA)_i, y/\| y\|_1 \rangle | \\
 = & ~ \| y\|_1 \cdot  | \langle (SA)_i, x + \delta \rangle | & \text{~by~} y/\| y\|_1 = x+\delta \\
\geq & ~ \| y\|_1 \cdot \bigl( | \langle (SB)_i,x \rangle + \langle (SI)_i,x \rangle |-| \langle (SB)_1, \delta \rangle+\langle(SI)_i, \delta\rangle | \bigr) & \text{~by~triangle~inequality}\\
\geq & ~ \|y\|_1 \bigl( \Omega(1)- \epsilon  O(t\log t\log d) \bigr) \\
\geq & ~ \|y\|_1 \Omega(1).
\end{align*}

So $\|y\|_1$ should be $O(\log d)$. Otherwise, the total cost on this $1/5$ fraction of coordinates is at least $\Omega(d\log d)$.

Combining Property (\RN{1}) and (\RN{2}) completes the proof.
\end{proof}

Next, we need to show the following claim is true,
\begin{claim}\label{cla:d_cauchy_distributed_logd_levels}
For any $d$ independent Cauchy random variables $x_1, x_2,\cdots, x_d$ from $C(0,1)$, with probability $1-1/\poly(t\log d)$, for any $j\in [1,2,\cdots, \log d- \Theta( \log\log (t\log d)) ]$, there are $\Omega(d/2^j)$ variables belonging to $( 2^{j}, 2^{j+1} ]$.
\end{claim}
\begin{proof}
For each Cauchy random variable $x_i$, we have for any $j\in [1,2,\cdots, \Theta(\log d)]$,
\begin{align*}
\Pr \biggl[ |x_i| \in ( 2^{j}, 2^{j+1} ] \biggr]  = \Theta ( 1/2^{j} ).
\end{align*}

We define the indicator random variable $z_{i,j}$
\begin{align*}
z_{i,j} =
\begin{cases}
1 & \text{if~} |x_i| \in ( 2^{j}, 2^{j+1}], \\
0 & \text{otherwise.}
\end{cases}
\end{align*}
We define $z_j = \sum_{i=1}^d z_{i,j}$. It is clear that $\E[z_{i,j}] = \Theta(1/2^j)$.
We use a Chernoff bound,
\begin{align*}
\Pr[ X < (1-\delta) \mu ] < \left( \frac{e^{-\delta}}{ (1-\delta)^{1-\delta}} \right)^{\mu},
\end{align*}
and set $X=z_j$, $\delta=1/2$, $\mu = \E[ z_j]$. Then, this probability is at most $2^{-\Omega(\mu)} = 2^{-\Omega(\E[z_j])}$. For any $j\in [1, \log d- \Theta(\log\log (t\log d) )]$, we have $\E[z_j] = d \E[z_{ij}] =\Omega(d/2^j) = \Omega(\log (t\log d))$. Thus, this probability is at most $1/\poly(t\log d)$. Overall, we have
\begin{align*}
\Pr \biggl[ z_j \gtrsim d/2^j \biggr] \geq 1-1/\poly(t\log d).
\end{align*}
Taking a union bound over $\Theta(\log d)$ such $j$, we complete the proof.

\end{proof}

\begin{claim}\label{cla:constant_fraction_Rj_is_good}
Let $1>c_1>c_2>1/3>0$ be three arbitrary constants. We fix the first column of $(SI)^\top \in \mathbb{R}^{d\times t}$. All the rows are grouped together according to the value in the first column. Let $R_j$ be the set of rows for which the entry in the first column is $\in (2^j, 2^{j+1}]$. With probability at least $1-O(1/\poly(t\log(d)))$, for any $j\in [c_2\log d, c_1\log d]$, the following event holds. There exist $\Omega(d/2^j)$ rows such that the first coordinate $\in (2^j, 2^{j+1}]$ and all the other coordinates are at most $O(d^{1/3})$. 
\end{claim}
\begin{proof}
Let $R_j$ be a subset of rows of $S^\top$, such that for any row in $R_j$, the first coordinate is $\in (2^j, 2^{j+1}]$. By Claim \ref{cla:d_cauchy_distributed_logd_levels}, we have that with probability $1-1/\poly(t\log d)$, for any $j\in [c_2\log d, c_1\log d]$, $|R_j| \geq \Omega(d/2^j)$.  
We want to show that for any $j\in [c_2\log d, c_1\log d]$, there exists a constant fraction of rows in $R_j$ such that, the remaining coordinates are at most $O(d^{1/3})$.

For a fixed row in $R_j$ and a fixed coordinate in that row, the probability that the absolute value of this entry is at least $\Omega(d^{1/3})$ is at most $O(1/d^{1/3})$. By taking a union bound, with probability at least $1-O(t/d^{1/3})$, the row has every coordinate of absolute value at most $O(d^{1/3})$.

By applying a Chernoff bound, for a fixed subset $R_j$ of rows, the probability that there is a constant fraction of these rows for which every coordinate except the first one in absolute value is at most $O(d^{1/3})$, is at least $1-2^{-\Theta(|R_j|)}\geq 1-2^{-\Theta(d^{1-c_1})}\geq 1-O(1/\poly(t\log(d)))$.

After taking a union over all the $j$, we complete the proof.
\end{proof}

\begin{claim}\label{cla:SB_is_at_most_logd}
With probability at least $1-O(1/t)$, the absolute value of any coordinate in column vector $(SB)_1 \in \mathbb{R}^{t\times 1}$ is at most $O(t^2\log d)$.
\end{claim}
\begin{proof}
Because each entry is sampled from a Cauchy distribution $C(0,1)$ scaled by $O(\log d)$, then with probability at most $O(1/t^2)$, the absolute value of one coordinate is at least $\Omega(t^2\log d)$. Taking a union over all the coordinates, we complete the proof.
\end{proof}

Because $\|x\|_1 \leq O(\log d), \| x\|_2 \geq \Omega(1/t\log t)$, there exists one coordinate $i$ such that the absolute value of it is at least $\Omega(1/t\log t)$ and all the other coordinates are most $O(\log d)$. We can assume that $i=1$ for now. (To remove this assumption, we can take a union bound over all the possibilities for $i\in [t]$.) According to Claim \ref{cla:constant_fraction_Rj_is_good} and \ref{cla:SB_is_at_most_logd}, let $\wh{R}_j$ denote a subset of $R_j$, which is a ``good'' constant fraction of $R_j$.  Considering the $\ell_1$-norm of all coordinates $l \in \wh{R}_j \subset R_j \subset [d]$, we have
\begin{align*}
&\sum_{l \in \wh{R}_j } | ( (SA)^\top x)_l - O(\log d) |\\
\geq &  \sum_{l \in \wh{R}_j } |  \langle(SA)_l,x\rangle - O(\log d) |\\
\geq &  \sum_{l \in \wh{R}_j } \left(|(SI)_{1,l}\cdot x_1|-\sum_{j=2}^t|(SI)_{j,l}\cdot x_j|-|\langle (SB)_l,x\rangle|-O(\log d)\right)\\
\geq & \sum_{l \in \wh{R}_j } \left(\Omega(\frac{2^j}{t\log t})-O(td^{1/3}\log d)-|\langle (SB)_1,x\rangle|-O(\log d)\right)\\
\geq & \sum_{l \in \wh{R}_j } \left(\Omega(\frac{2^j}{t\log t})-O(td^{1/3}\log d)-\|(SB)_1\|_1\|x\|_1-O(\log d)\right)\\
\geq & \sum_{l \in \wh{R}_j } \left(\Omega(\frac{2^j}{t\log t})-O(td^{1/3}\log d)-O(t\log t\log^2 d)-O(\log d)\right)\\
\gtrsim & \sum_{l \in \wh{R}_j }2^j/(t\log t)\\
\gtrsim & d/2^j\cdot2^j/(t\log t)\\
\gtrsim & d/(t\log t).
\end{align*}
The second inequality follows by the triangle inequality. The third inequality follows by $(SB)_1=(SB)_l$, $|x_1|=\Omega(1/t\log t),(SI)_{1,l}\in[2^j,2^{j+1})$, and $\forall j\not=1, |x_j|<O(\log d),(SI)_{j,l}\leq O(d^{1/3})$. The fourth inequality follows by Cauchy-Schwarz and $\|\cdot\|_2\leq \|\cdot\|_1$. The fifth inequality follows by Equation~(\ref{eq:upper_on_SB}) and Claim~\ref{cla:upper_lower_x}. The sixth inequality follows by $t=o(\log d)$ where $c$ is a constant smaller than $1/3$ and $2^j\geq d^{c_2}>\poly(t)$. The seventh inequality follows from $|\wh{R}_j|\geq \Omega(d/2^j)$.

Since there are $c1-c2$ different $j$, the total cost is $\Omega(d\log d/(t\log t))$. The gap then is $\Omega(\log d/(t\log t))$.
This completes the proof of Theorem \ref{thm:hardinstance_cauchy}.
\end{proof}

\subsection{Hard instance for row subset selection}\label{sec:hardinstance_row}


\begin{theorem}
Let $\epsilon \in (0,1)$. There exists a value $k\geq 1$ and matrix $A\in \mathbb{R}^{ (d-1) \times d}$ such that, for any subset $R$ of rows of $A$, letting $B$ be the $\ell_1$-projection of each row of $A$ onto $R$, we have  
\begin{align*}
\| A - B \|_1 > (2-\epsilon)\underset{\rank-k ~ A' }{\min} \| A' - A\|_1,
\end{align*}
unless $|R| \gtrsim \epsilon d$.
\end{theorem}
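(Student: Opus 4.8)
The plan is to construct an explicit hard instance which forces a large $\ell_1$ approximation loss whenever the chosen row subset $R$ has size $o(\epsilon d)$. The natural candidate, mirroring the row subset selection lower bound of Theorem~\ref{thm:main3} and the construction in the technical overview, is to take $A \in \mathbb{R}^{(d-1)\times d}$ whose ``signal'' part is a rank-$k$ matrix with a small $k$ (in fact we will see $k$ can be taken to be a small constant, or even $k=1$, after scaling) plus an identity-like structure that penalizes using row combinations with large $\ell_1$ coefficients. Concretely, I would let the first $k$ coordinates of each row be drawn from a fixed rotationally invariant distribution (i.i.d.\ Gaussians), so that there is a rank-$k$ matrix $A'$ agreeing with $A$ on those coordinates and with cost at most the $\ell_1$-mass of the remaining, ``identity'' block; this gives $\mathrm{OPT} = \min_{\mathrm{rank}\text{-}k\ A'}\|A'-A\|_1 \le O(d)$. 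The point of the identity block (a scaled $I_{d-1}$ embedded in the last $d-1$ columns) is that approximating an unchosen row $A^i$ by a linear combination $x^\top A^R$ of chosen rows incurs cost at least $\|x\|_1$ times the identity scale on those coordinates.

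The key quantitative step is the ``tradeoff lemma'' already described in the technical overview: whenever $\|x^\top R - y\|_1$ is small on the $k$ signal coordinates (here $R$ denotes the $|R| \times k$ restriction of the chosen rows to the first $k$ columns and $y$ the signal part of an unchosen row), then $\|x\|_1$ must be large. The proof of this is via the net/decomposition argument sketched in the overview: decompose $x = x^0 + \sum_{j\ge 1} x^j$ by dyadic magnitude scales, observe $x^0$ has small support (hence a small net), and that the $x^j$ have decreasing $\ell_2$-norm (hence one can afford larger nets for them), then union bound over nets and over all choices of subset $R$. Since each unchosen row is, conditionally on $B$ being built from only $|R|$ rows, statistically close to independent of $B$, a constant fraction of the $d-1-|R|$ unchosen rows each individually contributes $\Omega(1)$ cost per coordinate on $\Omega(k)$ signal coordinates, forcing total cost $\Omega(d)$ \emph{on top of} the $\Omega(d)$ already forced by the identity block — this is the source of the factor $2 - \epsilon$ rather than just $2$. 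The point is that for the chosen rows $R$ themselves one can match $A$ exactly (or nearly so), but the remaining $\ge (1-O(\epsilon))d$ rows each force cost essentially $1$ per nonzero identity coordinate, while $\mathrm{OPT}$ forces cost essentially $1$ per such coordinate as well, and the two costs are on \emph{disjoint} sets of coordinates — giving $\|A-B\|_1 \ge (2-\epsilon)\mathrm{OPT}$.

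In more detail, the steps in order are: (1) fix the construction and scalings so that $\mathrm{OPT} \le (1+o(1)) d$ and so that the identity block's contribution alone already costs $(1-o(1))d$ for \emph{any} $B$ supported on a proper subset of rows; (2) prove the tradeoff lemma for a fixed subset $R$ with $|R| \le c\epsilon d$, establishing that approximating a typical unchosen row to within $o(1)$ per signal coordinate requires $\|x\|_1$ large, hence (since $A$ contains the identity block) extra cost $\Omega(1)$ per unchosen row beyond what $\mathrm{OPT}$ pays; (3) handle the union bound over all $\binom{d-1}{|R|}$ choices of $R$ by conditioning on a few global events of the Gaussian block of $A$ itself — e.g.\ that no $(c\epsilon d)$-dimensional row-span of $A$ comes $o(1)$-close on the signal coordinates to more than an $\epsilon$-fraction of the other rows — each of which holds with probability $1 - 2^{-\Omega(kd)}$, small enough to beat $\binom{d-1}{c\epsilon d}$; (4) add up: the identity block forces $(1-o(1))d$, the unchosen rows force an additional $\Omega(d)$ which can be tuned to $(1-\epsilon)d$, while $\mathrm{OPT} \le (1+o(1))d$, yielding the claimed $(2-\epsilon)$ bound for $|R| \lesssim \epsilon d$.

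\textbf{Main obstacle.} The delicate part is step (3) — making the union bound over all subsets $R$ go through. One cannot union bound over nets \emph{per subset} naively, since there are exponentially many subsets and the relevant bad events (that some linear combination of the $R$-rows nearly matches an unchosen row on the signal coordinates) are not independent across subsets. The resolution, as in the row subset selection proof (Theorem~\ref{thm:main3}), is to replace per-subset bad events with a \emph{global} structural property of the random Gaussian block of $A$ that simultaneously controls all subsets: roughly, that for \emph{every} set $R$ of $\le c\epsilon d$ rows, the event that a constant fraction of the remaining rows lie within $o(1)$ (on signal coordinates) of $\mathrm{rowspan}(A^R)$ has probability $2^{-\Omega(kd)}$. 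Proving this uniform statement requires the careful dyadic decomposition of $x$ together with a net argument whose size is controlled by the decreasing $\ell_2$-norms of the pieces, exactly as outlined in the technical overview; getting the constants to line up so that the exponent $\Omega(kd)$ dominates $\log\binom{d-1}{c\epsilon d} = O(\epsilon d \log(1/\epsilon))$ is where most of the technical work lies. A secondary subtlety is the dependence of $B$ on the rows of $A$ (since $B$ is the $\ell_1$-projection onto rows of $A$), handled by noting that zeroing out a single row of $A$ changes $B$'s distribution negligibly when $|R|$ is much smaller than $d$, so a typical unchosen row may be treated as independent of the span it is being projected onto.
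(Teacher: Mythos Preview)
You are vastly overcomplicating this. The paper's proof is a five-line elementary argument with a \emph{deterministic} matrix and $k=1$: take $A\in\mathbb{R}^{(d-1)\times d}$ whose first column is all $1$s and whose remaining $(d-1)$ columns form the identity $I_{d-1}$. Then $\OPT\le d-1$ (set $A'$ to be the all-$1$s first column). For any $r$-row subset $R$ (w.l.o.g.\ the first $r$ rows), any linear combination $v_i=\sum_{j\le r}\alpha_j A^j$ approximating an unchosen row $A^i$ ($i>r$) satisfies
\[
\|v_i-A^i\|_1 \;=\; \Big|\sum_j\alpha_j-1\Big| \;+\; \sum_j|\alpha_j| \;+\; 1 \;\ge\; 2,
\]
by one application of the triangle inequality on the first two terms. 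Summing over the $d-1-r$ unchosen rows gives $\|A-B\|_1\ge 2(d-1-r)$, so $\|A-B\|_1\le(2-\epsilon)\OPT$ forces $r\ge(d-1)\epsilon/2=\Omega(\epsilon d)$.

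What you have written instead is the machinery from the $k^{1/2-\gamma}$ row-subset lower bound (Theorem~\ref{thm:hard_for_row_subset}): Gaussian signal block, dyadic decomposition of $x$, nets, global conditioning, union bound over all $\binom{d-1}{|R|}$ subsets. That apparatus is designed to show that even $\poly(k)$ rows cannot beat $\approx\sqrt{k}$ approximation; its tradeoff lemma yields per-row extra cost of order $\sqrt{k}/\mathrm{polylog}(k)$, not the clean ``$1$'' you need to reach the constant $2-\epsilon$. Your step~(4), ``the unchosen rows force an additional $\Omega(d)$ which can be tuned to $(1-\epsilon)d$'', would require that tradeoff constant to be at least $1$, i.e., $k$ larger than some polylogarithmic threshold --- which directly contradicts your own parenthetical that ``$k$ can be taken to be a small constant, or even $k=1$''. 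Moreover, the union bound in the paper's $\sqrt{k}$ argument relies on the ambient dimension being $\poly(k)$ (so that $\log\binom{n}{r}=O(r\log k)$ is dominated by $\Theta(rk)$); here $d$ is unrelated to $k$, so your step~(3) would need a separate accounting.

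None of this is necessary: the deterministic all-$1$s-plus-identity matrix sidesteps randomness, nets, and union bounds entirely.
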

\begin{proof}

We construct the $(d-1) \times d$ matrix $A$ in the following way. The first column is all $1$s, and then the remaining $(d-1) \times (d-1)$ matrix is the identity matrix.

\begin{align*}
A = \begin{bmatrix}
1 & 1 & 0 & 0 & \cdots & 0 \\
1 & 0 & 1 & 0 & \cdots & 0 \\
1 & 0 & 0 & 1 & \cdots & 0 \\
\cdots & \cdots & \cdots & \cdots & \cdots & \cdots \\
1 & 0 & 0 & 0 & \cdots & 1 \\
\end{bmatrix}.
\end{align*}
Note that $\underset{\rank-1~A'}{\min}\|A- A'\|_1  = \OPT \leq d-1$, since one could choose $A'$ to have the first column all $1$s and remaining entries $0$. On the other hand, consider any subset of $r$ rows. We can permute the columns and preserve the entrywise $\ell_1$-norm so w.l.o.g., we can take the first $r$ rows for $R$.


Because we are taking the first $r$ rows, we do not pay any cost on the first rows. To minimize the cost on the $i$-th row, where $i\in \{r+1, r+2, \dotsc, d\}$, let $v_i$ denote the row vector we use for the $i$-th row. Then $v_i$ can be written as a linear combination of the first $r$ rows $\{A^1, A^2, \dotsc, A^r \}$,
\begin{align*}
v_i = \sum_{j=1}^r \alpha_{i,j} A^j.
\end{align*}
 Then the cost of using $v_i$ to approximate the $i$-th row of A is:
\begin{align*}
\| v_i - A^i \|_1 & = ~(\text{cost~in~1st~col}) + (\text{~cost~in~2nd,3rd,}\dotsc,r+1\text{th~cols}) + (\text{~cost~in~}i+1\text{th~col})  \\
 & = ~ | \overset{r}{\underset{j=1}{\sum}} \alpha_{i,j} - 1| +  \overset{r}{\underset{j=1}{\sum}} |\alpha_{i,j} | + 1 \\
 & \geq ~ |  \overset{r}{\underset{j=1}{\sum}} \alpha_{i,j} - 1 -  \overset{r}{\underset{j=1}{\sum}} \alpha_{i,j} | +1 \text{~by~triangle~inequality} \\
 & = ~2.
\end{align*}
Hence, the cost of using $v_i$ to approximate the $i$-th row of $A$ is at least $2$. So in total, across these $(d-1-r)$ rows, the algorithm pays at least $2(d-1-r)$ cost, which needs to be at most $C (d-1)$, and therefore $r \geq (d-1)(1-C/2) = \Omega(\epsilon d)$. Choosing $C = 2-\epsilon$ completes the proof.

\end{proof}

\begin{theorem}
There exists a value $k\geq 1$ and matrix $A\in \mathbb{R}^{(d-1) \times d}$ such that, there is no algorithm that is able to output a $\rank-k$ matrix $B$ in the row span of $A$ satisfying
\begin{align*}
\| A - B \|_1 < 2(1- \Theta(\frac{1}{d}) ) \underset{\rank-k ~ A' }{\min} \| A' - A\|_1.
\end{align*}
\end{theorem}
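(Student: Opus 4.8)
The plan is to reuse the same hard instance as in the previous theorem: the $(d-1)\times d$ matrix $A$ whose first column is all $1$s and whose remaining $(d-1)\times(d-1)$ block is the identity. We already know $\OPT = \min_{\rank\text{-}1\ A'}\|A'-A\|_1 \le d-1$, witnessed by the matrix with first column all $1$s and all other entries $0$. The difference now is that we must lower bound $\|A-B\|_1$ for \emph{every} rank-$k$ matrix $B$ in the row span of $A$, not just for $B$ obtained by $\ell_1$-projection onto a literal subset of rows. So the first step is to fix $k=1$ (any $k\ge 1$ works after padding, but $k=1$ already gives the statement) and let $B$ be an arbitrary rank-$1$ matrix whose rows all lie in the row span of $A$. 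Write $B = u v^\top$ where $v^\top$ is a single vector in $\rowspan(A)$, so $v^\top = \sum_{j} \alpha_j A^j$ for some coefficients $\alpha_j$, and $u \in \mathbb{R}^{d-1}$ is arbitrary; equivalently each row $B^i = u_i v^\top$.

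Next I would analyze the cost row by row. For row $i$ we pay $\|u_i v^\top - A^i\|_1$. The key observation is the same triangle-inequality trick as before, but now applied to the scaled vector $u_i v^\top$ rather than to a fixed affine combination: writing $w := u_i v^\top = \sum_j (u_i\alpha_j) A^j =: \sum_j \beta_{i,j} A^j$, the first coordinate of $w$ is $\sum_j \beta_{i,j}$ and the $(i{+}1)$-st coordinate of $w-A^i$ contributes $|w_{i+1}| + \dots$, so the cost on the first coordinate plus the cost on the $(i{+}1)$-st coordinate is at least $|\sum_j \beta_{i,j} - 1| + |w_{i+1}|$. If $i$ is an index such that $i$ is \emph{not} among the rows used to span $v^\top$ — i.e., $v^\top$ can be taken supported on at most $\rank$ coordinates of the identity block, call this support set $T$ with $|T|$ small — then for $i \notin T$ we have $w_{i+1} = u_i v^\top_{i+1} = 0$ is not forced; instead $w_{i+1} = u_i v_{i+1}$, and since $v^\top \in \rowspan(A)$ has its $(i{+}1)$-st coordinate equal to $\alpha_i$ (the coefficient of $A^i$), we get that for $i$ outside the "used" set the relevant coordinate vanishes. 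The cleanest route: observe $v^\top \in \rowspan(A)$ means $v = \gamma e_1 + \sum_{j} \delta_j e_{j+1}$ with $\gamma = \sum_j \delta_j$ (the constraint imposed by the first column). Then for each row $i$, $\|u_i v - A^i\|_1 \ge |u_i\gamma - 1| + |u_i \delta_i - 1| + \sum_{j\ne i}|u_i\delta_j| \ge |u_i\gamma - 1| + |u_i\delta_i - 1|$. Summing over all $d-1$ rows and using $\gamma = \sum_i \delta_i$, one shows $\sum_i \big(|u_i\gamma-1| + |u_i\delta_i - 1|\big) \ge 2(d-1) - O(1)$ by a convexity/averaging argument: at most a bounded number of indices can have both $u_i\gamma$ and $u_i\delta_i$ close to $1$ simultaneously, because $\sum_i \delta_i = \gamma$ forces $\delta_i \approx 1/\gamma \cdot$(typical) impossible for all $i$ unless most $\delta_i$ are tiny, in which case $|u_i\delta_i - 1|\approx 1$ for those $i$.

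The main obstacle will be making that last averaging step tight enough to get the factor $2(1-\Theta(1/d))$ rather than a weaker constant. Concretely, I expect to argue: let $S = \{i : |u_i\delta_i - 1| < 1/2\}$; on $S$ each $u_i\delta_i \in (1/2, 3/2)$, and summing $\delta_i$ over $S$ — combined with the hard constraint $\sum_{i} \delta_i = \gamma$ and a bound on $\gamma$ forced by the first-coordinate terms $|u_i\gamma-1|$ being small on most rows — yields $|S| \le O(1)$. Then on the complement, at least $d-1-O(1)$ rows contribute at least $|u_i\gamma-1| + 1/2$, and a symmetric argument on the first-coordinate terms handles the $|u_i\gamma - 1|$ part, giving total cost $\ge (2-o(1))(d-1) \ge 2(1-\Theta(1/d))\OPT$ since $\OPT \le d-1$. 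I would carry the steps in this order: (1) fix the instance and $k=1$; (2) parametrize $B = uv^\top$ with the row-span constraint $\gamma = \sum_j\delta_j$; (3) write the per-row cost lower bound via triangle inequality keeping only the first and the diagonal coordinate; (4) perform the global averaging argument to bound $|S|$ and conclude. The delicate point is step (4), specifically ruling out that a clever choice of $u$ (non-constant across rows) and $\delta$ concentrated on few coordinates beats the bound — but since any $\delta$ supported on a set $T$ forces $\sum_{i\notin T}|u_i \delta_i - 1| = \sum_{i \notin T} 1 = d-1-|T|$ on those rows and $T$ must be small for $v$ to be a single vector, this collapses, and the only freedom left is in the first column, handled identically.
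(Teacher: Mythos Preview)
Your instance and parametrization are correct, and $\gamma=\sum_j\delta_j$ is indeed the row-span constraint. The gap is in step (3)--(4): by keeping only the two terms $|u_i\gamma-1|+|u_i\delta_i-1|$ and discarding the off-diagonal sum $\sum_{j\ne i}|u_i\delta_j|$, you lose exactly the factor of $2$ you need. Take $\gamma=1$, $\delta_i=1/(d-1)$ for all $i$, and $u_i=1$ for all $i$. Then every row gives $|u_i\gamma-1|=0$ and $|u_i\delta_i-1|=(d-2)/(d-1)$, so your two-term lower bound totals only $d-2$, which is roughly $\OPT$, not $2\OPT$. (The true cost of this $B$ \emph{is} $2(d-2)$, but the missing $d-2$ lives entirely in the off-diagonal terms you dropped.) Your averaging argument cannot recover this: in the example $S=\emptyset$, every row contributes $\ge 1/2$ from the diagonal term, but the ``symmetric argument on $|u_i\gamma-1|$'' you invoke does not exist --- $\gamma$ is a single scalar, and with $u_i\equiv 1$ all first-coordinate terms vanish simultaneously.

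Separately, the final collapse ``$T$ must be small for $v$ to be a single vector'' is a non-sequitur: $v$ being one vector in $\rowspan(A)$ places no constraint on the support of its coefficient vector $\delta$, which can be all of $[d-1]$. The paper's proof does not try to bound a support. Instead it normalizes $\sum_i\beta_i=1$, keeps \emph{all} terms in the per-row cost $|1-\alpha_j|+|1-\alpha_j\beta_j|+\sum_{i\ne j}|\alpha_j\beta_i|$, and by a case analysis on the signs of $1-\alpha_j$ and $1-\alpha_j\beta_j$ shows each row contributes at least $2(1-\beta_j)$ (in the nonnegative case; the signed case is a longer but analogous case split). Summing gives $\sum_j 2(1-\beta_j)=2(d-1)-2=2(d-2)$, hence the ratio $2(d-2)/(d-1)=2(1-\Theta(1/d))$. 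The off-diagonal terms are what make the per-row bound $2(1-\beta_j)$ rather than $1-\beta_j$; you cannot reach the target without them.
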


\begin{proof}
We use the same matrix $A$ as in the previous theorem.
\begin{align*}
A = \begin{bmatrix}
1 & 1 & 0 & 0 & \cdots & 0 \\
1 & 0 & 1 & 0 & \cdots & 0 \\
1 & 0 & 0 & 1 & \cdots & 0 \\
\cdots & \cdots & \cdots & \cdots & \cdots & \cdots \\
1 & 0 & 0 & 0 & \cdots & 1 \\
\end{bmatrix}.
\end{align*}

Let vector $v$ be $(\sum_{i=1}^{d-1} \beta_i, \beta_1, \dotsc, \beta_{d-1})$. For the $i$-th row of $A$, we use $\alpha_j \cdot v$ to cancel the cost of that row, where $\alpha_j$ is a scalar. Then for any $\beta_1, \dotsc, \beta_{d-1}, \alpha_1, \dotsc, \alpha_{d-1}$, we can compute the entire residual cost,
\begin{align*}
f(\alpha,\beta)= & ~ \sum_{j=1}^{d-1} \| A^j - v  \alpha_j \|_1 = ~ \sum_{j=1}^{d-1} \bigl( |1-\alpha_j (\sum_{i=1}^{d-1} \beta_i) | + |1 - \alpha_j \beta_j| + \sum_{i\neq j} |\alpha_j \beta_i | \bigr).
\end{align*}
In the next few paragraphs, we will show that optimizing $f(\alpha,\beta)$ over some extra constraints for $\alpha,\beta$ does not change the optimality.
Without loss of generality, we can assume that $\sum_{i=1}^{d-1}\beta_i = 1$. If not we can just rescale all the $\alpha_i$. Consider a fixed index $j$. All the terms related to $\alpha_j$ and $\beta_j$ are,
\begin{align*}
|1-\alpha_j (\sum_{i=1}^{d-1} \beta_i)| + |1-\alpha_j \beta_j| + \sum_{i\neq j} |\alpha_j \beta_i| + |\alpha_i \beta_j|.
\end{align*}
We first show a simple proof by assuming $\beta_j \geq 0$. Later, we will prove the general version which does not have any assumptions.
\paragraph{Handling a special case}
We can optimize $f(\alpha,\beta)$ in the following sense,
\begin{align*}
\min ~& f(\alpha,\beta) \\
\text{~s.t.~} & \sum_{j=1}^{d-1} \beta_j = 1\\
& \alpha_j \geq 0, \beta_j \geq 0, \forall j\in [d-1].
\end{align*}

For each $j$, we consider three cases.
Case I, if $\alpha_j \geq \frac{1}{\beta_j}$, then
\begin{align*}
& ~ |1-\alpha_j| + |1-\alpha_j \beta_j| + \sum_{i\neq j} \alpha_j \beta_i = ~ \alpha_j -1 + \alpha_j \beta_j -1 +\sum_{i\neq j} \alpha_j \beta_i =  2\alpha_j -2 \geq  2(1/\beta_j -1) \geq 2(1-\beta_j),
\end{align*}
where the last step follows by $\beta_j + 1/\beta_j \geq 2$.
Case II, if $ 1 \leq \alpha_j < 1/\beta_j$,  then
\begin{align*}
& ~ |1-\alpha_j| + |1-\alpha_j \beta_j| + \sum_{i\neq j} \alpha_j \beta_i= ~ \alpha_j -1 + 1- \alpha_j \beta_j +\sum_{i\neq j} \alpha_j \beta_i = 2\alpha_j (1-\beta_j) \geq 2(1-\beta_j).
\end{align*}
Case III, if $\alpha_j < 1$, then
\begin{align*}
& ~ |1-\alpha_j| + |1-\alpha_j \beta_j| + \sum_{i\neq j} \alpha_j \beta_i= 1- \alpha_j + 1- \alpha_j \beta_j +\sum_{i\neq j} \alpha_j \beta_i =2(1-\alpha_j \beta_j) \geq 2(1-\beta_j).
\end{align*}

Putting it all together, we have
\begin{align*}
f(\alpha,\beta) \geq \sum_{j=1}^{d-1} 2(1-\beta_j) = 2(d-1) - 2\sum_{j=1}^{d-1} \beta_j = 2(d-2).
\end{align*}

\paragraph{To handle the case where $\beta_i$ can be negative}
Without loss of generality, we can assume that $\sum_{i=1}^{d-1} \beta_i = 1$. Notice that we can also assume $\beta_i\neq 0$, otherwise it means we do not choose that row. We split all $\beta_i$ into two disjoint sets $S$ and $\overline{S}$. For any $i\in S$, $\beta_i > 0$ and for any $i\in \overline{S}$, $\beta_i <0$.

As a first step, we discuss the case when all the $j$ are in set $S$.
Case \RN{1}, if $1-\alpha_j \sum_{i=1}^{d-1} \beta_i <0$ and $1-\alpha_j \beta_j<0$, then it means $\alpha_j \geq \max(\frac{1}{\sum_{i=1}^{d-1} \beta_i} ,\frac{1}{\beta}_j )$. The cost of that row is,
\begin{align*}
 = & ~ \alpha_j \sum_{i=1}^{d-1}\beta_i -1 + \alpha_j \beta_j -1 + \sum_{i\neq j} |\alpha_j \beta_i| \\
 = & ~ 2 \alpha_j \beta_j + 2 \alpha_j \sum_{i\in S\backslash j} \beta_i - 2.
\end{align*}
If $\beta_j\geq 1$, then $\alpha_j \geq 1$. The cost is at least $2\sum_{i\in S \backslash j} \beta_i$. If $\beta_j < 1$, then $\alpha_j \geq 1/\beta_j$, and the cost is at least $2 \frac{1}{\beta_j} \sum_{i\in S \backslash j} \beta_i$. If there are $C$ such $j$ in Case \RN{1}, then the total cost of Case I is at least $0$ if $C=1$, and $2(C-1)$ if $C\geq 2$.

Case \RN{2}, if $1-\alpha_j \sum_{i=1}^{d-1} \beta_i < 0$ and $1-\alpha_j \beta_j >0$, then it means $1< \alpha_j <1/\beta_j$. The cost of that row is,
\begin{align*}
= & ~\alpha_j \sum_{i=1}^{d-1} \beta_i - 1 + 1 -\alpha_j \beta_j + |\alpha_j| \sum_{i\neq j} |\beta_j| \\
= & ~ 2 \alpha_j \sum_{i\in S \backslash j} \beta_i \\
\geq & ~ 2 \sum_{i\in S \backslash j} \beta_i.
\end{align*}
Similarly to before, if there are $C$ such $j$ in Case \RN{2}, then the total cost of Case \RN{2} is at least $0$ if $C=1$, and $2(C-1)$ if $C\geq 2$.

Case \RN{3}, if $1-\alpha_j \sum_{i=1}^{d-1} \beta_i > 0$ and $1-\alpha_j \beta_j < 0$, then it means $1/\beta_j < \alpha_j < \frac{1}{\sum_{i=1}^{d-1} \beta_i}$. The cost of the row is,
\begin{align*}
& ~ = 1-\alpha_j \sum_{i=1}^{d-1} \beta_i + \alpha_j \beta_j -1 +  |\alpha_j| \sum_{i\neq j} |\beta_j| \\
& ~ = 2 |\alpha_j| \sum_{i\in \overline{S} } |\beta_i| \\
& ~ = 2 |\alpha_j| ( \sum_{i\in S } |\beta_i| -1) \\
& ~ \geq 2 \frac{1}{\beta_j} \sum_{i\in S\backslash j} \beta_i.
\end{align*}
If there are $C$ such $j$ in Case \RN{3}, then the total cost of Case \RN{3} is at least $0$ if $C=1$, and $2(C\cdot(C-1))$ if $C\geq 2$.

Case \RN{4}, if $1-\alpha_j \sum_{i=1}^{d-1} \beta_i > 0$ and $1-\alpha_j \beta_j > 0$, then it means $ \alpha_j \leq \min (1/\beta_j , \frac{1}{\sum_{i=1}^{d-1} \beta_i} )$. The cost for the case $\alpha_j <0$ is larger than the cost for case $\alpha>0$. Thus we can ignore the case $\alpha_j<0$. The cost of the row is,
\begin{align*}
& ~ = 1-\alpha_j \sum_{i=1}^{d-1} \beta_i + 1- \alpha_j \beta_j +  |\alpha_j| \sum_{i\neq j} |\beta_j| \\
& ~ = 2 -2 \alpha_j \beta_j + 2\alpha_j \sum_{i\in \overline{S} } |\beta_i |.
\end{align*}
If $\beta_j < \sum_{i\in \overline{S}} |\beta_i|$, we know that the cost is at least $2$. Otherwise, using $\alpha_j \leq 1$, we have the cost is at least $ 2- 2\beta_j + \sum_{i\in \overline{S}} |\beta_i|$. Let $T$ denote the set of those $j$ with $\beta_j \geq \sum_{i\in \overline{S} }|\beta_i|$. If $|T|=1$, we know that cost is at least $0$. If $|T|\geq 2$, then the cost is at least
\begin{align*}
 &= ~\sum_{j\in T} (2 -2\beta_j +2\sum_{i\in \overline{S}} |\beta_i| ) \\
& \geq ~ 2 |T| - 2\sum_{j\in S} \beta_j + 2|T|\sum_{i\in \overline{S}} |\beta_i| \\
& \geq ~ 2 |T| - 2+ 2(|T|-1)\sum_{i\in \overline{S}} |\beta_i| \\
& \geq ~ 2(C-1).
\end{align*}
Now we discuss the case where $j\in \overline{S}$, which means $\beta_j <0$.

Case \RN{5} if $1-\alpha_j \sum_{i=1}^{d-1} \beta_i <0$ and $1-\alpha_j \beta_j<0$, then it means $\alpha_j > 1/\sum_{i=1}^{d-1} \beta_i$ and $\alpha_j < 1/\beta_j$. Notice that this case will never happen, because $\sum_{i=1}^{d-1} \beta_i = 1$ and $\alpha_j <0$.

Case \RN{6} if $1-\alpha_j \sum_{i=1}^{d-1} \beta_i <0$ and $1-\alpha_j \beta_j<0$, then it means $\alpha_j \geq \max(1/\sum_{i=1}^{d-1} \beta_i, 1/\beta_j)$. Because of $\beta_j<0$, then $\alpha_j \geq 1$. The cost of that is,
\begin{align*}
 =& ~\alpha_j \sum_{i=1}^{d-1} \beta_i - 1 + 1- \alpha_j \beta_j + |\alpha_j| \sum_{i\neq j} |\beta_i|\\
= & ~ 2\alpha_j \sum_{i\in S} |\beta_i|\\
\geq & ~ 2. &\text{~by~}\alpha_j \geq 1 \text{~and~} \sum_{i\in S} |\beta_i| \geq 1 .
\end{align*}

Case \RN{7} if $1-\alpha_j \sum_{i=1}^{d-1} \beta_i > 0$ and $1-\alpha_j \beta_j<0$, then it means $\alpha_j < \min(1/\beta_j, 1/\sum_{i=1}^{d-1} \beta_i)$. Because $\beta_j < 0$ and $\sum_{i=1}^{d-1} \beta_i=1$, thus $\alpha_j < 1/\beta_j$. The cost of that row is,
\begin{align*}
=& ~ 1- \alpha_j \sum_{i=1}^{d-1} \beta_i  +  \alpha_j \beta_j - 1 + |\alpha_j| \sum_{i\neq j} |\beta_i|\\
=& ~ 2 |\alpha_j| \sum_{i\in \overline{S}\backslash j} |\beta_i| \\
\geq & ~ 2 \frac{1}{ |\beta_j| } \sum_{i\in \overline{S}\backslash j} |\beta_i|.
\end{align*}
If there are $C$ such $j$ in Case \RN{7}, then the total cost of Case \RN{7} is at least $0$ if $C=1$, and $2(C\cdot(C-1))$ if $C\geq 2$.

Case \RN{8} if $1-\alpha_j \sum_{i=1}^{d-1} \beta_i > 0$ and $1-\alpha_j \beta_j>0$, then it means $1/\beta_j<\alpha_j < 1/\sum_{i=1}^{d-1} \beta_i$. The cost of that row,
\begin{align*}
=& ~ 1- \alpha_j \sum_{i=1}^{d-1} \beta_i  +  1- \alpha_j \beta_j + |\alpha_j| \sum_{i\neq j} |\beta_i|\\
= & ~ 2 - 2\alpha_j \beta_j + 2|\alpha_j| \sum_{i \in \overline{S}\backslash j} |\beta_i|.
\end{align*}
If $\alpha_j >0$, then the cost is always at least $2$. If $\alpha_j <0$, the cost is at least,
\begin{align*}
 2 - 2| \alpha_j \beta_j| + 2|\alpha_j| \sum_{i \in \overline{S}\backslash j} |\beta_i| .
\end{align*}
If $\sum_{i \in \overline{S}\backslash j} |\beta_i| > |\beta_j|$, we also have cost at least $2$. Otherwise, we have
\begin{align*}
 2 - 2 | \alpha_j| (  |\beta_j| -\sum_{i \in \overline{S}\backslash j} |\beta_i| )
 =
 \begin{cases}
    2 - 2|\beta_j| + 2 \sum_{i \in \overline{S}\backslash j} |\beta_i|      & \quad \text{if } 1/|\beta_j| \leq 1,\\
    2 \frac{1}{|\beta_j|}  \sum_{i \in \overline{S}\backslash j} |\beta_i|  & \quad \text{if }1/|\beta_j| > 1.\\
  \end{cases}
\end{align*}
Let $C$ denote the number of such $j$ with $1/|\beta_j| \leq 1$. If $C=1$, the cost is at least $0$. If $C\geq 2$, the cost is at least $2C$. Let $C'$ denote the number of such $j$ with $1/|\beta_j| > 1$. If $C'=1$, the cost is at least $0$, if $C'\geq 2$, the cost is at least $2(C'(C'-1))$.
Overall, putting all the eight cases together, we complete the proof.
\end{proof}


\subsection{Hard instance for oblivious subspace embedding and more row subset selection}\label{sec:hardinstance_ose}

The goal in this section is to prove Theorem~\ref{thm:use_yaominmax}. By applying Yao's minmax principle, it suffices to prove Theorem~\ref{thm:hard_distribution}.

\subsubsection{Definitions}
We first give the definition of total variation distance and Kullback-Leibler divergence.
\begin{definition}{\cite{lpw09,v14}}
The total variation distance between two probability measures $P$ and $Q$ on the measurable space (${\cal X}$, ${\cal F}$) is defined as,
\begin{align*}
 D_{\TV}(P,Q) = \underset{ {\cal A} \in {\cal F}  }{\mathrm{sup}} |  P({\cal A}) -  Q({\cal A}) |.
\end{align*}
The Kullback-Leibler(~$\KL$) divergence of $P$ and $Q$ is defined as,
\begin{align*}
D_{\KL} (P || Q) = \E_{P} \left( \log \frac{\mathrm{d} P}{\mathrm{d} Q}\right) = \int_{\cal X} \left( \log \frac{\mathrm{d} P}{\mathrm{d} Q} \right) \mathrm{d} P.
\end{align*}
\end{definition}

\begin{lemma}\label{lem:KLandTV}{\cite{p60,t09,ck11}}
Pinsker's inequality states that, if $P$ and $Q$ are two probability distributions on a measurable (${\cal X}$, ${\cal F}$), then
\begin{align*}
D_{\TV}(P,Q) \leq \sqrt{\frac{1}{2} D_{\KL} (P || Q) }.
\end{align*}
\end{lemma}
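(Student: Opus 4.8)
The plan is to reduce Pinsker's inequality to a one-variable calculus statement about Bernoulli distributions and then verify that statement directly. First I would pick an event $A \in \mathcal{F}$ realizing the supremum defining $D_{\TV}(P,Q)$; when $P$ and $Q$ admit densities $p,q$ with respect to a common dominating measure this is $A = \{x : p(x) \ge q(x)\}$, so that $D_{\TV}(P,Q) = P(A) - Q(A)$. Passing to the coarsened two-point laws $\bar P = (P(A), 1 - P(A))$ and $\bar Q = (Q(A), 1 - Q(A))$ on $\{A, A^c\}$, one has $D_{\TV}(\bar P, \bar Q) = D_{\TV}(P,Q)$ exactly, while the log-sum inequality
\begin{align*}
a_1 \log\frac{a_1}{b_1} + a_2\log\frac{a_2}{b_2} \ge (a_1 + a_2)\log\frac{a_1 + a_2}{b_1 + b_2}
\end{align*}
(applied to the numerator/denominator masses inside $A$ and inside $A^c$) gives $D_{\KL}(\bar P \| \bar Q) \le D_{\KL}(P \| Q)$. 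Thus it suffices to prove the inequality for a pair of Bernoulli distributions, where $D_{\TV}(\bar P, \bar Q) = |p - q|$ and $D_{\KL}(\bar P \| \bar Q) = p\log\frac{p}{q} + (1-p)\log\frac{1-p}{1-q}$.

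Next I would establish the scalar bound: for all $p,q \in [0,1]$,
\begin{align*}
2(p-q)^2 \le p\log\frac{p}{q} + (1-p)\log\frac{1-p}{1-q}.
\end{align*}
Fixing $p$ and writing $g(q)$ for the difference of the right and left sides, one checks $g(p) = 0$ and
\begin{align*}
g'(q) = \frac{q-p}{q(1-q)} + 4(p-q) = (q-p)\left(\frac{1}{q(1-q)} - 4\right).
\end{align*}
Since $q(1-q) \le \tfrac14$ on $[0,1]$, the second factor is nonnegative, so $g'(q)$ has the sign of $q - p$: $g$ decreases on $(0,p)$, increases on $(p,1)$, and therefore attains its global minimum value $g(p) = 0$. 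Hence $g \ge 0$, which is the claimed Bernoulli estimate. Combining the two steps, $D_{\TV}(P,Q)^2 = D_{\TV}(\bar P, \bar Q)^2 \le \tfrac12 D_{\KL}(\bar P \| \bar Q) \le \tfrac12 D_{\KL}(P\|Q)$, and taking square roots proves the lemma.

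The main thing to be careful about is the reduction rather than the calculus: one must justify the log-sum inequality (it follows from convexity of $t \mapsto t\log t$ via Jensen, or from Gibbs' inequality applied to the conditional laws given the value of $\mathbf{1}_A$) and dispose of the degenerate cases — $q \in \{0,1\}$, or $P$ not absolutely continuous with respect to $Q$, or an $A$ with $Q(A) \in \{0,1\}$ — in all of which $D_{\KL}(P\|Q) = +\infty$ (or $D_{\TV} = 0$) and the inequality holds trivially. The one-dimensional estimate itself is routine and poses no obstacle.
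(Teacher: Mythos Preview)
Your argument is correct and is in fact the standard proof of Pinsker's inequality: reduce to the Bernoulli case via the data-processing inequality (log-sum), then verify the scalar estimate by one-variable calculus. The paper, however, does not prove this lemma at all; it is stated with citations \cite{p60,t09,ck11} and used as a black box, so there is no paper proof to compare against.
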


\begin{lemma}\label{lem:half_gaussian}
For any Gaussian random variable $x\sim N(0,\sigma^2)$, we have, for any $a>0$
\begin{align*}
\Pr[|x| <a] \lesssim \frac{a}{\sigma }.
\end{align*}
\end{lemma}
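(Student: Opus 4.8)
The plan is to bound $\Pr[|x|<a]$ by the supremum of the Gaussian density on the interval $(-a,a)$ times the length $2a$ of that interval. First I would write the probability explicitly as an integral,
\[
\Pr[|x|<a] \;=\; \int_{-a}^{a} \frac{1}{\sqrt{2\pi}\,\sigma}\, e^{-t^2/(2\sigma^2)}\,\mathrm{d}t .
\]
Since $e^{-t^2/(2\sigma^2)} \le 1$ for every real $t$, the integrand never exceeds the constant $\tfrac{1}{\sqrt{2\pi}\,\sigma}$, and therefore the integral is at most $\tfrac{1}{\sqrt{2\pi}\,\sigma}\cdot 2a = \sqrt{\tfrac{2}{\pi}}\cdot \tfrac{a}{\sigma}$.

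Because $\sqrt{2/\pi}$ is an absolute constant (in fact less than $1$), this immediately yields $\Pr[|x|<a]\le \sqrt{2/\pi}\cdot a/\sigma \lesssim a/\sigma$, which is exactly the claimed bound, and it holds for all $a>0$ with no further restriction. Equivalently, I could first standardize via $y = x/\sigma \sim N(0,1)$, so that $\Pr[|x|<a] = \Pr[|y| < a/\sigma]$, and then apply the same density estimate to the standard normal; either route takes one line.

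There is no genuine obstacle in this lemma; the only point I would be careful to state in the write-up is that the supremum of the density of $N(0,\sigma^2)$ equals $(\sqrt{2\pi}\,\sigma)^{-1}$, attained at the origin, so that the interval of integration contributes only through its length $2a$. This is the estimate used later (e.g., together with Pinsker's inequality in Lemma~\ref{lem:KLandTV}) to argue that individual coordinates of a Gaussian vector rarely fall into a short interval near a fixed value.
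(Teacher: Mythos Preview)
Your proof is correct and in fact more direct than the paper's. You bound the Gaussian density on $(-a,a)$ by its global maximum $(\sqrt{2\pi}\,\sigma)^{-1}$ and multiply by the interval length $2a$, which immediately gives the constant $\sqrt{2/\pi}$. The paper instead expresses $\Pr[|x|<a]$ as $\mathrm{erf}(a/(\sigma\sqrt{2}))$, invokes the inequality $\mathrm{erf}(x)^2 \le 1 - e^{-4x^2/\pi}$, and then applies $1-e^{-x}\le x$; this chain lands on the identical constant $\sqrt{2/\pi}\cdot a/\sigma$. Your single-line density bound is the cleaner argument and requires no special-function inequalities, so nothing is lost by taking your route. (One small aside: the lemma is not actually paired with Pinsker's inequality later; it is used in the row-subset-selection hardness argument to control the probability that a coordinate of a Gaussian vector lands in a short interval.)
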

\begin{proof}
\begin{align*}
\Pr[|x|< a] & ~ = \text{erf}(\frac{a}{\sigma \sqrt{2} }) \\
& ~ \leq (1-e^{-\frac{-4a^2}{2\sigma^2 \pi} })^\frac{1}{2} & \text{~by~} 1-\exp(-4x^2/\pi) \geq \text{erf}(x)^2\\
& ~ \leq (\frac{4 a^2}{2\sigma^2 \pi })^\frac{1}{2} &\text{~by~} 1- e^{-x} \leq  x \\
& ~ \lesssim \frac{a}{\sigma}.
\end{align*}
\end{proof}

\begin{lemma}\label{lem:equivalent_distribution}
Let $V\in\mathbb{R}^{n\times m}$ be a matrix with orthonormal columns. Let $\mathcal{H}'$ be a distribution over $V^\top A$, where $A\in \mathbb{R}^{n\times k}$ is a random matrix with each entry i.i.d. Gaussian $N(0,1)$. Denote $\mathcal{H}$ as a distribution over $H\in\mathbb{R}^{m\times k}$, where each entry of $H$ is drawn from i.i.d. Gaussian $N(0,1)$. Then, $\mathcal{H}$ and $\mathcal{H}'$ are the same distribution.
\end{lemma}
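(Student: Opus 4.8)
The statement to prove is that the distribution $\mathcal{H}'$ of $V^\top A$ (where $A \in \mathbb{R}^{n\times k}$ has i.i.d.\ $N(0,1)$ entries and $V \in \mathbb{R}^{n\times m}$ has orthonormal columns) coincides with the distribution $\mathcal{H}$ of an $m\times k$ matrix with i.i.d.\ $N(0,1)$ entries. The plan is to reduce everything to the columnwise rotational invariance of the standard Gaussian and then patch the columns back together using independence.

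First I would observe that the columns of $A$, call them $a_1,\dots,a_k \in \mathbb{R}^n$, are mutually independent, and each $a_j \sim N(0, I_n)$. Since the map $A \mapsto V^\top A$ acts columnwise, i.e.\ the $j$-th column of $V^\top A$ is exactly $V^\top a_j$, the columns $V^\top a_1,\dots,V^\top a_k$ of $V^\top A$ are again mutually independent (a measurable function applied separately to independent inputs preserves independence). So it suffices to show that each $V^\top a_j$ is distributed as $N(0, I_m)$, and then conclude by the fact that a random matrix whose columns are independent $N(0,I_m)$ vectors has i.i.d.\ $N(0,1)$ entries, which is exactly the description of $\mathcal{H}$.

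For the single-column claim: fix $j$ and write $a = a_j \sim N(0, I_n)$. Then $V^\top a$ is a linear image of a Gaussian, hence Gaussian, with mean $V^\top \E[a] = 0$ and covariance $\E[V^\top a a^\top V] = V^\top I_n V = V^\top V = I_m$, using that $V$ has orthonormal columns. Therefore $V^\top a \sim N(0, I_m)$, i.e.\ its $m$ coordinates are i.i.d.\ $N(0,1)$. Combining this with the independence across $j$ established above, $V^\top A$ has all $mk$ entries i.i.d.\ $N(0,1)$, which is precisely the law $\mathcal{H}$. Hence $\mathcal{H} = \mathcal{H}'$.

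There is no real obstacle here — the only thing to be careful about is to phrase the columnwise decomposition cleanly so that independence of the columns of $A$ transfers to independence of the columns of $V^\top A$, and to invoke $V^\top V = I_m$ (as opposed to $VV^\top$, which is only a projection) at the covariance computation. If one wants to be fully self-contained one can alternatively extend $V$ to an orthogonal matrix $\tilde V = [V \mid V_\perp] \in \mathbb{R}^{n\times n}$, note $\tilde V^\top A$ has i.i.d.\ $N(0,1)$ entries by full rotational invariance of the standard Gaussian matrix, and then $V^\top A$ is just the top $m$ rows of $\tilde V^\top A$, hence also has i.i.d.\ $N(0,1)$ entries. Either route is short; I would present the direct covariance computation as the main argument and mention the orthogonal-extension viewpoint as a remark.
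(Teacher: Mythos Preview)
Your proof is correct and follows essentially the same approach as the paper: both argue column-by-column independence (since the $j$-th column of $V^\top A$ depends only on $A_j$) and then show each column is $N(0,I_m)$ via $V^\top V = I_m$, with the paper writing down the density explicitly and you computing the covariance directly. The orthogonal-extension remark you add is a nice alternative but not needed.
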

\begin{proof}
It is clear that each entry of $V^\top A$ is a random Gaussian variable from $N(0,1)$, so our goal is to prove the entries of $V^\top A$ are fully independent. Since the the $j^{\text{th}}$ column of $V^\top A$ only depends on $A_j$, the variables from different columns are fully independent. Now, we look at one column of $x=V^\top A_j$. The density function is
\begin{align*}
f(x)=\frac{1}{\sqrt{(2\pi)^m |V^\top V|}}\exp\left(-\frac{1}{2} x^\top V^\top V x\right)=\frac{1}{\sqrt{(2\pi)^m }}\exp\left(-\frac{1}{2} x^\top  x\right),
\end{align*}
which is exactly the density function of $N(0,I_m)$. Thus $x\in N(0,I_m)$. Therefore, all the entries of $V^\top A$ are fully independent.
\end{proof}

\begin{lemma}[Matrix Determinant Lemma]\label{lem:Matrix_determinant_lemma}
Suppose $A\in\mathbb{R}^{n\times n}$ is an invertible matrix, and $u,v\in\mathbb{R}^n$. Then,
\begin{align*}
|A+uv^\top|=(1+v^\top A^{-1} u)|A|.
\end{align*}
\end{lemma}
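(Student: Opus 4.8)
The plan is to derive the identity by evaluating the determinant of a single bordered $(n+1)\times(n+1)$ matrix in two different ways, using Schur complements. Concretely, I would introduce
\[
M \;=\; \begin{pmatrix} A & u \\ -v^\top & 1 \end{pmatrix} \;\in\; \mathbb{R}^{(n+1)\times(n+1)}.
\]
First, pivoting on the bottom-right $1\times 1$ block (which is always invertible), one gets $\det M = 1 \cdot \det\!\big(A - u\cdot 1^{-1}\cdot(-v^\top)\big) = \det(A + uv^\top)$. Second, since $A$ is invertible by hypothesis, pivoting on the top-left block $A$ gives $\det M = \det(A)\cdot\big(1 - (-v^\top)A^{-1}u\big) = \det(A)\,(1 + v^\top A^{-1}u)$. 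Equating the two evaluations of $\det M$ yields exactly $\det(A+uv^\top) = (1 + v^\top A^{-1}u)\det(A)$, which is the claim (writing $|\cdot|$ for $\det(\cdot)$).

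To keep the argument self-contained I would justify the two Schur-complement evaluations by exhibiting the underlying block-triangular factorizations and taking determinants, using multiplicativity together with the fact that a block-triangular matrix with identity diagonal blocks has determinant $1$. For the second evaluation this is the clean factorization
\[
\begin{pmatrix} A & u \\ -v^\top & 1 \end{pmatrix}
= \begin{pmatrix} I & 0 \\ -v^\top A^{-1} & 1 \end{pmatrix}
\begin{pmatrix} A & u \\ 0 & 1 + v^\top A^{-1}u \end{pmatrix},
\]
and for the first evaluation the analogous factorization pivoting on the $(2,2)$ entry, whose $(1,1)$ Schur complement works out to $A + uv^\top$ because $S=1$. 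An alternative route I would mention as a remark: write $A + uv^\top = A\,(I + A^{-1}uv^\top)$, so $\det(A+uv^\top) = \det(A)\det\!\big(I + (A^{-1}u)v^\top\big)$, and then invoke the rank-one special case $\det(I + xy^\top) = 1 + y^\top x$ (valid for all $x,y\in\mathbb{R}^n$), which follows by noting $xy^\top$ acts as multiplication by $y^\top x$ on $\mathrm{span}(x)$ and by $0$ on a complementary $(n-1)$-dimensional subspace, so $I + xy^\top$ has eigenvalues $1 + y^\top x$ and $1$ with multiplicity $n-1$.

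There is no genuine obstacle here; this is a classical identity and each step is routine. The only point that needs a little care is the bookkeeping in the block factorizations — in particular keeping the sign convention on the $-v^\top$ border consistent so that the Schur complement of the $(2,2)$ block comes out as $A + uv^\top$ (and not $A - uv^\top$), and noting that the hypothesis that $A$ is invertible is precisely what licenses pivoting on the top-left block and writing $v^\top A^{-1}u$ on the right-hand side.
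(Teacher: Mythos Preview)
Your proof is correct. The Schur-complement/bordered-matrix argument you give is one of the standard derivations of the Matrix Determinant Lemma, and your bookkeeping (in particular the sign on $-v^\top$ so that the $(2,2)$-pivot Schur complement comes out as $A+uv^\top$) is handled cleanly. The alternative route you mention, factoring $A+uv^\top = A(I + A^{-1}uv^\top)$ and reading off the eigenvalues of the rank-one perturbation $I+xy^\top$, is also fine.

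For comparison: the paper does not actually prove this lemma. It is stated as a classical named result (used as a black box in the computation of $D_{\KL}(\wh{\mathcal P}_i\|\mathcal H)$ to evaluate $|I_m-(V^i)^\top V^i|=1-\|V^i\|_2^2$), with no accompanying argument. So there is no paper proof to compare against; you have simply supplied a self-contained justification where the paper relies on the reader's background.
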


\begin{lemma}[KL divergence between two multivariate Gaussians~\cite{pp08}\footnote{\url{http://stats.stackexchange.com/questions/60680/kl-divergence-between-two-multivariate-gaussians}}]\label{lem:KL_gaussian}
Given two $d$-dimensional multivariate Gaussian distribution $N(\mu_1,\Sigma_1)$ and $N(\mu_2,\Sigma_2)$, then
\begin{align*}
D_{\KL}(N(\mu_1||\Sigma_1),N(\mu_2,\Sigma_2))=\frac{1}{2}\left(\log\frac{|\Sigma_2|}{|\Sigma_1|}-d+\mathrm{tr}(\Sigma_2^{-1}\Sigma_1)+(\mu_2-\mu_1)^\top\Sigma_2^{-1}(\mu_2-\mu_1)\right).
\end{align*}
\end{lemma}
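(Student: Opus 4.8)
The plan is to verify this standard identity by a direct computation of the KL divergence as an expectation of a log-density ratio under the first Gaussian. Write $p$ for the density of $N(\mu_1,\Sigma_1)$ and $q$ for the density of $N(\mu_2,\Sigma_2)$. Recall that in $d$ dimensions the log-density of $N(\mu,\Sigma)$ is $-\tfrac{d}{2}\log(2\pi)-\tfrac12\log|\Sigma|-\tfrac12(x-\mu)^\top\Sigma^{-1}(x-\mu)$. Subtracting, the $-\tfrac{d}{2}\log(2\pi)$ terms cancel, so
\begin{align*}
\log\frac{p(x)}{q(x)} = \frac12\log\frac{|\Sigma_2|}{|\Sigma_1|}-\frac12(x-\mu_1)^\top\Sigma_1^{-1}(x-\mu_1)+\frac12(x-\mu_2)^\top\Sigma_2^{-1}(x-\mu_2).
\end{align*}

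Next I would take $\E_{x\sim p}[\cdot]$ of each term. The determinant term is constant. For the middle term, use the trace trick $\E_p[(x-\mu_1)^\top\Sigma_1^{-1}(x-\mu_1)]=\mathrm{tr}(\Sigma_1^{-1}\,\E_p[(x-\mu_1)(x-\mu_1)^\top])=\mathrm{tr}(\Sigma_1^{-1}\Sigma_1)=d$. For the last term, substitute $x-\mu_2=(x-\mu_1)+(\mu_1-\mu_2)$ and expand the quadratic form into three pieces; the cross term is linear in $x-\mu_1$ and hence has zero expectation under $p$, while the two remaining pieces give $\mathrm{tr}(\Sigma_2^{-1}\Sigma_1)$ and $(\mu_1-\mu_2)^\top\Sigma_2^{-1}(\mu_1-\mu_2)$ respectively. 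Noting $(\mu_1-\mu_2)^\top\Sigma_2^{-1}(\mu_1-\mu_2)=(\mu_2-\mu_1)^\top\Sigma_2^{-1}(\mu_2-\mu_1)$ by symmetry of $\Sigma_2^{-1}$, collecting all terms and multiplying through by $\tfrac12$ yields exactly
\begin{align*}
D_{\KL}(N(\mu_1,\Sigma_1)\,\|\,N(\mu_2,\Sigma_2))=\frac12\left(\log\frac{|\Sigma_2|}{|\Sigma_1|}-d+\mathrm{tr}(\Sigma_2^{-1}\Sigma_1)+(\mu_2-\mu_1)^\top\Sigma_2^{-1}(\mu_2-\mu_1)\right).
\end{align*}

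There is no real conceptual obstacle here: the only thing to be careful about is the bookkeeping when expanding the $(x-\mu_2)^\top\Sigma_2^{-1}(x-\mu_2)$ quadratic form and confirming that the cross term vanishes in expectation, together with the standard fact that $\E_p[(x-\mu_1)(x-\mu_1)^\top]=\Sigma_1$. Since this lemma is cited from \cite{pp08} and is a textbook computation, in the paper it would suffice to state it with a one-line sketch or a reference; I would include the short derivation above for completeness.
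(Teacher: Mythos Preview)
Your derivation is correct and is the standard textbook computation. The paper does not prove this lemma at all; it simply states it with a citation to \cite{pp08} and a footnote link, treating it as a known fact, so your proposal in fact provides more detail than the paper itself.
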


\begin{lemma}[Sherman-Morrison formula]\label{lem:sherman}
Suppose $A\in\mathbb{R}^{n\times n}$ is an invertible matrix and $u,v\in\mathbb{R}^n$. Suppose furthermore that $1+v^\top A^{-1}u\neq 0$. Then the Sherman-Morrison formula states that
\begin{align}
(A+uv^\top)^{-1}=A^{-1}-{A^{-1}uv^{\top}A^{-1} \over 1+v^{\top}A^{-1}u}.
\end{align}
\end{lemma}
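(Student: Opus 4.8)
The plan is to verify the claimed identity directly, checking that the right-hand side is a two-sided inverse of $A + uv^\top$. Set $M = A^{-1} - \frac{A^{-1}uv^\top A^{-1}}{1 + v^\top A^{-1}u}$; by the hypothesis $1 + v^\top A^{-1}u \neq 0$ this scalar denominator is nonzero, so $M$ is a well-defined $n \times n$ matrix. Write $\lambda = v^\top A^{-1}u \in \mathbb{R}$ for brevity.

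First I would compute $(A + uv^\top)M$. Expanding and using associativity of matrix multiplication, $(A + uv^\top)M = AA^{-1} + uv^\top A^{-1} - \frac{AA^{-1}uv^\top A^{-1} + uv^\top A^{-1}uv^\top A^{-1}}{1+\lambda}$. The term $uv^\top A^{-1}uv^\top A^{-1}$ equals $u(v^\top A^{-1}u)v^\top A^{-1} = \lambda\, uv^\top A^{-1}$, since $v^\top A^{-1}u$ is a scalar, so the last fraction collapses to $\frac{(1+\lambda)uv^\top A^{-1}}{1+\lambda} = uv^\top A^{-1}$. Hence $(A + uv^\top)M = I + uv^\top A^{-1} - uv^\top A^{-1} = I$. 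Running the analogous computation for $M(A + uv^\top)$ — this time factoring the scalar $\lambda$ out of $A^{-1}uv^\top A^{-1}uv^\top$ as $A^{-1}u(v^\top A^{-1}u)v^\top = \lambda A^{-1}uv^\top$ — gives $M(A + uv^\top) = I$ as well. Therefore $A + uv^\top$ is invertible with inverse $M$, which is exactly the asserted formula.

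I do not expect any genuine obstacle: the whole argument is a short algebraic verification, and the only place the stated hypothesis enters is to guarantee that $M$ exists and that the cancellation $(1+\lambda)/(1+\lambda) = 1$ is legitimate. For completeness one could instead derive the formula by computing the inverse of the bordered matrix $\begin{bmatrix} A & u \\ -v^\top & 1 \end{bmatrix}$ via Schur complements with respect to the two diagonal blocks and comparing the resulting $(1,1)$ entries, or by combining the matrix determinant lemma (Lemma~\ref{lem:Matrix_determinant_lemma}) with the cofactor formula for the inverse; but the direct check above is the cleanest route and is what I would present.
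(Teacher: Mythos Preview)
Your direct verification is correct and is the standard way to establish the Sherman--Morrison formula. The paper does not actually supply a proof of this lemma: it is stated as a classical identity and used as a tool in the total-variation bounds that follow. So there is nothing to compare against, and your two-sided check that $M(A+uv^\top) = (A+uv^\top)M = I$ is exactly what one would write if a proof were required.
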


\subsubsection{Main results}

\begin{lemma}\label{lem:tot_var_results}
Let $V\in \mathbb{R}^{n\times m}$ be a matrix with orthonormal columns, and let $A\in\mathbb{R}^{n\times k}$ be a random matrix with each entry drawn from i.i.d. Gaussian $N(0,1)$. We denote the distribution $\mathcal{D}_i$ over $D_i\in\mathbb{R}^{(m+1)\times k}$ where
\begin{align*}
 D_i=\begin{bmatrix}
 V^\top A \\ A^i
 \end{bmatrix}.
\end{align*}
 If $\|(V^\top)_i\|_2^2<\frac{1}{2}$, then
\begin{align*}
D_{\TV}(\mathcal{D}_i,\mathcal{G})\leq O(k\|V^i\|_2)+2^{-\Theta(k)},
\end{align*}
where $\mathcal{G}$ is a distribution over $G\in\mathbb{R}^{(m+1)\times k}$, where each entry of $G$ is drawn from the i.i.d. Gaussian $N(0,1)$.
\end{lemma}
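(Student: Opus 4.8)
The plan is to compute the KL divergence between $\mathcal{D}_i$ and $\mathcal{G}$ directly and then invoke Pinsker's inequality (Lemma~\ref{lem:KLandTV}). First I would observe that by Lemma~\ref{lem:equivalent_distribution}, the ``top'' block $V^\top A$ of $D_i$ is itself distributed as an $m\times k$ matrix of i.i.d.\ $N(0,1)$ entries, so $\mathcal{D}_i$ and $\mathcal{G}$ differ only in how the last row $A^i$ relates to the block above it: under $\mathcal{G}$ it is independent, while under $\mathcal{D}_i$ it is a (Gaussian) linear function of $A$ correlated with $V^\top A$. Since the $k$ columns of $D_i$ are mutually independent under both distributions (columns depend on disjoint columns $A_j$ of $A$), the KL divergence tensorizes: $D_{\KL}(\mathcal{D}_i\,\|\,\mathcal{G}) = k\cdot D_{\KL}(\mathcal{D}_i^{(1)}\,\|\,\mathcal{G}^{(1)})$, where the superscript denotes the marginal on a single column, an $(m+1)$-dimensional Gaussian vector.

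Next I would identify the two single-column covariance matrices. Writing $v=V^i$ (so $(V^\top)_i$, the $i$-th row of $V^\top$, is $v^\top$ with $\|v\|_2^2<\tfrac12$), the vector $\begin{bmatrix} V^\top A_j \\ A^i_j\end{bmatrix}$ has covariance $\Sigma_1 = \begin{bmatrix} I_m & V^\top e_i \\ e_i^\top V & 1\end{bmatrix} = \begin{bmatrix} I_m & v \\ v^\top & 1\end{bmatrix}$, whereas $\mathcal{G}^{(1)}$ has covariance $\Sigma_2 = I_{m+1}$, and both have zero mean. Now I would apply Lemma~\ref{lem:KL_gaussian} with $\mu_1=\mu_2=0$, $d=m+1$, which gives $D_{\KL} = \tfrac12\big(\log\frac{|\Sigma_2|}{|\Sigma_1|} - (m+1) + \mathrm{tr}(\Sigma_1)\big)$. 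Since $\Sigma_1$ differs from $I_{m+1}$ only in rank-$2$ off-diagonal blocks, $\mathrm{tr}(\Sigma_1) = m+1$, so the trace term cancels the dimension term exactly, leaving $D_{\KL}(\mathcal{D}_i^{(1)}\,\|\,\mathcal{G}^{(1)}) = -\tfrac12\log|\Sigma_1|$. To evaluate $|\Sigma_1|$ I would use the matrix determinant lemma (Lemma~\ref{lem:Matrix_determinant_lemma}) in block form (or the Schur complement formula): $|\Sigma_1| = 1 - v^\top v = 1-\|v\|_2^2$. Hence $D_{\KL}(\mathcal{D}_i\,\|\,\mathcal{G}) = -\tfrac{k}{2}\log(1-\|v\|_2^2)$, and by Pinsker $D_{\TV}(\mathcal{D}_i,\mathcal{G}) \le \sqrt{-\tfrac{k}{4}\log(1-\|v\|_2^2)}$.

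Finally I would convert this bound into the claimed form. When $\|v\|_2^2 < \tfrac12$, we have $-\log(1-\|v\|_2^2) \le 2\|v\|_2^2$ (since $-\log(1-x)\le 2x$ for $x\in[0,\tfrac12]$), so $D_{\TV}(\mathcal{D}_i,\mathcal{G}) \le \sqrt{k/2}\cdot\|v\|_2 = O(\sqrt{k}\,\|V^i\|_2)$. This is actually slightly stronger in the $k$-dependence than the stated $O(k\|V^i\|_2)+2^{-\Theta(k)}$; the additive $2^{-\Theta(k)}$ term is presumably a slack term the authors keep for a regime where $\|v\|_2$ is not small enough for the Taylor bound, in which case one truncates: if $\|v\|_2 \ge 2^{-\Theta(k)}/\sqrt k$ the main term dominates and if $\|v\|_2 < 2^{-\Theta(k)}/\sqrt k$ then trivially $D_{\TV} \le 2^{-\Theta(k)}$ — but I would first try to state the clean $O(\sqrt k\,\|V^i\|_2)$ bound and only fall back on the looser form if some later application genuinely needs the extra slack.

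\textbf{Anticipated main obstacle.} The computational heart is harmless (it is the Schur-complement determinant $|\Sigma_1|=1-\|V^i\|_2^2$ plus the cancellation of the trace and dimension terms). The real subtlety is the \emph{tensorization} step: I need to be careful that the columns of $D_i$ really are independent under \emph{both} $\mathcal{D}_i$ and $\mathcal{G}$, so that $D_{\KL}$ of the joint equals $k$ times $D_{\KL}$ of one column — this relies on the fact that column $j$ of $V^\top A$ and the $j$-th entry of $A^i$ together depend only on $A_j$, and distinct $A_j$ are independent. A secondary bookkeeping point is matching the role of ``$(V^\top)_i$'' in the hypothesis $\|(V^\top)_i\|_2^2<\tfrac12$ with the row $V^i$ of $V$ that appears in the covariance — these are the same vector, so the hypothesis is exactly what makes $\Sigma_1$ positive definite and the logarithm finite, which is also what makes the $-\log(1-x)\le 2x$ estimate valid.
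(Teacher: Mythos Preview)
Your proposal is correct and in fact cleaner than the paper's own argument. You compute $D_{\KL}(\mathcal{D}_i\|\mathcal{G})$ directly by tensorizing over the $k$ independent columns and reading off the single-column covariance $\Sigma_1=\begin{bmatrix}I_m & v\\ v^\top & 1\end{bmatrix}$; the trace--dimension cancellation and the Schur-complement determinant $|\Sigma_1|=1-\|V^i\|_2^2$ then give the sharp bound $D_{\TV}\le\sqrt{k/2}\,\|V^i\|_2$.

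The paper instead takes a detour through an intermediate distribution $\mathcal{P}_i$ obtained by zeroing out the $i$-th row of $V$ (so that the top block $(V^{\neg i})^\top A$ becomes independent of $A^i$), and then bounds $D_{\TV}(\mathcal{D}_i,\mathcal{G})\le D_{\TV}(\mathcal{D}_i,\mathcal{P}_i)+D_{\TV}(\mathcal{P}_i,\mathcal{G})$ by the triangle inequality. The second summand is handled exactly as you do (yielding $\sqrt{k}\,\|V^i\|_2$), but the first requires conditioning on $A^i=x$, computing a column-wise KL with a mean shift, and integrating against the law of $x$; the integral is controlled by truncating at $\|x\|_2\le 10k$, which is precisely what produces both the looser $O(k\|V^i\|_2)$ factor and the additive $2^{-\Theta(k)}$ tail term. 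Your direct computation sidesteps that conditioning entirely, so the extra slack is an artifact of the paper's route, not of the problem, and your suspicion that the $2^{-\Theta(k)}$ is unnecessary here is exactly right. The two ``obstacles'' you flag (column independence under both laws, and the identification $(V^\top)_i=V^i$) are both genuine but straightforward, and you have them right.
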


\begin{proof}
Let $\mathcal{H}$ be a distribution over $H\in\mathbb{R}^{m\times k}$, where each entry of $H$ is an i.i.d. Gaussian $N(0,1)$. Let $\mathcal{H}'$ be a distribution over $V^\top A$. According to Lemma~\ref{lem:equivalent_distribution}, $\mathcal{H}$ and $\mathcal{H}'$ are the same distribution.

We define matrix $V^{\neg i} \in \mathbb{R}^{n \times m}$ as
\begin{align*}
V^{\neg i} =
\begin{bmatrix}
(V^1)^\top & (V^2)^\top & \cdots & (V^{i-1})^\top & 0 & (V^{i+1})^\top & \cdots & (V^n)^\top
\end{bmatrix}^\top,
\end{align*}
where $V^i$ denotes the $i^{\text{th}}$ row of $V\in \mathbb{R}^{n\times m}, \forall i\in[n]$.

Let $\mathcal{G}$ be a distribution over $G\in\mathbb{R}^{(m+1)\times k}$, where each entry of $G$ is an i.i.d. Gaussian $N(0,1)$.
Let $\mathcal{P}_i$ be a distribution over $P_i\in\mathbb{R}^{(m+1)\times k}$, where
\begin{align*}
P_i=\begin{bmatrix}(V^{\neg i})^\top A\\ A^i \end{bmatrix}.
\end{align*}
Let $\wh{\mathcal{P}}_i$ be a distribution over $\wh{P}_i\in\mathbb{R}^{m\times k}$, where $\wh{P}_i=(V^{\neg i})^\top A$.
Then we have:
\begin{align}\label{eq:tot_var_eq_chain}
D_{\TV}(\mathcal{P}_i,\mathcal{G}) = D_{\TV}(\wh{\mathcal{P}}_i,\mathcal{H'})=D_{\TV}(\wh{\mathcal{P}}_i,\mathcal{H}).
\end{align}
The first equality is because $(V^{\neg i})^\top A$ is independent from $A^i$. The second equality follows the Lemma~\ref{lem:equivalent_distribution}.

\begin{claim}\label{cla:small_KLdis}
If $\|V^i\|_2^2\leq \frac{1}{2}$,
\begin{align*}
D_{\KL}(\wh{\mathcal{P}}_i||\mathcal{H})= -\frac{k}{2} \left( \log  (1-\|V^i\|_2^2) +\|V^i\|_2^2  \right)\leq k\|V^i\|_2^2.
\end{align*}
\end{claim}
\begin{proof}
Let $\wh{P}_i\sim \wh{\mathcal{P}}_i, H\sim \mathcal{H}$. Notice that different columns of $\wh{P}_i$ are i.i.d, and all entries of $H$ are fully independent. We can look at one column of $\wh{P}_i$ and $H$. Since $\wh{P}_i=(V^{\neg i})^\top A$, it is easy to see that its column is drawn from $N(0,\Sigma_1)$ where $\Sigma_1=I_m-(V^i)^\top V^i$. Since $H$ is fully independent, each column of $H$ is drawn from $N(0,\Sigma_2)$ where $\Sigma_2=I_m$. Let $p(x)$ be the pdf of the column of $\wh{P}_i$, and let $q(x)$ be the pdf of the column of $H$. We have the following calculation~\cite{pp08}\footnote{\url{http://stats.stackexchange.com/questions/60680/kl-divergence-between-two-multivariate-gaussians}},
\begin{align*}
D_{\KL}(\wh{\mathcal{P}}_i||\mathcal{H})
& = \frac{k}{2} \left( \log \frac{| \Sigma_2 |}{ |\Sigma_1| } - m + \text{tr} (\Sigma_2^{-1} \Sigma_1)  \right)\\
& = \frac{k}{2} \left( \log \frac{| I_m |}{ |I_m-(V^i)^\top V^i| } - m + \text{tr} (I_m-(V^i)^\top V^i)  \right)\\
& = \frac{k}{2} \left( -\log  |I_m-(V^i)^\top V^i|  - m + m -\|V^i\|_2^2  \right)\\
& = \frac{k}{2} \left( -\log  (1-\|V^i\|_2^2)  - m + m -\|V^i\|_2^2  \right)\\
& = -\frac{k}{2} \left( \log  (1-\|V^i\|_2^2) +\|V^i\|_2^2  \right)\\
& \leq -\frac{k}{2}\cdot 2\|V^i\|_2^2\\
& = k\|V^i\|_2^2.
\end{align*}
The first equality is due to Lemma~\ref{lem:KL_gaussian}. The sixth equality follows by $\Sigma_2=I_m$ and $\Sigma_1=I_m-(V^i)^\top V^i$. The eighth equality follows by Lemma~\ref{lem:Matrix_determinant_lemma}. The first inequality follows by $\log(1-x)+x\geq -2x$, when $0<x<1/2$.

\end{proof}

According to Lemma~\ref{lem:KLandTV}, we have
\begin{align}\label{eq:transdis}
D_{\TV}(\wh{\mathcal{P}}_i,\mathcal{H})&\leq \sqrt{\frac{1}{2}D_{\KL}(\wh{\mathcal{P}}_i||\mathcal{H})}\leq \sqrt{k}\|V^i\|_2.
\end{align}
Now, we want to argue that $D_{\TV}(\mathcal{D}_i,\mathcal{P}_i)$ is small, where $\mathcal{D}_i$ is a distribution over $D_i\in\mathbb{R}^{(m+1)\times k}$ that
\begin{align*}
 D_i=\begin{bmatrix}
 V^\top A \\ A^i
 \end{bmatrix}.
\end{align*}
For a fixed $x\in \mathbb{R}^{k}$, let $\wh{\mathcal{D}}_i(x)$ be a distribution over $\wh{D}_i(x)\in \mathbb{R}^{m\times k}$, where $\wh{D}_i(x)=(V^{\neg i})^\top A+(V^i)^\top x$. Let $p(x)$ be the pdf of $(A^i)^\top$, then
\begin{align}\label{eq:totdis}
D_{\TV}(\mathcal{D}_i,\mathcal{P}_i) &= \int D_{\TV}\left(\wh{\mathcal{D}}_i(x),\wh{\mathcal{P}}_i\right) p(x) \mathrm{d}x.
\end{align}

Now we look at the $j^{\text{th}}$ column of $\wh{D}_i(x)$ and the $j^{\text{th}}$ column of $\wh{P}_i$. The distribution of the previous one is over $N((V^i)^\top x_j,\Sigma_2)$, and the latter distribution as we said before is $N(0,\Sigma_2)$, where $\Sigma_2=I_m-(V^i)^\top V^i$. Now we can argue that the KL divergence between them is bounded:
\begin{align*}
&D_{\KL}(N((V^i)^\top x_j,\Sigma_2)||N(0,\Sigma_2))\\
=&\frac{1}{2}\left(\log\frac{|\Sigma_2|}{|\Sigma_2|}-m+\text{tr}(\Sigma_2^{-1}\Sigma_2)+x_j^2V^i\Sigma_2^{-1}(V^i)^{\top}\right)\\
=&\frac{1}{2}\left(-m+\text{tr}(I_m)+x_j^2V^i\Sigma_2^{-1}(V^i)^{\top}\right)\\
=&\frac{1}{2}x_j^2V^i\Sigma_2^{-1}(V^i)^{\top}\\
=&\frac{1}{2}x_j^2V^i(I_m-(V^i)^{\top} V^i)^{-1}(V^i)^{\top}\\
=&\frac{1}{2}x_j^2V^i\left(I_m+\frac{(V^i)^{\top}V^i}{1-V^i(V^i)^{\top}}\right)(V^i)^{\top}\\
=&\frac{1}{2}x_j^2\left(\|V^i\|_2^2+\frac{\|V^i\|_2^2}{1-\|V^i\|_2^2}\right)\\
=&\frac{1}{2}x_j^2\frac{\|V^i\|_2^2}{1-\|V^i\|_2^2}.
\end{align*}
The first equality is due to Lemma~\ref{lem:KL_gaussian}. The fourth equality follows by $\Sigma_2=I_m-(V^i)^\top V^i$. The fifth equality follows by Lemma~\ref{lem:sherman}.

By summing the KL divergence on all the columns up,
\begin{align*}
D_{\KL}(\wh{\mathcal{D}}_i(x)||\wh{\mathcal{P}}_i)=\frac{1}{2}\|x\|_2^2\frac{\|V^i\|_2^2}{1-\|V^i\|_2^2}.
\end{align*}

Applying Lemma~\ref{lem:KLandTV} again, we get
\begin{align*}
D_{\TV}(\wh{\mathcal{D}}_i(x),\wh{\mathcal{P}}_i)\leq \sqrt{\frac12D_{\KL}(\wh{\mathcal{D}}_i(x)||\wh{\mathcal{P}}_i)}=\frac{\|x\|_2\|V^i\|_2}{2\sqrt{1-\|V^i\|_2^2}}.
\end{align*}

Plugging it into Equation~(\ref{eq:totdis}), we get
\begin{align*}
D_{\TV}(\mathcal{D}_i,\mathcal{P}_i) &= \int D_{\TV}\left(\wh{\mathcal{D}}_i(x),\wh{\mathcal{P}}_i\right) p(x) \mathrm{d}x\\
&=\int_{\|x\|_2\leq 10k} D_{\TV}\left(\wh{\mathcal{D}}_i(x),\wh{\mathcal{P}}_i\right) p(x) \mathrm{d}x+\int_{\|x\|_2>10k}D_{\TV}\left(\wh{\mathcal{D}}_i(x),\wh{\mathcal{P}}_i\right) p(x) \mathrm{d}x\\
&\leq \int_{\|x\|_2\leq 10k} D_{\TV}\left(\wh{\mathcal{D}}_i(x),\wh{\mathcal{P}}_i\right) p(x) \mathrm{d}x+\int_{\|x\|_2>10k} p(x) \mathrm{d}x\\
&\leq \int_{\|x\|_2\leq 10k} \frac{\|x\|_2\|V^i\|_2}{2\sqrt{1-\|V^i\|_2^2}} p(x) \mathrm{d}x+\int_{\|x\|_2>10k} p(x) \mathrm{d}x\\
&\leq \frac{10k\|V^i\|_2}{2\sqrt{1-\|V^i\|_2^2}} +\int_{\|x\|_2>10k} p(x) \mathrm{d}x\\
&\leq \frac{10k\|V^i\|_2}{2\sqrt{1-\|V^i\|_2^2}} + 2^{-\Theta(k)}.
\end{align*}
The first inequality just follows from the fact that total variation distance is smaller than $1$. The second inequality is what we plugged in. The last inequality follows from the fact that $x\sim N(0,I_k)$ and from the tail bounds of a Gaussian.

Together with Equation~(\ref{eq:tot_var_eq_chain}) and Equation~(\ref{eq:transdis}), we can get
\begin{align*}
D_{\TV}(\mathcal{D}_i,\mathcal{G})& \leq D_{\TV}(\mathcal{D}_i,\mathcal{P}_i)+D_{\TV}(\mathcal{G},\mathcal{P}_i)\\
&=D_{\TV}(\mathcal{D}_i,\mathcal{P}_i)+D_{\TV}(\wh{\mathcal{P}}_i,\mathcal{H})\\
&\leq \frac{10k\|V^i\|_2}{2\sqrt{1-\|V^i\|_2^2}} + 2^{-\Theta(k)}+\sqrt{k}\|V^i\|_2\\
&\leq 10k\|V^i\|_2 + 2^{-\Theta(k)}+\sqrt{k}\|V^i\|_2.
\end{align*}
The last inequality follows by $\|V^i\|_2^2\leq \frac{1}{2}$. Then, we have completed the proof.

\end{proof}

\begin{lemma}\label{lem:spectral_of_A}
$A\in\mathbb{R}^{r\times k}$ ($r\geq k$) is a random matrix for which each entry is i.i.d. $N(0,1)$. With probability at least $1-e^{- \Theta(r) }$, the maximum singular value $\|A\|_2$ is at most $O(\sqrt{r})$.
\end{lemma}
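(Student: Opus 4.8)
The plan is a standard $\epsilon$-net argument over the unit sphere $S^{k-1}\subset\mathbb{R}^k$ combined with a chi-squared tail bound. Recall $\|A\|_2=\sup_{x\in S^{k-1}}\|Ax\|_2$. First I would fix a $\frac14$-net $\mathcal{N}$ of $S^{k-1}$; a standard volumetric bound gives $|\mathcal{N}|\leq 9^k$. A routine covering argument then gives $\|A\|_2\leq 2\max_{x\in\mathcal{N}}\|Ax\|_2$: if $x^*$ attains the supremum and $y\in\mathcal{N}$ satisfies $\|x^*-y\|_2\leq\frac14$, then by the triangle inequality $\|Ax^*\|_2\leq\|Ay\|_2+\|A(x^*-y)\|_2\leq\max_{x\in\mathcal{N}}\|Ax\|_2+\frac14\|A\|_2$, and rearranging absorbs the contraction factor into an absolute constant.

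Next, for a \emph{fixed} unit vector $x$, the vector $Ax\in\mathbb{R}^r$ has i.i.d.\ $N(0,1)$ coordinates, since the rows of $A$ are independent and the inner product of a fixed row with a unit vector is distributed as $N(0,1)$. Hence $\|Ax\|_2^2$ is a chi-squared random variable with $r$ degrees of freedom, and a standard sub-exponential upper-tail estimate (bounding the moment generating function, or the Laurent--Massart inequality) yields $\Pr\bigl[\|Ax\|_2^2\geq Cr\bigr]\leq e^{-\Theta(Cr)}$ for every constant $C$ larger than some absolute threshold. I would take $C$ to be a sufficiently large absolute constant.

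Finally, a union bound over the net gives $\Pr\bigl[\exists\, x\in\mathcal{N}:\|Ax\|_2\geq\sqrt{Cr}\bigr]\leq 9^k e^{-\Theta(Cr)}$. Because $r\geq k$, choosing $C$ large enough makes $9^k e^{-\Theta(Cr)}\leq e^{-\Theta(r)}$, since $k\log 9-\Theta(Cr)\leq-\Theta(r)$ once $C$ exceeds a fixed constant. On the complementary event, $\|A\|_2\leq 2\max_{x\in\mathcal{N}}\|Ax\|_2\leq 2\sqrt{Cr}=O(\sqrt{r})$, which is the claim. There is no real obstacle here—this is a textbook fact about Gaussian matrices—the only point requiring care is bookkeeping: the net has size $\exp(O(k))$ rather than $\exp(O(r))$, so the $\exp(-\Theta(r))$ Gaussian tail dominates precisely because $r\geq k$, and this is exactly where that hypothesis is used.
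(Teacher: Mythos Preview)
Your proposal is correct. The paper's own proof simply cites Proposition~2.4 of Rudelson--Vershynin rather than writing out the argument, and that proposition is proved exactly by the $\epsilon$-net plus Gaussian concentration argument you outline; so you have effectively unpacked the same approach the paper relies on.
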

\begin{proof}
Since $A\in\mathbb{R}^{k\times r}$ is a random matrix with each entry i.i.d. $N(0,1)$, this follows by standard arguments (Proposition 2.4 in \cite{RV10}). Since $r\geq k$, with probability at least $1-e^{-\Theta(r)}$, $\|A\|_2$ is at most $O(\sqrt{r})$.
\end{proof}

\begin{definition}
Let $V\in\mathbb{R}^{n\times r}$ be a matrix with orthonormal columns, and let each entry of $A\in\mathbb{R}^{k\times r}$ be a random variable drawn from an i.i.d. Gaussian $N(0,1)$. Define event $\wh{\mathcal{E}}(A,V,\beta,\gamma)$ to be: $\forall y\in\mathbb{R}^n$, $\|y\|_1\leq O(k^\gamma)$ and each coordinate of $y$ has absolute value at most $1/k^\beta$, and also $AV^\top y$ has at most $O(k/\log k)$ coordinates with absolute value at least $\Omega(1/\log k)$, and $\|A\|_2\leq O(\sqrt{r})$.
\end{definition}

\begin{lemma}\label{lem:useful_property}
For any $k\geq 1$, and any constants $c_2\geq c_1\geq 1$, let $k\leq r=O(k^{c_1}),r\leq n= O(k^{c_2})$, let $V\in\mathbb{R}^{n\times r}$ be a matrix with orthonormal columns, and let each entry of $A\in\mathbb{R}^{k\times r}$ be a random variable drawn from i.i.d. Gaussian $N(0,1)$. Furthermore, if $\beta$ and $\gamma$ are two constants which satisfy $\beta>\gamma>0$ and $\beta+\gamma<1$,
\begin{align*}
\Pr\biggl[\wh{\mathcal{E}}(A,V,\beta,\gamma)\biggr]\geq 1-2^{-\Theta(k)}.
\end{align*}
\end{lemma}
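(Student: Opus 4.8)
The plan is to establish the two conjuncts of $\widehat{\mathcal E}(A,V,\beta,\gamma)$ separately. The bound $\|A\|_2\le O(\sqrt r)$ follows by applying Lemma~\ref{lem:spectral_of_A} to $A^\top\in\mathbb{R}^{r\times k}$ (legitimate since $r\ge k$), and fails with probability at most $e^{-\Theta(r)}\le 2^{-\Theta(k)}$; condition on it. The substantial part is to show that with probability $1-2^{-\Theta(k)}$ over $A$, \emph{every} admissible $y$ (i.e.\ $\|y\|_1\le O(k^\gamma)$ and $\|y\|_\infty\le k^{-\beta}$) produces $AV^\top y\in\mathbb{R}^k$ with at most $O(k/\log k)$ coordinates of magnitude $\ge\Omega(1/\log k)$. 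For a single fixed $y$ this is easy: since $V$ has orthonormal columns, $\|V^\top y\|_2\le\|y\|_2\le\sqrt{\|y\|_1\|y\|_\infty}\le O(k^{(\gamma-\beta)/2})$, so the $k$ coordinates of $AV^\top y$ are i.i.d.\ $N(0,\|V^\top y\|_2^2)$ with tiny variance (because $\beta>\gamma$), and Gaussian anti-concentration plus a Chernoff bound make it astronomically unlikely that even $\Omega(k/\log k)$ of them are large. The whole issue is the union over all $y$.

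To control the union I would use a dyadic decomposition and a family of nets. First truncate: drop from $y$ all coordinates of magnitude below $\tau:=k^{-(c_1+\gamma+1)}$; on the event $\|A\|_2\le O(\sqrt r)$ the removed part $y''$ satisfies $\|AV^\top y''\|_\infty\le\|A\|_2\sqrt{\|y''\|_1\tau}=o(1/\log k)$, so it suffices to handle $y'$ whose nonzero entries lie in $[\tau,k^{-\beta}]$. Split $y'=\sum_{j=1}^{L}y^{(j)}$ by magnitude into $L=O(\log k)$ dyadic levels, where level $j$ has all entries $\asymp 2^{-j}k^{-\beta}$, support $s_j\le\min(O(k^{\gamma+\beta}2^j),\,n)$, and hence $\ell_2$-norm $\|y^{(j)}\|_2\le O(k^{(\gamma-\beta)/2}2^{-j/2})=:\sigma_j$, which decreases geometrically in $j$. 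Fix a per-level threshold $\delta:=\Theta(1/(L\log k))=\Theta(1/\log^2 k)$ and a net resolution $\eta:=\Theta(1/(\sqrt{rn}\,\log^3 k))$; note $\log(1/\eta)=O(\log k)$ since $rn=\poly(k)$. Call a level \emph{small} if $\sigma_j<\eta$; the total contribution of all small levels to any coordinate of $AV^\top y'$ is at most $L\,\|A\|_2\sqrt n\,\eta=o(1/\log k)$, so they need no net. For each of the remaining $O(\log k)$ \emph{big} levels, build an $\ell_\infty$-net $\mathcal N_j$ of resolution $\eta$ over the candidate pieces (support $\le s_j$, values in the band), of size $\log|\mathcal N_j|\le C_2 s_j\log k$.

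For a fixed $\widehat y\in\mathcal N_j$ the coordinates of $AV^\top\widehat y$ are i.i.d.\ $N(0,\le\sigma_j^2)$, so each exceeds $\delta$ with probability $p_j\le\exp(-\Omega(\delta^2/\sigma_j^2))=\exp(-\lambda_j)$ with $\lambda_j=\Omega(2^jk^{\beta-\gamma}/\log^4 k)$ (and $\lambda_0\to\infty$ as $k\to\infty$ because $\beta>\gamma$); a Chernoff bound gives $\Pr[\text{at least }m_j\text{ large coordinates}]\le(ekp_j/m_j)^{m_j}\le\exp(-m_j\lambda_j/2)$ for large $k$. Choosing $m_j:=\lceil(4C_2 s_j\log k+4k)/\lambda_j\rceil$, the union bound over $\mathcal N_j$ is $\le\exp(C_2 s_j\log k-m_j\lambda_j/2)\le 2^{-2k}$, and the outer union over the $O(\log k)$ big levels together with the spectral-norm failure is $2^{-\Theta(k)}$. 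The key quantitative point — and the only place the hypotheses are used — is that $m_j$ is \emph{also} small: since $s_j\le O(k^{\gamma+\beta}2^j)$ the factor $2^j$ cancels, giving $s_j\log k/\lambda_j\le O(k^{2\gamma}\log^5 k)$ uniformly in $j$ (the same bound holds in the clipped regime $s_j=n$, using $\sigma_j^2\le n\,2^{-2j}k^{-2\beta}$), while $k/\lambda_j\le O(k^{1+\gamma-\beta}2^{-j}\log^4 k)$; summing over the $O(\log k)$ big levels, $\sum_j m_j\le O(k^{2\gamma}\log^6 k+k^{1+\gamma-\beta}\log^4 k)=o(k/\log k)$, because $\beta>\gamma>0$ and $\beta+\gamma<1$ force $2\gamma<1$ and $\beta-\gamma>0$.

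Finally I would close the net argument. Given any admissible $y$, write it as $y''+\sum_{j}y^{(j)}$, approximate each big-level piece $y^{(j)}$ by its net point $\widehat y^{(j)}\in\mathcal N_j$, and bound the $\le L$ approximation errors by $\|A\|_2\sqrt n\,\eta$ each and the truncation and small-level parts as above — all of these sum to $o(1/\log k)$. Hence any coordinate $i$ with $|(AV^\top y)_i|\ge\Omega(1/\log k)$ must satisfy $|(AV^\top\widehat y^{(j)})_i|\ge\delta$ for some big level $j$, and on our event the number of such coordinates (over all big $j$) is at most $\sum_j m_j=O(k/\log k)$. I expect the main obstacle to be precisely this two-sided squeeze on $m_j$: making it large enough for the per-level union bound while keeping $\sum_j m_j=O(k/\log k)$. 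This is exactly what forces the dyadic decomposition (so the Gaussian variance $\sigma_j^2$ shrinks geometrically while the net size grows only geometrically, and the ratio stabilizes at $k^{2\gamma}\log^{O(1)}k$) and what makes the assumption $\beta+\gamma<1$ indispensable.
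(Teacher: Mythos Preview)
Your proposal is correct and follows essentially the same route as the paper: bound $\|A\|_2$ via Lemma~\ref{lem:spectral_of_A}, dyadically decompose $y$ by coordinate magnitude, build an $\ell_\infty$-net on each level, use the Gaussian tail plus a Chernoff bound to control how many coordinates of $AV^\top y^{(j)}$ can exceed a $\Theta(1/\log^2 k)$ threshold, union-bound over the net, handle the tiny-magnitude tail deterministically via $\|A\|_2$, and finally sum over the $O(\log k)$ levels.

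The one genuine difference is how the per-level ``bad coordinate'' budget is chosen. The paper fixes a single threshold $m=k/\log^2 k$ for every level and shows directly that the failure probability $\exp\bigl(-\Theta(2^{j}k^{1-\gamma}/\log^6 k)\bigr)$ dominates the net size $\exp\bigl(O(2^{j}k^{\gamma}\log k)\bigr)$ for each $j$ (this is exactly where $2\gamma<1$ is used), and then observes that $O(\log k)$ levels times $k/\log^2 k$ gives the desired $O(k/\log k)$ count. You instead pick an adaptive $m_j\asymp (s_j\log k+k)/\lambda_j$ to beat the net at level $j$ and then verify $\sum_j m_j=o(k/\log k)$. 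Both work; your version makes the trade-off between net size and tail exponent explicit at each scale, while the paper's uniform choice is a bit shorter. Your side remark about the ``clipped'' regime $s_j=n$ is not needed in the paper's write-up because the bound $s_j\le O(2^{j}k^{\gamma+\beta})$ coming from the $\ell_1$ constraint already holds unconditionally, but it does no harm. A minor slip: your small-level estimate $L\,\|A\|_2\sqrt{n}\,\eta$ has an unnecessary $\sqrt{n}$ (per small level the contribution is at most $\|A\|_2\,\sigma_j<\|A\|_2\,\eta=O(\sqrt r\,\eta)$), but the looser bound is still $o(1/\log k)$ so the argument is unaffected.
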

\begin{proof}
Due to Lemma~\ref{lem:spectral_of_A}, with probability at least $1-2^{-\Theta(r)}$, $\|A\|_2\leq O(\sqrt{r})$, so we can restrict attention to $\|A\|_2\leq O(\sqrt{r})$ in the following proof.

Take any $y \in \mathbb{R}^n$ which has each non-zero coordinate with absolute value at most $1/k^\beta$, 
and write it as $y = \sum_{j=j_0}^{+\infty} y^j$, where the coordinates in $y^j$ have absolute value in the range $[2^{-j-1}, 2^{-j})$, and the supports of different $y^j$ are disjoint. Since each coordinate of $y$ is at most $1/k^\beta$, $2^{-{j_0}} = 1/k^{\beta}$. Since $\|y\|_1 \leq O(k^\gamma)$, the support size of $y^j \in \mathbb{R}^n$ is at most $s_j \leq O(2^{j+1} k^\gamma)$, so it follows that
\begin{align*}
\|y^j\|_2 \leq \sqrt{s_j  \cdot 2^{-2j}} \leq \sqrt{ 2^{-j+1} k^\gamma  }.
\end{align*}
We also know $\| y^j\|_2 \geq \sqrt{ s_j \cdot 2^{-2j-2} }$. Then we can conclude that $y^j$ has 2-norm $\Theta(2^{-j} \sqrt{s_j})$.
Now we state an $\varepsilon$-net for all the possible $y^j$: let $\varepsilon=O(1/(nrk^3))=O(1/(k^{c_1+c_2+3}))$. Let $\mathcal{N}_j\subset \mathbb{R}^n$ be the following:
\begin{align*}
\mathcal{N}_{j}=\left\{p\in\mathbb{R}^n \mid\ \exists q\in\mathbb{Z}^n,\text{~s.t.~}p=\varepsilon q, \|p\|_1\leq O(k^\gamma),\forall i\in [n],\text{~either~} |p_i|\in [2^{-j-1},2^{-j}) \text{~or~}p_i=0\right\}.
\end{align*}
Obviously, for any $y^j$, there exists $p\in\mathcal{N}_j$ such that $\|y^j-p\|_{\infty}\leq \varepsilon=O(1/(k^{c_1+c_2+3}))$, since $n\leq O(k^{c_2})$, $\|y^j-p\|_2\leq \|y^j-p\|_1\leq n\|y^j-p\|_{\infty}\leq O(1/k^{c_1+3})$. Now let us consider the size of $\mathcal{N}_j$. If $p\in \mathcal{N}_j$, the choice of one coordinate of $p$ is at most $2/\varepsilon+1$. And since $\|p\|_1\leq O(k^\gamma)$ and each coordinate of $p$ has absolute value at least $2^{-j-1}$, the number of supports of $p$ is at most $O(2^{j+1}k^\gamma)$. Therefore,
\begin{align*}
|\mathcal{N}_j|&\leq (n+1)^{O(2^{j+1}k^\gamma)}\cdot (2/\varepsilon+1)^{O(2^{j+1}k^\gamma)}\\
&\leq  2^{O(2^{j+1}k^\gamma(\log n+\log(2/\varepsilon+1))}\\
&\leq  2^{O(2^{j+1}k^\gamma\log k)}.
\end{align*}
The last inequality follows from $n\leq O(k^{c_2}),r\leq O(k^{c_1}),\varepsilon=O(1/(rnk^3))$.




We define an event $\mathcal{E}(y^j)$ to be: $A V^\top y^j$ has $k/\log^2 k$ coordinates which are each at least $2/\log^2 k$ in absolute value.
Now, we want to show,

\begin{claim}\label{cla:single_prob}
For any $j\geq j_0$, for a fixed $y^j\in\mathbb{R}^n$,
\begin{align*}
\Pr\biggl[\mathcal{E}(y^j)\mathrm{~happens~}\biggr]\leq 2^k e^{-\Theta(k/(\|y^j\|_2^2 \log^{6} k))}\leq e^{-\Theta(2^{j-1}k^{1-\gamma}/\log^6 k ) }.
\end{align*}
\end{claim}

\begin{proof}
We let $p$ be the probability that the absolute value of a single coordinate of $A V^\top y^j$ is at least $1/\log^2 k$. Notice that each coordinate of $A V^\top y^j$ is i.i.d. Gaussian $N(0,\| V^\top y^j \|_2^2)$
and because for any Gaussian random variable $g$, $\Pr[ |g| \geq t] \leq \exp(-\Theta(t^2/\sigma^2))$, then $p\leq \exp(- 1/(\| V^\top y^j\|_2^2 \log^4k))$, by plugging $\sigma^2 = \|V^\top y^j\|_2^2$ and $t =1/\log^2 k$. So the probability
$A V^\top y^j$  has $k/\log^2 k$ coordinates which are each at least $1/\log^2 k$ in absolute value is,
\begin{align*}
\sum_{i=k/\log^2 k}^{k} p^{i} (1-p)^{k-i} {k \choose i} &\leq ~ \sum_{i=k/\log^{2} k}^{k} p^i {k \choose {k/2}} \leq 2p^{k/\log^{2} k } 2^k \\
&\leq ~ e^{-\Theta(k t^2 /(\sigma^2 \log^{2} k) )} 2^k & \text{~by~} p\leq \exp(- \Theta( t^2/\sigma^2) )\\
&\leq ~ e^{-\Theta(k  /(\| V^\top y^j\|_2^2 \log^{6} k) )} 2^k \\
&\leq ~ e^{-\Theta(k  /(\|  y^j\|_2^2 \log^{6} k) )} 2^k &\text{~by~} \| V^\top y^j\|_2^2\leq \| V\|_2^2\| y^j\|_2^2\leq \|y^j\|_2^2 \\
&\leq ~ e^{-\Theta(k   /(2^{-2j}s_j  \log^{6} k) )} 2^k & \text{~by~} \| y^j\|_2^2 \leq 2^{-2j}s_j\\
&\leq ~ e^{-\Theta(k   /(2^{-j+1}k^\gamma  \log^{6} k) )} 2^k & \text{~by~} s_j\leq O(2^{j+1}k^\gamma)\\
&= ~ e^{-\Theta(k   /(2^{-j+1}k^\gamma  \log^{6} k) )} \\
&\leq ~ e^{-\Theta(2^{j-1}k^{1-\gamma}/\log^6 k ) } .\\
\end{align*}
The first equality follows from $2^{-j+1}k^\gamma  \log^{6} k\leq 2^{-j_0+1}k^\gamma  \log^{6} k\leq 2k^{\gamma-\beta}\log^{6} k=o(1)$.
\end{proof}



For $j_1 = \lceil100(c_1+c_2+1)\log k\rceil =\Theta(\log k)$, consider $j \in [j_1,\infty)$. We have $\|\sum_{j=j_1}^\infty y^j\|_2$ is at most $\Theta(2^{-j_1} \sqrt{n}) \leq 1/k^{100(1+c_1)}$, and so 
\begin{align*}
\|AV^\top \sum_{j=j_1}^{\infty}y^j\|_1 \leq \sqrt{k} \|AV^\top\sum_{j=j_1}^{\infty} y^j\|_2 \leq \sqrt{k} \|A\|_2\|V\|_2 \|\sum_{j=j_1}^{\infty}y^j\|_2 \leq \sqrt{k}\cdot \sqrt{r}\cdot 1/k^{100(1+c_1)}\leq 1/k^{50}.
\end{align*}
The last inequality follows from $r=O(k^{c_1})$.
So the contribution of $y^j$ to $\|AV^\top y^j\|_1$ for all $j \geq j_1$ is at most $1/k^{50}$. Thus, if we only consider those $j$ which contribute, i.e., $j_0\leq j\leq j_1$, we have $O(\log k)$ values of $j$. Then we can only construct $O(\log k)$ nets $\mathcal{N}_{j_0},\mathcal{N}_{j_0+1},\cdots,\mathcal{N}_{j_1}$. Since the size of net $\mathcal{N}_j$ is $2^{\Theta(2^{j+1}k^\gamma \log k)}$, by combining Claim~\ref{cla:single_prob} and taking a union bound, we have
\begin{align*}
\Pr \left[ \exists y^j\in \overset{j_1}{\underset{j=j_0}{\bigcup}} \mathcal{N}_j,\ \mathcal{E}(y^j)\text{~happens~} \right] & \leq \sum_{j=j_0}^{j_1} 2^{\Theta(2^{j+1}k^\gamma \log k)} \cdot e^{-\Theta(2^{j-1}k^{1-\gamma}/\log^6 k ) } \\
&\leq e^{-\Theta(2^{j_0-1}k^{1-\gamma}/\log^6 k )}\\
&\leq e^{-\Theta(k^{1+\beta-\gamma}/\log^6 k )}\\
&\leq 2^{-\Theta(k)} .\\
\end{align*}
The second inequality follows since $k^{\gamma}=o(k^{1-\gamma})$. The third inequality follows since $2^{j_0}\geq k^\beta$. The fourth inequality follows since $1+\beta-\gamma>1$.

Then, $\forall j_0\leq j\leq j_1,\forall \wt{y}^j\not \in \mathcal{N}_j$, there exists a vector $\wh{y}^j$ in $\mathcal{N}_j$, such that $\| \wh{y}^j - \wt{y}^j \|_2 \leq 1/k^{3+c_1}$. We can upper bound the $\ell_{\infty}$ norm of $AV^\top \wh{y}^j - AV^\top \wt{y}^j$ in the following sense,
\begin{align*}
\|AV^\top \wh{y}^j - AV^\top \wt{y}^j\|_{\infty} & \leq \|AV^\top \wh{y}^j - AV^\top \wt{y}^j \|_2 & \text{~by~} \| \cdot \|_{\infty} \leq \| \cdot\|_2 \\
& \leq \| A \|_2 \cdot \| \wh{y}^j - \wt{y}^j\|_2 \\
& \leq \sqrt{r}/k^{3+c_1} &\text{~by~Claim~\ref{lem:spectral_of_A}~and~}\| \wh{y}^j - \wt{y}^j \|_2 \leq 1/k^{3+c_1}  \\
& =1/k^2 &\text{~by~} r\leq O(k^{c_1}) .
\end{align*}
We let $Y=\{y\in\mathbb{R}^n\ \mid\ \|y\|_1\leq O(k^\gamma)\text{~and~each~coordinate~of~}y\leq 1/k^\beta\}$. Since $1/k^2<1/\log^2 k$ we can conclude that,
\begin{align}\label{eq:tail_for_flat_vectors}
\Pr\biggl[\exists y\in Y,j\geq j_0,\mathcal{E}(y^j)\text{~happens~}\biggr]\leq 2^{-\Theta(k)} .
\end{align}

Recalling $y = \sum_j y^j$, by Equation~(\ref{eq:tail_for_flat_vectors}), for any $y$, with probability at most $2^{-\Theta(k)}$, there are at most $ O(\log k) \cdot  k/\log^2 k \leq O(k/\log k)$ coordinates for which $AV^\top \sum_{j=j_0}^{j_1} y^j$ (the same statement also holds for $AV^\top \sum_{j=j_0}^{\infty} y^j=AV^\top y$ since we argued that there is negligible ``contribution'' for those $j>j_1$) is at least $O(\log k) /\log^2 k =O(1/\log k)$ on that coordinate. Summarizing,
\begin{align*}
\Pr\biggl[\wh{\mathcal{E}}(A,V,\beta,\gamma)\biggr]\geq 1-2^{-\Theta(k)}.
\end{align*}
\end{proof}

\begin{lemma}\label{lem:Gaussian_can_not_fit_Gaussian}
For any $t,k\geq 1$, and any constants $c_2\geq c_1\geq 1$, let $k\leq r=O(k^{c_1}),r\leq n= O(k^{c_2})$, let $V\in\mathbb{R}^{n\times r}$ be a matrix with orthonormal columns, and let each entry of $A\in\mathbb{R}^{k\times r},v_1,v_2,\cdots,v_t\in\mathbb{R}^{k}$ be an i.i.d. Gaussian $N(0,1)$ random variable. For a constant $\alpha\in(0,0.5)$ which can be arbitrarily small, if $\wh{\mathcal{E}}(A,V,0.5+\alpha/2,0.5-\alpha)$ happens, then with probability at least $1-2^{-\Theta(tk)}$, there are at least $\lceil t/10\rceil$ such $j\in[t]$ that $\forall x\in\mathbb{R}^r$ either $\|Ax-v_j\|_1\geq\Omega(k^{0.5-\alpha})$ or $\|Vx\|_1\geq \Omega(k^{0.5-\alpha})$ holds.
\end{lemma}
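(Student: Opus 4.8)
The plan is to fix a ``bad'' column $v_j$ and show that with high probability no $x\in\mathbb{R}^r$ can make both $\|Ax-v_j\|_1$ and $\|Vx\|_1$ small simultaneously; then, since the $v_j$ are independent, a Chernoff bound over $j\in[t]$ gives that at least $\lceil t/10\rceil$ of the columns are bad, losing only a $2^{-\Theta(tk)}$ failure probability. The crux for a single $j$ is a net argument on the candidate vector $w := V^\top x$ (rather than on $x$ directly). Note $\|w\|_2 = \|x\|_2$ since $V$ has orthonormal columns, and $Ax = AV^\top w \cdot$(abusing slightly --- more precisely I work with $AV^\top$ applied to the full ambient vector). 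Set $\beta = 0.5+\alpha/2$, $\gamma = 0.5-\alpha$, so $\beta>\gamma>0$ and $\beta+\gamma = 1-\alpha/2 < 1$, matching the hypotheses of Lemma~\ref{lem:useful_property}; we condition on the event $\wh{\mathcal{E}}(A,V,\beta,\gamma)$, which holds with probability $1-2^{-\Theta(k)}$.

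\textbf{Key steps for a single column.} First I would assume for contradiction that $\|Vx\|_1 \le c\, k^{0.5-\alpha}$ and $\|AV^\top y\|$-type quantity (where $y = Vx \in \mathbb{R}^n$) is also at most $c\,k^{0.5-\alpha}$ away from $v_j$, for a small absolute constant $c$ to be chosen. Decompose $y = Vx$ into its ``large-coordinate'' part $y^0$ (coordinates of absolute value at least $1/k^{0.5+\alpha/2}$) and the remainder $y_{\mathrm{small}} = \sum_{l\ge 1} y^l$. Since $\|y\|_1 \le c\,k^{0.5-\alpha}$, the large part $y^0$ has support size at most $O(k^{1-\alpha/2})$, but more importantly $y^0$ lives in a low-dimensional coordinate subspace, so a union bound over the choice of support plus a net on that subspace is affordable: $A (V^\top\text{-image of }y^0)$, being a Gaussian vector with covariance $\|y^0\|_2^2 I_k$ (up to the shrinkage from $V$, which only helps), fits $v_j$ on $\Omega(k)$ coordinates only with probability $2^{-\Theta(k)}$, because $v_j$ has $\Omega(k)$ coordinates of absolute value $\ge \frac{1}{100}$ (a Chernoff bound, $1-2^{-\Theta(k)}$) and a Gaussian coordinate lands in a fixed interval of width $1/\log k$ at distance $\ge 1/100$ from the origin only with probability bounded away from $1$. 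For the small part $y_{\mathrm{small}}$, I invoke the conditioned event $\wh{\mathcal{E}}$ directly: it tells us $A V^\top y_{\mathrm{small}}$ has at most $O(k/\log k)$ coordinates of absolute value $\ge \Omega(1/\log k)$. Combining: on $\ge k/2$ coordinates we need $|(AV^\top y - v_j)_i|$ small; on $\ge 2k/3$ of them $|v_{j,i}|\ge 1/100$; on all but $O(k/\log k)$ of them the small part contributes $< \Omega(1/\log k)$; so on $\Omega(k)$ coordinates the part $AV^\top y^0$ alone must lie in a narrow interval bounded away from $0$ --- contradicting the Gaussian bound of the previous sentence, except with probability $2^{-\Theta(k)}$ even after union-bounding over the small net for $y^0$.

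\textbf{Assembling and the main obstacle.} Given the single-column bound, which holds with probability $1 - 2^{-\Theta(k)}$ over the draw of $v_j$ (conditioned on $A,V$ through the fixed event $\wh{\mathcal{E}}$), I would let $Z_j$ be the indicator that column $j$ is \emph{not} bad; the $Z_j$ are independent with $\mathbf{E}[Z_j]\le 2^{-\Theta(k)}$, so $\Pr[\sum_j Z_j > \lceil 9t/10\rceil] \le 2^{-\Theta(tk)}$ by a Chernoff bound, giving at least $\lceil t/10\rceil$ bad columns as desired. Adding the $2^{-\Theta(k)}$ from Lemma~\ref{lem:useful_property} for $\wh{\mathcal{E}}$ (absorbed into $2^{-\Theta(tk)}$ since $t\ge 1$) completes the argument. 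The step I expect to be the main obstacle is getting the quantitative bookkeeping of the net for $y^0$ right: I must choose the net granularity fine enough that discretization error $\|AV^\top(y^0 - \widehat{y^0})\|_\infty$ is $\ll 1/\log k$ (using $\|A\|_2 = O(\sqrt r)$ from $\wh{\mathcal{E}}$), yet keep the net size $2^{o(k)}$ so that the $2^{-\Theta(k)}$ per-point failure probability survives the union bound --- this is exactly the delicate ``small support $\Rightarrow$ small net'' trade-off flagged in the technical overview, and it is where the constants $\alpha$, $\beta$, $\gamma$ must be balanced. A secondary subtlety is that $V^\top$ can only shrink $\ell_2$ norms, so the Gaussian coordinates of $AV^\top y^0$ have variance \emph{at most} $\|y^0\|_2^2$; I need a matching lower bound on $\|V^\top y^0\|_2$, or else a slightly different argument, to ensure the anticoncentration (Lemma~\ref{lem:half_gaussian}) still forces the fit to fail --- handled by noting that if $\|V^\top y\|_2$ is tiny then $\|v_j\|_2$ being $\Theta(\sqrt k)$ already makes $\|AV^\top y - v_j\|_1$ large outright.
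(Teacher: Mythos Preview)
Your proposal is correct and follows essentially the same approach as the paper: substitute $y=Vx$, split $y$ at the threshold $1/k^{\beta}$ into a small-support ``head'' $y_0$ and a flat ``tail'', use the conditioned event $\wh{\mathcal{E}}$ to control $AV^\top$ on the tail, net over $y_0$ (size $2^{O(k^{\beta+\gamma}\log k)}=2^{o(k)}$), apply Gaussian anticoncentration coordinatewise against $v_j$, union-bound, and finish with independence over $j\in[t]$. Two small corrections: the per-coordinate probability that $(AV^\top y_0)_i$ lands in the narrow target interval must be bounded by a small constant (the paper gets $O(1/\log k)$ via the universal bound $\sup_\sigma \frac{1}{\sigma\sqrt{2\pi}}e^{-c^2/(2\sigma^2)}=O(1/c)$ at $c=1/200$), not merely ``bounded away from $1$''; and you need not add the $2^{-\Theta(k)}$ from Lemma~\ref{lem:useful_property}, since the statement already conditions on $\wh{\mathcal{E}}$.
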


\begin{proof}
For convenience, we define $\gamma=0.5-\alpha$ which can be an arbitrary constant in $(0,0.5)$. We let constant $\beta=0.5+\alpha/2$. Then we have $\beta+\gamma<1$ and $\beta>\gamma$. Let $v\in\mathbb{R}^k$ be a random vector with each entry drawn from i.i.d. Gaussian $N(0,1)$.
Suppose $\wh{\mathcal{E}}(A,V,\beta,\gamma)$ happens.


For any $x\in\mathbb{R}^r$, we can find a $y\in\mathbb{R}^n$ such that $y=Vx$. Since $V\in\mathbb{R}^{n\times r}$ has orthonormal columns, $x=V^\top Vx=V^\top y$. Then our goal is to argue that with high probability if $\|AV^\top y-v\|_1\leq O(k^\gamma)$, then $\|y\|_1>\Omega(k^\gamma)$. Take any $y \in \mathbb{R}^n$ which can be expressed as $Vx$, and decompose it as $y = y_0 + y_1$, where $y_0\in\mathbb{R}^n$ and $y_1\in\mathbb{R}^n$ have disjoint supports, $y_0$ has each coordinate with absolute value greater than $1/k^\beta$, and $y_1$ has each coordinate with absolute value at most $1/k^\beta$. 

Now we create an $\varepsilon$-net for $y_0$: let $\varepsilon=O(1/(rnk^3))=O(1/k^{c_1+c_2+3})$, we denote $\mathcal{N}\subset\mathbb{R}^n$ as follows:
\begin{align*}
\mathcal{N}=\left\{p\in\mathbb{R}^n \mid\ \exists q\in\mathbb{Z}^n,\text{~s.t.~}p=\varepsilon q, \|p\|_1\leq O(k^\gamma),\forall i\in [n],\text{~either~} |p_i|> 1/k^\beta \text{~or~}p_i=0\right\}.
\end{align*}
Obviously, for any $y_0$, there exists $p\in\mathcal{N}$ such that $\|y_0-p\|_{\infty}\leq \varepsilon=O(1/(k^{c_1+c_2+3}))$, since $n\leq O(k^{c_2})$, $\|y_0-p\|_2\leq \|y_0-p\|_1\leq n\|y_0-p\|_{\infty}\leq O(1/k^{c_1+3})$. Now let us consider the size of $\mathcal{N}$. If $p\in \mathcal{N}$, the number of choices of one coordinate of $p$ is at most $O(k^\gamma/\varepsilon)$. And since $\|p\|_1\leq O(k^\gamma)$ and each coordinate of $p$ has absolute value at least $1/k^\beta$, the number of supports of $p$ is at most $O(k^{\gamma+\beta})$. Therefore,
\begin{align*}
|\mathcal{N}|&\leq (n+1)^{O(k^{\gamma+\beta})}\cdot O(k^\gamma/\varepsilon)^{O(k^{\gamma+\beta})}\\
&\leq  2^{O(k^{\gamma+\beta}\log k)} .
\end{align*}
The last inequality follows from $n\leq O(k^{c_2}),r\leq O(k^{c_1}),\varepsilon=O(1/(rnk^3))$.





For $y_0\in\mathbb{R}^n$, we define event $\mathcal{E}_1(y_0)$ as: $\exists \text{~valid~} y_1\in\mathbb{R}^n$, $\|AV^\top y-v\|_1\leq O(k^\gamma)$, where $y=y_0+y_1$ is the decomposition of a possible $y\in\mathbb{R}^n$. Here $y_1$ is valid means that there exists $y\in\mathbb{R}^n$ such that $\|y\|_1\leq O(k^\gamma)$ and the decomposition of $y$ is $y=y_0+y_1$.
We define event $\mathcal{E}_2(y_0)$ as: $\exists \text{~valid~}y_1$, the absolute value of $AV^\top y-v$ is at most $O(1/k^\gamma)$ for at least $k-O(k^{2\gamma})$ coordinates $i$ in $[k]$ .
We define event $\mathcal{E}_3(y_0)$ as: at least $k- O(k^{2\gamma}) -O (k/\log k)$ coordinates of $Ay_0 - v$ have absolute value at most $O(1/\log k)$.

\begin{claim}\label{cla:goal_is_to_bound_E3}
For $y_0\in\mathbb{R}^n$,
\begin{align*}
\Pr[\mathcal{E}_3(y_0)\mathrm{~happens~}]\geq\Pr[\mathcal{E}_2(y_0)\mathrm{~happens~}]\geq \Pr[\mathcal{E}_1(y_0)\mathrm{~happens~}] .
\end{align*}
\end{claim}
\begin{proof}
If $\mathcal{E}_1(y_0)$ happens, then there exists a valid $y_1\in\mathbb{R}^n$ such that $y=y_0+y_1$ and $\|AV^\top y-v\|_1\leq O(k^\gamma)$. For this $y$, there are at least $k-O(1/k^{2\gamma})$ coordinates of $AV^\top y-v$ with absolute value at most $O(1/k^\gamma)$. Otherwise, $\|AV^\top y-v\|_1>\Omega(k^\gamma)$. Thus, $\mathcal{E}_1(y_0)$ implies $\mathcal{E}_2(y_0)$.


Now we want to show $\mathcal{E}_2(y_0)$ implies $\mathcal{E}_3(y_0)$. We suppose $\mathcal{E}_2(y_0)$ happens. Then there is a valid $y_1$ such that there are at least $k-O(1/k^{2\gamma})$ coordinates of $AV^\top y-v$ with absolute value at most $O(1/k^\gamma)$, where $y=y_0+ y_1$. Recall that the event $\wh{\mathcal{E}}(A,V,\beta,\gamma)$ happens, for any valid $y_1$ there are at most $O(k/\log k)$ coordinates of $AV^\top y_1$ are at least $\Omega(1/\log k)$. 
Therefore,
\begin{align*}
AV^\top y_0 - v =  \underbrace{AV^\top y - v}_{ \substack{ \geq k-O(k^{2\gamma}) \text{~coordinates~} \\ \text{each~}\leq O(1/k^\gamma) }} - \underbrace{ AV^\top y_1 }_{ \substack{\leq O(k/\log k) \text{~coordinates~} \\ \text{each} \geq \Omega(1/\log k)} } .
\end{align*}
 Therefore, at least $k- O(k^{2\gamma}) -O (k/\log k)$ coordinates of $Ay_0 - v$ in absolute value is at most $O(1/\log k)+O(1/k^\gamma)=O(1/\log k)$
\end{proof}
\begin{claim}
Define event $\mathcal{F}_1$ to be the situation for which there exists $1/2$ of the coordinates of $v\in \mathbb{R}^k$ that are at least $1/100$. The probability this event $\mathcal{F}_1$ holds is at least $1-2^{-\Theta(k)}$.
\end{claim}
\begin{proof}
Note that $\overline{\mathcal{F}}_1$ means there exist $k/2$ of the coordinates of $v\in \mathbb{R}^k$ which are at most $1/100$. Using Lemma~\ref{lem:half_gaussian}, for each single coordinate the probability it is smaller than $1/100$ is $1/200$. Then the probability more than half the coordinates are no more than $1/100$ is

\begin{align*}
\sum_{i=k/2}^{k} p^{i} (1-p)^{k-i} {k \choose i} &\leq ~ \sum_{i=k/2 }^{k} p^i {k \choose {k/2}} \leq 2 p^{k/2 } 2^k  \leq (1/200)^{\Theta(k)} = 2^{-\Theta(k)} .
\end{align*}
\end{proof}

Conditioned on $\mathcal{F}_1$ happening, if $\mathcal{E}_3(y_0)$ happens, then due to the pigeonhole principle, there are at least $\frac{k}{2}-O(k^{2\gamma})-O(k/\log k)$ coordinates of $AV^\top y_0-v$ that are at most $O(1/\log k)$ and the corresponding coordinate of $v$ is larger than $1/100$. Now, let us look at the probability on a single coordinate of $AV^\top y_0-v$.

\begin{claim}\label{cla:single_success}
If the $i^{\text{th}}$ coordinate of $v\in\mathbb{R}^k$ is at least $1/100$. Then $\Pr[| (AV^\top y_0)_i - (v)_i | \leq O(1/\log k) ]\leq  O(1/\log k)$.
\end{claim}

\begin{proof}
 Since $|(v)_i|>1/100$ and $v$ is independent from $A\in\mathbb{R}^{k\times r}$, for $0<\eta<1/100$, $\Pr[ | (AV^\top y_0)_i - (v)_i | \leq \eta ]$ is always upper bounded by $\Pr[ (A x_0)_i \in [ 1/100- \eta, 1/100 + \eta ]$. Thus, it suffices to prove an upper bound for $\Pr[ (AV^\top x_0)_i \in [ 1/100- \eta, 1/100 + \eta ]$. Let $f(x)$ be the pdf of $N(0,\sigma^2)$, where $\sigma^2 = \| V^\top y_0\|_2^2,V\in\mathbb{R}^{n\times r},y_0\in\mathbb{R}^n $. Then
\begin{align*}
 & ~\Pr[ (AV^\top y_0)_i \in [ 1/100- \eta, 1/100 + \eta] ] \\
= & ~ \int_{1/100- \eta }^{ 1/100 + \eta } f(x) \mathrm{d} x \\
\leq & ~ f(1/200) \int_{1/100- \eta }^{ 1/100 + \eta }  \mathrm{d} x \\
\leq & O(\eta) .
\end{align*}
where the last step follows since $f(1/200)\leq 200$. We set $\eta=O(1/\log k)$, then we get the statement.
\end{proof}

\begin{claim}
Conditioned on $\mathcal{F}_1$, for a fixed $y_0\in \mathbb{R}^n$, with probability at most $2^{-\Theta(k)}$, there are at least $\frac{k}{10}$ coordinates of $AV^\top y_0-v\in\mathbb{R}^{k}$ which are at most $O(1/\log k)$ and the corresponding coordinate of $v$ is larger than $1/100$.

\end{claim}

\begin{proof}
We look at the coordinate $i\in[k]$ which has $|(v)_i|>1/100$. The probability $\|(AV^\top y_0)_i-(v)_i\|\leq O(1/\log k)$ is at most $O(1/\log k)$. Due to the independence between different coordinates of $v$, since there are at least $k/2$ coordinates of $v$ satisfying that they have absolute value greater than $1/100$, with probability at most $2^{-\Theta(k)}$, there are at least $\frac15\cdot\frac{k}2=k/10$ coordinates of $AV^\top y_0-v$ which are at most $O(1/\log k)$.

\end{proof}

Because $\mathcal{E}_3(y_0)$ implies the event described in the above claim when conditioning on $\mathcal{F}_1$, for a fixed $y_0$, the probability that $\mathcal{E}_3(y_0)$ holds is at most $2^{-\Theta(k)}$. Since $\gamma+\beta<1$, the $|\mathcal{N}|\leq 2^{k^{o(1)}}$, we can take a union bound over the $\mathcal{N}$:
\begin{align*}
\Pr[\exists y_0\in\mathcal{N},\mathcal{E}_1(y_0)\text{~happens~}]\leq 2^{-\Theta(k)} .
\end{align*}
It means that with probability at least $1-2^{-\Theta(k)}$, for any $y=y_0+y_1$ with $y_0\in\mathcal{N}$, $\|AV^\top y-v\|_1>\Omega(k^\gamma)$.
Let $\wt{y}=\wt{y}_0+\sum\wt{y}^j$, where $\wt{y}_0\not \in \mathcal{N}$. We can find $\wh{y}_0\in \mathcal{N}$ which is the closest to $\wt{y}_0$. Denote $\wh{y}=\wh{y}+\sum\wt{y}^j$. Then,
\begin{align*}
\|AV^\top \wh{y}-v\|_1 &\leq  \|AV^\top \wt{y}-v\|_1 + \|AV^\top \wt{y}-AV^\top \wh{y}\|_1 & \text{~by~triangle~inequality}\\
&\leq  \|AV^\top \wt{y}-v\|_1 + \sqrt{k} \|AV^\top \wt{y}-AV^\top \wh{y}\|_2 & \text{~by~} \|\cdot \|_1 \leq \sqrt{\text{dim}} \|\cdot \|_2 \\
&\leq  \|AV^\top \wt{y}-v\|_1 + \sqrt{k} \|A \|_2 \|\wt{y}-\wh{y}\|_2  \\
&\leq \|AV^\top \wt{y}-v\|_1 + \sqrt{k} \cdot \sqrt{r} \cdot \|\wt{y}_0-\wh{y}_0\|_2 & \text{~by~}\| A\|_2\leq \sqrt{r}\\
&\leq \|AV^\top \wt{y}-v\|_1 + \sqrt{k} \cdot \sqrt{r} \cdot \varepsilon & \text{~by~} \|\wt{y}_0-\wh{y}_0\|_2 \leq \varepsilon \\
&=\|AV^\top \wt{y}-v\|_1 +\sqrt{k} \cdot \sqrt{r}\cdot 1/k^{c_1+3} & \text{~by~} \varepsilon' =1/k^{c_1+3} \\
&= \|AV^\top \wt{y}-v\|_1 + 1/k . & \text{~by~} r=O(k^{c_1})
\end{align*}
and so if $\|AV^\top \wt{y}-v\|_1$ is at most $O(k^\gamma)$ then $\|A\wh{y}-v\|_1$ is at most $O(k^\gamma)$.

 For $j\in[t]$, we now use notation $\mathcal{E}_4(v_j)$ to denote the event: $\exists y_0\in \mathbb{R}^n$ with $\|y_0\|_1\leq O(k^\gamma)$ and each non-zero coordinate of $y_0$ is greater than $1/k^\beta$, at least $k- O(k^{2\gamma}) -O (k/\log k)$ coordinates of $AV^\top y_0 - v_j$ in absolute value are at most $O(1/\log k)$. Based on the previous argument, conditioned on $\wh{\mathcal{E}}(A,V,\beta,\gamma)$,
 \begin{align*}
 \Pr[\mathcal{E}_4(v_j)]\leq 2^{-\Theta(k)} .
 \end{align*}
Also notice that, conditioned on $\wh{\mathcal{E}}(A,V,\beta,\gamma)$, $\forall j\in[t],\ \mathcal{E}_4(v_j)$ are independent. Thus, conditioned on $\wh{\mathcal{E}}(A,V,\beta,\gamma)$, due to the Chernoff bound, the probability that there are $\lceil t/10\rceil$ such $j$ that $\mathcal{E}_4(v_j)$ happens is at most $2^{-\Theta(kt)}$. We define $\mathcal{E}_5(v_j)$ to be the event: $\exists y\in\mathbb{R}^n,\|AV^\top y - v_j\|\leq O(k^\gamma)$ with $\|y\|_1\leq O(k^\gamma)$. Similar to the proof of Claim~\ref{cla:goal_is_to_bound_E3}, conditioned on $\wh{\mathcal{E}}(A,V,\beta,\gamma)$, $\mathcal{E}_5(v_j)$ implies $\mathcal{E}_4(v_j)$. Thus, conditioned on $\wh{\mathcal{E}}(A,V,\beta,\gamma)$, the probability that there are $\lceil t/10\rceil$ such $j$ that $\mathcal{E}_5(v_j)$ happens is at most $2^{-\Theta(tk)}$. Then, we complete the proof.

\end{proof}

\begin{theorem}\label{thm:hard_distribution}
For any $k\geq 1$, and any constants $c_1,c_2$ which satisfy $c_2-2>c_1>1$, let $r=\Theta(k^{c_1}),n=\Theta(k^{c_2})$, and let ${\cal A}(k,n)$ denote a distribution over $n\times (k+n)$ matrices where each entry of the first $n\times k$ matrix is i.i.d. Gaussian $N(0,1)$ and the next $n\times n$ matrix is an identity matrix. For any fixed $r\times n$ matrix $S$ and a random matrix $\wh{A} \sim {\cal A}(k,n)$, with probability at least $1-O(k^{1+\frac{c_1-c_2}{2}})-2^{-\Theta(k)}$, there is no algorithm that is able to output a matrix $B\in \mathbb{R}^{n\times r}$ such that
\begin{align*}
\| BS\wh{A} - \wh{A} \|_1 \leq O(k^{0.5-\varepsilon}) \underset{\rank-k~A'}{\min} \| A' - \wh{A} \|_1 ,
\end{align*}
where $\varepsilon>0$ is a constant which can be arbitrarily small.
\end{theorem}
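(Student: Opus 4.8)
The plan is to deduce Theorem~\ref{thm:hard_distribution} from the structural lemmas already established, principally Lemma~\ref{lem:Gaussian_can_not_fit_Gaussian}, together with Lemma~\ref{lem:useful_property} and the total-variation bound of Lemma~\ref{lem:tot_var_results}. First I would set up the geometry: write the fixed sketch $S \in \mathbb{R}^{r \times n}$ in its full SVD $S = U\Sigma W^\top$, so that every row of $S\wh{A}$ lies in the row span of the top $r$ rows of $W^\top \wh{A}$; call $V \in \mathbb{R}^{n \times r}$ the matrix of those orthonormal columns (the relevant right singular vectors). Since the first $k$ columns of $\wh{A}$ are i.i.d.\ Gaussian and hence rotationally invariant, $V^\top \wh{A}$ has its first $k$ columns i.i.d.\ Gaussian (by Lemma~\ref{lem:equivalent_distribution}) and its last $n$ columns equal to $V^\top$. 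Approximating a row $\wh{A}^i$ by a vector in the row span of $BS\wh{A}$ amounts to approximating the first $k$ coordinates $v_i = (\wh{A}^i)_{1:k}$ by $x^\top (V^\top \wh{A})_{1:k}$ while paying $\|x^\top V^\top\|_1 = \|Vx\|_1$ on the identity block. Because the optimal rank-$k$ cost is at most $n$ (fit the first $k$ columns exactly), it suffices to show that for a constant fraction of rows $i$, \emph{every} choice of $x$ forces either $\|(V^\top \wh{A})_{1:k}^\top x - v_i\|_1 \gtrsim k^{0.5-\varepsilon}$ or $\|Vx\|_1 \gtrsim k^{0.5-\varepsilon}$, since then the total cost is $\Omega(n \cdot k^{0.5-\varepsilon})$ over $\Omega(n)$ bad rows, versus $\OPT \le n$.

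The core of the argument is then exactly Lemma~\ref{lem:Gaussian_can_not_fit_Gaussian} applied with the matrix $A := (V^\top \wh{A})_{1:k} \in \mathbb{R}^{k \times r}$ of i.i.d.\ Gaussians, the orthonormal $V$, and the vectors $v_j$ being the first-$k$-coordinate restrictions of $t = \Theta(n)$ typical rows of $\wh{A}$. Two issues make this not immediate, and handling them is where the real work lies. First, $A$ and the $v_j$ are \emph{not} independent: both are functions of $\wh{A}$. I would handle this by observing, via Lemma~\ref{lem:tot_var_results}, that zeroing out one row of $\wh{A}$ changes the joint distribution of $(V^\top \wh{A}, \wh{A}^i)$ by total variation at most $O(k\|V^i\|_2) + 2^{-\Theta(k)}$; since $\sum_i \|V^i\|_2^2 = r$ and $n = \Theta(k^{c_2})$ with $c_2 - 2 > c_1$, for most rows $\|V^i\|_2^2$ is tiny (of order $r/n = \Theta(k^{c_1-c_2})$), so $\|V^i\|_2 = O(k^{(c_1-c_2)/2})$, and $k\|V^i\|_2 = O(k^{1 + (c_1-c_2)/2})$, which is the $O(k^{1 + (c_1-c_2)/2})$ failure term in the theorem statement. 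Thus for a typical row $i$, the pair $(A, v_i)$ is statistically close to $A$ being independent of a fresh $N(0,I_k)$ vector, which is the setting of Lemma~\ref{lem:Gaussian_can_not_fit_Gaussian}. Second, $V^\top \wh{A}$ (restricted to its first $k$ columns) already \emph{is} i.i.d.\ Gaussian by Lemma~\ref{lem:equivalent_distribution}, so no rotation-conditioning loss is incurred there; and $\wh{\mathcal{E}}(A,V,0.5+\varepsilon/2,0.5-\varepsilon)$ holds with probability $1 - 2^{-\Theta(k)}$ by Lemma~\ref{lem:useful_property}, using $c_1, c_2 \ge 1$ and $\beta = 0.5 + \varepsilon/2$, $\gamma = 0.5 - \varepsilon$ satisfying $\beta > \gamma$, $\beta + \gamma < 1$.

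Putting it together: condition on $\wh{\mathcal{E}}(A,V,0.5+\varepsilon/2,0.5-\varepsilon)$ (probability $1 - 2^{-\Theta(k)}$ by Lemma~\ref{lem:useful_property}). Among the $n$ rows, all but $O(k^{c_1})$ of them have $\|V^i\|_2^2 \le 1/2$ (since $\sum_i \|V^i\|_2^2 = r$), so restrict to the typical rows and set $t = \Theta(n)$. By Lemma~\ref{lem:Gaussian_can_not_fit_Gaussian}, with probability $1 - 2^{-\Theta(tk)}$ there are at least $\lceil t/10 \rceil$ indices $j$ such that for every $x \in \mathbb{R}^r$, $\|Ax - v_j\|_1 \ge \Omega(k^{0.5-\varepsilon})$ or $\|Vx\|_1 \ge \Omega(k^{0.5-\varepsilon})$. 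For each such row, no matter how $B$ is chosen, the cost contribution is $\Omega(k^{0.5-\varepsilon})$; summing over $\Omega(t) = \Omega(n)$ rows gives $\|BS\wh{A} - \wh{A}\|_1 \ge \Omega(n \cdot k^{0.5-\varepsilon}) \ge \Omega(k^{0.5-\varepsilon}) \cdot \OPT$, contradicting the claimed approximation. The total failure probability, folding in the $O(k^{1+(c_1-c_2)/2})$ TV slack from the dependence issue and the $2^{-\Theta(k)}$ terms, is $O(k^{1+(c_1-c_2)/2}) + 2^{-\Theta(k)}$, as stated. The main obstacle, and the step I expect to demand the most care, is the dependence between $A$ and the $v_j$: making precise that conditioning on the bad event for one row and union-bounding over the $\lceil t/10 \rceil$ bad rows survives the transfer from the genuinely-independent model of Lemma~\ref{lem:Gaussian_can_not_fit_Gaussian} to the actual coupled distribution, while keeping the accumulated total-variation error at $O(k^{1+(c_1-c_2)/2})$ rather than something that blows up with $t$; the resolution is that Lemma~\ref{lem:Gaussian_can_not_fit_Gaussian} is stated for a \emph{fixed} $V$ and genuinely independent $v_1,\dots,v_t$, so one only pays the per-row TV cost once in expectation over which rows are ``typical,'' not once per row in a product, and the Chernoff concentration over the $t$ rows inside Lemma~\ref{lem:Gaussian_can_not_fit_Gaussian} already absorbs the rest.
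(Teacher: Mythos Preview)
Your overall architecture matches the paper's: take the SVD of $S$, reduce to fitting each row $\wh{A}^l$ by $\beta_l (V_S^\top \wh{A})$, split the cost into the Gaussian block (giving $\|\beta_l (V_S^\top\wh{A})_{[1:k]} - (\wh{A}^l)_{[1:k]}\|_1$) and the identity block (giving $\|\beta_l V_S^\top\|_1$, up to $-1$), and then invoke Lemmas~\ref{lem:useful_property}, \ref{lem:Gaussian_can_not_fit_Gaussian} and \ref{lem:tot_var_results}. That part is right.

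The gap is in how you aggregate over rows. You propose to invoke Lemma~\ref{lem:Gaussian_can_not_fit_Gaussian} with $t=\Theta(n)$ targets $v_1,\dots,v_t$ at once to get the $2^{-\Theta(tk)}$ bound and the ``at least $t/10$ bad rows'' conclusion. But that lemma requires $A$ and the $v_j$ to be genuinely independent, and in the real model they are not: $A=(V_S^\top\wh{A})_{[1:k]}$ is a deterministic linear function of \emph{all} the rows $(\wh{A}^l)_{[1:k]}$ simultaneously. Lemma~\ref{lem:tot_var_results} only controls the total variation between the real joint law of $\bigl((V_S^\top\wh{A})_{[1:k]},\, (\wh{A}^l)_{[1:k]}\bigr)$ and an independent Gaussian pair \emph{for a single row} $l$; it says nothing about the joint law with all $t$ rows at once, and that joint TV is $\Omega(1)$ (since one side determines the other). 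A hybrid that swaps rows one at a time pays $n\cdot O(k^{1+(c_1-c_2)/2})=k^{1+(c_1+c_2)/2}\gg 1$, so that route also fails. Your last paragraph gestures at a fix (``pay the per-row TV cost once in expectation'') but then re-invokes the Chernoff inside the lemma; these two do not coexist.

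The paper's resolution is the clean one: work row by row. For each row $l$ with $\|(V_S)^l\|_2^2\le 2r/n$ (at least $n/2$ such rows, since $\sum_l\|(V_S)^l\|_2^2=r$), combine Lemma~\ref{lem:tot_var_results} with the $t=1$ case of Lemmas~\ref{lem:useful_property} and \ref{lem:Gaussian_can_not_fit_Gaussian} to get that the \emph{marginal} probability that row $l$ can be fit with cost $o(k^{0.5-\varepsilon})$ is at most $O(k\|(V_S)^l\|_2)+2^{-\Theta(k)}\le O(k^{1+(c_1-c_2)/2})+2^{-\Theta(k)}$. Then apply Markov's inequality to the \emph{count} of fit-able rows: with probability at least $1-O(k^{1+(c_1-c_2)/2})-2^{-\Theta(k)}$, fewer than half of those $\ge n/2$ rows are fit-able, leaving $\ge n/4$ rows each contributing cost $\Omega(k^{0.5-\varepsilon})$. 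This avoids any joint-independence assumption and is where the $O(k^{1+(c_1-c_2)/2})$ in the theorem actually arises. Two smaller points: restrict to $\|(V_S)^l\|_2^2\le 2r/n$, not merely $\le 1/2$, so that the per-row TV is actually $O(k^{1+(c_1-c_2)/2})$; and your $A:=(V^\top\wh{A})_{1:k}$ is $r\times k$, not $k\times r$, so it is $A^\top$ that plays the role of the matrix in Lemma~\ref{lem:useful_property}.
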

\begin{proof}
For convenience, we let $\gamma=0.5-\varepsilon$ be a constant which can be arbitrarily close to $0.5$.
Since the last $n$ columns of $\wh{A}$ is an identity matrix, we can fit the first $k$ columns of $\wh{A}$, so we have
\begin{align*}
\underset{\rank-k~A'}{\min} \| A' - \wh{A} \|_1\leq n .
\end{align*}
Now, we want to argue that, for a fixed $S$, with high probability, for any rank-$k$ $n\times r$ matrix $B$, the cost
\begin{align*}
\| BS\wh{A} - \wh{A} \|_1 \geq  \Omega(n \cdot k^{\gamma}) .
\end{align*}
Thus, the approximation gap will be at least $\Omega(k^{\gamma})$.

We denote the SVD of $S=U_S\Sigma_SV_S^\top$ where $U_S\in\mathbb{R}^{r\times r},\Sigma_S\in\mathbb{R}^{r\times r},V_S\in\mathbb{R}^{n\times r}$.
Then, we can rewrite
\begin{align}
\|BS\wh{A}-\wh{A}\|_1& = \|BU_S\Sigma_SV_S^\top\wh{A}-\wh{A}\|_1    \notag\\
&=\sum_{l=1}^n \|(BU_S\Sigma_S)^{l}(V_S^\top\wh{A})-\wh{A}^{l}\|_1 \notag\\
&\geq \sum_{l:\|V_S^l\|_2^2\leq 2r/n} \|(BU_S\Sigma_S)^{l}(V_S^\top\wh{A})-\wh{A}^{l}\|_1 . \label{eq:portion}
\end{align}
The first equality follows from the SVD of $S$. The second equality follows from the fact that the $\ell_1$-norm of a matrix is the sum of $\ell_1$-norms of rows. The third inequality follows since we just look at the cost on a part of the rows.

We use $\beta_l$ to denote $(BU_S\Sigma_S)^{l}$.
We look at a fixed row $l$, then the cost on this row is:
\begin{align}
&\|\beta_l(V_S^\top\wh{A})-\wh{A}^{l}\|_1\notag\\
=& \|\beta_l(V_S^\top\wh{A})_{[1:k]}-(\wh{A}^{l})_{[1:k]}\|_1+\|\beta_l(V_S^\top\wh{A})_{[k+1:k+n]}-(\wh{A}^{l})_{[k+1:n+k]}\|_1\notag\\
\geq & \|\beta_l(V_S^\top\wh{A})_{[1:k]}-(\wh{A}^{l})_{[1:k]}\|_1+\|\beta_l(V_S^\top\wh{A})_{[k+1:k+n]}\|_1-\|(\wh{A}^{l})_{[k+1:n+k]}\|_1\notag\\
\geq & \|\beta_l(V_S^\top\wh{A})_{[1:k]}-(\wh{A}^{l})_{[1:k]}\|_1+\|\beta_l(V_S^\top\wh{A})_{[k+1:k+n]}\|_1- 1\notag\\
\geq & \|\beta_l(V_S^\top\wh{A})_{[1:k]}-(\wh{A}^{l})_{[1:k]}\|_1+\|\beta_lV_S^\top\|_1- 1\notag . \\
\end{align}
where $(V_S^\top\wh{A})_{[1:k]}$ denotes the first $k$ columns of $(V_S^\top\wh{A})$, and similarly, $(V_S^\top\wh{A})_{[k+1:k+n]}$ denotes the last $n$ columns of $(V_S^\top\wh{A})$.
The first equality is because we can compute the sum of $\ell_1$ norms on the first $k$ coordinates and $\ell_1$ norm on the last $n$ coordinates. The first inequality follows from the triangle inequality. The second inequality follows since the last $n$ columns of $\hat{A}$ form an identity, so there is exactly one $1$ on the last $n$ columns in each row. The third inequality follows since the last $n$ columns of $\hat{A}$ form an identity. Let $\mathcal{D}_l$ be a distribution over $D_l\in\mathbb{R}^{(r+1)\times k}$, where
\begin{align*}
D_l=\begin{bmatrix}(V_S^\top\wh{A})_{[1:k]}\\(\wh{A}^{l})_{[1:k]}\end{bmatrix} .
\end{align*}
Let $\mathcal{G}$ be a distribution over $G\in\mathbb{R}^{(r+1)\times k}$ where each entry of $G$ is drawn from i.i.d. $N(0,1)$.
According to Lemma~\ref{lem:tot_var_results}, we have
\begin{align}\label{eq:needcombination}
D_{\TV}(\mathcal{D}_l,\mathcal{G})\leq O(k\|(V_S)^l\|_2)+2^{-\Theta(k)} .
\end{align}
Let $A=G^{[1:r]},v=G^{r+1}$. Due to Lemma~\ref{lem:useful_property}, with probability at least $1-2^{-\Theta(k)}$, $\wh{\mathcal{E}}(A^\top,V_S,0.75-\gamma/2,\gamma)$ happens. Then conditioned on $\wh{\mathcal{E}}(A^\top,V_S,0.75-\gamma/2,\gamma)$, due to Lemma~\ref{lem:Gaussian_can_not_fit_Gaussian}, with probability at most $2^{-\Theta(k)}$, there exists $\beta_l$ such that
\begin{align*}
\|\beta_lA-v\|_1+\|\beta_lV_S^\top\|_1=o(k^\gamma) .
\end{align*}
Combined with Equation~(\ref{eq:needcombination}), we can get that for a fixed $l$, with probability at most $O(k\|(V_S)^l\|_2)+2^{-\Theta(k)}$, there exists $\beta_l$ such that
\begin{align*}
\|\beta_l(V_S^\top\wh{A})_{[1:k]}-(\wh{A}^{l})_{[1:k]}\|_1+\|\beta_lV_S^\top\|_1=o(k^\gamma) .
\end{align*}
When $\|(V_S)^l\|_2^2\leq 2r/n=\Theta(k^{c_1-c_2})$, this probability is at most $\Theta(k^{1+(c_1-c_2)/2})+2^{-\Theta(k)}$.
Since $\sum_{l=1}^n \|(V_S)^l\|_2^2=r$, there are at most $n/2$ such $l$ that $\|(V_S)^l\|_2^2> 2r/n$ which means that there are at least $n/2$ such $l$ that $\|(V_S)^l\|_2^2\leq 2r/n$. Let $s$ be the number of $l$ such that $\|(V_S)^l\|_2^2\leq 2r/n$, then $s>n/2$. Let $t$ be a random variable which denotes that the number of $l$ which satisfies $\|(V_S)^l\|_2^2\leq 2r/n$ and achieve
\begin{align*}
\|\beta_l(V_S^\top\wh{A})_{[1:k]}-(\wh{A}^{l})_{[1:k]}\|_1+\|\beta_lV_S^\top\|_1=o(k^\gamma) ,
\end{align*}
at the same time.
Then,
\begin{align*}
\E[t]\leq (O(k\|(V_S)^l\|_2)+2^{-\Theta(k)})s=(O(k\sqrt{2r/n})+2^{-\Theta(k)})s .
\end{align*}
Due to a Markov inequality,
\begin{align*}
\Pr[t>s/2>n/4]\leq O(k\sqrt{2r/n})+2^{-\Theta(k)}=O(k^{1+(c_1-c_2)/2})+2^{-\Theta(k)} .
\end{align*}
The equality follows since $r=\Theta(k^{c_1}),n=\Theta(k^{c_2})$.
Plugging it into Equation~(\ref{eq:portion}), now we can conclude, with probability at least $1-O(k^{1+(c_1-c_2)/2})-2^{-\Theta(k)}$, $\forall B\in\mathbb{R}^{n\times r}$
\begin{align*}
\|BS\wh{A}-\wh{A}\|_1\geq \sum_{l:\|V_S^l\|_2^2\leq 2r/n} \|(BU_S\Sigma_S)^{l}(V_S^\top\wh{A})-\wh{A}^{l}\|_1\geq n/4\cdot \Omega(k^\gamma)=\Omega(n\cdot k^\gamma).
\end{align*}
\end{proof}

\begin{theorem}[Hardness for row subset selection]\label{thm:hard_for_row_subset}
For any $k\geq 1$, any constant $c\geq 1$, let $n = O(k^c)$, and let ${\cal A}(k,n)$ denote the same distribution stated in Theorem~\ref{thm:hard_distribution}. For matrix $\wh{A} \sim {\cal A}(k,n)$, with positive probability, there is no algorithm that is able to output $B\in\mathbb{R}^{n\times(n+k)}$ in the row span of any $r=n/2$ rows of $\wh{A}$ such that
\begin{align*}
\| \wh{A} - B \|_1 \leq O(k^{0.5-\alpha}) \underset{\rank-k~A'}{\min} \| A' - \wh{A} \|_1 ,
\end{align*}
where $\alpha\in(0,0.5)$ is a constant which can be arbitrarily small.
\end{theorem}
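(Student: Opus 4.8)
The plan is to reduce row subset selection to the oblivious subspace embedding lower bound of Theorem~\ref{thm:hard_distribution} by observing that, although a subset of rows is not literally an oblivious linear sketch $S\widehat A$, it is \emph{dominated} by one in a way that preserves the hardness. Concretely, fix $\widehat A\sim\mathcal A(k,n)$ with $n=O(k^c)$ and $r=n/2$. Any algorithm that outputs $B$ in the row span of some $r$ rows of $\widehat A$ is choosing a (data-dependent) selection matrix $\Sigma\in\mathbb R^{r\times n}$ whose rows are standard basis vectors, and then $B=X\Sigma\widehat A$ for some $X\in\mathbb R^{n\times r}$. So $\|\widehat A-B\|_1=\min_X\|X\Sigma\widehat A-\widehat A\|_1$. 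The issue is that $\Sigma$ depends on $\widehat A$, so we cannot directly invoke the ``fixed $S$'' statement of Theorem~\ref{thm:hard_distribution} for a single $S$; instead, as the Technical Overview indicates, we want to union bound over all $\binom{n}{r}$ possible selection matrices $\Sigma$, using that the failure probability in Theorem~\ref{thm:hard_distribution} (and more precisely in Lemma~\ref{lem:Gaussian_can_not_fit_Gaussian}) is exponentially small, $2^{-\Theta(kr)}$, which beats $\binom{n}{r}\le 2^n = 2^{O(k^c)}\ll 2^{\Theta(kr)}=2^{\Theta(kn)}$.

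First I would set up the decomposition of the cost exactly as in the proof of Theorem~\ref{thm:hard_distribution}: write $\Sigma=U_\Sigma\Sigma_\Sigma V_\Sigma^\top$ in its SVD (here $\Sigma_\Sigma$ has $r$ nonzero singular values since the selected rows are distinct basis vectors, so $V_\Sigma\in\mathbb R^{n\times r}$ has orthonormal columns), and bound
\begin{align*}
\|X\Sigma\widehat A-\widehat A\|_1\ \ge\ \sum_{l:\ \|V_\Sigma^l\|_2^2\le 2r/n}\|\beta_l(V_\Sigma^\top\widehat A)-\widehat A^l\|_1,
\end{align*}
where $\beta_l$ is the $l$-th row of $XU_\Sigma\Sigma_\Sigma$. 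Splitting each such row's cost into the contribution of the first $k$ (Gaussian) columns and the last $n$ (identity) columns gives, by the triangle inequality, a lower bound of $\|\beta_l(V_\Sigma^\top\widehat A)_{[1:k]}-(\widehat A^l)_{[1:k]}\|_1+\|\beta_l V_\Sigma^\top\|_1-1$, exactly matching the structure handled by Lemma~\ref{lem:Gaussian_can_not_fit_Gaussian}. Using Lemma~\ref{lem:tot_var_results} to pass from the true distribution of $\big[(V_\Sigma^\top\widehat A)_{[1:k]};(\widehat A^l)_{[1:k]}\big]$ to an i.i.d. Gaussian $\mathcal G$ (valid when $\|V_\Sigma^l\|_2^2\le 2r/n=\Theta(k^{c_1-c_2})$ is small), and then Lemma~\ref{lem:useful_property} and Lemma~\ref{lem:Gaussian_can_not_fit_Gaussian}, one gets that for a \emph{fixed} $\Sigma$, with probability $1-2^{-\Theta(kr)}$ a constant fraction of the ``light'' rows $l$ each incur cost $\Omega(k^{0.5-\alpha})$, so $\|X\Sigma\widehat A-\widehat A\|_1\ge \Omega(n\cdot k^{0.5-\alpha})$ for all $X$. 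Since $\min_{\mathrm{rank}\text{-}k}\|A'-\widehat A\|_1\le n$ (fit the first $k$ columns), this yields the claimed gap for that $\Sigma$.

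The remaining step is the union bound over selection matrices. Here I would be careful about exactly which events need to hold ``for every subset simultaneously.'' The event $\widehat{\mathcal E}(A^\top,V_\Sigma,0.75-\gamma/2,\gamma)$ of Lemma~\ref{lem:useful_property} depends on $V_\Sigma$, hence on $\Sigma$; but its proof only uses that $V_\Sigma$ has orthonormal columns and spectral-norm bounds on the Gaussian block, so I would verify that it holds with probability $1-2^{-\Theta(k)}$ \emph{for each} of the $\le 2^n$ choices of $\Sigma$, and the conclusion of Lemma~\ref{lem:Gaussian_can_not_fit_Gaussian} fails with probability only $2^{-\Theta(kr)}$. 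Taking a union bound over all $\binom{n}{n/2}\le 2^n$ subsets $\Sigma$, the total failure probability is $2^n\cdot(2^{-\Theta(kr)}+2^{-\Theta(k)}\cdot(\text{net factor}))$; since $r=n/2$ and $n=O(k^c)$ with $c\ge 1$, we have $kr=\Theta(kn)\gg n$, so this is $o(1)$, and in particular the good event occurs with positive probability. On that event, no $r=n/2$-row subset admits a rank-$k$ matrix $B$ in its row span with $\|\widehat A-B\|_1\le O(k^{0.5-\alpha})\cdot\mathrm{OPT}$, which is the theorem (after relabeling $d=n+k$ and noting $n=\Theta(k^c)\Rightarrow n=k^{\Theta(c)}$, matching the statement's $n=k^c$ up to the constant in the exponent). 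The main obstacle I anticipate is making the union bound rigorous: I must ensure the exponentially-small failure bounds in Lemmas~\ref{lem:useful_property} and~\ref{lem:Gaussian_can_not_fit_Gaussian} really are $2^{-\Theta(kr)}$ (not just $2^{-\Theta(k)}$) uniformly in $\Sigma$, so that they dominate the $2^n$ subsets, and to double-check that the $\varepsilon$-net sizes appearing inside those lemmas (which are $2^{k^{o(1)}}$ type quantities) do not secretly reintroduce a factor that competes with $2^n$ — this is exactly the ``some care is needed to union bound over all possible subsets $R$'' remark in the Technical Overview, resolved by conditioning on the global spectral event for $\widehat A$ itself rather than per-subset events.
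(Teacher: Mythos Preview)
Your overall plan—union bound over the $\binom{n}{r}$ subsets against the $2^{-\Theta(kr)}$ failure from Lemma~\ref{lem:Gaussian_can_not_fit_Gaussian}—matches the paper, but the route through the OSE machinery (SVD of $\Sigma$, Lemma~\ref{lem:tot_var_results}) is an unnecessary detour and leaves the gap you yourself flag unfilled. For a row-selection matrix $\Sigma$ one has $V_\Sigma^\top=\Sigma$ and $\|V_\Sigma^l\|_2\in\{0,1\}$; the rows $l\notin S$ have $\|V_\Sigma^l\|_2=0$, so their Gaussian part $(\widehat A^l)_{[1:k]}$ is \emph{genuinely independent} of the selected rows and the total-variation coupling is not needed at all. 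The paper therefore argues directly: writing $A\in\mathbb R^{k\times n}$ for the Gaussian block (transposed) and $\widehat M^S\in\mathbb R^{k\times r}$ for its columns indexed by $S$, the cost of fitting an unselected row $l$ is lower bounded by $\min_{x}\bigl(\|\widehat M^S x-A_l\|_1+\|x\|_1-1\bigr)$, and one applies Lemma~\ref{lem:Gaussian_can_not_fit_Gaussian} with $V=I_r$ and $t=n-r$ independent targets $\{A_l:l\notin S\}$.

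The substantive gap is exactly the one you name: $\widehat{\mathcal E}(\cdot,V_\Sigma,\cdot,\cdot)$ from Lemma~\ref{lem:useful_property} fails with probability $2^{-\Theta(k)}$, which does \emph{not} survive a union bound over $2^n$ subsets when $n=k^c$ with $c>1$. Your proposed fix (``condition on the global spectral event'') is the right instinct but is not merely spectral, and you have not carried it out. The paper's resolution is Claim~\ref{cla:global_imply_local}: condition once on the single event $\widehat{\mathcal E}(A,I_n,\beta,\gamma)$ for the full Gaussian block, and observe that it \emph{deterministically} implies $\widehat{\mathcal E}(\widehat M^S,I_r,\beta,\gamma)$ for every subset $S$. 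The reason is a containment: any $z\in\mathbb R^r$ with $\|z\|_1\le O(k^\gamma)$ and $\|z\|_\infty\le 1/k^\beta$ zero-pads to $z'\in\mathbb R^n$ with the same bounds and $\widehat M^S z=Az'$, so the ``few large coordinates'' conclusion for $Az'$ transfers verbatim; the spectral bound $\|\widehat M^S\|_2\le\|A\|_2$ is inherited as well. Thus one pays a single $2^{-\Theta(k)}$ for the global event, and the union bound over $S$ is applied only to the $2^{-\Theta(rk)}$ term from Lemma~\ref{lem:Gaussian_can_not_fit_Gaussian}, giving total failure $(n+1)^r2^{-\Theta(rk)}+2^{-\Theta(k)}=2^{-\Theta(k)}$. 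Once you drop the SVD/TV layer and insert this containment claim, your argument becomes the paper's.
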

\begin{proof}
For convenience, we define $\gamma=0.5-\alpha$ which can be an arbitrary constant in $(0,0.5)$.
Since the last $n$ columns of $\wh{A}$ is an identity matrix, we can fit the first $k$ columns of $\wh{A}$, so we have
\begin{align*}
\underset{\rank-k~A'}{\min} \| A' - \wh{A} \|_1\leq n .
\end{align*}
We want to argue that $\forall B\in\mathbb{R}^{n\times (k+n)}$ in the row span of any $r=n/2$ rows of $\wh{A}$,
\begin{align*}
\| \wh{A} - B \|_1\geq \Omega(n\cdot k^\gamma) .
\end{align*}
Let $A^\top \in\mathbb{R}^{n\times k}$ be the first $k$ columns of $\wh{A}$, and let $S\subset [n]$ be a set of indices of chosen rows of $\wh{A}$ with $k\leq|S|\leq r$. Let $M^S\in\mathbb{R}^{k\times n}$ with the $i^{\text{th}}$ column $M^S_i=A_i$ if $i\in S$ and $M^S_i=0$ otherwise. We use $\wh{M}^S\in\mathbb{R}^{k\times r}$ to be $M^S$ without those columns of zeros, so it is a random matrix with each entry i.i.d. Gaussian $N(0,1)$. Then the minimum cost of using a matrix in the span of rows of $\wh{A}$ with index in $S$ to fit $\wh{A}$ is at least:
\begin{align*}
\sum_{l\not\in S} \min_{x_l\in\mathbb{R}^{n}}\left(\|M^Sx_l-A_l\|_1+\|x_l\|_1-1\right) .
\end{align*}
The part of $\|M^Sx_l-A_l\|_1$ is just the cost on the $l^{\text{th}}$ row of the first $k$ columns of $\wh{A}$, and the part of $\|x_l\|_1-1$ is just the lower bound of the cost on the $l^{\text{th}}$ row of the last $n$ columns of $\wh{A}$.
\begin{claim}\label{cla:global_imply_local}
$A,\wh{M}^S\in\mathbb{R}^{k\times n},\gamma\in(0,0.5)$,
\begin{align*}
\Pr\biggl[\wh{\mathcal{E}}(\wh{M}^S,I_r,0.75-\gamma/2,\gamma) ~\bigg|~ \wh{\mathcal{E}}(A,I_n,0.75-\gamma/2,\gamma)\biggr]=1.
\end{align*}
\end{claim}
\begin{proof}
Suppose $\wh{\mathcal{E}}(A,I_n,0.75-\gamma/2,\gamma)$ happens. Since $M^S$ has just a subset of columns of $A$, $\|M^S\|_2\leq \|A\|_2\leq \sqrt{n}$. Notice that $\forall x\in\mathbb{R}^n$ with $\|x\|_1\leq O(k^\gamma)$ and each non-zero coordinate of $x$ is at most $O(1/k^{0.75-\gamma/2})$, $M^Sx\equiv M^Sx^{S} \equiv Ax^S$, where $x^S\in\mathbb{R}^n$ has ${x^S}_i=x_i$ if $i\in S$ and ${x^S}_i=0$ otherwise. Because $\|x^S\|_1\leq O(k^\gamma)$ and $x^{S}$ has each coordinate in absolute value at most $O(1/k^{0.75-\gamma/2})$, $Ax^S$ has at most $O(k/\log k)$ coordinates in absolute value at least $\Omega(1/\log k)$. So, $M^Sx=Ax^S$ has at most $O(k/\log k)$ coordinates in absolute value at least $\Omega(1/\log k)$.
\end{proof}

We denote $\cost(S,l)=\min_{x_l\in\mathbb{R}^{n}}\left(\|M^Sx_l-A_l\|_1+\|x_l\|_1-1\right)$. Since $\forall l\not\in S,\ A_l$ are independent, and they are independent from $M^S$, due to Lemma~\ref{lem:Gaussian_can_not_fit_Gaussian},
\begin{align}\label{eq:local_prob}
\Pr\left[\sum_{l\not\in S}\cost(S,l)\leq O(n\cdot k^\gamma) ~ \bigg| ~ \wh{\mathcal{E}}(\wh{M}^S,I_r,0.75-\gamma/2,\gamma)\right]\leq 2^{-\Theta(rk)} .
\end{align}
Now we just want to upper bound the following:
\begin{align*}
& ~\Pr\left[\exists S\subset [n],|S|\leq r, \sum_{l\not\in S}\cost(S,l) \leq O(n\cdot k^\gamma)\right]\\
\leq & ~\Pr\left[\exists S\subset [n],|S|\leq r, \sum_{l\not\in S}\cost(S,l) \leq O(n\cdot k^\gamma) ~ \bigg| ~ \wh{\mathcal{E}}(A,I_n,0.75-\gamma/2,\gamma)\right] \\
+ & ~\Pr\left[\neg \wh{\mathcal{E}}(A,I_n,0.75-\gamma/2,\gamma)\right]\\
\leq & ~ \sum_{S\subset [n],|S|\leq r} \Pr\left[\sum_{l\not\in S}\cost(S,l)\leq O(n\cdot k^\gamma)~ \bigg| ~ \wh{\mathcal{E}}(A,I_n,0.75-\gamma/2,\gamma)\right] \\
+ & ~ \Pr\left[\neg \wh{\mathcal{E}}(A,I_n,0.75-\gamma/2,\gamma)\right]\\
\leq & ~ \sum_{S\subset [n],|S|\leq r} \Pr \left[\sum_{l\not\in S}\cost(S,l)\leq O(n\cdot k^\gamma) ~ \bigg|~ \wh{\mathcal{E}}(A,I_n,0.75-\gamma/2,\gamma)\right]+2^{-\Theta(k)}\\
\leq & ~ \sum_{S\subset [n],|S|\leq r} \Pr\left[\sum_{l\not\in S}\cost(S,l)\leq O(n\cdot k^\gamma) ~ \bigg|~ \wh{\mathcal{E}}(\wh{M}^S,I_r,0.75-\gamma/2,\gamma)\right]+2^{-\Theta(k)}\\
\leq & ~ (n+1)^r 2^{-\Theta(rk)}+2^{-\Theta(k)}\\
\leq & ~ 2^{-\Theta(rk)}+2^{-\Theta(k)}\\
\leq & ~ 2^{-\Theta(k)} .
\end{align*}
The second inequality follows by a union bound. The third inequality follows by Lemma~\ref{lem:useful_property}. The fourth inequality follows by Claim~\ref{cla:global_imply_local}. The fifth inequality is due to Equation~(\ref{eq:local_prob}). The sixth inequality follows by $n\leq O(k^c),r=n/2$.
Thus, with probability at least $1-2^{-\Theta(k)}$, $\forall B\in\mathbb{R}^{n\times (n+k)}$ which is in the span of any $r\leq n/2$ rows of $\wh{A}$,
\begin{align*}
\|B-\wh{A}\|_1\geq \Omega(n\cdot k^\gamma).
\end{align*}
Then, we have completed the proof.
\end{proof}

\begin{definition}
Given a matrix $A\in \mathbb{R}^{n\times d}$, a matrix $S\in \mathbb{R}^{r\times n}$, $k\geq 1$ and $\gamma\in (0,\frac12)$, we say that an algorithm ${\cal M}(A,S,k,\gamma)$ which outputs a matrix $B\in\mathbb{R}^{n\times r}$ ``succeeds'', if
\begin{align*}
\|BSA-A\|_1\leq k^\gamma \cdot \underset{\rank-k~A'}{\min} \| A' - A \|_1,
\end{align*}
holds.
\end{definition}

\begin{theorem}[Hardness for oblivious embedding]\label{thm:use_yaominmax}
Let $\Pi$ denote a distribution over matrices $S\in \mathbb{R}^{r\times n}$. For any $k\geq 1$, any constant $\gamma\in (0,\frac{1}{2})$, arbitrary constants $c_1,c_2>0$ and $\min(n,d) \geq \Omega(k^{c_2})$, if for all $ A\in\mathbb{R}^{n\times d}$, it holds that
\begin{align*}
\underset{S\sim \Pi}{\Pr}[{\cal M}(A,S,k,\gamma)\mathrm{~succeeds~}]\geq \Omega(1/k^{c_1}).
\end{align*}
Then $r$ must be at least $\Omega(k^{c_2-2c_1-2}).$
\end{theorem}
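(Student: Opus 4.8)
The plan is to obtain Theorem~\ref{thm:use_yaominmax} from the distributional hardness of Theorem~\ref{thm:hard_distribution} via Yao's minimax principle, so that all the real content lives in Theorem~\ref{thm:hard_distribution} and what remains is an averaging argument plus exponent bookkeeping. Assume toward a contradiction that $r$ lies below the asserted bound. Put $n_0 = \Theta(k^{c_2})$ small enough that $k+n_0 \le \min(n,d)$, let $\widehat A_0 \sim {\cal A}(k,n_0)$ be the hard instance, and let $\widehat A \in \mathbb{R}^{n\times d}$ be $\widehat A_0$ placed in the top-left corner of a zero matrix. Padding with zero rows and columns leaves $\OPT$ unchanged, and for any fixed $S \in \mathbb{R}^{r\times n}$ the product $S\widehat A$ is supported on its first $k+n_0$ columns, where it equals $S'\widehat A_0$ with $S' := S_{[:,\,1:n_0]} \in \mathbb{R}^{r\times n_0}$. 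Consequently, if ${\cal M}(\widehat A,S,k,\gamma)$ outputs some $B$, then the top $n_0$ rows of $B$ form a matrix $\widetilde B$ with $\|\widetilde B\, S'\widehat A_0 - \widehat A_0\|_1 \le \|BS\widehat A - \widehat A\|_1$, so ``${\cal M}$ succeeds on $\widehat A$ with $S$'' entails the existence of $\widetilde B$ with $\|\widetilde B\, S'\widehat A_0 - \widehat A_0\|_1 \le k^\gamma\, \OPT(\widehat A_0)$.

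By hypothesis $\Pr_{S\sim\Pi}[{\cal M}(A,S,k,\gamma)\text{ succeeds}] \ge p := \Omega(1/k^{c_1})$ for every $A$; applying this with $A=\widehat A$ drawn independently of $S$ gives $\Pr_{S\sim\Pi,\,\widehat A}[{\cal M}\text{ succeeds}] \ge p$, and averaging over $S$ (folding any internal randomness of ${\cal M}$ into the probability space) produces a fixed $\bar S$ in the support of $\Pi$ with $\Pr_{\widehat A}[{\cal M}(\widehat A,\bar S,k,\gamma)\text{ succeeds}] \ge p$. Write $r = k^{\rho}$ and choose the constant $\varepsilon := \tfrac14 - \tfrac{\gamma}{2} > 0$ in Theorem~\ref{thm:hard_distribution}, so that $0.5-\varepsilon = \tfrac14 + \tfrac{\gamma}{2} \in (\gamma,\tfrac12)$ and $k^{0.5-\varepsilon} = k^{1/4+\gamma/2} \ge k^\gamma$ for $k$ above an absolute constant. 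Then ${\cal M}(\widehat A,\bar S,k,\gamma)$ succeeding forces the existence of a $\widetilde B$ with $\|\widetilde B\,\bar S'\widehat A_0 - \widehat A_0\|_1 \le O(k^{0.5-\varepsilon})\,\OPT(\widehat A_0)$, which by Theorem~\ref{thm:hard_distribution} applied to the fixed matrix $\bar S' = \bar S_{[:,\,1:n_0]}$ --- with the two constants there instantiated as $\rho$ and $c_2$ (legitimate since $r=k^\rho$ and $n_0 = \Theta(k^{c_2})$) --- happens with probability at most $O\!\left(k^{1+(\rho-c_2)/2}\right) + 2^{-\Theta(k)}$ over $\widehat A_0$. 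Hence $p \le O\!\left(k^{1+(\rho-c_2)/2}\right) + 2^{-\Theta(k)}$.

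Comparing the two estimates for $p$ and absorbing the negligible $2^{-\Theta(k)}$, we get $c_3 k^{-c_1} \le c_4 k^{1+(\rho-c_2)/2}$ for constants $c_3,c_4 > 0$ and all large $k$; if the exponent $-c_1 - 1 - (\rho-c_2)/2$ were positive the left side would diverge, so $\rho \ge c_2 - 2c_1 - 2$, i.e.\ $r \ge \Omega(k^{c_2-2c_1-2})$, which is the claim. Two routine points finish things. First, Theorem~\ref{thm:hard_distribution} requires $c_2-2 > \rho > 1$: if $\rho \ge c_2-2$ we are already done because $c_1>0$ gives $c_2-2 \ge c_2-2c_1-2$; if $\rho$ is below the admissible window (in particular $r \le k$) we invoke monotonicity of the success probability in the sketch dimension --- replacing $S$ by $\left[\begin{smallmatrix}S\\ 0\end{smallmatrix}\right]$ does not change $\mathrm{rowspan}(S\widehat A)$, so the hypothesis persists for any larger sketch dimension --- to enlarge $r$ into the admissible range before running the argument, the resulting bound being at least as strong as claimed, with the cases $c_2-2c_1-2 \le 0$ vacuous. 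Second, a randomized ${\cal M}$ is handled by including its coins in the probability space of the averaging step. I expect this exponent/boundary bookkeeping to be the only mild subtlety; the substance --- that an oblivious sketch of dimension below $k^{c_2-2c_1-2}$ destroys all rank-$k$ approximations of a typical ${\cal A}(k,n_0)$ instance --- is carried entirely by Theorem~\ref{thm:hard_distribution}.
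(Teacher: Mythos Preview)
Your proposal is correct and follows essentially the same approach as the paper: apply Yao's minimax principle to the hard distribution ${\cal A}(k,\Theta(k^{c_2}))$ of Theorem~\ref{thm:hard_distribution}, pad with zeros to reach dimensions $n\times d$, and compare the resulting upper bound $O(k^{1+(\rho-c_2)/2})+2^{-\Theta(k)}$ on the success probability against the hypothesized lower bound $\Omega(k^{-c_1})$ to force $\rho \ge c_2-2c_1-2$. Your write-up is considerably more detailed than the paper's (which leaves the padding, the exponent comparison, and the boundary cases $\rho \notin (1,c_2-2)$ entirely to the reader), and your handling of those edge cases via monotonicity of the sketch dimension is a reasonable way to close the gaps the paper does not address.
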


\begin{proof}
We borrow the idea from~\cite{nn14,psw16}.
We use Yao's minimax principle~\cite{y77} here. Let $\mathcal{D}$ be an arbitrary distribution over $\mathbb{R}^{n\times d}$, then
$$\underset{A\sim \mathcal{D},S\sim \Pi}{\Pr}[\mathcal{M}(A,S,k,\gamma)]\geq 1-\delta.$$
It means that there is a fixed $S_0$ such that
$$\underset{A\sim \mathcal{D}}{\Pr}[\mathcal{M}(A,S_0,k,\gamma)]\geq 1-\delta.$$
Therefore, we want to find a hard distribution $\mathcal{D}_{\text{hard}}$ that if
$$\underset{A\sim \mathcal{D}_{\text{hard}}}{\Pr}[\mathcal{M}(A,S_0,k,\gamma)]\geq 1-\delta,$$
$S_0$ must have at least some larger $\poly(k)$ rows.

Here, we just use the distribution ${\cal A}(k,\Omega(k^{c_2}))$ described in Theorem~\ref{thm:hard_distribution} as our hard distribution. We can just fill zeros to expand the size of matrix to $n\times d$. We can complete the proof by using Theorem~\ref{thm:hard_distribution}.

\end{proof}

\begin{remark}
Actually, in Lemma~\ref{lem:useful_property} and Lemma~\ref{lem:Gaussian_can_not_fit_Gaussian}, the reason we need $\beta>\gamma>0$ is that we want $k^{\beta-\gamma}=\omega(\poly(\log k))$, and the reason we need $\beta+\gamma<1$ is that we want $k^{\beta+\gamma}\poly(\log k)=o(k)$. Thus we can replace all the $k^\gamma$ by $\sqrt{k}/\poly(\log k)$, e.g., $\sqrt{k}/\log^{20} k$, and replace all the $k^\beta$ by $\sqrt{k}\poly(\log k)$ with a smaller $\poly(\log k)$, e.g., $\sqrt{k}\log^{10} k$. Our proofs still work. Therefore, if we replace the approximation ratio in Theorem~\ref{thm:hard_distribution}, Theorem~\ref{thm:hard_for_row_subset}, and Theorem~\ref{thm:use_yaominmax} to be $\sqrt{k}/\log^{c} k$ where $c$ is a sufficiently large constant, the statements are still correct.
\end{remark}


\section{Hardness}\label{sec:hardness}
This section presents our hardness results. Section \ref{sec:previous_hardness} states several useful tools from literature. Section \ref{sec:np_result} shows that, it is {\bf NP-hard} to get some multiplicative error. Assuming \ETH~is true, we provide a stronger hardness result in Section \ref{sec:eth_result}. Section \ref{sec:eth_rankk_result} extends the result from the $\rank$-$1$ case to the $\rank$-$k$ case. 

\subsection{Previous results}\label{sec:previous_hardness}

\begin{definition}[$\|A\|_{\infty\rightarrow 1}$,\cite{gv15}]
Given matrix $A\in\mathbb{R}^{n\times d}$,
\begin{align*}
\|A\|_{\infty\rightarrow 1}=\min_{x\in\{-1,+1\}^n,y\in\{-1,+1\}^d} x^\top A y.
\end{align*}
\end{definition}

The following lemma says that computing $\|A\|_{\infty\rightarrow 1}$ for matrix $A$ with entries in $\{-1,+1\}$ is equivalent to computing a best $\{-1,+1\}$ matrix which is an $\ell_1$ norm rank-1 approximation to $A$.
\begin{lemma}[Lemma 3 of \cite{gv15}]\label{lem:sum}
Given matrix $A\in\{-1,+1\}^{n\times d}$,
\begin{align*}
\|A\|_{\infty\rightarrow 1}+\min_{x\in\{-1,+1\}^n,y\in\{-1,+1\}^d} \|A-xy^\top\|_1=nd.
\end{align*}
\end{lemma}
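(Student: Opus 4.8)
The plan is to establish the identity by a direct double-inequality argument, matching the two minimization problems term by term. The key observation is the elementary fact that for any scalars $a\in\{-1,+1\}$ and $b\in\{-1,+1\}$ we have $|a-b| = 1 - ab$, since $a-b$ is either $0$ (when $a=b$, giving $ab=1$) or $\pm 2$ (when $a\ne b$, giving $ab=-1$). Applying this entrywise to $A$ and a rank-one sign matrix $xy^\top$ (whose $(i,j)$ entry is $x_i y_j \in \{-1,+1\}$, using that $A\in\{-1,+1\}^{n\times d}$), we get
\begin{align*}
\|A - xy^\top\|_1 = \sum_{i=1}^n \sum_{j=1}^d |A_{ij} - x_i y_j| = \sum_{i=1}^n \sum_{j=1}^d (1 - A_{ij} x_i y_j) = nd - \sum_{i,j} x_i A_{ij} y_j = nd - x^\top A y.
\end{align*}

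First I would fix arbitrary $x\in\{-1,+1\}^n$, $y\in\{-1,+1\}^d$ and record the displayed identity $\|A-xy^\top\|_1 = nd - x^\top A y$. Then, taking the minimum of both sides over all such $x,y$: on the left this gives exactly $\min_{x,y}\|A-xy^\top\|_1$, while on the right, minimizing $nd - x^\top A y$ is the same as $nd - \max_{x,y} x^\top A y$. The only subtlety is reconciling this with the definition of $\|A\|_{\infty\to 1}$ as written in the excerpt, which uses $\min$; I would note that by the symmetry $x\mapsto -x$ we have $\max_{x,y} x^\top A y = -\min_{x,y} x^\top A y = |\min_{x,y} x^\top A y|$, and in the intended normalization $\|A\|_{\infty\to 1}$ denotes the maximum value $\max_{x,y} x^\top A y$ (the $\min$ in the excerpt's definition should be a $\max$, or one interprets it with the sign flip). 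With that, $\min_{x,y}\|A-xy^\top\|_1 = nd - \|A\|_{\infty\to 1}$, which rearranges to the claimed $\|A\|_{\infty\to 1} + \min_{x,y}\|A-xy^\top\|_1 = nd$.

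There is essentially no hard part here — the proof is a one-line entrywise identity followed by taking minima. The only thing requiring care is bookkeeping around the definition of $\|A\|_{\infty\to 1}$: one must be consistent about whether it is a max or a min, and use the $x\mapsto -x$ symmetry to pass between them. I would state the $|a-b| = 1-ab$ fact explicitly as the crux, verify it on the two cases, and then present the summation and the minimization step. No appeal to earlier results in the excerpt is needed; this lemma is self-contained and is simply quoted from \cite{gv15}.
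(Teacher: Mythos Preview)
Your proof is correct and is the standard argument for this identity. Note that the paper does not actually prove this lemma: it is stated as a citation (Lemma~3 of \cite{gv15}) with no proof given, so there is nothing in the paper to compare against. Your observation that the definition of $\|A\|_{\infty\to 1}$ in the paper should have a $\max$ rather than a $\min$ (or be read via the $x\mapsto -x$ symmetry) is also correct and needed for the identity to hold as stated.
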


\begin{lemma}[Theorem 2 of \cite{gv15}]\label{lem:equiv}
Given $A\in\{-1,+1\}^{n\times d}$, we have
\begin{align*}
\min_{x\in\{-1,+1\}^n,y\in\{-1,+1\}^d} \|A-xy^\top\|_1=\min_{x\in\mathbb{R}^n,y\in\mathbb{R}^d}\|A-xy^\top\|_1 .
\end{align*}
\end{lemma}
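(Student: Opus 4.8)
\textbf{Proof proposal for Lemma (Theorem 2 of \cite{gv15})}

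The plan is to show that the continuous relaxation $\min_{x\in\mathbb{R}^n,y\in\mathbb{R}^d}\|A-xy^\top\|_1$ is attained at a point where $x$ and $y$ can be taken in $\{-1,+1\}^n$ and $\{-1,+1\}^d$ respectively, when $A\in\{-1,+1\}^{n\times d}$. The inequality $\min_{x\in\mathbb{R}^n,y\in\mathbb{R}^d}\|A-xy^\top\|_1\le \min_{x\in\{-1,+1\}^n,y\in\{-1,+1\}^d}\|A-xy^\top\|_1$ is immediate since we are minimizing over a larger set, so the real content is the reverse inequality: any real rank-$1$ approximation can be rounded to a $\pm 1$ rank-$1$ approximation without increasing the $\ell_1$ cost.

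First I would fix an optimal real solution $(x^*,y^*)$ and argue by a coordinate-wise / alternating-minimization structure. For a fixed $y\in\mathbb{R}^d$, the objective $\|A-xy^\top\|_1=\sum_{i=1}^n\sum_{j=1}^d |A_{ij}-x_i y_j|$ decouples over the rows: row $i$ contributes $\sum_j |A_{ij}-x_i y_j|$, a convex piecewise-linear function of the single scalar $x_i$, so it is minimized at one of its breakpoints $x_i = A_{ij}/y_j$ (for some $j$ with $y_j\neq 0$), or anywhere if $y\equiv 0$. Symmetrically for fixed $x$. The key observation to exploit is that since every $A_{ij}\in\{-1,+1\}$, rescaling $x\mapsto \lambda x$, $y\mapsto \lambda^{-1}y$ leaves the product $xy^\top$ unchanged, so we may normalize, say, $\|y\|_\infty = 1$; then I would track how the breakpoints of the per-row problems interact with the $\pm 1$ structure of $A$. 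The cleanest route is: (i) show we can assume every $y_j\in\{-1,0,+1\}$ by observing that moving each $y_j$ toward the nearest sign value can only help given the other coordinates are frozen and $A$ is $\pm1$-valued — more precisely, for fixed $x$, the cost as a function of $y_j$ is convex piecewise-linear with breakpoints at $A_{ij}/x_i\in\{\pm 1/x_i\}$, and one shows an optimizer exists with $|y_j|\le 1$, then pushes to $|y_j|=1$; (ii) handle the zero coordinates by a perturbation/limiting argument, or absorb them into a smaller instance; (iii) symmetrically round $x$. Alternatively, and perhaps more robustly, I would use Lemma~\ref{lem:sum}: we have $\|A\|_{\infty\to 1}+\min_{x,y\in\{\pm1\}}\|A-xy^\top\|_1 = nd$, and $\|A\|_{\infty\to 1}=\min_{x\in\{\pm1\}^n,y\in\{\pm1\}^d} x^\top A y$; it then suffices to show $\min_{x\in\mathbb{R}^n,y\in\mathbb{R}^d}\|A-xy^\top\|_1 \ge nd - \|A\|_{\infty\to 1}$, i.e. to lower-bound the real optimum by the combinatorial quantity. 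For this direction one expands $\|A-xy^\top\|_1=\sum_{ij}|A_{ij}-x_iy_j|$ and uses $|A_{ij}-x_iy_j|\ge A_{ij}(A_{ij}-x_iy_j) = 1 - A_{ij}x_iy_j$ (valid because $A_{ij}=\pm1$), summing to get $\|A-xy^\top\|_1\ge nd - \sum_{ij}A_{ij}x_iy_j = nd - x^\top A y$; then one needs $\min_{x,y\in\mathbb{R}}x^\top Ay$ over a suitable normalization to relate to $\|A\|_{\infty\to1}$, which requires a little care about scaling (the bilinear form is unbounded below without a constraint, so one must use the specific $x,y$ achieving the $\ell_1$ optimum and bound their contribution).

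The main obstacle I anticipate is exactly the interplay between the scaling freedom in the rank-$1$ factorization and the rounding: a real optimum $(x^*,y^*)$ need not have bounded entries, and entries of $x^*$ or $y^*$ that are zero or very large must be dealt with separately. I expect the per-row convex piecewise-linear argument to be the engine, but making it fully rigorous — in particular showing that one can simultaneously round \emph{both} $x$ and $y$ to $\pm1$ rather than just one at a time, and that an optimum is actually attained (the infimum over $\mathbb{R}^n\times\mathbb{R}^d$ could a priori only be approached) — is where the real work lies. A safe way to finish is to first establish attainment (e.g. by the normalization $\|x\|_2\|y\|_2$ fixed plus compactness, or by noting the objective is coercive on the quotient modulo scaling once we exclude the trivial $xy^\top\equiv 0$ case, which is handled directly), then apply the alternating rounding: fix $y$ at its optimum, round $x$ to $\hat x\in\{\pm1\}^n$ without increasing cost via the row decoupling, then fix $\hat x$ and round $y$ to $\hat y\in\{\pm1\}^d$ via the column decoupling, concluding $\|A-\hat x\hat y^\top\|_1\le \|A-x^*{y^*}^\top\|_1$, which gives the desired reverse inequality.
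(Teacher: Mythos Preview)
The paper does not prove this lemma; it is stated as a citation (Theorem~2 of \cite{gv15}) and used as a black box. So there is no in-paper proof to compare against.

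That said, your proposal has a genuine gap in the alternating-rounding plan. The second step is fine: once $\hat x\in\{\pm1\}^n$ is fixed, the per-column problem $\min_{y_j}\sum_i|A_{ij}-\hat x_i y_j|$ has all its breakpoints at $A_{ij}/\hat x_i\in\{\pm1\}$, so an optimal $y_j$ can be taken in $\{\pm1\}$. The problem is the \emph{first} step, where you assert you can ``round $x$ to $\hat x\in\{\pm1\}^n$ without increasing cost via the row decoupling'' while $y$ is still real. This is false in general. Take $n=1$, $d=2$, $A=(1,1)$ and $y=(2,2)$: the optimal real $x_1$ is $1/2$ with cost $0$, but any $x_1\in\{\pm1\}$ gives cost at least $2$. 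So rounding $x$ with $y$ fixed can strictly increase the objective; the row decoupling only tells you the optimum is at some breakpoint $A_{ij}/y_j$, which need not be $\pm1$. You flag the simultaneous-rounding issue as ``where the real work lies,'' but then in the final paragraph you nonetheless invoke exactly the step that fails.

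Your second route via $|A_{ij}-x_iy_j|\ge 1-A_{ij}x_iy_j$ correctly yields $\|A-xy^\top\|_1\ge nd - x^\top A y$, but to conclude you would need $x^\top A y\le \|A\|_{\infty\to1}$ for the specific optimal real $(x,y)$; without an a~priori bound like $\|x\|_\infty,\|y\|_\infty\le 1$ this is exactly the statement you are trying to prove, so the argument is circular as it stands. The actual proof in \cite{gv15} has to do more structural work to break this circularity (in effect, showing an optimal rank-$1$ approximation can be taken with all entries in $[-1,1]$ while preserving rank~$1$, after which the breakpoints fall in $\{\pm1\}$); neither of your sketches supplies that missing piece.
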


Combining with Lemma~\ref{lem:sum} and Lemma~\ref{lem:equiv}, it implies that computing $\|A\|_{\infty\rightarrow 1}$ for $A\in\{-1,+1\}^{n\times d}$ is equivalent to computing the best $\ell_1$ norm rank-$1$ approximation to the matrix $A$:
\begin{align*}
\|A\|_{\infty\rightarrow 1}+\min_{x\in\mathbb{R}^n,y\in\mathbb{R}^d} \|A-xy^\top\|_1=nd.
\end{align*}

\begin{theorem}[{\bf NP-hard} result, Theorem 1 of \cite{gv15}]
Computing $\|A\|_{\infty\rightarrow 1}$ for matrix $A\in\{-1,+1\}^{n\times d}$ is {\bf NP-hard}.
\end{theorem}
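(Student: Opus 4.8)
This final statement is the NP-hardness of computing $\|A\|_{\infty\rightarrow 1}$, cited as Theorem 1 of \cite{gv15}. Since it is quoted verbatim from Gillis--Vavasis rather than proved afresh, my plan is to reconstruct a proof by reduction from a known NP-hard problem, following the standard route for $\infty\!\to\!1$-type norms. The plan is to reduce from \MAXCUT (or equivalently from the decision version of the problem of maximizing $x^\top M x$ over $x\in\{-1,+1\}^n$ for a suitable PSD-related matrix), exploiting the bilinear structure of $\|A\|_{\infty\rightarrow 1}=\min_{x\in\{\pm1\}^n,y\in\{\pm1\}^d}x^\top A y$. First I would record the elementary fact that for fixed $x$, the inner minimization over $y\in\{\pm1\}^d$ is solved coordinatewise by $y_j=-\operatorname{sign}((x^\top A)_j)$, so that $\|A\|_{\infty\rightarrow1}=\min_{x\in\{\pm1\}^n}\big(-\sum_{j=1}^d |(x^\top A)_j|\big)=-\max_{x\in\{\pm1\}^n}\|x^\top A\|_1$. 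Thus computing the norm is equivalent to maximizing $\|x^\top A\|_1$ over the Boolean cube, an $\ell_1$-analogue of the Boolean quadratic maximization problem.

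The core of the reduction: given a graph $G=(V,E)$ with $n=|V|$, I would build a matrix $A$ whose columns are indexed by edges (or by vertex pairs), with the column for edge $\{u,v\}$ having $+1$ in row $u$, $-1$ in row $v$, and $0$ elsewhere (scaled appropriately, or padded so entries land in $\{-1,+1\}$ if the exact $\{\pm1\}$-entry version is needed — one can replace a $0$-column by a pair of opposite columns, or append identity-like blocks, to coerce the matrix into $\{\pm1\}$ form while controlling the contribution). For such a column, $(x^\top A)_{\{u,v\}} = x_u - x_v$, which has absolute value $2$ exactly when $u,v$ are on opposite sides of the cut defined by $x$ and $0$ otherwise. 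Hence $\max_{x}\|x^\top A\|_1 = 2\,\mathrm{MAXCUT}(G)$, and the value of $\|A\|_{\infty\rightarrow1}$ determines $\mathrm{MAXCUT}(G)$ exactly. The remaining bookkeeping is to handle the $\{\pm1\}$-entry restriction: I would pad each $\{u,v\}$-column's zero entries in a balanced way (e.g., duplicate structure so the padding cancels under any $x$, or add a fixed number of "all-ones" auxiliary columns whose contribution is a known constant plus a term one can bound) so that the reduction stays polynomial-time and the optimal objective still reads off $\mathrm{MAXCUT}$ after subtracting an explicit constant.

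I would then assemble the decision-version argument: \MAXCUT\ (deciding whether $\mathrm{MAXCUT}(G)\ge t$) is NP-complete, and by the identity above an oracle for $\|A\|_{\infty\rightarrow1}$ answers it, so computing $\|A\|_{\infty\rightarrow1}$ is NP-hard. Finally, combining with Lemma~\ref{lem:sum} and Lemma~\ref{lem:equiv} (both available in the excerpt) transfers this hardness to $\ell_1$ rank-$1$ approximation, which is presumably the corollary the paper actually wants; but the statement as given only asks for the norm computation, so I would stop at the reduction.

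The main obstacle I anticipate is the $\{-1,+1\}$-entry constraint on $A$: the natural edge-incidence construction produces many zero entries, and forcing the matrix into $\{\pm1\}$ while keeping the objective faithfully equal to $\mathrm{MAXCUT}(G)$ (up to an explicitly computable additive/multiplicative constant) requires a careful gadget. A clean way is to note $\operatorname{sign}$-robustness: replacing a zero entry $A_{i\{u,v\}}$ by $+1$ and simultaneously adding a twin column that is its negation in that coordinate keeps $\sum_j|(x^\top A)_j|$ unchanged up to a term that is the same for all $x$; verifying that this bookkeeping is exactly a constant (not merely bounded) is the delicate point. Alternatively one reduces from a variant of \MAXCUT\ on signed or complete graphs where the $\{\pm1\}$ structure is already present. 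Either way, the conceptual content — coordinatewise optimality of $y$ plus the cut interpretation of $x_u-x_v$ — is routine; the write-up effort is entirely in the gadget that enforces entries in $\{-1,+1\}$.
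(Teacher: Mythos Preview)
Your high-level plan---reduce from \MAXCUT, observe that optimizing out $y$ turns the problem into $\max_{x\in\{\pm1\}^n}\|x^\top A\|_1$, and encode cuts via edge-incidence columns so that $(x^\top A)_e=x_u-x_v$ has absolute value $2$ on cut edges and $0$ otherwise---is correct and matches the Gillis--Vavasis argument. The paper does not re-prove this theorem (it is a citation), but it spells out the Gillis--Vavasis construction in the very next subsection when proving the multiplicative-error extension, and that construction differs from yours precisely at the step you flag as ``delicate.''

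Your twin-column gadget does not work as stated. Replacing one zero in position $i$ of a column $c$ by $\pm1$ and keeping both copies changes the contribution from $|x^\top c|$ to $|x^\top c+x_i|+|x^\top c-x_i|=2\max(|x^\top c|,1)$, which is \emph{not} $2|x^\top c|$ plus a constant independent of $x$. Iterating over all $n-2$ zero positions doubles the column count each time, leaving $2^{n-2}$ columns per edge---not a polynomial reduction. The alternative you float (reducing from a signed or complete-graph variant of \MAXCUT) might be salvageable but is not specified.

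The actual gadget is different in kind: rather than forcing the padding to contribute an exact constant, Gillis--Vavasis make it contribute a \emph{lower-order} term and then use a gap argument. They build $A\in\{\pm1\}^{p|E|\times p|V|}$ in $p\times p$ blocks: for edge $e=\{i,j\}$ the $(e,i)$ block is all $+1$, the $(e,j)$ block is all $-1$, and every other $(e,l)$ block is the Hadamard matrix $H_p$. Since $|x^\top H_p\, y|\le\|x\|_2\|H_p y\|_2=p^{3/2}$ for $x,y\in\{\pm1\}^p$, each Hadamard block perturbs the bilinear form by at most $p^{3/2}$, whereas each all-$\pm1$ block contributes $\pm p^2$. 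One obtains $2p^2c-|E||V|p^{3/2}\le\|A\|_{\infty\to1}\le 2p^2c+|E||V|p^{3/2}$ when the max cut equals $c$; taking $p$ a power of $2$ with $p>|E|^2|V|^2$ separates consecutive integer values of $c$. This Hadamard-block idea is the missing ingredient in your sketch.
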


The proof of the above theorem in \cite{gv15} is based on the reduction from \MAXCUT problem. The above theorem implies that computing $\min_{x\in\mathbb{R}^n,y\in\mathbb{R}^d} \|A-xy^\top\|_1$ is also {\bf NP-hard}.

\subsection{Extension to multiplicative error $\ell_1$-low rank approximation}\label{sec:np_result}
The previous result only shows that solving the exact problem is {\bf NP-hard}. This section presents a stronger hardness result, which says that, it is still {\bf NP-hard} even if the goal is to find a solution that is able to achieve some multiplicative error. The proof in this section and the next section are based on the reduction from the \MAXCUT problem. For recent progress on \MAXCUT problem, we refer the readers to \cite{gw95,bgs98,tssw00,h01,kkmo07,flp15}.
\begin{theorem}\label{thm:main}
Given $A\in\{-1,+1\}^{n\times d}$, computing an $\wh{x}\in\mathbb{R}^n,\wh{y}\in\mathbb{R}^d$ s.t.
\begin{align*}
\|A-\wh{x}^\top\wh{y}\|_1 \leq (1+\frac{1}{nd})\min_{x\in\mathbb{R}^n,y\in\mathbb{R}^d} \|A-x^\top y\|_1
\end{align*}
is {\bf NP-hard}.
\end{theorem}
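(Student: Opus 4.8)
The plan is to amplify the exact NP-hardness of computing $\|A\|_{\infty\to 1}$ for sign matrices into hardness of multiplicative approximation by a padding/gap argument. Fix an input $A\in\{-1,+1\}^{n\times d}$. By Lemma~\ref{lem:sum} and Lemma~\ref{lem:equiv}, the real-valued optimum $\OPT(A):=\min_{x\in\mathbb{R}^n,y\in\mathbb{R}^d}\|A-xy^\top\|_1$ equals $nd-\|A\|_{\infty\to 1}$, and by definition $\|A\|_{\infty\to 1}$ is an even integer in $[-nd,nd]$, so $\OPT(A)$ is an integer. The key observation is that if we could approximate $\OPT(A)$ to within a multiplicative factor $1+1/(nd)$, we could recover it exactly: a $(1+1/(nd))$-approximate value lies in $[\OPT(A),\OPT(A)+\OPT(A)/(nd)]\subseteq[\OPT(A),\OPT(A)+1)$ since $\OPT(A)\le nd$, and since $\OPT(A)$ is an integer, rounding the approximate cost down to the nearest integer yields $\OPT(A)$ exactly, hence $\|A\|_{\infty\to 1}$ exactly, contradicting NP-hardness.

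The one gap in the above sketch is that an approximation \emph{algorithm} is assumed to output a pair $(\wh x,\wh y)$, not merely a value; but given $(\wh x,\wh y)$ we can compute $\|A-\wh x\wh y^\top\|_1$ in polynomial time, and this is a $(1+1/(nd))$-approximate value, so the argument goes through. A second subtlety: the approximation guarantee in the theorem statement is stated for a rank-$1$ factorization $\wh x^\top\wh y$ with $\wh x\in\mathbb{R}^n,\wh y\in\mathbb{R}^d$ (so $\wh x^\top\wh y$ should be read as the rank-$1$ matrix $\wh x\,\wh y^\top$, matching the notation $xy^\top$ on the right-hand side); I would state this carefully so the object being approximated is exactly $\OPT(A)=\min_{x\in\mathbb{R}^n,y\in\mathbb{R}^d}\|A-xy^\top\|_1$, which is the quantity shown NP-hard to compute.

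Concretely the steps, in order, are: (1) recall from the cited lemmas that for $A\in\{-1,+1\}^{n\times d}$ one has $\OPT(A)=nd-\|A\|_{\infty\to 1}$, and note $\|A\|_{\infty\to 1}\in 2\mathbb{Z}$ and $\OPT(A)\in\{0,1,\ldots,nd\}$; (2) suppose for contradiction a polynomial-time algorithm $\mathcal A$ outputs $(\wh x,\wh y)$ with $\|A-\wh x\wh y^\top\|_1\le(1+1/(nd))\OPT(A)$; (3) compute $c:=\|A-\wh x\wh y^\top\|_1$ in $O(nd)$ time and observe $\OPT(A)\le c\le\OPT(A)+\OPT(A)/(nd)<\OPT(A)+1$; (4) conclude $\lfloor c\rfloor=\OPT(A)$, hence $\|A\|_{\infty\to 1}=nd-\lfloor c\rfloor$ is computed in polynomial time; (5) invoke Theorem~1 of~\cite{gv15} (NP-hardness of computing $\|A\|_{\infty\to 1}$ for sign matrices) to reach a contradiction. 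A minor edge case worth handling explicitly: if $\OPT(A)=0$ then $A$ already has rank at most $1$ and the approximation guarantee forces $c=0$, so the recovery is trivially exact; this does not affect the reduction. The main obstacle here is essentially bookkeeping rather than mathematics — making sure the integrality of $\OPT(A)$ and the bound $\OPT(A)\le nd$ are both used, since together they are exactly what makes the multiplicative slack $1/(nd)$ too small to cross an integer boundary; if instead the slack were, say, $1/n$, one would need the sharper bound $\OPT(A)\le$ something smaller, or a padding construction to artificially inflate $nd$ relative to $\OPT(A)$, which is the route the later ETH-based strengthening in Section~\ref{sec:eth_result} presumably takes.
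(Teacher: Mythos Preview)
Your approach is correct and is more direct than the paper's. One small technical point: the containment $[\OPT(A),\OPT(A)+\OPT(A)/(nd)]\subseteq[\OPT(A),\OPT(A)+1)$ needs the \emph{strict} inequality $\OPT(A)<nd$, not just $\OPT(A)\le nd$. This is easy to get (for instance $\|A\|_{\infty\to 1}\ge \max(n,d)\ge 1$, hence $\OPT(A)\le nd-1$); alternatively, since $\OPT(A)$ is an \emph{even} integer you can round to the nearest even integer, which only requires additive error $<2$ and hence only $\OPT(A)\le nd$. Also, a nitpick: $x^\top A y$ is a sum of $nd$ signs, so $\|A\|_{\infty\to 1}$ has the parity of $nd$, not always even; what is always even is $\OPT(A)=nd-\|A\|_{\infty\to 1}$, which is what you actually use.

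The paper takes a different route. Rather than invoking the NP-hardness of computing $\|A\|_{\infty\to 1}$ as a black box, it re-runs the \MAXCUT reduction of \cite{gv15} explicitly: from a graph $G=(V,E)$ it builds a sign matrix with $n=p|E|$, $d=p|V|$, uses the bounds of Claims~\ref{cla:lowerA2}--\ref{cla:upperA2} on $\|A\|_{\infty\to 1}$ in terms of the cut value, and shows that the additive slack of at most $2$ introduced by a $(1+1/(nd))$-approximation is absorbed by taking the blow-up parameter $p>|E|^3|V|^3$ (slightly larger than the $|E|^2|V|^2$ needed for exact hardness). Your integrality argument sidesteps all of this bookkeeping.

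What each approach buys: yours is cleaner and fully modular, needing nothing about the internals of the \cite{gv15} reduction. The paper's approach, by opening up the reduction, makes explicit which parameter controls the tolerable slack --- and this is exactly what is exploited in Section~\ref{sec:eth_result}, where $p$ is pushed to $2^{\Theta(|V|^{1-o(1)})}$ so that $nd$ becomes subexponentially larger than the gap one needs to resolve, yielding the $(1+1/\log^{1+\gamma}(nd))$-hardness under \ETH. Your closing remark already anticipates this correctly.
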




\MAXCUT decision problem: Given a positive integer $c^*$ and an unweighted graph $G=(V,E)$ where $V$ is the set of vertices of $G$ and $E$ is the set of edges of $G$, the goal is to determine whether there is a cut of $G$ has at least $c^*$ edges.

\begin{lemma}
\MAXCUT decision problem is {\bf NP-hard}.
\end{lemma}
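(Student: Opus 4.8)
The statement to prove is that the \MAXCUT decision problem is \textbf{NP-hard}. The plan is to give the standard reduction from the optimization-flavored \textsc{MaxCut} language to its decision version, but in this context the cleanest route is to cite the classical result: \textsc{MaxCut} (deciding whether a graph admits a cut with at least $c^*$ edges) is one of Karp's original 21 \textbf{NP}-complete problems, and in particular it is \textbf{NP}-hard. So the first step is simply to invoke this well-known fact. If instead a self-contained argument is wanted, the plan would be to reduce from a problem already known to be \textbf{NP}-hard --- the most economical choice being \textsc{Max-Di-Cut} or, more traditionally, a reduction from \textsc{NAE-3SAT} (Not-All-Equal 3SAT) to \textsc{MaxCut}, or the textbook reduction from \textsc{3SAT}.

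First I would recall the reduction target. Given an instance of \textsc{NAE-3SAT} with variables $x_1,\dots,x_n$ and clauses $C_1,\dots,C_m$ (each clause a set of three literals), one builds a graph $G$ on vertices corresponding to literals and their negations, inserts ``equality-enforcing'' gadgets (parallel edges or small subgraphs) forcing $x_i$ and $\overline{x_i}$ to opposite sides of any good cut, and for each clause adds a triangle on its three literal-vertices. A triangle contributes $2$ to the cut iff its three vertices are not all on the same side, i.e.\ iff the clause is NAE-satisfied, and contributes at most $2$ always. One then chooses the threshold $c^*$ to equal the number of gadget edges that must be cut plus $2m$, so that a cut of size $\ge c^*$ exists iff all gadgets are respected and every clause is NAE-satisfied, i.e.\ iff the \textsc{NAE-3SAT} instance is satisfiable. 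The second step is to verify this equivalence carefully: (a) every NAE-satisfying assignment yields a cut of size exactly $c^*$ (or more), and (b) any cut of size $\ge c^*$ must cut every gadget edge (by a counting argument, since $c^*$ is tight), hence induces a consistent assignment, and must get the full $2$ from every clause triangle, hence NAE-satisfies every clause. The third step is to note the reduction runs in polynomial time, since $G$ has $O(n+m)$ vertices and $O(n+m)$ edges.

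The main obstacle --- really the only subtlety --- is the counting bookkeeping that makes $c^*$ exactly tight, so that no ``cheating'' cut can sacrifice a gadget edge to gain elsewhere; this requires either using enough parallel edges in each consistency gadget (weight $\ge 2m+1$ suffices, and can be simulated by parallel edges or by a complete bipartite blow-up since we need an \emph{unweighted} graph) or a slightly more careful gadget design. Everything else is routine. Since the paper only needs \MAXCUT hardness as a black box for the subsequent reduction to $\ell_1$-low rank approximation, and since this is a textbook fact, I would in the actual write-up simply state:

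\begin{proof}
This is a classical result; \MAXCUT is \textbf{NP}-complete, and in particular \textbf{NP}-hard, by the reduction from \textsc{NAE-3SAT} (equivalently, it appears on Karp's list of \textbf{NP}-complete problems). See, e.g., \cite{gj79}.
\end{proof}

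\noindent If a citation macro for a standard complexity reference is not already available, I would replace \verb|\cite{gj79}| with a plain-text attribution (``Garey and Johnson'') to keep the source compiling, and otherwise rely on the gadget reduction sketched above to make the argument self-contained.
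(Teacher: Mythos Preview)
Your proposal is correct and matches the paper's approach: the paper states this lemma without proof, treating it as a well-known classical fact, and your proposed write-up does the same by citing Karp's list (or the standard \textsc{NAE-3SAT} reduction). Your additional sketch of the gadget reduction is fine but unnecessary here, since the paper uses \MAXCUT hardness purely as a black box.
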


We give the definition for the Hadamard matrix,
\begin{definition}
The Hadamard matrix $H_p$ of size $p\times p$ is defined recursively : $ \begin{bmatrix} H_{p/2} & H_{p/2} \\ H_{p/2} & - H_{p/2} \end{bmatrix} $ with $H_2 = \begin{bmatrix} +1 & +1 \\ +1 & -1 \end{bmatrix}$.
\end{definition}
For simplicity, we use $H$ to denote $H_p$ in the rest of the proof.


Recall the reduction shown in \cite{gv15} which is from \MAXCUT to computing $\|\cdot\|_{\infty\rightarrow 1}$ for $\{-1,+1\}$ matrices. We do the same thing: for a given graph $G=(V,E)$, we construct a matrix $A\in\{-1,+1\}^{n\times d}$ where $n=p|E|$ and $d=p|V|$. Notice that $p=\poly(|E|,|V|)$ is a parameter which will be determined later, and also $p$ is a power of $2$.

We divide the matrix $A$ into $|E|\times |V|$ blocks, and each block has size $p\times p$. For $e\in[|E|]$, if the $e^{th}$ edge has endpoints $i\in [|V|],j\in [|V|]$ and $i<j$, let all the $p\times p$ elements of $(e,i)$ block of $A$ be $1$, all the $p\times p$ elements of $(e,j)$ block of $A$ be $-1$, and all the $(e,l)$ block of $A$ be $p\times p$ Hadamard matrix $H$ for $l\neq i,j$.

\begin{claim}[Lower bound of $\|A\|_{\infty\rightarrow 1}$, proof of Theorem 1 of \cite{gv15}] \label{cla:lowerA2}
If there is a cut of $G$ with cut size at least $c$,
\begin{align*}
\|A\|_{\infty\rightarrow 1}\geq 2p^2c-|E||V|p^{3/2}.
\end{align*}
\end{claim}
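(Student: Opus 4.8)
The plan is to analyze the structure of $A$ block by block and show that a good cut produces a good rank-one sign pattern. Concretely, take $y \in \{-1,+1\}^d$ to be determined by a cut: writing $y$ in $|V|$ blocks of size $p$, set the $i$-th block to be $y_i \cdot (1,1,\dots,1)^\top$ where $(y_1,\dots,y_{|V|}) \in \{-1,+1\}^{|V|}$ encodes the cut (vertex on one side gets $+1$, the other side $-1$). I would similarly let $x \in \{-1,+1\}^n$ be chosen freely in $|E|$ blocks of size $p$, one block $x_e$ per edge. The quantity $x^\top A y$ decomposes as a sum over edges $e = (i,j)$, $i<j$, of $x_e^\top \big(\sum_{l} A^{(e,l)} y^{(l)}\big)$, and since $y^{(l)}$ is a constant vector $y_l \mathbf{1}$ on each block, the inner sum is $p \cdot y_i \mathbf{1} - p \cdot (-1)\, y_j$... wait — more carefully: the $(e,i)$ block is all $+1$, contributing $p\, y_i \mathbf{1}$; the $(e,j)$ block is all $-1$, contributing $-p\, y_j \mathbf{1}$; and each $(e,l)$ block with $l \neq i,j$ is the Hadamard matrix $H$, contributing $H (y_l \mathbf{1}) = y_l\, H\mathbf{1}$.

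The key point is that $H\mathbf{1}$ is highly concentrated: since $H$ is a $p\times p$ Hadamard matrix with first row all ones, $H\mathbf{1}$ equals $p\, e_1$ (the first column scaled), so $\|H\mathbf{1}\|_1 = p$. Thus the "noise" from the non-endpoint blocks contributes at most $\sum_{l\neq i,j} \|y_l H\mathbf{1}\|_1 = (|V|-2)p \le |V|p$ in absolute value per edge to any reasonable bound, but we need a more refined accounting: I would choose $x_e$ to align with the dominant term. For an edge $e=(i,j)$ that is cut, $y_i = -y_j$, so the endpoint contribution is $p(y_i - y_j)\mathbf{1} = \pm 2p\,\mathbf{1}$, a vector of $\ell_1$-norm $2p^2$; choosing $x_e = \pm\mathbf{1}$ to match its sign gives $x_e^\top(\text{endpoint part}) = 2p^2$, while the Hadamard part costs at most $\|x_e\|_\infty \cdot \sum_{l\neq i,j}\|y_l H\mathbf{1}\|_1 \le |V| p$ — hmm, that would give $2p^2|E|_{\text{cut}} - |E||V|p$, not the $p^{3/2}$ claimed. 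So the intended bound must instead bound the Hadamard contribution summed against the optimal $x_e$ by using that $\sum_l y_l H\mathbf{1} = (\sum_l y_l) H\mathbf{1}$ has $\ell_1$-norm $\le |V|\cdot p$ but in $\ell_2$ is only $|V|\sqrt{p}\cdot\|$... Actually the right move: combine all blocks first, $\sum_l A^{(e,l)} y^{(l)} = p y_i\mathbf{1} - p y_j\mathbf{1} + (\sum_{l\neq i,j} y_l) H\mathbf{1}$; choosing $x_e$ as the sign vector of this whole thing, $x_e^\top(\cdot) = \|p(y_i-y_j)\mathbf{1} + c_e H\mathbf{1}\|_1$ where $c_e = \sum_{l\ne i,j}y_l$. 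Since $H\mathbf{1} = p e_1$, this $\ell_1$-norm is $|p(y_i-y_j) + c_e p| + (p-1)|p(y_i-y_j)| = (p-1)\cdot 2p + |p(y_i-y_j)+c_ep|$ for a cut edge. Summing $(p-1)2p$ over $c$ cut edges gives $2p^2 c - 2pc$; to reach $2p^2 c - |E||V|p^{3/2}$ one absorbs lower-order terms crudely.

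So the proof I would write: fix the cut achieving $\ge c$; build $y$ from it as above; for each edge define $x_e$ as $\mathrm{sign}$ of the combined column vector $v_e := \sum_l A^{(e,l)}y^{(l)}$; then $x^\top A y = \sum_e \|v_e\|_1$. Lower bound each $\|v_e\|_1$: for the $\ge c$ cut edges the endpoint contribution alone, projected onto all-but-first coordinates where $H\mathbf{1}$ vanishes, gives $2p(p-1) \ge 2p^2 - 2p$; for all $|E|$ edges, crudely $\|v_e\|_1 \ge 0$, and we should also not lose too much on the first coordinate and on non-cut edges — I would simply note $\|v_e\|_1 \ge \|(v_e)_{2:p}\|_1 = |c'_e| \cdot (p-1)$ where $c'_e = p(y_i - y_j)$ equals $\pm 2p$ on cut edges and $0$ on non-cut edges, so $x^\top A y \ge \sum_{e \text{ cut}} 2p(p-1) \ge 2p^2 c - 2pc \ge 2p^2 c - |E||V|p^{3/2}$, the last inequality holding once $p \ge 1$ since $2pc \le 2p|E| \le |E||V|p^{3/2}$ for $p\ge 2$. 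The main obstacle is bookkeeping the constant/lower-order terms so the stated $|E||V|p^{3/2}$ slack is actually large enough to absorb everything — I expect this to be routine but requires care that $p$ is chosen (later) large enough and is a power of $2$, and that the Hadamard structure ($H\mathbf{1} = pe_1$, so off-first-coordinates of every Hadamard block vanish) is used exactly to isolate the clean $2p^2$ per cut edge. Since this claim is quoted directly from the proof of Theorem 1 of \cite{gv15}, I would cite that and reproduce only the short block computation above.
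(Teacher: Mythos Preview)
The paper does not prove this claim; it is quoted from \cite{gv15} and no argument is reproduced here. Your reconstruction of the argument is correct, and your final two-line version is complete: with $y$ constant on blocks encoding the cut and $x_e = \mathrm{sign}(v_e)$, coordinates $2,\dots,p$ of $v_e$ are exactly $p(y_i - y_j)$ (since $H\mathbf{1}=pe_1$ kills the Hadamard contribution there), giving $\|v_e\|_1 \ge 2p(p-1)$ on each cut edge and $\ge 0$ elsewhere, hence $x^\top A y \ge 2p(p-1)c \ge 2p^2c - |E||V|p^{3/2}$.

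One remark on the comparison: your observation $H\mathbf{1}=pe_1$ is sharper than what the stated bound suggests. The $p^{3/2}$ slack in the claim matches the cruder estimate $\|Hz\|_1 \le \sqrt{p}\,\|Hz\|_2 = p\|z\|_2 = p^{3/2}$ for arbitrary $z\in\{\pm1\}^p$, applied block-by-block to all $|E|(|V|-2)$ Hadamard blocks; that is presumably the route taken in \cite{gv15} and explains the form of the bound. Your argument gives the tighter slack $2pc$ and then discards it to match the claim, which is fine. For the write-up, drop the exploratory false starts and keep only the final clean computation; citing \cite{gv15} and giving the short block calculation is exactly right.
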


\begin{claim}[Upper bound of $\|A\|_{\infty\rightarrow 1}$, proof of Theorem 1 of \cite{gv15}] \label{cla:upperA2}
If the max cut of $G$ has fewer than $c$ edges,
\begin{align*}
\|A\|_{\infty\rightarrow 1}\leq 2p^2(c-1)+|E||V|p^{3/2}.
\end{align*}
\end{claim}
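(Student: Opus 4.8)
The plan is to reconstruct the argument behind Theorem~1 of \cite{gv15}, which bounds $\|A\|_{\infty\rightarrow 1}=\max_{x\in\{-1,+1\}^n,\,y\in\{-1,+1\}^d}x^\top A y$ (the complementary bound to Claim~\ref{cla:lowerA2}). I would index the rows of $A$ by pairs $(e,a)$ with $e\in[|E|]$, $a\in[p]$ and the columns by pairs $(l,b)$ with $l\in[|V|]$, $b\in[p]$, and for $x\in\{-1,+1\}^n$, $y\in\{-1,+1\}^d$ write $x_e=(x_{e,1},\dots,x_{e,p})\in\{-1,+1\}^p$, $y_l=(y_{l,1},\dots,y_{l,p})\in\{-1,+1\}^p$, $s_e=\sum_a x_{e,a}$, $t_l=\sum_b y_{l,b}$, so $|s_e|\le p$ and $|t_l|\le p$. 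Using that the $(e,i)$-block is all $+1$, the $(e,j)$-block is all $-1$ (here the $e$-th edge is $\{i,j\}$, $i<j$), and every other $(e,l)$-block is the $p\times p$ Hadamard matrix $H$, a direct computation gives
\[
x^\top A y=\sum_{e=\{i,j\}} x_e^\top\Big[(t_i-t_j)\,\mathbf{1}+H w_e\Big],\qquad w_e:=\sum_{l\neq i,j} y_l,
\]
where $\mathbf{1}$ is the all-ones vector of length $p$. Since the blocks $x_e$ are free and vary independently, $\max_x x^\top A y=\sum_{e}\big\|(t_i-t_j)\mathbf{1}+H w_e\big\|_1$ for each fixed $y$.

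The first ingredient I would use is the Hadamard spectral bound: $H^\top H=pI_p$, hence $\|Hz\|_2=\sqrt{p}\,\|z\|_2$ for all $z$, so $\|Hw_e\|_1\le\sqrt{p}\,\|Hw_e\|_2=p\,\|w_e\|_2\le |V|\,p^{3/2}$, because each of the $p$ coordinates of $w_e$ is a sum of at most $|V|$ signs and thus $\|w_e\|_2\le\sqrt{p}\,|V|$. By the triangle inequality $\|(t_i-t_j)\mathbf{1}+Hw_e\|_1\le p\,|t_i-t_j|+|V|p^{3/2}$, and summing over the $|E|$ edges yields $\max_x x^\top Ay\le p\sum_{e=\{i,j\}}|t_i-t_j|+|E||V|p^{3/2}$ for every fixed $y$.

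It remains to maximize over $y$. I would observe that $\Phi(t):=\sum_{e=\{i,j\}}|t_i-t_j|$ is convex on the box $[-p,p]^{|V|}$, so it attains its maximum there at a vertex, i.e.\ with every $t_l\in\{-p,+p\}$; at such a vertex $|t_i-t_j|$ equals $2p$ exactly on the edges crossing the partition $\{l:t_l=p\}$ versus $\{l:t_l=-p\}$ and is $0$ otherwise, so $\Phi(t)\le 2p\cdot\mathrm{maxcut}(G)$. Since every $y\in\{-1,+1\}^d$ yields $t=(t_1,\dots,t_{|V|})\in[-p,p]^{|V|}$, and by hypothesis $\mathrm{maxcut}(G)\le c-1$, we get $\Phi(t)\le 2p(c-1)$, and therefore $x^\top A y\le 2p^2(c-1)+|E||V|p^{3/2}$ for all $x,y$, which is the claim. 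I expect the step needing the most care to be this last reduction of $\max_y\Phi(t)$ to a max-cut quantity — one must notice the convexity of $\Phi$ on the box and that $\{-1,+1\}^d$ maps into it — together with pinning down the constants in the Hadamard estimate; the block decomposition itself is routine bookkeeping.
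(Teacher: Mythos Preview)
Your reconstruction is correct. The paper itself does not supply a proof of this claim: it is simply quoted from the proof of Theorem~1 in \cite{gv15}, and your argument is exactly the standard one behind that result --- the block decomposition of $x^\top A y$, the $\ell_1\!\to\!\ell_2$ Hadamard estimate giving the $|E||V|p^{3/2}$ term, and the reduction of $\sum_{e=\{i,j\}}|t_i-t_j|$ to $2p\cdot\mathrm{maxcut}(G)$ via convexity on the box $[-p,p]^{|V|}$. (One cosmetic remark: the paper's Definition of $\|A\|_{\infty\to 1}$ is written with a $\min$, but as your proof and Lemma~\ref{lem:sum} make clear, the intended quantity is $\max_{x,y\in\{\pm1\}}x^\top Ay$, and you correctly work with that.)
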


\begin{remark}
In \cite{gv15}, they set $p$ as a power of $2$ and $p>|E|^2|V|^2$. This implies
\begin{align*}
\forall c\in [|E|],2p^2(c-1)+|E||V|p^{3/2}<2p^2c-|E||V|p^{3/2}.
\end{align*}
Therefore, according to Claim~\ref{cla:lowerA2} and Claim~\ref{cla:upperA2}, if we can know the precise value of $\|A\|_{\infty\rightarrow 1}$, we can decide whether $G$ has a cut with cut size at least $c^*$.
\end{remark}



For convenience, we use $T^*$ to denote $\|A\|_{\infty\rightarrow1}$ and use $L^*$ to denote
\begin{align*}
\min_{x\in\mathbb{R}^n,y\in\mathbb{R}^d} \|A-x^\top y\|_1.
\end{align*}
Also, we use $L$ to denote a $(1+\frac{1}{nd})$ relative error approximation to $L^*$, which means:
\begin{align*}
L^*\leq L \leq (1+\frac1{nd})L^*.
\end{align*}
We denote $T$ as $nd-L$.

\begin{proof}[Proof of Theorem~\ref{thm:main}]

Because $L^*\leq L \leq (1+\frac1{nd})L^*$, we have:
\begin{align*}
nd-L^* \geq nd-L \geq nd-(1+\frac1{nd})L^*.
\end{align*}
Due to Lemma~\ref{lem:sum} and the definition of $T$, it has:
\begin{align*}
T^*\geq T \geq T^*-\frac1{nd}L^*.
\end{align*}
Notice that $A$ is a $\{-1,+1\}$ matrix, we have
\begin{align*}
L^*\leq \|A\|_1\leq 2nd.
\end{align*}
Thus,
\begin{align*}
T^*\geq T \geq T^*-2.
\end{align*}
It means
\begin{align*}
T+2\geq T^* \geq T.
\end{align*}
According to Claim~\ref{cla:lowerA}, if $G$ has a cut with cut size at least $c$, we have:
\begin{align*}
T+2\geq T^*\geq 2p^2c-|E||V|p^{3/2}.
\end{align*}
That is
\begin{align*}
T\geq 2p^2c-|E||V|p^{3/2}-2.
\end{align*}
According to Claim~\ref{cla:upperA}, if the max cut of $G$ has fewer than $c$ edges,
\begin{align*}
T\leq T^* \leq 2p^2(c-1)+|E||V|p^{3/2}.
\end{align*}

Let $p$ be a power of $2$ and $p>|E|^3|V|^3$, we have
\begin{align*}
2p^2(c-1)+|E||V|p^{3/2}<2p^2c-|E||V|p^{3/2}-2.
\end{align*}
Therefore, we can decide whether $G$ has a cut with size at least $c$ based on the value of $T$.

Thus, if we can compute $\wh{x}\in\mathbb{R}^n,\wh{y}\in\mathbb{R}^d$ s.t.
\begin{align*}
\|A-\wh{x}^\top\wh{y}\|_1 \leq (1+\frac{1}{nd})\min_{x\in\mathbb{R}^n,y\in\mathbb{R}^d} \|A-x^\top y\|_1,
\end{align*}
in polynomial time, it means we can compute $L$ and $T$ in polynomial time, and we can solve \MAXCUT decision problem via the value of $T$, which leads to a contradiction.
\end{proof}

\subsection{Using the \ETH~assumption}\label{sec:eth_result}
The goal of this section is to prove Theorem \ref{thm:hard_main2}. We first introduce the definition of \SAT and Exponential Time Hypothesis(\ETH). For the details and background of \SAT problem, we refer the readers to \cite{ab09}.

\begin{definition}[\SAT problem]\label{def:SAT_problem}
Given an $r$ variables and $m$ clauses conjunctive normal form \CNF formula with size of each clause at most $3$, the goal is to decide whether there exists an assignment for the $r$ boolean variables to make the \CNF formula be satisfied.
\end{definition}

\begin{hypothesis}[Exponential Time Hypothesis (\ETH)~\cite{IPZ98}]\label{hyp:ETH}
There is a $\delta>0$ such that \SAT problem defined in Definition~\ref{def:SAT_problem} cannot be solved in $O(2^{\delta r})$ running time.
\end{hypothesis}

The main lower bound is stated as follows:

\begin{theorem}\label{thm:hard_main2}
Unless \ETH (see Hypothesis~\ref{hyp:ETH}) fails, for arbitrarily small constant $\gamma >0$, given some matrix $A\in\{-1,+1\}^{n\times d}$, there is no algorithm can compute $\wh{x}\in\mathbb{R}^n,\wh{y}\in\mathbb{R}^d$ s.t.
\begin{align*}
\|A-\wh{x}^\top\wh{y}\|_1 \leq (1+\frac{1}{\log^{1+\gamma} nd})\min_{x\in\mathbb{R}^n,y\in\mathbb{R}^d} \|A-x^\top y\|_1,
\end{align*}
in $(nd)^{O(1)}$ running time. 
\end{theorem}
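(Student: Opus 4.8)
\textbf{Proof proposal for Theorem~\ref{thm:hard_main2}.}
The plan is to strengthen Theorem~\ref{thm:main} (which already gives NP-hardness of $(1+\frac{1}{nd})$-approximation) by a padding construction that makes the inapproximability gap relative to $\log(nd)$ rather than $nd$. The starting point is the same \MAXCUT-based instance $A \in \{-1,+1\}^{p|E| \times p|V|}$ from Section~\ref{sec:np_result}, but now invoked under the ETH rather than under P$\ne$NP. First I would recall that under the ETH, \MAXCUT on sparse graphs (with $|E| = O(|V|)$) cannot be solved in $2^{o(|V|)}$ time, and moreover — and this is the extra ingredient the remark after the technical overview flags — under the ETH \MAXCUT remains hard to approximate even to within a constant factor on bounded-degree graphs, by the PCP theorem combined with a sparsification/almost-linear-size reduction (e.g.\ Dinur's PCP or the Moshkovitz--Raz construction). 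So I would fix a \MAXCUT instance $G$ on $N := |V|$ vertices with $|E| = O(N)$ edges, and set $p$ a power of $2$ with $p > |E|^3|V|^3 = \poly(N)$, obtaining an $n_0 \times d_0$ matrix $A_0$ with $n_0, d_0 = \poly(N)$ for which distinguishing the two cases of Claims~\ref{cla:lowerA2} and~\ref{cla:upperA2} is ETH-hard in $2^{o(N)}$ time.

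Next I would embed $A_0$ as a small sub-block of a much larger $n \times d$ matrix $\widehat{A}$, where $n, d = 2^{\Theta(N / \log^{1+\gamma/2} N)}$ (so that $\log(nd) = \Theta(N/\log^{1+\gamma/2} N)$, hence $N = \Theta(\log(nd) \cdot \log^{1+\gamma/2}(nd))$ up to lower-order factors). The rest of $\widehat{A}$ is filled with very small entries — say all equal to some tiny $\varepsilon > 0$ (chosen $\le (n_0 d_0)^{-c}$ for a suitable constant $c$), or alternatively $\pm 1$ entries arranged in a Hadamard-type block so that no rank-$1$ matrix can exploit them; the cleanest is probably to pad with a scaled identity-like or Hadamard block whose contribution to the optimal rank-$1$ cost is negligible and additive. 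The point of the padding is: (i) the optimal rank-$1$ approximation of $\widehat{A}$ essentially decomposes — the cost restricted to the $A_0$-block is $L^* = \min \|A_0 - xy^\top\|_1$ up to a tiny additive error, and the padding contributes a fixed, computable amount independent of how well the $A_0$ block is fit; (ii) the total optimal cost $\OPT$ is $\Theta(nd \cdot \varepsilon)$ or $\Theta(nd)$, which is \emph{exponentially larger} than the $p^2$-scale gap inside $A_0$. Concretely, the additive gap between the YES and NO cases is still $\Theta(p^2) = \poly(N)$, coming from Claims~\ref{cla:lowerA2}--\ref{cla:upperA2} via Lemma~\ref{lem:sum}, but now $\OPT \le 2nd$, so a $(1+\beta)$-approximation with $\beta = \frac{\poly(N)}{nd}$ suffices to distinguish the cases. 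Since $nd = 2^{\Theta(N/\log^{1+\gamma/2} N)}$ and $\log(nd) = \Theta(N/\log^{1+\gamma/2} N)$, we get $\frac{1}{\log^{1+\gamma}(nd)} = \frac{\log^{(1+\gamma/2)(1+\gamma)} N}{N^{1+\gamma}} \ge \frac{\poly(N)}{nd}$ for $n, d$ large enough, so a $(1+\frac{1}{\log^{1+\gamma}(nd)})$-approximation algorithm would decide the padded instance.

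Then I would carry out the reduction-time bookkeeping: constructing $\widehat{A}$ takes time $\poly(n,d) = 2^{O(N/\log^{1+\gamma/2} N)} = 2^{o(N)}$, so a purported $(nd)^{O(1)} = 2^{O(N/\log^{1+\gamma/2} N)} = 2^{o(N)}$-time approximation algorithm would, composed with the reduction, solve the ETH-hard sparse \MAXCUT instance in $2^{o(N)}$ time, contradicting the ETH. I would also need the analogue of Theorem~\ref{thm:main}'s argument inside the padded setting: relate a $(1+\beta)$-approximate cost $L$ to $L^* = nd_0 \text{-block cost}$, then to $T^* = \|A_0\|_{\infty\to 1}$ via $\|A_0\|_{\infty\to1} + L^* = n_0 d_0$ (Lemma~\ref{lem:sum}) and Lemma~\ref{lem:equiv} to pass from real-valued to $\pm1$ factors, so that the value of $T = n_0 d_0 - (\text{$A_0$-block part of } L)$ determines the cut size; the only new wrinkle is controlling the additive slack introduced by the padding, which the choice of $\varepsilon$ (or the Hadamard padding) makes $o(1)$.

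The main obstacle I expect is getting the padding to genuinely \emph{decouple}: I must argue that the optimal (or near-optimal) rank-$1$ approximation of $\widehat{A}$ cannot ``cheat'' by using a single rank-$1$ direction that simultaneously fits part of the huge padding region and part of the $A_0$-block in a way that beats the decoupled bound — i.e.\ I need a clean lower bound $\OPT(\widehat{A}) \ge L^*(A_0) + (\text{padding cost}) - o(1)$ that holds for \emph{every} rank-$1$ matrix, not just block-structured ones. This is exactly the kind of statement where the $\ell_1$-norm is delicate (no orthogonality to exploit), and it is probably handled by choosing the padding to be $\pm1$ Hadamard blocks placed in fresh rows and columns disjoint from $A_0$, invoking an $\|\cdot\|_{\infty\to1}$-type bound (à la Claims~\ref{cla:lowerA2}--\ref{cla:upperA2}) to show any rank-$1$ fit of the padding pays $\Omega(nd)$ regardless, and then arguing the cross terms between the $A_0$ rows/columns and the padding rows/columns only help by an additive $o(1) \cdot p^2$. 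A secondary obstacle is making sure the constant-factor-hardness of sparse \MAXCUT under ETH is invoked correctly: I need an almost-linear-size PCP reduction so that the blowup from $3$-SAT on $r$ variables to the \MAXCUT instance is $r \cdot \poly\log r$, keeping the ETH lower bound at $2^{\Omega(r/\poly\log r)}$, which translates to the $\log^{1+\gamma}$ (rather than $\log$) exponent in the final bound — this is why the theorem has $1+\gamma$ and not $1$.
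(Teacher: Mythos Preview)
Your padding route has a genuine gap. The theorem requires $A\in\{-1,+1\}^{n\times d}$, so the tiny-$\varepsilon$ padding is ruled out immediately. With $\pm 1$ Hadamard-type padding in fresh rows and columns, the padding region alone forces $\OPT(\widehat{A})=\Theta(nd)$ (a rank-$1$ matrix cannot fit a Hadamard block to better than a constant fraction of its entries). But then a $(1+\beta)$-approximation with $\beta=1/\log^{1+\gamma}(nd)$ has \emph{additive} error $\beta\cdot\OPT=\Theta(nd/\log^{1+\gamma}(nd))$, which is still exponential in $N$, whereas the YES/NO gap inside the $A_0$-block is only $\Theta(p^2|E|)=\poly(N)$. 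So even if the decoupling lower bound you worry about held exactly, the approximation guarantee is far too weak to recover the $A_0$-block cost to the required precision. (Your inequality ``$\frac{1}{\log^{1+\gamma}(nd)}\ge\frac{\poly(N)}{nd}$'' goes the wrong way: you need $\beta\le\frac{\text{gap}}{\OPT}$, not $\ge$.)

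The paper's proof avoids padding entirely. It keeps the Gillis--Vavasis block construction but takes the block size $p$ itself to be sub-exponential, $p\approx 2^{\frac{2}{a-b}|V|^{1-\gamma/10}}$, so the \emph{whole} matrix $A$ is the GV instance with $nd=p^2|E||V|=2^{O(|V|^{1-\gamma/10})}$. Crucially, it invokes gap-\MAXCUT on $5$-regular graphs (Theorem~\ref{thm:maxcut_sparse2}): distinguishing cut value $\ge a|E|$ from $\le b|E|$ for fixed constants $a>b$ is ETH-hard. With this constant-factor gap, Claims~\ref{cla:lowerA} and~\ref{cla:upperA} give a YES/NO separation of order $(a-b)p^2|E|=\Theta(nd/|V|)$, not merely $\poly(|V|)$. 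Now $\log^{1+\gamma}(nd)\ge\bigl(\tfrac{2}{a-b}|V|^{1-\gamma/10}\bigr)^{1+\gamma}\ge\tfrac{2}{a-b}|V|$ (using $(1-\gamma/10)(1+\gamma)>1$), so the additive error $2nd/W\le(a-b)\,nd/|V|$ fits inside the gap. Since $\poly(nd)=2^{O(|V|^{1-\gamma/10})}=2^{o(|V|)}$, a polynomial-time approximation would contradict ETH. The key idea you were missing is to scale $p$ up rather than pad around a fixed $A_0$; this keeps the gap a constant fraction of $nd/|V|$ instead of a vanishing fraction of $nd$.
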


Before we prove our lower bound, we introduce the following theorem which is used in our proof.

\begin{definition}[\MAXCUT decision problem]\label{def:max_cut}
Given a positive integer $c^*$ and an unweighted graph $G=(V,E)$ where $V$ is the set of vertices of $G$ and $E$ is the set of edges of $G$, the goal is to determine whether there is a cut of $G$ has at least $c^*$ edges.
\end{definition}

\begin{theorem}[Theorem 6.1 in \cite{flp15}]\label{thm:maxcut_sparse2}
There exist constants $a,b\in(0,1)$ and $a>b$, such that, for a given \MAXCUT (see Definition~\ref{def:max_cut}) instance graph $G=(E,V)$ which is an $n$-vertices 5-regular graph, if there is an algorithm in time $2^{o(n)}$ which can distinguish the following two cases:
\begin{enumerate}
\item At least one cut of the instance has at least $a|E|$ edges,
\item All cuts of the instance have at most $b|E|$ edges,
\end{enumerate}
then \ETH (see Hypothesis~\ref{hyp:ETH}) fails.
\end{theorem}


\begin{proof}[Proof of Theorem~\ref{thm:hard_main2}]
We prove it by contradiction. We assume, for any given $A\in\{-1,+1\}^{n\times d}$, there is an algorithm can compute $\wh{x}\in\mathbb{R}^n,\wh{y}\in\mathbb{R}^d$ s.t.
\begin{align*}
\|A-\wh{x}^\top\wh{y}\|_1 \leq (1+\frac{1}{W})\min_{x\in\mathbb{R}^n,y\in\mathbb{R}^d} \|A-x^\top y\|_1,
\end{align*}
in time $\poly(nd)$, where $W=\log^{1+\gamma} d$ for arbitrarily small constant $\gamma>0$. Then, we show the following. There exist constants $a,b\in [0,1],a>b$, for a given \MAXCUT instance $G=(V,E)$ with $|E|=O(|V|)$, such that we can distinguish whether $G$ has a cut with size at least $a|E|$ or all the cuts of $G$ have size at most $b|E|$ in $2^{o(|V|)}$ time, which leads to a contradiction to Theorem~\ref{thm:maxcut_sparse2}.


Recall the reduction shown in \cite{gv15} which is from \MAXCUT to computing $\|\cdot\|_{\infty\rightarrow 1}$ for $\{-1,+1\}$ matrices. We do similar things here: for a given graph $G=(V,E)$ where $|E|=O(|V|)$, we construct a matrix $A\in\{-1,+1\}^{n\times d}$ where $n=p|E|$ and $d=p|V|$. Notice that $p$ is a parameter which will be determined later, and also $p$ is a power of $2$.

We divide the matrix $A$ into $|E|\times |V|$ blocks, and each block has size $p\times p$. For $e\in[|E|]$, if the $e^{th}$ edge has endpoints $i\in [|V|],j\in [|V|]$ and $i<j$, let all the $p\times p$ elements of $(e,i)$ block of $A$ be $1$, all the $p\times p$ elements of $(e,j)$ block of $A$ be $-1$, and all the $(e,l)$ block of $A$ be $p\times p$ Hadamard matrix $H$ for $l\neq i,j$.

We can construct the matrix in $nd$ time, which is $p^2|E||V|$. We choose $p$ to be the smallest number of power of $2$ which is larger than $2^{\frac{2}{a-b}|V|^{1-\frac\gamma{10}}}$ for some $\gamma>0$. Thus, the time for construction of the matrix $A$ is $O(nd)=2^{O(|V|^{1-\frac{\gamma}{10}})}$.

We will show, if we can compute a $(1+1/W)$-approximation to $A$, we can decide whether $G$ has a cut with size at least $a|E|$ or has no cut with size larger than $b|E|$. For convenience, we use $T^*$ to denote $\|A\|_{\infty\rightarrow1}$ and use $L^*$ to denote
\begin{align*}
\min_{x\in\mathbb{R}^n,y\in\mathbb{R}^d} \|A-x^\top y\|_1.
\end{align*}
Also, we use $L$ to denote a $(1+\frac{1}{W})$ relative error approximation to $L^*$, which means:
\begin{align*}
L^*\leq L \leq (1+\frac1{W})L^*.
\end{align*}
We denote $T$ as $nd-L$.

Because $L^*\leq L \leq (1+\frac1{W})L^*$, we have:
\begin{align*}
nd-L^* \geq nd-L \geq nd-(1+\frac1{W})L^*.
\end{align*}
Due to Lemma~\ref{lem:sum} and the definition of $T$, it has:
\begin{align*}
T^*\geq T \geq T^*-\frac1{W}L^*.
\end{align*}
Notice that $A$ is a $\{-1,+1\}$ matrix, we have
\begin{align*}
L^*\leq \|A\|_1\leq 2nd.
\end{align*}
Thus,
\begin{align*}
T^*\geq T \geq T^*-2nd/W.
\end{align*}
It means
\begin{align*} 
T+2nd/W\geq T^* \geq T.
\end{align*}

\begin{claim}[Lower bound of $\|A\|_{\infty\rightarrow 1}$, proof of Theorem 1 of \cite{gv15}] \label{cla:lowerA}
If there is a cut of $G$ with cut size at least $c$,
\begin{align*}
\|A\|_{\infty\rightarrow 1}\geq 2p^2c-|E||V|p^{3/2}.
\end{align*}
\end{claim}

\begin{claim}[Upper bound of $\|A\|_{\infty\rightarrow 1}$, proof of Theorem 1 of \cite{gv15}] \label{cla:upperA}
If the max cut of $G$ has fewer than $c$ edges,
\begin{align*}
\|A\|_{\infty\rightarrow 1}\leq 2p^2(c-1)+|E||V|p^{3/2}.
\end{align*}
\end{claim}

According to Claim~\ref{cla:lowerA}, if $G$ has a cut with cut size at least $a|E|$, we have:
\begin{align*}
T+2nd/W\geq T^*\geq 2p^2a|E|-|E||V|p^{3/2}.
\end{align*}
That is
\begin{align}\label{eq:right_hand}
T\geq 2p^2a|E|-|E||V|p^{3/2}-2nd/W.
\end{align}
According to Claim~\ref{cla:upperA}, if the max cut of $G$ has fewer than $b|E|$ edges,
\begin{align}\label{eq:left_hand}
T\leq T^* \leq 2p^2b|E|+|E||V|p^{3/2}.
\end{align}

Using these conditions $p\geq 2^{\frac2{a-b}|V|^{1-\frac{\gamma}{10}}},d=p|V|,W\geq\log^{1+\gamma} d$, we can lower bound $|W|$ by $|V|$ up to some constant,
\begin{align}\label{eq:hardness_lower_bound_W}
W &  \geq ~  \log^{1+\gamma} d & \text{~by~} W\geq\log^{1+\gamma} d \notag \\
  &= ~ \log^{1+\gamma} (p|V|) &\text{~by~} d=p|V| \notag \\
  &= ~ (\log |V|+\log p)^{1+\gamma} \notag \\
  &\geq ~ (\log |V|+\frac2{a-b}|V|^{1-\frac{\gamma}{10}})^{1+\gamma} & \text{~by~} p\geq 2^{\frac2{a-b}|V|^{1-\frac{\gamma}{10}}} \notag \\
  & \geq~   \frac{2}{a-b}|V|. & \text{~by~} (1-\gamma/10)(1+\gamma)>1 \text{~for~} \gamma \text{~small enough~}
\end{align}
Thus, we can upper bound $1/W$ in the following sense,
\begin{align}\label{eq:hardness_upper_bound_frac1W}
\frac1W  \leq \frac{a-b}{2|V|} \leq \frac{a-b}{|V|}-p^{-\frac12},
\end{align}
where the first inequality follows by Equation (\ref{eq:hardness_lower_bound_W}) and the second inequality follows by $p\geq 2^{\frac2{a-b}|V|^{1-\frac{\gamma}{10}}}$, $\gamma$ is sufficient small, and $|V|$ is large enough.

Now, we can conclude,
\begin{align}
& ~\frac1W  \leq \frac{a-b}{|V|}-p^{-\frac12} \notag \\
\iff & ~p^2|E||V|/W \leq  (a-b)p^2|E|-|E||V|p^{\frac32} \notag\\
&\text{~by~multiplying~$p^2|E||V|$~on~both~sides}\notag \\
\iff & ~2nd/W\leq2(a-b)p^2|E|-2|E||V|p^{\frac32} \notag\\
 & \text{~by~multiplying~$2$~on~both~sides~and~$p^2|E||V|=nd$} \notag \\
\iff & ~2p^2b|E|+|E||V|p^{3/2}<2p^2a|E|-|E||V|p^{3/2}-2nd/W, \label{eq:final} \\
&\text{~by~adding~$2p^2b|E|+|E||V|p^{3/2}-2nd/W$~on~both~sides} \notag
\end{align}
which implies that Equation (\ref{eq:final}) is equivalent to Equation (\ref{eq:hardness_upper_bound_frac1W}).
Notice that the LHS of (\ref{eq:final}) is exactly the RHS of (\ref{eq:left_hand}) and the RHS of (\ref{eq:final}) is exactly the RHS of (\ref{eq:right_hand}).
Therefore, we can decide whether $G$ has a cut with size larger than $a|E|$ or has no cut with size larger than $b|E|$.

Thus, if we can compute $\wh{x}\in\mathbb{R}^n,\wh{y}\in\mathbb{R}^d$ s.t.
\begin{align*}
\|A-\wh{x}^\top\wh{y}\|_1 \leq (1+\frac{1}{\log^{1+\gamma} d})\min_{x\in\mathbb{R}^n,y\in\mathbb{R}^d} \|A-x^\top y\|_1,
\end{align*}
in $\poly(nd)$ time, which means we can compute $L$ and $T$ in $\poly(nd)$ time. Notice that $nd=p^2|E||V|,|E|=O(|V|),p\geq 2^{\frac2{a-b}|V|^{1-\frac{\gamma}{10}}}$, it means $\poly(nd)=2^{O(|V|^{1-\frac{\gamma}{10}})}$.
 Because we decide whether $G$ has a cut with size larger than $a|E|$ or has no cut with size larger than $b|E|$ via the value of $T$, we can solve it in $2^{O(|V|^{1-\frac{\gamma}{10}})}$ time which leads to a contradiction to Theorem~\ref{thm:maxcut_sparse2}.
\end{proof}

\subsection{Extension to the rank-$k$ case}\label{sec:eth_rankk_result}

This section presents a way of reducing the rank-$k$ case to the rank-$1$ case. Thus, we can obtain a lower bound for general $k\geq 1$ under \ETH.

\begin{theorem}
For any constants $c_1>0,c_2>0$ and $c_3>0$, and any constant $c_4 \geq 10(c_1+c_2+c_3+1)$, given any matrix $A\in \mathbb{R}^{n\times n}$ with absolute value of each entry bounded by $n^{c_1}$, we define a block diagonal matrix $\wt{A}\in \mathbb{R}^{(n+k-1)\times (n+k-1)}$ as
\begin{align*}
\wt{A}=\begin{bmatrix} A & 0 & 0 & \cdots & 0 \\ 0 & B & 0 & \cdots & 0\\ 0 & 0 & B & \cdots & 0 \\ \cdots & \cdots & \cdots & \cdots & \cdots\\ 0 & 0 & 0 & \cdots & B \end{bmatrix},
\end{align*}
where $B=n^{c_4}$.
If $\wh{A}$ is an $\ell_1$-norm rank-$k$ $C$-approximation solution to $\wt{A}$, i.e.,
\begin{align*}
\|\wh{A}-\wt{A}\|_1\leq C\cdot \min_{\rank-k\ \wh{A}'} \|\wh{A}'-\wt{A}\|_1,
\end{align*}
where $ C \in [1, n^{c_3}]$, then there must exist $j^*\in[n]$ such that
\begin{align*}
\min_{v\in\mathbb{R}^{n}} \|\wh{A}_{j^*}^{[1:n]}v^\top-A\|_1\leq C\cdot \min_{u,v\in\mathbb{R}^{n}}\|uv^\top-A\|_1+1/n^{c_2},
\end{align*}
i.e., the first $n$ coordinates of the column $j^*$ of $\wh{A}$ can give an $\ell_1$-norm rank-$1$ $C$-approximation to $A$.
\end{theorem}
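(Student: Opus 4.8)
The plan is to show that any rank-$k$ $C$-approximation $\wh{A}$ to $\wt{A}$ must "spend" essentially all of its rank budget on the $B$-blocks, leaving a rank-$1$ piece that handles the $A$ block well. First I would lower-bound the optimum $\OPT_k := \min_{\rank\text{-}k\ \wh{A}'}\|\wh{A}'-\wt{A}\|_1$: by choosing $\wh{A}'$ to be the block-diagonal matrix with an optimal rank-$1$ approximation to $A$ in the top-left block and the $k-1$ singleton blocks $B$ filled in exactly, we get $\OPT_k \le \min_{u,v}\|uv^\top - A\|_1 \le \|A\|_1 \le n^{c_1+2}$. Hence $\|\wh{A}-\wt{A}\|_1 \le C\cdot\OPT_k \le n^{c_3}\cdot n^{c_1+2} = n^{c_1+c_3+2}$, which is tiny compared to $B = n^{c_4}$ since $c_4 \ge 10(c_1+c_2+c_3+1)$.

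The key structural step is a pigeonhole/rank-counting argument on the $k-1$ diagonal entries equal to $B$. Label these positions $p_1,\dots,p_{k-1}$ (rows/columns $n+1,\dots,n+k-1$ of $\wt{A}$). Since $\|\wh{A}-\wt{A}\|_1 \ll B$, each of these $k-1$ entries of $\wh{A}$ is close to $B$ — in particular nonzero and bounded away from $0$ — so the $(k-1)\times(k-1)$ submatrix of $\wh{A}$ on rows and columns $\{p_1,\dots,p_{k-1}\}$ is "nearly diagonal" with large diagonal: I would argue its determinant is $\Theta(B^{k-1}) \ne 0$ (the off-diagonal entries, and the deviations of the diagonal from $B$, are each at most $\|\wh{A}-\wt{A}\|_1 \le n^{c_1+c_3+2} \ll B$, so a crude Gershgorin/expansion bound gives nonvanishing determinant). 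Therefore these $k-1$ columns of $\wh{A}$ are linearly independent. Since $\rank(\wh{A}) \le k$, the column space of $\wh{A}$ is spanned by these $k-1$ columns together with \emph{at most one} additional direction; equivalently, restricting to the first $n$ coordinates, the submatrix $\wh{A}^{[1:n]}$ (rows $1$ through $n$, all columns) has rank at most $1$ — because every column of $\wh{A}$, when projected onto the first $n$ coordinates, is a linear combination of (i) the projections of the $k-1$ special columns, which are small vectors of $\ell_1$-norm at most $n^{c_1+c_3+2}$ each, and (ii) one extra direction. Making this precise may need a little care: I would instead directly say $\wh{A}^{[1:n]} = M + N$ where $\rank(M)\le 1$ and $N$ collects the contributions routed through the $k-1$ special columns, and bound $\|N\|_1$.

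From there the argument is: $\min_{u,v}\|uv^\top - \wh{A}^{[1:n]}\|_1 \le \|N\|_1$, a quantity I will bound by $O(k\cdot n^{c_1+c_3+2}\cdot n) \le n^{c_1+c_3+4}$ say; this is $\le \tfrac{1}{2}n^{-c_2}$ once $c_4$ dominates — wait, that bound has the wrong sign, so instead the cleaner route: $\wh{A}^{[1:n]}$ itself is within $\|\wh{A}-\wt{A}\|_1$ (restricted to the $A$-block rows, i.e. the quantity $\|\wh{A}^{[1:n],[1:n]} - A\|_1 \le \|\wh{A}-\wt{A}\|_1 \le C\cdot\OPT_k$) of $A$ in $\ell_1$, and $\wh{A}^{[1:n],[1:n]}$ has rank at most... the point is one must show its \emph{effective} rank-$1$ distance is small. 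Concretely: let $r_j = \wh{A}_j^{[1:n]}$ for $j\in[n]$ be the columns. If all these lie within a rank-$1$ subspace we take $j^* = \arg\min_j \|\wh{A}_j^{[1:n]}v^\top - A\|_1$; the residual is then at most $\|\wh{A}^{[1:n],[1:n]} - A\|_1 \le C\cdot\OPT_k \le C\cdot\min_{u,v}\|uv^\top-A\|_1 + (\text{slack})$, and the slack comes from the $N$ correction term above, bounded by $1/n^{c_2}$ by choosing $c_4$ large enough relative to $c_1,c_2,c_3$. I expect the main obstacle to be step two: carefully showing that the rank-$k$ constraint plus the near-diagonal $B$-block forces $\wh{A}^{[1:n]}$ to be rank-$1$ up to an $\ell_1$-error of size $n^{-c_2}$, i.e. propagating "the $k-1$ large entries eat $k-1$ dimensions of rank" through a quantitative linear-algebra estimate (determinant lower bound, then a bound on the projection of the leftover column space onto the first $n$ coordinates). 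Everything else is bookkeeping with the generous polynomial gap $c_4 \ge 10(c_1+c_2+c_3+1)$.
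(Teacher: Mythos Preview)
Your high-level plan matches the paper's approach, and you have correctly identified the crux. The specific missing idea is what the paper isolates as a ``large gap'' claim: for \emph{any} $x\in\mathbb{R}^{k-1}$ and any $i\in[n]$, there is some $j\in\{n+1,\dots,n+k-1\}$ with $|(Zx)_j|\ge |(Zx)_i|\cdot B/(2(k-1)C\|A\|_1)$, where $Z$ denotes the last $k-1$ columns of $\wh{A}$. This follows by normalizing $\|x\|_1=1$, picking $j$ with $|x_j|\ge 1/(k-1)$, and using the diagonal dominance of $Z^{[n+1:n+k-1]}$ (diagonal near $B$, off-diagonal at most $C\|A\|_1$) together with $|Z_{q,i}|\le C\|A\|_1$ for $q\le n$. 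The point is that your attempt to bound $\|N\|_1$ via ``the projected special columns have small $\ell_1$-norm'' fails precisely because the coefficients $x^j$ in $\wh{A}_j=\alpha_j\wh{A}_{j^*}+Zx^j$ are \emph{not} a priori bounded; the large-gap claim is what lets you run the argument backwards --- you first bound $|(Zx^j)_i|$ for $i>n$ (using that $\wh{A}_{i,j}$ and $\alpha_j\wh{A}_{i,j^*}$ are both at most $\poly(n)\cdot C\|A\|_1$), and then the large-gap claim forces $|(Zx^j)_i|\le 1/B^{1/2}$ for $i\le n$, which is the $N$-bound you want.

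You also skip the choice of $j^*$ and the verification that $\wh{A}_{j^*}\notin\mathrm{colspan}(Z)$, which is needed so that $\{\wh{A}_{j^*}\}\cup\{Z_1,\dots,Z_{k-1}\}$ spans the column space of $\wh{A}$ and the decomposition $\wh{A}_j=\alpha_j\wh{A}_{j^*}+Zx^j$ exists. The paper does this by a case split: if every entry of $\wh{A}_{[1:n]}^{[1:n]}$ is below $1/n^{c_2-2}$, then the zero vector already gives $\|0-A\|_1\le 1/n^{c_2}+C\min_{u,v}\|uv^\top-A\|_1$; otherwise pick $(i^*,j^*)$ with $|\wh{A}_{i^*,j^*}|\ge 1/n^{c_2-2}$, and the large-gap claim rules out $\wh{A}_{j^*}\in\mathrm{colspan}(Z)$ (else $(Zx)_{i^*}\ge 1/n^{c_2-2}$ would force some $|(Zx)_j|>C\|A\|_1$ for $j>n$, contradicting $\|\wh{A}-\wt{A}\|_1\le C\|A\|_1$). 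A further intermediate step bounds $\alpha_j\le 2C\|A\|_1 n^{c_2+2}$ by the same mechanism. Once these pieces are in place, the conclusion is $\|\wh{A}_{j^*}^{[1:n]}\alpha^\top-A\|_1\le \|\wh{A}_{j^*}^{[1:n]}\alpha^\top-\wh{A}_{[1:n]}^{[1:n]}\|_1+\|\wh{A}_{[1:n]}^{[1:n]}-A\|_1\le n^2/n^{c_2-2}+C\min_{u,v}\|uv^\top-A\|_1$, exactly as you anticipated.
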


\begin{proof}
The first observation is that because we can use a rank-$1$ matrix to fit $A$ and use a rank-$(k-1)$ matrix to fit other $B$s, we have
\begin{equation}\label{eq:relations_between_rank1_and_rankk}
\min_{\rank-k\ \wh{A}'} \|\wh{A}'-\wt{A}\|_1\leq \min_{u,v\in\mathbb{R}^{n}}\|uv^\top-A\|_1 \leq \|A\|_1.
\end{equation}
\begin{claim}\label{cla:rank_kminus1}
Let $\wh{A}$ denote the $\rank$-$k$ $C$-approximate solution to $\wt{A}$.  Let $Z \in \mathbb{R}^{(n+k-1)\times (k-1)}$ denote the rightmost $k-1$ columns of $\wh{A}$, then, $\rank(Z)=k-1.$ 
\end{claim}
\begin{proof}
Consider the $(k-1)\times(k-1)$ submatrix $Z^{[n+1:n+k-1]}$ of $Z$. Each element on the diagonal of this submatrix should be at least $B-C\|A\|_1$, and each element not on the diagonal of the submatrix should be at most $C\|A\|_1$. Otherwise $\|\wh{A}-\wt{A}\|_1>C\|A\|_1$ which will lead to a contradiction. Since $B=n^{c_4}$ is sufficiently large, $Z^{[n+1:n+k-1]}$ is diagonally dominant. Thus $\rank(Z)\geq \rank(Z^{[n+1:n+k-1]})=k-1$. Because $Z$ only has $k-1$ columns, $\rank(Z)=k-1$.
\end{proof}

\begin{claim}\label{cla:large_gap}
$\forall x\in\mathbb{R}^{k-1},i\in [n],\exists j\in \{n+1,n+2,\cdots,n+k-1\}$ such that,
 \begin{align*} 
\frac{|(Zx)_j|}{|(Zx)_i| }\geq \frac{B}{2(k-1)C\|A\|_1}.
\end{align*}
\end{claim}
\begin{proof}
Without loss of generality, we can let $\|x\|_1=1$. Thus, there exists $j$ such that $|x_j|\geq \frac1{k-1}$. So we have
\begin{align*}
|(Zx)_{n+j}|&=|\sum_{i=1}^{k-1}Z_{n+j,i}x_i|\\
&\geq |Z_{n+j,j}x_j|-\sum_{i\not= j} |Z_{n+j,i}x_i|\\
&\geq (B-C\|A\|_1)|x_j|-\sum_{i\not=j}|x_i|C\|A\|_1\\
&\geq (B-C\|A\|_1)/(k-1)-C\|A\|_1\\
&\geq \frac{B}{2(k-1)}.
\end{align*}
The second inequality follows because $|Z_{n+j,j}|\geq B-C\|A\|_1$ and $\forall i\not=j,|Z_{n+j,i}|\leq C\|A\|_1$ (otherwise $\|\wh{A}-\wt{A}\|_1>C\|A\|_1$ which leads to a contradiction.) The third inequality follows from $|x_j|>1/(k-1)$ and $\|x\|_1=1$. The fourth inequality follows since $B$ is large enough such that $\frac{B}{2(k-1)}\geq \frac{C\|A\|_1}{k-1}+C\|A\|_1$.

Now we consider any $q\in[n]$. We have
\begin{align*}
|(Zx)_q|=\sum_{i=1}^{k-1}|Z_{q,i}x_i|\leq \max_{i\in [k-1]} |Z_{q,i}|\cdot\sum_{i=1}^{k-1}|x_i|\leq C\|A\|_1.
\end{align*}
The last inequality follows that $\|x\|_1=1$ and $\forall i\in[k-1],Z_{q,i}\leq C\|A\|_1$. Otherwise, $\|\wh{A}-\wt{A}\|_1>C\|A\|_1$ which will lead to a contradiction.

Look at $|(Zx)_{n+j}|/|(Zx)_q|$, it is greater than $\frac{B}{2(k-1)C\|A\|_1}$.

\end{proof}

We now look at the submatrix $\wh{A}_{[1:n]}^{[1:n]}$, we choose $i^*,j^*\in[n]$ such that
\begin{align*}
|\wh{A}_{i^*,j^*}|\geq 1/n^{c_2-2}.
\end{align*}
If there is no such $(i^*,j^*)$, it means that we already found a good rank-$1$ approximation to $A$
\begin{align*}
\|\text{\bf 0}-A\|_1 &\leq \|\text{\bf 0}-\wh{A}_{[1:n]}^{[1:n]}\|_1+\|\wh{A}_{[1:n]}^{[1:n]}-A\|_1\\
& \leq \|\text{\bf 0}-\wh{A}_{[1:n]}^{[1:n]}\|_1 + \|\wh{A}-\wt{A}\|_1\\
& \leq 1/n^{c_2}+\|\wh{A}-\wt{A}\|_1\\
& \leq 1/n^{c_2}+C\min_{\rank-k\ \wh{A}'} \|\wh{A}'-\wt{A}\|_1\\
& \leq 1/n^{c_2}+C\min_{u,v\in\mathbb{R}^{n}}\|uv^\top-A\|_1,
\end{align*}
where $\text{\bf 0}$ is an $n\times n$ all zeros matrix. The second inequality follows since $\wh{A}_{[1:n]}^{[1:n]}-A$ is a submatrix of $\wh{A}-\wt{A}$. The third inequality follows since each entry of $\wh{A}_{[1:n]}^{[1:n]}$ should be no greater than $1/n^{c_2-2}$ (otherwise, we can find $(i^*,j^*)$). The last inequality follows from equation~\ref{eq:relations_between_rank1_and_rankk}.

\begin{claim}\label{cla:not_in_the_span}
$\wh{A}_{j^*}$ is not in the column span of $Z$, i.e.,
\begin{align*}
\forall x\in\mathbb{R}^{k-1}, \wh{A}_{j^*}\not= Zx .
\end{align*}
\end{claim}

\begin{proof}
If there is an $x$ such that $\wh{A}_{j^*}=Zx$, it means $(Zx)_{i^*}=\wh{A}_{i^*,j^*}\geq 1/n^{c_2-2}$. Due to Claim~\ref{cla:large_gap}, there must exist $i'\in\{n+1,n+2,\cdots,n+k-1\}$ such that $(Zx)_{i'}\geq \frac{B}{2(k-1)C\|A\|_1 n^{c_2-2}}$. Since $B$ is sufficiently large, $\wh{A}_{i',j^*}=(Zx)_{i'}>C\|A\|_1$ which implies that $\|\wh{A}-\wt{A}\|_1>C\|A\|_1$, and so leads to a contradiction.
\end{proof}

Due to Claim~\ref{cla:rank_kminus1} and Claim~\ref{cla:not_in_the_span}, the dimension of the subspace spanned by $\wh{A}_{j^*}$ and the column space of $Z$ is $k$. Since $\wh{A}$ has rank at most $k$, it means that each column of $\wh{A}$ can be written as a linear combination of $\wh{A}_{j^*}$ and the columns of $Z$.

Now consider the $j^{th}$ column $\wh{A}_j$ of $\wh{A}$ for $j\in[n]$. We write it as
\begin{align*}
\wh{A}_j=\alpha_j\cdot \wh{A}_{j^*}+Zx^j.
\end{align*}

\begin{claim}\label{cla:no_large_coef}
$\forall j\in[n],\alpha_j\leq 2C\|A\|_1 n^{c_2+2}$.
\end{claim}

\begin{proof}
Otherwise, suppose $\alpha_j>2C\|A\|_1 n^{c_2+2}$. We have
\begin{align*}
|(Zx^j)_{i^*}|&\geq \alpha_j\cdot |\wh{A}_{i^*,j^*}|-|\wh{A}_{i^*,j}|\\
& \geq \alpha_j\cdot \frac1{n^{c_2-2}}- |\wh{A}_{i^*,j}|\\
& \geq \frac12\alpha_j\cdot \frac1{n^{c_2-2}} .
\end{align*}
The second inequality follows from $|\wh{A}_{i^*,j^*}|\geq 1/n^{c2-2}$. The third inequality follows from $|\wh{A}_{i^*,j}|\leq \|A\|_1$ and $\frac12\alpha_j\cdot \frac1{n^{c_2-2}}\geq C\|A\|_1\geq \|A\|_1$.

Due to Claim~\ref{cla:large_gap}, there exists $i\in\{n+1,n+2,\cdots,n+k-1\}$ such that $|(Zx^j)_i|\geq \frac{B}{2(k-1)C\|A\|_1}\cdot \frac12\alpha_j\cdot \frac1{n^{c_2-2}}$. For sufficiently large $B$, we can have $|(Zx^j)_i|\geq \alpha_jB^{1/2}$. Then we look at
\begin{align*}
|\wh{A}_{i,j}|&\geq |(Zx^j)_i|-\alpha_j |\wh{A}_{i,j^*}|\\
&\geq \alpha_j (B^{1/2}-C\|A\|_1)\\
&\geq \alpha_j\frac12 B^{1/2} .
\end{align*}
The second inequality follows by $|\wh{A}_{i,j^*}|\leq C\|A\|_1$, otherwise $\|\wh{A}-\wt{A}\|_1>C\|A\|_1$ which will lead to a contradiction. The third inequality follows that $B$ is sufficient large that $\frac12 B^{1/2}>C\|A\|_1$.

Since $|\wh{A}_{i,j}|\geq\alpha_j\frac12 B^{1/2}>C\|A\|_1$, it contradicts to the fact $\|\wh{A}-\wt{A}\|_1\leq C\|A\|_1$.

Therefore, $\forall j\in[n],\alpha_j\leq 2C\|A\|_1 n^{c_2+2}$.
\end{proof}

\begin{claim}\label{cla:kind_of_tail_bound}
$\forall j\in[n],i\in \{n+1,n+2,\cdots,n+k-1\},|(Zx^j)_i|\leq 4C^2\|A\|_1^2 n^{c_2+2}$
\end{claim}

\begin{proof}
Consider $j\in[n],i\in \{n+1,n+2,\cdots,n+k-1\}$, we have
\begin{align*}
|(Zx^j)_i|&\leq |\wh{A}_{i,j}|+\alpha_j |\wh{A}_{i,j^*}|\\
&\leq C\|A\|_1+\alpha_j\cdot C\|A\|_1\\
&\leq 4C^2\|A\|_1^2 n^{c_2+2} .
\end{align*}
The second inequality follows by $|\wh{A}_{i,j}|\leq C\|A\|_1$ and $|\wh{A}_{i,j^*}|\leq C\|A\|_1$, otherwise the $\|\wh{A}-\wt{A}\|_1$ will be too large and leads to a contradiction. The third inequality is due to $\alpha_j+ 1 \leq 4C\|A\|_1 n^{c_2+2}$ via Claim~\ref{cla:no_large_coef}.
\end{proof}

\begin{claim}\label{cla:closeness}
$\forall j\in [n], \|\wh{A}_j^{[1:n]}-\alpha_j\cdot\wh{A}_{j^*}^{1:n}\|_{\infty}\leq 1/n^{c_2-2}$
\end{claim}

\begin{proof}

Due to Claim~\ref{cla:kind_of_tail_bound} and Claim~\ref{cla:large_gap}, $\forall i,j\in[n],$ we have
\begin{align*}
|(Zx^j)_i|\leq \frac{4C^2\|A\|_1^2 n^{c_2+2}}{B/(2(k-1)C\|A\|_1)}\leq 1/B^{1/2}.
\end{align*}
The second inequality follows for a large enough $B$.

Therefore, $\forall i,j\in[n],$
\begin{align*}
|\wh{A}_{i,j}-\alpha_j\cdot\wh{A}_{i,j^*}|\leq |(Zx^j)_i|\leq 1/B^{1/2}\leq 1/n^{c_2-2} .
\end{align*}
The last inequality follows since $B$ is large enough.

\end{proof}

Now, let us show that $\wh{A}_{j^*}^{1:n}$ can provide a good rank-$1$ approximation to $A$:
\begin{align*}
\|\wh{A}_{j^*}^{1:n}\alpha^\top-A\|_1 &\leq \|\wh{A}_{j^*}^{1:n}\alpha^\top-\wh{A}_{[1:n]}^{[1:n]}\|_1+\|\wh{A}_{[1:n]}^{[1:n]}-A\|_1\\
&= \sum_{j=1}^n \|\alpha_j\wh{A}_{j^*}-\wh{A}_{j}^{[1:n]}\|_1+\|\wh{A}_{[1:n]}^{[1:n]}-A\|_1\\
&\leq n^2\cdot 1/n^{c_2-2}+\|\wh{A}_{[1:n]}^{[1:n]}-A\|_1\\
&\leq 1/n^{c_2}+\|\wh{A}-\wt{A}\|_1\\
&\leq 1/n^{c_2}+C\min_{u,v\in\mathbb{R}^{n}}\|uv^\top-A\|_1.
\end{align*}
The first inequality follows by triangle inequality. The first equality is due to the linearity of $\ell_1$ norm. The second inequality is due to Claim~\ref{cla:closeness}. The third inequality follows since $\wh{A}_{[1:n]}^{[1:n]}-A$ is a submatrix of $\wh{A}-\wt{A}$. The fourth inequality is due to the equation~\ref{eq:relations_between_rank1_and_rankk}. 

\end{proof}

\section{Limited Independent Cauchy Random Variables}\label{sec:limind}
This section presents the fundamental lemmas with limited independent Cauchy variables, which will be used in Section~\ref{sec:streaming} and \ref{sec:distributed}. In Section~\ref{sec:lim_notation}, we provide some notation, definitions and tools from previous work. Section~\ref{sec:lim_l1_cauchy} includes the main result.

\subsection{Notations and tools}\label{sec:lim_notation}

For a function $f: \mathbb{R} \rightarrow \mathbb{R}$ and nonnegative integer $\ell$, $f^{(\ell)}$ denotes the $\ell$th derivative of $f$, with $f^{(0)} = f$. We also often use $x\approx_{\eps} y$ to state that $|x-y| = O(\eps)$. We use $I_{[a,b]}$ to denote the indicator function of the interval $[a,b]$.

To optimize the communication complexity of our distributed algorithm, we show that instead of using fully independent Cauchy variables, $\poly(k,d)$-wise independent Cauchy variables suffice.

We start by stating two useful Lemmas from previous work \cite{knw10}.

\begin{lemma}[Lemma 2.2 in \cite{knw10}]\label{lem:lemma2_2_in_KNW10}
There exists an $\epsilon_0 > 0$ such that the following holds. Let $n$ be a positive integer
and $0 < \epsilon < \epsilon_0$, $0< p< 2$ be given. Let $f: \mathbb{R} \rightarrow \mathbb{R}$
satisfy $\| f^{(\ell)} \|_{\infty} = O(\alpha^{\ell})$ for all $\ell \geq 0$, for some $\alpha$
satisfying $\alpha^p \geq \log(1/\epsilon)$. Let $k = \alpha^p$. Let $a \in \mathbb{R}^n$ satisfy
$\| a \|_p = O(1)$. Let $X_i$ be a $3Ck$-independent family of $p$-stable random variables.
Let $X= \sum_{i} a_i X_i$ and $Y = \sum_{i} a_i Y_i$. Then $\E[f(x)] = \E[f(Y)]+O(\epsilon)$.
\end{lemma}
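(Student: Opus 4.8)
The plan is to prove this by the \emph{FT-mollification} (characteristic-function) method, comparing $\E[f(X)]$ to $\E[f(Y)]$ — where $Y = \sum_i a_i Y_i$ uses fully independent standard $p$-stable variables $Y_i$ — through the Fourier representation of $f$. The two facts that drive the argument are: (i) a standard $p$-stable variable has characteristic function $\E[e^{itX}] = e^{-|t|^p}$, so for the \emph{fully independent} sum $\E[e^{itY}] = \prod_i e^{-|a_it|^p} = e^{-\|a\|_p^p|t|^p}$; and (ii) the hypothesis $\|f^{(\ell)}\|_\infty = O(\alpha^\ell)$ says that $f$ is smooth at scale $1/\alpha$, which (interpreting $f$ via its Fourier transform as a tempered distribution, subtracting its limiting constant at $\pm\infty$, and integrating by parts $\Theta(\log(1/\epsilon))$ times against $\|f^{(\ell)}\|_\infty \le (c\alpha)^\ell$) confines the relevant frequencies to $|t| = O(\alpha\log(1/\epsilon))$ with only $O(\epsilon)$ truncation error. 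So it suffices to show that the characteristic functions of $X$ and $Y$ agree up to $O(\epsilon)$ uniformly over $|t| = O(\alpha\log(1/\epsilon))$.

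First I would expand $\E[e^{itX}]$ by writing each factor as $e^{ita_iX_i} = 1 + (e^{ita_iX_i}-1)$ and multiplying out, so that $\E[e^{itX}] = \sum_{m\ge 0}\sum_{|S|=m}\E\!\left[\prod_{i\in S}(e^{ita_iX_i}-1)\right]$. For every $S$ with $|S| = m \le 3Ck$, the $3Ck$-wise independence forces this term to factor as $\prod_{i\in S}(e^{-|a_it|^p}-1)$, which is exactly the matching term in the product expansion of $\E[e^{itY}]$; hence the two characteristic functions agree up to the terms with $m > 3Ck$. The high-order terms for $Y$ are the tail, past order $3Ck$, of the series for $e^{-\|a\|_p^p|t|^p}$; since $\|a\|_p = O(1)$ and $|t| = O(\alpha\log(1/\epsilon))$ we have $\|a\|_p^p|t|^p = O(\alpha^p\log^p(1/\epsilon)) \ll 3Ck$ for $C$ large, so this tail is at most $2^{-\Omega(k)} = O(\epsilon)$ (using $\alpha^p = k \ge \log(1/\epsilon)$). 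The real obstacle is the high-order terms for $X$ itself: bounded independence does not let us factor them, and $p$-stable variables are heavy-tailed enough that $e^{ita_iX_i}-1$ has no useful moment control beyond order $p$, so a crude bound over all $\binom{n}{m}$ subsets is far too large.

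To handle this I would truncate. Fix a constant threshold $T$ and split $a_iX_i = w_i + z_i$, where $w_i = a_iX_i\,\mathbf{1}_{|a_iX_i|\le T}$ and $z_i = a_iX_i\,\mathbf{1}_{|a_iX_i|>T}$, and write $X = W + Z$ with $W = \sum_i w_i$, $Z = \sum_i z_i$. Because $\|f\|_\infty = O(1)$ (the $\ell = 0$ bound), $|\E[f(X)] - \E[f(W)]| \le O(1)\cdot\Pr[Z \ne 0]$, i.e.\ is controlled by the probability that some coordinate is heavy; since $\Pr[|a_iX_i|>T] = O((|a_i|/T)^p)$ and the indicators $\mathbf{1}_{|a_iX_i|>T}$ are $3Ck$-wise independent nonnegative variables with all moments, a moment bound on their sum makes $\Pr[Z\ne 0] = O(\epsilon)$ for a suitable constant $T$; the same estimate bounds $|\E[f(Y)] - \E[f(W')]|$ for the analogous $Y$-side truncation $W'$. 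For the truncated sums, each factor $e^{itw_i}-1$ is bounded by $\min(2,|t||w_i|) \le \min(2,|t|T)$ and has controlled moments since $|w_i| \le T$, so the $m > 3Ck$ terms in the expansion of $\E[e^{itW}]$ can be summed and shown to be $O(\epsilon)$ for the chosen $T$ and $|t| = O(\alpha\log(1/\epsilon))$, while the $m \le 3Ck$ terms still match those for $W'$ by bounded independence. Collecting the Fourier-inversion truncation error, the heavy-coordinate errors for $X$ and $Y$, and the high-order-term errors for $W$ and $W'$ — all $O(\epsilon)$ — gives $\E[f(X)] = \E[f(Y)] + O(\epsilon)$. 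The steps needing the most care are calibrating $T$, the moment order $3Ck$, and the frequency cutoff so that every error term is simultaneously $O(\epsilon)$, and the moment argument bounding the number of heavy coordinates under only bounded independence.
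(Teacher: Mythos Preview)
This lemma is not proved in the paper: it is quoted verbatim from \cite{knw10} as a tool (see Section~\ref{sec:lim_notation}, ``We start by stating two useful Lemmas from previous work''), so there is no in-paper proof to compare your proposal against.

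As for the proposal itself, the overall strategy --- compare characteristic functions, use smoothness of $f$ to localize to frequencies $|t|=O(\alpha\log(1/\epsilon))$, and match low-order terms via bounded independence --- is indeed the FT-mollification philosophy behind the original \cite{knw10} argument. But there is a genuine gap in your truncation step. You assert that for a ``suitable constant $T$'' one has $\Pr[Z\neq 0]=O(\epsilon)$; however, by the union bound $\Pr[Z\neq 0]\le \sum_i \Pr[|a_iX_i|>T]=O(\|a\|_p^p/T^p)$, and with $\|a\|_p=O(1)$ and $T$ a constant this is only $O(1)$. Getting $O(\epsilon)$ forces $T=\Omega((1/\epsilon)^{1/p})$, which is not constant and feeds back into the high-order-term bound you need for $W$. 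Relatedly, your control of the $m>3Ck$ terms for $W$ is only sketched: the bound $|e^{itw_i}-1|\le\min(2,|t|T)$ is per factor, and summing $\binom{n}{m}$ such products does not obviously yield something small without a more careful argument tying together $T$, $|t|$, and the $\ell_p$ geometry of $a$. The original proof in \cite{knw10} handles these issues, but the calibration is more delicate than your last paragraph suggests; if you want to reconstruct it, that is where the real work lies.
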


\begin{lemma}[Lemma 2.5 in \cite{knw10}]
There exist constants $c',\epsilon_0 >0$ such that for all $c > 0$ and $0 < \epsilon < \epsilon_0$,
and for all $[a,b] \subseteq \mathbb{R}$, there exists a function $J_{[a,b]}^c : \mathbb{R} \rightarrow \mathbb{R}$
satisfying:

i. $\| (J^c_{[a,b]})^{(\ell)} \|_{\infty} = O(c^\ell)$ for all $\ell \geq 0$.

ii. For all $x$ such that $a,b\notin [x-\epsilon, x+\epsilon]$, and as long as $c> c' \epsilon^{-1} \log^3(1/\epsilon)$,
$| J_{[a,b]}^c(x) - I_{[a,b]} (x) | < \epsilon$.
\end{lemma}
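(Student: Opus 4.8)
\textbf{Proof plan for Lemma~2.5 of \cite{knw10} (the existence of the smoothed indicator $J^c_{[a,b]}$).}

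The plan is to build $J^c_{[a,b]}$ by convolving the true indicator $I_{[a,b]}$ with a suitably scaled smooth bump, and then to read off both properties (i) and (ii) from standard facts about mollifiers. Concretely, I would fix once and for all a nonnegative $C^\infty$ function $\phi:\mathbb{R}\to\mathbb{R}$ supported in $[-1,1]$ with $\int \phi = 1$ (a standard bump, e.g.\ $\phi(t)\propto \exp(-1/(1-t^2))$ on $(-1,1)$). For a scale parameter $\delta>0$ to be chosen, set $\phi_\delta(t)=\delta^{-1}\phi(t/\delta)$ and define
\[
  J^c_{[a,b]}(x) \;=\; (I_{[a,b]} * \phi_\delta)(x)\;=\;\int_{a}^{b}\phi_\delta(x-u)\,du.
\]
The first step is to verify property (i): since $J^c_{[a,b]}{}^{(\ell)} = I_{[a,b]} * \phi_\delta^{(\ell)}$ and $\|\phi_\delta^{(\ell)}\|_\infty = \delta^{-\ell}\|\phi^{(\ell)}\|_\infty$ while $I_{[a,b]}$ has $L^1$-like mass bounded in the convolution only through $\|\phi_\delta^{(\ell)}\|_1 = \delta^{-\ell}\|\phi^{(\ell)}\|_1$, we get $\|J^c_{[a,b]}{}^{(\ell)}\|_\infty \le \|\phi^{(\ell)}\|_1\,\delta^{-\ell} = O(\delta^{-\ell})$, where the implied constants depend only on the fixed $\phi$. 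So choosing $\delta = 1/c$ yields $\|(J^c_{[a,b]})^{(\ell)}\|_\infty = O(c^\ell)$ for every $\ell\ge 0$, as required.

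The second step is property (ii): for $x$ with $a,b\notin[x-\epsilon,x+\epsilon]$, I want $|J^c_{[a,b]}(x)-I_{[a,b]}(x)|<\epsilon$. Here I would split into cases according to whether $x$ is well inside $[a,b]$, well outside, or — the case that cannot occur under the hypothesis — within $\epsilon$ of an endpoint. If $[x-\delta,x+\delta]\subseteq[a,b]$ then $J^c_{[a,b]}(x)=\int_{x-\delta}^{x+\delta}\phi_\delta(x-u)du=1=I_{[a,b]}(x)$; symmetrically if $[x-\delta,x+\delta]\cap[a,b]=\emptyset$ then both sides are $0$. The assumption $a,b\notin[x-\epsilon,x+\epsilon]$ guarantees we are at distance $>\epsilon$ from each endpoint, so as long as $\delta\le\epsilon$ (equivalently $c\ge 1/\epsilon$) one of these two clean cases applies and the error is actually $0$, which is certainly $<\epsilon$. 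The stated hypothesis $c> c'\epsilon^{-1}\log^3(1/\epsilon)$ is stronger than $c\ge 1/\epsilon$, so it suffices; I would remark that the $\log^3(1/\epsilon)$ slack is harmless and is presumably there to match the bookkeeping in the application of Lemma~2.2, where one needs $\alpha=\Theta(c)$ to satisfy $\alpha^p\ge\log(1/\epsilon)$.

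I do not expect a genuine obstacle here — this is the routine mollification construction — so the main thing to be careful about is the \emph{uniformity} of the constants: the $O(\cdot)$ in (i) must not depend on $[a,b]$, on $\epsilon$, or on $c$, only on $\phi$ and $\ell$, and this is exactly what the convolution-with-$\phi_\delta$ construction delivers because translation of the interval does not change $\|\phi_\delta^{(\ell)}\|_1$. A secondary point worth stating explicitly is that the same $J$ works simultaneously for all intervals and all scales by just plugging in $\delta=1/c$, so there is no circularity between the choice of $c$ in the hypothesis and the definition of $J^c_{[a,b]}$. With these checks in place the lemma follows; it is then used, together with Lemma~\ref{lem:lemma2_2_in_KNW10}, to replace fully independent Cauchy (i.e.\ $1$-stable) variables by $\widetilde O(\poly(k,d))$-wise independent ones in the dilation and contraction arguments of Section~\ref{sec:limind}.
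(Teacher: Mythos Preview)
This lemma is not proved in the present paper --- it is quoted from \cite{knw10} as a black box for Section~\ref{sec:limind} --- so there is no in-paper argument to compare against; I can only assess your construction on its own merits and against how the lemma is actually consumed.

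There is a real gap. The bound in (i) must hold with the implied constant \emph{uniform in $\ell$}: this is exactly how it is used downstream, where the paper feeds $f=J^c_{[a,b]}$ and $\alpha=\Theta(c)$ into Lemma~\ref{lem:lemma2_2_in_KNW10}, whose hypothesis ``$\|f^{(\ell)}\|_\infty=O(\alpha^\ell)$ for all $\ell\ge0$'' is vacuous if the constant may vary with $\ell$. But any $f$ with $\|f^{(\ell)}\|_\infty\le C\,c^\ell$ for a single $C$ has an everywhere-convergent Taylor series and hence extends to an entire function of exponential type; in particular no nonzero such $f$ has compact support. Your $J^c_{[a,b]}=I_{[a,b]}*\phi_\delta$ \emph{is} compactly supported because $\phi_\delta$ is, and for the bump $\phi(t)\propto e^{-1/(1-t^2)}$ the norms $\|\phi^{(\ell)}\|_1$ grow super-geometrically in $\ell$, so the inequality $\|J^{(\ell)}\|_\infty\le\|\phi^{(\ell)}\|_1\,c^\ell$ you correctly derive does not deliver (i) in the required uniform sense. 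This is also why $c>c'\epsilon^{-1}\log^3(1/\epsilon)$ is not ``harmless slack'': a mollifier that \emph{does} achieve the uniform bound (for instance one whose Fourier transform is supported in $[-c,c]$, giving $\|\phi^{(\ell)}\|_\infty\le c^\ell\|\widehat\phi\|_1$ with one constant) cannot be compactly supported in space and so decays only polynomially, which forces the polylog overhead in $c$ to push the tails below $\epsilon$ at distance $\epsilon$ from $\{a,b\}$. Your ``the error is actually $0$'' argument for (ii) rests on compact support of $\phi_\delta$, precisely the feature that breaks (i).
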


\subsection{Analysis of limited independent random Cauchy variables}\label{sec:lim_l1_cauchy}
\begin{lemma}\label{lem:limited_no_contraction}
Given a vector $y \in \mathbb{R}^n$, choose $Z$ to be the $t\times n$ random Cauchy matrices with $1/t$ rescaling and $t=O(k\log k)$. The variables from different rows are fully independent, and the variables from the same rows are $O(1)$-wise independent. Then, we have 
\begin{equation*}
 \| Z y \|_1 \gtrsim  \|y \|_1
\end{equation*}
holds with probability at least $1- 2^{-\Omega(t)}$.
\end{lemma}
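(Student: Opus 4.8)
The plan is to prove a lower-tail (no-contraction) bound for a single fixed vector $y$ under a limited-independence Cauchy sketch. The key observation is that it suffices to show that, for each individual row $Z^i$ of $Z$, we have $\Pr[|\langle Z^i, y\rangle| \geq c \|y\|_1] = \Omega(1)$ for an appropriate small constant $c$, and then boost over the $t$ independent rows. Indeed, if a constant fraction of the $t$ rows each contribute $\Omega(\|y\|_1/t)$ to $\|Zy\|_1$ (recall the $1/t$ rescaling), then $\|Zy\|_1 = \Omega(\|y\|_1)$, and since the rows are fully independent, a Chernoff bound gives the $1-2^{-\Omega(t)}$ probability. So the whole argument reduces to a single-row anticoncentration statement that only needs $O(1)$-wise independence within that row.

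First I would normalize so that $\|y\|_1 = 1$, and consider $X = \langle c\cdot Z^i, y\rangle = \sum_j y_j X_j$ where the $X_j$ are $O(1)$-wise independent standard Cauchys (absorbing the scaling). If the $X_j$ were fully independent, $X$ would be an exact Cauchy with scale $1$, so $\Pr[|X| \geq 1] = \Omega(1)$ is immediate; the point is to transfer this to the limited-independence setting. I would use Lemma~2.2 of \cite{knw10} (stated in the excerpt) with $p=1$: choose $\epsilon = \Theta(1)$, take $f$ to be a smooth approximation $J^c_{[a,b]}$ of the indicator of an interval such as $[-1,1]$ (or its complement within a large bounded range, handling the non-compactness by truncation as in the dilation arguments), with the smoothing parameter $c = \Theta(1)$ chosen so that $\alpha^p = \alpha \geq \log(1/\epsilon)$ holds for the constant $\epsilon$. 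The lemma then tells us that the expectation of $f$ under the $O(1)$-wise independent family differs from the fully-independent (true Cauchy) value by $O(\epsilon)$, which is a small constant we can make smaller than half the true probability mass. This yields $\Pr[|X| \geq c] \geq \Omega(1)$ with only $O(1)$-wise independence, exactly what is needed per row.

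The main obstacle I anticipate is handling the fact that a single Cauchy is heavy-tailed: the indicator of $\{|X| \geq 1\}$ is not compactly supported, so Lemma~2.2 cannot be applied directly to it. The fix, as used elsewhere in the paper (e.g. in the proofs of Lemma~\ref{lem:dense_cauchy_l1_no_dilation} and Lemma~\ref{lem:sparse_cauchy_l1_no_dilation} via clipped half-Cauchy random variables), is to work with $\Pr[c \leq |X| \leq D]$ for a large constant $D$, note that for a genuine Cauchy this probability is still $\Omega(1)$ (the interval $[c,D]$ carries constant mass), write $I_{[c,D]}$ as a difference/combination of two smoothed interval indicators $J^{c'}_{[\cdot,\cdot]}$, apply Lemma~2.2 to each with $\epsilon = \Theta(1)$, and conclude that the limited-independence version also places $\Omega(1)$ mass on $[c,D]$, hence on $\{|X| \geq c\}$. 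Care is needed to verify that the constants line up — that the error terms from the two smoothed indicators are genuinely smaller than the constant probability mass being estimated — but this is a routine constant-chasing step and not a conceptual difficulty. Once the per-row bound $\Pr[|\langle Z^i,y\rangle| \geq c\|y\|_1] = \Omega(1)$ is established, the Chernoff boost over the $t=O(k\log k)$ fully independent rows finishes the proof with failure probability $2^{-\Omega(t)}$.
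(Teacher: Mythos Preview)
Your proposal is correct and follows essentially the same approach as the paper: reduce to a per-row anticoncentration bound via the \cite{knw10} smoothed-indicator machinery with $\epsilon=\Theta(1)$, then Chernoff over the $t$ fully independent rows. The only difference is a minor technical choice: the paper applies $J^c_{[a,b]}$ directly to the compact interval $[-1/50,1/50]$ (i.e., it bounds $\Pr[|(Zy)_i|\le \tfrac{1}{50}\|y\|_1]$ and then complements), which sidesteps the heavy-tail truncation you anticipated; your workaround via $\Pr[c\le |X|\le D]$ is correct but unnecessary once one smooths the ``small'' event instead.
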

\begin{proof}
Let $S$ denote the original fully independent matrix and $Z$ denote the matrix for which the entries in the same row are $w$-wise independent, and the entries from different rows are fully independent. Notice we define the random matrices without rescaling by $1/t$ and it will be added back at the end. (We will decide $w$ later)



We define random variable $X$ such that $X = 1$ if $| (Zy)_i | \leq \frac{1}{50}$ and $X=0$ otherwise. We also define random variable $Y$ such that $Y=1$ if $|(Sy)_i | \leq \frac{1}{50}$ and $Y=0$ otherwise. Then, we have
\begin{align*}
\E[X] & = \Pr \biggl[ | (Zy)_i | \leq \frac{1}{50} \biggr] = \E \biggl[ I_{ [ -\frac{1}{50}, \frac{1}{50}] }((Zy)_i) \biggr] \\
\E[Y] & = \Pr \biggl[ | (Sy)_i | \leq \frac{1}{50} \biggr] = \E \biggl[ I_{ [ -\frac{1}{50}, \frac{1}{50}] }((Sy)_i) \biggr]
\end{align*}
The goal is to show that $\E[X] \approx_{\eps} \E[Y]$. Following the same idea from \cite{knw10}, we need to argue this chain of inequalities,
\begin{align*}
\E[I_{[a,b]} (X)] \approx_{\eps} \E[J_{[a,b]}^c (X)] \approx_{\eps} \E[J_{[a,b]}^c (Y)] \approx_{\eps} \E[I_{[a,b]} (Y)]
\end{align*}

Using Lemma 2.2 and Lemma 2.5 from \cite{knw10}, choosing sufficiently small constant $\eps$(which implies $w=O(1)$), it follows that for each $i\in [t]$, we still have
\begin{align*}
\Pr\biggl[ | (Zy)_i| > \frac{1}{50} \| y\|_1 \biggr] \gtrsim \Pr \biggl[ | (Sy)_i| > \frac{1}{50} \|y\|_1 \biggr]  \geq 0.9
\end{align*}
Because all rows of $Z$ are fully independent, using the Chernoff bound we can get that
\begin{equation*}
\Pr \biggl[ \| Z y \|_1 \lesssim t \| y\|_1 \biggr] \leq \exp(-\Omega(t) )
\end{equation*}
as we needed for the ``no contraction'' part of the net argument.
\end{proof}

For the  no dilation, we need to argue that
\begin{lemma}\label{lem:limited_no_dilation}
Given a set of vectors $\{ y_1, y_2, \dotsc, y_d\}$ where $y_i\in \mathbb{R}^n, \forall i\in [d]$, choose $Z$ to be the $t\times n$ random Cauchy matrices with $1/t$ rescaling and $t=O(k\log k)$, where the variables from different rows are fully independent, and the variables from the same rows are $w$-wise independent.

$\mathrm{\RN{1}}$. If $w=\wt{O}(dk)$, we have
\begin{equation*}
\sum_{i=1}^d \| Z y_i \|_1 \leq O(\log d) \sum_{i=1}^d \| y_i\|_1
\end{equation*}
holds with probability at least $.999$.

$\mathrm{\RN{2}}$. If If $w=\wt{O}(d)$, we have
\begin{equation*}
\sum_{i=1}^d \| Z y_i \|_1 \leq O(k\log d) \sum_{i=1}^d \| y_i\|_1
\end{equation*}
holds with probability at least $.999$.
\end{lemma}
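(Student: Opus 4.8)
The plan is to transcribe the fully‑independent no‑dilation arguments of Lemma~\ref{lem:dense_cauchy_l1_no_dilation} and Lemma~\ref{lem:dense_cauchy_l1_no_dilation_general}, replacing the one place those proofs use full independence---the fact that each coordinate $\sum_l S_{jl}(y_i)_l$ is exactly a scaled Cauchy---by its limited‑independence surrogate, Lemma~\ref{lem:lemma2_2_in_KNW10} of \cite{knw10}. Write $A$ for the matrix with columns $y_1,\dots,y_d$ and, after pulling the scalars $\|y_i\|_1$ out of the linear statement, normalize $\hat y_i=y_i/\|y_i\|_1$, so (before the $1/t$ rescaling) $(Zy_i)_j=\|y_i\|_1\,(Z\hat y_i)_j$ with $(Z\hat y_i)_j=\sum_l Z_{jl}(\hat y_i)_l$; within a fixed row the $Z_{jl}$ are $w$‑wise independent and distinct rows are fully independent. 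First I would fix a truncation level $D$, set $h_D(x)=|x|\wedge D$, decompose $h_D$ into $O(\log D)$ dyadic pieces as in \cite{i06,cdmmmw13,knw10}, and replace each indicator by the smoothed function $J^c_{[a,b]}$ of Lemma~2.5 of \cite{knw10} at scale $c=\Theta(\epsilon^{-1}\log^3(1/\epsilon))$. Applying Lemma~\ref{lem:lemma2_2_in_KNW10} to each piece, with the fixed unit vector $\hat y_i$ in the role of $a$ (so $\|a\|_1=O(1)$) and $w=\wt\Theta(1/\epsilon)$‑wise independence, gives
\[
\E\bigl[h_D((Z\hat y_i)_j)\bigr]=\E\bigl[|C|\wedge D\bigr]+O(\epsilon\log D)=O(\log D),
\]
and, applied to a smoothed version of $I_{[D,\infty)}$, also $\Pr\bigl[\,|(Z\hat y_i)_j|>D\,\bigr]\le \Pr_{\mathrm{Cauchy}}[|C|>D]+O(\epsilon)=O(1/D)+O(\epsilon)$.

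For part~(I) I would take $D=\wt\Theta(kd)$ and $\epsilon=1/\wt\Theta(kd)$, so that $w=\wt O(dk)$ and $td\cdot(1/D+\epsilon)$ is a small constant. Then a union bound over the $td=\wt O(kd)$ coordinates $(Z\hat y_i)_j$ makes the event $\mathcal G$ that all of them have absolute value at most $D$ hold with probability $\ge 0.999$, and on $\mathcal G$ we may replace $|(Z\hat y_i)_j|$ by $h_D((Z\hat y_i)_j)$. By linearity of expectation over $i,j$ (correlation within a row is irrelevant), $\E\bigl[\sum_i\|Zy_i\|_1\cdot\mathbf{1}_{\mathcal G}\bigr]\le \tfrac1t\sum_i\|y_i\|_1\cdot t\cdot O(\log D)=O(\log d)\sum_i\|y_i\|_1$ (using $\log(kd)=O(\log d)$), and conditioning on $\mathcal G$ and applying Markov exactly as in the proof of Lemma~\ref{lem:dense_cauchy_l1_no_dilation} yields $\sum_i\|Zy_i\|_1\le O(\log d)\sum_i\|y_i\|_1$ with probability $\ge 0.999$.

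Part~(II) is the delicate case: we may use only $w=\wt O(d)$‑wise independence, hence $\epsilon=1/\wt\Theta(d)$, and now $td\cdot\epsilon=\wt\Theta(k)$ is too large to union‑bound the truncation over all $\wt\Theta(kd)$ coordinates at once---a $\wt\Theta(k)$ of them genuinely escape any fixed threshold. This is exactly where the extra factor of $k$ advertised in the technical overview must be spent. The plan is to run the argument one row at a time: row $j$ contains only the $d$ coordinates $\{(Z\hat y_i)_j\}_{i\in[d]}$, so truncating at $D=\wt\Theta(d)$ and union‑bounding over $i\in[d]$ makes the per‑row bad event (some coordinate of row $j$ exceeds $D$) have only a small constant probability, and on the per‑row good event the row's contribution $\sum_i\|y_i\|_1 h_D((Z\hat y_i)_j)$ has expectation $O(\log d)\sum_i\|y_i\|_1$. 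Since the $t=O(k\log k)$ rows are fully independent, a Chernoff bound controls the number of bad rows, and the contribution of a bad row (which typically has a single overflowing coordinate) is bounded crudely via an entry‑level truncation $|Z_{jl}|\le\wt O(td)$ (valid with probability $\ge 0.99$ since individual entries are exactly Cauchy). Summing, the $1/t$ rescaling together with the $t$‑fold independence across rows pays for the bad rows at a cost of a further $O(k)$ factor, giving $\sum_i\|Zy_i\|_1\le O(k\log d)\sum_i\|y_i\|_1$ with probability $\ge 0.999$.

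The hard part will be precisely this overflow control in part~(II): verifying that, with the weak precision $\epsilon=1/\wt\Theta(d)$ forced by $\wt O(d)$‑wise independence, the per‑row decomposition plus the Chernoff bound over rows plus the crude entry bound combine so that the total overhead is exactly $O(k)$ and not $\poly(k)$, and in particular that a few bad rows cannot swamp the sum. Choosing the truncation levels so that this balances correctly, and handling the case of unbalanced column norms $\|y_i\|_1$, is the step I expect to require the most care; everything else---the smoothing of $h_D$, the appeal to Lemma~\ref{lem:lemma2_2_in_KNW10}, and the net‑free Markov wrap‑up---is a routine adaptation of the fully‑independent proofs (and parallels the contraction analysis in Lemma~\ref{lem:limited_no_contraction}).
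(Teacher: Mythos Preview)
For Part~(I) your plan coincides with the paper's: truncate at $D=\tilde\Theta(kd)$, decompose $h_D$ into $O(\log D)$ dyadic indicators, smooth each via $J^c_{[a,b]}$, apply Lemma~\ref{lem:lemma2_2_in_KNW10} at precision $\eps=\tilde\Theta(1/(kd))$, and finish by Markov plus the truncation event.

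For Part~(II) the paper takes a much simpler route than you propose. It does \emph{not} switch to a row-by-row Chernoff analysis; it keeps exactly the same structure as Part~(I) --- same truncation level $D=O(dk)$, same dyadic decomposition, same conditioning on $\xi$ --- but runs the KNW comparison at the coarser precision $\eps=O(1/d)$ instead of $O(1/(dk))$. The only effect of the weaker precision is that the additive error in the per-coordinate conditional expectation,
\[
\E\bigl[|X|\,\big|\,\xi\bigr]\;\lesssim\;\E\bigl[|Y|\,\big|\,\xi\bigr]\;+\;\sum_{j=0}^{\log D}2^{j}\cdot O(\eps)\;=\;\E\bigl[|Y|\,\big|\,\xi\bigr]+O(D\eps),
\]
jumps from $O(1)$ to $O(k)$. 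That single $O(k)$ in the expectation bound, carried through Markov, is the source of the $O(k\log d)$; there is no bad-row accounting. Your diagnosis that the truncation union bound picks up an $O(td\eps)=\tilde O(k)$ term with $\eps=O(1/d)$ is accurate, and the paper is indeed terse here (it does not separately revisit $\Pr[\bar\xi]$ for Part~(II)); but its mechanism for the extra $k$ factor is the $D\eps$ term in the expectation, not overflow control.

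Your row-by-row plan also has a concrete gap. The crude entry bound ``$|Z_{jl}|\le\tilde O(td)$'' is not justified: $Z$ has $tn$ entries (not $td$), and union-bounding the marginal Cauchy tails gives only $|Z_{jl}|\le O(tn)$ with the stated probability. Pushed through, a bad row then contributes up to $O(tn)\sum_i\|y_i\|_1$, and after the $1/t$ rescaling and summing over $\Theta(t)$ potentially bad rows you obtain an overhead of order $tn$, not $O(k)$. So the overflow-control step you flag as ``the hard part'' would, as currently sketched, fail to deliver the claimed factor.
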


\begin{proof}
Let $m=t$. Let $S\in\mathbb{R}^{m\times n}$ denote the original fully independent matrix and $Z$ denote the matrix that for each entry in the same row are $w$-wise independent, where the entries from different rows are fully independent. (We will decide on$w$ later)

Applying matrix $S$ to those fixed set of vectors, we have
\begin{align*}
\sum_{i=1}^d \|S y_i \|_1 = \sum_{i=1}^d \sum_{j=1}^m | \sum_{l=1}^n \frac{1}{m} S_{j,l} \cdot (y_i)_l | =  \frac{1}{m}  \sum_{i=1}^d \sum_{j=1}^m | \sum_{l=1}^n S_{j,l} \cdot (y_i)_l |
\end{align*}
Applying matrix $Z$ to those fixed set of vectors, we have a similar thing,
\begin{align*}
\sum_{i=1}^d \|Z y_i \|_1 = \sum_{i=1}^d \sum_{j=1}^m | \sum_{l=1}^n \frac{1}{m} Z_{j,l} \cdot (y_i)_l | =  \frac{1}{m}  \sum_{i=1}^d \sum_{j=1}^m | \sum_{l=1}^n Z_{j,l} \cdot (y_i)_l |
\end{align*}
The goal is to argue that, for any $i\in [d], j\in [m]$,
\begin{align*}
\E\biggl[ | \sum_{l=1}^n Z_{j,l} \cdot (y_i)_l |  \bigg| \xi \biggr] \lesssim \E\biggl[ | \sum_{l=1}^n S_{j,l} \cdot (y_i)_l |  \bigg| \xi\biggr] +\delta
\end{align*}
As long as $\delta$ is small enough, we are in a good shape.

Let $X= \frac{1}{\| y_i\|_1} \sum_{l=1}^n Z_{j,l} (y_i)_l$, and $Y=  \frac{1}{\| y_i\|_1}  \sum_{l=1}^m S_{j,l} (y_i)_l$. Let $D$ be the truncating threshold of each Cauchy random variable. Define $T=O(\log D)$. On one hand, we have
\begin{align}\label{eq:upper_bound_on_X_given_xi}
\E[ |X| | \xi] & \leq 2 \left( \Pr[X\in [0,1] | \xi] \cdot 1+  \sum_{j=0}^{T} \Pr[ X \in (2^j, 2^{j+1} ] | \xi ] \cdot 2^{j+1} \right) \notag \\
& \leq 2 \left( 1+  \sum_{j=0}^{T} \Pr[I_{(2^j, 2^{j+1})}(X) =1 | \xi ] \cdot 2^{j+1} \right)\notag \\
& = 2 \left( 1+ \sum_{j=0}^{T} \E[I_{(2^j, 2^{j+1})}(X) | \xi ] \cdot 2^{j+1} \right)
\end{align}
On the other hand, we can show
\begin{align}\label{eq:lower_bound_on_Y_given_xi}
\E[ |Y| | \xi] & \geq 2 \left( \Pr[X\in [0,1] | \xi] \cdot 0 + \sum_{j=0}^{T} \Pr[ Y\in (2^j,2^{j+1} ] | \xi ] \cdot 2^j  \right) \notag \\
& \geq  2\sum_{j=0}^{T} \Pr[ I_{ (2^j,2^{j+1} ]}(Y) = 1 | \xi ] \cdot 2^j \notag \\
& \geq 2\sum_{j=0}^{T} \E [I_{ (2^j,2^{j+1} ] }(Y) | \xi ] \cdot 2^j
\end{align}
Thus, we need to show that, for each $j$,
\begin{align}\label{eq:additive_error_for_each_interval}
 \E[I_{(2^j, 2^{j+1})}(X) | \xi ]\approx_{\eps} \E [I_{ (2^j,2^{j+1} ] }(Y) | \xi ]
\end{align}
Following the same idea from \cite{knw10}, we need to argue this chain of inequalities,
\begin{align*}
\E[I_{[a,b]} (X)] \approx_{\eps} \E[J_{[a,b]}^c (X)] \approx_{\eps} \E[J_{[a,b]}^c (Y)] \approx_{\eps} \E[I_{[a,b]} (Y)]
\end{align*}
We first show $\E[I_{[a,b]} (X)] \approx_{\eps} \E[J_{[a,b]}^c (X)] $. Notice that $I_{[a,b]}$ and $J_{[a,b]}$ are within $\eps$ everywhere except for two intervals of length $O(\eps)$. Also the Cauchy distribution is anticoncentrated (any length-$O(\eps)$ interval contains $O(\eps)$ probability mass) and $\| I_{[a,b]} \|_{\infty}, \| J_{[a,b]}^c \|_{\infty} =O(1)$, these intervals contribute $O(\eps)$ to the difference.

Second, we show $\E[J_{[a,b]}^c (X)] \approx_{\eps} \E[J_{[a,b]}^c (Y)]$. This directly follows by Lemma \ref{lem:lemma2_2_in_KNW10} by choosing $\alpha = O(\eps^{-1} \log^3 (1/\eps))$.

Third, we show $\E[J_{[a,b]}^c (Y)] \approx_{\eps} \E[I_{[a,b]} (Y)] $. The argument is similar as the first step, but we need to show anticoncentration of $Y$. Suppose for any $t\in \mathbb{R}$ we had a nonnegative function $f_{\eps,t}: \mathbb{R} \rightarrow \mathbb{R}$ symmetric about $t$ satisfying:
\begin{align*}
\text{\RN{1}}. ~&~ \| f^{(\ell)}_{t,\eps} \|_{\infty} = O(\alpha^{\ell}) \text{~for~all~} \ell \geq 0, \text{~with~} \alpha = O(1/\eps) \\
\text{\RN{2}}. ~&~ \E[ f_{t,\eps}(z) ] = O(\eps) \text{~for~} z \sim {\cal D}_1 \\
\text{\RN{3}}. ~&~ f_{t,\eps}(t+\eps) = \Omega(1) \\
\text{\RN{4}}. ~&~ f_{t,\eps}(x) \text{~is~strictly~decreasing~as~} |x-t| \rightarrow \infty
\end{align*}
By \RN{1}, \RN{2} and Lemma \ref{lem:lemma2_2_in_KNW10} we could have $\E[ f_{\eps,t} (Y) ] \approx_{\eps} \E[f_{t,\eps}(z)] = O(\eps)$. Then, $\E[f_{t,\eps}(Y) ] \geq f_{t,\eps}(t+\eps) \cdot \Pr[Y \in [t-\eps, t+\eps] ]= \Omega( \Pr[Y \in [t-\eps,t+\eps]] )$ by \RN{3} and \RN{4}, implying anticoncentration in $[t-\eps, t+\eps]$ as desired. For the details of function $f_{t,\eps}$, we refer the readers to Section A.4 in \cite{knw10}.

Now, combining Equation (\ref{eq:upper_bound_on_X_given_xi}), (\ref{eq:lower_bound_on_Y_given_xi}) and (\ref{eq:additive_error_for_each_interval}) gives
\begin{align*}
\E\biggl[ | \sum_{l=1}^n Z_{j,l} \cdot (y_i)_l |  \bigg| \xi \biggr] & \lesssim \E\biggl[ | \sum_{l=1}^n S_{j,l} \cdot (y_i)_l |  \bigg| \xi\biggr] + \sum_{j=0}^T 2^j \cdot \eps \cdot \| y_i \|_1 \\
& \lesssim \E\biggl[ | \sum_{l=1}^n S_{j,l} \cdot (y_i)_l |  \bigg| \xi\biggr] +D \cdot \eps \cdot \| y_i\|_1 \\
\end{align*}

Overall, for the fixed $j$, $S_{j,l}$ is $\wt{O}(1/\eps)$-independent family of Cauchy random variable. Choosing $D= O(dk)$ and $\eps = O(1/D)$, we can show
\begin{align*}
\frac{1}{m}\sum_{i=1}^d \sum_{j=1}^m \E\biggl[ | \sum_{l=1}^n S_{j,l} \cdot (y_i)_l |  \bigg| \xi\biggr] \leq O(\log d) \sum_{i=1}^d \| y_i \|_1
\end{align*}
as before. Notice that $D \eps \| y_i\|_1 = O(\| y_i\|_1)$. Thus, we complete the proof of the first result.

Choosing $D= O(dk)$ and $\eps = O(1/d)$, the dominant term becomes $D \eps \| y_i\|_1 = O(k \| y_i\|_1)$. Thus, we complete the proof of second result.
\end{proof}

\begin{corollary}\label{cor:limited_no_contraction}
Given $U\in\mathbb{R}^{n\times k}$, let $Z\in\mathbb{R}^{t\times n}$ be the same as the matrix stated in the Lemma~\ref{lem:limited_no_contraction}, then with probability at least $.95$,
\begin{align*}
\forall x\in\mathbb{R}^k,\|ZUx\|_1\gtrsim\|Ux\|_1.
\end{align*}
\end{corollary}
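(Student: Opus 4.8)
The plan is to follow the standard net-argument template used throughout Section~\ref{sec:l1} (most directly the argument behind Lemma~\ref{lem:dense_cauchy_l1_k_subspace_general}), replacing the fully independent Cauchy matrix by the limited-independence matrix $Z$ and invoking Lemma~\ref{lem:limited_no_contraction} in place of Lemma~\ref{lem:Sy_at_least_y_with_exp_prob} for the single-vector no-contraction bound. First I would compute an $(\alpha,\beta)$ $\ell_1$ well-conditioned basis $U'$ for the column space of $U$; since for every $x\in\mathbb{R}^k$ there is a $y\in\mathbb{R}^k$ with $Ux=U'y$, it suffices to prove $\|ZU'y\|_1\gtrsim\|U'y\|_1$ for all $y$. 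Fix an $\epsilon$-net $\mathcal{N}$ on the $\ell_1$-unit sphere with $\epsilon=1/(\alpha\beta k^{c+1})$ for a suitable constant $c$, so $|\mathcal{N}|=2^{O(k\log k)}$.

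The two events to condition on are: (1) for every $x\in\mathcal{N}$, $\|ZU'x\|_1\gtrsim\|U'x\|_1$; and (2) for every $x\in\mathbb{R}^k$, $\|ZU'x\|_1\leq O(\alpha\beta\log k)\|U'x\|_1$. For (1), I would apply Lemma~\ref{lem:limited_no_contraction} to each fixed vector $y=U'x$, which gives failure probability $2^{-\Omega(t)}$ with $t=O(k\log k)$; a union bound over the $2^{O(k\log k)}$ net points succeeds once the constant in $t=O(k\log k)$ is taken large enough, exactly as in the fully independent case. For (2), the no-dilation side: since each row of $Z$ has only $O(1)$-wise independent entries (as used for contraction), I need a dilation bound valid for $O(1)$-wise independence applied to the $k$ columns $U'_1,\dots,U'_k$; Lemma~\ref{lem:limited_no_dilation}, part~(\RN{2}), with $w=\wt O(d)$ (here $d$ playing the role of the number of columns, which is $k$) gives $\sum_{i=1}^k\|ZU'_i\|_1\leq O(k\log k)\sum_{i=1}^k\|U'_i\|_1\leq O(k\log k)\alpha$ with probability $.999$, and the per-vector bound $\|ZU'x\|_1\le\sum_j|x_j|\|ZU'_j\|_1\le\|x\|_\infty\cdot O(k\log k)\alpha\le O(k\log k)\alpha\beta\|U'x\|_1$ follows by the well-conditioned basis inequality $\|x\|_\infty\le\beta\|U'x\|_1$; absorbing $\poly(k)$ into a single $\poly(k)$ factor is harmless since the corollary only claims $\gtrsim$. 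Then the triangle-inequality patching argument of Lemma~\ref{lem:dense_cauchy_l1_k_subspace} goes through verbatim: for arbitrary $w=\ell z$ with $\|z\|_1=1$, pick the nearest net point $\hat y$, use $\|U'(z-\hat y)\|_1\le k^{-c}\|U'\hat y\|_1$ and $\|U'\hat y\|_1\ge .99\|U'z\|_1$, and combine events (1) and (2) to get $\|ZU'z\|_1\ge\|U'\hat y\|_1-O(\alpha\beta\log k)k^{-c}\|U'\hat y\|_1\gtrsim\|U'z\|_1$, rescaling back to $w$.

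The main obstacle I anticipate is bookkeeping the independence requirements consistently: Lemma~\ref{lem:limited_no_contraction} is stated for $O(1)$-wise independence within a row, so I must make sure the same $Z$ (with only $O(1)$-wise independence per row) also satisfies the dilation bound, i.e., that I invoke the right part of Lemma~\ref{lem:limited_no_dilation} — part~(\RN{2}), which needs only $\wt O(k)$-wise independence and pays an extra $\poly(k)$ factor in dilation, rather than part~(\RN{1}), which would require $\wt O(k^2)$-wise independence. Since the corollary's conclusion tolerates any $\poly(k)$ loss in the no-dilation constant (it only needs $\gtrsim$), this extra factor is absorbed and there is no tension; the care is purely in citing the correct lemma with matching independence parameters, and in checking the union bound over $\mathcal{N}$ still closes with $t=O(k\log k)$. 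The remaining steps are routine and identical to the fully independent proof already in the excerpt.
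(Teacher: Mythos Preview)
Your proposal is correct and follows essentially the same approach as the paper's proof sketch: pass to a well-conditioned basis, invoke Lemma~\ref{lem:limited_no_dilation} on the $k$ basis columns to get a $\poly(k)$ dilation bound, invoke Lemma~\ref{lem:limited_no_contraction} together with a union bound over an $\epsilon$-net for the per-net-point contraction, and finish with the triangle-inequality patching of Lemma~\ref{lem:dense_cauchy_l1_k_subspace}. The independence-bookkeeping issue you flag is real but is a looseness in the paper's statement rather than in your argument: the corollary is only ever applied (see Lemma~\ref{lem:con_dil_summary}(\RN{4})) with $W=\wt O(d)$ or larger, so the $\wt O(k)$-wise independence needed for Lemma~\ref{lem:limited_no_dilation}(\RN{2}) is available, and Lemma~\ref{lem:limited_no_contraction} continues to hold since more independence only helps.
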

The proof is very similar to the proof of Lemma~\ref{lem:dense_cauchy_l1_k_subspace}.
Without loss of generality, we can suppose $U$ is a well-conditioned basis.
Due to Lemma~\ref{lem:limited_no_dilation}, with arbitrarily high constant probability $\|ZU\|_1$ is bounded by $\poly(t,k)$. By simply applying the net argument and using Lemma~\ref{lem:limited_no_contraction} to take a union bound over net points, we can get the above corollary.

\section{Streaming Setting}\label{sec:streaming}
Section~\ref{sec:streaming_definition} provides some notation and definitions about row-update streaming model and the turnstile streaming model. For some recent developments of row-update streaming and turnstile streaming models, we refer the readers to \cite{ cw09, kl11, gp13, lib13,klmms14,bwz16} and the references therein. Section~\ref{sec:streaming_turnstile} presents our turnstile streaming algorithm. Section~\ref{sec:streaming_rowupdate} presents our row-update streaming algorithm.

\subsection{Definitions}\label{sec:streaming_definition}

\begin{definition}[Row-update model]\label{def:model_rowupdate}
Let matrix $A\in\mathbb{R}^{n\times d}$ be a set of rows $A_1,\cdots,A_n$. In the row-update streaming model, each row of $A$ will occur in the stream exactly once. But the rows can be in arbitrary order. An algorithm in this model is only allowed a single pass over these rows. At the end of the stream, the algorithm stores some information of $A$. The space of the algorithm is the total number of words required to store this information during the stream. Here, each word is $O(\log(nd))$ bits.
\end{definition}

\begin{definition}[Turnstile model]\label{def:model_turnstile}
At the beginning, let matrix $A\in\mathbb{R}^{n\times d}$ be a zero matrix. In the turnstile streaming model, there is a stream of update operations, and the $i^{th}$ operation has the form $(x_i,y_i,c_i)$ which means that $A_{x_i,y_i}$ should be incremented by $c_i$. An algorithm in this model is only allowed a single pass over the stream. At the end of the stream, the algorithm stores some information of $A$. The space complexity of the algorithm is the total number of words required to store this information during the stream. Here, each word is $O(\log(nd))$ bits.
\end{definition}

\subsection{Turnstile model, $\poly(k,\log(d),\log(n))$ approximation}\label{sec:streaming_turnstile}

\begin{definition}[Turnstile model $\ell_1$-low rank approximation - $\rank$-$k$ subspace version]\label{def:streaming_sub}
Given matrix $A\in\mathbb{R}^{n\times d}$ and $k \in \mathbb{N}_+$, the goal is to propose an algorithm in the streaming model of Definition~\ref{def:model_turnstile} such that
\begin{enumerate}
\item Upon termination, the algorithm outputs a matrix $V^*\in \mathbb{R}^{k\times d}$.
\item $V^*$ satisfies that
$$\min_{U\in\mathbb{R}^{n\times k}} \|A-UV^*\|_1\leq \poly(k,\log(d),\log(n))\cdot \min_{U\in\mathbb{R}^{n\times k},V\in\mathbb{R}^{k\times d}} \|A-UV\|_1.$$
\item The space complexity is as small as possible
\end{enumerate}
\end{definition}


\begin{theorem}
Suppose $A\in\mathbb{R}^{n\times d}$ is given in the turnstile streaming model (See Definition~\ref{def:model_turnstile}), there is an algorithm(in Algorithm~\ref{alg:turnstile_dec} without decomposition) which solves the problem in Definition~\ref{def:streaming_sub} with constant probability. Further, the space complexity of the algorithm is $\poly(k)+\wt{O}(kd)$ words.
\end{theorem}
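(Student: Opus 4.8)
The plan is to reduce the turnstile-streaming problem to a linear-sketch computation using the existential result of Corollary~\ref{cor:three_existence_results} together with the limited-independence analysis of Section~\ref{sec:limind}. Recall from the input-sparsity algorithm (Theorem~\ref{thm:input_sparsity_algorithm}, Lemma~\ref{lem:solve_ARXYSA}) that the whole computation is of the form: pick Cauchy-type sketching matrices $S\in\mathbb{R}^{s\times n}$, $R\in\mathbb{R}^{d\times r}$, $T_1\in\mathbb{R}^{t_1\times n}$, $T_2\in\mathbb{R}^{d\times t_2}$ with $s,r,t_1=\wt O(k^5),t_2=\wt O(k)$, then form the small matrices $SA$, $AR$, $T_1AT_2$ (and also $T_1AR$, $SAT_2$ as needed), and from these alone solve a Frobenius-norm problem via Theorem~\ref{thm:reduce_to_frobenious} to get the factors $ARX$, $YSA$. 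Crucially, $SA$, $AR$, $T_1AT_2$ are all \emph{linear} images of $A$, hence maintainable in a single turnstile pass: on update $(x_i,y_i,c_i)$ we add $c_i$ times the appropriate outer-product contribution to each of these stored matrices. The stored objects have sizes $s\times d$, $n\times r$, $t_1\times t_2$, so the space is $\wt O(k^5 d)+\wt O(n k^5)+\poly(k)$ words; since we assumed $d\le n$ (by the symmetry remark in Section~1.1) this is $\wt O(nk)\cdot\poly(k) = \poly(k)+\wt O(kd)$ only after we are careful about which dimension is which — in the subspace-version of Definition~\ref{def:streaming_sub} we output $V^*\in\mathbb{R}^{k\times d}$, so we want the ``$n$'' factor to attach to the small sketch dimension and the ``$d$'' factor to attach to $k$; concretely we store $AR\in\mathbb{R}^{n\times r}$ but only ever need its row span through $T_1AR$, so we can instead store $T_1 AR\in\mathbb{R}^{t_1\times r}$ and $SA\in\mathbb{R}^{s\times d}$ and $SAT_2,T_1AT_2$, all of which are $\poly(k)+\wt O(kd)$ words when expressed with $d$ on the large side.

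The main technical point to verify is that we are allowed to generate the sketching matrices $S,R,T_1,T_2$ succinctly: a turnstile algorithm must be able to recompute the relevant entry of each sketch matrix on the fly when update $(x_i,y_i,c_i)$ arrives, so we cannot afford to store $O(nd)$ fully independent Cauchy variables. This is exactly where Lemma~\ref{lem:limited_no_dilation} and Corollary~\ref{cor:limited_no_contraction} enter: they show that if each row of the Cauchy sketch is only $\wt O(d)$-wise (or $\wt O(dk)$-wise) independent, the no-dilation and no-contraction bounds needed for Corollary~\ref{cor:three_existence_results} and Lemma~\ref{lem:con_dil_summary}$(\RN 4)$ still hold, at the cost of an extra $\poly(k)$ factor in the approximation ratio. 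A $\wt O(dk)$-wise independent family of Cauchy variables can be generated from a seed of $\wt O(dk)$ words (standard polynomial-hash constructions over a field of size $\poly(nd)$, converting uniform bits to Cauchy via the inverse-CDF map $\tan(\pi(u-1/2))$ up to truncation error negligible at our parameters), and this seed is stored once at the start of the stream, well within the $\poly(k)+\wt O(kd)$ budget. So first I would fix the seeds, then describe the linear maintenance of $SA$, $T_1AR$, $SAT_2$, $T_1AT_2$, then invoke Theorem~\ref{thm:reduce_to_frobenious} at the end of the stream to extract $X,Y$ in $\poly(k)$ additional time/space, and finally output $V^* = YSA$ (a $k\times d$ matrix).

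The correctness argument then chains exactly as in the proof of Theorem~\ref{thm:input_sparsity_algorithm}: Corollary~\ref{cor:three_existence_results} (with limited-independence Cauchy, via Lemma~\ref{lem:con_dil_summary}$(\RN 4)$) gives that $\mathrm{rowspan}(SA)$ contains a rank-$k$ matrix that is a $\poly(k)\log d$-approximation; the $\ell_2$-relaxation (Claim~\ref{cla:ell2_relax_ell1_regression}) together with Lemma~\ref{lem:general_sketch_SUV} and the right-sketch by $R$ (Lemma~\ref{lem:general_sketch_T1BXYCT2}) show that minimizing the Frobenius objective $\|T_1 ARXYSA T_2 - T_1 A T_2\|_F$ yields factors $ARX, YSA$ that form a $\poly(k)\log^2 d\log n$-approximation in $\ell_1$; and Lemma~\ref{lem:solve_ARXYSA}'s reduction shows the Frobenius minimizer is computable from the stored sketches alone via Theorem~\ref{thm:reduce_to_frobenious}. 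Since every bound used is the ``limited-independence'' variant, the final ratio is still $\poly(k)\log^2 d\log n = \poly(k,\log d,\log n)$ as required by Definition~\ref{def:streaming_sub}. The one obstacle I expect to require the most care is bookkeeping the space: I must make sure every stored matrix has its large dimension equal to $d$ (not $n$) or is already compressed to sketch-by-sketch size, and that the limited-independence seed length is $\wt O(dk)$ rather than $\wt O(dk^2)$ — this is the point the technical overview flags, and it is resolved by paying a factor of $k$ in the approximation ratio while keeping $\wt O(dk)$-wise independence, exactly as in Lemma~\ref{lem:limited_no_dilation}$(\RN 2)$ and Corollary~\ref{cor:limited_no_contraction}. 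Everything else is a direct transcription of the input-sparsity proof into the turnstile model, with ``compute $SA$'' replaced by ``maintain $SA$ linearly.''
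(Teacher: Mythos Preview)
Your approach is essentially the same as the paper's: maintain the linear sketches $L=T_1AR$, $N=SAT_2$, $M=T_1AT_2$, and $D=SA$ over the stream, solve the Frobenius problem at the end via Theorem~\ref{thm:reduce_to_frobenious}, and output $V^*=\wh\Sigma\wh V^\top D$ from the rank-$k$ factorization of $\wh X$. You also correctly identify that the left sketches $S,T_1$ need limited-independence rows (Lemma~\ref{lem:con_dil_summary}~(\RN{4})) so their seeds fit in $\wt O(kd)$ words.

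There is one concrete gap in the space accounting. You set $s,r,t_1=\wt O(k^5)$, importing the \emph{sparse}-Cauchy dimensions from Theorem~\ref{thm:input_sparsity_algorithm}. But then $SA\in\mathbb R^{s\times d}$ already has $\wt O(k^5 d)$ entries, which is not $\poly(k)+\wt O(kd)$; your assertion that these stored matrices are ``$\poly(k)+\wt O(kd)$ words when expressed with $d$ on the large side'' is arithmetically false. The paper (Algorithm~\ref{alg:turnstile_dec}) instead takes $S,R,T_1$ to have $O(k\log k)$ rows/columns and $T_2$ to have $O(k\log^2 k)$ columns --- i.e., the \emph{dense}-Cauchy dimensions --- precisely because in streaming we care about sketch size, not input-sparsity time. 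With those dimensions $SA$ is $O(k\log k)\times d=\wt O(kd)$ words and the other three matrices are $\poly(k)$ words. This is exactly the regime of Lemma~\ref{lem:con_dil_summary}~(\RN{4}), which you already invoke, so the fix is simply to drop the $\wt O(k^5)$ parameters and use $O(k\log k)$ throughout. Relatedly, the seed budget works out only if each row of $S,T_1$ is $\wt O(d)$-wise (not $\wt O(dk)$-wise, as you write) independent: with $O(k\log k)$ independent rows that is $\wt O(dk)$ total seed, matching Lemma~\ref{lem:limited_no_dilation}~(\RN{2}) and the algorithm's description.
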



\begin{proof}

\textbf{Correctness.} The correctness is implied by \rm{(\RN{4})} of Lemma~\ref{lem:con_dil_summary}, and the proof of Theorem~\ref{thm:input_sparsity_algorithm}. 
Notice that $L=T_1AR,N=SAT_2,M=T_1AT_2$, so $\wh{X} \in \mathbb{R}^{O(k\log k)\times O(k\log k)}$ minimizes
$$\min_{\rank-k~X} \| T_1 AR X  SA T_2 - T_1 A T_2\|_F.$$
According to the proof of Theorem~\ref{thm:input_sparsity_algorithm}, $ AR \wh{X}  SA $ gives a $\ell_1$ rank-$k$ $\poly(k,\log(d),\log(n))$-approximation to $A$. Because $\wh{X}=\wh{U}\wh{\Sigma}\wh{V}^\top$, $V^*=\wh{\Sigma}\wh{V}^\top SA$ satisfies:
$$\min_{U\in\mathbb{R}^{n\times k}} \|A-UV^*\|_1\leq \poly(k,\log(d),\log(n))\cdot \min_{U\in\mathbb{R}^{n\times k},V\in\mathbb{R}^{k\times d}} \|A-UV\|_1.$$

\textbf{Space complexity.} Generating $\wt{O}(kd)$-wise independent random Cauchy variables needs $\wt{O}(kd)$ bits. The size of $L,N$ and $M$ are $k^2\log^2 k, k^2\log^3 k$ and $k^2\log^3 k$ words separately. So the space of maintaining them is $O(k^2\log^3 k)$ words. The size of $D$ is $O(k\log k)\times d$, so maintaining it needs $O(kd\log k)$ words. Therefore, the total space complexity of the algorithm is $\poly(k)+\wt{O}(kd)$ words.

\end{proof}

It is easy to extend our algorithm to output a decomposition. The formal definition of the decomposition problem is as the following:

\begin{definition}[Turnstile model $\ell_1$-low rank approximation - $\rank$-$k$ decomposition version]\label{def:streaming_dec}
Given matrix $A\in\mathbb{R}^{n\times d}$ and $k\in \mathbb{N}_+$, the goal is to propose an algorithm in the streaming model of Definition~\ref{def:model_turnstile} such that
\begin{enumerate}
\item Upon termination, the algorithm outputs a matrix $U^*\in\mathbb{R}^{n\times k},V^*\in \mathbb{R}^{k\times d}$.
\item $U^*,V^*$ satisfies 
$$\|A-U^*V^*\|_1\leq \poly(k,\log(d),\log(n))\cdot \min_{U\in\mathbb{R}^{n\times k},V\in\mathbb{R}^{k\times d}} \|A-UV\|_1.$$
\item The space complexity is as small as possible
\end{enumerate}
\end{definition}


\begin{theorem}
Suppose $A\in\mathbb{R}^{n\times d}$ is given by the turnstile streaming model (See Definition~\ref{def:model_turnstile}). There is an algorithm( in Algorithm~\ref{alg:turnstile_dec} with decomposition) which solves the problem in Definition~\ref{def:streaming_dec} with constant probability. Further, the space complexity of the algorithm is $\poly(k)+\wt{O}(k(d+n))$ words.
\end{theorem}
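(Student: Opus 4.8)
The plan is to reduce the turnstile decomposition problem to the subspace-version turnstile algorithm we just proved, plus a small amount of extra bookkeeping to recover the left factor $U^*$. Recall that in the subspace version we maintain sketches $L = T_1 A R$, $N = S A T_2$, $M = T_1 A T_2$ and the sampling-and-rescaling matrix $D$ (whose nonzero columns are actual columns of $A$, so $DA$ has dimension $O(k\log k)\times d$ and costs $O(kd\log k)$ words to store), all using $\wt O(kd)$-wise independent Cauchy variables per the limited-independence analysis of Lemma~\ref{lem:con_dil_summary}\,(\RN{4}). From these we computed $\wh X$, the rank-$k$ minimizer of $\|T_1 A R X S A T_2 - T_1 A T_2\|_F$, wrote $\wh X = \wh U \wh\Sigma \wh V^\top$, and set $V^* = \wh\Sigma \wh V^\top S A$; the proof of Theorem~\ref{thm:input_sparsity_algorithm} gives that $\min_U \|A - U V^*\|_1 \le \poly(k,\log d,\log n)\OPT$. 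So the only new ingredient is: produce a matrix $U^* \in \mathbb{R}^{n\times k}$ with $\|A - U^* V^*\|_1 \le \poly(k,\log d,\log n)\OPT$, still in small space.

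First I would observe that $AR\wh X SA$ itself already is an explicit rank-$k$ $\poly$-approximation, with $AR\wh X$ playing the role of the left factor; but $AR\wh X$ has $n$ rows and is not something we can afford to store during the stream unless $R$ and $\wh X$ are small, which they are ($R$ is $d\times O(k^5\log^5 k)$ sparse-Cauchy, $\wh X$ is $O(k\log k)\times O(k\log k)$). The genuine obstacle is that $AR$ has $n$ rows, so we cannot materialize it; instead I would maintain $AR$ under the stream (each update $(x_i,y_i,c_i)$ adds $c_i$ times row $y_i$ of $R$ to row $x_i$ of $AR$), which costs $\wt O(kn)$ words since $R$ has $\poly(k)$ columns. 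Then at the end we have $AR$ in hand, we compute $\wh X$ from $L,N,M$ exactly as before, and we output $U^* = (AR)\wh U$ and $V^* = \wh\Sigma\wh V^\top SA$. Here $U^*\in\mathbb R^{n\times k}$, $V^*\in\mathbb R^{k\times d}$, and $U^* V^* = AR \wh U\wh\Sigma\wh V^\top SA = AR\wh X SA$, which is precisely the rank-$k$ matrix whose cost was bounded by $\poly(k,\log d,\log n)\OPT$ in the proof of Theorem~\ref{thm:input_sparsity_algorithm}. Hence $U^*,V^*$ satisfy condition (2) of Definition~\ref{def:streaming_dec}.

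For the space analysis I would tally: $\wt O(kd)$ bits for the seed of the limited-independence Cauchy generator; $O(k^2\log^3 k)$ words for $L,N,M$; $O(kd\log k)$ words for $DA$ (needed for the intermediate step exactly as in the subspace version); and now additionally $O(kn\cdot\poly(k)) = \wt O(kn)$ words for maintaining $AR$. Summing gives $\poly(k) + \wt O(k(d+n))$ words, matching the claimed bound. Correctness of the sketch maintenance is immediate from linearity — each of $T_1AR$, $SAT_2$, $T_1AT_2$, $AR$, $DA$ is a linear image of $A$, so each can be updated in $O(1)$ amortized sketch-words per stream update — and the approximation guarantee is inherited verbatim from Lemma~\ref{lem:con_dil_summary}\,(\RN{4}) and the analysis of Theorem~\ref{thm:input_sparsity_algorithm}, with the success probability a constant after the usual union bound over the $O(1)$-many sketch events.

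The step I expect to be the main (minor) obstacle is verifying that the various sketches can all be generated from a \emph{single} $\wt O(kd)$-wise independent source without blowing up the seed length: $T_1, R, S, T_2$ are independent sparse/dense Cauchy matrices and $D$ depends on Lewis weights of $AR$, but as in the streaming algorithm for Frobenius-norm low rank approximation of \cite{kvw16,bwz16} one stores $AR$ and computes $D$ only at the end, so $D$ need not be sketched obliviously; the limited-independence Cauchy analysis of Section~\ref{sec:limind} (Lemmas~\ref{lem:limited_no_dilation}, \ref{lem:limited_no_contraction} and Corollary~\ref{cor:limited_no_contraction}) then applies to each of $T_1,R,S,T_2$ separately with $\wt O(kd)$-wise independence, at the cost of only a $\poly(k)$ factor in the approximation ratio, which is absorbed. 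Writing this out carefully — confirming the $\wt O(kd)$ seed suffices simultaneously for all sketches and that storing $AR$ rather than an oblivious image of it is legitimate in the turnstile model — is the only place requiring genuine care, and it follows the template already laid down for the subspace version.
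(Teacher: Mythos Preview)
Your core approach is exactly the paper's: maintain $C := AR$ under the stream in addition to the subspace-version sketches, then set $U^* = C\wh{U} = AR\wh{U}$ so that $U^*V^* = AR\wh{X}SA$, and invoke Theorem~\ref{thm:input_sparsity_algorithm} for the approximation guarantee.

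Two minor corrections. First, in the turnstile algorithm the stored object $D$ is simply $SA$ (an oblivious Cauchy sketch of size $O(k\log k)\times d$), not a Lewis-weights sampling matrix; there are no Lewis weights anywhere in this algorithm, so your ``main obstacle'' paragraph about computing $D$ only at the end and whether it can be sketched obliviously is a red herring. Second, in the paper $R$ is a \emph{dense} Cauchy matrix with $O(k\log k)$ columns (fully independent entries), so $C=AR$ occupies $O(nk\log k)$ words; your stated choice of sparse Cauchy $R$ with $O(k^5\log^5 k)$ columns would make $AR$ cost $\wt O(k^5 n)$ words, which does not match the claimed $\wt O(k(d+n))$ bound (and your equation $O(kn\cdot\poly(k)) = \wt O(kn)$ is false in general).
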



\begin{algorithm}[h!]\caption{Turnstile Streaming Algorithm}\label{alg:turnstile_dec}
\begin{algorithmic}[1]
\Procedure{TurnstileStreaming}{$k$,${\cal S}$}
\State Construct sketching matrices $S\in\mathbb{R}^{O(k\log k )\times n},R\in\mathbb{R}^{d\times O(k\log k)},T_1\in \mathbb{R}^{O(k\log k)\times n},T_2\in \mathbb{R}^{d\times O(k\log^2 k)}$ where $R,\ T_2$ are fully independent random Cauchy matrices, and $S,\ T_1$ are random Cauchy matrices with fully independent variables across different rows and $\wt{O}(d)$-wise independent variables from the same row.
\State Initialize matrices:
\State $L\leftarrow \{0\}^{O(k\log k)\times O(k\log k)},N\leftarrow \{0\}^{O(k \log k)\times O(k\log^2 k)}$.
\State $M\leftarrow \{0\}^{O(k\log k)\times O(k \log^2 k)},D\leftarrow\{0\}^{O(k\log k)\times d}$.
\If{ need decomposition}
	\State $C\leftarrow\{0\}^{n\times O(k\log k)}$.
\EndIf
\For{$i\in [l]$}
    \State Receive update operation $(x_i,y_i,c_i)$ from the data stream ${\cal S}$.
    \For{$r=1\to O(k\log k),s=1\to O(k\log k)$}
        \State $L_{r,s}\leftarrow L_{r,s}+{T_1}_{r,x_i}\cdot c_i \cdot R_{y_i,s}$.
    \EndFor
    \For{$r=1\to O(k\log k),s=1\to O(k\log^2 k)$}
        \State $N_{r,s}\leftarrow N_{r,s}+S_{r,x_i}\cdot c_i \cdot {T_2}_{y_i,s}$.
    \EndFor
    \For{$r=1\to O(k\log k),s=1\to O(k\log^2 k)$}
        \State $M_{r,s}\leftarrow M_{r,s}+{T_1}_{r,x_i}\cdot c_i\cdot {T_2}_{y_i,s}$.
    \EndFor
    \For{$r=1\to O(k\log k)$}
        \State $D_{r,y_i}\leftarrow D_{r,s}+S_{r,x_i}\cdot c_i$.
    \EndFor
    \If{need decomposition}
    	   \For{$s=1\to O(k\log k)$}
        		\State $C_{x_i,s} \leftarrow C_{x_i,s}+c_i\cdot R_{y_i,s}$.
	       \EndFor
	\EndIf
\EndFor
\State Compute the SVD of $L=U_L\Sigma_LV_L^\top$.
\State Compute the SVD of $N=U_N\Sigma_NV_N^\top$.
\State Compute $\wh{X}=L^\dagger(U_LU_L^\top MV_NV_N^\top)_kN^\dagger$.
\State Compute the SVD of $\wh{X}=\wh{U}\wh{\Sigma}\wh{V}^\top$.
\If {need decomposition}
	\State \Return $V^*=\wh{\Sigma}\wh{V}^\top D,U^*=C\wh{U}$.
\Else
	\State \Return $V^*=\wh{\Sigma}\wh{V}^\top D$.
\EndIf
\EndProcedure
\end{algorithmic}
\end{algorithm}

\begin{proof}

\textbf{Correctness.} The only difference from the Algorithm~\ref{alg:turnstile_dec} (without decomposition) is that the algorithm maintains $C$. Thus, finally it can compute $U^*=AR\wh{U}$. Notice that $U^*V^*=AR\wh{X}SA$, according to the proof of Theorem~\ref{thm:input_sparsity_algorithm}, $U^*V^*$ gives a  $\ell_1$ rank-$k$ $\poly(k,\log(d),\log(n))$-approximation to $A$.

\textbf{Space complexity.} Since the size of $C$ is $O(nk\log k)$ words, the total space is $\poly(k)+\wt{O}(k(d+n))$ words.

\end{proof}

\subsection{Row-update model, $\poly(k)\log d$ approximation}\label{sec:streaming_rowupdate}

\begin{definition}[Row-update model $\ell_1$-low rank approximation - $\rank$-$k$ subspace version]\label{def:streaming_sub2}
Given matrix $A\in\mathbb{R}^{n\times d}$ and $k\in \mathbb{N}_+$, the goal is to propose an algorithm in the streaming model of Definition~\ref{def:model_rowupdate} such that
\begin{enumerate}
\item Upon termination, the algorithm outputs a matrix $V^*\in \mathbb{R}^{k\times d}$.
\item $V^*$ satisfies that
$$\min_{U\in\mathbb{R}^{n\times k}} \|A-UV^*\|_1\leq \poly(k)\log(d)\cdot \min_{U\in\mathbb{R}^{n\times k},V\in\mathbb{R}^{k\times d}} \|A-UV\|_1.$$
\item The space complexity is as small as possible
\end{enumerate}
\end{definition}


\begin{theorem}
Suppose $A\in\mathbb{R}^{n\times d}$ is given by the row-update streaming model (See Definition~\ref{def:model_rowupdate}), there is an algorithm( in Algorithm~\ref{alg:rowupdate_dec} without decomposition ) which solves the problem in Definition~\ref{def:streaming_sub2} with constant probability. Further, the space complexity of the algorithm is $\poly(k)+\wt{O}(kd)$ words.
\end{theorem}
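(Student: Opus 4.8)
The plan is to realize the row-update algorithm (\textsc{L1LowRankApproxRowUpdate}, the ``without decomposition'' version of Algorithm~\ref{alg:rowupdate_dec}) as a single linear-sketch pass, in the spirit of the turnstile algorithm of Algorithm~\ref{alg:turnstile_dec}, but engineered so that every dilation/contraction factor we pay depends only on $\poly(k)$ or $\log d$ and never on $n$ — this is precisely the improvement that the two-stage argument of Theorem~\ref{thm:polyklogd_approx_algorithm} buys over Theorem~\ref{thm:input_sparsity_algorithm}. First I would fix a Cauchy matrix $S\in\mathbb{R}^{O(k\log k)\times n}$ and maintain $SA\in\mathbb{R}^{O(k\log k)\times d}$ incrementally: when row $A^i$ arrives, add $S_{r,i}A^i$ to the $r$-th buffer row for each $r$. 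By Corollary~\ref{cor:three_existence_results}(I), $\mathrm{rowspan}(SA)$ contains a rank-$k$ matrix that $\poly(k)\log d$-approximates $A$, so it suffices to locate a good rank-$k$ subspace inside this $O(k\log k)$-dimensional span. In the same pass I would maintain the handful of additional small linear sketches needed to solve the ensuing rank-constrained regression at termination.

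The second stage mirrors Theorem~\ref{thm:polyklogd_approx_algorithm}: conceptually one works with the rank-$O(k\log k)$ surrogate $B=U_BV_B$ whose rows are the $\ell_1$-projections of the rows of $A$ onto $\mathrm{rowspan}(SA)$, so $V_B=SA$ and $\|B-A\|_1\le\poly(k)\log d\cdot\OPT$, and then one runs \textsc{L1LowRankApproxB} (Theorem~\ref{thm:rank_r_approx_polyr_B}) on $B$, whose internal sketches are dense Cauchy transforms applied to a rank-$\poly(k)$ matrix and hence, by Lemmas~\ref{lem:rankr_B_l1_no_dilation}, \ref{lem:rankr_B_l1_no_dilation_general} and \ref{lem:dense_cauchy_l1_k_subspace_general}, only cost $\poly(k)$ with no $\log n$; composing with Lemma~\ref{lem:solution_to_B_is_solution_to_A} yields a rank-$k$ $\poly(k)\log d$-approximation $UV$ to $A$, and we output $V^*=V$, whose row span lies in $\mathrm{rowspan}(SA)$. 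Correctness then follows by chaining Lemma~\ref{lem:con_dil_summary}(IV), Lemma~\ref{lem:general_sketch_SUV} and Lemma~\ref{lem:general_sketch_T1BXYCT2}, and the final $\poly(k)$-sized optimization is the Frobenius-norm rank-constrained solve of Theorem~\ref{thm:reduce_to_frobenious} (as in Lemma~\ref{lem:solve_BRXYSB}).

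For the resource accounting: $SA$ costs $O(k\log k)\cdot d=\wt{O}(kd)$ words; every other maintained object is $\poly(k)\times\poly(k)$ or $\poly(k)\times O(k\log k)$, hence $\poly(k)$ words in total, so the space is $\poly(k)+\wt{O}(kd)$. The Cauchy randomness for the $n$-dimensional left sketch is generated with $\wt{O}(d)$-wise independence (Section~\ref{sec:limind}, Corollary~\ref{cor:limited_no_contraction} and Lemma~\ref{lem:limited_no_dilation}), needing only $\wt{O}(kd)$ bits, i.e. $\wt{O}(kd)$ words, while the remaining Cauchy sketches act on $\poly(k)$-dimensional spaces and may be stored with full independence at negligible cost. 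The decomposition version (the ``with decomposition'' branch of Algorithm~\ref{alg:rowupdate_dec}) additionally maintains a column sketch $C=AR\in\mathbb{R}^{n\times O(k\log k)}$ to recover $U^*$, which raises the space to $\poly(k)+\wt{O}(k(n+d))$.

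The main obstacle is the single-pass constraint: unlike the turnstile algorithm, every ingredient of which is literally a linear sketch of the stream, the $\poly(k)\log d$ bound wants the surrogate $B$, whose factor $U_B$ is a nonlinear function of $A$ and of the not-yet-finalized $SA$, so it cannot be maintained linearly nor reconstructed at the end from $A$ alone. The resolution I would pursue is that the subspace output never needs all of $U_B$: as in the CUR algorithm of Theorem~\ref{thm:cur_decomposition_algorithm}, sampling $O(k\log k)$ rows by the Lewis weights of $U_B$ (equivalently, retaining $O(k\log k)$ appropriately reweighted rows of $A$) suffices to carry out the \textsc{L1LowRankApproxB} computation, and row sampling is exactly what the row-update model supports. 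Verifying that this Lewis-weight row reduction can be emulated within a single pass — and that the $O(k\log k)$ retained rows genuinely preserve the $\poly(k)\log d$ guarantee — is the technical heart; the rest is a routine combination of the contraction/dilation lemmas of Section~\ref{sec:l1} with the space bookkeeping above.
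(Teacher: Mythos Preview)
Your high-level scheme coincides with the paper's: both use the two-stage strategy of Theorem~\ref{thm:polyklogd_approx_algorithm} (form the low-rank surrogate $B$ by projecting rows of $A$ into $\mathrm{rowspan}(S'A)$, then run \textsc{L1LowRankApproxB}/Theorem~\ref{thm:rank_r_approx_polyr_B} on $B$), invoke Lemma~\ref{lem:con_dil_summary}(IV) for the $\wt O(d)$-wise independent left Cauchy sketches, and finish with the Frobenius solve of Theorem~\ref{thm:reduce_to_frobenious}. Your space accounting is essentially the paper's as well.

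Where you diverge is the single-pass mechanics, and this is where you leave a gap. The paper does \emph{not} use Lewis-weight row sampling; Algorithm~\ref{alg:rowupdate_dec} instead computes the surrogate row $B_i$ immediately upon receiving $A_i$ (via a local $\ell_1$ regression against $S'_{:,i}A_i$) and then incrementally updates the linear sketches $L=T_1BR$, $N=SBT_2$, $M=T_1BT_2$, $D=SB$ using $B_i$ directly --- so every maintained object is a linear sketch of $B$, and the post-processing is exactly that of Lemma~\ref{lem:solve_BRXYSB}, after which the proof is a one-line appeal to Theorem~\ref{thm:polyklogd_approx_algorithm}. Your Lewis-weight route, by contrast, inherits precisely the circularity you diagnosed for $U_B$: the weights you would sample by are those of $U_B$, which depends on all of $A$ and on the finalized $S'A$, and you give no single-pass mechanism to estimate or sample by them. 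So the step you call the ``technical heart'' is left open, whereas the paper's route sidesteps it by never needing $U_B$ globally --- only the single row $B_i$ while $A_i$ is in hand. (Your underlying concern, that projecting $A_i$ onto the \emph{global} row span of $S'A$ before that span is known is delicate, is well taken; but the paper's resolution is the per-row computation of $B_i$, not row subsampling.)
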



\begin{proof}
\textbf{Correctness.} Notice that $L=T_1BR,N=SBT_2,M=T_1BT_2$. Thus, $\wh{X}\in\mathbb{R}^{O(k\log k)\times O(k\log k)}$ actually minimizes
$$\min_{\rank-k~X} \| T_1 BR X  SB T_2 - T_1 B T_2\|_F.$$
Also notice that $B$ is just taking each row of $A$ and replacing it with its nearest point in the row span of $S'A$. According to the proof of Theorem~\ref{thm:polyklogd_approx_algorithm} and \rm{(\RN{4})} of Lemma~\ref{lem:con_dil_summary}
$BR\wh{X}SB$ gives a $\poly(k)\log d$ $\ell_1$ norm $\rank$-$k$ approximation to $A$. Since $\wh{X}=\wh{U}\wh{\Sigma}\wh{V}^\top$, $V^*=\wh{\Sigma}\wh{V}^\top SB$ satisfies:
$$\min_{U\in\mathbb{R}^{n\times k}} \|A-UV^*\|_1\leq \poly(k,\log(d),\log(n))\cdot \min_{U\in\mathbb{R}^{n\times k},V\in\mathbb{R}^{k\times d}} \|A-UV\|_1.$$

\textbf{Space complexity.} Constructing sketching matrices needs $\wt{O}(kd)$ bits to store random seeds. Maintaining $L,N,M$ needs $O(k^2\log^3 k)$ words. The cost of storing $\wh{X}$ is also $O(k^2\log^2 k)$ words. Maintaining $D$ needs $O(kd\log k)$ words. Therefore, the total space cost of the algorithm is $\poly(k)+\wt{O}(kd)$ words.
\end{proof}

It is easy to extend our algorithm to output a decomposition. The formal definition of the decomposition problem is as the following:

\begin{definition}[Row-update model $\ell_1$-low rank approximation - $\rank$-$k$ decomposition version]\label{def:streaming_dec2}
Given matrix $A\in\mathbb{R}^{n\times d}$ and $k\in \mathbb{N}_+$, the goal is to propose an algorithm in the streaming model of Definition~\ref{def:model_rowupdate} such that
\begin{enumerate}
\item Upon termination, the algorithm outputs matrices $U^*\in\mathbb{R}^{n\times k},V^*\in \mathbb{R}^{k\times d}$.
\item $U^*,V^*$ satisfies that
$$\|A-U^*V^*\|_1\leq \poly(k)\log d\cdot \min_{U\in\mathbb{R}^{n\times k},V\in\mathbb{R}^{k\times d}} \|A-UV\|_1.$$
\item The space complexity is as small as possible.
\end{enumerate}
\end{definition}


\begin{theorem}
Suppose $A\in\mathbb{R}^{n\times d}$ is given by the row-update streaming model (See Definition~\ref{def:model_rowupdate}), there is an algorithm(in Algorithm~\ref{alg:rowupdate_dec} with decomposition ) which solves the problem in Definition~\ref{def:streaming_dec2} with constant probability. Further, the space complexity of the algorithm is $\poly(k)+\wt{O}(k(n+d))$ words.
\end{theorem}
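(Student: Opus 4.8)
The plan is to take Algorithm~\ref{alg:rowupdate_dec} in its decomposition mode, verify that the small objects it maintains during the single pass are exactly the sketches of the rank-$\poly(k)$ matrix $B$ that appears in Theorem~\ref{thm:polyklogd_approx_algorithm}, and then bound its space. Recall that $B$ is obtained by replacing each row of $A$ with its closest point in $\ell_1$-distance in the row span of $S'A$; crucially, the row span of $B$ is the row span of $S'A$, and $S'A$ is maintainable by rank-one updates as the rows stream in, so the left/right/two-sided sketches of $B$ needed below can be built online. By Corollary~\ref{cor:three_existence_results} and Lemma~\ref{lem:solution_to_B_is_solution_to_A}, $B$ is a $\poly(k)\log d$-approximation to $A$ and any $g$-approximate rank-$k$ $\ell_1$ approximation of $B$ is an $O(g\cdot\poly(k)\log d)$-approximation of $A$, so it suffices to control the cost against $B$.

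For correctness I would argue as in the subspace version. The algorithm maintains $L=T_1BR$, $N=SBT_2$, $M=T_1BT_2$, the matrix $D=SB$, and---this is the only addition relative to the subspace variant, mirroring the matrix $C=AR$ of the turnstile decomposition algorithm---the $n\times O(k\log k)$ matrix $C=BR$ needed to emit $U^*$. At the end it computes the SVDs $L=U_L\Sigma_LV_L^\top$, $N=U_N\Sigma_NV_N^\top$ and sets $\wh X=L^\dagger(U_LU_L^\top MV_NV_N^\top)_kN^\dagger$, which by Theorem~\ref{thm:reduce_to_frobenious} is the rank-$k$ minimizer of $\|T_1BRX\,SBT_2-T_1BT_2\|_F$. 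Writing $\wh X=\wh U\wh\Sigma\wh V^\top$ and returning $U^*=C\wh U$, $V^*=\wh\Sigma\wh V^\top D$, we get $U^*V^*=BR\wh X\,SB$. Now I invoke the analysis inside the proof of Theorem~\ref{thm:polyklogd_approx_algorithm}: switching from the $\ell_1$ objective to the Frobenius relaxation on the doubly-sketched problem loses only $\poly(k)$, and the no-dilation/no-contraction guarantees that lift $\|T_1BRX\,SBT_2-T_1BT_2\|$ back to $\|BRX\,SB-B\|_1$ hold for $S,T_1$ whose rows are only $\wt O(kd)$-wise independent families of Cauchy variables, at the cost of an $O(\log d)$ factor---this is precisely item~(\RN{4}) of Lemma~\ref{lem:con_dil_summary}. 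Chaining these with Lemma~\ref{lem:solution_to_B_is_solution_to_A} gives $\|U^*V^*-A\|_1\le\poly(k)\log d\cdot\OPT$ with constant probability, which is the required guarantee of Definition~\ref{def:streaming_dec2}.

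For the space bound: the sketching matrices $S,R,T_1,T_2$ are generated from seeds of length $\wt O(kd)$ bits using the limited-independence constructions of Section~\ref{sec:limind}; the matrices $L,M,N$ and $\wh X$ have $\poly(k)$ entries; $D=SB$ is $O(k\log k)\times d$, costing $\wt O(kd)$ words; and the extra matrix $C=BR$ used to output $U^*$ is $n\times O(k\log k)$, costing $\wt O(kn)$ words. Summing gives $\poly(k)+\wt O(k(n+d))$ words, and dropping $C$ recovers the $\poly(k)+\wt O(kd)$ bound of the subspace variant.

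The main obstacle is reconciling two competing demands in the correctness step: we must work with $B$ rather than $A$ in order to shed the $\log n$ factor of Theorem~\ref{thm:input_sparsity_algorithm} and land at $\poly(k)\log d$, yet all of $T_1BR$, $SBT_2$, $T_1BT_2$, $SB$, $BR$ must still be producible from a single row-update pass---which works only because $B$'s row space coincides with the incrementally maintained $\poly(k)$-dimensional span of $S'A$---and simultaneously the sketching matrices must be storable in $\wt O(kd)$ bits rather than the $\wt\Omega(kn)$ bits a fully independent Cauchy matrix would need. The second point is exactly what forces the use of item~(\RN{4}) of Lemma~\ref{lem:con_dil_summary}; once that limited-independence guarantee is in hand, the remainder is a faithful transcription of the offline $\poly(k)\log d$ analysis.
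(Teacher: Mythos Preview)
Your proposal is correct and follows essentially the same approach as the paper: the only change from the subspace version is to additionally maintain $C=BR$ so that $U^*=C\wh U$ can be output, and both you and the paper derive correctness by observing $U^*V^*=BR\,\wh X\,SB$ and citing Theorem~\ref{thm:polyklogd_approx_algorithm} together with item~(\RN{4}) of Lemma~\ref{lem:con_dil_summary}. Your space accounting---the extra $\wt O(kn)$ words for the $n\times O(k\log k)$ matrix $C$ on top of the subspace version's $\poly(k)+\wt O(kd)$---also matches the paper exactly.
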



\begin{algorithm}[h!]\caption{Row Update Streaming Algorithm}\label{alg:rowupdate_dec}
\begin{algorithmic}[1]
\Procedure{RowUpdateStreaming}{$k,{\cal S}$}
\State Construct sketching matrices $S'\in\mathbb{R}^{O(k\log k)\times n},S\in\mathbb{R}^{O(k\log k )\times n},R\in\mathbb{R}^{d\times O(k\log k)},T_1\in \mathbb{R}^{O(k\log k)\times n},T_2\in \mathbb{R}^{d\times O(k\log^2 k)}$ where $R,\ T_2$ are fully independent random Cauchy variables, and $S\ ,S'\ ,T_1$ are random Cauchy matrices with fully independent random variables from different rows and $\wt{O}(d)$-wise independent in the same row.
\State Initialize matrices:
\State $L\leftarrow\{0\}^{O(k\log k)\times O(k\log k)},N\leftarrow \{0\}^{O(k \log k)\times O(k\log^2 k)}$.
\State $M\leftarrow\{0\}^{O(k\log k)\times O(k \log^2 k)},D\leftarrow\{0\}^{O(k\log k)\times d}$.
\If{need decomposition}
	\State $C\leftarrow \{0\}^{n\times O(k\log k)}$.
\EndIf
\For{$i\in [n]$}
    \State Receive a row update $(i,A_i)$ from the data stream ${\cal S}$.
    \State Compute $Y^*_i\in\mathbb{R}^{1\times O(k\log k)}$ which minimizes $\min_{Y\in\mathbb{R}^{1\times O(k\log k)}}\|YS'_{:,i}A_i-A_i\|_1 $.
    \State Compute $B_i=Y^*_iS'_{:,i}A_i$.
    \For{$r=1\to O(k\log k),s=1\to O(k\log k),j=1\to d$}
        \State $L_{r,s}\leftarrow L_{r,s}+{T_1}_{r,i}\cdot B_{i,j} \cdot R_{j,s}$.
    \EndFor
    \For{$r=1\to O(k\log k),s=1\to O(k\log^2 k),j=1\to d$}
        \State $N_{r,s}\leftarrow N_{r,s}+S_{r,i}\cdot B_{i,j} \cdot {T_2}_{j,s}$.
    \EndFor
    \For{$r=1\to O(k\log k),s=1\to O(k\log^2 k),j=1\to d$}
        \State $M_{r,s}\leftarrow M_{r,s}+{T_1}_{r,i}\cdot B_{i,j}\cdot {T_2}_{j,s}$.
    \EndFor
    \For{$r=1\to O(k\log k),j=1\to d$}
        \State $D_{r,j}\leftarrow D_{r,j}+S_{r,i}\cdot B_{i,j}$.
    \EndFor
	\If {need decomposition}
    		\For{$s=1\to O(k\log k),j=1\to d$}
        		\State $C_{i,s}:=C_{i,s}+B_{i,j}\cdot R_{j,s}$.
    		\EndFor
	\EndIf
\EndFor
\State Compute the SVD of $L=U_L\Sigma_LV_L^\top$.
\State Compute the SVD of $N=U_N\Sigma_NV_N^\top$.
\State Compute $\wh{X}=L^\dagger(U_LU_L^\top MV_NV_N^\top)_kN^\dagger$.
\State Compute the SVD of $\wh{X}=\wh{U}\wh{\Sigma}\wh{V}^\top$.
\If {need decomposition}
	\State \Return $V^*=\wh{\Sigma}\wh{V}^\top D,U^*=C\wh{U}$.
\Else
	\State \Return $V^*=\wh{\Sigma}\wh{V}^\top D$.
\EndIf
\EndProcedure
\end{algorithmic}
\end{algorithm}

\begin{proof}
\textbf{Correctness.} The only difference is that the above algorithm maintains $C$. Thus, We can compute $U^*=C\wh{U}$ in the end. Notice that $U^*V^*=BR\wh{X}SB$, according to the proof of Theorem~\ref{thm:polyklogd_approx_algorithm}, $U^*V^*$ gives a $\poly(k)\log d$ $\ell_1$ norm $\rank$-$k$ approximation to $A$.

\textbf{Space complexity.} Since the size of $C$ is $nk\log k$ words, the total space is $\poly(k)+\wt{O}(k(n+d))$ words.
\end{proof}

\section{Distributed Setting}\label{sec:distributed}
Section~\ref{sec:distributed_definition} provides some notation and definitions for the Row-partition distributed model and the Arbitrary-partition model. These two models were recently studied in a line of works such as~\cite{tisseur1999parallel,qu2002principal,bai2005principal,sensors2008,macua2010consensus,fegk13,poulson2013elemental,kvw14,bklw14,kdd16,bwz16,wz16}.
Section~\ref{sec:arb_sub} and \ref{sec:arb_dec} presents our distributed protocols for the Arbitrary-partition distributed model. Section~\ref{sec:row_sub} and \ref{sec:row_dec} presents our distributed protocols for the Row-partition distributed model.

\subsection{Definitions}\label{sec:distributed_definition}

\begin{definition}[Row-partition model~\cite{bwz16}]\label{def:model_row}
There are $s$ machines, and the $i^{\mathrm{th}}$ machine has a matrix $A_i\in\mathbb{R}^{n_i\times d}$ as input. Suppose $n=\sum_{i=1}^s n_i$, and the global data matrix $A\in\mathbb{R}^{n\times d}$ is denoted as
$$\left(\begin{array}{c}A_1\\A_2\\\cdots\\A_s\end{array}\right),$$
we say $A$ is row-partitioned into these $s$ matrices distributed in $s$ machines respectively. Furthermore, there is a machine which is a coordinator. The model only allows communication between the machines and the coordinator. The communication cost in this model is the total number of words transferred between machines and the coordinator. Each word is $O(\log(snd))$ bits.
\end{definition}

\begin{definition}[Arbitrary-partition model~\cite{bwz16}]\label{def:model_arb}
There are $s$ machines, and the $i^{\mathrm{th}}$ machine has a matrix $A_i\in\mathbb{R}^{n\times d}$ as input. Suppose the global data matrix $A\in\mathbb{R}^{n\times d}$ is denoted as $A=\sum_{i=1}^s A_i$. We say $A$ is arbitrarily partitioned into these $s$ matrices distributed in $s$ machines respectively. Furthermore, there is a machine which is a coordinator. The model only allows communication between the machines and the coordinator. The communication cost in this model is the total number of words transferred between machines and the coordinator. Each word is $O(\log(snd))$ bits.
\end{definition}

\subsection{Arbitrary-partition model, subspace, $\poly(k,\log(d),\log(n))$ approximation}\label{sec:arb_sub}

\begin{definition}[Arbitrary-partition model $\ell_1$-low rank approximation - $\rank$-$k$ subspace version]\label{def:distrisub}
Given matrix $A\in\mathbb{R}^{n\times d}$ arbitrarily partitioned into $s$ matrices $A_1,A_2,\cdots,A_s$ distributed in $s$ machines respectively, and $k\in \mathbb{N}_+$, the goal is to propose a protocol in the model of Definition~\ref{def:model_arb} such that
\begin{enumerate}
\item Upon termination, the protocol leaves a matrix $V^*\in \mathbb{R}^{k\times d}$ on the coordinator.
\item $V^*$ satisfies that
$$\min_{U\in\mathbb{R}^{n\times k}} \|A-UV^*\|_1\leq \poly(k,\log(d),\log(n))\cdot \min_{U\in\mathbb{R}^{n\times k},V\in\mathbb{R}^{k\times d}} \|A-UV\|_1.$$
\item The communication cost is as small as possible
\end{enumerate}
\end{definition}


\begin{theorem}
Suppose $A\in\mathbb{R}^{n\times d}$ is partitioned in the arbitrary partition model (See Definition~\ref{def:model_arb}). There is a protocol(in Algorithm~\ref{alg:arb_sub}) which solves the problem in Definition~\ref{def:distrisub} with constant probability. Further, the communication complexity of the protocol is $s(\poly(k)+\wt{O}(kd))$ words.
\end{theorem}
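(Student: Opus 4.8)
The plan is to reduce the arbitrary-partition problem to the turnstile streaming algorithm of Section~\ref{sec:streaming_turnstile}, exactly as the streaming-to-distributed reductions for Frobenius-norm low rank approximation in \cite{kvw14,bwz16}. The key observation is that the turnstile algorithm in Algorithm~\ref{alg:turnstile_dec} only maintains the four sketch products $L=T_1AR$, $N=SAT_2$, $M=T_1AT_2$, and $D=SA$, each of which is a \emph{linear} function of $A$. Since $A=\sum_{i=1}^s A_i$ in the arbitrary partition model, linearity gives $T_1AR=\sum_{i=1}^s T_1A_iR$, and similarly for the other three products. So the protocol is: the coordinator picks the seeds for the sketching matrices $S,R,T_1,T_2$ (recall from Section~\ref{sec:limind} and \rm{(\RN{4})} of Lemma~\ref{lem:con_dil_summary} that $S,T_1$ need only be $\wt{O}(d)$-wise independent in each row, with full independence across rows, and $R,T_2$ are dense Cauchy with full independence), and broadcasts these seeds to all $s$ machines; each machine $i$ locally computes its contributions $T_1A_iR$, $SA_iT_2$, $T_1A_iT_2$, $SA_i$ and sends them to the coordinator; the coordinator sums them up to obtain $L,N,M,D$, then runs the remaining (purely local) steps of Algorithm~\ref{alg:turnstile_dec} — the two SVDs, the formula $\wh{X}=L^\dagger(U_LU_L^\top M V_N V_N^\top)_k N^\dagger$ via Theorem~\ref{thm:reduce_to_frobenious}, and the output $V^*=\wh{\Sigma}\wh{V}^\top D$.

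For correctness, I would simply invoke the analysis already carried out for the turnstile algorithm: the products $L,N,M,D$ the coordinator ends up with are bit-for-bit identical to those the streaming algorithm would compute on the full matrix $A$ (since addition is exact over the reals, modulo the usual $O(\log(snd))$-bit word accounting), so by the correctness proof of the turnstile theorem — which rests on Theorem~\ref{thm:input_sparsity_algorithm}, Corollary~\ref{cor:three_existence_results}, and the no-contraction/no-dilation guarantees for limited-independence Cauchy sketches in Lemma~\ref{lem:con_dil_summary} \rm{(\RN{4})} — the output $V^*$ satisfies $\min_U\|A-UV^*\|_1\le\poly(k,\log d,\log n)\cdot\min_{U,V}\|A-UV\|_1$ with constant probability. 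Nothing new needs to be proved here beyond observing that the distributed protocol faithfully simulates the streaming computation.

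For the communication bound, I would account as follows. The seeds for the $\wt{O}(d)$-wise independent Cauchy rows of $S,T_1$ take $\wt{O}(d)$ words each, and the seeds for the dense matrices $R,T_2$ take $O(k\log k)\cdot d$ words (or a short seed if one uses a PRG, but $\wt{O}(kd)$ suffices); broadcasting these to $s$ machines costs $s\cdot\wt{O}(kd)$ words. Each machine then sends $L_i,M_i\in\mathbb{R}^{O(k\log k)\times O(k\log^2 k)}$, $N_i\in\mathbb{R}^{O(k\log k)\times O(k\log^2 k)}$ — all of size $\poly(k)$ — plus $D_i=SA_i\in\mathbb{R}^{O(k\log k)\times d}$, which is $\wt{O}(kd)$ words. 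Summing over $s$ machines gives $s(\poly(k)+\wt{O}(kd))$ words total, matching the claim. The only mild subtlety is ensuring that \emph{all} randomness is shared (so that the $A_i$ contributions are sketched by the \emph{same} matrices and hence sum correctly); this is handled by having the coordinator generate and broadcast every seed.

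The main obstacle — and it is a mild one — is the seed-length bookkeeping: one must check that the limited-independence construction from Section~\ref{sec:limind} can be specified and transmitted in $\wt{O}(kd)$ words, and that paying the extra factor-of-$k$ in the approximation ratio (as discussed in the technical overview, to keep $\wt{O}(dk)$-wise rather than $\wt{O}(dk^2)$-wise independence) is consistent with the $\poly(k,\log d,\log n)$ guarantee claimed in Definition~\ref{def:distrisub}. Since that guarantee already absorbs arbitrary $\poly(k)$ factors, this is not a real difficulty; everything else is a routine translation of the streaming algorithm into one round of simultaneous messages.
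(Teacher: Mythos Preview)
Your proposal is correct and follows essentially the same approach as the paper: exploit linearity of the sketch products $L,N,M$ (and $D=SA$), broadcast shared seeds for the limited-independence Cauchy matrices, have each machine compute its local contribution, aggregate at the coordinator, and invoke the analysis of Theorem~\ref{thm:input_sparsity_algorithm} together with \rm{(\RN{4})} of Lemma~\ref{lem:con_dil_summary} for correctness. The only cosmetic difference is that you have each machine ship $D_i=SA_i\in\mathbb{R}^{O(k\log k)\times d}$ in the first round and let the coordinator finish locally, whereas the paper's Algorithm~\ref{alg:arb_sub} uses a second round---the coordinator sends $\wh{X}$ back to the machines, each machine returns $V_i^*=\wh{\Sigma}\wh{V}^\top SA_i\in\mathbb{R}^{k\times d}$, and the coordinator sums these---but both variants yield the same $s(\poly(k)+\wt{O}(kd))$ bound.
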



\begin{algorithm}[h!]\caption{Arbitrary Partition Distributed Protocol}\label{alg:arb_sub}
\begin{algorithmic}[1]
\Procedure{ArbitraryPartitionDistributedProtocol}{$k$,$s$,$A$}
\State $A\in\mathbb{R}^{n\times d}$ was arbitrarily partitioned into $s$ matrices $A_1,\cdots,A_s\in \mathbb{R}^{n\times d}$ distributed in $s$ machines.
\State \hspace{2cm} {\bf Coordinator} \hspace{4.5cm} {\bf Machines} $i$
\State Chooses a random seed.
\State Sends it to all machines. \label{sta:send_seed}
\State \hspace{4.5cm} $--------->$
\State \hspace{8cm} Agrees on $R,\ T_2$ which are fully
\State \hspace{8cm} independent random Cauchy matrices.
\State \hspace{8cm} Agrees on $S,\ T_1$ which are random Cauchy
\State \hspace{8cm} matrices with fully independent entries
\State \hspace{8cm} from different rows, and $\wt{O}(d)$-wise indepen-
\State \hspace{8cm} dent variables from the same row.
\State \hspace{8cm} Computes $L_i=T_1A_iR,N_i=SA_iT_2$.
\State \hspace{8cm} Computes $M_i=T_1A_iT_2$.
\State \hspace{8cm} Sends $L_i,N_i,M_i$ to the coordinator.\label{sta:send_LNM}
\State \hspace{4.5cm} $<---------$
\State Computes $L=\overset{s}{\underset{i=1}{\sum}}L_i,N=\overset{s}{\underset{i=1}{\sum}}N_i$.
\State Computes $M=\overset{s}{\underset{i=1}{\sum}} M_i$. 
\State Computes the SVD of $L=U_L\Sigma_LV_L^\top$.
\State Computes the SVD of $N=U_N\Sigma_NV_N^\top$.
\State Computes $\wh{X}=L^\dagger(U_LU_L^\top MV_NV_N^\top)_kN^\dagger$.
\State Sends $\wh{X}$ to machines. \label{sta:send_X}
\State \hspace{4.5cm} $--------->$
\State \hspace{8cm}Computes the SVD of $\wh{X}=\wh{U}\wh{\Sigma}\wh{V}^\top$.
\State \hspace{8cm}Computes $V_i^*=\wh{\Sigma}\wh{V}^\top SA_i$.
\State \hspace{8cm}{\bf If} need decomposition
\State \hspace{9cm}$U_i^*=A_i R \wh{U}$.
\State \hspace{9cm}Sends $U_i^*$, $V_i^*$ to the coordinator.
\State \hspace{8cm}{\bf Else}
\State \hspace{9cm}Sends $V_i^*$ to the coordinator. \label{sta:send_V}
\State \hspace{8cm}{\bf Endif}
\State \hspace{4.5cm} $<---------$
\State {\bf If} need decomposition,
\State \hspace{1cm} \Return $V^*=\sum_{i=1}^s V_i^*$, $U^*=\sum_{i=1}^s U_i^*$.
\State {\bf Else}
\State \hspace{1cm} \Return $V^*=\sum_{i=1}^s V_i^*$.
\State {\bf Endif}
\EndProcedure
\end{algorithmic}
\end{algorithm}

%
%
%
%
%
%
%
%
%
%
%
%
%
%

\begin{proof}

\textbf{Correctness.} The correctness is shown by the proof of Theorem~\ref{thm:input_sparsity_algorithm} and \rm{(\RN{4})} of Lemma~\ref{lem:con_dil_summary}. 
Notice that $\wh{X}\in\mathbb{R}^{O(k\log k)\times O(k\log k)}$ minimizes
$$\min_{\rank-k~X}\|LXN-M\|_F.$$
which is
$$\min_{\rank-k~X} \| T_1 AR X  SA T_2 - T_1 A T_2\|_F.$$
According to the proof of Theorem~\ref{thm:input_sparsity_algorithm}, $ AR \wh{X}  SA $ gives an $\ell_1$ rank-$k$ $\poly(k,\log(d),\log(n))$-approximation to $A$. Because $\wh{X}=\wh{U}\wh{\Sigma}\wh{V}^\top$, $V^*=\wh{\Sigma}\wh{V}^\top SA$ satisfies:
$$\min_{U\in\mathbb{R}^{n\times k}} \|A-UV^*\|_1\leq \poly(k,\log(d),\log(n))\cdot \min_{U\in\mathbb{R}^{n\times k},V\in\mathbb{R}^{k\times d}} \|A-UV\|_1.$$

\textbf{Communication complexity.} Since the random seed generates $\wt{O}(kd)$-wise independent random Cauchy variables, the cost of line~\ref{sta:send_seed} is $\wt{O}(skd)$ bits. The size of $L_i,N_i$ and $M_i$ are $k^2\log^2 k, k^2\log^3 k$ and $k^2\log^3 k$ words separately. So the cost of line~\ref{sta:send_LNM} is $O(sk^2\log^3 k)$ words. Because the size of $\wh{X}$ is $O(k^2\log^2 k)$, the cost of line~\ref{sta:send_X} is $O(sk^2\log^2 k)$ words. line~\ref{sta:send_V} needs $skd$ words of communication. Therefore, the total communication of the protocol is $s(\poly(k)+\wt{O}(kd))$ words.

\end{proof}

\subsection{Arbitrary-partition model, decomposition, $\poly(k,\log(d),\log(n))$ approximation}\label{sec:arb_dec}

\begin{definition}[Arbitrary-partition model $\ell_1$-low rank approximation - $\rank$-$k$ decomposition version]\label{def:distridec}
Given matrix $A\in\mathbb{R}^{n\times d}$ arbitrarily partitioned into $s$ matrices $A_1,A_2,\cdots,A_s$ distributed in $s$ machines respectively, and $k\in \mathbb{N}_+$, the goal is to propose a protocol in the model of Definition~\ref{def:model_arb} such that
\begin{enumerate}
\item Upon termination, the protocol leave matrices $U^*\in\mathbb{R}^{n\times k},V^*\in \mathbb{R}^{k\times d}$ on the coordinator.
\item $U^*,V^*$ satisfies that
$$\|A-U^*V^*\|_1\leq \poly(k,\log(d),\log(n))\cdot \min_{U\in\mathbb{R}^{n\times k},V\in\mathbb{R}^{k\times d}} \|A-UV\|_1.$$
\item The communication cost is as small as possible.
\end{enumerate}
\end{definition}


\begin{theorem}
Suppose $A\in\mathbb{R}^{n\times d}$ is partitioned in the arbitrary partition model (See Definition~\ref{def:model_arb}). There is a protocol(in Algorithm~\ref{alg:arb_sub} with decomposition) which solves the problem in Definition~\ref{def:distridec} with constant probability. Further, the communication complexity of the protocol is $s(\poly(k)+\wt{O}(k(d+n)))$ words.
\end{theorem}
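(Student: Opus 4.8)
The plan is to mirror the subspace-version protocol in Algorithm~\ref{alg:arb_sub} almost verbatim, adding exactly the ingredients needed to materialize the factor $U^*$ on the coordinator in addition to $V^*$. Recall that in the subspace version, after the coordinator computes $\wh X \in \mathbb{R}^{O(k\log k)\times O(k\log k)}$ minimizing $\|T_1 A R X S A T_2 - T_1 A T_2\|_F$ and its SVD $\wh X = \wh U \wh\Sigma \wh V^\top$, the output $AR\wh X SA$ is a $\poly(k,\log d,\log n)$-approximation (this is exactly the guarantee proved in Theorem~\ref{thm:input_sparsity_algorithm} combined with \rm{(\RN{4})} of Lemma~\ref{lem:con_dil_summary}). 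The only new thing required for a decomposition is that each machine also locally forms $U_i^* = A_i R \wh U \in \mathbb{R}^{n\times k}$ once it receives $\wh X$ (hence $\wh U$) from the coordinator, and sends $U_i^*$ up; the coordinator then returns $U^* = \sum_{i=1}^s U_i^*$ along with $V^* = \sum_{i=1}^s V_i^* = \wh\Sigma \wh V^\top S A$. Note that $U^*V^* = A R \wh U \wh\Sigma \wh V^\top S A = AR\wh X SA$, which is precisely the matrix whose approximation quality is already established, so correctness is immediate once the arithmetic identity $\sum_i A_i R \wh U = (\sum_i A_i) R \wh U = A R \wh U$ and $\sum_i \wh\Sigma\wh V^\top S A_i = \wh\Sigma\wh V^\top S A$ is noted (both hold because $A = \sum_i A_i$ in the arbitrary-partition model).

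Concretely, the steps I would carry out, in order, are: (1) the coordinator picks a random seed and broadcasts it, so all machines agree on $R, T_2$ (fully independent Cauchy) and $S, T_1$ (Cauchy, fully independent across rows, $\wt O(d)$-wise independent within a row) — this is valid by \rm{(\RN{4})} of Lemma~\ref{lem:con_dil_summary} and Section~\ref{sec:limind}; (2) each machine $i$ computes $L_i = T_1 A_i R$, $N_i = S A_i T_2$, $M_i = T_1 A_i T_2$ and sends these to the coordinator; (3) the coordinator forms $L = \sum_i L_i = T_1 A R$, $N = \sum_i N_i = S A T_2$, $M = \sum_i M_i = T_1 A T_2$, computes the SVDs $L = U_L\Sigma_L V_L^\top$, $N = U_N\Sigma_N V_N^\top$, sets $\wh X = L^\dagger (U_L U_L^\top M V_N V_N^\top)_k N^\dagger$ (Theorem~\ref{thm:reduce_to_frobenious}), and broadcasts $\wh X$; (4) each machine computes the SVD $\wh X = \wh U\wh\Sigma\wh V^\top$, then $U_i^* = A_i R \wh U$ and $V_i^* = \wh\Sigma\wh V^\top S A_i$, and sends both to the coordinator; (5) the coordinator outputs $U^* = \sum_i U_i^*$, $V^* = \sum_i V_i^*$. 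For the communication bound: the seed costs $\wt O(skd)$ bits; $L_i,N_i,M_i$ cost $\poly(k)$ words each, $\wh X$ costs $\poly(k)$ words, $V_i^*$ costs $kd$ words, and the new term is $U_i^*$ which costs $kn$ words per machine, giving the claimed $s(\poly(k) + \wt O(k(d+n)))$ total.

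I do not anticipate a genuine mathematical obstacle here — the decomposition version is a strict add-on to an already-proven protocol, and the correctness reduces entirely to the identity $U^*V^* = AR\wh X SA$ together with the approximation guarantee inherited from Theorem~\ref{thm:input_sparsity_algorithm}. The one place to be slightly careful is the communication accounting: one must verify that shipping $U_i^*\in\mathbb{R}^{n\times k}$ (rather than, say, a sketch of it) is what forces the $\wt O(kn)$ term and that this is unavoidable given that the output is a full $n\times k$ factor — but this is exactly the analogous situation to the streaming decomposition version (Algorithm~\ref{alg:turnstile_dec} with decomposition) and needs no new idea. A secondary bookkeeping point is that each machine must regenerate the \emph{same} limited-independence Cauchy matrices from the shared seed; this is handled by the constructions in Section~\ref{sec:lim_notation}, and I would simply cite Lemma~\ref{lem:con_dil_summary}\rm{(\RN{4})} for why $\wt O(d)$-wise independence suffices for all the no-dilation/no-contraction bounds invoked in the proof of Theorem~\ref{thm:input_sparsity_algorithm}. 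Thus the proof is essentially ``same as the subspace version, plus maintain $C_i = A_i R$ implicitly via $U_i^* = A_i R\wh U$, plus adjust the word count.''
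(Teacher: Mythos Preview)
Your proposal is correct and takes essentially the same approach as the paper: the paper's proof is likewise just the subspace-version protocol plus having each machine send $U_i^* = A_i R\wh U$, noting that $U^*V^* = AR\wh X SA$ inherits the approximation guarantee from Theorem~\ref{thm:input_sparsity_algorithm}, and accounting for the extra $kn$ words per machine.
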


\begin{proof}

\textbf{Correctness.} The only difference from the protocol (without decomposition) in Section~\ref{sec:arb_sub} is that the protocol sends $U_i$. Thus, the coordinator can compute $U^*=AR \wh{U}$. Notice that $U^*V^*=AR\wh{X}SA$. According to the proof of Theorem~\ref{thm:input_sparsity_algorithm}, $U^*V^*$ gives a  $\ell_1$ rank-$k$ $\poly(k,\log(d),\log(n))$-approximation to $A$.

\textbf{Communication complexity.} Since the size of $U_i$ is $kn$ words, the total communication is $s(\poly(k)+\wt{O}(k(d+n)))$ words.

\end{proof}

\subsection{Row-partition model, subspace, $\poly(k)\log d$ approximation}\label{sec:row_sub}

\begin{definition}[Row-partition model $\ell_1$-low rank approximation - $\rank$-$k$ subspace version]\label{def:rowpsub}
Given matrix $A\in\mathbb{R}^{n\times d}$ row-partitioned into $s$ matrices $A_1,A_2,\cdots,A_s$ distributed in $s$ machines respectively, and $k\in \mathbb{N}_+$, the goal is to propose a protocol in the model of Definition~\ref{def:model_row} such that
\begin{enumerate}
\item Upon termination, the protocol leaves a matrix $V^*\in \mathbb{R}^{k\times d}$ on the coordinator.
\item $V^*$ satisfies that
$$\min_{U\in\mathbb{R}^{n\times k}} \|A-UV^*\|_1\leq \poly(k)\log d\cdot \min_{U\in\mathbb{R}^{n\times k},V\in\mathbb{R}^{k\times d}} \|A-UV\|_1.$$
\item The communication cost is as small as possible
\end{enumerate}
\end{definition}

\begin{theorem}
Suppose $A\in\mathbb{R}^{n\times d}$ is partitioned in the row partition model (See Definition~\ref{def:model_row}). There is a protocol(in Algorithm~\ref{alg:row_sub} without decomposition) which solves the problem in Definition~\ref{def:rowpsub} with constant probability. Further, the communication complexity of the protocol is $s(\poly(k)+\wt{O}(kd))$ words.
\end{theorem}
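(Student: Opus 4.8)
The plan is to mirror the row-update streaming algorithm of Section~\ref{sec:streaming_rowupdate} and the arbitrary-partition protocol of Algorithm~\ref{alg:arb_sub}, replacing the ambient matrix $A$ by the low-rank surrogate $B$ used in the proof of Theorem~\ref{thm:polyklogd_approx_algorithm}. First the coordinator draws a random seed of $\wt{O}(kd)$ bits and broadcasts it to all $s$ machines; from it the machines agree on a \emph{dense} Cauchy matrix $S'\in\mathbb{R}^{O(k\log k)\times n}$, dense Cauchy matrices $S,T_1\in\mathbb{R}^{O(k\log k)\times n}$, and fully independent dense Cauchy matrices $R\in\mathbb{R}^{d\times O(k\log k)},T_2\in\mathbb{R}^{d\times O(k\log^2 k)}$, where the rows of $S',S,T_1$ are mutually independent and each individual row is $\wt{O}(d)$-wise independent, so that the total seed length stays $\wt{O}(kd)$. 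A key choice is to use the \emph{dense} transform $S'$ with only $O(k\log k)$ rows (rather than the $\wt{O}(k^5)$-row sparse transform of the input-sparsity algorithm): in the distributed model runtime is irrelevant, and this keeps $S'A$ to $\wt{O}(kd)$ words.

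Machine $i$ then computes its contribution $(S')^{(i)}A_i$ to $S'A=\sum_i (S')^{(i)}A_i$, where $(S')^{(i)}$ is the block of columns of $S'$ indexed by machine $i$'s rows; it sends this $O(k\log k)\times d$ matrix to the coordinator, who forms $S'A$ and broadcasts it back. Because the $\ell_1$-projection of $A$ onto $\mathrm{rowspan}(S'A)$ is carried out one row at a time, machine $i$ can now build its block $B_i$ of $B$ entirely locally: for each of its rows $A_i^j$ it solves $\min_x\|x(S'A)-A_i^j\|_1$ by linear programming and sets $B_i^j$ to the minimizer times $S'A$. By Corollary~\ref{cor:three_existence_results}(I) (there is a rank-$k$ matrix in $\mathrm{rowspan}(S'A)$ that is an $O(\sqrt{k\log k}\log d)$-approximation) together with the row-wise triangle-inequality argument in the proof of Theorem~\ref{thm:polyklogd_approx_algorithm}, we get $\|B-A\|_1\le\poly(k)\log d\cdot\OPT$ and $\rank(B)\le O(k\log k)$, with $B$ partitioned across the machines exactly as $A$ is.

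The machines then run the rank-$r$ (with $r=O(k\log k)$) procedure of Theorem~\ref{thm:rank_r_approx_polyr_B} on $B$ in the style of Algorithm~\ref{alg:arb_sub}: machine $i$ sends $L_i=T_1^{(i)}B_iR$, $N_i=S^{(i)}B_iT_2$, $M_i=T_1^{(i)}B_iT_2$ (each $\poly(k)\times\poly(k)$) and $D_i=S^{(i)}B_i$ (which is $O(k\log k)\times d$); the coordinator forms $L=T_1BR$, $N=SBT_2$, $M=T_1BT_2$ and $D=SB$, computes the rank-$k$ minimizer $\wh{X}=L^\dagger(U_LU_L^\top M V_N V_N^\top)_k N^\dagger$ of $\|LXN-M\|_F$ via Theorem~\ref{thm:reduce_to_frobenious}, takes the SVD $\wh{X}=\wh{U}\wh{\Sigma}\wh{V}^\top$, and outputs $V^*=\wh{\Sigma}\wh{V}^\top D=\wh{\Sigma}\wh{V}^\top SB$ (in the subspace version this is all assembled at the coordinator, so no further round is needed; for the decomposition version one additionally broadcasts $\wh{X}$ and has machine $i$ return $B_iR\wh{U}$). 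For correctness, Theorem~\ref{thm:rank_r_approx_polyr_B} together with part (IV) of Lemma~\ref{lem:con_dil_summary} (the limited-independence no-contraction/no-dilation bounds established in Section~\ref{sec:limind}) shows $\|BR\wh{X}SB-B\|_1\le\poly(k)\min_{\rank-k~B_k}\|B_k-B\|_1$, and Lemma~\ref{lem:solution_to_B_is_solution_to_A} then upgrades this to $\min_{U}\|UV^*-A\|_1\le\|BR\wh{X}SB-A\|_1\le\poly(k)\log d\cdot\OPT$, using the factorization $BR\wh{X}SB=(BR\wh{U})V^*$. For communication: the seed is $\wt{O}(kd)$ words per machine, broadcasting $S'A$ is $\wt{O}(kd)$ words per machine, each $D_i$ is $\wt{O}(kd)$ words, and each of $L_i,N_i,M_i$ is $\poly(k)$ words, so the protocol uses $s(\poly(k)+\wt{O}(kd))$ words in all.

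The main obstacle is largely bookkeeping rather than a new idea: one must choose all the sketch dimensions so that the guarantees compose — the extra $O(k\log k)$ loss from using only $\wt{O}(d)$-wise independent rows in Lemma~\ref{lem:con_dil_summary}(IV), and the $\poly(r)$ loss in Theorem~\ref{thm:rank_r_approx_polyr_B}, both have to be absorbed into the final $\poly(k)\log d$ — while simultaneously keeping every message within $\wt{O}(kd)$ words. The one genuinely new point relative to the already-established streaming and arbitrary-partition protocols is checking that the construction of $B$ decomposes cleanly over the row-blocks once $S'A$ has been broadcast (it does, because $B$ is defined by an independent $\ell_1$-regression for each row of $A$), and that the dense-Cauchy choice of $S'$ with only $O(k\log k)$ rows still delivers the $\poly(k)\log d$ approximation while keeping $S'A$ small enough to afford broadcasting.
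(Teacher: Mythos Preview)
Your proposal is correct and follows the same overall plan as the paper: distribute the two-stage algorithm of Theorem~\ref{thm:polyklogd_approx_algorithm} (build the surrogate $B$, then run the rank-$r$ routine of Theorem~\ref{thm:rank_r_approx_polyr_B} on it), invoking Lemma~\ref{lem:con_dil_summary}(\RN{4}) for the limited-independence sketches. The one substantive procedural difference is in how $B$ is formed. You insert an extra round in which each machine ships its contribution to $S'A$, the coordinator aggregates and broadcasts $S'A$, and then each machine projects its rows onto the \emph{global} $\mathrm{rowspan}(S'A)$; this yields exactly the rank-$O(k\log k)$ matrix $B$ of Theorem~\ref{thm:polyklogd_approx_algorithm} and makes the appeal to that theorem transparent. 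Algorithm~\ref{alg:row_sub} as written instead has machine $i$ generate its own local $S'_i$ and set $B_i=Y_i^*S'_iA_i$, avoiding the $S'A$ broadcast entirely; the paper's proof then simply asserts that this $B$ is the projection onto $S'A$. Your extra round costs an additional $\wt{O}(skd)$ words, still within the stated budget. Two minor points: you aggregate $D_i=S^{(i)}B_i$ at the coordinator rather than sending $\hat X$ back and collecting $V_i^*=\hat\Sigma\hat V^\top S_iB_i$ as the paper does (both are $\wt{O}(kd)$ per machine), and your row counts for $S,T_1$ should be $O(k\log^2 k)$ (i.e., $\wt O(r)$ with $r=O(k\log k)$) as in the paper, not $O(k\log k)$.
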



\begin{algorithm}[h!]\caption{Row Partition Distributed Protocol}\label{alg:row_sub}
\begin{algorithmic}[1]
\Procedure{RowPartitionDistributedProtocol}{$k$,$s$,$A$}
\State $A\in\mathbb{R}^{n\times d}$ was row partitioned into $s$ matrices $A_1\in\mathbb{R}^{n_1\times d},\cdots,A_s\in\mathbb{R}^{n_s\times d}$ distributed in $s$ machines.
\State \hspace{2cm} {\bf Coordinator} \hspace{4.5cm} {\bf Machines} $i$
\State Chooses a random seed.
\State Sends it to all machines.
\State \hspace{4.5cm} $--------->$ \label{sta:send_seed2}
\State \hspace{8cm} Agrees on $R,\ T_2$ which are fully
\State \hspace{8cm} independent random Cauchy variables.
\State \hspace{8cm} Generates random Cauchy matrices
\State \hspace{8cm} $S'_i\in\mathbb{R}^{O(k\log k)\times n_i}$, $S_i,{T_1}_i\in\mathbb{R}^{O(k\log^2 k)\times n_i}$.
\State \hspace{8cm} Computes ${\tiny Y^*_i = \underset{Y\in\mathbb{R}^{n_i\times O(k\log k)}}{\arg\min} \|YS'_iA_i-A_i\|_1}$.
\State \hspace{8cm} Computes $B_i=Y^*_iS'_iA_i$.
\State \hspace{8cm} Computes $L_i={T_1}_iB_iR,N_i=S_iB_iT_2$.
\State \hspace{8cm} Computes $M_i={T_1}_iB_iT_2$.
\State \hspace{8cm} Sends $L_i,N_i,M_i$ to the coordinator.
\State \hspace{4.5cm} $<---------$ \label{sta:send_LNM2}
\State Computes $L=\sum_{i=1}^s L_i,N=\sum_{i=1}^s N_i$.
\State Computes $M=\sum_{i=1}^s M_i$.
\State Computes the SVD of $L=U_L\Sigma_LV_L^\top$.
\State Computes the SVD of $N=U_N\Sigma_NV_N^\top$.
\State Computes $\wh{X}=L^\dagger(U_LU_L^\top MV_NV_N^\top)_kN^\dagger$.
\State Sends $\wh{X}$ to machines.
\State \hspace{4.5cm} $--------->$ \label{sta:send_X2}
\State \hspace{8cm} Computes the SVD of $\wh{X}=\wh{U}\wh{\Sigma}\wh{V}^\top$.
\State \hspace{8cm} Computes $V_i^*=\wh{\Sigma} \wh{V}^\top S_iB_i$.
\State \hspace{8cm} {\bf If} need decomposition
\State \hspace{9cm} Computes $U_i^*=B_iR \wh{U}$.
\State \hspace{9cm} Sends $U_i^*, V_i^*$ to the coordinator.
\State \hspace{8cm} {\bf Else}
\State \hspace{9cm} Sends $V_i^*$ to the coordinator.
\State \hspace{8cm} {\bf Endif}
\State \hspace{4.5cm} $<---------$  \label{sta:send_V2}
\State {\bf If} need decomposition
\State \hspace{1cm} \Return $V^*=\sum_{i=1}^s V_i^*$, $U^*=\sum_{i=1}^s U_i^*$.
\State {\bf Else}
\State \hspace{1cm} \Return $V^*=\sum_{i=1}^s V_i^*$.
\State {\bf Endif}
\EndProcedure
\end{algorithmic}
\end{algorithm}

\begin{proof}

\textbf{Correctness.} For convenience, we denote matrices $B\in\mathbb{R}^{n\times d},S\in\mathbb{R}^{O(k \log^2 k)\times n}$ and $T_1\in\mathbb{R}^{O(k\log^2 k)\times n}$ as
$$\begin{array}{ccc}
B=\left(\begin{array}{c}B_1\\B_2\\\cdots\\B_s\end{array}\right)
&
S=\left(\begin{array}{cccc}S_1 & S_2 & \cdots & S_s\end{array}\right)
&
T_1=\left(\begin{array}{cccc}{T_1}_1 & {T_1}_2 & \cdots & {T_1}_s\end{array}\right)
\end{array}.$$
Notice that $L=T_1BR,N=SBT_2,M=T_1BT_2$. Thus, $\wh{X}\in\mathbb{R}^{O(k\log k)\times O(k\log k)}$ actually minimizes
$$\min_{ \rank-k~X} \| T_1 BR X  SB T_2 - T_1 B T_2\|_F.$$
Also notice that $B$ is just taking each row of $A$ and replacing it with its nearest point in the row span of $S'A$. According to the proof of Theorem~\ref{thm:polyklogd_approx_algorithm}, $BR\wh{X}SB$ gives a $\poly(k)\log d$ $\ell_1$ norm $\rank$-$k$ approximation to $A$. Since $\wh{X}=\wh{U}\wh{\Sigma}\wh{V}^\top$, $V^*=\wh{\Sigma}\wh{V}^\top SB$ satisfies:
$$\min_{U\in\mathbb{R}^{n\times k}} \|A-UV^*\|_1\leq \poly(k,\log(d),\log(n))\cdot \min_{U\in\mathbb{R}^{n\times k},V\in\mathbb{R}^{k\times d}} \|A-UV\|_1.$$

\textbf{Communication complexity.} Since the $R$ and $T_2$ are $\wt{O}(kd)$-wise independent, line~\ref{sta:send_seed2} needs $O(sW)$ bits of communication. Line~\ref{sta:send_LNM2} needs $O(sk^2\log^3 k)$ words. The cost of line~\ref{sta:send_X2} is $O(sk^2\log^2 k)$ words. Line~\ref{sta:send_V2} needs $skd$ words of communication. Therefore, the total communication of the protocol is $s(\poly(k)+\wt{O}(kd))$ words.

\end{proof}

\subsection{Row-partition model, decomposition, $\poly(k)\log d$ approximation}\label{sec:row_dec}

\begin{definition}[Row-partition model $\ell_1$-low rank approximation - rank-k decomposition version]\label{def:rowpdec}
Given matrix $A\in\mathbb{R}^{n\times d}$ row partitioned into $s$ matrices $A_1,A_2,\cdots,A_s$ distributed in $s$ machines respectively, and a positive integer $k<\rank(A)$, the goal is to propose a protocol in the model of Definition~\ref{def:model_row} such that
\begin{enumerate}
\item Upon termination, the protocol leaves matrices $U^*\in\mathbb{R}^{n\times k},V^*\in \mathbb{R}^{k\times d}$ on the coordinator.
\item $U^*,V^*$ satisfies that
$$\|A-U^*V^*\|_1\leq \poly(k)\log d\cdot \min_{U\in\mathbb{R}^{n\times k},V\in\mathbb{R}^{k\times d}} \|A-UV\|_1.$$
\item The communication cost is as small as possible.
\end{enumerate}
\end{definition}


\begin{theorem}
Suppose $A\in\mathbb{R}^{n\times d}$ is partitioned in the row partition model (See Definition~\ref{def:model_row}). There is a protocol(in Algorithm~\ref{alg:row_sub} with decomposition) which solves the problem in Definition~\ref{def:rowpdec} with constant probability. Further, the communication complexity of the protocol is $s(\poly(k)+\wt{O}(k(n+d)))$ words.
\end{theorem}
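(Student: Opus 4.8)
The plan is to run \textsc{RowPartitionDistributedProtocol} of Algorithm~\ref{alg:row_sub} in its ``need decomposition'' branch and to argue that the pair $(U^*,V^*)$ it outputs is exactly the factorization $U^*V^* = BR\,\wh{X}\,SB$ analyzed in the proof of Theorem~\ref{thm:polyklogd_approx_algorithm}, so that the approximation guarantee is inherited verbatim. First I would fix the block notation: with $A$ row-partitioned as $A=(A_1^\top,\dots,A_s^\top)^\top$, machine $i$ forms $B_i = Y_i^*S_i'A_i$ where $Y_i^*$ minimizes $\|YS_i'A_i-A_i\|_1$, i.e.\ $B_i$ replaces each row of $A_i$ by its $\ell_1$-closest point in the row span of $S_i'A_i$; stacking gives a $\rank$-$\poly(k)$ matrix $B$ whose row span contains a $\poly(k)\log d$-approximate $\rank$-$k$ space for $A$ by Corollary~\ref{cor:three_existence_results} and the argument of Theorem~\ref{thm:polyklogd_approx_algorithm}, while $\|B-A\|_1\le\poly(k)\log d\cdot\OPT$ so that Lemma~\ref{lem:solution_to_B_is_solution_to_A} lets us pass from $B$ back to $A$. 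Writing $S=(S_1,\dots,S_s)$ and $T_1=({T_1}_1,\dots,{T_1}_s)$, linearity of the per-machine updates gives $L=\sum_i {T_1}_iB_iR = T_1BR$, $N=\sum_i S_iB_iT_2 = SBT_2$, and $M=\sum_i {T_1}_iB_iT_2 = T_1BT_2$, so by Theorem~\ref{thm:reduce_to_frobenious} the coordinator's $\wh{X}=L^\dagger(U_LU_L^\top M V_NV_N^\top)_kN^\dagger$ is the $\rank$-$k$ minimizer of $\|T_1BRXSBT_2 - T_1BT_2\|_F$.

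Next I would invoke the correctness chain already used for the streaming and subspace-version protocols. Since $S$ and $T_1$ are Cauchy matrices with fully independent rows and $\wt{O}(d)$-wise independent entries within each row, item~(\RN{4}) of Lemma~\ref{lem:con_dil_summary} applies (at the global level, because row-independence is preserved under the horizontal concatenation of the $S_i$ and ${T_1}_i$), and together with Claim~\ref{cla:frobenius_relax_ell1_lowrank} and Lemma~\ref{lem:general_sketch_T1BXYCT2} this gives $\|BR\wh{X}SB - A\|_1\le\poly(k)\log d\cdot\OPT$. Finally, decomposing $\wh{X}=\wh{U}\wh{\Sigma}\wh{V}^\top$, machine $i$ returns $U_i^* = B_iR\wh{U}$ and $V_i^* = \wh{\Sigma}\wh{V}^\top S_iB_i$; because $B=(B_1^\top,\dots,B_s^\top)^\top$ while $S$ is concatenated horizontally, summing the (zero-padded) $U_i^*$ yields $BR\wh{U}$ and summing the $V_i^*$ yields $\wh{\Sigma}\wh{V}^\top SB$, so $U^*V^* = BR\wh{U}\wh{\Sigma}\wh{V}^\top SB = BR\wh{X}SB$ and the bound of Definition~\ref{def:rowpdec} follows.

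For the communication cost I would tally the rounds of Algorithm~\ref{alg:row_sub}: broadcasting the shared seed (which generates the $\wt{O}(kd)$-wise independent $R,T_2$ and the limited-independence $S_i',S_i,{T_1}_i$) costs $\wt{O}(kd)$ bits to each of $s$ machines; the messages $L_i,N_i,M_i$ have size $\poly(k)$ each, contributing $s\poly(k)$ words; broadcasting $\wh{X}$ costs another $s\poly(k)$ words; each $V_i^*\in\mathbb{R}^{k\times d}$ contributes $kd$ words, for $skd$ in total; and the only item beyond the subspace protocol, the $U_i^*\in\mathbb{R}^{n_i\times k}$, contributes $\sum_i n_ik = nk\le snk$ words. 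Summing gives $s(\poly(k)+\wt{O}(k(n+d)))$ words, as claimed.

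I expect the main obstacle to be bookkeeping rather than conceptual: one must check that the per-machine limited-independence Cauchy blocks, once concatenated, still satisfy the global no-dilation and no-contraction estimates of item~(\RN{4}) of Lemma~\ref{lem:con_dil_summary} (this is precisely why the protocol insists on full independence \emph{across} rows and only $\wt{O}(d)$-wise independence \emph{within} a row), and that replacing $A$ by the implicitly-stored $\rank$-$\poly(k)$ matrix $B$ costs only a further $\poly(k)\log d$ factor, which is exactly the content of Lemma~\ref{lem:solution_to_B_is_solution_to_A} and is already exploited in Theorem~\ref{thm:polyklogd_approx_algorithm}; so these steps should go through without new ideas.
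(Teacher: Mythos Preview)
Your proposal is correct and follows essentially the same approach as the paper: the paper's proof simply notes that the decomposition branch differs from the subspace version only by additionally sending the $U_i^*$, observes that $U^*V^*=BR\wh{X}SB$, and defers the approximation guarantee to the proof of Theorem~\ref{thm:polyklogd_approx_algorithm}, while the communication bound follows since each $U_i^*$ has $kn_i$ entries. You have unpacked the references to the subspace-version proof and to Theorem~\ref{thm:polyklogd_approx_algorithm} in more detail (the identities $L=T_1BR$, $N=SBT_2$, $M=T_1BT_2$, the use of Lemma~\ref{lem:con_dil_summary}(\RN{4}) and Lemma~\ref{lem:general_sketch_T1BXYCT2}), but the argument and the supporting lemmas are the same.
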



\begin{proof}
\textbf{Correctness.} The only difference is that the above protocol sends $U_i$. Thus, the coordinator can compute $U^*=BR\wh{U}$. Notice that $U^*V^*=BR\wh{X}SB$, according to the proof of Theorem~\ref{thm:polyklogd_approx_algorithm}, $U^*V^*$ gives a $\poly(k)\log d$ $\ell_1$ norm $\rank$-$k$ approximation to $A$.

\textbf{Communication complexity.} Since the size of $U_i$ is $kn$ words, the total communication is $s(\poly(k)+\wt{O}(k(n+d)))$ words.

\end{proof}

\section{Experiments and Discussions}\label{sec:exp}
In this section, we provide some counterexamples for the other heuristic algorithms such that, for those
examples, the heuristic algorithms can output a solution with a very ``bad'' approximation ratio, i.e., $n^c$, where $c>0$ and the input matrix has size $n\times n$. We not only observe that heuristic algorithms sometimes have very bad performance in practice, but also give a proof in theory.

\subsection{Setup}
We provide some details of our experimental setup. We obtained the R package of \cite{kk05,kwak08,bdb13} from \url{https://cran.r-project.org/web/packages/pcaL1/index.html}.
 We also implemented our algorithm and the r1-pca algorithm \cite{dzhz06} using the R language. The version of the R language is 3.0.2. We ran experiments on a machine with Intel X5550$@2.67$GHz CPU and $24$G memory. The operating system of that machine is Linux Ubuntu 14.04.5 LTS.  All the experiments were done in single-threaded mode.

\subsection{Counterexample for \cite{dzhz06}}

\begin{figure}[t]
  \centering
    \includegraphics[width=0.7\textwidth]{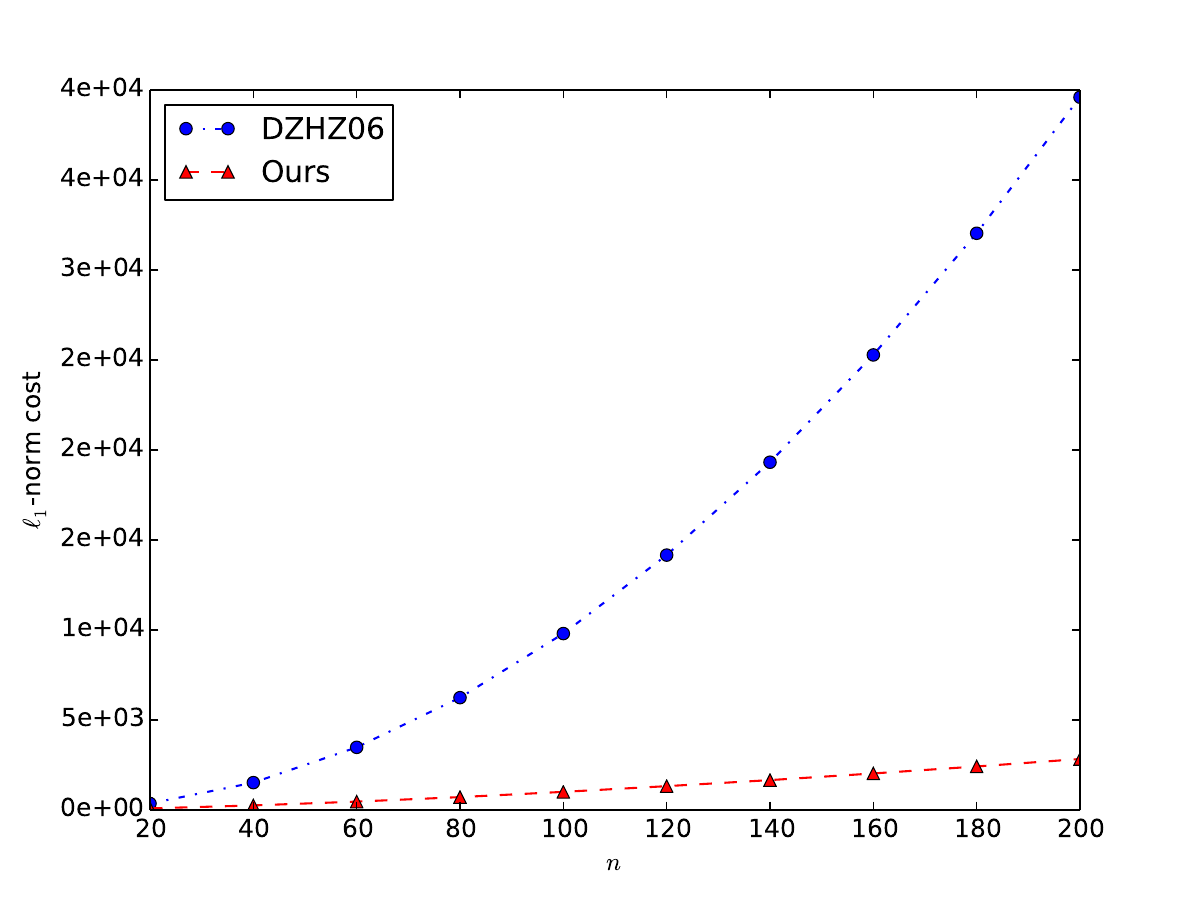} 
    \caption{The $x$-axis is $n$ where $A\in \mathbb{R}^{n\times n}$, and the $y$-axis is $\| A' - A\|_1$ where $\rank(A')=k$. This figure shows the performance of both our algorithm and \cite{dzhz06} on input matrix $A$ defined as Equation (\ref{eq:def_ce_dzhz06}). }
\end{figure}

The goal is to find a rank $k=1$ approximation for matrix $A$. For any $\eps\in [0,0.5)$, we define $A\in \mathbb{R}^{n\times n}$ as
\begin{align}\label{eq:def_ce_dzhz06}
A=\begin{bmatrix} n^{1.5+\eps} & 0 \\ 0 & 0 \end{bmatrix} + \begin{bmatrix} 0 & 0 \\ 0 & B \end{bmatrix},
\end{align}
where $B\in \mathbb{R}^{(n-1)\times(n-1)}$ is all $1$s matrix. It is immediate that the optimal cost is at most $n^{1.5+\eps}$. However, using the algorithm in \cite{dzhz06}, the cost is at least $\Omega(n^2)$. Thus, we can conclude,
using algorithm \cite{dzhz06} to solve $\ell_1$ low rank approximation problem on $A$ cannot achieve an approximation ratio better than $n^{0.5-\eps}$.

\subsection{Counterexample for \cite{bdb13}}

\begin{figure}[t]
  \centering
    \includegraphics[width=0.7\textwidth]{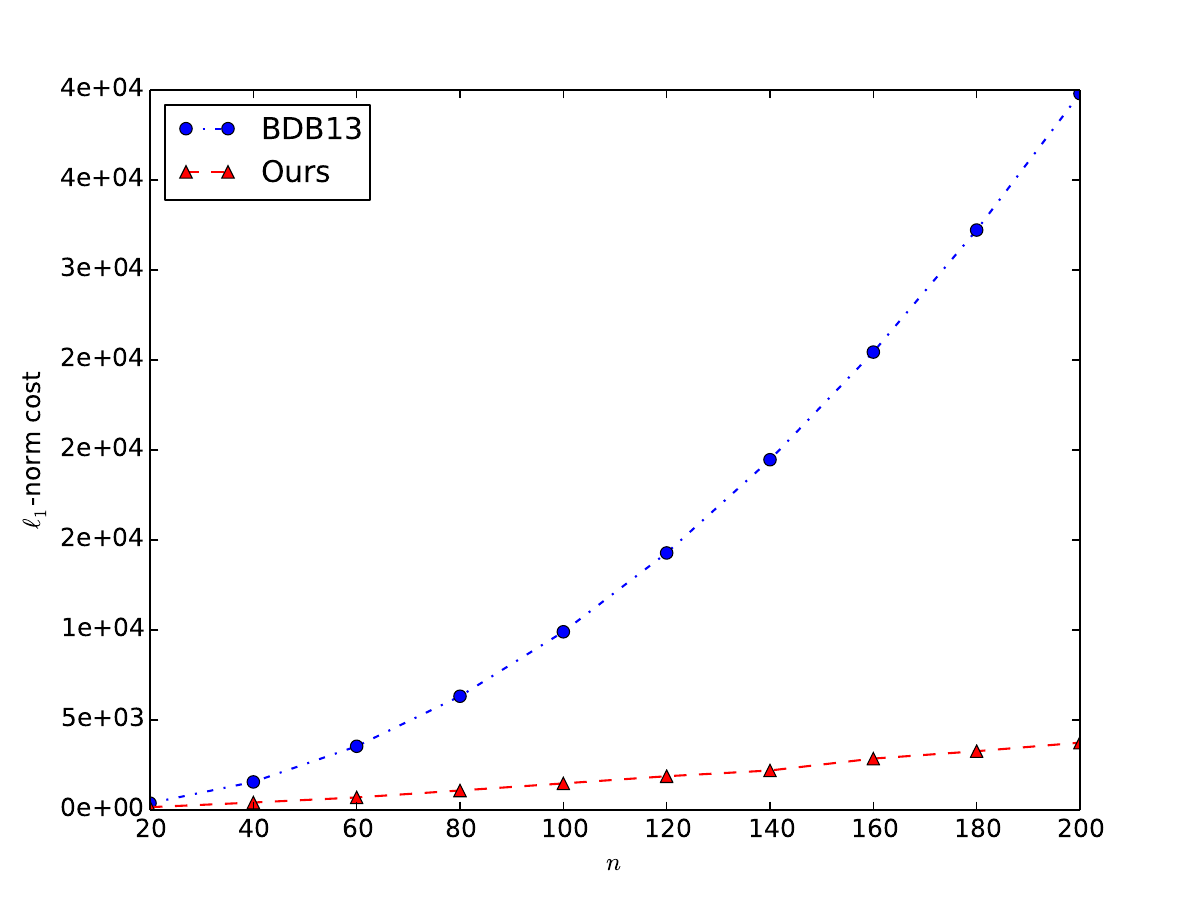}
    \caption{The $x$-axis is $n$ where $A\in \mathbb{R}^{n\times n}$, and the $y$-axis is $\| A' - A\|_1$ where $\rank(A')=k$. This figure shows the performance of both our algorithm and \cite{bdb13} on input matrix $A$ defined as Equation (\ref{eq:def_ce_bdb13}).}
\end{figure}


The goal is to find a rank $k=1$ approximation for matrix $A$.
The input matrix $A \in \mathbb{R}^{d \times d}$ for algorithm \cite{bdb13} is defined to be,
\begin{align}\label{eq:def_ce_bdb13}
A=\begin{bmatrix} n^{1.5} & 0 \\ 0 & 0 \end{bmatrix} + \begin{bmatrix} 0 & 0 \\ 0 & B \end{bmatrix},
\end{align}
where $B\in \mathbb{R}^{(n-1)\times(n-1)}$ is an all $1$s matrix. It is immediate that the optimal cost is at most $n^{1.5}$.
Now, let us look at the procedure of \cite{bdb13}. Basically, the algorithm described in~\cite{bdb13} is that they first find a rank $n-1$ approximation via a best $\ell_1$-fit hyperplane algorithm, then they rotate it based on the right singular vectors of the rank $n-1$ approximation matrix, and next they recursively do the same thing for the rotated matrix which has only $n-1$ columns.

When running their algorithm on $A$, they will fit an arbitrary column except the first column of $A$. Without loss of generality, it just fits the last column of $A$. After the rotation, the matrix will be an $n\times (n-1)$ matrix:
\begin{align*}
\begin{bmatrix}
n^{1.5} & 0 & 0 & \cdots & 0 \\
0 & \sqrt{n-2} & 0 &\cdots & 0 \\
0 & \sqrt{n-2} & 0 &  \cdots & 0\\
\cdots & \cdots & \cdots & \cdots & \cdots\\
0 & \sqrt{n-2} & 0 &  \cdots & 0
\end{bmatrix}.
\end{align*}
Then, after the $t^{\text{th}}$ iteration for $t<(n-1)$, they will get an $n\times(n-t)$ matrix:
\begin{align*}
\begin{bmatrix}
n^{1.5} & 0 & 0 & \cdots & 0 \\
0 & \sqrt{n-2} & 0 & \cdots & 0 \\
0 & \sqrt{n-2} & 0 & \cdots & 0\\
\cdots & \cdots & \cdots & \cdots & \cdots\\
 0 & \sqrt{n-2}& 0 & \cdots & 0
\end{bmatrix}.
\end{align*}
This means that their algorithm will run on an $n \times 2$ matrix:
\begin{align*}
\begin{bmatrix} n^{1.5} & 0  \\ 0 & \sqrt{n-2} \\ 0 & \sqrt{n-2}\\ \cdots & \cdots \\ 0 & \sqrt{n-2} \end{bmatrix},
\end{align*}
in the last iteration. Notice that $n\times \sqrt{n-2}<n^{1.5}$. This means that their algorithm will fit the first column which implies that their algorithm will output a rank-$1$ approximation to $A$ by just fitting the first column of $A$. But the cost of this rank-$1$ solution is $(n-1)^2$. Since the optimal cost is at most $n^{1.5}$, their algorithm cannot achieve an approximation ratio better than $n^{0.5}$.

\subsection{Counterexample for \cite{kwak08}}

\begin{figure}[t]
  \centering
    \includegraphics[width=0.7\textwidth]{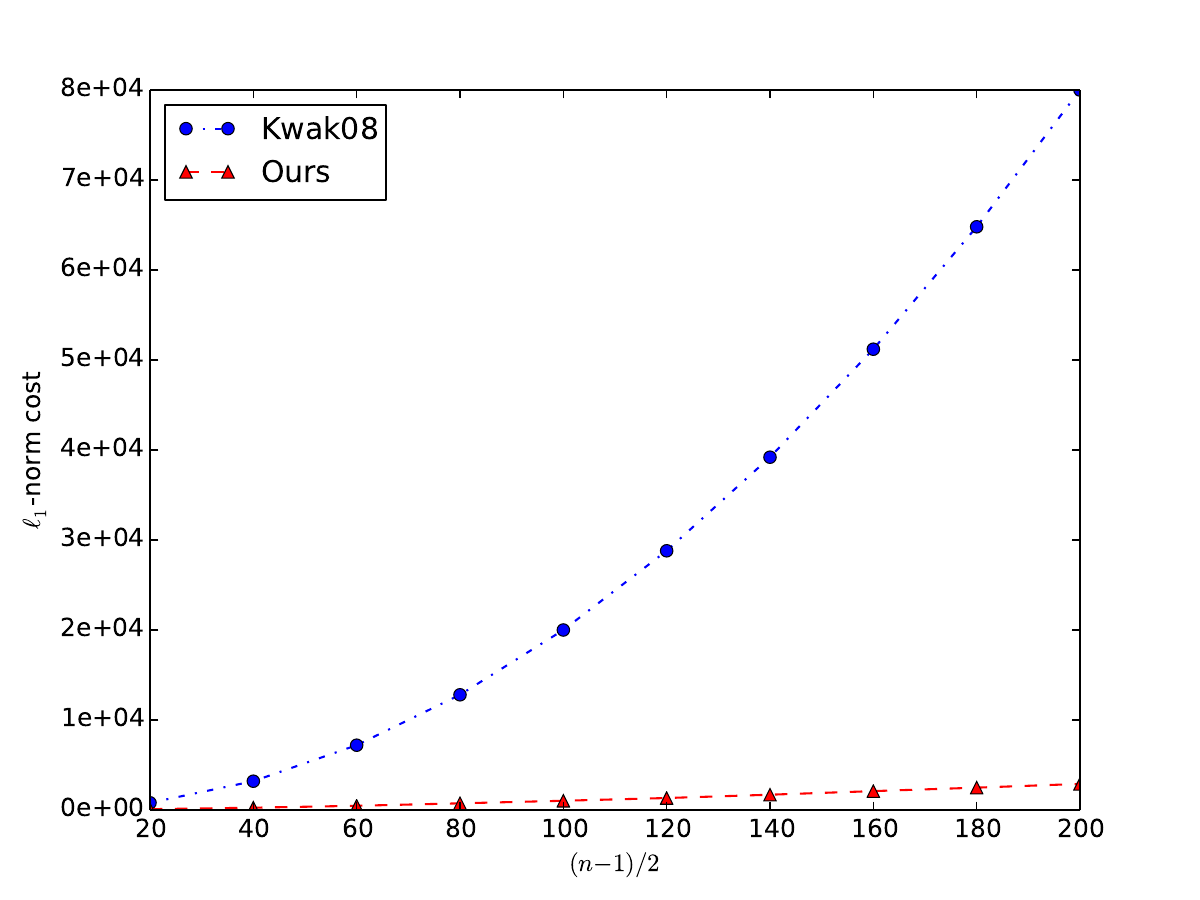}
    \caption{The $x$-axis is $n$ where $A\in \mathbb{R}^{n\times n}$, and the $y$-axis is $\| A' - A\|_1$ where $\rank(A')=k$. This figure shows the performance of both our algorithm and \cite{kwak08} on input matrix $A$ defined as Equation (\ref{eq:def_ce_kwak08}). Algorithm \cite{kwak08} has two ways of initialization, which have similar performance on matrix $A$.} 
\end{figure}

We show that the algorithm \cite{kwak08} cannot achieve an approximation ratio better than $\Theta(n)$ on the matrix $A\in \mathbb{R}^{ (2n+1) \times (2n+1) }$ defined as,
\begin{align}\label{eq:def_ce_kwak08}
A = \begin{bmatrix} n^{1.5} & 0 & 0 \\ 0 & B & 0 \\ 0 & 0 & B \end{bmatrix},
\end{align}
where $B$ is an $n\times n$ matrix that contains all $1$s.
 We consider the rank-2 approximation problem for this input matrix $A$. The optimal cost is at most $n^{1.5}$.

 We run their algorithm. Let $x_0$ denote the initial random unit vector. Consider the sign vector $s\in \{\pm 1\}^{2n+1}$ where each entry is the sign of the inner product between $x_0$ and each column of $A$. There are only three possibilities,
\begin{align*}
s=
  \begin{cases}
    (1,\{1\}^n,\{1\}^n)  \\
    (1,\{1\}^n,\{-1\}^n) \\
    (1,\{-1\}^n,\{1\}^n) \\
  \end{cases}.
\end{align*}
Case \RN{1}, $s=(1, \{1\}^n, \{1\}^n )$. Define $\wh{u} = \sum_{i=1}^{2n+1} s_i \cdot A_i = (n^{1.5},n,\cdots,n)$. Let $u = \wh{u} / \| u\|_2 = \wh{u} / (\sqrt{3} n^{1.5} ) = (1/\sqrt{3}, 1/\sqrt{3n}, \cdots, 1/\sqrt{3n})$. Define matrix $D$ to be $A- u u^\top A$. Then, we can compute $D$,
\begin{align*}
D=& ~ A - u u^\top A \\
=& ~ A -
\begin{bmatrix}
1/3 & \frac{1}{3\sqrt{n}} {\bf 1}^\top & \frac{1}{3\sqrt{n}} {\bf 1}^\top \\
\frac{1}{3\sqrt{n}} {\bf 1} & \frac{1}{3n}B & \frac{1}{3n}B  \\
\frac{1}{3\sqrt{n}} {\bf 1} & \frac{1}{3n}B & \frac{1}{3n}B
\end{bmatrix} A \\
=&
\begin{bmatrix}
n^{1.5} & 0 & 0 \\
0 & B & 0 \\
0 & 0 & B
\end{bmatrix}
-
\begin{bmatrix}
\frac{n^{1.5}}{3} & \frac{\sqrt{n}}{3} {\bf 1}^\top & \frac{\sqrt{n}}{3} {\bf 1}^\top  \\
\frac{n}{3} {\bf 1} & \frac{1}{3}B & \frac{1}{3}B \\
\frac{n}{3} {\bf 1} & \frac{1}{3}B & \frac{1}{3}B \\
\end{bmatrix}\\
= &
\begin{bmatrix}
2n^{1.5}/3 & - \frac{\sqrt{n}}{3} {\bf 1}^\top & - \frac{\sqrt{n}}{3} {\bf 1}^\top \\
-\frac{n}{3} {\bf 1} & \frac{2}{3} B & -\frac{1}{3} B \\
-\frac{n}{3} {\bf 1} & -\frac{1}{3} B & \frac{2}{3} B \\
\end{bmatrix}.
\end{align*}
Now we need to take the linear combination of columns $ D = A - u u^\top A$.
Let $w$ denote another sign vector $\{-1,+1\}^{2n+1}$. Then let $v$ denote the basis vector, $v=\sum_{i=1}^{2n+1}D_i$. There are three possibilities,  Case I(a), if $w = (1, \{1\}^n,\{1\}^n )$, then $v$ is the all $0$ vector. Using vector $v$ to interpolate each column of $D$, the cost we obtain is at least $ 2n^2$. Case I(b), if $w = (1, \{1\}^n,\{-1\}^n )$, then $v=(2n^{1.5}/3, \{2/3\}^n, \{-4/3\}^n)$. We also obtain at least $2n^2$ cost if we use that $v$ to interpolate each column of $D$. Case I(c), if $w = ( 1, \{-1\}^n, \{-1\}^n )$, then $v=(0, \{-2/3\}^n, \{-2/3\}^n)$. The cost is also at least $2n^2$.

Case \RN{2}, $s=(1, \{1\}^n, \{-1\}^n )$. Define ${\bf 1}$ to be a length $n$ all $1$s column vector. Define $\wh{u} = \sum_{i=1}^{2n+1} s_i \cdot A_i = (n^{1.5}, \{n\}^n, \{-n\}^n)$. Let $u = \wh{u} / \| u\|_2 = \wh{u} / (\sqrt{3} n^{1.5} ) = (1/\sqrt{3}, \{1/\sqrt{3n} \}^n, \{-1/\sqrt{3n}\}^n)$. Define $(2n+1) \times (2n+1)$ matrix $D$ to be $A- u u^\top A$. Then, we can compute $D$,

\begin{align*}
D=& ~ A - u u^\top A \\
=& ~ A -
\begin{bmatrix}
1/3 & \frac{1}{3\sqrt{n}} {\bf 1}^\top & -\frac{1}{3\sqrt{n}} {\bf 1}^\top \\
\frac{1}{3\sqrt{n}} {\bf 1} & \frac{1}{3n} B & -\frac{1}{3n} B \\
-\frac{1}{3\sqrt{n}} {\bf 1} & -\frac{1}{3n}B & \frac{1}{3n} B \\
\end{bmatrix} A \\
=&
\begin{bmatrix}
n^{1.5} & 0 & 0 \\
0 & B & 0 \\
0 & 0 & B
\end{bmatrix}
-
\begin{bmatrix}
n^{1.5}/3 & \frac{\sqrt{n}}{3} {\bf 1}^\top & - \frac{\sqrt{n}}{3} {\bf 1}^\top \\
\frac{n}{3} {\bf 1} & \frac{1}{3} B & -\frac{1}{3} B \\
-\frac{n}{3} {\bf 1} & -\frac{1}{3} B & \frac{1}{3} B \\
\end{bmatrix}\\
= &
\begin{bmatrix}
2n^{1.5}/3 & - \frac{\sqrt{n}}{3} {\bf 1}^\top & \frac{\sqrt{n}}{3} {\bf 1}^\top \\
-\frac{n}{3} {\bf 1} & \frac{2}{3}B & \frac{1}{3} B \\
\frac{n}{3} {\bf 1} & \frac{1}{3} B & \frac{2}{3} B \\
\end{bmatrix}.
\end{align*}
Similarly to the previous case, we can also discuss three cases.

Case \RN{3}, $s=(1, \{1\}^n, \{-1\}^n )$. Define ${\bf 1}$ to be a length $n$ all $1$s column vector. Define $\wh{u} = \sum_{i=1}^{2n+1} s_i \cdot A_i = (n^{1.5}, \{-n\}^n, \{-n\}^n)$. Let $u = \wh{u} / \| u\|_2 = \wh{u} / (\sqrt{3} n^{1.5} ) = (1/\sqrt{3}, \{-1/\sqrt{3n} \}^n, \{-1/\sqrt{3n}\}^n)$. Define matrix $D$ to be $A- u u^\top A$. Then, we can compute $D$,

\begin{align*}
D=& ~ A - u u^\top A \\
=& ~ A -
\begin{bmatrix}
1/3 & -\frac{1}{ 3\sqrt{n} }) {\bf 1}^\top & - \frac{1}{ 3\sqrt{n} } {\bf 1}^\top \\
-\frac{1}{ 3\sqrt{n} } {\bf 1} & \frac{1}{3n}B & \frac{1}{3n} B \\
-\frac{1}{ 3\sqrt{n} } {\bf 1} & \frac{1}{3n}B & \frac{1}{3n} B \\
\end{bmatrix} A \\
=&
\begin{bmatrix}
n^{1.5} & 0 & 0 \\
0 & B & 0 \\
0 & 0 & B
\end{bmatrix}
-
\begin{bmatrix}
n^{1.5}/3 & - \frac{\sqrt{n}}{3} {\bf 1}^\top & -\frac{\sqrt{n}}{3} {\bf 1}^\top \\
-\frac{n}{3} {\bf 1} & \frac{1}{3} B & \frac{1}{3} B \\
-\frac{n}{3} {\bf 1} & \frac{1}{3} B & \frac{1}{3} B \\
\end{bmatrix}\\
= &
\begin{bmatrix}
2n^{1.5}/3 & \frac{\sqrt{n}}{3} {\bf 1}^\top & \frac{\sqrt{n}}{3} {\bf 1}^\top \\
\frac{n}{3} {\bf 1} & \frac{2}{3} B & \frac{1}{3} B \\
\frac{n}{3} {\bf 1} & \frac{1}{3} B & \frac{2}{3} B \\
\end{bmatrix}.
\end{align*}
Similarly to the previous case, we can also discuss three cases.

\subsection{Counterexample for \cite{kk05}}

We show that there exist matrices such that the algorithm of \cite{kk05} cannot achieve an approximation ratio better than $\Theta(n)$. Their algorithm has two different ways of initialization. We provide counterexamples for each of the initialization separately.

\begin{figure}[t]
  \centering
    \includegraphics[width=0.47\textwidth]{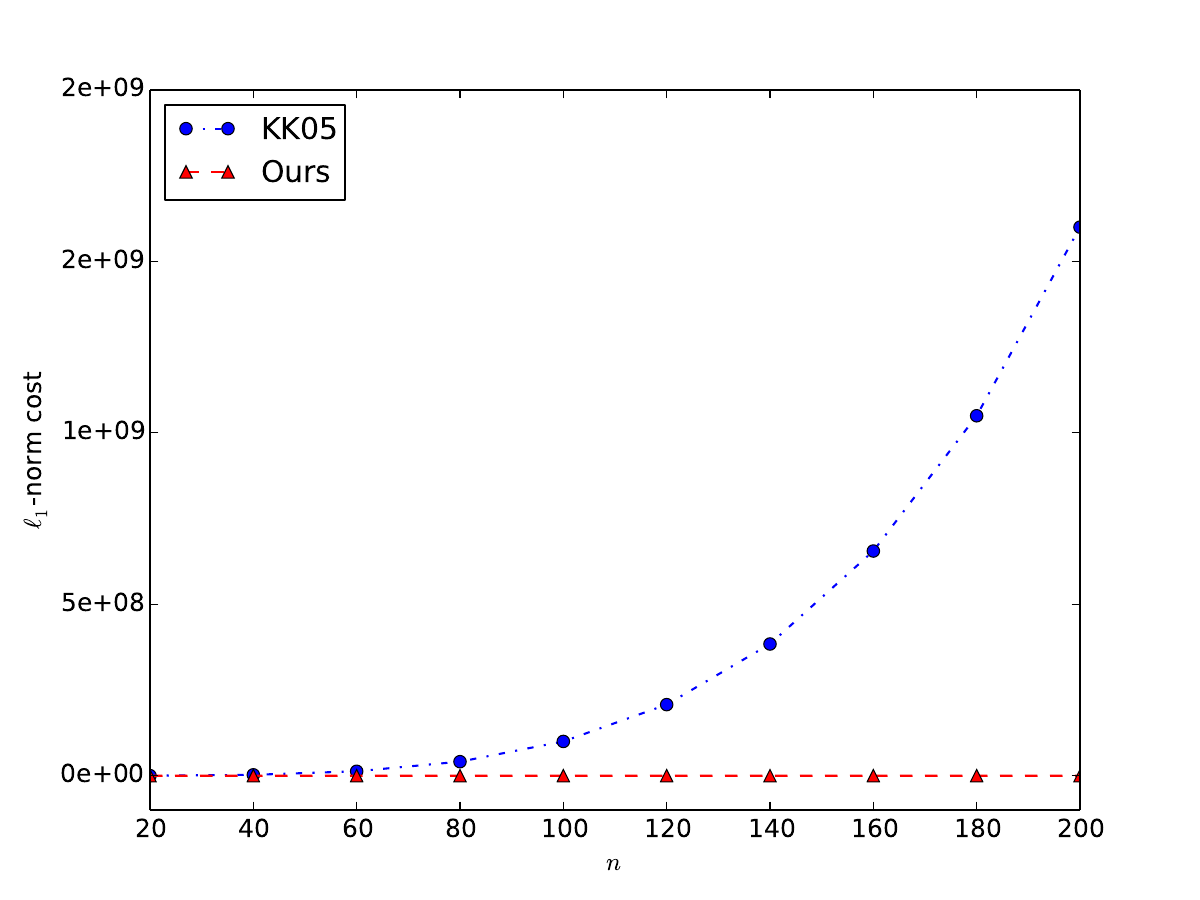}
    \hspace{-2mm} 
    \includegraphics[width=0.47\textwidth]{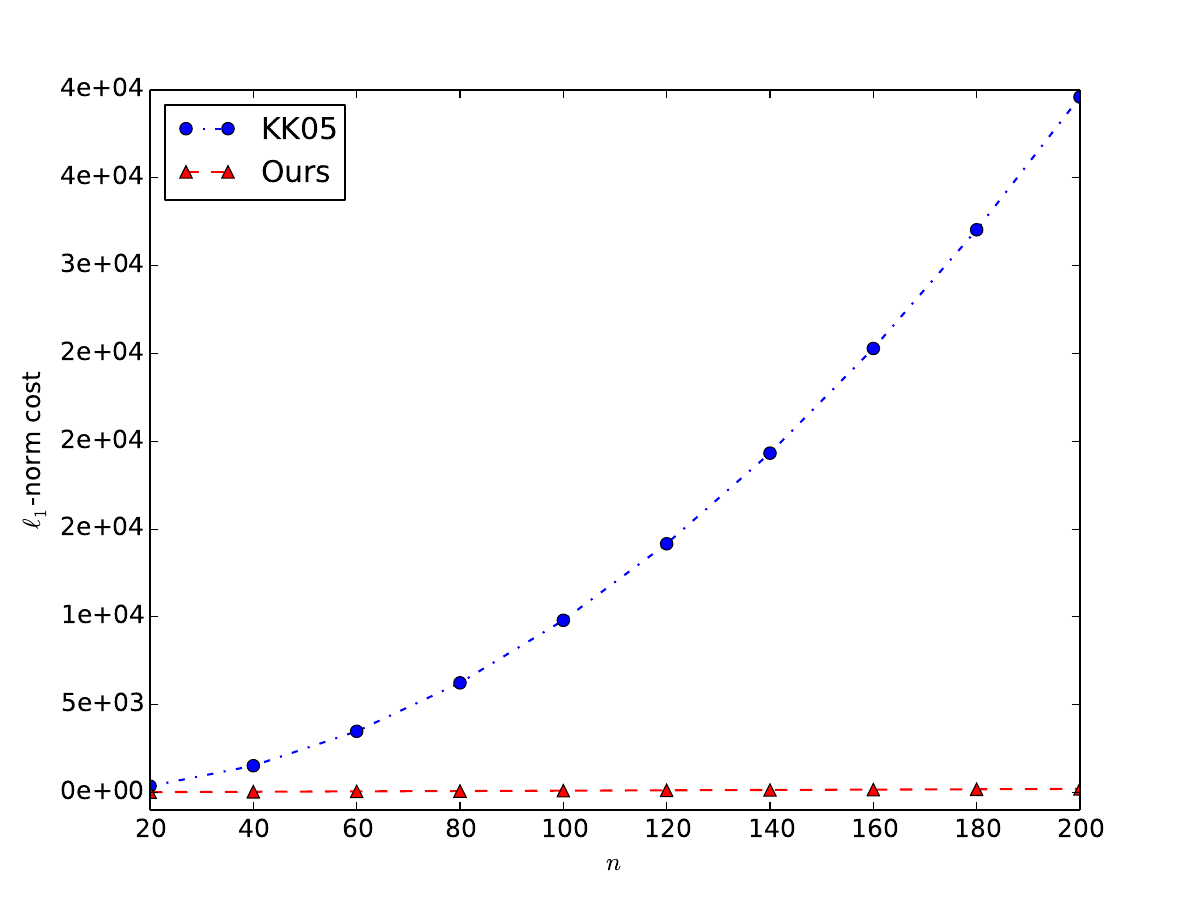}
    \caption{The $x$-axis is $n$ where $A\in \mathbb{R}^{n\times n}$, and the $y$-axis is $\| A' - A\|_1$ where $\rank(A')=k$. Algorithm \cite{kk05} has two ways of initialization. The left figure shows the performance of both our algorithm and \cite{kk05} with random vector initialization on input matrix $A$ defined as Equation (\ref{eq:def_ce_kk05_a}). The right figure shows the performance of both our algorithm and \cite{kk05} with top singular vector initialization on input matrix $A$ defined as Equation (\ref{eq:def_ce_kk05_b}).}
\end{figure}
\paragraph{Random vector initialization}

We provide a counterexample matrix $A\in \mathbb{R}^{ n \times n }$ defined as,
\begin{align}\label{eq:def_ce_kk05_a}
A =
\begin{bmatrix}
n^c & 0 & \cdots & 0 \\
0 & 0 & \cdots & 0 \\
\cdots & \cdots & \cdots & \cdots \\
0 & 0 & \cdots & 0
\end{bmatrix} + I,
\end{align}
where $c\geq 2$. Consider the rank-1 approximation problem for this matrix $A$. The optimal cost is at most $n-1$.

 Run their algorithm. The starting vectors are $u(0) \sim N(0,I)$ and $v(0) \sim N(0,1)$.
We define two properties for a given vector $y\in \mathbb{R}^n$. Property \RN{1} is for all $i\in [n]$, $ |y_i| \leq n/8$, and Property \RN{2} is there exist half of the $i$ such that $|y_i| \geq 1/2$.
We can show that with probability $1-2^{-\Omega(n)}$, both $u(0)$ and $v(0)$ satisfy Property \RN{1} and \RN{2}. After $1$ iteration, we can show that $u(1)_1 = v(1)_1 = 0$

Now let us use column vector $u(0) \in \mathbb{R}^n$ to interpolate the first column of matrix $A$. We simplify $u(0)$ to be $u$. Let $u_i$ denote the $i$-th coordinate of vector $u$, $\forall i\in [n]$. Let $A_1$ denote the first column of matrix $A$. We define $\alpha = v(1)_1 = \arg\min_{\alpha} \| \alpha u(0) - A_1 \|_1$. For any scalar $\alpha$, the cost we pay on the first column of matrix $A$ is,
\begin{align}
 & ~ | \alpha \cdot u_1 -  n^c | + \sum_{i=2}^n | \alpha u_i|
\geq  ~  | \alpha \cdot u_1 -  n^c | + \frac{n}{2} | \alpha \cdot  \frac{1}{2} |
\geq ~ | \alpha \cdot u_1 -  n^c | + \frac{n}{4} | \alpha|.
\end{align}
Notice that $c\geq 2$. Then $| \alpha \cdot u_1 -  n^c | + \frac{n}{4} | \alpha|  \geq | \alpha \cdot \frac{n}{8} -  n^c | + \frac{n}{4} | \alpha| $. If $\alpha\leq 0$, then the cost is minimized when $\alpha=0$. If $\alpha \in [0, 8n^{c-1}]$, the cost is minimized when $\alpha=0$. If $\alpha \geq 8n^{c-1}$, the cost is minimized when $\alpha = 8n^{c-1}$. Putting it all together, to achieve the minimum cost, there is only one choice for $\alpha$, which is $\alpha=0$. The optimal cost is at least $n^c$.

Then after $T$ iterations(for any $T\geq 1$), $u(T)_1 = v(T)_1=0$. Thus, we always pay at least $n^c$ cost on the first entry.

Therefore, their algorithm cannot achieve any approximation ratio better than $n^{c-1}$. Because $c\geq 2$, we complete the proof.

\paragraph{Top singular vector initialization}
The counterexample input matrix $A\in \mathbb{R}^{n\times n}$ is defined as,
\begin{align}\label{eq:def_ce_kk05_b}
A =
\begin{bmatrix}
n & 0 \\
0 & B \\
\end{bmatrix},
\end{align}
where matrix $B\in \mathbb{R}^{ (n-1) \times (n-1)}$ contains all $1$s. Consider the rank-1 approximation problem for this matrix $A$. The optimal cost is at most $n$. Run their algorithm. The starting vectors $u(0)$ and $v(0)$ will be set to $(1,0,\cdots,0) \in \mathbb{R}^n$. After $T$ iterations(for any $T>0$), the support of $u(T)$(resp. $v(T)$) is the same as $u(0)$(resp. $v(T)$). Thus, the cost is at least $\| B\|_1 = (n-1)^2$. Therefore, we can conclude that their algorithm cannot achieve any approximation ratio  better than $(n-1)^2 / n = \Theta(n)$.

\subsection{Counterexample for all}

\begin{figure}[!ht]
  \centering
\subfloat[]{
    \includegraphics[width=0.7\textwidth]{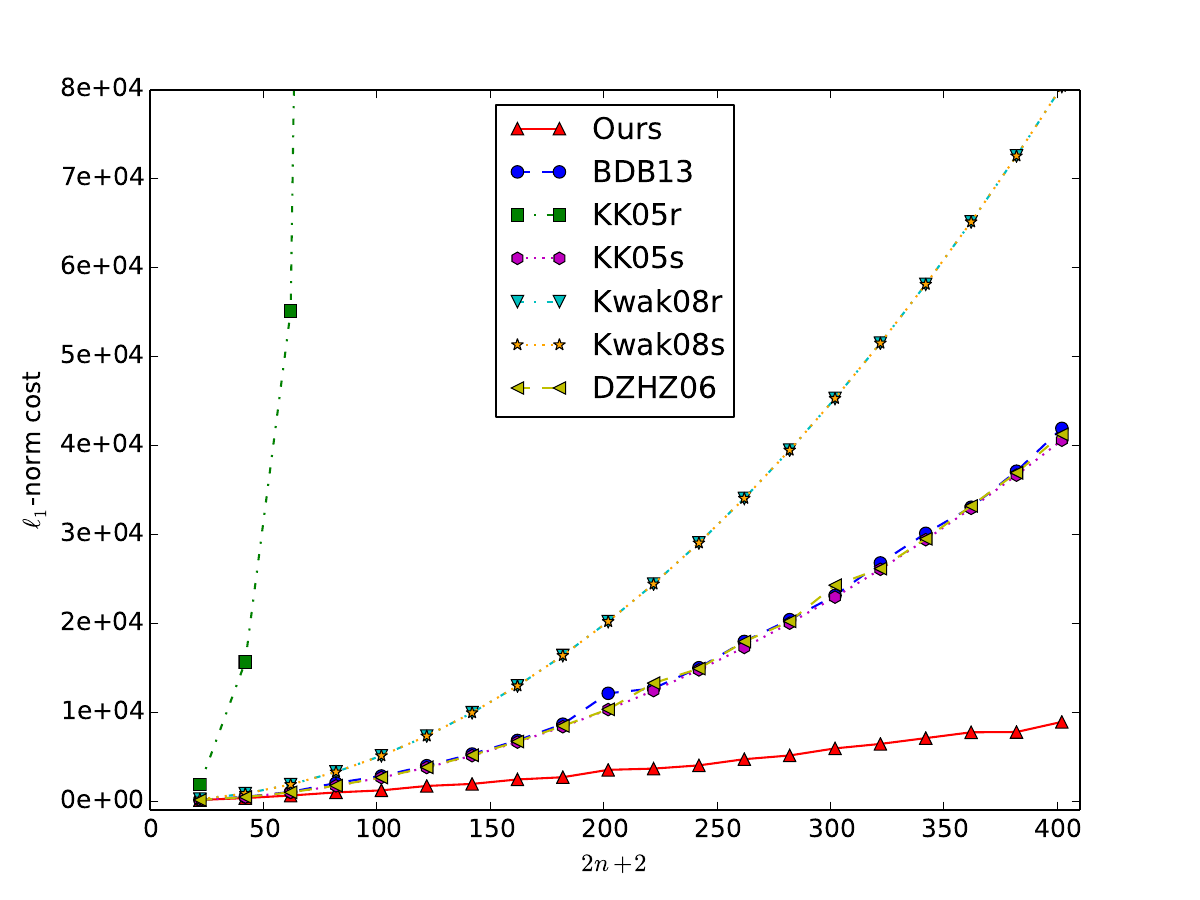} 
}
\vspace{-3mm}
\subfloat[]{
    \includegraphics[width=0.7\textwidth]{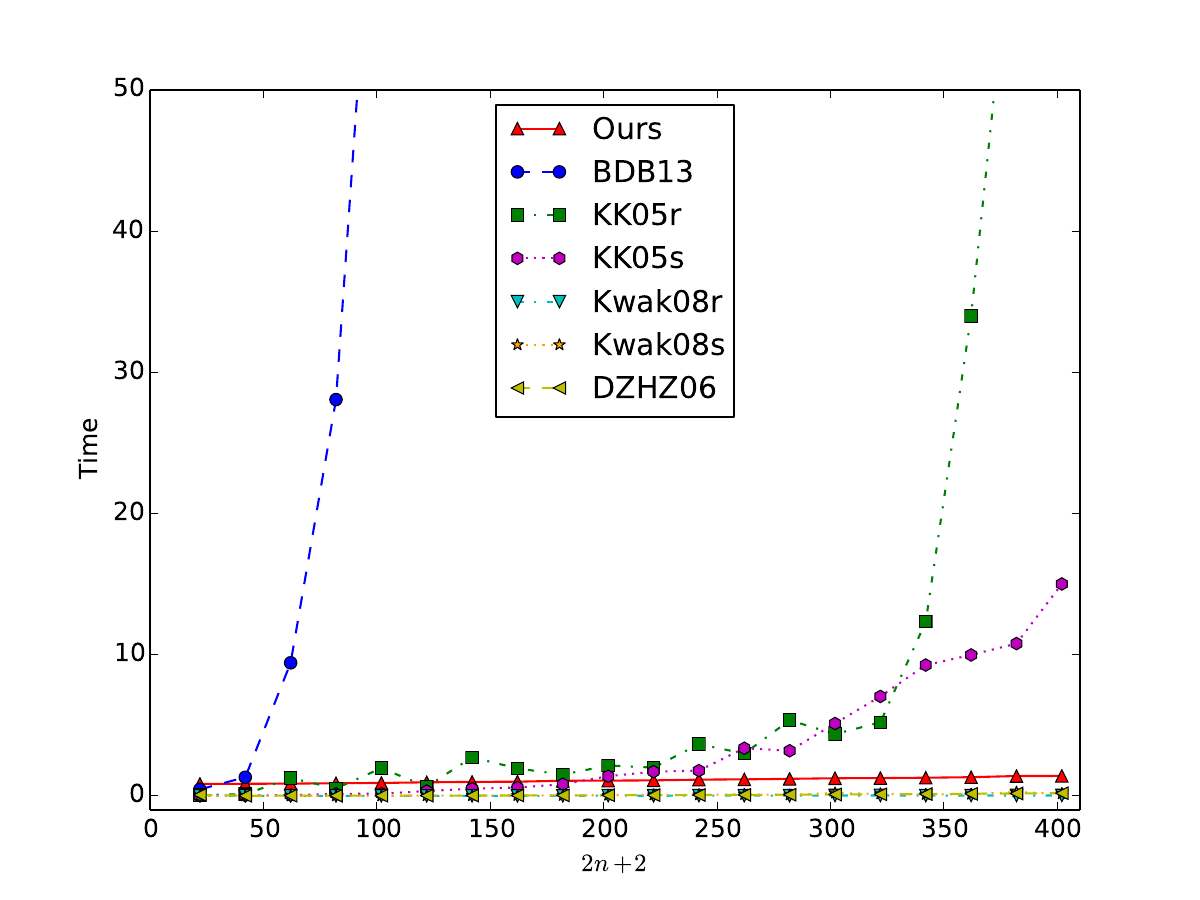}
}
    \caption{Let $A$ be $(2n+2)\times (2n+2)$ input matirx. (a) shows the performance of all the algorithms when the matrix dimension is growing. The $x$-axis is $n$, and the $y$-axis is $\| A' - A\|_1$ where $A'$ is the rank-$3$ solution output by all the heuristic algorithms and also ours. The $\ell_1$ residual cost of all the other algorithms is growing much faster than ours, which is consistent with our theoretical results. (b) shows the running time (in seconds) of all the algorithms when the matrix dimension $n$ is growing. The $x$-axis is $n$ and the $y$-axis is time (seconds). The running time of  some of the algorithms is longer than $3$ seconds. For most of the algorithms (including ours), the running time is always less than $3$ seconds.}\label{fig:all_vs_ours}
\end{figure}

For any $\eps \in (0,0.5)$ and $\gamma >0$, we construct the input matrix $A\in \mathbb{R}^{(2n+2)\times (2n+2)}$ as follows
\begin{align*}
A=
\begin{bmatrix}
n^{2+\gamma} & 0 & 0 & 0 \\
0 & n^{1.5+\epsilon} & 0 & 0 \\
0 & 0 & B & 0 \\
0 & 0 & 0 & B
\end{bmatrix},
\end{align*}
where $B$ is $n \times n$ all $1$s matrix.
We want to find a rank $k=3$ solution for $A$. Then any of those four heuristic algorithms \cite{kk05,dzhz06,kwak08,bdb13} is not able to achieve better than $n^{\min ( \gamma, 0.5-\eps ) } $ approximation ratio. We present our main experimental results in Figure~\ref{fig:all_vs_ours}. Both \cite{kk05} and \cite{kwak08} have two different ways of initialization. In Figure~\ref{fig:all_vs_ours} we use KK05r(resp. Kwak08r) to denote the way that uses random vector as initialization, and use KK05s(resp. Kwak08s) to denote the way that uses top singular vector as initialization. Figure~\ref{fig:all_vs_ours}(a) shows the performance of all the algorithms and Figure~\ref{fig:all_vs_ours}(b) presents the running time. The $\ell_1$ residual cost of all the other algorithm is growing much faster than our algorithm. Most of the algorithms (including ours) are pretty efficient, i.e., the running time is always below $3$ seconds. The running time of \cite{bdb13,kk05} is increasing very fast when the matrix dimension $n$ is growing.

In Figure~\ref{fig:all_vs_ours}(a), the cost of KK05r at $\{ 82,\cdots, 142\}$ is in $[10^5, 10^6 ]$, at $\{ 162,\cdots, 302\}$ is in $[10^6, 10^7]$, and at $\{322,\cdots,402\}$ is in $[10^7, 10^8]$. In Figure~\ref{fig:all_vs_ours}(b), the time of KK05r at $\{382,482\}$ is $64$s and $160$s. The running time of BDB13 at $\{ 82,\cdots,222\}$ is between $1$ minute and $1$ hour. The running time of BDB13 at $\{ 242,\cdots,322\}$ is between $1$ hour and $20$ hours. The running time of BDB13 at $\{ 342, \cdots, 402\}$ is more than $20$ hours.



\subsection{Discussion for Robust PCA~\cite{clmw11}}

A popular method is robust PCA \cite{clmw11}, which given a matrix $A$, tries to find a matrix $L$ for which $\lambda \|A-L\|_1 + \|L\|_*$ is minimized, where $\lambda > 0$ is a tuning parameter and $\|L\|_*$ is the nuclear norm of $L$. This is a convex program, but it need not return a low rank matrix $L$ with relative error. As a simple example, suppose $\lambda = 1$, and the $n\times n$ matrix $A$ is a block-diagonal matrix of rank $k$ and $n=\frac{k}{2}(b+1)$. Further, the first $k/2$ blocks are $b \times b$ matrices of all $1$s, while the next $k/2$ blocks are just a single value $b$ on the diagonal.

Then the solution to the above problem may return $L$ to be the first $k/2$ blocks of $A$. The total cost of $\lambda\| A-L\|_1+\| L\|_*$is $ (k/2)b +  (k/2)b = kb$.

Also, the solution to the above problem may return $L$ to be $A$, which has cost $0+ \| A\|_* = kb$.
Because this solution has the same cost, it means that their algorithm might output a rank-$k$ solution, and also might output a rank-$k/2$ solution.

Therefore, the relative error of the output matrix may be arbitrarily bad for $\ell_1$-low rank approximation.

We also consider the following example. Suppose $\lambda=1/\sqrt{n}$, and let $n\times n$ matrix $A$ denote the Hadamard matrix $H_n$.
Recall that the Hadamard matrix $H_p$ of size $p\times p$ is defined recursively : $ \begin{bmatrix} H_{p/2} & H_{p/2} \\ H_{p/2} & - H_{p/2} \end{bmatrix} $ with $H_2 = \begin{bmatrix} +1 & +1 \\ +1 & -1 \end{bmatrix}$. Notice that every singular values of $A$ is $\sqrt{n}$. We consider the objective function $\lambda \| A-L\|_1 + \| L\|_*$.

Then the solution to the above problem may return $L$ to be the first $n/2$ rows of $A$. The total cost of $\lambda\| A-L\|_1+\| L\|_*$ is $ (1/\sqrt{n}) n^2/2 +  (n/2) \sqrt{n} = n^{1.5}$.
Also, the solution to the above problem may return $L$ to be $A$, which has cost $0+ \| A\|_* = n \sqrt{n} = n^{1.5}$.
For any $i$, if the solution takes $i$ rows of $A$, the cost is $(1/\sqrt{n}) (n-i) n + i \sqrt{n} = n^{1.5}$.
Because this solution has the same cost, it means that their algorithm might output a rank-$n$ solution, and also might output a rank-$n/2$ solution. Therefore, the relative error of the output matrix may be arbitrarily bad for $\ell_1$-low rank approximation.







\section{Acknowledgments}
The authors would like to thank Alexandr Andoni, Saugata Basu, Cho-Jui Hsieh, Daniel Hsu, Chi Jin, Fu Li, Ankur Moitra, Cameron Musco, Richard Peng, Eric Price, Govind Ramnarayan, James Renegar, and Clifford Stein for useful discussions. The authors also thank Jiyan Yang, Yinlam Chow, Christopher R\'{e}, and Michael Mahoney for sharing the code.

\newpage
\addcontentsline{toc}{section}{References}
\bibliographystyle{alpha}
\bibliography{ref}
\newpage


\end{document}